\documentclass[11pt, oneside]{article}

\usepackage{amsmath}
\usepackage{amsthm}
\usepackage{amssymb}
\usepackage{mathrsfs}
\usepackage{complexity}
\usepackage{enumitem}
\usepackage{tikz}
\usepackage{tkz-berge}
\usepackage{float}
\usepackage{mathtools}
\usepackage{xspace}

\usepackage{microtype}
\usepackage{comment}

\usepackage{caption}
\usepackage{subcaption}

\usepackage{tabularx}
\usepackage{listofitems}
\usepackage{array}
\newcolumntype{Y}{>{\centering\arraybackslash}X}

\usepackage{thmtools, thm-restate}

\usepackage{titlesec}
\usepackage{autobreak}

\usepackage[hypertexnames=false]{hyperref}
\usepackage[capitalize]{cleveref}
\crefname{proposition}{Prop.}{Props.}
\crefname{section}{Sec.}{Secs.}
\crefname{corollary}{Cor.}{Cors.}
\crefname{lemma}{Lem.}{Lems.}
\crefname{definition}{Def.}{Defs.}

\usepackage[margin=1in]{geometry}

\usetikzlibrary {arrows.meta} 

\usepackage[style=alphabetic, 
url=false, 
maxalphanames=5,
maxbibnames=10
]{biblatex}
\newtheorem{theorem}{Theorem}[section]
\newtheorem{lemma}[theorem]{Lemma}
\newtheorem{corollary}[theorem]{Corollary}
\newtheorem{proposition}[theorem]{Proposition}

\theoremstyle{definition}
\newtheorem{definition}[theorem]{Definition}
\newtheorem{remark}{Remark}
\newtheorem{example}{Example}

\DeclareMathOperator{\hol}{Holant}
\DeclareMathOperator{\hols}{Holant^*}
\DeclareMathOperator{\holbs}{Holant^*_2}
\DeclareMathOperator{\holts}{Holant^*_3}
\DeclareMathOperator{\supp}{supp}
\DeclareMathOperator{\rank}{rank}
\DeclareMathOperator{\spn}{span}

\DeclareMathOperator{\typei}{I}
\DeclareMathOperator{\typeii}{II}
\DeclareMathOperator{\proj}{proj}

\DeclareMathOperator{\ann}{Ann}

\DeclareMathOperator{\sym}{Sym}

\DeclareMathOperator{\tr}{Tr}
\DeclareMathOperator{\gl}{GL}
\DeclareMathOperator{\speciallinear}{SL}
\DeclareMathOperator{\psl}{PSL}
\DeclareMathOperator{\ortho}{O}
\DeclareMathOperator{\so}{SO}

\DeclareMathOperator{\stab}{Stab}

\DeclareMathOperator{\ag}{Fr}
\DeclareMathOperator{\agx}{Fr_\X}

\newcommand{\db}{\mathsf{B}}
\newcommand{\dg}{\mathsf{G}}
\newcommand{\dr}{\mathsf{R}}

\newcommand{\sbto}{\boldsymbol{\delta}_1}
\newcommand{\sbtt}{\boldsymbol{\delta}_2}

\newcommand{\sepbgmath}{\db\dg|\dr}
\newcommand{\swbgmath}{\mathsf{Swap}_{\db\dg;\dr}}
\newcommand{\sepbg}{$\db\dg|\dr$\xspace}
\newcommand{\sepbr}{$\db\dr|\dg$\xspace}
\newcommand{\sepgr}{$\db|\dg\dr$\xspace}

\newcommand{\suppbg}{\{\db, \dg\}^*}

\newcommand{\geneq}{\textsf{GenEQ}\xspace}

\newcommand{\swhelper}[1]{$\mathsf{Swap}_{#1}$\xspace}
\newcommand{\swbg}{\swhelper{\db \dg; \dr}}
\newcommand{\swbr}{\swhelper{\db \dr; \dg}}
\newcommand{\swgr}{\swhelper{\dg \dr; \db}}

\newcommand{\symspace}[2]{\mathsf{Sym}^{#1}(\mathbb{C}^#2)}

\newcommand{\sph}{\#\P-hard\xspace}
\newcommand{\sphness}{\#\P-hardness\xspace}
\newcommand{\scsp}{\#\mathrm{CSP}}

\newcommand{\teh}{^{\otimes 3}}
\newcommand{\tew}{^{\otimes 2}}
\newcommand{\transpose}{^\mathsf{T}}

\newcommand{\domres}[1]{
  ^{*\to\{#1\}}
}

\newcommand{\casea}{$\mathfrak{A}$\xspace}
\newcommand{\caseatwo}{$\mathfrak{A}_2$\xspace}
\newcommand{\caseathree}{$\mathfrak{A}_3$\xspace}
\newcommand{\caseb}{$\mathfrak{B}$\xspace}
\newcommand{\casebtwo}{$\mathfrak{B}_2$\xspace}
\newcommand{\casebthree}{$\mathfrak{B}_3$\xspace}

\newcommand{\casec}{$\mathfrak{C}$\xspace}
\newcommand{\casecz}{$\mathfrak{C}^0$\xspace}
\newcommand{\caseco}{$\mathfrak{C}^1$\xspace}
\newcommand{\casecosep}{$\mathfrak{C}^1_{\mathrm{s}}$\xspace}
\newcommand{\casecobg}{$\mathfrak{C}^1_{2}$\xspace}
\newcommand{\casecogen}{$\mathfrak{C}^1_{\mathrm{g}}$\xspace}
\newcommand{\casecodz}{$\mathfrak{C}^1_{\mathrm{z}}$\xspace}

\newcommand{\caseatwomath}{\mathfrak{A}_2}
\newcommand{\caseathreemath}{\mathfrak{A}_3}
\newcommand{\casebtwomath}{\mathfrak{B}_2}
\newcommand{\casebthreemath}{\mathfrak{B}_3}

\newcommand{\casecosepmath}{\mathfrak{C}^1_{\mathrm{s}}}
\newcommand{\casecobgmath}{\mathfrak{C}^1_{2}}
\newcommand{\casecogenmath}{\mathfrak{C}^1_{\mathrm{g}}}
\newcommand{\casecodzmath}{\mathfrak{C}^1_{\mathrm{z}}}

\newcommand{\casecsupp}{\mathcal{C}}

\newcommand{\blank}{{-}}
\newcommand{\bne}{(\mathord{\ne_2})}
\newcommand{\beq}{(\mathord{=_2})}

\newcommand{\iny}[1]{\widehat{#1}}

\renewcommand{\i}{\mathfrak{i}}
\newcommand{\bzp}{\boldsymbol{\beta}_+}
\newcommand{\bzm}{\boldsymbol{\beta}_-}

\newcommand{\Abold}{\mathbf{A}}
\newcommand{\X}{\mathbf{X}}
\newcommand{\Y}{\mathbf{Y}}
\newcommand{\Yinv}{\mathbf{Y}^{-1}}
\newcommand{\Z}{\mathbf{Z}}

\renewcommand{\phi}{\varphi}

\tikzset{
mycirc/.style={circle,fill=black, minimum size=0.05cm}
}
\tikzset{
mycircempty/.style={circle, draw=black, minimum size=0.05cm}
}
\tikzset{
mysq/.style={rectangle,fill=black, minimum size=0.05cm}
}

\newcommand{\tersig}[2]{
  \setsepchar{ }
  \readlist\arg{#2}
      \begin{tikzpicture}[xscale=#1, yscale=0.7, every node/.style={scale=1}]
      \node at (0, 0) {$\arg[1]$};
      \node at (-1, -1) {$\arg[2]$};
      \node at (1, -1) {$\arg[3]$};
      \node at(-2, -2) {$\arg[4]$};
      \node at (0, -2) {$\arg[5]$};
      \node at (2, -2) {$\arg[6]$};
      \node at (-3, -3) {$\arg[7]$};
      \node at (-1, -3) {$\arg[8]$};
      \node at (1, -3) {$\arg[9]$};
      \node at (3, -3) {$\arg[10]$};
    \end{tikzpicture}
}

\newcommand{\binsig}[2]{
  \setsepchar{ }
  \readlist\arg{#2}
      \begin{tikzpicture}[scale=#1, every node/.style={scale=1}]
      \node at (0, 0) {$\arg[1]$};
      \node at (-1, -1) {$\arg[2]$};
      \node at (1, -1) {$\arg[3]$};
      \node at(-2, -2) {$\arg[4]$};
      \node at (0, -2) {$\arg[5]$};
      \node at (2, -2) {$\arg[6]$};
    \end{tikzpicture}
}

\title{Group Action and  Spin Representation 
Lead to a $\hols$ Complexity Dichotomy on Domain Size $3$ 
}

  \author{Jin-Yi Cai\thanks{University of Wisconsin-Madison. {\tt jyc@cs.wisc.edu}}
  \and Jin Soo Ihm\thanks{University of Wisconsin-Madison.
  {\tt ihm2@wisc.edu}}}
 
\date{}

\renewcommand{\subset}{\subseteq}

\begin{document}
\maketitle
\thispagestyle{empty}
\begin{abstract}
We prove a complexity dichotomy theorem 
for $\hols$ problems over complex-valued symmetric constraint functions $\mathcal{F}$ on domain size $3$.
We give a decidable tractability criterion and prove that if $\mathcal{F}$ satisfies the criterion, then $\hols(\mathcal{F})$ is solvable in polynomial time, and otherwise it is \sph. 
This is the first Holant dichotomy for a set of complex-valued constraint functions on higher domains. 
We show that complex-valued constraint functions have a rich structure not observed in real-valued constraint functions. 
This structure is only revealed when we analyze them in a bipartite Holant setting with a non-standard bilinear form
and 
provides the backbone to the proof of the dichotomy.
We use group actions and the spin representation of $\speciallinear(2, \mathbb{C})$ 
in $\so(3, \mathbb{C})$ and the generalized orthogonal group
to facilitate this analysis. 
We also introduce 
\emph{frames} and \emph{shells}. Frames linearize the group action and provide a unified framework for proving \sphness when used in conjunction with shells. 
Furthermore, we characterize the lower dimensional constraint functions by \emph{annihilators}, 
which makes it possible to analyze \emph{essentially Boolean domain} functions in domain size $3$. 
\end{abstract}

\newpage
\tableofcontents
\newpage

\section{Introduction}

How many independent sets does a given graph have? 
How many perfect matchings does a given graph have?
Many counting problems  can be formulated as 
a partition function
on a graph.
For example, for a given graph $G = (V, E)$, counting the number of independent sets is computing $\sum_{\sigma: V \to \{0, 1\}} \prod_{(u, v) \in E} f(\sigma(u), \sigma(v))$, where $f(1, 1) = 0$ and $1$ otherwise. 
Similarly, counting the number of perfect matchings is computing 
$\sum_{\sigma: E \to \{0, 1\}} \prod_{v \in V} f_v(\sigma\restriction_{E(v)})$, where $E(v)$ denotes the incident edges of $v$ and $f_v$ is the \textsc{ExactOne} function that outputs $1$ if exactly one of its inputs is $1$ and $0$ otherwise.
The first problem 
is an example of counting graph homomorphisms (GH), or more generally, counting CSP (\#CSP).
The second problem 
is a Holant problem. 

GH, introduced by Lov\'asz~\cite{Lovsz1967OperationsWS,Hell2004GraphsAH}, is a special case of \#CSP  with only a single binary constraint function. In turn, \#CSP is a special case of Holant
\cite{Creignou-csp,bulatov-dalmau-csp,bulatov-counting-csp,dyer-goldberg-jerrum-weighted-csp,holant-star-boolean,dyer-richerby,nonneagitve-csp,complex-csp}. 
It is known that Holant is
strictly more expressive than GH; for example, the problem of counting perfect matchings cannot be modeled within GH~\cite{freedman-lovasz,cai-govorov-pm-rank-connection-tensor-gh}.

Holant problems, also known as edge models, are essentially equivalent to
tensor networks. 
They can express a broad class of counting problems, e.g., weighted matchings, cycle covers, edge colorings, and many others.
Meanwhile, it is also restrictive enough to prove complete classifications, 
known as complexity dichotomy theorems.
We give an informal definition of Holant problems.
Let $\mathcal{F}$ be a set of (real or complex valued) constraint functions defined on some domain set $D$.
It defines a Holant problem $\hol(\mathcal{F})$ as follows.
An input is a graph $G = (V, E)$ with each $v \in V$ associated with some $\mathbf{F}_v \in \mathcal{F}$. $\mathbf{F}_v$ takes incident edges of $v$ as its input variables. 
The goal is to compute the sum of product of evaluations of $\mathbf{F}_v$ over all 
edge assignments $\sigma: E \to D$.
In the example of counting perfect matchings,
 $D = \{0, 1\}$ and $\mathcal{F}$ is the set of \textsc{ExactOne} functions of all arities. 
 It turns out that for many 
 classes  $\mathcal{F}$, 
 the problem $\hol(\mathcal{F})$ is either polynomial time solvable or \sph, and the goal of the complexity classification program for Holant is to achieve this classification for
 ever broader classes of $\mathcal{F}$.

The complexity classification of GH and \#CSP has seen tremendous progress, culminating in a dichotomy of complex-valued functions over any finite domain~\cite{dyer-graph-hom,bulatov_graph-hom,goldberg-partition,cai-complex-graph-hom,
Creignou-csp,dyer-goldberg-jerrum-weighted-csp,holant-star-boolean,bulatov-counting-csp,dyer-richerby,nonneagitve-csp,complex-csp}.
A lot of work has been done on Holant problems as well, particularly on the Boolean domain ($\lvert D \rvert = 2$)
\cite{holant-star-boolean-journal,cai-lu-xia-fibonacci,real-holant-huang-lu,complex-symmetric-holant-dichotomy,fkt-universal,backens-holant-plus,nonnegative-holant,real-holant-dichotomy,backens-holant-c}. 
This includes the dichotomy for all complex-valued symmetric constraint functions~\cite{complex-symmetric-holant-dichotomy} and for all real-valued not necessarily symmetric constraint functions~\cite{real-holant-dichotomy}. 
Counting weighted Eulerian orientations (\#EO), which is a subclass of complex-valued Boolean domain Holant problems, received a lot of attention recently in an effort to completely classify Holant problems with complex-valued constraint functions on the Boolean domain~\cite{shao-tang-eo,meng-wang-xia-fp-np,meng-wang-xia-eo,meng-wang-xia-zheng-odd-arity}. 
There is also a grand program to unify all Boolean domain dichotomy theorems using a group theoretic framework~\cite{xia2026frameworkunifycomplexitydichotomy}.
However, 
extending the complexity classification to higher domain size for Holant problems
has been difficult.

To make progress, one often starts by 
assuming that some additional natural set of constraint functions is present. 
This helps in understanding the fundamental structure that determines the complexity.
In fact, the class of all $\scsp$ problems 
can be defined as  assuming 
the set of \textsc{Equality}  functions $\mathcal{EQ}$ of all arities are present, and we have the
equivalence $\scsp(\mathcal{F}) = \hol(\mathcal{F} \cup \mathcal{EQ})$. 
Another choice is the set of all unary functions $\mathcal{U}$. 
In this paper, we study $\holts(\mathcal{F}) := \hol(\mathcal{F} \cup \mathcal{U})$ for an arbitrary set $\mathcal{F}$ of symmetric complex-valued constraint functions on domain size $3$.

Previously, there were very few
Holant dichotomy theorems on higher domains.
Two of them are from a decade ago
and only consider a single constraint function: 
(1) a $\hol(f)$ dichotomy where $f$ is a ternary constraint function that has a strong symmetry property called domain permutation invariance~\cite{edge-coloring-higher-domain}; 
(2) a $\hols(f)$ dichotomy  where $f$ is a \textit{complex-valued} symmetric ternary function on domain size 3~\cite{domain3-single-signature}. 
It was observed in~\cite{domain4-01-dichotomy}, which also proves a dichotomy on domain size $4$ of a single $\{0, 1\}$-valued symmetric function, that certain
intricate 
yet tractable functions of~\cite{domain3-single-signature} do not occur when the functions are restricted to be \textit{real-valued}.
Building upon that observation, 
very recently  a $\holts(\mathcal{F})$ dichotomy is proved in which $\mathcal{F}$ is a set of \textit{real-valued} symmetric 
functions on domain size $3$~\cite{domain3-real-set}.

In this paper, we finally bring to fruition the program initiated in~\cite{domain3-single-signature}, and prove a $\holts(\mathcal{F})$ dichotomy in which $\mathcal{F}$ is an
arbitrary \emph{set of complex-valued} symmetric
functions on domain size $3$. This theorem resolves a long running open problem and makes the first progress in 
Holant complexity classification for a \emph{set} of complex-valued functions on a higher domain. 

The main difficulty of proving a  dichotomy  for a  \emph{set} of constraint
functions is how  individual constraint functions interact with each other.
A natural approach after obtaining a dichotomy for a single ternary function
is to broaden it to a pair of ternary and  binary functions. 
This approach was very successful in proving Boolean domain dichotomies earlier.
However, repeated efforts along this line failed. It turns out, counterintuitively, 
that for higher domain Holant problems, binary functions are very special and difficult
to deal with. Instead, we decided to take the 
unconventional approach of first proving a dichotomy
for a pair of ternary  functions. 

Three types 
of tractable functions \casea, \caseb and  \casec are  identified in~\cite{domain3-single-signature}. While they are individually tractable, it turns out that 
they behave very differently, even \emph{within} a single type. We gain 
insight by a penetrating
combinatorial analysis on how each subtype interacts with others,
and this analysis is only possible under a suitable change of basis, which reveals a rich hierarchy.
We then introduce new notions called \emph{frames} and \emph{shells}.
With these notions we can carry out our analysis naturally in the language of
group actions. 

When there is only a single constraint function,  we may always apply a change of basis to bring the function to a \emph{canonical form}, which does not change the complexity of the problem by the Holant theorem~\cite{valiant-accidental-alg}.
Extending the dichotomy of a single function in~\cite{domain3-single-signature} to a set of functions amounts to analyzing how the individually tractable functions interact.
These interactions occur under 
different changes of basis 
depending on  the  functions. 
Mathematically, we can formulate this problem as analyzing a group action on the functions. 
However, these actions are difficult 
to analyze in general. 
In the spirit of representation theory, 
the problem becomes easier if we \emph{linearize}: a frame associates a linear object (or a collection of linear objects)
with each function,
and we instead analyze the action of the group on 
these linear spaces.

We borrow tools from representation theory, in particular the spin representation
of (generalized) orthogonal groups with respect to some 
bilinear forms other than the 
standard dot product, and elementary algebraic geometry. 
It turns out that an algebraic characterization facilitated by the spin representation  %
closely corresponds to the aforementioned combinatorial analysis.
Also, we introduce the notion of a \emph{signature annihilator}, which 
simultaneously simplifies and unifies the analysis of \emph{lower dimensional functions} residing in domain size $3$. 
This also leads to an algorithm that simultaneously subsumes and simplifies different tractable cases given in \cite{domain3-real-set}.

One may wonder why study complex-valued functions? 
Holant problems correspond to computing a contraction of tensor networks, and in studying many-body quantum problems, complex-valued tensors are the right language~\cite{montangero-tensor-network}.
While a $\hols$ dichotomy for real-valued  functions is known~\cite{domain3-real-set},
it does not include some of the more intricate functions shown to have a tractable algorithm in~\cite{domain3-single-signature}. 
Also, as the dichotomy~\cite{domain3-real-set} shows,
real-valued tractable functions of arity $3$ or higher can always be \textit{domain separated},
and the most essential intricacies of domain size $3$ do not come into play. 
In this paper, we show that by understanding the dichotomy of complex-valued functions, we can introduce new techniques and build a coherent theory that shed 
light on Holant problems on higher domains.
Also, some real-valued functions appear more natural after a change of basis that requires using a complex-valued transformation.
We additionally see that complex-valued signatures are more intricately related to the bipartite Holant problem in two ways.
Some problems are easier to understand after a non-orthogonal holographic transform, $\holts(\mathcal{F}) =_T \holts(\X | \iny{\mathcal{F}})$, by changing the dot product to a different bilinear form $\X$. 
We make use of a generalized orthogonal Holant theorem combined with the spin representation to facilitate the study of such problems. 
Additionally, for some $\mathbf{F}$ and $\mathbf{G}$, 
there is actually an \textit{equivalence} to a bipartite problem
$\holts(\mathbf{F}, \mathbf{G}) =_T \holts(\mathbf{F} | \mathbf{G})$. While reduction in one direction is clear,
this equivalence is not 
obvious; it is only made possible
by interactions of \emph{isotropic} vectors (necessarily complex). 
To analyze this,
we need a dichotomy of bipartite $\hols$ problems on the Boolean domain.
This has not been 
proved before, and we prove it in this paper along the way.

\subsection{Examples}
\label{subsec:tractable-gadget-meet}

We 
describe some significant difficulties on the road toward achieving a
complexity dichotomy for a set of ternary signatures by giving some examples of tractable cases.
We quickly find that 
the approach of~\cite{domain3-real-set}
becomes infeasible, and results 
in an explosion  of possible cases here.
In fact, a deeper reason exists geometrically.
Unlike the 
characterization given in~\cite{domain3-real-set}, there is no uniform way to
bring all signatures to a ``nice'' basis. 
At the same time, 
this explicit 
analysis actually obscures the right 
view of how 
multiple complex-valued lower dimensional signatures interact as a whole. 

We show some examples of how the approach 
of~\cite{domain3-real-set} can fail here. 
The algorithmic idea of~\cite{domain3-real-set} is that in all tractable cases, we can transform all signatures to be  in a ``nice'' basis.
Then, the problem could be decomposed into Boolean domain problems, allowing us to use the Boolean domain algorithms directly. 
However, for complex-valued signatures the complexity of the interaction increases
enormously, and there are cases where one cannot bring all signatures 
to multiple pairs of Boolean domains. 
Allowing complex-valued signatures reveals \textit{truly domain $3$} tractable 
interactions.
We use the symbols $\{\db, \dg, \dr\}$ to denote the $3$ domain elements. 

\begin{example}
Consider
 $\mathbf{F} = \mathbf{e}_1\teh + \mathbf{e}_2\teh
=\begin{bsmallmatrix}
1 \\ 0 \\0
\end{bsmallmatrix}\teh
+
\begin{bsmallmatrix}
0 \\ 1 \\0
\end{bsmallmatrix}\teh$ and  $\mathbf{G} = \begin{bsmallmatrix}
1 \\ 0 \\ \i
\end{bsmallmatrix}\teh
+
\begin{bsmallmatrix}
0 \\ 1 \\ \i
\end{bsmallmatrix}\teh$.
They will be called
\caseatwo and \casebtwo signatures.
This case is not covered in~\cite{domain3-real-set} because under any orthogonal $T$, $T \mathbf{F}$ and $T \mathbf{G}$ cannot be both real-valued. 
However, this is a new tractable case.

The following algorithm solves $\holts(\mathbf{F}, \mathbf{G})$.
We can show explicitly  that (the signature of) any connected $\holts(\mathbf{F})$ gadget 
belongs to $\mathcal{F} = \{a \mathbf{e}_1^{\otimes n} + b \mathbf{e}_2^{\otimes n} : a, b \in \mathbb{C}, n \ge 1\}$.
Similarly, writing in row vector form,  connected $\holts(\mathbf{G})$ gadgets 
are in $\mathcal{G} = \{a (1, 0, \i)^{\otimes n} \otimes (0, 1, \i)^{\otimes m} + b (0, 1, \i)^{\otimes n} \otimes (1, 0, \i)^{\otimes m} : a, b \in \mathbb{C}, n, m \ge 0 \}$. 
Then,
to solve $\holts(\mathbf{F}, \mathbf{G})$, 
we first form connected components by $\mathbf{F}$ and 
$\mathbf{G}$ alone, and as $\mathbf{F}$ is supported on $\db\dg$, we  get 
\begin{align*}
\holts(\mathbf{F}, \mathbf{G})
&\le_T \holts(\mathcal{F} \mid \mathcal{G}) \\
&=_T \holbs(\mathcal{F}\domres{\db\dg} \mid \mathcal{G}\domres{\db\dg})
\end{align*}
which is a Boolean domain  P-time tractable problem.
\end{example}

One problem of this algorithm is that it does not generalize nicely when we have three signatures, say $\holts(\mathbf{F}, \mathbf{G}, \mathbf{H})$, because the second step of the above reduction crucially uses the fact that $\supp \mathcal{F} \subseteq \{\db, \dg\}^*$ and that $\mathcal{G}$ is relatively easy to determine.
We will see from the next example that dealing with
three signatures 
with this method grinds to a halt.

\begin{example}\label{ex:b2-a2-a2}
Let
$\mathbf{F}_1 
=  \begin{bsmallmatrix}
1 \\ \i \\0
\end{bsmallmatrix}\teh
+
\begin{bsmallmatrix}
1 \\ -\i \\0
\end{bsmallmatrix}\teh$,
$\mathbf{F}_2 = 
\begin{bsmallmatrix}
1 \\ \i \\p_2
\end{bsmallmatrix}\teh
+
\begin{bsmallmatrix}
1 \\ -\i \\q_2
\end{bsmallmatrix}\teh$
 and
$\mathbf{F}_3 = 
\begin{bsmallmatrix}
1 \\ \i \\p_3
\end{bsmallmatrix}\teh
+
\begin{bsmallmatrix}
1 \\ -\i \\q_3
\end{bsmallmatrix}\teh$,
such that $p_2 q_2 = p_3 q_3 = -2$ and $p_2/p_3 = e^{\pi \i / 3}$. 
We claim that $\holts(\mathbf{F}_1, \mathbf{F}_2, \mathbf{F}_3)$ is tractable.
We observe that any edge between $\mathbf{F}_2$ and $\mathbf{F}_3$ can be replaced with two (separate) unary signatures,
because 
there exists some orthogonal $T$ such that $T \mathbf{F}_2$ is supported on $\dg\dr$ and $T \mathbf{F}_3$ is supported on $\db\dr$.  So such edges 
can be removed since  unary signatures are free in $\holts$.
Writing in row vector form, let $\mathcal{F}_1 = \{ a (1, \i, 0)^{\otimes n} \otimes (1, -\i, 0)^{\otimes m} + b (1, -\i, 0)^{\otimes n} \otimes (1, \i, 0)^{\otimes m} : a, b \in \mathbb{C}, n, m \ge 0  \}$,
and
$\mathcal{F}_k = \{ a(1, \i, p_k)^{\otimes n} + b(1, -\i, q_k)^{\otimes n} : a, b \in \mathbb{C} , n \ge 1\}$, for $k=2, 3$.
Then, 
\begin{align*}
\holts(\mathbf{F}_1, \mathbf{F}_2, \mathbf{F}_3)
&\le_T \holts(\mathcal{F}_1 \mid \mathcal{F}_2, \mathcal{F}_3) \\
&=_T \holbs(\mathcal{F}_1\domres{\db\dg} \mid \mathcal{F}_2\domres{\db\dg}, \mathcal{F}_3\domres{\db\dg}) \,,
\end{align*}
which is a Boolean domain   P-time tractable problem.

One can also show that there is no orthogonal transform that can put all three signatures in separate domains.
The main reason is that the 
\emph{annihilators} (see \cref{def:annihilator}) 
of the three signatures are not pairwise orthogonal. 
Under the  orthogonal $T$ above, we can bring $T \mathbf{F}_2$ to $\dg\dr$ and $T \mathbf{F}_3$ to $\db\dr$, but  then it becomes 
mysterious how $T \mathbf{F}_1$ interacts with the others.
For example, choosing $p_3 = \sqrt{2} \i$ gives
\begin{align*}
T &= \begin{bsmallmatrix}
    -1/\sqrt{2} & -\sqrt{3/2} & -\i \\
    -\sqrt{2} & 0 & -\i \\
    -\i \sqrt{3/2} & -\i/\sqrt{2} & \sqrt{3}
\end{bsmallmatrix} &&
    T \mathbf{F}_1 \sim 
    \begin{bsmallmatrix}
    -1 - \i \sqrt{3} \\ -2 \\ 1 - \i \sqrt{3}
    \end{bsmallmatrix}\teh + 
    \begin{bsmallmatrix}
    -1 + \i \sqrt{3} \\ -2 \\ -1 - \i \sqrt{3}
    \end{bsmallmatrix}\teh \\
    T \mathbf{F}_2 &\sim 
    \begin{bsmallmatrix}
    0 \\ -1 + e^{\pi \i / 3} \\ -1
    \end{bsmallmatrix}\teh + 
    \begin{bsmallmatrix}
    0 \\ -1 - e^{2 \pi \i / 3} \\ 1
    \end{bsmallmatrix}\teh
    &&
    T \mathbf{F}_3 \sim 
    \begin{bsmallmatrix}
    1 - \i \sqrt{3} \\ 0 \\ 1 + \i \sqrt{3}
    \end{bsmallmatrix}\teh +
    \begin{bsmallmatrix}
    1 + \i \sqrt{3} \\ 0 \\ -1 + \i \sqrt{3}
    \end{bsmallmatrix}\teh.
\end{align*}
\end{example}

\begin{example}
There is another example that can be solved using yet a 
different algorithm.
Let $\mathbf{G}_1 = \mathbf{e}_1\teh + \mathbf{e}_2\teh$, 
$\mathbf{G}_2 = \begin{bsmallmatrix}
    5 \\ 0 \\ 3 \i
\end{bsmallmatrix}\teh + \begin{bsmallmatrix}
    0 \\ 5 \\ 4 \i
\end{bsmallmatrix}\teh$,
and
$\mathbf{H} = \sym\left(\begin{bsmallmatrix}
    3 \\ 4 \\ 5 \i
\end{bsmallmatrix}, \begin{bsmallmatrix}
    3 \\ 4 \\ 5 \i
\end{bsmallmatrix}, \begin{bsmallmatrix}
    3 \\ 4 \\ - 5 \i
\end{bsmallmatrix}\right)$.
One can show that under orthogonal $T_1 = 
\begin{bsmallmatrix}
0 & 0 & 1 \\
4/5 & -3/5 & 0 \\
-3/5 & -4/5 & 0
\end{bsmallmatrix}$, 
$T_1 \mathbf{G}_1$ is supported on $\dg\dr$
and $T_1 \mathbf{H}$ is supported on $\db\dr$ so an edge between them can be replaced by two unary signatures $\mathbf{e}_3 \otimes \mathbf{e}_3$.
Also, for non-orthogonal $T_2 = \begin{bsmallmatrix}
3/10 & 2/5 & -\i/2 \\
3/5 & 4/5 & \i \\
4/5 & - 3/5 & 0
\end{bsmallmatrix}
$ we have $T_2 T_2\transpose = \X$, thus 
$(T_2^{-1})\transpose  (T_2^{-1})= \X^{-1} = \X 
= \begin{bsmallmatrix}
  0& 1  & 0 \\ 1 &0 & 0  \\ 0& 0 &1
\end{bsmallmatrix}$, 
$T_2 \mathbf{G}_2$ is supported on $\db\dr$ and $T_2 \mathbf{H}$ is supported on $\db\dg$.
Then,
any connection between $T_2 \mathbf{G}_2$ and $T_2 \mathbf{H}$ via $\X$ can be replaced as $\mathbf{e}_1 \otimes \mathbf{e}_2$.
Note that we cannot separate $\mathbf{G}_2$ and $\mathbf{H}$ using orthogonal transformations. 
Also, similar to the previous case, $T_2 \mathbf{G}_1$ and $T_1 \mathbf{G}_2$ do not 
have nice expressions;
 there is no single 
transformation that organizes all signatures neatly. 
But from the separate transformations, we know that all connections between $\mathbf{H}$ 
and $\mathbf{G}_1$ (resp. $\mathbf{G}_2$) can be severed, and replaced by  unary signatures. 
Further, any connection between two $T_2 \mathbf{G}_2$ via $\X$ can be replaced with $\mathbf{e}_3 \otimes \mathbf{e}_3$ because $T_2 \mathbf{G}_2$ is supported on $\db\dr$ and $\X$ swaps between $\db$ and $\dg$ while keeping $\dr$ constant. 
So, we can replace any connection between two $\mathbf{G}_2$ with unary signatures as well. 
Given a $\holts(\mathbf{G}_1, \mathbf{G}_2, \mathbf{H})$ instance,
we can replace these edges, and get connected components of $\holts(\mathbf{G}_1, \mathbf{G}_2)$ and $\holts(\mathbf{H})$ instances, with the further property that the 
$\holts(\mathbf{G}_1, \mathbf{G}_2)$ instance does not have edges between two $\mathbf{G}_2$ signatures. 
Then,
\[\holts(\mathbf{H}) =_T \holts(T_1 \mathbf{H}) \le_T \holbs( (T_1 \mathbf{H})\domres{\db\dr}) \,,\]
\begin{align*}
    \holts(\mathbf{G}_1, \mathbf{G}_2) 
    &\le_T \holts(\mathcal{G}_1 \mid \mathbf{G}_2) \\
    &=_T \holbs(\mathcal{G}_1\domres{\db\dg} \mid \mathbf{G}_2 \domres{\db\dg}) \,, 
\end{align*}
where $\mathcal{G}_1 = \{a \mathbf{e}_1^{\otimes n} + b \mathbf{e}_2^{\otimes n} : a, b \in \mathbb{C}, n \ge 1 \}$.  These are Boolean domain tractable problems.
Thus $\holts(\mathbf{G}_1, \mathbf{G}_2, \mathbf{H})$ can be solved in polynomial time.
\end{example}

There are many more tractable cases, and we can exhibit for each
a different 
ad hoc algorithm.
What about four or more signatures? Can they fit together in more myriad ways?
These examples suggest that performing a case by case analysis 
is not going to work.

\subsection{Some Proof Highlights}\label{subsec:proof-outline}

We highlight some ideas used in this paper.

\paragraph{Refinement of  Tractable Types}
The main difficulty in generalizing the dichotomy of a single 
signature
to a dichotomy of a set of signatures is analyzing the interaction between two or more signatures. In~\cite{domain3-single-signature}, three different tractable types of signatures \casea, \caseb and  \casec  were given together with separate algorithms for each type.
However, we discovered that this  classification is too coarse
when  analyzing the interactions of multiple signatures.
Even within each type, there is a genuine difference in what the signatures can do.
For example, $\mathbf{e}_1\teh + \mathbf{e}_2\teh + \mathbf{e}_3\teh$ and $\mathbf{e}_1\teh + \mathbf{e}_2\teh$ are both of type \casea,
but we will see that the first signature imposes a lot of constraint on other signatures
while the second imposes much less.
The situation with \casea is actually more involved
because
$\holts(\mathbf{F}) =_T \holts(T \mathbf{F})$ for an arbitrary orthogonal $T$,
and thus $\mathbf{F}$ takes infinitely many forms.
The situation with  \caseb is similar. The  situation with  \casec is 
even more intricate.

We overcome these difficulties by refining the $3$ tractable types into $9$ (\cref{def:types-ternary}). 
For \casec signatures, the internal structure is only revealed under a 
 non-orthogonal holographic transform  $\Yinv$,
 where $\Y = 
\begin{bsmallmatrix}
  \Z & 0 \\ 0 & 1
\end{bsmallmatrix}$ and $\Z = \frac{1}{\sqrt{2}}  \begin{bsmallmatrix}
  1 & 1 \\ \i & - \i
\end{bsmallmatrix}$.
Then, the \casec signatures transform to have a very interesting support pattern (\cref{fig:casec-general}), which can be analyzed combinatorially. The analysis  reveals that there are genuinely different subtypes hidden within \casec signatures. 

\paragraph{Analysis of Type \texorpdfstring{\casec}{C}}
We analyze type \casec in \cref{sec:casec}.
Let $\X = \Y\transpose \Y = \begin{bsmallmatrix}
  0& 1  & 0 \\ 1 &0 & 0  \\ 0& 0 &1
\end{bsmallmatrix}$. As $\Y$ is not orthogonal, after the $\Yinv$ transform, 
we need to analyze the signatures in the bipartite Holant setting $\holts(\X \mid \iny{\mathbf{F}})$, where the edges between copies of $\iny{\mathbf{F}} = \Yinv \mathbf{F}$  are connected by a binary function $\X$.
Although it may seem more complicated, we will see that for  \casec signatures the support pattern of $\iny{\mathbf{F}}$ allows us to give a combinatorial analysis.
This allows us to come up with a simpler algorithm for $\holts(\mathbf{F})$ (than the one in~\cite{domain3-single-signature}) (\cref{subsec:case-c-algorithm}).

More importantly, we discover that there is a rich hierarchy of  \casec signatures that 
behave very differently, which was not observed previously. 
We identify 5 subtypes of \casec, and show that their combinatorial behaviors are  very different, which only becomes apparent in the $\holts(\X \mid \iny{\mathbf{F}})$ setting (\cref{subsec:casecz-characterization,subsec:characterize-caseco}).
The refinement is also intimately related to (complex) spin representation, and we will see that certain subgroups of the spin representation yield subrepresentations that
go hand in hand 
with the subtypes (\cref{sec:case-c-spin-rep-calc,subsec:stabc-rep}).
We also give an analysis of what signatures are realizable for each. 
It turns out that most of them can realize a certain linear subspace of signatures, leading to the next idea. 

\paragraph{Frames, Shells, and Algebraic Geometry}
We define \emph{frames} and \emph{shells} in \cref{sec:two-pieces}.
After a change of basis, it becomes easier to analyze the set of gadgets that a particular signature can realize. 
Here, we use the power of the unary signatures that are available to use to the fullest.
For example, it can be easily shown that if we have a signature $\mathbf{e}_1\teh + \mathbf{e}_2\teh + \mathbf{e}_3\teh$, then with unary signatures, we can make a gadget that realizes any signature of the form $a \mathbf{e}_1^{\otimes n} + b \mathbf{e}_2^{\otimes n} + c \mathbf{e}_3^{\otimes n}$, for  $a, b, c \in \mathbb{C}$ and $n$. 
This suggests that the advantage of working in $\hol^*$ is that we can realize a linear space of signatures for which we can identify a basis. 
We 
define a linear space
of realizable signatures for each type,
called a \textit{frame}  of the type. 
One advantage is that while we need to analyze signatures after holographic transformations, 
\emph{frames} work well with spans of tensors of standard basis vectors, and this makes it transparent their behavior under holographic transformations. 

We further formulate the conditions of the known Boolean domain dichotomy theorem in terms of algebraic varieties. 
It turns out that the conditions 
for ternary signatures 
can be expressed as requiring the
signatures to be contained in the union of certain lines and a twisted cubic curve in the projective space, which we call a \emph{shell}. 
Combined with the linear sets in a frame, 
this makes the analysis of checking the known \sphness condition equivalent to whether the frame corresponding to one signature is contained in the shell of the other signature, and the latter is a purely algebraic analysis (\cref{prop:subspace-e-bg,prop:subspace-m-bg}). 

Imagine trying to analyze $\holts(\mathbf{F}, \mathbf{G})$ directly, where $\mathbf{F} = \mathbf{e}_1\teh + \mathbf{e}_2\teh + \mathbf{e}_3\teh$.
We may assume that $\holts(\mathbf{G})$ is tractable, so $\mathbf{G} = T \mathbf{G}'$ for an orthogonal $T$ and a tractable signature $\mathbf{G}'$ in some canonical form.
Then, 
if $\holts(\mathbf{F}, \mathbf{G})$ is not \#P-hard, 
$\mathbf{G}\domres{\db\dg}$ must escape the Boolean domain hardness conditions
imposed by $\mathbf{F}$.
However, actually carrying out this proof becomes very difficult, as the entries of $\mathbf{G}$ are degree $4$ polynomials in the entries of $T$ and $\mathbf{G}'$ with complex dependencies.
Frames and shells greatly simplify this, making the proof possible.

\paragraph{Pair of Ternary Signatures}
With \textit{mise en place} complete, we finally proceed to prove dichotomy theorems. 
A natural starting point, 
which was successful in earlier dichotomies,  is to first prove a (3, 2)-dichotomy, i.e., $\holts(\mathbf{F}, \mathbf{G})$ for a ternary $\mathbf{F}$ and a binary $\mathbf{G}$. 
The reason is that the interaction between a ternary signature and a binary signature is seemingly the simplest when both tractability and \sphness  occur.
However, binary signatures are hard to analyze precisely because they are simple: there are not many constructions one can do with binary signatures. 
There are unexpected cancellations, and it becomes very difficult  even to analyze the following seemingly simple situation:
what kind of matrix $M \in \mathbb{C}^{3 \times 3}$ has the property that for all 
$k \in \mathbb{N}$, 
the $2 \times 2$ upper left submatrix of $M^k$ is rank $1$,  diagonal, or antidiagonal?
One can show that $\holts(\mathbf{e}_1\teh + \mathbf{e}_2\teh, M) =_T \holbs([1, 0, 0, 1], \{M^k : k \in \mathbb{N} \}\domres{\db\dg})$ is tractable, where $M$ represents a binary signature, if $M$ has such property and \sph otherwise. 
In order to prove a (3, 2)-dichotomy, many similar questions must be answered. 

Instead, we take a different approach: we go for a  (3, 3)-dichotomy first! We prove the (3,3)-dichotomy in \cref{sec:pair-dicho}.
We will see that the proof strategy is crucially dependent on whether a signature is \textit{essentially Boolean domain} or \textit{truly domain 3}.
Some signatures, while living in domain 3, are essentially Boolean domain signatures,
e.g., all orthogonal transformations of $\mathbf{e}_1\teh + \mathbf{e}_2\teh$.
Signatures such as  $\mathbf{e}_1\teh + \mathbf{e}_2\teh + \mathbf{e}_3\teh$
are truly domain $3$ signatures. 
They can realize many signatures and impose a 
strong constraint.
In other words, its frame is large and its shell is small.
Frames and shells make the analysis of the interaction of two such signatures
feasible.

We note that the concept of frame relieves us from the need to construct any additional gadgets in the proof of a (3, 3)-dichotomy.
The frame captures the essence of a signature. 
Consider for example if we want to analyze the interaction between 
$\mathbf{e}_1\teh + \mathbf{e}_2\teh + \mathbf{e}_3\teh$ and a subtype of \casec 
signatures which we will call \casecogen.
By definition, this means 
there exists some orthogonal $T$ such that $\Yinv T \mathbf{G} = [a; \lambda, b; 0, 0, c; 0, 0, 0, d]$ such that $\lambda, d \ne 0$ and $c^2 - c \lambda - bd \ne 0$~(see \cref{def:types-ternary}; these conditions 
all have intrinsic meanings under spin representation when the theory is fully developed).
We may normalize so that $\lambda = 1$, and  writing $T = (t_{ij})$, we get the values of $\mathbf{G}$, e.g.,
\begin{align*}
    4\mathbf{G}(\db\db\db) &= t_{11} \left(-3 \sqrt{2} (a-1) t_{12}^2+12 i b t_{12} t_{13}+6 \sqrt{2} c t_{13}^2\right)+t_{11}^2 \left(6 b t_{13}+3 i
   \sqrt{2} (a+1) t_{12}\right) \\
   &\quad+ 
   \sqrt{2} (a+3) t_{11}^3-i \sqrt{2} (a-3) t_{12}^3-6 b t_{12}^2 t_{13}+6 i \sqrt{2} c t_{12} t_{13}^2+4 d
   t_{13}^3
\end{align*}

and similarly complicated expressions for other values.
We get expressions quite difficult to handle, and it would be very hard to apply the known Boolean domain dichotomy criteria. 
Cancellations are
possible 
because $T$ is an orthogonal matrix.
Even if one were able to apply the Boolean domain dichotomy on $\holbs([1, 0, 0, 1], \mathbf{G}\domres{\db\dg}) \le_T \holts(\mathbf{F}, \mathbf{G})$ it could happen that $\holbs([1, 0, 0, 1], \mathbf{G}\domres{\db\dg})$ escapes the known hardness conditions but it is still not clear whether we can conclude $\holts(\mathbf{F}, \mathbf{G})$ is tractable (it is actually \sph). 
Then, one would have to start again with more complicated constructions.
By contrast, frames and shells, combined with spin representation, give an elegant and simple
way to analyze the above situation.

\paragraph{Spin Representation}
However, there are still difficulties because we need to analyze the interaction of signatures up to some orthogonal transform: the frames are also transformed. 
Here, we use the \emph{complex spin representation}  $\speciallinear(2, \mathbb{C}) 
\overset{\phi}{\rightarrow}  \so(3, \mathbb{C})$ (\cref{eqn:phiM-parameterization}). 
If we directly
analyze up to orthogonal transform, there are $9$ parameters to consider in
the $3 \times 3$ matrix
$T = (t_{ij}) \in \mathbb{C}^{3 \times 3}$.
But 
there are nonlinear dependencies among $t_{ij}$ because $T$ is 
orthogonal.
We use the fact that $\speciallinear(2, \mathbb{C})$ is a double cover of $\so(3, \mathbb{C})$, and ${\rm dim}_\mathbb{C} \so(3, \mathbb{C}) = 3$.
This is much
easier to use than the implicit relations among $t_{ij}$. 
One unexpected benefit is that because quadratic forms over $\mathbb{C}$
are isomorphic, we have $\so(3, \mathbb{C}) \cong \so(3, \mathbb{C}, \X)$, which is the group of transformations $\iny{T}$ such that $\iny{T}\transpose \X \iny{T} = \X$. Thus we can also 
get a  representation  $\speciallinear(2, \mathbb{C}) \overset{\rho}{\rightarrow}  \so(3, \mathbb{C}, \X)$ (\cref{eq:paramaterize-so3cx}). 
We analyze \casec signatures after applying $\Yinv$ transform, which brings $\holts(\mathcal{F})$ to 
$\holts(\X \mid \Yinv \mathcal{F})$. It is
in this bipartite setting where all the intricacies of \casec signatures are  revealed.
Then, $\so(3, \mathbb{C}, \X)$ is the right group of transformations that keeps the left side invariant, similar to how $\so(3, \mathbb{C})$
keeps  $\beq$ invariant in the non-bipartite Holant setting. 
We generalize the orthogonal Holant theorem to  $\so(3, \mathbb{C}, \X)$. 
Another miracle is that the 
spin representation for $\so(3, \mathbb{C}, \X)$ reveals a wealth of hidden structures of \casec signatures that are crucial to our dichotomy (\cref{sec:case-c-spin-rep-calc,subsec:stabc-rep}).

\paragraph{Annihilators}
To analyze the interaction between essentially Boolean domain signatures, we 
define the notion of the \emph{annihilator} $\ann(\mathbf{F})$ of a signature $\mathbf{F}$ (\cref{def:annihilator}). 
We note that the main tool used in~\cite{domain3-real-set} 
to separate the subtypes of \casea and \caseb
is symmetric tensor rank 
 (\casec signatures do not occur there
 because they cannot be real-valued).
However, 
the \casec  signature $\sym(\bzp, \bzp, \bzm)$ has symmetric rank $3$
and yet it is supported on $\db\dg$ and thus is an essentially Boolean domain signature.
The notion of annihilator provides a more suitable tool for 
complex-valued signatures.

The existence of a nonzero annihilator is invariant under holographic transforms. 
Also, any gadget generated by the signature is still annihilated by the same unary signature. 
We will show that  $\ann(\mathbf{F}) \ne 0$
means the signature is \emph{weak}, that its frame is small and its shell is large.
We note that the map from ternary signatures $\mathbf{F} \mapsto \langle \mathbf{F}, \mathbf{u} \rangle$ for nonzero $\mathbf{u} \in \mathbb{C}^3$ is a full rank linear map $\mathbb{C}^{10} \to \mathbb{C}^6$.
We leverage the weakness of a signature $\mathbf{F}$ to our advantage by noticing that 
the kernel of the above map has dimension $4$, which also corresponds to the dimension of 
the set of symmetric ternary Boolean domain signatures as a vector space.
With this intuition, we show that if two signatures have non-orthogonal annihilators, then the \emph{shell} of one signature, which can be viewed as polynomial constraints on four entries, is sufficient to determine the form of the other signature uniquely 
(\cref{prop:annihilator-uniquely-propagates-values}). 
If the annihilators are orthogonal, then we show that the two signatures can only interact degenerately, which always results in tractability (\cref{prop:orthogonal-normal-vector-edge-factors}). 

An interesting consequence of this for complex-valued signatures is that there are signatures whose annihilator self-annihilates, i.e., the annihilator is 
isotropic (\cref{subsec:casecz-characterization}). 
Then, such a signature self-degenerates: it cannot interact with itself. 
If we have two such signatures $\mathbf{F}$ and $\mathbf{G}$ with different isotropic annihilators, then we can show that $\holts(\mathbf{F}, \mathbf{G}) =_T \holbs(\mathbf{F}' \mid \mathbf{G}')$ for some Boolean domain signatures $\mathbf{F}', \mathbf{G}'$. 
We see an interesting emergence of an equivalent bipartite Boolean domain problem, which has not been observed previously. Therefore, we also provide a proof of the bipartite Boolean domain dichotomy and use it in this paper (\cref{lem:bipartite-boolean-dichotomy,cor:bipartite-boolean-dichotomy-explicit}). 

We also note that this characterization with annihilators makes the proof of pairwise dichotomy a lot simpler than the proof given in~\cite{domain3-real-set} for similar cases. 
We will see that our proof and the theorem statements are both shorter with this new characterization.

This allows us to come up with a powerful 
unified algorithm for solving $\holts(\mathcal{F})$ where $\mathcal{F}$ is a set of signatures with annihilators (\cref{thm:ternary-set-ebd-algorithm}).
One unexpected benefit is that we can extend the result to include binary signatures. 
We notice that if $\mathcal{B}$ is a set of binary signatures, then $\holts(\mathcal{F} \cup \mathcal{B}) =_T \holts(\mathcal{M} \mid \mathcal{F})$ where $\mathcal{M}$ is the monoid generated by $\mathcal{B}$ under multiplication. 
Then, we may view a binary signature as a generalized edge defining a bilinear form that is not necessarily the dot product.
We generalize the idea that orthogonal annihilators lead to a degenerate interaction under the dot product to account for a general bilinear form (\cref{lem:edge-factors-general-case}). 
This idea 
generalizes to higher domains and asymmetric signatures  when two signatures interact degenerately. We  give  an algorithm that is always applicable when \sphness conditions are not met, resulting in a dichotomy (\cref{thm:algorithm-ebd-with-binary}). 

This is  surprising---we had expected  dealing with signatures with annihilators and adding binary signatures would be the hardest part of the proof, as illustrated by the examples in \cref{subsec:tractable-gadget-meet}.

\paragraph{Non-Mixing}
We prove a dichotomy of set of ternary signatures in \cref{sec:set-ternary}.
More challenges remain on the way to the dichotomy for a set of ternary signatures
even after the pairwise dichotomy is proved.
Unlike for the Boolean domain,
the (3, 3)-dichotomy is \emph{not} transitive.
In other words, in the Boolean domain, if $\holbs(\mathbf{F}, \mathbf{G})$ and $\holbs(\mathbf{F}, \mathbf{H})$ are tractable, then $\holbs(\mathbf{F}, \mathbf{G}, \mathbf{H})$ is tractable as well for ternary signatures since the forms of $\mathbf{G}$ and $\mathbf{H}$ are essentially uniquely determined. 
That is not the case in domain $3$ 
due to the richness of the tractable classes.
We show some (3, 3, 3)-non-mixing results, that  $\holts(\mathbf{F}, \mathbf{G}, \mathbf{H})$ can be \sph even when $\holts(\mathbf{F}, \mathbf{G})$ and $\holts(\mathbf{F}, \mathbf{H})$ are tractable (\cref{subsec:non-mixing-ternary}). The dichotomy for a set of ternary signatures is achieved after
more insight is gained on the interactions of the 9 tractable subtypes of  \casea, \caseb and
\casec signatures (\cref{thm:ternary-set-dichotomy}).

\paragraph{Binary Signatures}
Next, we prove a dichotomy of $\holts(\mathcal{F} \cup \mathcal{B})$ for a set of ternary signatures $\mathcal{F}$ and a set of binary signatures $\mathcal{B}$ in \cref{sec:binary}.
As mentioned above, proving the (3,2)-dichotomy directly seems very difficult. 
Having already handled the signatures with annihilators, 
it is 
finally possible 
to prove the (3, 2)-dichotomy using similar techniques with frames and the (3, 3)-dichotomy.
We can view the proof of the (3, 3)-dichotomy as analyzing what happens when we make a gadget by attaching an orthogonal matrix to the signatures in the frame. 
The (3, 2)-dichotomy can be viewed similarly, that we make a gadget by attaching a symmetric matrix to the signatures in the frame. 
Since the frame is expressed in terms of span of tensors of standard basis vectors, it allows us to directly analyze each column of the matrix. 
Similar to the (3, 3, 3)-non-mixing case, we show (3, 2, 2)-non-mixing to achieve the dichotomy for a set of ternary and binary signatures (\cref{thm:dichotomy-ternary-binary}).

\paragraph{Higher Arity}
Finally, we incorporate signatures of higher arity in \cref{sec:higher-arity}.
The proof for this part is similar to~\cite{domain3-real-set} but more involved.
Given an arity $4$ signature $\mathbf{F}$,
we look at the ternary gadgets that can be formed from $\mathbf{F}$ by attaching a unary signature, namely the linear set $\mathscr{F} = \{ \langle \mathbf{F}, \mathbf{u} \rangle : \mathbf{u} \in \mathbb{C}^3 \}$.
We can then use the (3, 3)-dichotomy to argue that for $\mathbf{G}, \mathbf{H} \in \mathscr{F}$,
there is a very strong constraint for $\holts(\mathbf{G}, \mathbf{H}, x \mathbf{G} + y \mathbf{H})$ to be tractable for all $x, y\in \mathbb{C}$.
This allows us to narrow down what the support of the signatures in $\mathscr{F}$ must be, identifying the support of $\mathbf{F}$. 
The result is that the tractable arity $4$ signatures must be of essentially the same form as the tractable ternary signatures (\cref{lem:arity-4}).
Then, we extend the definitions of tractable types to include higher arity signatures (\cref{def:types-higher-arity}), and show that a higher arity signature of a tractable type can realize a ternary signatures of the same type (\cref{prop:reduce-arity}). 
This allows us to induct on the arity to obtain a dichotomy encompassing signatures of any arity (\cref{thm:full-dichotomy}).

\section{Preliminaries and Notations}

\subsection{Definitions}
Let $D$ be a finite domain set, and $\mathcal{F}$ be a set of constraint functions, called signatures.
Each $\mathbf{F} \in \mathcal{F}$ is a mapping $D^k \to \mathbb{C}$ for some arity $k$.
If  $\mathbf{F}$ takes values in $\mathbb{R}$, we say $\mathbf{F}$ is real-valued.

A \textit{signature grid} $\Omega = (G, \mathcal{F}, \pi)$ consists of a graph $G = (V, E)$ where $\pi$ labels each vertex $v \in V$ with
a function $\mathbf{F}_v \in \mathcal{F}$ and 
the incident edges $E(v)$ 
as its inputs.
The arity of $\mathbf{F}_v$
matches the degree of $v$.
The Holant problem on instance $\Omega$ is to evaluate
\begin{equation}\label{eq:holant-value}
  \hol_{\Omega} = \sum_{\sigma} \prod_{v \in V} \mathbf{F}_v(\sigma |_{E(v)}) \, ,
\end{equation}
where the sum is over all edge assignments $\sigma : E \to D$
and $\mathbf{F}_v(\sigma |_{E(v)})$ is the evaluation of $\mathbf{F}_v$ on the ordered input tuple $\sigma |_{E(v)}$.

A signature $\mathbf{F}_v$ is listed by its values lexicographically as a table, or it can be expressed as a tensor in $(\mathbb{C}^{\lvert D \rvert})^{\otimes \deg(v)}$.
We can identify a unary function $\mathbf{F}(x): D \to \mathbb{C}$ with a vector $\mathbf{u} \in \mathbb{C}^{\lvert D \rvert}$.
Given two vectors $\mathbf{u} = (u_i)$ and $\mathbf{v} = (v_j)$ of dimension $\lvert D \rvert$, the tensor product $\mathbf{u} \otimes \mathbf{v}$ is a vector in 
$\mathbb{C}^{\lvert D \rvert^2}$, with entries $u_i v_j$ for $1 \le i, j \le \lvert D \rvert$.
For matrices $A = (a_{ij})$ and $B = (b_{kl})$ the tensor product (or Kronecker product) $A \otimes B$ is defined similarly;
it has entries $a_{ij}b_{kl}$ indexed by $((i, k), (j, l))$ lexicographically.
We write $\mathbf{u}^{\otimes k}$ for $\mathbf{u}\otimes \cdots \otimes \mathbf{u}$ with $k$ copies of $\mathbf{u}$.
$A^{\otimes k}$ is similarly defined.
We have $(A \otimes B)(A' \otimes B') = (A A' \otimes BB')$ whenever the matrix products are defined.
In particular, $A^{\otimes k}(\mathbf{u}_1 \otimes \cdots \otimes \mathbf{u}_k) = A \mathbf{u}_1 \otimes \cdots \otimes A \mathbf{u}_k$ when the matrix-vector 
products $A \mathbf{u}_i$ are defined.

A signature $\mathbf{F}$ of arity $k$ is \textit{degenerate} if $\mathbf{F} = \mathbf{u}_1 \otimes \cdots \otimes \mathbf{u}_k$ for some vectors $\mathbf{u}_i$.
Such a signature is very weak; there is no interaction between the variables.
If every signature in $\mathcal{F}$ is degenerate, then $\hol_\Omega$ for any $\Omega = (G, \mathcal{F}, \pi)$ is computable in polynomial time in a trivial way:
Simply split every vertex $v$ into $\deg(v)$ vertices each assigned a unary 
$\mathbf{u}_i$
and connected to the incident edge.
Then $\hol_{\Omega}$ becomes a product over each component of a single edge.
Thus degenerate signatures are weak and should be properly understood as made up by unary signatures.
To concentrate on the essential features that differentiate tractability from intractability, the class of problems $\hols$ was introduced in~\cite{holant-star-boolean, holant-star-boolean-journal}.
These are the problems where all unary signatures are assumed to be present, i.e. $\hols(\mathcal{F}) = \hol(\mathcal{F} \cup \mathcal{U})$ where $\mathcal{U}$ is the set of all unary signatures.

Our proof uses the  notion of a \textit{gadget}.
Consider a type of graphs $G = (V, I, E)$ where $I$ and $E$ are two kinds of edges.
Edges in $I$ are ordinary internal edges with two endpoints in $V$.
Edges in $E$ are external edges (also called dangling edges) which have only one end point in $V$.
Such a graph can be made into a part of a larger graph as follows: 
Given a graph $G'$ and a vertex $v$ of $G'$, we may replace $v$ with a graph $G$ by merging the external edges of $E$ to the incident edges of $v$.

An $\mathcal{F}$-gadget consists of a graph $G = (V, I, E)$ and a labeling $\pi$ where each vertex $v \in V$ is labeled by $\mathbf{F}_v \in \mathcal{F}$.
We may view $G$ as a function $\mathbf{F}_G$, such that if we replace a vertex $v$ of a graph $G'$ by $G$, the Holant value of the resulting instance is 
as if we assign $F_G$ to $v$.
For this to hold, $\mathbf{F}_G$ must be such that for an assignment $\tau: E \to D$, 
\[
  \mathbf{F}_G(\tau) = \sum_{\sigma}\prod_{u \in V} \mathbf{F}_u (\tau \sigma |_{E(u)}) \, ,
\]
where the sum is over all edge assignments $\sigma: I \to D$ and $\tau \sigma$ is the combined assignment on $E \cup I$.

\subsection{Holographic Transformation}
To describe the idea of holographic transformations, it is convenient to consider bipartite graphs.
For a general graph, we can always transform it into a bipartite graph while preserving the Holant value, as follows:
for each edge in the graph, we replace it by a path of length $2$, and assign to the new vertex the binary \textsc{Equality} function $\beq$.

We use the notation $\hol(\mathcal{R} \mid \mathcal{G})$ to denote the Holant problem on bipartite graphs $H = (U, V, E)$,
where each signature for a vertex in $U$ or $V$ is from $\mathcal{R}$ or $\mathcal{G}$, respectively.
An input instance for the bipartite Holant problem is a bipartite signature grid and is denoted as $\Omega = (H ; \mathcal{R} \mid \mathcal{G} ; \pi)$.
Signatures in $\mathcal{R}$ are considered as row vectors (or covariant tensors); signatures in $\mathcal{G}$ are considered as column vectors (or contravariant tensors).

For a $\lvert D \rvert \times \lvert D \rvert$ matrix $T$ and a signature set $\mathcal{F}$, define
\[
  T \mathcal{F} = \{\mathbf{G} : \exists \mathbf{F} \in \mathcal{F} \text{ of arity n, such that } \mathbf{G} = T^{\otimes n} \mathbf{F}\} \, ,
\]
and similarly for $\mathcal{F}T$.
Whenever we write $T^{\otimes n} \mathbf{F}$ or $T \mathcal{F}$, we view the signatures as column vectors; similarly $\mathbf{F} T^{\otimes n}$ or $\mathcal{F} T$ as row vectors.
For brevity we will omit the tensor power and write $T \mathbf{F}$ instead of $T^{\otimes n} \mathbf{F}$ when it is clear from the context.
A holographic transformation by $T$ is the following operation:
given a signature grid $\Omega = (H ; \mathcal{R} \mid \mathcal{G} ; \pi)$, for the same graph $H$,
we get a new grid $\Omega' = (H ; \mathcal{R} T \mid T^{-1} \mathcal{G}; \pi)$ by replacing each signature in $\mathcal{R}$ or $\mathcal{G}$
with the corresponding signature in $\mathcal{R}T$ or $T^{-1} \mathcal{G}$.

\begin{theorem}[Valiant's Holant Theorem~\cite{valiant-accidental-alg}]
\label{thm:holant-thm}
If there is a holographic transformation mapping signature grid $\Omega$ to $\Omega'$, then $\hol_{\Omega} = \hol_{\Omega'}$.
\end{theorem}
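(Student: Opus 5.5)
The plan is to verify the identity term by term: expand both Holant sums over edge assignments and show they agree. Since $\Omega$ is bipartite, every edge $e$ joins a vertex $u\in U$ labeled by a row signature $\mathbf{R}_u\in\mathcal{R}$ and a vertex $v\in V$ labeled by a column signature $\mathbf{G}_v\in\mathcal{G}$. First rewrite $\hol_\Omega$ as a tensor contraction. Form the tensor product $\bigotimes_{u\in U}\mathbf{R}_u$, a covariant tensor with one index per edge, and $\bigotimes_{v\in V}\mathbf{G}_v$, a contravariant tensor with one index per edge. Reorder indices so both are indexed by the same ordering of $E$; this is the only bookkeeping to get right. Then
\[
\hol_\Omega=\Big(\bigotimes_{u}\mathbf{R}_u\Big)\Big(\bigotimes_{v}\mathbf{G}_v\Big),
\]
the full pairing with summation over $\sigma:E\to D$. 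This is exactly \eqref{eq:holant-value}, since each edge index is summed once and each factor is evaluated on its incident edges.

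Next, insert $I=T\,T^{-1}$ on every edge. The pairing is unchanged if we replace the row tensor by $\big(\bigotimes_u \mathbf{R}_u\big)T^{\otimes |E|}$ and the column tensor by $(T^{-1})^{\otimes |E|}\big(\bigotimes_v\mathbf{G}_v\big)$. The product of the two inserted operators is $(TT^{-1})^{\otimes|E|}=I$, by the mixed-product rule $(A\otimes B)(A'\otimes B')=AA'\otimes BB'$.

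Finally, regroup factors vertex by vertex. The edges incident to $u$ are a block of the index set, so $\big(\bigotimes_u\mathbf{R}_u\big)T^{\otimes|E|}=\bigotimes_u\big(\mathbf{R}_uT^{\otimes \deg u}\big)$, again by the mixed-product rule applied after permuting coordinates consistently. The analogous identity holds for the $\mathbf{G}_v$ with $T^{-1}$. The resulting pairing is by definition $\hol_{\Omega'}$ for $\Omega'=(H;\mathcal{R}T\mid T^{-1}\mathcal{G};\pi)$.

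The main obstacle is purely notational: justifying that the index permutation aligning edges commutes with $T^{\otimes|E|}$. This holds because a tensor power of a single matrix is invariant under conjugation by coordinate permutations. With that observed, the equality $\hol_\Omega=\hol_{\Omega'}$ follows.
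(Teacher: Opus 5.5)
Your proof is correct. Writing $\hol_\Omega$ as the contraction of $\bigotimes_{u}\mathbf{R}_u$ against $\bigotimes_{v}\mathbf{G}_v$, inserting $TT^{-1}$ on every edge, and regrouping vertex by vertex is the standard argument, and you correctly note that the edge-reordering permutation commutes with $T^{\otimes|E|}$. The paper does not prove this theorem itself; it simply cites Valiant, and the proof there is essentially the one you give.
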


Therefore, an invertible holographic transformation does not change the complexity of the Holant problem in the bipartite setting.
Furthermore, if $T$ is orthogonal, then $\beq T^{\otimes 2}$ takes the matrix form $ T\transpose I T = I$, so it preserves binary equality.
This means that an orthogonal holographic transformation can be used freely in the standard setting.
\begin{corollary}
\label{thm:orthogonal-holant-thm}
Suppose $T$ is an orthogonal matrix, $T\transpose T = I$, and let $\Omega = (G, \mathcal{F}, \pi)$ be a signature grid.
Under a holographic transformation by $T$, we get a new signature grid $\Omega' = (G, T \mathcal{F}, \pi)$ and $\hol_{\Omega} = \hol_{\Omega'}$.
\end{corollary}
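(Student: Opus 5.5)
The plan is to reduce the statement to Valiant's Holant Theorem (\cref{thm:holant-thm}) by passing to the bipartite setting and observing that an orthogonal $T$ leaves the binary equality signature unchanged.

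First, convert $\Omega = (G, \mathcal{F}, \pi)$ into a bipartite grid. Subdivide every edge of $G$ by a new degree-$2$ vertex labeled $\beq$. This gives a grid $\Omega_b = (H; \beq \mid \mathcal{F}; \pi_b)$: the left side consists of the new $\beq$ vertices, and the right side consists of the original vertices with their original signatures. Since $\beq$ forces both of its edges to carry the same value, summing over assignments to the two half-edges collapses to a sum over assignments to the original edge. Hence $\hol_{\Omega} = \hol_{\Omega_b}$. Do the same for $\Omega' = (G, T\mathcal{F}, \pi)$ to obtain $\Omega'_b = (H; \beq \mid T\mathcal{F}; \pi_b)$ with $\hol_{\Omega'} = \hol_{\Omega'_b}$.

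Next, apply the holographic transformation by $T^{-1}$ to $\Omega_b$. This is the operation in the definition with the matrix $T^{-1}$ in place of $T$; it is legitimate because $T$ is invertible, with $T^{-1} = T\transpose$. Each right-side signature $\mathbf{F}$ becomes $(T^{-1})^{-1}\mathbf{F} = T^{\otimes n}\mathbf{F}$, which is exactly the signature used in $\Omega'$. Each left-side $\beq$ becomes $\beq (T^{-1})^{\otimes 2}$. In matrix form this is $(T^{-1})\transpose I\, T^{-1} = T T\transpose$. From $T\transpose T = I$ we get $T T\transpose = I$ (a one-sided inverse of a square matrix is two-sided), so the left side is again $\beq$. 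The transformed grid is therefore precisely $\Omega'_b$, and \cref{thm:holant-thm} gives $\hol_{\Omega_b} = \hol_{\Omega'_b}$. Chaining the equalities yields $\hol_\Omega = \hol_{\Omega'}$.

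No step is a real obstacle. The only points needing care are bookkeeping. One must choose the direction of the transform ($T^{-1}$ rather than $T$) so that the right side becomes $T\mathcal{F}$. One must verify the identity $\beq (T^{-1})^{\otimes 2} = \beq$ from the row-vector convention, where a covariant binary signature $M$ transforms to $S\transpose M S$.
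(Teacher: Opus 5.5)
Your proof is correct and is essentially the paper's argument: bipartize by placing $\beq$ on every edge, invoke Valiant's Holant Theorem, and observe that an orthogonal transformation maps $\beq$ to itself. Your explicit choice to transform by $T^{-1}$, so that the right side becomes exactly $T\mathcal{F}$ and the left side becomes $TT\transpose = I$, is slightly more careful than the paper's one-line remark that $T\transpose I T = I$; the difference is harmless, since $T^{-1}$ is orthogonal as well.
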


\subsection{Notations} \label{subsec:notations}

We define some common vectors and matrices that we use throughout this paper.
For  ease of notation, we do not distinguish between column vectors and row vectors in this paper when the intention is clear from the context.

We use $\sbto = (1, 0) \in \mathbb{C}^2$ and $\sbtt = (0, 1) \in \mathbb{C}^2$ to denote the standard basis vectors in $\mathbb{C}^2$.
We use $\mathbf{e}_1 = (1, 0, 0) \in \mathbb{C}^3$, $\mathbf{e}_2 = (0, 1, 0) \in \mathbb{C}^3$, and $\mathbf{e}_3 = (0, 0, 1) \in \mathbb{C}^3$ to denote the standard basis vectors in $\mathbb{C}^3$.
Let $\bzp = \frac{1}{\sqrt{2}} (1, \i, 0)$ and $\bzm = \frac{1}{\sqrt{2}} (1, -\i, 0)$,
often treated as column vectors, especially acted on the left by a matrix.
Let $\Z = \frac{1}{\sqrt{2}}  \begin{bsmallmatrix}
  1 & 1 \\ \i & - \i
\end{bsmallmatrix}$.
We have  $\Z^{-1} = \frac{1}{\sqrt{2}} \begin{bsmallmatrix}
  1 & - \i  \\ 1 & \i
\end{bsmallmatrix}$.
Let
$\Y = 
\begin{bmatrix}
  \Z & 0 \\ 0 & 1
\end{bmatrix}
$,

and  $\X = \Y\transpose \Y = \begin{bsmallmatrix}
  0 & 1 & 0 \\
  1 & 0 & 0 \\
  0 & 0 & 1
\end{bsmallmatrix}$.
Note that $\Y \mathbf{e}_1 = \bzp$, $\Y \mathbf{e}_2 = \bzm$ and $\Y \mathbf{e}_3 = \mathbf{e}_3$.

For two vectors $\mathbf{x}, \mathbf{y} \in \mathbb{C}^n$, 
we use $\mathbf{x} \sim \mathbf{y}$ to denote they are linearly dependent.
Note that
$0 \sim \mathbf{x}$ for any $\mathbf{x} \in \mathbb{C}^n$.
For two tuples of vectors $X = (\mathbf{x}_1, \ldots, \mathbf{x}_m)$ and $Y = (\mathbf{y}_1, \ldots, \mathbf{y}_m)$, we write $X \sim Y$ if $\mathbf{x}_i \sim \mathbf{y}_i$ for all $1 \le i \le m$ after some reordering.
Throughout this paper, the symbol $\langle \mathbf{u}, \mathbf{v} \rangle$ for $\mathbf{u}, \mathbf{v} \in \mathbb{C}^n$ denotes the dot product,
i.e. $\langle \mathbf{u}, \mathbf{v} \rangle = \mathbf{u}\transpose \mathbf{v} = \sum u_i v_i$.
We say $\mathbf{u}, \mathbf{v} \in \mathbb{C}^n$ are orthogonal if $\langle \mathbf{u}, \mathbf{v} \rangle = 0$.

We use the symbol $*$ to denote some number that does not matter for the argument, especially as an entry of a vector or a matrix.
So, $(*, *, 0)$ may be the vector $(0, 2, 0)$, $(1, 5, 0)$ or $(-3, 0, 0)$ and so on.

A signature $\mathbf{F}$ of arity $k$ is \textit{symmetric} if $\mathbf{F}(x_1, \ldots, x_k) = \mathbf{F}(x_{\sigma(1)}, \ldots, x_{\sigma(k)})$ for all $\sigma \in S_k$, the symmetric group.
It can be shown that a symmetric signature is degenerate if and only if $\mathbf{F} = \mathbf{u}^{\otimes k}$ for some unary $\mathbf{u}$.

We consider a signature $\mathbf{F}$ and its nonzero multiple $c \mathbf{F}$ as the same signature, since replacing $\mathbf{F}$ by $c \mathbf{F}$ only introduces a easily computable nonzero global factor in the Holant value.

For two vectors $\mathbf{u}, \mathbf{v} \in \mathbb{C}^n$,
we define an order-$k$ symmetric tensor 
\[\sym(\mathbf{u}, \ldots, \mathbf{u}, \mathbf{v}) = (\mathbf{u} \otimes \cdots \otimes \mathbf{u} \otimes \mathbf{v})
+  (\mathbf{u} \otimes \cdots \otimes \mathbf{u} \otimes \mathbf{v} \otimes \mathbf{u}) 
 +  \cdots +  (\mathbf{v} \otimes \mathbf{u} \otimes \cdots \otimes \mathbf{u}) \;, \]
the sum of $k$ terms where $\mathbf{v}$ is moving from right to left. 
For example, $\sym(\sbto, \sbto, \sbtt) = \sbto \otimes \sbto \otimes \sbtt + \sbto \otimes \sbtt \otimes \sbto + \sbtt \otimes \sbto \otimes \sbto$.
Note that there is no normalizing factor, and we are not summing up over all possible permutations.

A symmetric signature $\mathbf{F}$ on $k$ Boolean variables $\{0, 1\}$ can be expressed as $[f_0, f_1, \ldots, f_k]$ where $f_i$ is the value of $\mathbf{F}$ 
on inputs of Hamming weight $i$.
The Boolean domain signature $[0, 1, 0, 0]$ 
({\sc ExactOne} of arity 3)  can be written as $\sym(\sbto, \sbto, \sbtt)$.

In this paper, we focus on signatures on domain size $3$, and we use the symbols $\{\db, \dg, \dr\}$ to denote the domain elements.
We use the notation in \cref{fig:ternary-signature-notation} to denote a symmetric ternary signature on domain $3$,
\begin{figure}
 \centering
\tersig{0.7}{f_{\db\db\db} f_{\db\db\dg} f_{\db\db\dr} f_{\db\dg\dg} f_{\db\dg\dr} f_{\db\dr\dr} f_{\dg\dg\dg} f_{\dg\dg\dr} f_{\dg\dr\dr} f_{\dr\dr\dr}}
\caption{Triangular representation of a symmetric ternary domain $3$ signature.}
\label{fig:ternary-signature-notation}
\end{figure}
where $f_{i j k}$ is the value  $\mathbf{F}(i,j,k)$.
We also represent the signature as $\mathbf{F} = [f_{\db\db\db}; f_{\db\db\dg}, f_{\db\db\dr}; f_{\db\dg\dg}, f_{\db\dg\dr}, f_{\db\dr\dr}; f_{\dg\dg\dg}, f_{\dg\dg\dr}, f_{\dg\dr\dr}, f_{\dr\dr\dr}]$ by writing the values in the triangular representation row by row.
This triangular notation can be generalized to signatures of other arities, where an arity $k$ signature can be denoted by a triangle with side length $k + 1$.

A binary signature $\mathbf{F}$ (not necessarily symmetric) can be expressed as a $\lvert D \rvert \times \lvert D \rvert$ matrix $M_{\mathbf{F}}$, where the entry $(i, j) \in D \times D$
is the value  $\mathbf{F}(i, j)$.
For  ease of notation, we use the term matrix and binary signature interchangeably, and use $\mathbf{F}$ to refer to both a signature and its matrix $M_{\mathbf{F}}$.
To fix an ordering, binary signature on domain $3$ is expressed as the following:
\[
  \mathbf{F} = \begin{bmatrix}
    f_{\db\db} & f_{\db \dg} & f_{\db \dr} \\
    f_{\dg \db} & f_{\dg \dg} & f_{\dg \dr} \\
    f_{\dr \db} & f_{\dr \dg} & f_{\dr \dr} 
  \end{bmatrix} \, .
\]
A binary signature $\mathbf{F}$ is symmetric  iff its matrix is symmetric.

Let $\mathbf{G}$ be a binary signature and $\mathbf{F}$ be a symmetric signature of arity $k \ge 1$.
We use $\mathbf{G} \mathbf{F} := \mathbf{G}^{\otimes k} \mathbf{F}$ to denote the gadget constructed by attaching a copy of $\mathbf{G}$ at each edge of $\mathbf{F}$.
An example for $k = 3$ is shown in \cref{fig:gf}.
\begin{figure}
\centering
\begin{subfigure}[b]{0.45\textwidth}
\centering
      \begin{tikzpicture}[scale=0.7]
        \node[mycirc, label=right:{$\mathbf{F}$}] (f1) at (19, 0) {};
        \node[mycirc, label=above:{$\mathbf{G}$}] (f2) at (17, 0) {};

        \node[mycirc, label=above:{$\mathbf{G}$}] (g11) at (17, 2) {};
        \node[mycirc, label=above:{$\mathbf{G}$}] (g21) at (17, -2) {};

        \draw (f1) to[out=90, in=0] (g11) -- (16, 2);
        \draw (f1) to[out=-90, in=0] (g21) -- (16, -2);

        \draw (f1) -- (f2) -- (16, 0);
  \end{tikzpicture}
  \caption{Gadget $\mathbf{G} \mathbf{F}$}\label{fig:gf}
  \end{subfigure}
  \begin{subfigure}[b]{0.45\textwidth}
\centering
      \begin{tikzpicture}[scale=0.7]
        \node[mycirc, label=below:{$\mathbf{F}$}] (f1) at (0, 0) {};
        \node[draw,circle, label=above:{$\mathbf{u}$}] (f2) at (0, 2) {};
        \draw (f1) -- (f2);
        \draw (-2, 0) -- (f1) -- (2, 0);
  \end{tikzpicture}
  \caption{Gadget $\langle \mathbf{F}, \mathbf{u} \rangle$}\label{fig:fu}
  \end{subfigure}
  \caption{Common Gadget Constructions.}
\end{figure}
If $\mathbf{F}$ is written in a tensor form, 
$\mathbf{F} = \mathbf{v}_1^{\otimes k} + \cdots + \mathbf{v}_s^{\otimes k}$ for $\mathbf{v}_i \in \mathbb{C}^{\lvert D \rvert}$
we can easily check that $\mathbf{G} \mathbf{F} = (\mathbf{G} \mathbf{v}_1)^{\otimes k} + \cdots + (\mathbf{G} \mathbf{v}_s)^{\otimes k}$.

Another gadget construction common in this paper is connecting a unary signature, shown in \cref{fig:fu}.
For  ease of notation, we identify a vector $\mathbf{u} \in \mathbb{C}^{\lvert D \rvert}$ with a unary signature on domain $D$.
Let $\mathbf{u} \in \mathbb{C}^{\lvert D \rvert}$ and $\mathbf{F}$ be a symmetric signature on domain $D$ of arity $k$.
Then, $\langle \mathbf{F}, \mathbf{u} \rangle$ is the arity $k - 1$ gadget obtained by connecting $\mathbf{u}$ to any edge of $\mathbf{F}$.
Since $\mathbf{F}$ is symmetric, the choice of the edge does not matter. 
Given the triangular representation of $\mathbf{F}$, we can calculate $\langle \mathbf{F}, \mathbf{u} \rangle$ easily as described in \cref{fig:computing-f-u}. 
Each entry of $\mathbf{G} = \langle \mathbf{F}, (x, y, z) \rangle$ is obtained by a linear combination $\mathbf{G}(i,j) = x \mathbf{F}(\db, i, j) + y \mathbf{F}(\dg, i, j) + z \mathbf{F}(\dr, i, j)$;
i.e.\ we start at any entry on the first three rows in the triangular representation of $\mathbf{F}$, and then form a linear combination with coefficients $x, y, z$ in a counterclockwise order involving the three entries forming a small triangle. 
\begin{figure}
\centering
  \begin{tikzpicture}[scale=1, every node/.style={scale=1}]
  \node (a) at (0, 0) {$f_{\db\db\db}$};
  \node (b) at (-1, -1) {$f_{\db\db\dg}$};
  \node (c) at (1, -1) {$f_{\db\db\dr}$};
  
  \node at(-2, -2) {$f_{\db\dg\dg}$};
  \node at (0, -2) {$f_{\db\dg\dr}$};
  \node at (2, -2) {$f_{\db\dr\dr}$};
  
  \node at (-3, -3) {$f_{\dg\dg\dg}$};
  \node at (-1, -3) {$f_{\dg\dg\dr}$};
  \node at (1, -3) {$f_{\dg\dr\dr}$};
  \node at (3, -3) {$f_{\dr\dr\dr}$};

  \draw[->] (a) -- (b);
  \draw[->] (b) -- (c);

  \draw[dashed] (0, 0.7) -- (-2, -1.3) -- (2, -1.3) -- cycle;

  \node (aa) at (8, -0.5) {$g_{\db\db}$};

\node[anchor=west] at (8.3, -0.5) {$ = x f_{\db\db\db} + yf_{\db\db\dg} + zf_{\db\db\dr}$};

  \node at (7, -1.5) {$g_{\db\dg}$};
  \node at (9, -1.5) {$g_{\db\dr}$};
  
  \node at(6, -2.5) {$g_{\dg\dg}$};
  \node (bb) at (8, -2.5) {$g_{\dg\dr}$};
  \node at (10, -2.5) {$g_{\dr\dr}$};

  \draw[->] (2, -0.5) -- (aa);
\end{tikzpicture}
\caption{$\mathbf{F}$ and $\mathbf{G} = \langle \mathbf{F}, (x, y, z) \rangle$.} \label{fig:computing-f-u}
\end{figure}

We use $\hol_2$ to denote the Holant problem on Boolean domain $\{0, 1\}$,
and $\hol_3$ to denote the Holant problem on domain $\{\db, \dg, \dr\}$.

For a domain $3$ signature, we use the symbol $\mathbf{F}\domres{i, j}$ to denote the Boolean domain signature obtained by restricting the inputs of $\mathbf{F}$ to be from $\{i, j\} \subset \{\db, \dg, \dr\}$.
We extend this notation to a set of signatures $\mathcal{F}$: $\mathcal{F}\domres{ij} := \{ \mathbf{F}\domres{ij} : \mathbf{F} \in \mathcal{F} \}$.
When we take a domain restriction of a domain $3$ signature to Boolean domain, we will identify the domains in the following way:
\begin{itemize}
  \item $\mathbf{F}\domres{\db \dg}$ is viewed as identifying $\db$ as $0$ and $\dg$ as $1$.
  \item $\mathbf{F}\domres{\db \dr}$ is viewed as identifying $\db$ as $0$ and $\dr$ as $1$.
  \item $\mathbf{F}\domres{\dg \dr}$ is viewed as identifying $\dg$ as $0$ and $\dr$ as $1$.
\end{itemize}
For example, $\mathbf{F}\domres{\db\dg} = [f_{\db\db\db}, f_{\db\db\dg}, f_{\db\dg\dg}, f_{\dg\dg\dg}]$ for the signature $\mathbf{F}$ in \cref{fig:ternary-signature-notation}, i.e.\ the signature formed by the four values at the left outer edge of the triangle.

Let $\mathbf{F}$ be a domain $3$ signature.
We define $\supp \mathbf{F}$ to be the set of inputs for which $\mathbf{F}$ is nonzero.
For brevity, we say $\mathbf{F}$ is supported on $\db\dg$ if $\supp \mathbf{F} \subseteq \{\db,\dg\}^*$, and similarly for $\db\dr$ and $\dg\dr$.

$\symspace{k}{n}$ is the set of order-$k$ symmetric tensor over $\mathbb{C}^n$.
In our setting, we may think of it as the set of symmetric complex-valued $k$-arity signatures over a domain of size $n$.
\begin{definition}[Definition 4.1 of~\cite{symmetric-tensors-comon}]
  The symmetric rank of $A \in \symspace{k}{n}$ is defined as 
  \[
    \rank(A) := \min \{s : A = \sum_{i = 1}^s \mathbf{y}_i^{\otimes k}\} \,.
  \]
\end{definition}
By Lemma 4.2 in~\cite{symmetric-tensors-comon}, symmetric rank always exists.

\subsection{Known Dichotomy Theorems}
\begin{definition}[Definition 2.9 in~\cite{holant-textbook}]\label{def:boolean-signature-type-i-ii}
  A signature $[x_0, x_1, \ldots, x_n]$, where $n \ge 2$, has type $\typei(a, b)$, if there exist $a$ and $b$ (not both $0$), such that $a x_k + b x_{k+1} = a x_{k+2}$ for $0 \le k \le n -2$.
  We say  it is of type $\typeii$, if $x_k = - x_{k+2}$ for $0 \le k \le n - 2$.
\end{definition}
\begin{theorem}[Theorem 2.12 in~\cite{holant-textbook}]\label{thm:dich-sym-Boolean}
Let $\mathcal{F}$ be a set of nondegenerate symmetric signatures over $\mathbb{C}$ in Boolean variables.
Then $\holbs(\mathcal{F})$ is computable in polynomial time for the following three classes of $\mathcal{F}$.
In all other cases, $\holbs(\mathcal{F})$ is \#\P-hard.
\begin{enumerate}
  \item Every signature in $\mathcal{F}$ is of arity $\le 2$.
  \item There exists $a$ and $b$ (not both $0$, depending only on $\mathcal{F}$), such that every signature in $\mathcal{F}$ either (1) has type $\typei(a,b)$ or (2) has arity $2$ and is of the form $[2a \lambda, b \lambda, -2 a \lambda]$.
  \item Every signature in $\mathcal{F}$ either (1) has type $\typeii$ or (2) has arity $2$ and is of the form $[\lambda, 0, \lambda]$.
\end{enumerate}
\end{theorem}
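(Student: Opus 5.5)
The plan is to prove tractability by reducing each of the three classes to a known tractable framework via a holographic transformation, and hardness by first reducing to a single ternary signature and then to pairs of signatures. Class 1 is immediate. With only signatures of arity $\le 2$, every connected component of an instance is a path or a cycle, so its value is a trace or a bilinear form of a product of $2\times 2$ matrices.

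For classes 2 and 3 the key step is to rewrite type $\typei(a,b)$ and type $\typeii$ in tensor form.
\begin{itemize}
\item A symmetric signature $[x_0,\dots,x_n]$ with $n\ge 2$ has type $\typei(a,b)$ exactly when it satisfies a second-order linear recurrence with characteristic polynomial $at^2-bt-a$. If the two roots are distinct, the signature equals $\alpha\mathbf{u}^{\otimes n}+\beta\mathbf{v}^{\otimes n}$ for two fixed vectors $\mathbf{u},\mathbf{v}$ depending only on $(a,b)$. A computation shows $\langle\mathbf{u},\mathbf{v}\rangle=0$ whenever $a\ne 0$. If the root is repeated (this requires $a=0$, or the isotropic case), the signature has the form $\sym(\mathbf{u},\dots,\mathbf{u},\mathbf{v})$ plus a multiple of $\mathbf{u}^{\otimes n}$, with $\mathbf{u}$ isotropic.
\item Type $\typeii$ is the special root pair $\{\i,-\i\}$, i.e.\ $\mathbf{u}=(1,\i)$ and $\mathbf{v}=(1,-\i)$, which are both isotropic.
\end{itemize}
In the distinct orthogonal-root case, an orthogonal holographic transform (\cref{thm:orthogonal-holant-thm}) sends $\mathbf{u},\mathbf{v}$ to multiples of $\sbto,\sbtt$. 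Every class-2 signature then becomes a generalized equality $\alpha\sbto^{\otimes n}+\beta\sbtt^{\otimes n}$, and the special binary $[2a\lambda,b\lambda,-2a\lambda]$ is checked to become diagonal. The problem is then a product-type $\scsp$: each connected component is forced to be constant, so it is computable. In the isotropic and type $\typeii$ cases I would instead apply $\Z^{-1}$ and use the bipartite form $\hol(\beq\Z^{\otimes 2}\mid \Z^{-1}\mathcal{F})$. Here $\beq$ becomes $\bne$. For type $\typeii$ each signature becomes $\alpha\sbto^{\otimes n}+\beta\sbtt^{\otimes n}$, and $[\lambda,0,\lambda]$ becomes $\bne$-like. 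Thus every component is forced to alternate, and there are at most two assignments per component. For the repeated-root case the transformed signatures are $\sym(\sbto,\dots,\sbto,\sbtt)$-like (\textsc{ExactOne} plus a multiple of the all-$0$ tensor) connected by $\bne$. This is a perfect-matching-type structure on a tree-like count that collapses, since at most one $\sbtt$ edge per vertex under $\bne$ forces a counting problem computable by a simple linear-algebraic propagation. I would verify this carefully.

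For hardness, assume $\mathcal{F}$ is not contained in any of the three classes. First, reduce to ternary signatures: for $\mathbf{F}$ of arity $n\ge 3$, attaching $n-3$ unaries gives $\langle\cdots\langle\mathbf{F},\mathbf{w}\rangle\cdots\rangle$. If $\mathbf{F}$ satisfies no second-order recurrence, then a generic choice of unaries yields a ternary signature satisfying no recurrence, which is hard by a single-signature dichotomy. If all the arity-$\ge 3$ members of $\mathcal{F}$ do satisfy recurrences but with incompatible $(a,b)$, I would realize ternary signatures with distinct recurrences and prove hardness for that pair. For a single ternary $[x_0,x_1,x_2,x_3]$ of generic form $\alpha\mathbf{u}^{\otimes 3}+\beta\mathbf{v}^{\otimes 3}$ with $\mathbf{u},\mathbf{v}$ neither orthogonal nor isotropic appropriately, transform $\mathbf{u},\mathbf{v}\mapsto\sbto,\sbtt$. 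This gives $\hol(M\mid \mathord{=_3})$ with a non-trivial binary $M=T^{\mathsf T}T$, which is $\#$CSP-like and hard by interpolation and the known $\scsp$ dichotomy. The rank-3 ternary case (no recurrence) is handled by realizing such a generic binary via self-loops and unary attachments.

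The main obstacle I expect is the mixed cases. These are: a binary signature that is not of the allowed special form together with ternaries of one type, and two ternaries sharing one eigenvector but not the other. Both require interpolation arguments using powers of a binary gadget with a non-root-of-unity eigenvalue ratio. For these I would first try the standard interpolation of arbitrary diagonal binaries $\mathrm{diag}(1,t)$ in the transformed basis, which would reduce everything to $\scsp$ hardness.
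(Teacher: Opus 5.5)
The paper gives no proof of this statement; it is quoted from the Holant textbook. Your proposal therefore has to stand on its own. Its architecture is the standard one, but the tractability part has errors that need fixing, and the hardness part is not yet a proof.

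On tractability, there are three points.
(i) For $a=0$, type $\typei(0,b)$ forces $x_1=\cdots=x_{n-1}=0$. This is the generalized equality $x_0\sbto^{\otimes n}+x_n\sbtt^{\otimes n}$, not a repeated root. A repeated root occurs exactly when $b=\pm 2\i a$ with $a\ne 0$.
(ii) More seriously, the special binary $[2a\lambda,b\lambda,-2a\lambda]$ does not become diagonal. For a root $t$ of $at^2-bt-a$ one has $(1,t)\begin{bsmallmatrix}2a & b\\ b & -2a\end{bsmallmatrix}(1,t)^{\mathsf T}=-2(at^2-bt-a)=0$. So under the orthogonal $T$ it becomes $[0,*,0]$, as the paper records right after the statement. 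It is the arity-$2$ signatures of type $\typei(a,b)$ that become diagonal. Hence components are not forced to be constant. The correct argument is unique propagation through generalized equalities and $[0,*,0]$'s, giving at most two nonzero assignments per component.
(iii) In the repeated-root case the right argument is counting and vanishing, not a ``linear-algebraic propagation''. Every $\bne$ contributes one $0$-edge and one $1$-edge. Absorbing unaries preserves the shape $[*,*,0,\ldots,0]$, and each such vertex carries at most one $1$. So any remaining vertex of arity $\ge 3$ makes the value $0$; otherwise you just multiply matrices.

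The genuine gap is hardness, in three places.
(i) ``A ternary signature satisfying no recurrence'' is vacuous. Every ternary $[x_0,x_1,x_2,x_3]$ satisfies $px_k+qx_{k+1}+rx_{k+2}=0$ for $k=0,1$ with some nonzero $(p,q,r)$, since that is two equations in three unknowns. This recurrence is unique up to scale when the signature is nondegenerate. What must fail is $r=-p$ (type $\typei$) or $(p,q,r)\sim(1,0,1)$ (type $\typeii$). Likewise, $\lambda\mathbf{u}^{\otimes 3}+\sym(\mathbf{u},\mathbf{u},\mathbf{v})$ does satisfy a repeated-root recurrence, and it is tractable iff $\mathbf{u}$ is isotropic. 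The corrected reduction claim is that a nondegenerate $\mathbf{F}$ of arity $\ge 4$ not of type $\typei$ or $\typeii$ has some ternary unary-slice not of either type. This needs its own argument, for instance via the structure of linear spaces of rank-$\le 1$ matrices; genericity alone does not supply it.
(ii) Appealing to ``a single-signature dichotomy'' is circular, because the ternary dichotomy is your base case. Your $\hol(M\mid\mathord{=_3})$ route is a workable plan for the rank-$2$ case: interpolate $\beq$ on the left, then use the complex Boolean $\scsp$ dichotomy with all unaries, where only product type survives. The non-isotropic rank-$3$ case is not handled by ``realizing a generic binary via self-loops and unary attachments''. A self-loop on a ternary leaves only a unary. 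Decorating every edge with a binary $B$ gives $\lambda(B\mathbf{u})^{\otimes 3}+\sym(B\mathbf{u},B\mathbf{u},B\mathbf{v})$, which has the same shape, so no $\mathord{=_3}$ ever appears. You need an actual reduction. For example, take $\mathbf{v}\sim\mathbf{u}^{\perp}$, normalize orthogonally to $[\lambda',1,0,0]$, and prove that $\holbs([\lambda',1,0,0])$ is \sph, using a multi-copy gadget.
(iii) All set-level hardness is only announced. The cases left open are: two ternaries of types $\typei(a,b)$ and $\typei(a',b')$ with $(a:b)\ne(a':b')$; type $\typei$ against type $\typeii$; binaries outside the allowed forms; and the repeated-root normal form $\bne\mid[*,*,0,\ldots,0]$, which provides no $\mathord{=_3}$ to anchor a $\scsp$ reduction. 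These mixed cases are exactly where the ``otherwise \#P-hard'' direction lives. As it stands, the hardness half is a plan rather than an argument.
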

Let  $\mathbf{F}$ be a Boolean domain signature of arity $n$.
We can compute the following.
\begin{enumerate}
    \item If $\mathbf{F}$ is type $\typei(a, b)$ for $b \ne \pm 2\i a$, then there exists an orthogonal $T$ such that $T \mathbf{F} = x\sbto^{\otimes n} + y\sbtt^{\otimes n} = [x, 0, \ldots, 0, y]$ for some $x, y \in \mathbb{C}$. 
    Under this $T$, $T [2a\lambda, b\lambda - 2a\lambda] = [0, *, 0]$.
    \item If $\mathbf{F}$ is type $\typei(a, b)$ for $b = \pm 2\i a$, 
    then $\Z^{-1}\mathbf{F}$ is
    $x \sbto^{\otimes n} + y \sym(\sbto, \ldots, \sbto, \sbtt) = [x, y, 0, \ldots, 0]$ 
    or $[0, \ldots, 0, y, x]$ (switching $\sbto, \sbtt$),
    for some $x, y \in \mathbb{C}$.
    Note that $[2a\lambda, b\lambda - 2a\lambda]$ is degenerate.
    \item If $\mathbf{F}$ is type $\typeii$, then $\Z^{-1} \mathbf{F} = x\sbto^{\otimes n} + y\sbtt^{\otimes n} = [x, 0, \ldots, 0, y]$ for some $x, y \in \mathbb{C}$.
    Note that $\Z^{-1}[\lambda, 0, \lambda] = [0, *, 0]$.
\end{enumerate}
A $\holbs(\mathcal{F})$ for $\mathcal{F}$, a set of not necessary symmetric signatures, is also known \cite{holant-star-asymmetric}, but we do not use it in this paper. 

A $\holts(\mathbf{F})$ dichotomy for a single symmetric, complex-valued ternary signature is known.
\begin{theorem}[Theorem 3.1, 3.2 in~\cite{domain3-single-signature}]
\label{thm:single-ternary-dichotomy}
Let $\mathbf{F}$ be a symmetric, complex valued, ternary signature over domain $\{\db, \dg, \dr\}$.
Then $\holts(\mathbf{F})$ \sph unless $\mathbf{F}$ is one of the following forms, in which case the problem is solvable in polynomial time.
\begin{enumerate}
  \item There exists three vectors $\boldsymbol{\alpha}, \boldsymbol{\beta}, \boldsymbol{\gamma} \in \mathbb{C}^3$ such that $\langle \boldsymbol{\alpha}, \boldsymbol{\beta} \rangle = 0$, $\langle \boldsymbol{\alpha}, \boldsymbol{\gamma} \rangle = 0$, and $\langle \boldsymbol{\beta}, \boldsymbol{\gamma} \rangle = 0$, and 
    \[
      \mathbf{F} = \boldsymbol{\alpha}\teh + \boldsymbol{\beta}\teh + \boldsymbol{\gamma}\teh \,.
    \]
  \item There exists three vectors $\boldsymbol{\alpha}, \boldsymbol{\beta}_1, \boldsymbol{\beta}_2 \in \mathbb{C}^3$ such that 
    $\langle \boldsymbol{\alpha}, \boldsymbol{\beta}_1 \rangle = 0$, $\langle \boldsymbol{\alpha}, \boldsymbol{\beta}_2 \rangle = 0$, 
    $\langle \boldsymbol{\beta}_1, \boldsymbol{\beta}_1 \rangle = 0$,
    $\langle \boldsymbol{\beta}_2, \boldsymbol{\beta}_2 \rangle = 0$ and 
    \[
      \mathbf{F} = \boldsymbol{\alpha}\teh + \boldsymbol{\beta}_1\teh + \boldsymbol{\beta}_2\teh \,.
    \]
  \item There exists two vectors $\boldsymbol{\beta}, \boldsymbol{\gamma} \in \mathbb{C}^3$ and a signature $\mathbf{F}_{\boldsymbol{\beta}}$ of arity $3$, such that
    $\boldsymbol{\beta} \ne 0$, $\langle \boldsymbol{\beta}, \boldsymbol{\beta} \rangle = 0$, $\langle \mathbf{F}_{\boldsymbol{\beta}}, \boldsymbol{\beta} \rangle = 0$ and
    \[
      \mathbf{F} = \mathbf{F}_{\boldsymbol{\beta}} + 
      \sym(\boldsymbol{\beta}, \boldsymbol{\beta}, \boldsymbol{\gamma}) \,.
    \]
\end{enumerate}
These cases are equivalent to an existence of an orthogonal transformation $T$, such that 
\begin{enumerate}
  \item 
    For some $a, b, c \in \mathbb{C}$
    \[
      T\teh \mathbf{F} = a \mathbf{e}_1\teh + b \mathbf{e}_2\teh + c \mathbf{e}_3\teh \,.
    \]
  \item 
    For some $c \ne 0$ and $\lambda \in \mathbb{C}$,
    \[
      c T\teh \mathbf{F} =  \bzp \teh + \bzm\teh + \lambda \mathbf{e}_3 \teh  \,.
    \]
  \item 
    For $\epsilon \in \{0, 1\}$,
    \[
     T\teh \mathbf{F} = 
     \mathbf{F}_0 + \epsilon 
     \sym(\bzp, \bzp, \bzm)
    \]
    where $\mathbf{F}_0$ satisfies the annihilation condition $\langle \mathbf{F}_0,
    \bzp \rangle = 0$.
\end{enumerate}
\end{theorem}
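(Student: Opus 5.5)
The dichotomy itself (that $\holts(\mathbf{F})$ is in P for the three vector forms and \sph otherwise) is taken from~\cite{domain3-single-signature}, and I would not reprove it. What remains is the equivalence between the vector description and the orthogonal normal forms. The plan is to go case by case and use one tool throughout: Witt's extension theorem for the nondegenerate form $\langle\cdot,\cdot\rangle$ on $\mathbb{C}^3$. It says that any isometry between subspaces extends to an element of $\ortho(3,\mathbb{C})$. Composing with $-I$ if necessary, we may also assume $T\in\so(3,\mathbb{C})$. Cube roots over $\mathbb{C}$ let us absorb scalar coefficients into vectors freely, since $c\,\mathbf{v}\teh=(c^{1/3}\mathbf{v})\teh$.

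\textbf{Form 1 (vector form $\Rightarrow$ normal form).} I would split on isotropy.
\begin{itemize}
\item If $\boldsymbol{\alpha},\boldsymbol{\beta},\boldsymbol{\gamma}$ are all non-isotropic (or zero), rescale each nonzero one to norm $1$. They form an orthonormal set, and Witt's theorem gives $T$ with $T\teh\mathbf{F}=a\mathbf{e}_1\teh+b\mathbf{e}_2\teh+c\mathbf{e}_3\teh$.
\item If some nonzero vector, say $\boldsymbol{\beta}$, is isotropic, then $\boldsymbol{\alpha},\boldsymbol{\gamma}\in\boldsymbol{\beta}^\perp$. This is a degenerate plane with radical $\mathbb{C}\boldsymbol{\beta}$. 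For two mutually orthogonal vectors in it, one must be a multiple of $\boldsymbol{\beta}$, or the pair is already covered by the isotropic analysis below. So $\mathbf{F}$ collapses to a combination of $\boldsymbol{\beta}\teh$ and one more vector orthogonal to $\boldsymbol{\beta}$. That lies in the annihilated part $\mathbf{F}_{\boldsymbol{\beta}}$ of form 3 (take $\boldsymbol{\gamma}=0$), so it is handled by the third normal form.
\end{itemize}
The converse directions are immediate: $\mathbf{e}_i$, $\bzp$, $\bzm$, $\mathbf{e}_3$ have the required orthogonality and isotropy, and orthogonal maps preserve these relations.

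\textbf{Form 2.} Suppose $\boldsymbol{\beta}_1,\boldsymbol{\beta}_2$ are independent isotropic vectors. Then $\langle\boldsymbol{\beta}_1,\boldsymbol{\beta}_2\rangle\neq 0$, because $\mathbb{C}^3$ has no $2$-dimensional totally isotropic subspace. Rescale each $\boldsymbol{\beta}_i$ by a cube root times a common factor so that $\boldsymbol{\beta}_1\teh+\boldsymbol{\beta}_2\teh=c'(\mathbf{b}_1\teh+\mathbf{b}_2\teh)$ with $\langle\mathbf{b}_1,\mathbf{b}_2\rangle=1=\langle\bzp,\bzm\rangle$. The map $\mathbf{b}_1\mapsto\bzp$, $\mathbf{b}_2\mapsto\bzm$ is then an isometry of hyperbolic planes, and Witt's theorem extends it to $T$. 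Since $\boldsymbol{\alpha}$ lies in the orthogonal complement of that plane, $T\boldsymbol{\alpha}$ is a multiple of $\mathbf{e}_3$, which gives $\lambda\mathbf{e}_3\teh$. If instead $\boldsymbol{\beta}_1\sim\boldsymbol{\beta}_2$, the signature degenerates into form 3, as in the isotropic case of form 1.

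\textbf{Form 3.} Take $T$ with $T\boldsymbol{\beta}=\bzp$; this exists because all nonzero isotropic lines are in one orbit, and the scaling is absorbed. The condition $\langle T\mathbf{F}_{\boldsymbol{\beta}},\bzp\rangle=0$ is preserved. Write $T\boldsymbol{\gamma}=x\bzp+y\bzm+z\mathbf{e}_3$. Since $\langle\bzp,\bzp\rangle=0$, the terms $\sym(\bzp,\bzp,\bzp)$ and $\sym(\bzp,\bzp,\mathbf{e}_3)$ are annihilated by $\bzp$, so they can be moved into $\mathbf{F}_0$. This leaves $y\,\sym(\bzp,\bzp,\bzm)$. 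If $y\ne0$, compose with the element of the stabilizer of the isotropic line that acts as $\bzp\mapsto t\bzp$, $\bzm\mapsto t^{-1}\bzm$, $\mathbf{e}_3\mapsto\mathbf{e}_3$. It rescales the coefficient to $t\,y$ while preserving annihilation by $\bzp$, so after a global normalization we get $\epsilon=1$; if $y=0$, then $\epsilon=0$. The converse again follows by applying $T^{-1}$.

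I expect the main obstacle to be the bookkeeping of isotropic and degenerate subcases. The point to verify is that every degenerate configuration of forms 1 and 2 really lands in one of the three normal forms (mostly in form 3). Reading "equivalent" as equality of the unions of the tractable classes resolves this.
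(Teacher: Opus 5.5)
Your proposal is correct, and it goes somewhat beyond the paper, which gives no proof of this statement at all. The dichotomy and the equivalence with the orthogonal normal forms are both quoted from~\cite{domain3-single-signature}. The orbit facts you derive from Witt's theorem (moving a nonzero isotropic vector to $\bzp$, or an isotropic pair with nonzero pairing to $\bzp,\bzm$) appear later in the paper only as citations of Lemmas 3.4--3.6 of that work.

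On the hardness/tractability part you therefore do exactly what the paper does. On the equivalence you replace a citation with a self-contained argument. This makes explicit that the equivalence holds only as an equality of the two unions, not case by case. For example, $\bzp\teh+\mathbf{e}_3\teh$ satisfies vector form~1 (with $\boldsymbol{\gamma}=0$), but its annihilator is the isotropic line $\spn(\bzp)$. By contrast, a one-dimensional annihilator of $a\mathbf{e}_1\teh+b\mathbf{e}_2\teh+c\mathbf{e}_3\teh$ is spanned by some $\mathbf{e}_k$, and $\ann(T\mathbf{F})=T\ann(\mathbf{F})$ for orthogonal $T$. So this signature lies only in normal form~3, consistent with the paper's later remark that $a\bzp\teh+c\mathbf{e}_3\teh$ is of type \casecz.

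Two small cleanups:
\begin{itemize}
\item In $\boldsymbol{\beta}^\perp=\spn(\boldsymbol{\beta},\mathbf{w})$, the pairing of $a\boldsymbol{\beta}+b\mathbf{w}$ with $c\boldsymbol{\beta}+d\mathbf{w}$ is $bd\langle\mathbf{w},\mathbf{w}\rangle$. So of two orthogonal vectors there, one is always a multiple of $\boldsymbol{\beta}$, and your clause ``or the pair is already covered by the isotropic analysis'' is unnecessary.
\item In form~3, taking $t=1/y$ gives $\epsilon=1$ exactly, with no global rescaling needed.
\end{itemize}
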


\subsection{Generalized Orthogonal Group}\label{sec:spin-rep}
A fundamental technique in Holant theory is  holographic transformation (\cref{thm:holant-thm}).
Most often
we use orthogonal transformations,
as they keep the binary {\sc Equality} constraint invariant (represented by the identity matrix) (\cref{thm:orthogonal-holant-thm}).
In this paper, we will see that in domain size 3, often it is more illuminating to use
transformations from a more general group, in particular, the \textit{generalized orthogonal group} that
keeps the quadratic form $\mathbf{X}$ invariant,  and analyze the problem in 
the bipartite $\holts(\mathbf{X} \mid \blank)$ setting.

\begin{theorem}[Generalized Orthogonal Holant Theorem]
    Let $\Abold \in \mathbb{C}^{q \times q}$ be a symmetric matrix of rank $q$.
    Let $\ortho(q, \mathbb{C}, \Abold) = \{M \in \gl(q, \mathbb{C}) : M \transpose  \Abold M = \Abold \}$. 
    Let $\Omega = (G; \, \Abold \mid \mathcal{F} ;\, \pi)$ be a bipartite signature grid.
    Under a holographic transformation by $M \in \ortho(q, \mathbb{C}, \Abold)$,
    we get a new grid $\Omega' = (G; \, \Abold  \mid  M \mathcal{F}; \, \pi)$ such that $\hol_{\Omega} = \hol_{\Omega'}$.
\end{theorem}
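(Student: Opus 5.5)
The plan is to reduce the statement to Valiant's Holant Theorem (\cref{thm:holant-thm}) applied to the bipartite grid $\Omega = (G;\, \Abold \mid \mathcal{F};\, \pi)$. A holographic transformation by an invertible matrix $T$ replaces every signature on the left side by $\mathcal{R} T$ and every signature on the right side by $T^{-1}\mathcal{F}$, leaving the Holant value unchanged. So I would choose $T = M^{-1}$, where $M \in \ortho(q, \mathbb{C}, \Abold)$. This is legitimate because $M$ is invertible, being in $\gl(q,\mathbb{C})$ by definition. With this choice the right-hand signatures become $(M^{-1})^{-1}\mathcal{F} = M\mathcal{F}$, which is exactly what the statement asks for. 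Each right-hand signature $\mathbf{F}$ of arity $n$ is replaced by $M^{\otimes n}\mathbf{F}$, using the convention $T\mathcal{F}$ from the definition.

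Next I must check that the left-hand signature, the binary $\Abold$ viewed as a covariant (row) tensor, is unchanged. Under the transformation by $T$, the binary row signature $\Abold$ becomes $\Abold T^{\otimes 2}$. For a binary signature written as a matrix, the contraction of both of its indices with $T$ has matrix form $T\transpose \Abold T$. This is the same identity the paper uses before \cref{thm:orthogonal-holant-thm}, where $\beq T^{\otimes 2}$ is identified with $T\transpose I T$. With $T = M^{-1}$, the defining relation $M\transpose \Abold M = \Abold$ gives
\[
(M^{-1})\transpose \Abold M^{-1} = \Abold
\]
after multiplying on the left by $(M\transpose)^{-1} = (M^{-1})\transpose$ and on the right by $M^{-1}$. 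So $M^{-1}$ also lies in $\ortho(q,\mathbb{C},\Abold)$, and every vertex labelled $\Abold$ keeps the label $\Abold$.

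The transformed grid is therefore exactly $\Omega' = (G;\, \Abold \mid M\mathcal{F};\, \pi)$, and \cref{thm:holant-thm} gives $\hol_\Omega = \hol_{\Omega'}$. The symmetry and full rank of $\Abold$ are not needed for the identity itself. They only ensure that $\ortho(q,\mathbb{C},\Abold)$ is a group of the intended kind.

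The only step requiring care, and the one I expect to be the main obstacle, is getting the bookkeeping of the matrix form right: checking that $\Abold T^{\otimes 2}$, read as a $q\times q$ matrix indexed by the two edges, equals $T\transpose \Abold T$ and not $T \Abold T\transpose$. I would verify this entrywise: $(\Abold T^{\otimes 2})_{kl} = \sum_{i,j} A_{ij}T_{ik}T_{jl} = (T\transpose \Abold T)_{kl}$. Once that is settled, the rest is immediate.
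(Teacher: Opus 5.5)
Your proposal is correct and is essentially the argument the paper has in mind. The paper states this theorem without a separate proof, treating it as a direct consequence of Valiant's Holant Theorem (\cref{thm:holant-thm}), in exactly the way the remark before \cref{thm:orthogonal-holant-thm} handles the case $\Abold = I$: the left side is preserved because $(M^{-1})\transpose \Abold M^{-1} = \Abold$. Your choice $T = M^{-1}$ also correctly reconciles the paper's convention, under which the right side becomes $T^{-1}\mathcal{F}$, with the stated outcome $M\mathcal{F}$.
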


Corresponding to $\Abold = I$ or $\X$, we have
two specific (generalized) orthogonal groups, $\ortho(3, \mathbb{C}) := \ortho(3, \mathbb{C}, I) = \{M \in \mathbb{C}^{3 \times 3}: M\transpose M = I\}$ and 
$\ortho(3, \mathbb{C}, \mathbf{X}) = \{M \in \mathbb{C}^{3 \times 3}: M\transpose \mathbf{X} M =  \mathbf{X}\}$.
(As both $I$ and $\X$ are of full rank, these $M$ are automatically in $\gl(3, \mathbb{C})$.)
One can also view them as the group of matrices that preserve the quadratic forms $\mathbf{u}\transpose I \mathbf{u}$ and $\mathbf{u}\transpose \X \mathbf{u}$ respectively.
Since a nonzero constant factor of signatures does not impact the computational complexity, we may 
use $\so(3, \mathbb{C}) = \{M \in \ortho(3, \mathbb{C}) : \det(M) = 1\}$ and
$\so(3, \mathbb{C}, \mathbf{X}) = \{M \in \ortho(3, \mathbb{C}, \mathbf{X}) : \det(M) = 1\}$.
It turns out that $\so(3, \mathbb{C}) \cong \so(3, \mathbb{C}, \mathbf{X})$ as abstract groups, but
by a nontrivial isomorphism. We will provide an explicit isomorphism below and it will be
important for our proof.

More crucially,
we will use the isomorphism $\psl(2, \mathbb{C}) \cong \so(3, \mathbb{C})$ to get a 
parametrization via $\speciallinear(2, \mathbb{C})$, which is also called \textit{spin representation} (but note that this is over $\mathbb{C}$; the analogous  spin representation over $\mathbb{R}$
is better known, e.g., in quantum mechanics).
One benefit of this parametrization is it facilitates the analysis of (generalized) orthogonal transformations.
We note that the $\mathbb{C}$-dimension of $\so(3, \mathbb{C})$ (and of
$\so(3, \mathbb{C}, \mathbf{X})$), is just 3 (as seen from its Lie algebra); this is a lot smaller than the number of entries in the matrix.

Consider $V = \{M \in \mathbb{C}^{2 \times 2} : \tr(M) = 0\}$.
$V$ is a $3$ dimensional vector space over $\mathbb{C}$ with bilinear form $\tr(AB)$.
One can construct a representation of $\speciallinear(2, \mathbb{C})$ (and of $\psl(2, \mathbb{C})$) on $V$.
Note that if $M \in \speciallinear(2, \mathbb{C})$, then $M^{-1} A M \in V$ for any $A \in V$.
  Choose a basis of $V$ as the Normalized Pauli matrices,
  \begin{align*}
\frac{1}{\sqrt{2}} X = \frac{1}{\sqrt{2}} \begin{bmatrix}
    0 & 1 \\ 1 & 0
  \end{bmatrix} \,, && 
  \frac{1}{\sqrt{2}} Y = \frac{1}{\sqrt{2}} \begin{bmatrix}
    0 & -\i \\ \i & 0
  \end{bmatrix} \,, && 
  \frac{1}{\sqrt{2}} Z = \frac{1}{\sqrt{2}} \begin{bmatrix}
    1 & 0 \\ 0 & -1
  \end{bmatrix} \, .
  \end{align*}
  We see that $M = \begin{bsmallmatrix} a & b \\ c & d \end{bsmallmatrix} \in \speciallinear(2, \mathbb{C})$ acts on $V$ by 
  \begin{equation}\label{eqn:phiM-parameterization}
      \phi(M) = \begin{bmatrix}
          \frac{1}{2}(a^2 - b^2 - c^2 + d^2) & \frac{1}{2} \i (a^2 - b^2 + c^2 - d^2) & -ac + bd \\
          -\frac{1}{2}\i(a^2 + b^2 - c^2 - d^2) & \frac{1}{2}(a^2 + b^2 + c^2 + d^2) & \i (ac + bd) \\
          -ab + cd & - \i (ab + c d) & bc + ad
        \end{bmatrix} \, .
      \end{equation}
      identifying $\frac{1}{\sqrt{2}} X$, $\frac{1}{\sqrt{2}} Y$ and $\frac{1}{\sqrt{2}} Z$
      with $\mathbf{e}_1$,   $\mathbf{e}_2$ and  $\mathbf{e}_3$, respectively.
      Under this identification, for $x_i, y_i, z_i \in \mathbb{C}$,
      the bilinear form
      \[
      \tr\left(\frac{1}{\sqrt{2}} \begin{bmatrix}
        z_1 & x_1 - \i y_1 \\
        x_1 + \i y_1 & -z_1
        \end{bmatrix} 
        \frac{1}{\sqrt{2}} \begin{bmatrix}
        z_2 & x_2 - \i y_2 \\
        x_2 + \i y_2 & -z_2
        \end{bmatrix}
        \right) = x_1 x_2 + y_1 y_2 + z_1 z_2
    \]
    can be identified as
    \[
      \begin{bmatrix}
        x_1 & y_1 & z_1
        \end{bmatrix} I \begin{bmatrix}
        x_2 \\ y_2 \\ z_2
      \end{bmatrix} = x_1 x_2 + y_1 y_2 + z_1 z_2 \, .
    \]
      One can show that $\phi$ is surjective,
      and hence we obtain a parametrization of $\so(3, \mathbb{C})$. 
      
    Choosing a different basis for $V$ gives a nicer description of $\so(3, \mathbb{C}, \X)$.
    With
    \begin{align*}
        P = \begin{bmatrix}
          0 & 1 \\ 0 & 0
          \end{bmatrix} \,,
          &&
          Q = \begin{bmatrix}
          0 & 0 \\ 1 & 0
        \end{bmatrix} \,,
        &&
        R = \frac{1}{\sqrt{2}}\begin{bmatrix}
          1 & 0 \\ 0 & -1
      \end{bmatrix} \, ,
    \end{align*}
    we see that $M = \begin{bsmallmatrix}
      a & b \\ c & d
    \end{bsmallmatrix} \in \speciallinear(2, \mathbb{C})$ acts on $V$ by 
    \begin{equation}\label{eq:paramaterize-so3cx}
      \rho(M) = 
      \begin{bmatrix}
        d^2 & -b^2 & \sqrt{2}bd \\
        -c^2 & a^2 & -\sqrt{2}ac \\
        \sqrt{2} cd & -\sqrt{2}ab & ad + bc
      \end{bmatrix}
    \end{equation}
    identifying $P$ with $\mathbf{e}_1$, $Q$ with $\mathbf{e}_2$ and $R$ with $\mathbf{e}_3$.
    Under this identification, the bilinear form 
    \[
      \tr\left(\begin{bmatrix}
        r_1/\sqrt{2} & p_1 \\
        q_1 & -r_1/\sqrt{2}
        \end{bmatrix} \begin{bmatrix}
        r_2/\sqrt{2} & p_2 \\ q_2 & -r_2/\sqrt{2}
      \end{bmatrix}
      \right) = p_1 q_2 + q_1 p_2 + r_1 r_2
    \]
    can be identified as
    \[
      \begin{bmatrix}
        p_1 & q_1 & r_1
        \end{bmatrix} \X \begin{bmatrix}
        p_2 \\ q_2 \\ r_2
      \end{bmatrix} = p_1 q_2 + q_1 p_2 + r_1 r_2 \, .
    \]
    We see that $\rho(M)\transpose \X \rho(M) = \X$ and thus $\rho(M) \in \so(3, \mathbb{C}, \X)$.
    Again, $\rho$ is surjective so we get a parametrization of $\so(3, \mathbb{C}, \X)$.
    
    Recall that $\X = \Y\transpose \Y$.
    Then, we have $\phi(M) = \Y \rho(M) \Yinv$, which gives us the isomorphism $\so(3, \mathbb{C}) \cong \so(3, \mathbb{C}, \X)$.

We remark that $V$ is also the Lie algebra $\mathfrak{sl}(2, \mathbb{C})$ and the action we defined is the adjoint representation. 
The basis $P, Q, R$ are scalar multiples of the basis chosen in Section 4.6 of~\cite{hall-lie-algebra}.
We also remark that $\X$ is corresponds to the bilinear form chosen in Section 18.1 in~\cite{fulton-rep-theory}.

\subsection{Types of Signatures}\label{subsec:types-of-sig}
Let $\mathbf{F}$ be a symmetric ternary signature on domain $3$. 
We classify $\mathbf{F}$ according to an existence of an orthogonal transform $T$.
In~\cite{domain3-single-signature}, three types of tractable signatures were identified. 
\begin{enumerate}
    \item \casea: $T \mathbf{F} = a \mathbf{e}_1\teh + b \mathbf{e}_2\teh + c \mathbf{e}_3\teh$ for some $a, b, c \in \mathbb{C}$.
    \item \caseb: $T \mathbf{F} = a \bzp\teh + b \bzm\teh + c \mathbf{e}_3\teh$ for some $a, b, c \in \mathbb{C}$.
    \item \casec: $T \mathbf{F} = \mathbf{F}_0 + \lambda \sym(\bzp, \bzp, \bzm)$ 
    for $\lambda \in \mathbb{C}$ and for some symmetric ternary $\mathbf{F}_0$ such that $\langle \mathbf{F}_0, \bzp \rangle = 0$.
\end{enumerate}
We will see that these three types warrant a refinement into 
finer categories as there are essential differences in the way they behave.
This will be fundamental in classifying how they interact with other signatures, which is
crucial to achieving a complexity dichotomy for a general {\it set} of signatures.
We define the following subtypes; they may seem arbitrary, especially the condition for \casecogen,
but it will become clearer later.
We will see in \cref{sec:casec} that the subtypes of \casec have distinct characteristics. 
We show in \cref{lem:types-distinct} that the following types are mutually exclusive. 
\begin{figure}
    \centering
    \tersig{0.7}{a {\lambda} b 0 0 c 0 0 0 d}
    \caption{A general \casec signature after $\Yinv T$ transform.}
    \label{fig:casec-general}
\end{figure}
\begin{definition}\label{def:types-ternary}
    Let $\mathbf{F}$ be a nondegenerate, symmetric, ternary signature.
We say $\mathbf{F}$ is one of the following types if there exists $T \in O(3, \mathbb{C})$ that satisfies the given condition. 
We say $\mathbf{F}$ is of the \emph{defining form} of the type if $T$ can be chosen to be the identity matrix. 
\begin{enumerate}
    \item \caseathree: $T \mathbf{F} = a \mathbf{e}_1\teh + b \mathbf{e}_2\teh + c \mathbf{e}_3\teh$ for nonzero $a, b, c \in \mathbb{C}$.
    \item \caseatwo: $T \mathbf{F} = a \mathbf{e}_1\teh + b \mathbf{e}_2\teh$ for nonzero $a, b \in \mathbb{C}$.
    \item \casebthree: $T \mathbf{F} = a \bzp\teh + b \bzm\teh + c \mathbf{e}_3\teh$ for nonzero $a, b, c \in \mathbb{C}$.
    \item \casebtwo: $T \mathbf{F} = a \bzp\teh + b \bzm\teh$ for nonzero $a, b \in \mathbb{C}$.
    \item \casecz: $T \mathbf{F}$ with the condition that $\langle T \mathbf{F}, \bzp \rangle = 0$.
    Equivalently, $\Yinv T \mathbf{F}$ is \cref{fig:casec-general} with $\lambda = 0$.
    \item \casecobg: 
    $T \mathbf{F} = a \bzp\teh + \lambda \sym(\bzp, \bzp, \bzm)$ for nonzero $\lambda \in \mathbb{C}$ and any $a \in \mathbb{C}$.
    Equivalently, 
    $\Yinv T \mathbf{F} = a \mathbf{e}_1\teh + \lambda \sym(\mathbf{e}_1, \mathbf{e}_1, \mathbf{e}_2)$ for nonzero $\lambda \in \mathbb{C}$ and any $a \in \mathbb{C}$;
    this is \cref{fig:casec-general} with $\lambda \ne 0$ and $b,c,d= 0$.
    \item \casecosep: 
    $T \mathbf{F} = a \bzp\teh + \lambda \sym(\bzp, \bzp, \bzm) + d \mathbf{e}_3\teh$ for nonzero $\lambda, d \in \mathbb{C}$ and any $a \in \mathbb{C}$.
    Equivalently, 
    $\Yinv T \mathbf{F} = a \mathbf{e}_1\teh + \lambda \sym(\mathbf{e}_1, \mathbf{e}_1, \mathbf{e}_2) + d \mathbf{e}_3\teh$, which is \cref{fig:casec-general} with $b, c = 0$ and $\lambda, d \ne 0$.
    \item \casecodz: $\Yinv T \mathbf{F}$ is \cref{fig:casec-general} with $\lambda, c \ne 0$ and $d = 0$.
    \item \casecogen: $\Yinv T \mathbf{F}$ is \cref{fig:casec-general} with $\lambda, d \ne 0$ and 
    $c^2 - c \lambda - bd \ne 0$,
    in particular $(b, c) \ne 0$.
\end{enumerate}
\end{definition}

We remark that the above conditions for type \casec signatures can be equivalently stated as 
an existence of $\iny{T} \in \so(3, \mathbb{C}, \X)$ such that 
$\iny{T} \Yinv \mathbf{F}$ taking the stated form, by the isomorphism
$\iny{T} = \Yinv T \Y$ for  $T \in \so(3, \mathbb{C})$ and thus
$\iny{T} \Yinv  \mathbf{F} =  \Yinv T \mathbf{F}$.

To differentiate between $\holts(\mathbf{F})$ and $\holts(\X \mid \Yinv \mathbf{F})$ and make the context clear,
we will use the notation $\iny{\mathbf{F}}$.
So, any signature
$\mathbf{F}$ (without $\iny{\;}$ ) is expected to occur in the setting $\holts(\mathbf{F})$ by default and
$\iny{\mathbf{F}}$ (with $\iny{\;}$ ) is expected to occur in the setting $\holts(\X \mid \iny{\mathbf{F}})$ by default unless otherwise specified (e.g., in \cref{prop:two-ternary-dichotomy-b-3} when we show the bipartite assumption can be removed).
We also denote a matrix in $\so(3, \mathbb{C}, \X)$ with $\iny{T}$.
We will often 
associate $\iny{\mathbf{F}} = \Yinv \mathbf{F}$ and $\iny{T} = \Yinv T \Y$.
Note that in this notation, $\iny{T \mathbf{F}} = \Yinv T \mathbf{F} = 
\iny{T} \Yinv \mathbf{F} = \iny{T} \iny{\mathbf{F}}$.

We define additional classes of signatures characterized by their support (not up to holographic transformation).
\begin{enumerate}
\item \geneq: We say $\mathbf{F}$ is \geneq if $\mathbf{F} = a \mathbf{e}_1^{\otimes n} + b \mathbf{e}_2^{\otimes n} + c \mathbf{e}_3^{\otimes n}$ for some $a, b, c \in \mathbb{C}$.
    \item $\mathcal{E}$: Let $\mathcal{E}$ denote the set of domain $3$ signatures $\mathbf{F}$  such that 
if $\mathbf{F}$ has arity $n$, then $\supp \mathbf{F} \subset \{\mathbf{x}, \mathbf{y}, \mathbf{z}\}$ for some $\mathbf{x} = (x_1, \ldots, x_n), \mathbf{y} = (y_1, \ldots, y_n), \mathbf{z} = (z_1, \ldots, z_n) \in \{\db, \dg, \dr\}^n$ such that 
for all $1 \le i \le n$, $x_i, y_i, z_i$ are all distinct.  
Note that \geneq signatures are in $\mathcal{E}$.
Conversely, if $\mathbf{F} \in \mathcal{E}$ is symmetric of arity $n \ge 3$, we can show that $\mathbf{F}$ is \geneq. 

\item $\casecsupp$: Let $\casecsupp$ denote the set of domain $3$ signatures $\mathbf{F}$ such that
if $\mathbf{F}$ has arity $n$, then $\supp \mathbf{F} \subseteq \{\dg \db^{n-1}, \db \dg \db^{n-2}, \ldots, \db^{n-1} \dg\} \cup \{\db, \dr\}^n$.
For example, if $\mathbf{F}$ is symmetric and of arity $3$, then its support has the form in 
\cref{fig:casec-general}.
\item $ij|k$ (\textit{domain separated}):  $\mathbf{F}$ is in this class if $\supp \mathbf{F} \subseteq \{i, j\}^* \cup \{k\}^*$.
In other words, $\mathbf{F}$ is zero on inputs that take values from both $\{i, j\}$ and $\{k\}$.
We say $\mathbf{F}$ is \textit{domain separated} to $\{i, j\}$ and $\{k\}$.
 Specific examples include \sepbg, \sepgr, and \sepbr.
We also refer to a matrix $\db \dg | \dr$ if it can be viewed as a $\db \dg | \dr$ binary signature.
For example, $M = \begin{bsmallmatrix}
  * & * & 0 \\
  * & * & 0 \\
  0 & 0 & *
\end{bsmallmatrix}$ is a $\db \dg | \dr$ matrix.
 
\item \swhelper{ij;k}: We call a binary signature/matrix of the form $\begin{bsmallmatrix}
    0 & 0 & a \\
    0 & 0 & b \\
    c & d & 0
\end{bsmallmatrix}$ for some $a, b, c, d \in \mathbb{C}$ as a \swbg signature/matrix.
The intuition is that a signature of this form swaps the inputs between the domains $\{\db, \dg\}$ and $\{\dr\}$.
Similarly, we call $\begin{bsmallmatrix}
    0 & a & 0 \\
    b & 0 & c \\
    0 & d & 0
\end{bsmallmatrix}$ as \swbr and $\begin{bsmallmatrix}
    0 & a & b \\
    c & 0 & 0 \\
    d & 0 & 0
\end{bsmallmatrix}$ as \swgr.
\end{enumerate}
We interchangeably use phrases such as ``$\mathbf{F}$ is \sepbg'' or ``$\mathbf{F} \in \sepbgmath$''. 

We note that \geneq, $\mathcal{E}$, $ij|k$ and $\casecsupp$ are linear spaces, and closed under connecting a unary signature, i.e.\ if $\mathbf{F} \in \geneq$, then $\langle \mathbf{F}, \mathbf{u} \rangle \in \geneq$ for any $\mathbf{u} \in \mathbb{C}^3$,  and similarly for
$\mathcal{E}$, $ij|k$ and $\casecsupp$.
We see that $\mathbf{F}$ is a  \casea signatures if there exists some $T \in \so(3, \mathbb{C})$ such that $T \mathbf{F} \in \mathcal{E}$;
$\mathbf{F}$ is a  \caseb signatures if there exists some $T \in \so(3, \mathbb{C})$ such that $\Yinv T \mathbf{F} \in \mathcal{E}$;
$\mathbf{F}$ is a  \casec signatures if there exists some $T \in \so(3, \mathbb{C})$ such that $\Yinv T \mathbf{F} \in \casecsupp$.

We will see later that one  crucial 
determining factor of the behavior of the signatures is the existence of a nonzero \textit{annihilator}, which we define shortly. 
We note that \caseatwo, \casebtwo, and \casecobg types correspond to \textit{essentially Boolean domain} signatures defined in~\cite{domain3-real-set}, 
as there exists an orthogonal transform that brings them to be supported on $\db\dg$, although \casecobg does not occur in~\cite{domain3-real-set} because it cannot be real valued.
Also, we note that a \casecz signature can be transformed to be supported on a domain of size 2, but only under a non-orthogonal transform.

The main factor that separates the types \caseatwo, \casebtwo, \casecobg, and \casecz from the others is the notion of the annihilator of a signature which we now introduce.
It turns out that annihilators also help us characterize the interaction between \caseatwo, \casebtwo, \casecobg, and \casecz signatures.

\begin{definition}\label{def:annihilator}
    Let $\mathbf{F}$ be a symmetric signature on domain $3$.
    We define $\ann(\mathbf{F}) = \{ \mathbf{u} \in \mathbb{C}^3 : \langle \mathbf{F}, \mathbf{u} \rangle = 0 \}$.
\end{definition}
Note that $\ann(\mathbf{F})$ is a linear subspace of $\mathbb{C}^3$.

We may also use the notion of symmetric rank to separate the types. 
In~\cite{domain3-real-set}, symmetric rank being $2$ or $3$ was the differentiating factor used to separate \caseatwo, \casebtwo from \caseathree, \casebthree.
However, as the following proposition shows, a signature that would be classified as essentially Boolean domain in spirit, specifically a \casecobg signature, has symmetric rank $3$. 
We will see that the existence of a nonzero annihilator captures the notion of a signature being essentially Boolean domain better than symmetric rank in our setting.

     \begin{proposition}[Proposition 5.6 in~\cite{symmetric-tensors-comon}]\label{prop:symmetric-rank-sym-sum}
        Let $\mathbf{u}, \mathbf{v} \in \mathbb{C}^n$ be  linearly independent. %
        Then for any $k > 1$, the following order-$k$ symmetric tensor
        $\sym(\mathbf{u}, \ldots, \mathbf{u}, \mathbf{v})$
        (where there are $k-1$ copies of $\mathbf{u}$)
        has symmetric rank $k$.
      \end{proposition}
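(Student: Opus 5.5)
We reduce to the case $\mathbf{u} = \mathbf{e}_1$, $\mathbf{v} = \mathbf{e}_2$ in $\mathbb{C}^2$, and then bound the rank from above and from below.

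\emph{Reduction.} Symmetric rank is invariant under invertible linear maps: if $A = \sum_i \mathbf{y}_i^{\otimes k}$, then $L^{\otimes k}A = \sum_i (L\mathbf{y}_i)^{\otimes k}$. Extend $\mathbf{u}, \mathbf{v}$ to a basis and choose $L$ sending $\mathbf{u} \mapsto \mathbf{e}_1$ and $\mathbf{v} \mapsto \mathbf{e}_2$. Then $\sym(\mathbf{u}, \ldots, \mathbf{u}, \mathbf{v})$ maps to $\sym(\mathbf{e}_1, \ldots, \mathbf{e}_1, \mathbf{e}_2)$. Any decomposition of this tensor in $\mathbb{C}^n$ projects, via the coordinate projection to the first two coordinates, to a decomposition of no greater length in $\mathbb{C}^2$. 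So it suffices to prove the claim in $\mathbb{C}^2$. There, identify symmetric tensors with binary forms of degree $k$; the tensor $\mathbf{y}^{\otimes k}$ for $\mathbf{y} = (\alpha, \beta)$ becomes $(\alpha x + \beta y)^k$. Our tensor corresponds to the form $f = x^{k-1} y$, up to the scalar $k$ coming from the multinomial count.

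\emph{Upper bound.} Pick distinct $k$-th roots of unity $\omega_1, \ldots, \omega_k$. The identity $\sum_j \omega_j^{-1} (x + \omega_j y)^k = k\, x^{k-1} y$ holds, because $\sum_j \omega_j^{m-1}$ vanishes unless $m \equiv 1 \pmod k$, and among $0 \le m \le k$ this leaves only $m = 1$. Absorbing the scalars into the $k$-th powers gives a decomposition of length $k$.

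\emph{Lower bound.} This is the main step, and I would use apolarity (Sylvester's criterion). Suppose $f = \sum_{i=1}^s \ell_i^k$ with $s < k$ and each $\ell_i = \alpha_i x + \beta_i y$. Let $g = \prod_i (\beta_i \partial_x - \alpha_i \partial_y)$. This is a differential operator of degree $s \le k-1$, and each factor kills the corresponding $\ell_i^k$, so $g(f) = 0$. I would show directly that no nonzero operator of degree $s \le k-1$ annihilates $x^{k-1}y$. Write $g = \sum_{j} c_j \partial_x^{s-j}\partial_y^j$. Only the terms with $j \le 1$ act nontrivially, giving
\[
g(f) = c_0\, \frac{(k-1)!}{(k-1-s)!}\, x^{k-1-s} y + c_1\, \frac{(k-1)!}{(k-s)!}\, x^{k-s}.
\]
These two monomials are distinct and both have nonzero coefficients when $s \le k-1$. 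Hence $g(f) = 0$ forces $c_0 = c_1 = 0$, so $\partial_y^2$ divides $g$. This in turn forces some $\ell_i$ to be proportional to $x$ with multiplicity at least two in the product. The multiplicity issue is the delicate point. To avoid it, first merge proportional $\ell_i$, which only decreases $s$, so the factors of $g$ may be assumed pairwise distinct. Then $\partial_y^2 \mid g$ is impossible, and we get a contradiction.

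Therefore the symmetric rank is exactly $k$.
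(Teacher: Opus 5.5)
Your proof is correct, but it takes a different route from the one the paper relies on. The paper gives no proof of its own. It cites Comon et al., and when it adapts their argument to $[x,1,0,0]$ it describes it as a Vandermonde-matrix argument. In that kind of argument one compares the $k+1$ coefficients of a putative decomposition $\sum_{i\le s}(\alpha_i x+\beta_i y)^k$ of $x^{k-1}y$. Full column rank of a Vandermonde-type matrix in the (merged, pairwise non-proportional) directions then shows that $s\le k-1$ terms cannot reproduce the coefficient vector of $x^{k-1}y$.

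Your lower bound is the dual of this. Instead of solving for coefficients, you build the annihilating operator $g=\prod_i(\beta_i\partial_x-\alpha_i\partial_y)$, which is square-free after merging. You then show that every operator of degree at most $k-1$ killing $x^{k-1}y$ is divisible by $\partial_y^2$. The underlying linear algebra is the same, but your version buys three things. It treats all directions uniformly. It needs only the trivial direction of the apolarity lemma, not the full Sylvester theorem you invoke. And it isolates the structural reason the rank is $k$: all low-degree apolar forms of $x^{k-1}y$ contain the repeated factor $\partial_y^2$. The coefficient/Vandermonde route is more pedestrian but avoids the differential-operator formalism entirely.

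There are three small slips to fix.
\begin{itemize}
\item The roots-of-unity sum equals $k^2x^{k-1}y$, not $k\,x^{k-1}y$: the binomial coefficient $k$ multiplies $\sum_j\omega_j^{0}=k$. This is harmless, since scalars are absorbed into the $k$-th powers.
\item The sentence claiming that no nonzero operator of degree $s\le k-1$ annihilates $x^{k-1}y$ is false as written, since $\partial_y^2$ does for $k\ge 3$. Replace it with the divisibility statement, which is what your computation proves and what the argument uses.
\item When merging proportional $\ell_i$, also discard any $\ell_i$ that is zero or becomes zero (e.g.\ $\ell_j=c\,\ell_i$ with $c^k=-1$). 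Then every factor of $g$ is nonzero, and $s\ge 1$ holds because $x^{k-1}y\ne 0$.
\end{itemize}
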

      In particular, the Boolean domain signature
      $[0, 1, 0, 0] =\sym(
      \sbto, \sbto, \sbtt
            )$
      has symmetric rank $3$. 
      Similarly the domain $3$ signature $\sym(\mathbf{e}_1, \mathbf{e}_1, \mathbf{e}_2)$ has symmetric rank $3$. 
      By a similar argument as in~\cite{symmetric-tensors-comon} using a Vandermonde matrix, one can also show that $[x, 1, 0, 0]$ has symmetric rank $3$ for any $x \in \mathbb{C}$.
      \begin{proposition}\label{prop:two-annihilator-degenerate}
        Let $\mathbf{F}$ be a symmetric signature on domain $3$.
        If $\dim \ann(\mathbf{F}) \ge 2$, then $\mathbf{F}$ is degenerate.
      \end{proposition}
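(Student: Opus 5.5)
Since $\langle \mathbf{F}, \mathbf{u}\rangle$ is linear in $\mathbf{u}$, $\ann(\mathbf{F})$ is a linear subspace. The plan is to change coordinates so that this subspace is spanned by standard basis vectors; then the annihilation conditions become vanishing conditions on the entries of $\mathbf{F}$, and the support can be read off.

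The one subtlety is that we cannot use an orthogonal transformation. The annihilator might contain isotropic vectors, so it need not have an orthonormal basis. Instead we use an arbitrary invertible $S \in \gl(3,\mathbb{C})$ and track how the contraction transforms. For any $S$ and any $\mathbf{u}$ we have
\[\langle S^{\otimes k}\mathbf{F}, (S^{-1})\transpose \mathbf{u}\rangle = S^{\otimes (k-1)}\langle \mathbf{F}, \mathbf{u}\rangle,\]
which follows because $\mathbf{u}\transpose S^{-1} S = \mathbf{u}\transpose$ on the contracted slot. Consequently $\ann(S\mathbf{F}) = (S^{-1})\transpose \ann(\mathbf{F})$, so the dimension of the annihilator is invariant. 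Moreover, $S\mathbf{F}$ is still symmetric, and degeneracy is preserved, since $S^{\otimes k}(\mathbf{v}^{\otimes k}) = (S\mathbf{v})^{\otimes k}$ and $S$ is invertible.

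Let $\mathbf{u}_1,\mathbf{u}_2$ be linearly independent vectors in $\ann(\mathbf{F})$. Choose $S$ so that $(S^{-1})\transpose \mathbf{u}_1 = \mathbf{e}_1$ and $(S^{-1})\transpose \mathbf{u}_2 = \mathbf{e}_2$. Concretely, take $S\transpose$ to be any invertible matrix whose first two columns are $\mathbf{u}_1,\mathbf{u}_2$. Replacing $\mathbf{F}$ by $S\mathbf{F}$, we may assume $\mathbf{e}_1,\mathbf{e}_2 \in \ann(\mathbf{F})$.

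Now $\langle \mathbf{F}, \mathbf{e}_1\rangle = 0$ means $\mathbf{F}(\db, x_2,\dots,x_k) = 0$ for all inputs. By symmetry, $\mathbf{F}$ therefore vanishes on every input containing $\db$. Likewise $\langle \mathbf{F}, \mathbf{e}_2\rangle = 0$ forces $\mathbf{F}$ to vanish on every input containing $\dg$. Hence $\supp \mathbf{F} \subseteq \{\dr\dr\cdots\dr\}$, so $\mathbf{F} = c\,\mathbf{e}_3^{\otimes k}$, which is degenerate. Transforming back by $S^{-1}$ shows the original signature equals $c\,(S^{-1}\mathbf{e}_3)^{\otimes k}$, which is degenerate. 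The case $\mathbf{F} = 0$ is included here, taking $c = 0$.

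For arity $k = 1$ the statement is immediate, since every unary signature is degenerate. If the arity is $0$ (a constant), the annihilator is not meaningful, and this case is excluded.
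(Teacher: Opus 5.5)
Your proof is correct and follows the same route as the paper. Both arguments apply an invertible, not necessarily orthogonal, change of basis in $\gl(3,\mathbb{C})$ that sends two independent annihilators to $\mathbf{e}_1,\mathbf{e}_2$, and then use symmetry to read off that the transformed signature is a multiple of $\mathbf{e}_3^{\otimes k}$. You also supply details the paper leaves implicit: the rule $\ann(S\mathbf{F}) = (S^{-1})\transpose \ann(\mathbf{F})$, which the paper records later as a separate proposition, and an explicit choice of $S$.
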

      \begin{proof}
      Let $\mathbf{u}_1, \mathbf{u}_2 \in \ann(\mathbf{F})$ be linearly independent.
        Then, there exists $M \in \gl(3, \mathbb{C})$ such that $\langle M \mathbf{F}, \mathbf{e}_1 \rangle = 0$ and $\langle M \mathbf{F}, \mathbf{e}_2\rangle = 0$. This can be only true if $M \mathbf{F} \sim \mathbf{e}_3^{\otimes n}$.
        This is best seen by operating directly on the triangular table for
        $M \mathbf{F}$ (see~\cref{fig:ternary-signature-notation}). It follows that $\mathbf{F}$ is degenerate.
      \end{proof}

By \cref{prop:two-annihilator-degenerate},
for  any nondegenerate symmetric signature
$\mathbf{F}$ on domain $3$, the annihilator space has
 $\dim \ann(\mathbf{F}) \le 1$, and so 
we can determine \textit{the annihilator}
$\ann(\mathbf{F}) = \spn(\mathbf{u})$ for a vector $\mathbf{u}
\in \mathbb{C}^3$ up to a scalar constant.
This includes the case when $\ann(\mathbf{F}) = \{0\}$ by setting $\mathbf{u} = 0$.

\begin{proposition}\label{prop:nonzero-annihilator-types}
    Let $\mathbf{F}$ be a nondegenerate symmetric ternary signature such that $\holts(\mathbf{F})$ is not \sph and
    $\ann(\mathbf{F}) = \spn(\mathbf{u}) \ne 0$.
    Then, $\mathbf{F}$ must be \caseatwo, \casebtwo, \casecobg, or \casecz.
\end{proposition}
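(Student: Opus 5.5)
By \cref{thm:single-ternary-dichotomy}, since $\holts(\mathbf{F})$ is not \sph, there is an orthogonal $T$ with $T\mathbf{F}$ in one of the three canonical forms \casea, \caseb, \casec. Annihilators transform covariantly: $\langle T\mathbf{F}, T\mathbf{u}\rangle = T^{\otimes 2}\langle \mathbf{F}, \mathbf{u}\rangle$ for orthogonal $T$. Hence $\ann(T\mathbf{F}) = T\ann(\mathbf{F})$ is still nonzero, and we may assume $\mathbf{F}$ is already in canonical form. For the \casec form we pass to $\iny{\mathbf{F}} = \Yinv\mathbf{F}$. In that case $\langle \mathbf{F}, \mathbf{u}\rangle = 0$ iff $\langle \iny{\mathbf{F}}, \Y\transpose\mathbf{u}\rangle = 0$, because $\mathbf{F} = \Y\iny{\mathbf{F}}$. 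So $\iny{\mathbf{F}}$ also has a nonzero annihilator $\mathbf{w} = (x,y,z)$. The plan is then to read off, case by case, the constraint that $\langle \cdot, \mathbf{w}\rangle = 0$ imposes, using the triangular rule of \cref{fig:computing-f-u}.

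\textbf{Case \casea.} Take $\mathbf{F} = a\mathbf{e}_1\teh + b\mathbf{e}_2\teh + c\mathbf{e}_3\teh$. Then $\langle \mathbf{F}, \mathbf{w}\rangle = ax\,\mathbf{e}_1\tew + by\,\mathbf{e}_2\tew + cz\,\mathbf{e}_3\tew$, and this vanishes iff $ax = by = cz = 0$. If $a,b,c$ were all nonzero, this would force $\mathbf{w} = 0$. So at least one coefficient is zero. Nondegeneracy rules out two zeros, so exactly one is zero, and after permuting coordinates (an orthogonal map) $\mathbf{F}$ is \caseatwo.

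\textbf{Case \caseb.} Take $\mathbf{F} = a\bzp\teh + b\bzm\teh + c\mathbf{e}_3\teh$. The vectors $\bzp, \bzm, \mathbf{e}_3$ are linearly independent, so their tensor squares are too. Therefore $\langle \mathbf{F}, \mathbf{w}\rangle = 0$ forces $a\langle\bzp,\mathbf{w}\rangle = b\langle\bzm,\mathbf{w}\rangle = c\,w_3 = 0$. Since these three linear functionals are independent, a nonzero $\mathbf{w}$ requires one of $a,b,c$ to vanish.
\begin{itemize}
\item[(i)] If $c=0$ with $a,b \neq 0$, then $\mathbf{F}$ is \casebtwo.
\item[(ii)] If $a=0$ or $b=0$ with $c \neq 0$, then $\mathbf{F}$ is a sum of two cubes of orthogonal, linearly independent vectors, one of them isotropic. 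Concretely $\mathbf{F} = b\bzm\teh + c\mathbf{e}_3\teh$ satisfies $\langle \mathbf{F}, \bzm\rangle = 0$, since $\langle\bzm,\bzm\rangle = 0$ and $\langle\mathbf{e}_3,\bzm\rangle = 0$. It is therefore \casecz, after swapping $\bzp \leftrightarrow \bzm$ via the orthogonal map $\mathrm{diag}(1,-1,1)$ so that the annihilating vector is $\bzp$.
\end{itemize}

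\textbf{Case \casec.} Here $\iny{\mathbf{F}} = [a;\lambda,b;0,0,c;0,0,0,d]$. If $\lambda = 0$ the signature is \casecz by definition, so assume $\lambda \neq 0$. Compute $\langle\iny{\mathbf{F}},(x,y,z)\rangle$ entrywise:
\begin{itemize}
\item $\db\db$ entry: $ax + \lambda y + bz$;
\item $\db\dg$ entry: $\lambda x$;
\item $\db\dr$ entry: $bx + cz$;
\item $\dg\dg$ and $\dg\dr$ entries: $0$;
\item $\dr\dr$ entry: $cx + dz$.
\end{itemize}
Since $\lambda \neq 0$, the $\db\dg$ entry gives $x = 0$. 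The remaining entries then require $cz = dz = 0$ and $\lambda y + bz = 0$. If $z = 0$ then $y = 0$ and $\mathbf{w} = 0$, a contradiction. So $z \neq 0$, which gives $c = d = 0$. The signature is now $a\mathbf{e}_1\teh + \lambda\sym(\mathbf{e}_1,\mathbf{e}_1,\mathbf{e}_2) + b\,\sym(\mathbf{e}_1,\mathbf{e}_1,\mathbf{e}_3)$, which is \casecobg exactly when $b = 0$.

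\textbf{Main obstacle: removing $b$ when $c=d=0$.} This is the expected hard step, and I would try an element of $\so(3,\mathbb{C},\X)$ via $\rho$ in \cref{eq:paramaterize-so3cx}. Take $M$ upper unitriangular with entries $a=d=1$, $c=0$, and $b=t$. Then $\rho(M)\mathbf{e}_1 = \mathbf{e}_1$ and $\rho(M)\mathbf{e}_3 = \sqrt2 t\,\mathbf{e}_1 + \mathbf{e}_3$, while $\rho(M)\mathbf{e}_2 = -t^2\mathbf{e}_1 + \mathbf{e}_2 - \sqrt2 t\,\mathbf{e}_3$. Substituting, the $\sym(\mathbf{e}_1,\mathbf{e}_1,\mathbf{e}_3)$ coefficient becomes $b - \sqrt2 t\lambda$. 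Choosing $t = b/(\sqrt2\lambda)$ kills it, and only $\mathbf{e}_1\teh$ and $\sym(\mathbf{e}_1,\mathbf{e}_1,\mathbf{e}_2)$ terms remain. Via $\iny{T} = \Yinv T\Y$ this is \casecobg.

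I would double-check the coordinate conventions of $\rho$ here: if the map acts with the roles of $b$ and $c$ swapped, use the transpose parametrization instead. Either way the resulting form is the \casecobg defining form, which completes the proof.
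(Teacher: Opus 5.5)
Your proof is correct. The individual computations match the paper's. The \casea case is identical. Your \caseb case replaces the paper's $\Yinv$-reduction of \casebthree to \caseathree by a direct linear-independence argument. Your linear system in the \casec case is exactly the paper's $4\times 3$ matrix rank argument.

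The difference is in the logic. The paper argues by exclusion: it shows that \caseathree, \casebthree, \casecosep, \casecodz and \casecogen have trivial annihilator. It then relies on the fact, asserted rather than proved in that proof, that every nondegenerate tractable ternary signature lies in one of the nine subtypes. You argue directly from the three canonical forms of \cref{thm:single-ternary-dichotomy}. That obliges you to treat the boundary cases the paper delegates to exhaustiveness, and you do:
\begin{itemize}
\item \caseb with $a=0$ or $b=0$, which lands in \casecz;
\item \casec with $\lambda\neq 0$, $c=d=0$, $b\neq 0$, which needs an $\so(3,\mathbb{C},\X)$ transform to reach the \casecobg defining form.
\end{itemize}
Your route is longer but self-contained. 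The paper's is shorter but leans on classification facts established elsewhere (\cref{subsec:characterize-caseco}, \cref{cor:casecobg-form}, \cref{cor:classification-case-c-mutually-exclusive}).

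Your closing hedge about the conventions of $\rho$ is unnecessary. With \cref{eq:paramaterize-so3cx} as written, $M=\begin{bsmallmatrix}1&t\\0&1\end{bsmallmatrix}$ gives $\rho(M)\mathbf{e}_1=\mathbf{e}_1$, $\rho(M)\mathbf{e}_2=-t^2\mathbf{e}_1+\mathbf{e}_2-\sqrt{2}t\,\mathbf{e}_3$ and $\rho(M)\mathbf{e}_3=\sqrt{2}t\,\mathbf{e}_1+\mathbf{e}_3$, exactly as you computed. After normalizing $\lambda=1$, the choice $t=b/\sqrt{2}$ is precisely the matrix $A$ the paper uses in \cref{subsec:characterize-caseco}. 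This step is also the converse half of \cref{cor:casecobg-form}, so it is not really an obstacle.
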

\begin{proof}
    $\dim(\ann(\mathbf{F}))$ is invariant under holographic transformation by $\gl(3, \mathbb{C})$. 
    Since $\holts(\mathbf{F})$ is assumed to be not \sph, it must be one of the tractable types given above. 
    Thus, we will argue that the types \caseathree, \casebthree, \casecosep, \casecodz, and \casecogen do not have a nonzero annihilator by using the specific form given above for each type. 

    For \caseathree, it is easy to see that $\langle a \mathbf{e}_1\teh + b \mathbf{e}_2\teh + c \mathbf{e}_3\teh, (x, y, z) \rangle = ax \mathbf{e}_1 \tew + by \mathbf{e}_2 \tew + cz \mathbf{e}_3\tew$, which is a nonzero matrix if $x, y, z \ne 0$.
    For \casebthree, we note that $\Yinv (a \bzp\teh + b \bzm \teh + c \mathbf{e}_3\teh) = a \mathbf{e}_1\teh + b \mathbf{e}_2\teh + c \mathbf{e}_3\teh$.

    For \casecosep, \casecodz, and \casecogen, we look at the linear map $\mathbb{C}^3 \to \mathbb{C}$ defined by $\mathbf{v} \mapsto \langle \mathbf{F}, \mathbf{v} \rangle$. 
    In terms of \cref{fig:casec-general}, 
    $\ann(\mathbf{F}) \ne 0$ means
    \[
    \begin{bmatrix}
        a & \lambda & b \\
        \lambda & 0 & 0 \\
        b & 0 & c \\
        c & 0 & d
    \end{bmatrix} 
    \begin{bmatrix}
        x \\ y \\ z
    \end{bmatrix} = 0
    \]
    has a nontrivial solution.
    But since at least one of $c$ or $d$ is nonzero and $\lambda \ne 0$, the matrix is always rank $3$.
\end{proof}

We also need the following results on the representations of the signatures being well defined.
\begin{proposition}[Proposition 2.6 of (arXiv version)~\cite{domain3-real-set}]\label{prop:uniqueness-sum-lin-ind-tensor}
    Let $k \ge 3$. 
    Let $\mathbf{y}_1, \ldots, \mathbf{y}_s, \mathbf{z}_1, \ldots, \mathbf{z}_s \in \mathbb{C}^n$ such that 
    $\{\mathbf{y}_i\}_{1 \le i \le s}$ and $\{\mathbf{z}_i\}_{1 \le i \le s}$ are two sets of linearly independent vectors.
    Suppose 
    \[\sum_{i = 1}^s \mathbf{y}_i^{\otimes k} = \sum_{i = 1}^s \mathbf{z}_i^{\otimes k} \, .\] 
    Then, $(\mathbf{y}_1, \ldots, \mathbf{y}_s) \sim (\mathbf{z}_1, \ldots, \mathbf{z}_s)$.
\end{proposition}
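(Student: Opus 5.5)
The approach is to reduce to a statement about linear independence of the tensor powers $\mathbf{y}_i^{\otimes (k-1)}$, via contraction with suitable dual vectors. Since $\{\mathbf{y}_i\}$ is linearly independent, extend it to a basis of $\mathbb{C}^n$ and take the dual basis. Let $\mathbf{w}_j$ be the dual vector with $\langle \mathbf{y}_i, \mathbf{w}_j\rangle = \delta_{ij}$ for $1 \le i, j \le s$.

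Contract one leg of both sides of $\sum_i \mathbf{y}_i^{\otimes k} = \sum_i \mathbf{z}_i^{\otimes k}$ with $\mathbf{w}_j$. This gives
\[
\mathbf{y}_j^{\otimes (k-1)} = \sum_{i=1}^s \langle \mathbf{z}_i, \mathbf{w}_j\rangle\, \mathbf{z}_i^{\otimes (k-1)} .
\]
Because $k-1 \ge 2$, I next contract a second leg with $\mathbf{w}_l$ for $l \ne j$. The left side becomes $\langle \mathbf{y}_j, \mathbf{w}_l\rangle\, \mathbf{y}_j^{\otimes (k-2)} = 0$. The right side becomes $\sum_i \langle \mathbf{z}_i,\mathbf{w}_j\rangle\langle \mathbf{z}_i,\mathbf{w}_l\rangle\, \mathbf{z}_i^{\otimes(k-2)}$.

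Since the $\mathbf{z}_i$ are linearly independent, the tensors $\mathbf{z}_i^{\otimes m}$ are linearly independent for every $m \ge 1$: contract $m-1$ legs with a dual vector of $\mathbf{z}_i$. Hence
\[
\langle \mathbf{z}_i,\mathbf{w}_j\rangle\langle \mathbf{z}_i,\mathbf{w}_l\rangle = 0 \quad \text{for all } i \text{ and all } j \ne l .
\]
This is exactly where $k \ge 3$ is used: for $k = 2$ the statement fails, since orthogonal changes of basis do not preserve the summands.

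It follows that each $\mathbf{z}_i$ pairs nonzero with at most one $\mathbf{w}_j$. Call such a $j$ the index $\pi(i)$, if it exists. In the first displayed identity, the right side is then $\sum_{i:\pi(i)=j} c_{ij}\mathbf{z}_i^{\otimes(k-1)}$ with $c_{ij} \ne 0$. The left side $\mathbf{y}_j^{\otimes(k-1)}$ is nonzero (the $\mathbf{y}_j$ are independent), so at least one $i$ has $\pi(i) = j$.

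The $s$ distinct values of $j$ each need at least one $i$, and there are only $s$ indices $i$. Therefore $\pi$ is a bijection. For each $j$ there is then a unique $i = \pi^{-1}(j)$ with $\mathbf{y}_j^{\otimes(k-1)} = c\,\mathbf{z}_i^{\otimes(k-1)}$ and $c \ne 0$. Comparing a nonzero rank-one slice gives $\mathbf{y}_j \sim \mathbf{z}_i$, and hence $(\mathbf{y}_1,\ldots,\mathbf{y}_s)\sim(\mathbf{z}_1,\ldots,\mathbf{z}_s)$ after reordering by $\pi$.

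The main subtlety I expect is ensuring that no $\mathbf{z}_i$ is annihilated by all the $\mathbf{w}_j$, i.e., has $\pi(i)$ undefined. The counting argument above disposes of this. Alternatively, $\operatorname{span}(\mathbf{y}_i) = \operatorname{span}(\mathbf{z}_i)$ follows by contracting all but one leg with arbitrary vectors, since the column space of the flattening equals both spans. So each $\mathbf{z}_i$ lies in $\operatorname{span}(\mathbf{y}_j)$ and, being nonzero, must pair nontrivially with some $\mathbf{w}_j$.
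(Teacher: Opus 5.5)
Your proof is correct and complete. The paper gives no proof of this statement; it imports it verbatim from the cited reference. So there is no internal argument to match, and yours serves as a self-contained replacement.

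All the steps check out:
\begin{itemize}
\item The dual vectors $\mathbf{w}_j$ exist because the bilinear dot product $\langle\cdot,\cdot\rangle$ is nondegenerate.
\item The second contraction kills the left side. Because $k-2\ge 1$, the tensors $\mathbf{z}_i^{\otimes(k-2)}$ are linearly independent, which forces $\langle \mathbf{z}_i,\mathbf{w}_j\rangle\langle \mathbf{z}_i,\mathbf{w}_l\rangle=0$ for $j\ne l$. This is precisely where $k\ge 3$ enters.
\item The counting step is sound. The sets $\{i:\pi(i)=j\}$ are pairwise disjoint and each is nonempty, so $\pi$ is a total bijection. Each first-contraction identity then collapses to a single term $\mathbf{y}_j^{\otimes(k-1)}=c\,\mathbf{z}_{\pi^{-1}(j)}^{\otimes(k-1)}$ with $c\neq 0$.
\end{itemize}
Your backup argument also works: the column space of the flattening equals both $\operatorname{span}(\mathbf{y}_i)$ and $\operatorname{span}(\mathbf{z}_i)$.

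Your device of contracting with vectors that kill all but one summand is the same trick the paper uses for the companion \cref{prop:uniqueness-sym-form}, so it fits naturally alongside it. As a small bonus, write $\mathbf{z}_{\pi^{-1}(j)}=\mu\,\mathbf{y}_j$ in that last identity. Then $c=\langle \mathbf{z}_{\pi^{-1}(j)},\mathbf{w}_j\rangle=\mu$, which gives $\mu^k=1$. So you actually recover the summands up to $k$-th roots of unity, slightly more than the $\sim$ in the statement.
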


      \begin{proposition}\label{prop:uniqueness-sym-form}
          Let nonzero $\mathbf{u}_1, \mathbf{u}_2, \mathbf{v}_1, \mathbf{v}_2 \in \mathbb{C}^n$ for $n \ge 2$ be such that 
          $\mathbf{u}_1 \not \sim \mathbf{v}_1$ and $\mathbf{u}_2 \not \sim \mathbf{v}_2$.
          Let $\mathbf{F}_i = \lambda_i \mathbf{u}_i^{\otimes k} + \sym(\mathbf{u}_i, \ldots, \mathbf{u}_i, \mathbf{v}_i)$ (where there are $k-1$ copies of $\mathbf{u}_i$) for some $\lambda_i \in \mathbb{C}$ for $i = 1, 2$.
          If $\mathbf{F}_1 = \mathbf{F}_2$, then $\mathbf{u}_1 \sim \mathbf{u}_2$.
      \end{proposition}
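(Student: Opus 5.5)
The plan is to use the annihilator idea from \cref{def:annihilator}: contracting $\mathbf{F}_i$ with a suitable vector kills the $\sym$ part except for a multiple of $\mathbf{u}_i^{\otimes(k-1)}$, and the rank of the result then identifies $\mathbf{u}_i$.

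Assume for contradiction that $\mathbf{u}_1 \not\sim \mathbf{u}_2$. Write $\mathbf{F} = \mathbf{F}_1 = \mathbf{F}_2$. For a vector $\mathbf{w}$, contraction on one index gives
\[
\langle \mathbf{F}_i, \mathbf{w}\rangle = \bigl(\lambda_i \langle \mathbf{u}_i,\mathbf{w}\rangle + \langle \mathbf{v}_i,\mathbf{w}\rangle\bigr)\mathbf{u}_i^{\otimes (k-1)} + \langle \mathbf{u}_i,\mathbf{w}\rangle \sym(\mathbf{u}_i,\ldots,\mathbf{u}_i,\mathbf{v}_i),
\]
where the last $\sym$ has $k-2$ copies of $\mathbf{u}_i$.

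First consider the case where $\mathbf{u}_1, \mathbf{u}_2$ are not proportional. Choose $\mathbf{w}$ with $\langle \mathbf{u}_1, \mathbf{w}\rangle = 0$ and $\langle \mathbf{u}_2,\mathbf{w}\rangle \ne 0$. This is possible since the two linear functionals $\mathbf{w}\mapsto\langle\mathbf{u}_j,\mathbf{w}\rangle$ are independent. Then $\langle \mathbf{F}_1,\mathbf{w}\rangle = \langle\mathbf{v}_1,\mathbf{w}\rangle \mathbf{u}_1^{\otimes(k-1)}$ has symmetric rank at most $1$. On the other side, $\langle \mathbf{F}_2,\mathbf{w}\rangle = c\,\mathbf{u}_2^{\otimes(k-1)} + d\,\sym(\mathbf{u}_2,\ldots,\mathbf{u}_2,\mathbf{v}_2)$ with $d\ne 0$. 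This equals $d\bigl(\sym(\mathbf{u}_2,\ldots,\mathbf{u}_2,\mathbf{v}_2')\bigr)$ with $\mathbf{v}_2' = \mathbf{v}_2 + \tfrac{c}{d(k-1)}\mathbf{u}_2$, which is still independent of $\mathbf{u}_2$. When $k-1>1$, \cref{prop:symmetric-rank-sym-sum} says this tensor has symmetric rank $k-1 \ge 2$, a contradiction.

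It remains to deal with two edge issues.

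\emph{Small $k$.} If $k=2$, the argument above fails because $\langle \mathbf{F}_2,\mathbf{w}\rangle$ is just a vector. In that case I would instead compare matrices directly. $\mathbf{F}_i = \lambda_i \mathbf{u}_i\mathbf{u}_i^{\mathsf T} + \mathbf{u}_i\mathbf{v}_i^{\mathsf T}+\mathbf{v}_i\mathbf{u}_i^{\mathsf T}$ is a symmetric matrix of rank $2$ whose column space is $\spn(\mathbf{u}_i,\mathbf{v}_i)$. However, for $k=2$ the statement can genuinely fail: for example $\sym(\mathbf{e}_1,\mathbf{e}_2) = \sym(\mathbf{e}_2,\mathbf{e}_1)$. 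So the intended use must have $k\ge 3$, and I would state that as the standing assumption, since the paper's tensors are ternary or higher. For $k\ge3$ the rank argument applies with $k-1\ge 2$.

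\emph{Choice of $\mathbf{w}$.} Over $\mathbb{C}$ with the bilinear (not Hermitian) dot product, $\mathbf{u}_1$ could be isotropic. This does not matter: we only need the two functionals to be independent, which follows from $\mathbf{u}_1\not\sim\mathbf{u}_2$.

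The main obstacle is making sure the rank comparison is airtight. The key point is that the $\mathbf{F}_2$ side yields exactly a $\sym$ form with an independent second vector. This is where absorbing the $\mathbf{u}_2^{\otimes(k-1)}$ term into $\mathbf{v}_2'$ is needed, since $\sym(\mathbf{u},\ldots,\mathbf{u},\mathbf{u}) = (k-1)\mathbf{u}^{\otimes(k-1)}$. After that, \cref{prop:symmetric-rank-sym-sum} applies directly.
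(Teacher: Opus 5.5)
Your argument is correct. Your caveat about $k$ is also well taken: the paper's own proof needs $k\ge 3$ (it appeals to ``$k-1\ge 2$''), and your example $\sym(\mathbf{e}_1,\mathbf{e}_2)=\sym(\mathbf{e}_2,\mathbf{e}_1)$ shows the statement genuinely fails for $k=2$.

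You and the paper start from the same separating vector $\mathbf{w}$ with $\langle \mathbf{w},\mathbf{u}_1\rangle=0\neq\langle\mathbf{w},\mathbf{u}_2\rangle$. The difference is that the paper contracts $\mathbf{w}$ into $k-1$ slots instead of one.

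\begin{itemize}
\item In the paper, this contraction kills $\mathbf{F}_1$ completely, since every summand has at least $k-2\ge 1$ copies of $\mathbf{u}_1$ among those slots. Meanwhile $\mathbf{F}_2$ collapses to the vector $x\mathbf{u}_2+\mathbf{v}_2$, which is nonzero because $\mathbf{u}_2\not\sim\mathbf{v}_2$. The contradiction is then ``zero versus nonzero'', and no rank theory is needed.
\item In your version, the single contraction leaves a tensor of symmetric rank at most $1$ on one side and $d\,\sym(\mathbf{u}_2,\ldots,\mathbf{u}_2,\mathbf{v}_2')$ on the other. You then need \cref{prop:symmetric-rank-sym-sum} (the Comon result) to tell them apart.
\end{itemize}

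Your route is valid, including the absorption of $c\,\mathbf{u}_2^{\otimes(k-1)}$ into $\mathbf{v}_2'$. It simply imports a nontrivial lemma where $k-2$ further contractions by the same $\mathbf{w}$ would finish the argument elementarily, recovering exactly the paper's proof.
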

      \begin{proof}
          Suppose for a contradiction that $\mathbf{u}_1 \not \sim \mathbf{u}_2$. 
          Then, there exists some $\mathbf{w} \in \mathbb{C}^n$ such that $\langle \mathbf{w}, \mathbf{u}_1 \rangle = 0$
          and $\langle \mathbf{w}, \mathbf{u}_2 \rangle = 1$.
          If we connect $\mathbf{w}$ to the first $k - 1$ variables of $\mathbf{F}_1$ then we get $0$, because $k-1 \ge 2$ and $\mathbf{u}_1$ appears among the first $k-1$ tensor factors
          of each summand of  $\sym(\mathbf{u}_1, \ldots, \mathbf{u}_1, \mathbf{v}_1)$.
          On the other hand,  for $\mathbf{F}_2$ we get $\lambda_2 \mathbf{u}_2
          + (k-1) \langle \mathbf{w}, \mathbf{v}_2 \rangle  \mathbf{u}_2 +  \mathbf{v}_2
          = x \mathbf{u}_2 +  \mathbf{v}_2$ for some $x \in \mathbb{C}$.
          But this cannot be 0, since $\mathbf{u}_2 \not \sim \mathbf{v}_2$.
          This is a contradiction to $\mathbf{F}_1 = \mathbf{F}_2$. 
      \end{proof}

\begin{lemma}\label{lem:types-distinct}
    The types \caseathree, \caseatwo, \casebthree, \casebtwo, \casecz, \casecobg, \casecosep, \casecodz, and \casecogen are mutually exclusive.
\end{lemma}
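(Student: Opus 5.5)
We separate the nine types by invariants under $\ortho(3,\mathbb{C})$ (equivalently, for the \casec types, under $\so(3,\mathbb{C},\X)$ acting on $\iny{\mathbf{F}}$). These invariants split the types into groups, and we then finish inside each group using the uniqueness results \cref{prop:uniqueness-sum-lin-ind-tensor,prop:uniqueness-sym-form}.

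\textbf{First invariant: the annihilator.} By \cref{prop:two-annihilator-degenerate} and the computation in the proof of \cref{prop:nonzero-annihilator-types}, the types \caseathree, \casebthree, \casecosep, \casecodz, \casecogen have $\ann(\mathbf{F})=0$. The types \caseatwo, \casebtwo, \casecobg, \casecz have a one-dimensional annihilator. The annihilators are $\mathbf{e}_3$ for \caseatwo and \casebtwo, and $\bzp$ for \casecobg and \casecz, since $\langle \mathbf{F},\bzp\rangle=0$ in the defining forms (note $\langle\bzp,\bzp\rangle=0$ and $\langle\bzm,\bzp\rangle=1$). The quantity $\langle\mathbf{u},\mathbf{u}\rangle$ for the annihilator $\mathbf{u}$ is invariant under orthogonal $T$, up to the nonzero scalar. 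Hence it is nonzero for \caseatwo and \casebtwo and zero for \casecobg and \casecz.

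\textbf{Second invariant: symmetric rank and decomposition data.} Orthogonal transforms preserve symmetric rank and inner products.
\begin{itemize}
\item \caseathree versus \casebthree: both have rank $3$ with linearly independent summands, so \cref{prop:uniqueness-sum-lin-ind-tensor} makes the decomposition essentially unique. In \caseathree the summands are pairwise orthogonal and non-isotropic. In \casebthree two of the summands are isotropic.
\item \caseatwo versus \casebtwo: the same argument applies, distinguishing whether the two summands are isotropic.
\item \casecobg versus \casecz: \casecobg has the form $a\bzp\teh+\lambda\sym(\bzp,\bzp,\bzm)$ with $\lambda\neq 0$. Contracting it with $\bzm$ gives $\lambda\,\bzp\tew\neq 0$ up to constants. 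For \casecz, contracting with a vector $\mathbf{w}$ satisfying $\langle \mathbf{w},\bzp\rangle=1$ must be compared against this. A cleaner invariant is needed here, and I would use the following one. In the $\Yinv$ picture, \casecz means $\iny{\mathbf{F}}$ has no $\db\db\dg$ entry and is supported on $\db\dr$ with annihilator $\mathbf{e}_2$. Then contracting $\iny{\mathbf{F}}$ with $\X\mathbf{e}_1$, the $\X$-dual of the annihilator, gives $0$ in the \casecz case but $\lambda\mathbf{e}_1\tew\neq 0$ in the \casecobg case. Concretely, this is the condition $\langle\mathbf{F},\mathbf{u}'\rangle=0$ for the isotropic vector $\mathbf{u}'$ paired with the annihilator. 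I would verify that this condition is orthogonally invariant, because the isotropic annihilator line is canonical and so is its $\X$-pairing structure.
\end{itemize}

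\textbf{The $\ann=0$ \casec types.} It remains to separate \casecosep, \casecodz, \casecogen from each other and from \caseathree and \casebthree.
\begin{itemize}
\item Against \caseathree and \casebthree: both of these are orthogonally equivalent to $\mathcal{E}$-type forms of symmetric rank $3$ with independent vectors. In the \casec types, the restriction of $\iny{\mathbf{F}}$ to the $\db\dg$ part contains $\sym(\mathbf{e}_1,\mathbf{e}_1,\mathbf{e}_2)$ with $\lambda\ne0$. I would show that every contraction $\langle\mathbf{F},\mathbf{v}\rangle$ for \caseathree or \casebthree is diagonalizable, being a sum of three squares of a basis. For a \casec type, contracting with $\mathbf{e}_2$ in the $\iny{}$ picture gives $\lambda\mathbf{e}_1\tew$. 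Pulled back through the $\X$-pairing, this is a rank-one contraction by an isotropic vector $\bzp$, and the resulting matrix has isotropic image. For \caseathree and \casebthree, I would check directly whether any contraction can be rank one with isotropic image. This check may need care and is a likely obstacle.
\item Among \casecosep, \casecodz, \casecogen: I would use the $\so(3,\mathbb{C},\X)$ stabilizer of the canonical line $\mathbf{e}_1$, which is given by the upper-triangular image under $\rho$. The plan is to compute the orbit invariants of $(a,\lambda,b,c,d)$ under this group. These invariants are $d$ up to scaling, whether $d=0$, and the discriminant-like quantity $c^2-c\lambda-bd$, whose vanishing is expected to correspond to reducibility to the form $b=c=0$.
\end{itemize}

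\textbf{Main obstacle.} The hard step is showing that the isotropic line $\bzp$, and hence the $\iny{}$ normal form, is canonically determined by $\mathbf{F}$, so that the transform $T$ is unique up to the stabilizer. I would obtain this from \cref{prop:uniqueness-sym-form} applied to the $\db\dg$ parts.
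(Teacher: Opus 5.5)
Your split by whether $\ann(\mathbf F)=0$ and your arguments for \caseathree versus \casebthree and \caseatwo versus \casebtwo are fine. However, you place \casecobg in the wrong group, and this leaves a real hole. The vector $\bzp$ does not annihilate \casecobg. Using exactly the identities you quote, $\langle\bzp,\bzp\rangle=0$ and $\langle\bzm,\bzp\rangle=1$, one gets $\langle a\bzp\teh+\lambda\sym(\bzp,\bzp,\bzm),\bzp\rangle=\lambda\bzp\tew\neq0$. Your own $\Yinv$-picture computation $\langle\iny{\mathbf F},\mathbf e_2\rangle=\lambda\mathbf e_1\tew$ is this same nonvanishing, so that paragraph contradicts your first step. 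Also, it is contraction with $\bzp$, not $\bzm$, that yields $\lambda\bzp\tew$. Since $\bzp,\bzm\in\spn(\mathbf e_1,\mathbf e_2)$, the defining form of \casecobg is annihilated by $\mathbf e_3$, which is non-isotropic, just as for \caseatwo and \casebtwo. Hence \casecobg versus \casecz needs no new invariant: the annihilator is non-isotropic in one case and isotropic in the other. But \casecobg versus \caseatwo and \casebtwo is handled nowhere in your proposal. The missing ingredient is symmetric rank. \casecobg has symmetric rank $3$ by \cref{prop:symmetric-rank-sym-sum} and the Vandermonde remark after it, which covers $a\neq0$. \caseatwo and \casebtwo have rank $2$. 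This is how the paper separates them.

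In the annihilator-free group, your invariant for separating \casecosep, \casecodz, \casecogen from \caseathree and \casebthree fails. For \casebthree, $\langle a\bzp\teh+b\bzm\teh+c\mathbf e_3\teh,\bzm\rangle=a\bzp\tew$ is already a rank-one contraction by an isotropic vector with isotropic image. What works is to also require the contracting vector to span the image:
\begin{itemize}
\item For the three \caseco types, $\langle T\mathbf F,\bzp\rangle=\lambda\bzp\tew$ with $\lambda\neq0$.
\item Now take $\mathbf F=\sum_{i=1}^3c_i\mathbf w_i\teh$ with the $\mathbf w_i$ independent and all $c_i\neq0$. Then $\langle\mathbf F,\mathbf u\rangle=W\,\mathrm{diag}(c_i\langle\mathbf w_i,\mathbf u\rangle)W\transpose$ with $W$ invertible. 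So $\langle\mathbf F,\mathbf u\rangle\sim\mathbf u\tew\neq0$ forces $\mathbf u\sim\mathbf w_i$ with $\langle\mathbf w_i,\mathbf w_i\rangle\neq0$, i.e.\ $\mathbf u$ is non-isotropic.
\item This property is orthogonally invariant, so it separates the \caseco types from \caseathree and \casebthree.
\end{itemize}

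Finally, the canonicity step you identify as the main obstacle cannot be obtained from \cref{prop:uniqueness-sym-form}. A \casecodz or \casecogen signature is not of the form $\mu\mathbf u\teh+\sym(\mathbf u,\mathbf u,\mathbf v)$, and the $\db\dg$ restriction is not preserved by $\so(3,\mathbb{C},\X)$. The paper proves this step in \cref{prop:stabilizer-c-1-signature}. Contracting $\iny{\mathbf F}$ with $\mathbf e_2$ gives $\lambda\mathbf e_1\tew$, and $(\rho(M)\mathbf e_1)\tew=(d^2,-c^2,\sqrt{2}cd)\tew$ lies in $\casecsupp$ only if $c=0$. Writing $\iny{\mathbf F}=[x;\lambda,y;0,0,z;0,0,0,w]$, the upper-triangular $\psi(b,d)$ then preserves the conditions $\lambda\neq0$, $w\neq0$, and ($w=0$ with $z\neq0$). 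Separately, $P\circ\psi(b,d)=d^4P$ separates \casecosep from \casecogen (\cref{cor:classification-case-c-mutually-exclusive}).
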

\begin{proof}
By \cref{prop:nonzero-annihilator-types}, only \caseatwo, \casebtwo, \casecobg, and \casecz have a nontrivial annihilator. 
By \cref{prop:two-annihilator-degenerate}, the dimension of the annihilator space for a nondegenerate signature is at most $1$, so either all nonzero annihilators are isotropic or all are non-isotropic. 
Also, the annihilator being isotropic or not is invariant under orthogonal transform.
We see that \caseatwo, \casebtwo, and \casecobg are annihilated by a $T \mathbf{e}_3$ for some orthogonal $T$, while \casecz is annihilated by an isotropic vector $T \bzp$ for some orthogonal $T$. 
Therefore, \casecz is different from \caseatwo, \casebtwo and \casecobg.
\casecobg is different from \caseatwo and \casebtwo because \casecobg has symmetric rank $3$ by \cref{prop:symmetric-rank-sym-sum} while \caseatwo and \casebtwo have symmetric rank $2$.
\caseatwo and \casebtwo are distinct because the sum of tensor representation is essentially unique by \cref{prop:uniqueness-sum-lin-ind-tensor}.

Similarly, one can argue that \caseathree and \casebthree are distinct and different from the other \casec types. 
The fact that \casecosep, \casecodz and \casecogen are distinct is proven in \cref{cor:classification-case-c-mutually-exclusive}, which also explains the condition $c^2 - c - bd \ne 0$ in \casecogen.
\end{proof}

\section{Type \texorpdfstring{\casec}{C} signatures}\label{sec:casec}

As listed in \cref{thm:single-ternary-dichotomy}, 
\cite{domain3-single-signature} introduced three tractable cases with explicit algorithms.
However, the third case, which we named 
\casec in \cref{subsec:types-of-sig},
required an unexpected and complex algorithm unlike the other two cases.
This is in contrast to the other two cases which can be viewed as a natural generalization
of the tractable cases in the Boolean domain.
In~\cite{domain4-01-dichotomy}, it was observed that  case \casec does not occur when the signature is real-valued. 
This observation led to a dichotomy of a set of real-valued signatures in~\cite{domain3-real-set}.

In this paper, we need to consider the \casec signatures since we allow our signatures to be complex-valued. 
We first present a simple combinatorial algorithm for case  \casec by studying the problem in 
the $\holts(\mathbf{X} \mid \blank)$ setting and give further characterizations.
This allows us to view it as a generalization of the Boolean domain tractable case and discover a complex hierarchy within the set of \casec signatures.

We start by giving a 
combinatorial characterization of \casec signatures.
An algebraic characterization using the (complex) spin representation of $\speciallinear(2, \mathbb{C})$ parameterizing $\so(3, \mathbb{C}, \X)$ is given in \cref{sec:case-c-spin-rep-calc}.

Recall from
\cref{subsec:types-of-sig}
that $\mathbf{F}$ is in type \casec if there exists some orthogonal matrix $T$ such that 
$\mathbf{G} = T \mathbf{F}$ and $\mathbf{G} \sim \mathbf{G}_0 + \lambda \sym(\bzp, \bzp, \bzm)$ 
for some $\lambda \in \mathbb{C}$ and $\langle \mathbf{G}_0, \bzp \rangle = 0$.
We will further separate \casec into \casecz and \caseco, where \casecz corresponds to $\lambda = 0$ and \caseco corresponds to $\lambda \ne 0$.

Recall that $\langle \mathbf{G}_0, \bzp \rangle = 0$ means that when we  connect $\bzp$ to $\mathbf{F}_0$, we get a
binary signature that is identically zero.
We have $\holts(\mathbf{F}) =_T \holts( \beq \mid \mathbf{F})  =_T \holts( \beq  \mid  \mathbf{G})$.
Consider applying a holographic transformation by $\Yinv$ on the right side, mapping $\mathbf{G}$ to $\iny{\mathbf{G}} = \Yinv \mathbf{G}$.
Note that on the left side $(=_2)$ is transformed to $\X$, which has the matrix form $\Y\transpose \Y$. 
Then, we get $\holts(\mathbf{G}) =_T \holts(\X  \mid  \iny{\mathbf{G}})$ where $\iny{\mathbf{G}} = \iny{\mathbf{G}}_0 + \lambda \sym(\mathbf{e}_1, \mathbf{e}_1, \mathbf{e}_2)$ and $\iny{\mathbf{G}}_0 = \Yinv \mathbf{G}_0$.
We have $\Y \mathbf{e}_1 = \bzp$, $\Y \mathbf{e}_2 = \bzm$, and also as row vectors $\bzp \Y = \mathbf{e}_2$. 
From $\langle \mathbf{G}_0, \bzp \rangle$, we find 
that $\langle \iny{\mathbf{G}}_0, \mathbf{e}_2 \rangle = \langle \Yinv \mathbf{G}_0, \mathbf{e}_2 \rangle =  \langle \Yinv \mathbf{G}_0, \X \mathbf{e}_1 \rangle = \langle \mathbf{G}_0, (\Yinv)\transpose \X \mathbf{e}_1 \rangle = \langle \mathbf{G}_0, \Y \mathbf{e}_1\rangle = \langle \mathbf{G}_0, \bzp \rangle = 0$ (see also \cref{prop:annihilating-vector-under-gl-transform}).

This gives us a very clean picture of the signature $\iny{\mathbf{G}}$:
$\iny{\mathbf{G}}$ must have the form in \cref{fig:casec0-ternary}.
\begin{figure}
  \centering
  \begin{subfigure}[b]{0.45\textwidth}
    \centering
    \tersig{0.7}{a {\lambda} b 0 0 c 0 0 0 d}
    \caption{$\iny{\mathbf{G}}$}
    \label{fig:casec0-ternary}
  \end{subfigure}
  \begin{subfigure}[b]{0.45\textwidth}
    \centering
    \binsig{0.5}{* * * 0 0 *}
    \caption{$\langle \iny{\mathbf{G}}, \mathbf{u} \rangle$}
    \label{fig:casec0-binary}
  \end{subfigure}
  \caption{\casec signatures after $\Yinv$ transform.}
  \label{fig:general-case-c-signatures}
\end{figure}
for some $a, b, c, d \in \mathbb{C}$.
Note that if we connect any unary $\mathbf{u}$ to $\iny{\mathbf{G}}$, the signature $\langle \iny{\mathbf{G}}, \mathbf{u} \rangle$ has the form in \cref{fig:casec0-binary}.
Therefore, $\mathbf{F}$ has type \casec if and only if there exists some orthogonal transform $T$ such that $\Yinv T \mathbf{F} \in \casecsupp$.

\subsection{Algorithm}\label{subsec:case-c-algorithm}

Now we can give a simple combinatorial polynomial time algorithm for $\holts(\X  \mid  \iny{\mathbf{G}})$ for $\iny{\mathbf{G}} \in \casecsupp$.
Let $\Omega$ be a connected bipartite signature grid, with $\X$ on the left and $\iny{\mathbf{G}}$ on the right, 
with unary signatures connected to $\X$ or $\iny{\mathbf{G}}$.
We first recursively absorb all unary signatures to the adjacent signatures until there are no unary signatures left.
This gives a signature grid $\Omega'$ with no unary signature, such that every vertex  on the right side 
has degree 2 or 3, and has a signature of the form in \cref{fig:casec0-ternary} or \cref{fig:casec0-binary},
respectively.
If there are no degree $3$ vertices in $\Omega'$, we may compute the Holant value by matrix multiplication and taking the trace. Now suppose there are degree $3$ vertices in $\Omega'$.
Then $\Omega'$ cannot be a cycle.
Consider an assignment $\sigma:E \to \{\db, \dg, \dr\}$ on $\Omega'$.
Let $n_i$ be the number of edges assigned to $i \in \{\db, \dg, \dr\}$ by $\sigma$.
We see that having $\X$ on the left side forces $n_\db = n_\dg$.
We claim that this number is 0, in order that $\sigma$ 
has a nonzero evaluation.
Otherwise suppose  $n_\dg > 0$, then take any edge $e = (u, v)$ that is assigned $\dg$,
with $v$ belonging to the right side which is labeled by a signature 
in \cref{fig:casec0-ternary} or \cref{fig:casec0-binary}.
From both forms, for any vertex on the right side
if one incident edge is assigned $\dg$
then all other incident edges  must be  assigned $\db$, so the number of incident edges assigned $\db$
is greater than or equal to the number of incident edges assigned $\dg$.
Thus no incident edge to a vertex of degree 3 (they are all on the right side) can be assigned $\dg$,
 as this would imply  $n_\db > n_\dg$.
 Hence $v$
must have degree 2 and we find the other incident edge to $v$ assigned $\db$, which is turned to  $\dg$ again
by the left side $\X$. 
Repeating this as described in \cref{fig:case-c-nonzero-must-be-cycle}, we will trace out a cycle as $\Omega'$ is finite. Since  $\Omega'$ is  connected,
this cycle is all  $\Omega'$,  a
contradiction.
Therefore, $n_\db = n_\dg=0$. Thus a nonzero assignment can only use  $\dr$,
and so  $\hol_{\Omega'}$ is just the value of the single assignment of all $\dr$.
\begin{figure}
\centering
\begin{tikzpicture}

    \node[mysq, label=below:{$u$}] (x1) at (0, 0) {};
    \node[draw, circle, label=below:{$v$}] (g1) at (2, 0) {};
    \node[mysq] (x2) at (4, 0) {};
    \node[draw, circle, label=below:{$v'$}] (g2) at (6, 0) {};
    \node[mysq] (x3) at (8, 0) {};
    \node[draw, circle] (g3) at (10, 0) {};
    \draw (x1) -- (g1) node [midway, above] {$\dg$};
    \draw[-latex] (g1) -- (x2) node [midway, above] {$\db$};
    \draw[-latex] (x2) -- (g2) node [midway, above] {$\dg$};
    \draw[-latex] (g2) -- (x3) node [midway, above] {$\db$};
    \draw[-latex] (x3) -- (g3) node [midway, above] {$\dg$};

    \draw[-latex, dashed] (g3) to[out=120, in=60, looseness=1] (x1);
\end{tikzpicture}
\caption{$n_B = n_G > 0$ forces the graph to be a cycle. Squares are the vertices on the left side and circles are the vertices on the right side.}
\label{fig:case-c-nonzero-must-be-cycle}
\end{figure}

Note that this argument works for not only $\holts(\X  \mid  \iny{\mathbf{G}})$, 
but for $\holts(\X  \mid  \casecsupp)$ in general.
We can view this is a generalization of the Boolean domain Fibonacci gates of type $\typei(1, \pm 2\i)$, in which the vanishing part on $\db \dg$ annihilates the other possible assignments.

\subsection{Characterization of \texorpdfstring{\casecz}{C0}} 
\label{subsec:casecz-characterization}
Recall that $\mathbf{F}$ has type \casecz if and only if  for some nonzero isotropic $\mathbf{u} \in \mathbb{C}^3$, i.e.\ $\langle \mathbf{u}, \mathbf{u} \rangle = 0$, it holds that $\langle \mathbf{F}, \mathbf{u} \rangle = 0$.
Since $\mathbf{F}$ is nondegenerate, we may further assume that $\ann(\mathbf{F}) = \spn(\mathbf{u})$.
This means that there is some orthogonal matrix $T$ such that $\Yinv T \mathbf{F}$ has the form \cref{fig:casec0-ternary} with $\lambda = 0$.
We give another characterization of type \casecz.
\begin{proposition}\label{prop:casec0-characterization-rank-diminishing-matrix-symmetric}
Let $\mathbf{F}$ be a nondegenerate, symmetric, ternary, domain 3 signature.
$\mathbf{F}$ is in type \casecz if and only if $\mathbf{F} = M \mathbf{G}$ for some $M \in \mathbb{C}^{3 \times 3}$ and symmetric ternary signature $\mathbf{G}$ such that $\rank M = 2$ and $\rank M \transpose M = 1$.
\end{proposition}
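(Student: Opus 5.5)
The plan is to work directly with the characterization recalled just before the statement: a nondegenerate $\mathbf{F}$ is \casecz if and only if $\langle \mathbf{F}, \mathbf{u} \rangle = 0$ for some nonzero isotropic $\mathbf{u}$, or equivalently $\Yinv T \mathbf{F}$ has the form of \cref{fig:casec0-ternary} with $\lambda = 0$ for some orthogonal $T$. Each direction then becomes a short linear-algebra argument about the image of $M$ and the dot product restricted to it.

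\emph{($\Rightarrow$).} Choose orthogonal $T$ with $\iny{\mathbf{G}} := \Yinv T \mathbf{F}$ of the form in \cref{fig:casec0-ternary} with $\lambda=0$. Then its only possibly nonzero entries are $f_{\db\db\db}, f_{\db\db\dr}, f_{\db\dr\dr}, f_{\dr\dr\dr}$, so $\iny{\mathbf{G}}$ is supported on $\db\dr$. Let $P = \mathrm{diag}(1,0,1)$. Support on $\db\dr$ gives $P\iny{\mathbf{G}} = \iny{\mathbf{G}}$. Hence
\[
\mathbf{F} = T\transpose \Y \iny{\mathbf{G}} = (T\transpose \Y P)\,\iny{\mathbf{G}} .
\]
Set $M = T\transpose \Y P$ and $\mathbf{G} = \iny{\mathbf{G}}$. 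Since $T$ and $\Y$ are invertible, $\rank M = \rank P = 2$. Moreover
\[
M\transpose M = P \Y\transpose T T\transpose \Y P = P\X P .
\]
This matrix has $(\dr,\dr)$ entry $1$ and all other entries $0$, because $\X$ is zero off the $\db\dg$ swap and the $\dr\dr$ entry. So it has rank $1$.

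\emph{($\Leftarrow$).} Let $W = \operatorname{im} M$, which has $\dim W = 2$. Let $\mathbf{w}$ span $\ker M\transpose$; this is one-dimensional and equals $W^{\perp}$ with respect to the dot product. Contracting $\mathbf{w}$ into one leg of $\mathbf{F} = M\mathbf{G}$ gives
\[
\langle \mathbf{F}, \mathbf{w}\rangle = M^{\otimes 2}\langle \mathbf{G}, M\transpose \mathbf{w}\rangle = 0 ,
\]
so $\mathbf{w}$ annihilates $\mathbf{F}$. It remains to show that $\mathbf{w}$ is isotropic.

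The matrix $M\transpose M$ is the Gram matrix of the dot product evaluated on the columns of $M$, which span $W$. Its rank is therefore the rank of the bilinear form restricted to $W$. Since $\rank M\transpose M = 1 < 2 = \dim W$, this restricted form is degenerate. So its radical $W \cap W^{\perp}$ is nonzero. As $W^\perp = \spn(\mathbf{w})$ is one-dimensional, we get $\mathbf{w} \in W$, and hence $\langle \mathbf{w}, \mathbf{w}\rangle = 0$.

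Thus the nondegenerate $\mathbf{F}$ has a nonzero isotropic annihilator, and the characterization above gives that $\mathbf{F}$ is \casecz. If one wants the explicit orthogonal $T$, one uses that $\so(3,\mathbb{C})$ acts transitively on nonzero isotropic vectors up to scaling, which is Witt's theorem or can be read off from the parametrization $\phi$. One picks $T$ with $T\mathbf{w} \sim \bzp$, so that $\langle T\mathbf{F}, \bzp\rangle = 0$.

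The main obstacle is the step in ($\Leftarrow$) that converts $\rank M\transpose M = 1$ into isotropy of $\mathbf{w}$, namely the identification of $M\transpose M$ with the Gram matrix of the form on $\operatorname{im} M$. The remaining steps are direct computations.
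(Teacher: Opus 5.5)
Your proof is correct and follows essentially the same route as the paper. In the forward direction you factor $\mathbf{F}$ through a signature supported on $\db\dr$: your $M = T\transpose \Y P$ plays the role of the paper's explicit matrix, and you also account for the orthogonal $T$, which the paper leaves implicit. In the backward direction, your radical argument finds the same isotropic annihilator in $\operatorname{im} M \cap \ker M\transpose$ that the paper constructs explicitly as $M\mathbf{u}$, with $\mathbf{u}\transpose\mathbf{v}=0$ chosen from the factorization $M\transpose M = \mathbf{v}\mathbf{v}\transpose$.
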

\begin{proof}
Suppose $\mathbf{F}$ is type \casecz. 
After orthogonal transformation $T$, we may assume that $\mathbf{F}' = T \teh \mathbf{F}$ is such that $\langle \mathbf{F}', \bzp \rangle = 0$.
Then, $(\mathbf{F}')\domres{\db\dr} = [a, b, c, d]$ uniquely determines $\mathbf{F}'$ as in \cref{fig:casec0-determined-by-beta0}.
\begin{figure}
  \centering
  \begin{subfigure}[b]{0.5\textwidth}
\centering
\tersig{0.9}{a {\i a} b -a {\i b} c {-\i a} -b {\i c} d}
\caption{$\mathbf{F}'$}
\label{fig:casec0-determined-by-beta0}
\end{subfigure}
\begin{subfigure}[b]{0.4\textwidth}
\centering
\tersig{0.7}{a 0 b 0 0 c 0 0 0 d}
\caption{$\mathbf{G}$}
\label{fig:casec0-determined-by-beta0-G}
\end{subfigure}
\caption{A signature annihilated by $\bzp$ before and after a transform by $M$.}
\end{figure}
We can verify that $\mathbf{F'} = M\teh \mathbf{G}$ where $M = \begin{bsmallmatrix}
  1 & \i & 0 \\
  \i & -1 & 0 \\
  0 & 0 & 1
\end{bsmallmatrix}$ and $\mathbf{G}$ is the signature in \cref{fig:casec0-determined-by-beta0-G}.

For the other direction, suppose $M$ is such that $ M\transpose M = \mathbf{v} \mathbf{v}\transpose$ for some $\mathbf{v} \in \mathbb{C}^{3}$.
Consider the map $f: \mathbf{u} \mapsto \mathbf{u}\transpose \mathbf{v}$.
The kernel of $f$ has dimension $2$.
Since the kernel of $M$ is dimension $1$, we may choose some $\mathbf{u} \in \mathbb{C}^3$ such that $\mathbf{u}\transpose \mathbf{v} = 0$ while $M \mathbf{u} \ne 0$.
Then, we have $\mathbf{u}\transpose M \transpose M \mathbf{u} = \mathbf{u} \transpose \mathbf{v} \mathbf{v}\transpose \mathbf{u} = 0$, which means $\mathbf{u}' = M \mathbf{u}$ is an isotropic vector.
Also, $(\mathbf{u}')\transpose M = \mathbf{u}\transpose M\transpose M = \mathbf{u}\transpose \mathbf{v} \mathbf{v}\transpose = 0$.
Therefore, we have $\langle \mathbf{F}, \mathbf{u}' \rangle = \langle M \mathbf{G}, \mathbf{u}' \rangle = 0$.
\end{proof}

This characterization gives another algorithm for type \casecz.
If $M \mathbf{G}$ is a type \casecz signature, then in any signature grid, an edge between two $M \mathbf{G}$ can be decomposed into two unary signatures $\mathbf{v}$ where $M\transpose M = \mathbf{v} \otimes \mathbf{v}$.
This means that if the signature grid has $n$ vertices, the Holant value is $\lambda^n$ where $\lambda$ is the value of the signature grid made by connecting $\mathbf{v}$ to the three edges of $\mathbf{G}$.
We will see that this can be viewed as a special case of \cref{lem:edge-factors-general-case} and \cref{prop:pinning-vectors-iff-rank-1}.

\begin{remark}
One significance of this characterization of a \casecz signature $\mathbf{F}$ is that 
any edge between two copies of $\mathbf{F}$ signatures in a grid can be 
replaced with two unary signatures.
This means that $\holts(\mathbf{F}, \mathcal{F}) =_T \holts(\beq, \mathbf{F} \mid \mathcal{F})$. 
Then, if we have two such signatures $\mathbf{F}, \mathbf{G}$, 
$\holts(\mathbf{F}, \mathbf{G}) =_T \holts(\mathbf{F} \mid \mathbf{G})$. 
Interestingly, a computationally \textit{equivalent} bipartite Holant problem arises from a non-bipartite Holant problem.

\end{remark}

\subsection{Characterization of \texorpdfstring{\caseco}{C1}}\label{subsec:characterize-caseco}
\caseco signatures behave quite differently from \casecz signatures.
Suppose $\mathbf{G}$ is a \caseco signature such that $\iny{\mathbf{G}} = \Yinv \mathbf{G} = [a; 1, b; 0, 0, c;, 0, 0, 0, d]$ for some $a, b, c, d \in \mathbb{C}$ (see \cref{fig:casec1-ternary}). 
Note that we rescaled the $\lambda \ne 0$ to be $1$ from the definition given in \cref{def:types-ternary}.
\begin{figure}
\centering
\tersig{0.7}{a 1 b 0 0 c 0 0 0 d}
\caption{A \caseco signature.}\label{fig:casec1-ternary}
\end{figure}
There are several  subtypes of \caseco that behave differently.

If $b, c, d = 0$ in \cref{fig:casec1-ternary}, then $\supp \iny{\mathbf{G}} \subseteq \suppbg$, 
so any gadget we can make 
in the setting of $\holts(\X \mid \iny{\mathbf{G}})$ on the right side is also supported on $\db\dg$.
We call this case \casecobg, where $2$ comes from the signature essentially being \textit{Boolean domain}.
Next, suppose $b, c = 0$ but $d \ne 0$.
Then, $\iny{\mathbf{G}}$ is \sepbg, and since $\X$ is \sepbg as well, 
any gadget we can make on the right side
in the setting of $\holts(\X \mid \iny{\mathbf{G}})$  is also \sepbg.
We call this case \casecosep, where $\mathrm{s}$ stands for \textit{domain separated}.

An interesting phenomenon occurs when we have $c, d = 0$ with $b \ne 0$.
Setting 
$A = 
\begin{bsmallmatrix}
1 & -b^2/2 & b \\ 0 & 1 & 0 \\ 0 & -b & 1
\end{bsmallmatrix}$ 
and considering $\holts(\X (A^{-1})\tew  \mid  A \teh \iny{\mathbf{G}})$, 
we have $(A^{-1})\transpose \X A^{-1} = \X$ and $A \teh \iny{\mathbf{G}} = [a + 3 b^2/2; 1, 0; 0, 0, 0; 0, 0, 0, 0]$.
So, this case is actually equivalent to \casecobg.
Note that $A \in \so(3, \mathbb{C}, \mathbf{X})$.

\subsubsection{Characterization of \texorpdfstring{\casecodz}{C1dz}}\label{subsec:case-c-codz-analysis}
An unexpected interesting case is when $c \ne 0$ and $d = 0$. 
We call it type \casecodz, where $\mathrm{z}$ stands for having a zero in the corner. 
Suppose $\mathbf{G}$ is \casecodz, so
$\iny{\mathbf{G}} = [a; 1, b; 0, 0, c; 0, 0, 0, 0]$ for some $a, b, c \in \mathbb{C}$ such that $c \ne 0$. 
We characterize the set of signatures that $\iny{\mathbf{G}}$ can realize on the right side in the setting of 
$\holts(\X \mid \iny{\mathbf{G}})$.
\begin{figure}
\centering
\begin{subfigure}[b]{0.45\textwidth}
\centering
\begin{tikzpicture}
  \node[mycirc, label=above right:{$\iny{\mathbf{G}}$}] (g1) at (0, 0) {};

  \draw (g1) -- (90:1cm) node[mysq] () {} -- ++(90:1cm) node[mycirc, label=right:{$\iny{\mathbf{G}}$}] (g2) {} -- ++(90:1cm);
  \draw (g1) -- (210:1cm) node[mysq] () {} -- ++(210:1cm) node[mycirc, label=above:{$\iny{\mathbf{G}}$}] (g3) {} -- ++(210:1cm);
  \draw (g1) -- (330:1cm) node[mysq] () {} -- ++(330:1cm) node[mycirc, label=below:{$\iny{\mathbf{G}}$}] (g4) {} -- ++(330:1cm);

  \draw (g2) -- ++(180:1cm) node[draw, circle, fill=white, label=above:{$\mathbf{u}$}] () {};
  \draw (g3) -- ++(300:1cm) node[draw, circle, fill=white, label=right:{$\mathbf{u}$}] () {};
  \draw (g4) -- ++(60:1cm) node[draw, circle, fill=white, label=above:{$\mathbf{u}$}] () {};
\end{tikzpicture}
\caption{$\iny{\mathbf{G}}_{p, q,r}$ for $\mathbf{u} = (p, q, r)$.}
\label{fig:grow-one-grass}
\end{subfigure}
\begin{subfigure}[b]{0.45\textwidth}
\centering
\begin{tikzpicture}[scale=0.5, every node/.style={scale=0.7}]
  \node[mycirc, label=above right:{$\iny{\mathbf{G}}$}] (g1) at (0, 0) {};

  \draw (g1) -- (90:1cm) node[mysq] () {} -- ++(90:1cm) node[mycirc, label=right:{$\iny{\mathbf{G}}$}] (g2) {} --++(90:1cm) node[mysq] () {} -- ++(90:1cm) node[mycirc, label=right:{$\iny{\mathbf{G}}$}] (g22) {} -- ++(90:1cm);
  \draw (g1) -- (210:1cm) node[mysq] () {} -- ++(210:1cm) node[mycirc, label=above:{$\iny{\mathbf{G}}$}] (g3) {} --++ (210:1cm) node[mysq] () {} -- ++(210:1cm) node[mycirc, label=above:{$\iny{\mathbf{G}}$}] (g32) {} -- ++(210:1cm);
  \draw (g1) -- (330:1cm) node[mysq] () {} -- ++(330:1cm) node[mycirc, label=below:{$\iny{\mathbf{G}}$}] (g4) {} --++ (330:1cm) node[mysq] () {} -- ++(330:1cm) node[mycirc, label=below:{$\iny{\mathbf{G}}$}] (g42) {} -- ++(330:1cm);

  \draw (g2) -- ++(180:1cm) node[draw, circle, fill=white, label=left:{$\mathbf{u}_1$}] () {};
  \draw (g3) -- ++(300:1cm) node[draw, circle, fill=white, label=right:{$\mathbf{u}_1$}] () {};
  \draw (g4) -- ++(60:1cm) node[draw, circle, fill=white, label=above:{$\mathbf{u}_1$}] () {};

  \draw (g22) -- ++(180:1cm) node[draw, circle, fill=white, label=left:{$\mathbf{u}_2$}] () {};
  \draw (g32) -- ++(300:1cm) node[draw, circle, fill=white, label=right:{$\mathbf{u}_2$}] () {};
  \draw (g42) -- ++(60:1cm) node[draw, circle, fill=white, label=above:{$\mathbf{u}_2$}] () {};

\end{tikzpicture}
\caption{Adding more layers.}
\label{fig:grow-more-grass}
\end{subfigure}
\caption{Layer gadgets.}
\end{figure}

\paragraph{Realizability}
If we make the gadget in \cref{fig:grow-one-grass} with $\mathbf{u} = (p, q, r) \in \mathbb{C}^3$, we get a signature
$\iny{\mathbf{G}}_{p,q,r}$ with 
\begin{itemize}
  \item $\iny{\mathbf{G}}_{p,q,r}(\db \db \db) =  (4 ap^2 + 3 b p r + 3 b p (b p + cr) + 3c(bp + cr)^2)p + 3 p^2 q$
  \item $\iny{\mathbf{G}}_{p,q,r}(\db \db \dg) = p^3$
  \item $\iny{\mathbf{G}}_{p,q,r}(\db \db \dr) = b(1 + c + 2c^2) p^3 + c(1 + 2c^2) p^2 r$
  \item $\iny{\mathbf{G}}_{p,q,r}(\db \dr \dr) = c^3 p^3$
\end{itemize}
and $0$ otherwise.
We see that $\iny{\mathbf{G}}_{p,q,r}$ is still \casecodz if $p \ne 0$.
Also, if $1 + 2c^2 \ne 0$, for any $p \ne 0$, we may choose $q, r$ such that $\iny{\mathbf{G}}_{p,q,r}(\db \db \db)$ and $\iny{\mathbf{G}}_{p,q,r}(\db \db \dr)$ values are arbitrary.
We show that $\iny{\mathbf{G}}$ can realize all signatures in the set 
\[
\spn\left(\mathbf{e}_1\teh, \quad \sym(\mathbf{e}_1, \mathbf{e}_1, \mathbf{e}_3), \quad \sym(\mathbf{e}_1, \mathbf{e}_1, \mathbf{e}_2) + c^{3} \sym(\mathbf{e}_1, \mathbf{e}_3, \mathbf{e}_3)\right) \,,
\]
i.e.\ a signature of the form $\iny{\mathbf{H}}_{x, y, z} = [x; y , z; 0, 0, yc^3; 0, 0, 0, 0]$.
\begin{proposition}\label{claim:casecodz-interpolation}
Let $\iny{\mathbf{G}} = [a; 1, b; 0, 0, c; 0, 0, 0, 0]$ such that $1 + 2 c^2 \ne 0$. 
Let $x, y, z \in \mathbb{C}$ be arbitrary and 
$\iny{\mathbf{H}}_{x, y, z} = [x; y , z; 0, 0, yc^3; 0, 0, 0, 0]$.
Then, $\holts(\X \mid \mathcal{G} \cup \{\iny{\mathbf{H}}_{x, y, z} \}) \le_T \holts(\X \mid \mathcal{G})$ by a gadget construction or a polynomial interpolation for any set of signatures $\mathcal{G}$ containing $\iny{\mathbf{G}}$.
\end{proposition}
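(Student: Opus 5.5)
We will realize $\iny{\mathbf{H}}_{x,y,z}$ by the one-layer gadget of \cref{fig:grow-one-grass}, falling back on interpolation only when that gadget cannot reach the required entries. By \cref{fig:grow-one-grass}, the gadget $\iny{\mathbf{G}}_{p,q,r}$ is built from $\iny{\mathbf{G}}$ and the unary $\mathbf{u}=(p,q,r)$ on the right side of $\holts(\X\mid\cdot)$. It has exactly the shape $[x'; y', z'; 0,0, c^3 p^3; 0,0,0,0]$ with $y' = p^3$. Its $\db\dr\dr$ entry is $c^3p^3 = c^3 y'$, so it automatically lies in the three-dimensional span in the statement, with $y = p^3$.

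\textbf{Case $y \neq 0$.} Take $p$ to be a cube root of $y$. The entry $\iny{\mathbf{G}}_{p,q,r}(\db\db\dr) = b(1+c+2c^2)p^3 + c(1+2c^2)p^2 r$ is affine in $r$. Its slope is $c(1+2c^2)p^2$, which is nonzero because $c\ne0$, $1+2c^2\ne 0$ and $p\ne0$. So $r$ can be chosen to make this entry equal $z$. Next, the entry $\iny{\mathbf{G}}_{p,q,r}(\db\db\db)$ is (already determined part) $+\,3p^2 q$ with $3p^2\ne0$, so $q$ can be chosen to make it equal $x$. The gadget then realizes $\iny{\mathbf{H}}_{x,y,z}$ exactly, and we obtain $\iny{\mathbf{H}}_{x,y,z}$ in $\holts(\X\mid\mathcal{G})$ by a direct gadget construction.

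\textbf{Case $y = 0$.} Here $\iny{\mathbf{H}}_{x,0,z} = x\mathbf{e}_1\teh + z\,\sym(\mathbf{e}_1,\mathbf{e}_1,\mathbf{e}_3)$, and the one-layer gadget can never produce a zero $\db\db\dg$ entry with $p\neq0$. We use polynomial interpolation. Fix $x,z$ and the target family $\iny{\mathbf{H}}_{x,t,z}$ for $t\in\mathbb{C}$. Take an instance $\Omega$ of $\holts(\X\mid\mathcal{G}\cup\{\iny{\mathbf{H}}_{x,0,z}\})$ with $m$ occurrences of $\iny{\mathbf{H}}_{x,0,z}$. Replacing them by $\iny{\mathbf{H}}_{x,t,z}$ gives a Holant value $P(t)$. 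Since each signature entry is linear in $t$ and signatures are multiplied together, $P(t)$ is a polynomial in $t$ of degree at most $m$.

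For $m+1$ distinct nonzero values $t_0,\ldots,t_m$, each $\iny{\mathbf{H}}_{x,t_j,z}$ is realizable by the case $y\ne0$. So each $P(t_j)$ is computable with the oracle $\holts(\X\mid\mathcal{G})$, and the instances have polynomial size since the gadget has constant size. Solving the Vandermonde system recovers the coefficients of $P$, and we output $P(0)$. This gives the reduction for all $(x,y,z)$.

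Two points need checking. First, the gadget really replaces the abstract signature within the bipartite setting. The three internal $\iny{\mathbf{G}}$-to-$\iny{\mathbf{G}}$ connections pass through $\X$, and the dangling edges end on $\iny{\mathbf{G}}$ vertices on the right side, so the gadget is a right-side signature. Second, the four displayed entries of $\iny{\mathbf{G}}_{p,q,r}$ must be correct. We take these from the computation before the statement.

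The main obstacle is that the gadget cannot produce $y=0$, which is why interpolation is needed. The argument there must confirm that $P$ is genuinely a polynomial in the single parameter $t$ and that infinitely many values of $t$ are available. Both hold because we can pick $t$ freely in $\mathbb{C}\setminus\{0\}$.
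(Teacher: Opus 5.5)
Your proof is correct and follows the paper's route. You use the layer gadget $\iny{\mathbf{G}}_{p,q,r}$ directly when $y \ne 0$, and Vandermonde interpolation in the $\db\db\dg$ parameter when $y = 0$; treating the Holant value as a polynomial of degree at most $m$ in $t$ is just a compact form of the paper's stratification by the counts $i,j,k,\ell$. Your use of $c \ne 0$ (to make the slope $c(1+2c^2)p^2$ nonzero) is not in the statement itself, but it is the standing \casecodz assumption of that subsection, and the paper's own proof relies on it implicitly as well.
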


\begin{proof}
If $y \ne 0$, then we can construct $\iny{\mathbf{H}}_{x, y, z}$ from $\iny{\mathbf{G}}$ by a direct gadget construction.
Otherwise, if $y = 0$, then consider a signature grid $\Omega$ such that $\iny{\mathbf{H}}_{x, 0, z}$ occurs $n$ times. 
We construct from $\Omega$ a sequence of instances $\Omega_s$ of $\holts(\X \mid \iny{\mathbf{G}}, \iny{\mathbf{H}}_{x, s, z})$ indexed by
  $s \in \mathbb{N}$  by replacing every occurrence of $\iny{\mathbf{H}}_{x, 0, z}$ in $\Omega$ with $\iny{\mathbf{H}}_{x, s, z}$. 
We stratify the assignments in $\Omega$ based on the assignments to the $n$ occurrences of the signature $\iny{\mathbf{H}}_{x, 0, z}$. 
We only need to consider the assignments that assign $i$ many times to $\db\db\db$ and $j$ many times to $\db\db\dr$, $k$ many times to $\db\db\dg$, and $\ell$ many times to  $\db\dr\dr$, including each of their permutations.
Let $c_{ijk\ell}$ be the sum over all such assignments of products of evaluation of all signatures in $\Omega$ except $\iny{\mathbf{H}}_{x, 0, z}$.
Then,
\[\hol_{\Omega} = \sum_{i + j + k + \ell = n} x^i z^j 0^k 0^\ell c_{ijk\ell} = \sum_{i + j = n} x^i z^j c_{ij00} \]
and
\[\hol_{\Omega_s} = \sum_{i + j + k + \ell = n} x^i z^j s^k (s c^3)^\ell c_{ijk\ell} = \sum_{i + j + k + \ell = n} x^i z^j s^{k + \ell} c^{3\ell} c_{ijkl} \,. \]
Let $d_{k\ell} = \sum_{i + j = n - (k + \ell)} x^i z^j c_{ijk\ell}$. 
We further define $e_{i} = \sum_{k + \ell = i} c^{3 \ell} d_{k \ell}$. 
Note that $\hol_{\Omega} = d_{00} = e_0$.
Then,
\[\hol_{\Omega_s} = \sum_{0 \le k + \ell \le n} s^{k + \ell} c^{3 \ell} d_{k \ell} = \sum_{i = 0}^n s^i e_i \]
which is a full rank Vandermonde system since we can choose $n+1$ distinct $s \in \mathbb{N}$.
\end{proof}

By attaching more layers of $\langle \iny{\mathbf{G}}, \mathbf{u}_i \rangle$ outwards as described in \cref{fig:grow-more-grass} for $\mathbf{u}_i = (1, *, *) \in \mathbb{C}^3$,
we can make signatures $\iny{\mathbf{G}}'$ with 
the value at $\db \dr \dr$  being $c' = c^3, c^5, c^7, \ldots$.
Then, connecting one more layer of $\langle \iny{\mathbf{G}}, \mathbf{u} \rangle$ for $\mathbf{u} = (p, q, r)$ on such $\iny{\mathbf{G}}'$ will give
the $\db \db \dr$ value being a polynomial with the coefficient of $r$ being $c p^2 (1 + 2 c c')$.
With $c' = c^{2k + 1}$ for $k \ge 0$, we see that there can be only one value of $k$ where $1 + 2 c c' = 0$.
Therefore, except for one possible $k_0$, we can realize any signature in the set
\begin{equation}
\label{eq:c1dz-gadeget-set}
  \bigcup_{k \ge 1, k \ne k_0} \spn\left(\mathbf{e}_1\teh, \quad \sym(\mathbf{e}_1, \mathbf{e}_1, \mathbf{e}_3), \quad \sym(\mathbf{e}_1, \mathbf{e}_1, \mathbf{e}_2) + c^{2k + 1} \sym(\mathbf{e}_1, \mathbf{e}_3, \mathbf{e}_3)\right)
\end{equation}
Note that if $c$ is not a root of unity, this is an infinite union of dimension $3$ subspaces in $\symspace{3}{3}$.

For any value of $c$,
we may realize $\sym(\mathbf{e}_1, \mathbf{e}_1, \mathbf{e}_3)$ on the right side. 
We can move it to the left by connecting $\X$, which becomes $\sym(\mathbf{e}_2, \mathbf{e}_2, \mathbf{e}_3)$, a signature supported on $\dg\dr$ as $[0, 1, 0, 0]$. 
We see that $\holts(\X  \mid \mathcal{F}, \iny{\mathbf{G}}) \ge_T \holbs([0,1,0,0] \mid \mathcal{F}\domres{\dg\dr})$.
\begin{remark}\label{rem:c-not-root-of-unity-interpolate-away}
    If $c$ is not a root of unity, then $c^{2k + 1}$ are all distinct. 
    By a similar interpolation argument as in \cref{claim:casecodz-interpolation}, we can make the values at $\db\db\dg$ and $\db\dr\dr$ independent of each other. 
    In other words, any signature in the following set is realizable on the right side in $\holts(\X \mid \iny{\mathbf{G}})$: 
    \[\spn\left(\mathbf{e}_1\teh, \quad \sym(\mathbf{e}_1, \mathbf{e}_1, \mathbf{e}_2), \quad \sym(\mathbf{e}_1, \mathbf{e}_1, \mathbf{e}_3), \quad \sym(\mathbf{e}_1, \mathbf{e}_3, \mathbf{e}_3)\right)  \, .\] 
    Then, using $\X$, we can get signatures of the form $[x, y, z, 0]$ for any $x, y, z \in \mathbb{C}$ that is supported on $\dg\dr$ on the left side. 
    By \cref{cor:bipartite-boolean-dichotomy-explicit},
    for any symmetric signature $\iny{\mathbf{H}}$, 
    $\iny{\mathbf{H}}\domres{\dg\dr}$ must be degenerate for $\holts(\X \mid \iny{\mathbf{G}}, \iny{\mathbf{H}}) \ge_T \holbs([x, y, z, 0] \mid \iny{\mathbf{H}}\domres{\dg\dr})$ to be not \sph.
\end{remark}

\paragraph{Un-realizability}
We can argue that any symmetric ternary signatures that can be realized in 
$\holts(\X  \mid  \iny{\mathbf{G}})$ on the right side by a connected graph
is of the form $[x; y, z; 0, 0, c^k y; 0, 0, 0, 0]$ for some $x, y, z \in \mathbb{C}$ and an odd $k$.
Let $\iny{\mathbf{H}}$ be a $\hol_3(\X \mid \iny{\mathbf{G}})$ connected gadget on the right side of arity $n$, not necessarily symmetric, that does not use unary signatures. 

\begin{figure}
\centering
\begin{tikzpicture}
  \node[mycirc, label=right:{$v$}] (g1) at (-2, 0) {};
\node[mysq] (x1) at (-1, 1) {};
  \node[mycirc] (g4) at (0, 2) {};
  \node[mysq] (x4) at (1, 1) {};
  \node[mycirc] (g3) at (2, 0) {};
  \node[mysq] (x2) at (1, -1) {};
  \node[mycirc] (g2) at (0, -2) {};
  \node[mysq] (x3) at (-1, -1) {};

  \draw[-latex] (g1) -- (x1) node [midway, above left] {$\db$};
  \draw[-latex] (x1) -- (g4) node [midway, above left] {$\dg$};
  \draw[-latex] (g4) -- (x4) node [midway, above right] {$\db$};
  \draw[-latex] (x4) -- (g3) node [midway, above right] {$\dg$};
  \draw[-latex] (g3) -- (x2) node [midway, below right] {$\db$};
  \draw[-latex] (x2) -- (g2) node [midway, below right] {$\dg$};
  \draw[-latex] (g2) -- (x3) node [midway, below left] {$\db$};
  \draw[-latex] (x3) -- (-1.5, -0.5);
  \draw[-latex] (g1) -- (-1.5, -0.5);

  \node[below left] at (-1.75, -0.25) {$\dr$ or $\db$};

  \node[below left] at (-1.25, -0.75) {$\dg$};

  \draw[-latex] (-4, 0) -- (g1) node[midway, above] {$\dr$ or $\dg$};
  \node[left] at (-4, 0) {$e$};
\end{tikzpicture}
\caption{A cycle runs into a conflict.}
\label{fig:cycle-degenerate}
\end{figure}

First, if $\iny{\mathbf{H}}$ is non-degenerate, then we claim that the underlying graph of $\iny{\mathbf{H}}$ must be a tree. 
For a contradiction, assume  $\iny{\mathbf{H}}$ has a cycle $C$.  This cycle is reachable from an external edge.
Let $v \in C$ and $e$ be an edge connected to $v$ and $e \notin C$.
If $e$ is assigned  $\dr$, then  the two other edges of $v$ inside $C$ must 
either get $\db$ and $\dr$, or $\db$ and $\db$. Pick  an edge  assigned $\db$.
Then on the cycle $C$, the color $\db$ is changed to $\dg$ by $\X$, which must become  $\db$ again
going out of any vertex labeled $\iny{\mathbf{G}}$. Thus,  the color $\db$ gets propagated through $C$ by changing between $\db$ and $\dg$ via $\X$. This implies that when it returns to $v$ on $C$ it is assigned $\dg$, a contradiction.
If $e$ gets assigned to $\dg$, then the other two edges of $v$ on $C$ must be both assigned $\db$, which would result in 
the same conflict as before.
Therefore, any cycle $C$ must have all its incident edges  not on $C$ assigned to $\db$. 
(The argument is described in \cref{fig:cycle-degenerate}, where the circles are $\iny{\mathbf{G}}$ and the squares are $\X$.)
Then, if we follow a path from $C$ to any external edge, it must alternate between $\db$ and $\dg$ via $\X$ and 
vertices labeled by $\iny{\mathbf{G}}$, and thus $\db$ is propagated uniquely and ends up being $\db$ at the external edge.
Thus, the only nonzero assignment on the external edges of $\iny{\mathbf{H}}$ is the all $\db$ assignment, which shows that $\iny{\mathbf{H}}$ is degenerate.

\begin{figure}
\centering
\begin{subfigure}[b]{0.46\textwidth}
   \centering 
\begin{tikzpicture}[scale=0.7]
  \node[mycirc] (g0) at (0, 3) {};

  \node[mycirc] (g1) at (0, 0) {};
  \node[mycirc] (g2) at (-3, 0) {};
  \node[mycirc] (g3) at (3, 0) {};

  \node[mysq] (x01) at (0, 1.5) {};
  \node[mysq] (x12) at (-1.5, 0) {};
  \node[mysq] (x13) at (1.5, 0) {};

  \node[above] at (0, 4.5) {$e_1$};
  \node[left] at (-4.5, 0) {$e_2$};

  \draw[-latex] (0, 4.5) -- (g0) node[midway, left] {$\dg$};
  \draw[-latex] (g0) -- (x01) node[midway, left] {$\db$};
  \draw[-latex] (x01) -- (g1) node[midway, left] {$\dg$};
  \draw[-latex] (g1) -- (x12) node[midway, below] {$\db$};
  \draw[-latex] (x12) -- (g2) node[midway, below] {$\dg$};
  \draw[-latex] (g2) -- (-4.5, 0) node[midway, below] {$\db$};
    \draw[-latex] (g1) -- (x13) node[midway, below] {$\db$};
  \draw[-latex] (x13) -- (g3) node[midway, below] {$\dg$};
  \draw[-latex] (g3) -- (4.5, 0) node[midway, below] {$\db$};
  
\end{tikzpicture}
\caption{Can have at most one $\dg$.}
\label{fig:supp-one-g-casecodz}
\end{subfigure}
\hfill
\begin{subfigure}[b]{0.46\textwidth}
\centering
\begin{tikzpicture}[scale=0.7]

  \node[mycirc] (g0) at (0, 3) {};

  \node[mycirc, label=below:{$v$}] (g1) at (0, 0) {};
  \node[mycirc] (g2) at (-3, 0) {};
  \node[mycirc] (g3) at (3, 0) {};

  \node[mysq] (x01) at (0, 1.5) {};
  \node[mysq] (x12) at (-1.5, 0) {};
  \node[mysq] (x13) at (1.5, 0) {};

    \node[above] at (0, 4.5) {$e_1$};
  \node[left] at (-4.5, 0) {$e_2$};
  \node[right] at (4.5, 0) {$e_3$};

  \draw[-latex] (0, 4.5) -- (g0) node[midway, left] {$\dr$};
  \draw[-latex] (g0) -- (x01) node[midway, left] {$\dr$};
  \draw[-latex] (x01) -- (g1) node[midway, left] {$\dr$};
  \draw[latex-] (g1) -- (x12) node[midway, below] {$\dr$};
  \draw[latex-] (x12) -- (g2) node[midway, below] {$\dr$};
  \draw[latex-] (g2) -- (-4.5, 0) node[midway, below] {$\dr$};
    \draw[latex-] (g1) -- (x13) node[midway, below] {$\dr$};
  \draw[latex-] (x13) -- (g3) node[midway, below] {$\dr$};
  \draw[latex-] (g3) -- (4.5, 0) node[midway, below] {$\dr$};
  
\end{tikzpicture}
\caption{Can have at most two $\dr$.}
\label{fig:supp-two-r-casecodz}
\end{subfigure}
\caption{Support of a \casecodz gadget.}
\end{figure}
We claim that $\iny{\mathbf{H}}$ is zero on any input outside $\db^n$, $\db^{n-1} \dg$, $\db^{n-1} \dr$, $\db^{n-2} \dr \dr$ and their permutations.
Suppose $e_1$ and $e_2$ are two external edges of $\iny{\mathbf{H}}$.
Let $P_{12}$ be a path connecting $e_1$ and $e_2$.
If $e_1$ gets $\dg$, then along the path $P_{12}$, we must alternate between $\dg \db \dg \db \cdots$, so it must end up at $e_2$ as $\db$ as shown in \cref{fig:supp-one-g-casecodz}.
Therefore, $\iny{\mathbf{H}}$ is $0$ when more than one of its inputs are $\dg$ or it takes both $\dg, \dr$.
If the arity $n$ of $\iny{\mathbf{H}}$ is 2, then we are done.  
Suppose $n \ge 3$.
Now, let $e_3$ be another external edge and let $P_{13}$ be a path connecting $e_1$ and $e_3$.
Let $v$ be the first vertex where the paths $P_{12}$ and $P_{13}$ diverge starting at $e_1$.
Suppose $e_1, e_2, e_3$ all get $\dr$.
This means that $P_{12}$ and $P_{13}$ must be $\dr$ all along the way, since as soon as we use $\db$ to go out of some $\iny{\mathbf{G}}$ vertex, it cannot escape the subsequent alternation of $\db, \dg$.
However, this means that at $v$, all its three edges must be assigned $\dr$, which results in $\iny{\mathbf{G}}$ at $v$ being $0$, as shown in \cref{fig:supp-two-r-casecodz}.
Therefore, $\iny{\mathbf{H}}$ has value $0$ when more than two of its inputs are $\dr$.
This proves the claim.

Making a $\holts(\X \mid \iny{\mathbf{G}})$ gadget can be viewed as adding unary signatures on a $\hol(\X \mid \iny{\mathbf{G}})$ gadget,
and we see that adding unary signatures cannot increase the number of $\dg$ or $\dr$ in the support, meaning that any $\holts(\X \mid \iny{\mathbf{G}})$ gadget
must have the same property.

\begin{figure}
\centering
\begin{tikzpicture}
  \node[mycirc] (g1) at (0, 0) {};
  \node[mysq] (x12) at (1.5, 0) {};
  \node[mycirc] (g2) at (3, 0) {};
  \node[mysq] (x23) at (4.5, 0) {};
  \node[mycirc] (g3) at (6, 0) {};
  \node[mysq] (x34) at (7.5, 0) {};
  \node[mycirc] (g4) at (9, 0) {};

   \node[draw, circle, label=above:{$\mathbf{u}_2$}] (u1) at (3, 1.5) {};

   \node[mysq] (x35) at (6, 1.5) {};
   \node[mycirc] (g5) at (6, 3) {};
   \node[draw, circle, label=right:{$\mathbf{u}_3$}] (u2) at (7.5, 3) {};

   \draw (-1.5, 0) -- (g1) node [midway, below] {$\dr$};
   \draw (g1) -- (x12) node [midway, below] {$\dr$};
   \draw (x12) -- (g2) node [midway, below] {$\dr$};
   \draw (g2) -- (x23) node [midway, below] {$\dr$};
   \draw (x23) -- (g3) node [midway, below] {$\dr$};
   \draw (g3) -- (x34) node [midway, below] {$\dr$};
   \draw (x34) -- (g4) node [midway, below] {$\dr$};
   \draw (g4) -- (10.5, 0) node [midway, below] {$\dr$};

   \draw[-latex] (g2) -- (u1) node[midway, left] {$\db$};
   \draw[-latex] (g3) -- (x35) node[midway, left] {$\db$};
   \draw[-latex] (x35) -- (g5) node[midway, left] {$\dg$};
   \draw[-latex] (g5) -- (u2) node[midway, below] {$\db$};
   \draw[-latex] (g5) -- (6, 4.5) node[midway, left] {$\db$};

   \node[draw, circle, label=above:{$\mathbf{u}_1$}] (u3) at (0, 1.5) {};
   \node[draw, circle, label=above:{$\mathbf{u}_4$}] (u4) at (9, 1.5) {};

   \draw[-latex] (g1) -- (u3) node[midway, left] {$\db$};
   \draw[-latex] (g4) -- (u4) node[midway, left] {$\db$};

   \node[left] at (-1.5, 0) {$e_1$};
   \node[right] at (10.5, 0) {$e_2$};

\end{tikzpicture}
\caption{An assignment with two $\dr$.}
\label{fig:two-dr-and-others}
\end{figure}
Now, suppose $\iny{\mathbf{H}}$ is a non-degenerate connected gadget of arity $n$, so we may assume its underlying graph is a tree.
Let $\mathbf{u}_1, \ldots, \mathbf{u}_m$ be the unary signatures used in $\iny{\mathbf{H}}$.
We claim that the value of $\iny{\mathbf{H}}$ at $\db^{n-2} \dr \dr$ or any of its permutations must be $c^k \mathbf{u}_1(\db) \mathbf{u}_2(\db) \cdots \mathbf{u}_m(\db)$, where $c= \iny{\mathbf{G}}(\db \dr \dr)$
and  $k$ may depend on the permutation of $\db^{n-2} \dr \dr$.
Let $e_1, e_2$ be two external edges of $\iny{\mathbf{H}}$ assigned $\dr$.
Then, the unique path $P$ between $e_1, e_2$ must all take values $\dr$.
And all the edges going out from the vertices in $P$ must take value $\db$, which then uniquely propagate alternating between $\db$ and $\dg$.
So, all values of $\iny{\mathbf{G}}$ on $P$ is $\iny{\mathbf{G}}(\db \dr \dr) = c$, all others are $\iny{\mathbf{G}}(\db \db \dg) = 1$, while all edges to the unary signatures get $\db$. 
\cref{fig:two-dr-and-others} shows the argument in a diagram.
On the other hand, the value of $\iny{\mathbf{H}}$ on $\db^{n-1} \dg$ must be $\mathbf{u}_1(\db) \mathbf{u}_2(\db) \cdots \mathbf{u}_m(\db)$ by a similar argument.

If $\iny{\mathbf{H}}$ realizes a ternary symmetric function, 
then it must be of the form $[x; y, z; 0, 0, c^k y; 0, 0, 0, 0]$ for some $x, y, z \in \mathbb{C}$ and $k$.
We  further claim that $k$ must be odd.
Let $e_1, e_2, e_3$ be the external edges of $\iny{\mathbf{H}}$.
We may repeatedly reduce all unary signatures, until we get $\iny{\mathbf{H}}$ consisting of three paths $P_1, P_2, P_3$ from a central vertex $v$ to $e_1, e_2, e_3$ respectively, and a unary signature attached to each vertex in the paths $P_i$.
Let $\lambda = \mathbf{u}_1(\db) \cdots \mathbf{u}_m(\db)$ where $\mathbf{u}_i$ are the unary signatures in this reduced gadget.
If $\lambda = 0$, then we are done since $\iny{\mathbf{H}}(\db \db \dg) = \iny{\mathbf{H}}(\db \dr \dr) = 0$.
Let $\ell_i = \lvert P_i \rvert$.
Then, $\iny{\mathbf{H}}(\dr \dr \db) = c^{\ell_1 + \ell_2 + 1} \lambda$.
If $\ell_1  \ne\ell_2 $, then $\iny{\mathbf{H}}(\dr \db \dr) = c^{\ell_1 + \ell_3 + 1} \lambda \ne c^{\ell_2 + \ell_3 + 1}\lambda = \iny{\mathbf{H}}(\db \dr \dr)$, contradicting the symmetry of $\iny{\mathbf{H}}$.  Similarly $\ell_2 = \ell_3$.
Therefore $\ell_1 = \ell_2 = \ell_3$, showing that $\iny{\mathbf{H}}(\db \dr \dr) =  c^{2k + 1} \iny{\mathbf{H}}(\db \db \dg) $.

\subsubsection{Characterization of \texorpdfstring{\casecogen}{C1gen}}\label{subsec:case-c-gen-analysis}
\newcommand{\nonzerovalue}{+}
Below we use the symbol $\nonzerovalue$  to mean some nonzero value and $*$ to mean an arbitrary value.
So far, we have considered the cases where
\begin{enumerate}
    \item $(b, c, d) = (0, 0, 0)$: \casecobg
    \item $(b, c, d) = (\nonzerovalue, 0, 0)$: \casecobg after a $\so(3, \mathbb{C}, \X)$ transform
    \item $(b, c, d) = (0, 0, \nonzerovalue)$: \casecosep
    \item $(b, c, d) = (*, \nonzerovalue, 0)$: \casecodz
\end{enumerate}
When $d = 0$, we have considered all cases of $(b, c) \ne (0, 0)$ by $c \ne 0$ (case 4) and $c = 0$ (cases 1 and 2). 
When $d \ne 0$, we have already  considered the case when $(b, c) = (0, 0)$.
So the only remaining case is when $(b, c) \ne (0, 0)$ and $d \ne 0$. 
We will see that if $c^2 - c - bd = 0$, then such a signature is $\so(3, \mathbb{C}, \X)$ transformable to \casecosep. 
Therefore, to define \casecogen ($\mathrm{g}$ stands for \textit{generic}), we require that $c^2 - c - bd \ne 0$. 
We will see that this condition significantly impacts the behavior of the signature.
\cref{cor:classification-case-c-mutually-exclusive} shows in an alternative way that  \casecobg, \casecosep, \casecodz, and \casecogen exhaust the \caseco type.

Let $\iny{\mathbf{G}} = [a; 1, b; 0, 0, c; 0, 0, 0, d]$ with $d \ne 0$ and $c^2 - c - bd \ne 0$.
Using $\iny{\mathbf{G}}$, we show that we may realize a set of signatures $\mathcal{G}$ on the right side
of $\holts(\X  \mid  \iny{\mathbf{G}})$ that are supported on $\db\dr$, such that $\mathcal{G}$ is \textit{generic} enough.
Then, for another signature $\iny{\mathbf{H}}$, we have $\holts(\X  \mid  \iny{\mathbf{G}}, \iny{\mathbf{H}}) \ge_T \holts(\X \mathcal{G}  \mid  \mathcal{H})$.
Here, $\X \mathcal{G} $ denotes the set of $\mathcal{G}$ moved to the left side by attaching $\X$ on all ends,
and $\mathcal{H}$ is the set of signatures realizable on the right side in the setting of $\holts(\X \mid \iny{\mathbf{H}})$.
Since $\X \mathcal{G}$ consists of signatures supported on $\dg\dr$, we can analyze $\holbs( (\X \mathcal{G})\domres{\dg \dr}  \mid  \mathcal{H} \domres{\dg\dr})$.
If $\X \mathcal{G}$ is sufficiently generic, then we can argue from the Boolean domain dichotomy that the signatures on the right side must be degenerate to escape \sphness.
This will allow us to restrict the possible $\iny{\mathbf{H}}$ that can escape \sphness. 
\begin{figure}
\centering
\begin{subfigure}[b]{0.35\textwidth}
  \centering
\begin{tikzpicture}
  \node[mycirc] (g1) at (0, 0) {};

  \draw (g1) -- (300:1cm) node[mysq] () {} -- ++(300:1cm) node[mycirc] (g2) {};
  \draw (g1) -- (240:1cm) node[mysq] () {} -- ++(240:1cm) node[mycirc] (g3) {};
  \draw (g3) -- ++ (0:1cm) node[mysq] () {} -- ++(0:1cm);

  \draw (g2) -- ++(330:1cm);
  \draw (g3) -- ++(210:1cm);
  \draw (g1) -- ++(90:1cm);
\end{tikzpicture}
\caption{Gadget $\iny{\mathbf{T}}$.}
\label{fig:triangle-gadget}
\end{subfigure}
\begin{subfigure}[b]{0.6\textwidth}
\centering
\tersig{0.7}{{2 + c^3} 0 {c^2 d} 0 0 {c d^2} 0 0 0 d^3}
\caption{Values of $\iny{\mathbf{T}}$.}
\label{fig:triange-with-x}
\end{subfigure}
\caption{Triangle gadget.}
\end{figure}

Consider the gadget in \cref{fig:triangle-gadget} where the square vertices are $\X$.
One possible way to make the gadget $\iny{\mathbf{T}}$ is by using $\iny{\mathbf{G}}$ for the circles.
Another possibility is to make the gadget $\iny{\mathbf{T}}_{p,q,r}$ by using $\iny{\mathbf{G}}_{p,q,r}$ from \cref{fig:grow-one-grass} for the circles.
The values of $\iny{\mathbf{T}}$ are shown in \cref{fig:triange-with-x}.
We see that $\iny{\mathbf{T}} = 2 \mathbf{e}_1\teh + (c, 0, d)\teh$.
Moreover, we see that $\iny{\mathbf{T}}_{1,q,r} =  2 \mathbf{e}_1\teh + (c + dr)^6 (x, 0, y)\teh$ where
$x = c + bd + cdr$ and $y = d(c + dr)$.
For $r$ such that $c + dr \ne 0$, we see that
$x/y = \frac{c + bd + cdr}{d(c+dr)}$.
In particular, we can choose $r$ such that $x/y \ne c/d$,
unless $x/y = c/d$ for all $r$, which can happen if $d(c + db + dcr) = c(dc + d^2 r)$, which only holds if $d(c^2 -c - bd) = 0$.
This gives the genericity we desired.
We can also view the possible set of $x/y$ as the image of a fractional linear transformation $r \mapsto \frac{c + db + dcr}{dc + d^2 r}$.
The determinant of the matrix corresponding to this map is exactly $d^2(c^2 -c - bd)$.
If it is $0$, then the map is constant. Otherwise, its image is $\mathbb{C} \setminus \{c/d\}$.

Alternatively,
if we make a triangle gadget $\iny{\mathbf{T}}_{1, y, z}$ with $\iny{\mathbf{G}}$ for the circles and $ \mathbf{X} \langle \iny{\mathbf{G}}, (1, y, z) \rangle \mathbf{X}$ for the squares,
we get
\[
  \iny{\mathbf{T}}_{1, y, z} = 2 \mathbf{e}_1\teh + (c + dz)^3 \begin{bmatrix}c \\ 0 \\ d\end{bmatrix}\teh \, .
\]
In general, if we make a triangle gadget $\iny{\mathbf{T}}_{1,q,r; 1, y, z}$ with $\iny{\mathbf{G}}_{1,q,r}$ for the circles and $\mathbf{X} \langle \iny{\mathbf{G}}, (1, y, z) \rangle \mathbf{X}$ for the squares,
we get a signature
\[
  \iny{\mathbf{T}}_{1, q, r; 1, y, z} = 2 \mathbf{e}_1\teh + (c + dz)^3 (c + dr)^6 \begin{bmatrix}c + bd + cdr \\ 0 \\ d(c + dr) \end{bmatrix}\teh \, .
\]
By combining the analysis from above and since $z$ is free, we can realize any signature of the form
$2 \mathbf{e}_1\teh + \lambda (u, 0, v)\teh$ for any $\lambda, u, v \in \mathbb{C}$.
By a polynomial interpolation similar to \cref{claim:casecodz-interpolation}, this allows us to realize any signature in the set
\[
  \bigcup_{u, v \in \mathbb{C}} \spn\left(\mathbf{e}_1\teh, \begin{bmatrix}
  u \\ 0 \\ v
  \end{bmatrix}\teh\right) \, .
\]

The criteria  $c^2 -c - bd = 0$ is fascinating.
It may seem to be problematic, since   adding more layers of unary signatures to the
gadget construction cannot escape that relation.
In other words, $\iny{\mathbf{G}}_{1, q,r}$ always has the property that $ \iny{\mathbf{G}}_{1, q,r}(\db \dr \dr)^2 - \iny{\mathbf{G}}_{1, q,r}(\db \dr \dr)  - \iny{\mathbf{G}}_{1, q,r}(\db \db \dr) \iny{\mathbf{G}}_{1, q,r}(\dr \dr \dr) = 0$, so repeating this construction cannot escape that algebraic relation.
For example, we can have $c = \frac{1 - \sqrt{5}}{2}$ and $b, d = 1$.
Also,
we 
are unable to realize domain restriction to $\db \dg$ because 
the $\db \db \dr$ part cannot be made to be $0$.
But miraculously, 
we can transform such cases.
Suppose $c^2 -c - bd = 0$ for $\iny{\mathbf{G}}$.
Let 
$\iny{A} = 
\begin{bsmallmatrix}
  1 & -x^2/2 & x \\ 0 & 1 & 0 \\ 0 & -x & 1
\end{bsmallmatrix}
 \in \so(3, \mathbb{C}, \X)$, 
with $x = -c/d$.
Let $\iny{\mathbf{G}}' = \iny{A} \iny{\mathbf{G}}$. Then 
\begin{itemize}
  \item $\iny{\mathbf{G}}'(\db \db \db) = *$
  \item $\iny{\mathbf{G}}'(\db \db \dg) = 1$ 
  \item $\iny{\mathbf{G}}'(\db \db \dr) = (c - c^2 + bd)/d$ 
  \item $\iny{\mathbf{G}}'(\db \dr \dr) = 0$ 
  \item $\iny{\mathbf{G}}'(\dr \dr \dr) = d$
\end{itemize}
and $0$ otherwise.
So, if $c^2 -c - bd = 0$, $\iny{\mathbf{G}}'$ is \sepbg.
Therefore in this case, $\mathbf{G}$ is actually \casecosep. 
An alternate justification is given in \cref{sec:case-c-spin-rep-calc}.

Therefore, a \casecogen signature must have $c^2 -c - bd \ne 0$.
Let $\iny{\mathbf{G}}$ be such.
The values of $\iny{\mathbf{G}}_{p, q,r}$ are
\begin{itemize}
  \item $\iny{\mathbf{G}}_{p, q, r}(\db \db \db) = *$ 
  \item $\iny{\mathbf{G}}_{p, q, r}(\db \db \dg) = p^3$
  \item $\iny{\mathbf{G}}_{p, q, r}(\db \db \dr) = p^2(bp + cr) + bp^2(cp + dr) + 2cp (bp + cr) (cp + dr) + d(bp + cr)^2 (cp + dr)$
  \item $\iny{\mathbf{G}}_{p, q, r}(\db \dr \dr) = cp (cp + dr)^2 + d (bp + cr)(cp + dr)^2$
  \item $\iny{\mathbf{G}}_{p, q, r}(\dr \dr \dr) = d(cp + dr)^3$
\end{itemize}
If $(b, c)$ and $(c, d)$ are linearly dependent, then we may choose $p = 1$ and $r = -c/d$ which will make $\iny{\mathbf{G}}_{p,q,r}$ supported on $\db \dg$ with the signature being $[*,1,0,0]$.
If $(b, c)$ and $(c, d)$ are linearly independent, so $c^2 - bd \ne 0$,
then we may choose
$r = \frac{(-c^3 + bd + bcd)p}{d(c^2 -c - bd)}$ 
which makes $\iny{\mathbf{G}}_{p,q,r}(\db\dr\dr)^2 - \iny{\mathbf{G}}_{p,q,r}(\db \db \dr) \iny{\mathbf{G}}_{p,q,r}(\dr \dr \dr) = 0$, with $\iny{\mathbf{G}}_{p,q,r}(\dr\dr\dr) = \frac{(bd - c^2)p}{c^2 -c - bd} \ne 0$.
So, by growing one more layer of unary signatures, we can realize a signature supported on $\db \dg$ of the form $[*,1,0,0]$, which is a \casecobg signature.

\subsection{Type \texorpdfstring{\casec}{C}, \texorpdfstring{$\so(3, \mathbb{C}, \X)$}{so3cx}, and  Spin Representation
}
\label{sec:case-c-spin-rep-calc}

Recall that $\mathbf{F}$ is a \casec signature if and only if there exists some $T \in \so(3, \mathbb{C})$ such that $\Yinv T \mathbf{F} = [x; \lambda, y; 0, 0, z; 0, 0, 0, w] \in \casecsupp$, 
or equivalently, if there exists some $\iny{T} \in \so(3, \mathbb{C}, \X)$ such that 
$\iny{T} \Yinv \mathbf{F}  \in \casecsupp$. 
We gave the definition of \casecz, \casecobg, \casecosep, \casecodz, and \casecogen as 
certain numerical conditions on $x, y, z, w, \lambda$.
We recall the definition of each type in \cref{fig:each-casec-type}.
It is conceivable that if for some other $S \in \so(3, \mathbb{C})$, it happens that $\Yinv S \mathbf{F} = [x'; \lambda', y'; 0, 0, z'; 0, 0, 0, w'] \in \casecsupp$, then
$x', y', z', w', \lambda'$ may take a different form.
In this section we characterize to what extent that can happen. 
It turns out that the spin representation of $\speciallinear(2, \mathbb{C})$ parameterizing $\so(3, \mathbb{C}, \X)$ simplifies the analysis.

\begin{figure}
  \centering
  \begin{subfigure}[b]{0.4\textwidth}
\centering
\tersig{0.7}{x 0 y 0 0 z 0 0 0 w}
\caption{\casecz}
\end{subfigure}
\begin{subfigure}[b]{0.4\textwidth}
\centering
\tersig{0.7}{x {\lambda} 0 0 0 0 0 0 0 0}
\caption{\casecobg: $\lambda \ne 0$}
\end{subfigure} \\
\hfill
\\
  \begin{subfigure}[b]{0.4\textwidth}
\centering
\tersig{0.7}{x {\lambda} 0 0 0 0 0 0 0 w}
\caption{\casecosep: $\lambda, w \ne 0$}
\end{subfigure}
\begin{subfigure}[b]{0.4\textwidth}
\centering
\tersig{0.7}{x {\lambda} y 0 0 z 0 0 0 0}
\caption{\casecodz: $\lambda, z \ne 0$}
\end{subfigure}
\\
  \begin{subfigure}[b]{0.4\textwidth}
\centering
\tersig{0.7}{x {\lambda} y 0 0 z 0 0 0 w}
\caption{\casecogen: $\lambda, w \ne 0$ and $z^2 - z \lambda - yw \ne 0$}
\end{subfigure}
\caption{The defining forms of \casec types.}
\label{fig:each-casec-type}
\end{figure}

\begin{figure}
  \centering
  \begin{subfigure}[b]{0.4\textwidth}
\centering
\tersig{0.7}{x 0 y 0 0 z 0 0 0 w}
\caption{\casecz}
\end{subfigure}
\begin{subfigure}[b]{0.4\textwidth}
\centering
\tersig{0.7}{x {\lambda} y 0 0 0 0 0 0 0}
\caption{\casecobg: $\lambda \ne 0$}
\end{subfigure} \\
\hfill
\\
  \begin{subfigure}[b]{0.4\textwidth}
\centering
\tersig{0.7}{x {\lambda} y 0 0 z 0 0 0 w}
\caption{\casecosep: $\lambda, w \ne 0$ and $z^2 - z \lambda - y w = 0$}
\end{subfigure}
\begin{subfigure}[b]{0.4\textwidth}
\centering
\tersig{0.7}{x {\lambda} y 0 0 z 0 0 0 0}
\caption{\casecodz: $\lambda, z \ne 0$}
\end{subfigure}
\\
  \begin{subfigure}[b]{0.4\textwidth}
\centering
\tersig{0.7}{x {\lambda} y 0 0 z 0 0 0 w}
\caption{\casecogen: $\lambda, w \ne 0$ and $z^2 - z \lambda - yw \ne 0$}
\end{subfigure}
\caption{\casec types after any $\Yinv T$ transform putting it in $\casecsupp$.}
\label{fig:each-casec-type-just-in-C}
\end{figure}

Let $\mathbf{F}$ be a signature such that $\Yinv T \mathbf{F} \in \casecsupp$ for some $T \in \so(3, \mathbb{C})$. 
We want to analyze what can happen when $\Yinv S \mathbf{F} \in \casecsupp$ as well for some other $S \in \so(3, \mathbb{C})$.
We can equivalently start with $\mathbf{F}'$ such that $\Yinv \mathbf{F}' \in \casecsupp$ and analyze what happens when $\Yinv T' \mathbf{F}' \in \casecsupp$ for some $T' \in \so(3, \mathbb{C})$ by renaming $\mathbf{F}' = T \mathbf{F}$ and $T' = S T^{-1}$.

We show the follwing in this section (see \cref{fig:each-casec-type-just-in-C}).
\begin{enumerate}
    \item If $\mathbf{F}$ is  \casecz,
    such that $\Yinv \mathbf{F} \in \casecsupp$, then $\Yinv \mathbf{F}$ is supported on $\db\dr$.
    \item If $\mathbf{F}$ is \casecobg such that $\Yinv \mathbf{F} \in \casecsupp$, then $\Yinv \mathbf{F} = [x; \lambda, y; 0, 0, 0; 0, 0, 0, 0]$ for $\lambda \ne 0$.
    \item If $\mathbf{F}$ is \casecosep such that $\Yinv \mathbf{F} \in \casecsupp$, then $\Yinv \mathbf{F} = [x; \lambda, y; 0, 0, z; 0, 0, 0, w]$ for $\lambda, w \ne 0$ and $z^2 - z \lambda - y w = 0$.
    \item If $\mathbf{F}$ is \casecodz such that $\Yinv \mathbf{F} \in \casecsupp$, then $\Yinv \mathbf{F} = [x; \lambda, y; 0, 0, z; 0, 0, 0, 0]$ for $\lambda, z \ne 0$.
    \item If $\mathbf{F}$ is \casecogen such that $\Yinv \mathbf{F} \in \casecsupp$, then $\Yinv \mathbf{F} = [x; \lambda, y; 0, 0, z; 0, 0, 0, w]$ for $\lambda, w \ne 0$ and $z^2 - z \lambda - y w \ne 0$.
\end{enumerate}

For \casecz signatures, it is easy to show 
that $(\Yinv T \mathbf{F})(\db\db\dg) = 0$ whenever $\Yinv T \mathbf{F} \in \casecsupp$ for any $T \in \so(3, \mathbb{C})$,
as shown in \cref{prop:if-casec0-in-c-then-supp-br}.

\begin{proposition}\label{prop:if-casec0-in-c-then-supp-br}
    Let $\mathbf{F}$ be a nondegenerate \casecz signature such that $\Yinv \mathbf{F} \in \casecsupp$.
    Then, $\Yinv \mathbf{F}$ is supported on $\db\dr$.
\end{proposition}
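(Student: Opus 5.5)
Write $\iny{\mathbf{F}} = \Yinv \mathbf{F} = [x; \lambda, y; 0, 0, z; 0, 0, 0, w] \in \casecsupp$. The plan is to show $\lambda = 0$. Once that holds, every input containing $\dg$ has value zero, since in $\casecsupp$ the only such input is $\db\db\dg$ (and its permutations), with value $\lambda$. So $\iny{\mathbf{F}}$ would be supported on $\db\dr$.

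The key input is that annihilators transform predictably. Since $\mathbf{F}$ is \casecz, there is a nonzero isotropic $\mathbf{u}$ with $\ann(\mathbf{F}) = \spn(\mathbf{u})$, and by \cref{prop:two-annihilator-degenerate} this annihilator is one-dimensional because $\mathbf{F}$ is nondegenerate. Under $\iny{\mathbf{F}} = \Yinv \mathbf{F}$ we have $\langle \iny{\mathbf{F}}, \mathbf{v} \rangle = \Yinv\langle \mathbf{F}, (\Yinv)\transpose \mathbf{v} \rangle$ on the remaining two legs. Hence $\ann(\iny{\mathbf{F}}) = \Y\transpose \ann(\mathbf{F})$ is spanned by $\mathbf{v} = \Y\transpose \mathbf{u}$. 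Isotropy transports as
\[
\mathbf{v}\transpose \X^{-1} \mathbf{v} = \mathbf{u}\transpose \Y \X^{-1} \Y\transpose \mathbf{u} = \mathbf{u}\transpose \mathbf{u} = 0,
\]
using $\X = \Y\transpose\Y$. Writing $\mathbf{v} = (p, q, r)$, and noting $\X^{-1} = \X$, this condition reads $2pq + r^2 = 0$.

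Next I compute $\ann(\iny{\mathbf{F}})$ directly from the support pattern, using the linear map in the proof of \cref{prop:nonzero-annihilator-types}. The condition $\langle \iny{\mathbf{F}}, (p,q,r)\rangle = 0$ gives one equation for each binary entry:
\begin{align*}
\db\db:&\quad xp + \lambda q + y r = 0,\\
\db\dg:&\quad \lambda p = 0,\\
\db\dr:&\quad y p + z r = 0,\\
\dr\dr:&\quad z p + w r = 0,
\end{align*}
while the entries $\dg\dg$ and $\dg\dr$ vanish identically. Suppose for contradiction that $\lambda \neq 0$. Then $p = 0$, and isotropy forces $r^2 = -2pq = 0$, so $r = 0$. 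The $\db\db$ equation then becomes $\lambda q = 0$, giving $q = 0$. Thus $\mathbf{v} = 0$, contradicting that $\iny{\mathbf{F}}$ has a nonzero annihilator. Therefore $\lambda = 0$, and the claim follows.

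The only real obstacle is getting the transformation rule for annihilators and isotropy under the non-orthogonal $\Yinv$ right. Concretely, I need the correct dual map $\Y\transpose$ and the bilinear form $\X^{-1} = \X$, together with the matching check that appears just before \cref{fig:casec0-ternary}, where $\bzp$ corresponds to $\mathbf{e}_2$. If that bookkeeping is in doubt, I would verify it on the canonical case: $\mathbf{u} = \bzp$ maps to $\mathbf{v} = \Y\transpose \bzp = \mathbf{e}_2$, which satisfies $2pq + r^2 = 0$. The rest is the short linear-algebra case analysis above.
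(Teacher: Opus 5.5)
Your proof is correct and follows essentially the paper's argument: you transport the isotropic annihilator to $\Y\transpose\mathbf{u}$, use the $\db\dg$ entry to force its first coordinate to vanish, and use $2pq+r^2=0$ to kill the third. The only difference is cosmetic. The paper concludes $\ann(\Yinv\mathbf{F})=\spn(\mathbf{e}_2)$ and reads off the $\db\dr$ support, while you use the $\db\db$ entry $\lambda q=0$ to force $\mathbf{v}=0$; both amount to the same fact.
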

\begin{proof}
By \cref{prop:annihilating-vector-under-Y}, 
$\ann(\Yinv \mathbf{F}) = \spn(\mathbf{u}')$ for $\mathbf{u}' = (x, y, z)$ such that $2 x y + z^2 = 0$.
Suppose $\Yinv \mathbf{F}$ is not supported on $\db\dr$.
Then, $\Yinv \mathbf{F} \sim [a; 1, b; 0, 0, c; 0, 0, 0, d]$ for some $a, b, c, d \in \mathbb{C}$.
For $\langle \Yinv \mathbf{F}, \mathbf{u}' \rangle = 0$, we must have $1 \cdot x + 0 \cdot y + 0 \cdot z = 0$, which means $x = 0$.
Then, $z = 0$ as well, which means $\mathbf{u}' \sim \mathbf{e}_2$. 
This means $\Yinv \mathbf{F}$ is supported on $\db\dr$, which can be seen from the triangular representation.
\end{proof}

To analyze the other types, we show that the image of the upper triangular matrices of $\speciallinear(2, \mathbb{C})$ is the stabilizer of $\casecsupp$ in a strong sense.
We show below that $\stab(\casecsupp) = \{ \rho(M) : M = \begin{bsmallmatrix}
    a & b \\ 0 & d
\end{bsmallmatrix} \in \speciallinear(2, \mathbb{C}) \}$ is the stabilizer subgroup of $\casecsupp$ under $\so(3, \mathbb{C}, \X)$ action.
Also, not only $\stab(\casecsupp)$ fixes $\casecsupp$ as a set,
but if any $\iny{T} \in \so(3, \mathbb{C}, \X)$ has the property that $\iny{T} \iny{\mathbf{F}} \in \casecsupp$ for some nondegenerate $\iny{\mathbf{F}} \in \casecsupp$ of arity $\ge 3$,
then $\iny{T} \in \stab(\casecsupp)$. 
This allows us to only consider $\iny{T} \in \stab(\casecsupp)$
instead of the whole $\so(3, \mathbb{C}, \X)$,
when analyzing the situation where
$\Yinv \mathbf{F} \in \casecsupp$ and $\Yinv T \mathbf{F} = \iny{T} \Yinv \mathbf{F} \in \casecsupp$ for $\iny{T} = \Yinv T \Y$ and $T \in \so(3, \mathbb{C})$.
\begin{proposition}\label{prop:stabilizer-c-1-signature}
    Let $\iny{\mathbf{F}} \in \casecsupp$ be a signature of arity $n \ge 3$ such that $\iny{\mathbf{F}}(\db, \db, \ldots, \db, \dg) = 1$.
    Let $M = \begin{bsmallmatrix}
        a & b \\ c & d
    \end{bsmallmatrix} \in \speciallinear(2, \mathbb{C})$.
    If $\rho(M) \iny{\mathbf{F}} \in \casecsupp$, then $c = 0$.
\end{proposition}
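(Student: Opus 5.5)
The plan is to turn the support condition on $\rho(M)\iny{\mathbf{F}}$ into annihilation conditions on a single contraction of $\iny{\mathbf{F}}$. That contraction must then be a pure tensor power, which forces the first entry of one row of $\rho(M)$, namely $-c^2$, to vanish.

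\textbf{Step 1 (entries as contractions with rows).} Write $\mathbf{r}_\db, \mathbf{r}_\dg, \mathbf{r}_\dr$ for the rows of $\rho(M)$ in \cref{eq:paramaterize-so3cx}. In particular,
\[
\mathbf{r}_\dg = (-c^2,\ a^2,\ -\sqrt{2}ac), \qquad \mathbf{r}_\dr = (\sqrt{2}cd,\ -\sqrt{2}ab,\ ad+bc).
\]
For an input $(x_1,\dots,x_n)$, the value $(\rho(M)\iny{\mathbf{F}})(x_1,\dots,x_n)$ is the full contraction of the symmetric tensor $\iny{\mathbf{F}}$ with $\mathbf{r}_{x_1},\dots,\mathbf{r}_{x_n}$. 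Since $\rho(M)$ is invertible, these three rows form a basis of $\mathbb{C}^3$.

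\textbf{Step 2 (annihilators of $\langle \iny{\mathbf{F}}, \mathbf{r}_\dg\rangle$).} Assume $\rho(M)\iny{\mathbf{F}}\in\casecsupp$. Every input with at least two $\dg$'s, or with both a $\dg$ and a $\dr$, is outside the support. So every contraction of $\iny{\mathbf{F}}$ using $\mathbf{r}_\dg$ twice, or using $\mathbf{r}_\dg$ and $\mathbf{r}_\dr$, vanishes, whatever rows fill the remaining $n-2 \ge 1$ slots. By multilinearity and the fact that the rows span $\mathbb{C}^3$, the arity $n-1 \ge 2$ symmetric tensor $\mathbf{K} := \langle \iny{\mathbf{F}}, \mathbf{r}_\dg\rangle$ is annihilated by both $\mathbf{r}_\dg$ and $\mathbf{r}_\dr$. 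These are linearly independent, so by \cref{prop:two-annihilator-degenerate} (whose proof works verbatim for any arity) $\mathbf{K} = \alpha\,\boldsymbol{\ell}^{\otimes(n-1)}$ for some $\alpha\in\mathbb{C}$ and $\boldsymbol{\ell}\in\mathbb{C}^3$.

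\textbf{Step 3 (direct computation of $\mathbf{K}$).} Because $\iny{\mathbf{F}}\in\casecsupp$ and $\iny{\mathbf{F}}(\db\cdots\db\dg)=1$, we can write $\iny{\mathbf{F}} = \mathbf{F}_{\db\dr} + \sym(\mathbf{e}_1,\dots,\mathbf{e}_1,\mathbf{e}_2)$, where $\mathbf{F}_{\db\dr}$ is supported on $\db\dr$. For $\mathbf{u}=(u_1,u_2,u_3)$ this gives
\[
\langle \iny{\mathbf{F}},\mathbf{u}\rangle = \langle \mathbf{F}_{\db\dr},\mathbf{u}\rangle + u_2\,\mathbf{e}_1^{\otimes(n-1)} + u_1\,\sym(\mathbf{e}_1,\dots,\mathbf{e}_1,\mathbf{e}_2),
\]
and the first two terms are supported on $\db\dr$. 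Hence $\mathbf{K}$ takes the value $u_1 = -c^2$ on $\db^{n-2}\dg$ and its permutations, and is zero on every input with two or more $\dg$'s.

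\textbf{Step 4 (conclusion).} Suppose $c\neq 0$. Then $\alpha\,\ell_\db^{n-2}\ell_\dg = -c^2 \neq 0$, so $\alpha\neq0$ and $\ell_\dg\neq0$. But then the value of $\mathbf{K}$ on $\dg^{n-1}$ is $\alpha\,\ell_\dg^{n-1}\neq 0$, and this input has $n-1\ge 2$ copies of $\dg$, contradicting Step 3. Therefore $c=0$.

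The main point to check carefully is Step 2. The vanishing conditions must cover \emph{all} ways of filling the other $n-2$ slots, so that they become genuine annihilation of the tensor $\mathbf{K}$ and not just a few vanishing entries. This holds because $\mathbf{r}_\db,\mathbf{r}_\dg,\mathbf{r}_\dr$ span $\mathbb{C}^3$ and each filling by row vectors corresponds to an input of $\rho(M)\iny{\mathbf{F}}$ that contains $\dg\dg$ or $\dg\dr$. The other ingredient to confirm is that the degeneracy conclusion of \cref{prop:two-annihilator-degenerate} holds for arity $n-1\ge 2$, which follows from the same change-of-basis argument used there.
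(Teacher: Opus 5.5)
Your proof is correct, but it runs in the opposite direction from the paper's.

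The paper works on the input side. Since $\iny{\mathbf{F}}\in\casecsupp$ with $\iny{\mathbf{F}}(\db\cdots\db\dg)=1$, it gets $\langle\iny{\mathbf{F}},\mathbf{e}_2\rangle=\mathbf{e}_1^{\otimes(n-1)}$ exactly, and trims this to $\mathbf{e}_1^{\otimes 2}$ with $n-3$ further copies of $\mathbf{e}_1$. It then pushes this through $\rho(M)$, using $\rho(M)\langle\iny{\mathbf{F}},\mathbf{u}\rangle=\langle\rho(M)\iny{\mathbf{F}},(\rho(M)^{-1})\transpose\mathbf{u}\rangle$ and closure of $\casecsupp$ under unary contraction. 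The single entry $(\rho(M)\mathbf{e}_1)^{\otimes 2}(\dg\dg)=c^4$ must then vanish.

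You work the other way. You use the support pattern of the output $\rho(M)\iny{\mathbf{F}}$ to force the slice $\mathbf{K}=\langle\iny{\mathbf{F}},\mathbf{r}_\dg\rangle$ to be a pure power. You then test that against an explicit computation of $\mathbf{K}$ from the support pattern of $\iny{\mathbf{F}}$.

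Both arguments hinge on the same matrix entry $\rho(M)_{\dg\db}=-c^2$. In the paper it appears as the $\dg$-coordinate of the column $\rho(M)\mathbf{e}_1$; in yours, as the $\db$-coordinate of the row $\mathbf{r}_\dg$.

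The paper's route is shorter, since it reads off one explicit entry and needs no degeneracy lemma. Yours avoids the iterated contractions but relies on \cref{prop:two-annihilator-degenerate}. That proposition is already stated for symmetric signatures of arbitrary arity, so no extension is needed.

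Your Step 2 can also be shortened. Because $\rho(M)\transpose\mathbf{e}_2=\mathbf{r}_\dg$, you have $\langle\rho(M)\iny{\mathbf{F}},\mathbf{e}_2\rangle=\rho(M)\mathbf{K}$. Membership of $\rho(M)\iny{\mathbf{F}}$ in $\casecsupp$ gives $\langle\rho(M)\iny{\mathbf{F}},\mathbf{e}_2\rangle=\lambda'\mathbf{e}_1^{\otimes(n-1)}$ directly, where $\lambda'$ is its value at $\db^{n-1}\dg$. Hence $\mathbf{K}=\lambda'(\rho(M)^{-1}\mathbf{e}_1)^{\otimes(n-1)}$ without any annihilator argument.
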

\begin{proof}
    Consider $\mathscr{F} = \{ \langle \iny{\mathbf{F}}, \mathbf{u} \rangle : \mathbf{u} \in \mathbb{C}^3 \}$. 
    For any $\mathbf{u} \in \mathbb{C}^3$,  since $\rho(M) \iny{\mathbf{F}} \in \casecsupp$, we have $\rho(M) \langle \iny{\mathbf{F}}, \mathbf{u} \rangle \in \casecsupp$ as well.
    This is because $\rho(M) \langle \iny{\mathbf{F}}, \mathbf{u} \rangle = \langle \rho(M) \iny{\mathbf{F}}, (\rho(M)^{-1})\transpose \mathbf{u} \rangle$ and $\casecsupp$ is closed under connecting a unary signature. For arity $n \ge 3$,
    we apply this operation on $\iny{\mathbf{F}}$ with $\mathbf{u} = \mathbf{e}_2$
    once
    and then $n-3$ times with $\mathbf{u} =  \mathbf{e}_1$. 
    The signature after $\rho(M)$ transform
    remains in $\casecsupp$.
    
    Consider the arity $n - 1$ signature $\iny{\mathbf{G}} = \langle \iny{\mathbf{F}}, \mathbf{e}_2 \rangle$. 
    Note that for any $x_1, \ldots, x_{n-1} \in \{\db, \dg, \dr\}$,
    $\iny{\mathbf{G}}(x_1, \ldots, x_{n-1}) = \iny{\mathbf{F}}(\dg, x_1, \ldots, x_{n-1})$.
    Then, we see that $\iny{\mathbf{G}}(x_1, \ldots, x_{n-1}) = 0$ unless $x_1, \ldots, x_{n-1} = \db$. Therefore, $\iny{\mathbf{G}} = \mathbf{e}_1^{\otimes n - 1}$.
    Then after connecting further $n - 3$ copies of $\mathbf{e}_1$ on $\iny{\mathbf{G}}$, we get $\iny{\mathbf{H}} = \mathbf{e}_1\tew$.
    
However, we have
 $\rho(M) \iny{\mathbf{H}} = (d^2, -c^2, \sqrt{2}c d)\tew \in \casecsupp$. Hence
 $c=0$.
\end{proof}

\begin{proposition}\label{prop:stabilizer-c-0-signature}
    Let $\iny{\mathbf{F}} \in \casecsupp$ be a nondegenerate signature of arity $n \ge 3$ such that $\iny{\mathbf{F}}$ is supported on $\db\dr$.
    Let $M = \begin{bsmallmatrix}
        a & b \\ c & d
    \end{bsmallmatrix} \in \speciallinear(2, \mathbb{C})$.
    If $\rho(M) \iny{\mathbf{F}} \in \casecsupp$, then $c = 0$.
\end{proposition}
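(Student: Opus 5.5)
The plan mirrors the proof of \cref{prop:stabilizer-c-1-signature}: use that $\casecsupp$ is closed under connecting unary signatures, so any gadget $\langle \iny{\mathbf{F}}, \mathbf{u}_1, \ldots \rangle$ satisfies $\rho(M)\langle \iny{\mathbf{F}}, \mathbf{u} \rangle = \langle \rho(M)\iny{\mathbf{F}}, (\rho(M)^{-1})\transpose \mathbf{u}\rangle \in \casecsupp$. We then extract a binary signature from $\iny{\mathbf{F}}$ of a simple shape and read off the condition on $c$ from the binary form of $\rho(M)$.

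First, since $\iny{\mathbf{F}}$ is supported on $\db\dr$, every gadget obtained by connecting unaries to it is also supported on $\db\dr$. Because $\iny{\mathbf{F}}$ is nondegenerate, connecting $n-2$ suitable unaries yields a nondegenerate symmetric binary signature $\iny{\mathbf{H}}$ supported on $\db\dr$. I would argue this existence via a generic choice: view $\iny{\mathbf{F}}\domres{\db\dr}$ as a nondegenerate symmetric Boolean tensor. For generic $\mathbf{u} \in \spn(\mathbf{e}_1,\mathbf{e}_3)$, contracting $n-2$ copies gives a binary form of rank $2$. If all such contractions were rank $\le 1$, the polynomial identity would force the Boolean tensor to be a pure power. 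Thus $\iny{\mathbf{H}} = \alpha\, \mathbf{e}_1\tew + \beta\, \sym(\mathbf{e}_1,\mathbf{e}_3) + \gamma\, \mathbf{e}_3\tew$ with $\alpha\gamma \ne \beta^2/4$ (nondegenerate), and $\rho(M)\iny{\mathbf{H}} \in \casecsupp$.

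Next, compute $\rho(M)\iny{\mathbf{H}}$. Its columns give $\rho(M)\mathbf{e}_1 = (d^2, -c^2, \sqrt{2}cd)$ and $\rho(M)\mathbf{e}_3 = (\sqrt{2}bd, -\sqrt{2}ac, ad+bc)$. Membership of a symmetric binary signature in $\casecsupp$ means the entries at $\dg\dg$ and $\dg\dr$ vanish (only $\db\dg$ may involve $\dg$). The $\dg\dg$ entry is the quadratic form $Q(s,t) = \alpha s^2 + \beta s t + \gamma t^2$ evaluated with $s = -c^2$ and $t = -\sqrt{2}ac$, up to the appropriate normalization. This gives $c^2\,(\alpha c^2 + \sqrt{2}\beta ac + 2\gamma a^2) = 0$. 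The $\dg\dr$ entry gives a bilinear expression in $(-c^2,-\sqrt{2}ac)$ and $(\sqrt{2}cd, ad+bc)$, which is also $c$ times something.

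Suppose $c \ne 0$ for contradiction. Dividing out $c$, we get two conditions: $Q(v,v)=0$ and $B(v,w)=0$, where $v = (c, \sqrt{2}a)$ (rescaled), $w = (\sqrt{2}cd, ad+bc)$, and $B$ is the symmetric bilinear form of $Q$. Since $ad-bc=1$, the vectors $v$ and $w$ are linearly independent; indeed $\det\begin{bsmallmatrix} c & \sqrt{2}a \\ \sqrt{2}cd & ad+bc\end{bsmallmatrix} = c(ad+bc) - 2acd = -c \ne 0$. So $v$ is orthogonal under $B$ to both $v$ and $w$, hence to all of $\mathbb{C}^2$. This makes $B$ degenerate, i.e.\ $\iny{\mathbf{H}}\domres{\db\dr}$ has rank $\le 1$, contradicting nondegeneracy. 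Therefore $c=0$.

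The main obstacle is the first step: guaranteeing a \emph{nondegenerate} binary gadget from a nondegenerate $\iny{\mathbf{F}}$ of arity $n\ge 3$ supported on $\db\dr$. The rest is a short computation with the explicit $\rho(M)$. If the generic-contraction argument is awkward, I would instead use that a symmetric Boolean signature all of whose binary contractions are degenerate is itself $\mathbf{u}^{\otimes n}$. This can be verified from the $2\times(n)$ Hankel structure.
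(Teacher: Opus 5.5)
Your argument is correct, but it takes a different route from the paper.

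\emph{The paper's route.} The paper never passes to a binary gadget. Nondegeneracy and support on $\db\dr$ give $\ann(\iny{\mathbf{F}})=\spn(\mathbf{e}_2)$. By \cref{prop:annihilating-vector-under-gl-transform}, $\rho(M)\iny{\mathbf{F}}$ is then annihilated by $(\rho(M)^{-1})\transpose\mathbf{e}_2=(-c^2,d^2,\sqrt{2}cd)$. Reading off the $\db^{n-2}\dg$ entry of this contraction and using the $\casecsupp$ support pattern gives $c^2\,(\rho(M)\iny{\mathbf{F}})(\db^{n-1}\dg)=0$. If that entry vanishes, $\rho(M)\iny{\mathbf{F}}$ is itself supported on $\db\dr$, so its one-dimensional annihilator is $\spn(\mathbf{e}_2)$, which again forces $c=0$.

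\emph{Your route.} You contract down to a rank-$2$ binary $\iny{\mathbf{H}}$ on $\db\dr$. You then show that if $c\neq 0$, vanishing of the $\dg\dg$ and $\dg\dr$ entries of $\rho(M)\iny{\mathbf{H}}$ makes $(c,\sqrt{2}a)$ a null vector of the bilinear form of $\iny{\mathbf{H}}$. This holds because $(c,\sqrt{2}a)$ and $(\sqrt{2}cd,ad+bc)$ span $\mathbb{C}^2$; their determinant is $-c$.

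\emph{Comparison.} The paper's proof is shorter and needs no auxiliary lemma. Yours is a concrete $2\times 2$ computation parallel to \cref{prop:stabilizer-c-1-signature}. The cost is the extra fact that a nondegenerate symmetric Boolean signature of arity $m\ge 3$ has a nondegenerate contraction with some unary. That fact is true: if every $\langle\mathbf{F},\mathbf{u}\rangle$ were degenerate, this linear family would lie in the cone over a rational normal curve of degree $m-1\ge 2$, which contains no lines. So all contractions would be multiples of a single $\mathbf{r}^{\otimes(m-1)}$, giving $\mathbf{F}=\mathbf{s}\otimes\mathbf{r}^{\otimes(m-1)}$, and symmetry then forces $\mathbf{s}\sim\mathbf{r}$. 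Iterating this from arity $n$ down to $2$ gives your $\iny{\mathbf{H}}$. The paper uses the same fact without proof in \cref{prop:reduce-arity}.

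\emph{Minor normalization slip.} With $\iny{\mathbf{H}}=\alpha\mathbf{e}_1\tew+\beta\sym(\mathbf{e}_1,\mathbf{e}_3)+\gamma\mathbf{e}_3\tew$, the off-diagonal matrix entry is $\beta$. So nondegeneracy reads $\alpha\gamma\neq\beta^2$, and the middle coefficient in your $\dg\dg$ expression should be $2\sqrt{2}\beta ac$. This does not affect the argument, which only uses the bilinear form.
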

\begin{proof}
Let $\iny{\mathbf{G}} = \rho(M) \iny{\mathbf{F}}$.
We have $\ann(\iny{\mathbf{F}}) = \spn(\mathbf{e}_2)$ and $\ann(\iny{\mathbf{G}}) = \spn(\mathbf{u})$ for $\mathbf{u} = (\rho(M)^{-1})\transpose \mathbf{e}_2$ by \cref{prop:annihilating-vector-under-gl-transform}.
Since $\rho(M)^{-1} = \rho(M^{-1}) = \rho(\begin{bsmallmatrix}
    d & -b \\ -c & a
\end{bsmallmatrix})$ we can compute that $\mathbf{u} = (-c^2, d^2, \sqrt{2} d c)$.

Since $\iny{\mathbf{G}} \in \casecsupp$ by assumption,
for $\langle \iny{\mathbf{G}}, \mathbf{u} \rangle = 0$,
we need 
\begin{align*}
& \mathbf{u}(\db) \iny{\mathbf{G}}(\db, \db, \ldots, \db, \dg)  +  \mathbf{u}(\dg) \iny{\mathbf{G}}(\dg, \db, \ldots, \db, \dg) + 
\mathbf{u}(\dg) \iny{\mathbf{G}}(\dr, \db, \ldots, \db, \dg)  \\ 
&=   (-c^2) \iny{\mathbf{G}}(\db, \db, \ldots, \db, \dg) + d^2 \cdot 0 + \sqrt{2} d c \cdot 0 \\
&=  (-c^2)  \iny{\mathbf{G}}(\db, \db, \ldots, \db, \dg) \\
&= 0 \, .
\end{align*}
If $ \iny{\mathbf{G}}(\db, \db, \ldots, \db, \dg) \ne 0$, then $c=0$.
If $ \iny{\mathbf{G}}(\db, \db, \ldots, \db, \dg) = 0$,
as  $\iny{\mathbf{G}} \in \casecsupp$, we have $\iny{\mathbf{G}}$ is supported on $\db\dr$, which means $\ann(\iny{\mathbf{G}}) = \spn(\mathbf{u}) = \spn(\mathbf{e}_2)$ because $\iny{\mathbf{G}}$ is nondegenerate.
Then $\mathbf{e}_2 \sim \mathbf{u}$, 
and we also get $c = 0$.
\end{proof}

\begin{corollary}
    \label{prop:casecsupp-fixed-under-rho-c-0}
    Let $M = \begin{bsmallmatrix}
        a & b \\ c & d
    \end{bsmallmatrix} \in \speciallinear(2, \mathbb{C})$.
    \begin{enumerate}
        \item If $c = 0$, then $\rho(M) \casecsupp \subseteq \casecsupp$.
        \item Let $\iny{\mathbf{F}} \in \casecsupp$ be  nondegenerate of arity 
        at least $3$.  If $\rho(M) \iny{\mathbf{F}} \in \casecsupp$  then $c = 0$.
    \end{enumerate}
\end{corollary}
\begin{proof}
Suppose $c = 0$. Then $d \ne 0$.
For such $M$, we have 
\begin{align*}
      \rho(M) = 
      \begin{bmatrix}
        d^2 & -b^2 & \sqrt{2}bd \\
        0 & a^2 & 0 \\
        0 & -\sqrt{2}ab & ad
      \end{bmatrix}
      &&
    \rho(M^{-1}) = 
      \begin{bmatrix}
        a^2 & -b^2 & -\sqrt{2}ba \\
        0 & d^2 & 0 \\
        0 & \sqrt{2}db & ad
      \end{bmatrix} \, .
\end{align*}
Suppose $\iny{\mathbf{F}} \in \casecsupp$ is of arity $n$.
Then, we may write $\iny{\mathbf{F}} = \lambda \sym(\mathbf{e}_1, \ldots, \mathbf{e}_1, \mathbf{e}_2) + \iny{\mathbf{F}}_0$ where $\langle \iny{\mathbf{F}}_0, \mathbf{e}_2 \rangle = 0$.
Then, $\rho(M) (\lambda \sym(\mathbf{e}_1, \ldots, \mathbf{e}_1, \mathbf{e}_2)) = \lambda' \sym(\mathbf{e}_1, \ldots, \mathbf{e}_1, (-b^2, a^2, -\sqrt{2}ab))$ for some $\lambda'$.
We see that the value of this function  is $0$ outside of $\db^n$, $\db^{n-1} \dg$ and $\db^{n-1} \dr$.
Also, $\rho(M) \iny{\mathbf{F}}_0$ is annihilated by $(\rho(M)^{-1})\transpose 
\mathbf{e}_2 = (0, d^2, 0)\transpose = d^2 \mathbf{e}_2$ by \cref{prop:annihilating-vector-under-gl-transform}, and thus
 $\rho(M) \iny{\mathbf{F}}_0$ is supported on $\db\dr$.
Therefore, $\rho(M) \iny{\mathbf{F}} \in \casecsupp$.

    The second statement follows from \cref{prop:stabilizer-c-1-signature,prop:stabilizer-c-0-signature}.
\end{proof}

\subsection{Representation of \texorpdfstring{$\stab(\casecsupp)$}{Stab(C)}}
\label{subsec:stabc-rep}
Thus, we have shown that to analyze the forms of $\Yinv \mathbf{F} \in \casecsupp$ for a \casec signature $\mathbf{F}$, we only need to analyze what forms $\iny{T} \iny{\mathbf{G}}$ can take for some $\iny{T} \in \stab(\casecsupp)$ and $\iny{\mathbf{G}}$ a signature taking a form in \cref{fig:each-casec-type}.
To do so, we exploit the fact that $\stab(\casecsupp)$ acts on $\casecsupp$ linearly and analyze the representation. 

Let $\casecsupp_3$ be the subset of ternary signatures of $\casecsupp$.
Consider the following basis of $\casecsupp_3$.
\begin{align*}
    \mathbf{x} &= \mathbf{e}_1\teh &&= [1; &&0, 0; &&0, 0, 0; &&0, 0, 0, 0]\\
    \mathbf{y} &= \sym(\mathbf{e}_1, \mathbf{e}_1, \mathbf{e}_3) &&= [0; &&0, 1; && 0, 0, 0; &&0, 0, 0, 0] \\
    \mathbf{z} &= \sym(\mathbf{e}_1, \mathbf{e}_3, \mathbf{e}_3) &&= [0; &&0, 0; &&0, 0, 1; &&0, 0, 0, 0] \\
    \mathbf{w} &= \mathbf{e}_3\teh &&= [0; &&0, 0; &&0, 0, 0; &&0, 0, 0, 1]\\
    \boldsymbol{\lambda} &= \sym(\mathbf{e}_1, \mathbf{e}_1, \mathbf{e}_2) &&= [0; &&1, 0; &&0, 0, 0; &&0, 0, 0, 0]
\end{align*}
 For $\rho(M) \in \stab(\casecsupp)$ where $M = \begin{bsmallmatrix}
    1/d & b \\ 0 & d
\end{bsmallmatrix}$ for $d \ne 0$, we can compute its matrix form on $\casecsupp_3$.
\begin{align*}
    \rho(M) \mathbf{x} &= [d^6; && 0, 0; && 0, 0, 0; && 0, 0, 0, 0] \\
    \rho(M) \mathbf{y} &= [3 \sqrt{2} b d^5; &&  0, d^4; && 0, 0, 0; && 0, 0, 0, 0] \\
    \rho(M) \mathbf{z} &= [6 b^2 d^4; && 0, 2 \sqrt{2} b d^3; && 0, 0, d^2; && 0, 0, 0, 0] \\
    \rho(M) \mathbf{w} &= [2 \sqrt{2} b^3 d^3; && 0, 2 b^2 d^2; && 0, 0, \sqrt{2} bd; && 0, 0, 0, 1] \\
    \rho(M) \boldsymbol{\lambda} &= [-3 b^2 d^4; &&  d^2, -\sqrt{2} bd^3; && 0, 0, 0; && 0, 0, 0, 0]
\end{align*}
This means that $\rho(M)$ acts on $\casecsupp_3 = \spn(\mathbf{x}, \mathbf{y}, \mathbf{z}, \mathbf{w}, \boldsymbol{\lambda})$ by 
\begin{equation}
\label{eq:representation-stab-c}
\psi(b, d) =  
\begin{bmatrix}
    d^6 & 3 \sqrt{2} b d^5 & 6 b^2 d^4 & 2 \sqrt{2} b^3 d^3 & -3b^2 d^4 \\
    0 & d^4 & 2 \sqrt{2} b d^3 & 2 b^2 d^2 & - \sqrt{2} b d^3 \\
    0 & 0 & d^2 & \sqrt{2} b d & 0 \\
    0 & 0 & 0 & 1 & 0 \\
    0 & 0 & 0 & 0 & d^2
\end{bmatrix} \,.
\end{equation}
We see that the matrix in \cref{eq:representation-stab-c} takes a particularly nice form, as it is upper triangular. 

\newcommand{\vz}{V^0}
\newcommand{\vobg}{V^1_2}
\newcommand{\vodz}{V^1_{\mathrm{z}}}
\newcommand{\vd}{V_d}
\begin{figure}
  \centering
  \begin{subfigure}[b]{0.4\textwidth}
\centering
\tersig{0.7}{* 0 * 0 0 * 0 0 0 *}
\caption{$\vz$}
\end{subfigure}
\begin{subfigure}[b]{0.4\textwidth}
\centering
\tersig{0.7}{* * * 0 0 0 0 0 0 0}
\caption{$\vobg$}
\end{subfigure} \\
\hfill
\\
  \begin{subfigure}[b]{0.4\textwidth}
\centering
\tersig{0.7}{* * * 0 0 * 0 0 0 0}
\caption{$\vodz$}
\end{subfigure}
\begin{subfigure}[b]{0.4\textwidth}
\centering
\tersig{0.7}{* 0 0 0 0 0 0 0 0 0}
\caption{$\vd$}
\end{subfigure}
\caption{Subrepresentations of $\stab(\casecsupp)$.}
\label{fig:subrepresentation-casecsupp}
\end{figure}
We obtain the following corollaries:
\begin{corollary}\label{cor:casecobg-form}
    Let $\mathbf{F}$ be a \casecobg signature such that $\Yinv \mathbf{F} = [x; \lambda, 0; 0, 0, 0; 0, 0, 0, 0]$ and $\lambda \ne 0$.
    If $\Yinv T \mathbf{F} = [x'; \lambda', y'; 0, 0, z'; 0, 0, 0, w'] \in \casecsupp$ for some $T \in \so(3, \mathbb{C})$, then $z', w' = 0$ and $\lambda' \ne 0$.

    Conversely, let $\mathbf{F}$ be a signature such that $\Yinv \mathbf{F} = [x; \lambda, y; 0, 0, 0; 0, 0, 0, 0]$ and $\lambda \ne 0$.
    Then, there exists some $T \in \so(3, \mathbb{C})$ such that $\Yinv T \mathbf{F} = [x'; \lambda', 0; 0, 0, 0; 0, 0, 0, 0]$.
\end{corollary}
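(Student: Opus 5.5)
Both directions reduce to the stabilizer analysis via \cref{prop:casecsupp-fixed-under-rho-c-0} and the upper-triangular matrix $\psi(b,d)$ of \cref{eq:representation-stab-c}.

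For the first statement, set $\iny{\mathbf{F}} = \Yinv \mathbf{F} = x\mathbf{x} + \lambda \boldsymbol{\lambda}$ and write $\iny{T} = \Yinv T \Y \in \so(3,\mathbb{C},\X)$. Then $\Yinv T\mathbf{F} = \iny{T}\iny{\mathbf{F}}$. The signature $\iny{\mathbf{F}}$ is nondegenerate, ternary and in $\casecsupp$, and by hypothesis $\iny{T}\iny{\mathbf{F}} \in \casecsupp$. Part 2 of \cref{prop:casecsupp-fixed-under-rho-c-0} therefore gives $\iny{T} = \rho(M)$ with $M$ upper triangular. Rescaling $M$ by $\pm1$ does not change $\rho(M)$, so we may take $M = \begin{bsmallmatrix} 1/d & b \\ 0 & d\end{bsmallmatrix}$. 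We then read off the coordinates of $\rho(M)\iny{\mathbf{F}}$ in the basis $(\mathbf{x},\mathbf{y},\mathbf{z},\mathbf{w},\boldsymbol{\lambda})$. Only columns $1$ and $5$ of $\psi(b,d)$ contribute, and both vanish in the $\mathbf{z}$ and $\mathbf{w}$ rows. Hence $z' = w' = 0$ and $\lambda' = d^2\lambda \ne 0$. In other words, $\spn(\mathbf{x},\mathbf{y},\boldsymbol{\lambda}) = \vobg$ is an invariant subspace.

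For the converse, we look for $b,d$ such that $\psi(b,d)$ kills the $\mathbf{y}$-coordinate of $\iny{\mathbf{F}} = x\mathbf{x} + y\mathbf{y} + \lambda\boldsymbol{\lambda}$. Take $d = 1$. From $\psi(b,1)$, the new $\mathbf{y}$-coordinate is $y - \sqrt{2}\, b\lambda$. Since $\lambda \ne 0$, the choice $b = y/(\sqrt{2}\lambda)$ makes it zero. The $\mathbf{z}$ and $\mathbf{w}$ coordinates stay $0$ by the invariance above, and the $\boldsymbol{\lambda}$-coordinate stays $\lambda \ne 0$.

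We then transport back. Put $T = \Y\rho(M)\Yinv$. This equals $\phi(M)$, so $T \in \so(3,\mathbb{C})$. Moreover $\Yinv T\mathbf{F} = \rho(M)\Yinv\mathbf{F} = [x'; \lambda, 0; 0,0,0; 0,0,0,0]$. As a sanity check, this agrees with the matrix $A$ used earlier for the case $(b,c,d) = (+,0,0)$.

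The only step that needs care is the bookkeeping: confirming that $\iny{T} = \Yinv T\Y$ ranges exactly over the $\rho$-image, so that $T \in \so(3,\mathbb{C})$ corresponds to $\iny{T} \in \so(3,\mathbb{C},\X)$ via $\phi(M) = \Y\rho(M)\Yinv$. If $T$ has determinant $-1$, replace it by $-T$. This changes the signature only by the scalar $(-1)^3$, which is harmless. Everything else is reading entries off $\psi$.
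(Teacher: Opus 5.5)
Your proof is correct and follows the paper's argument: use \cref{prop:casecsupp-fixed-under-rho-c-0} to reduce to $\iny{T}=\rho(M)$ with $M$ upper triangular, read off $\psi(b,d)(x,0,0,0,\lambda)=(*,*,0,0,d^2\lambda)$, and for the converse take $d=1$, $b=y/(\sqrt{2}\lambda)$ to kill the $\mathbf{y}$-coordinate. Two small remarks: the normalization $M=\begin{bsmallmatrix}1/d & b\\ 0 & d\end{bsmallmatrix}$ follows directly from $\det M=1$ with $c=0$, not from a $\pm1$ rescaling, and the determinant $-1$ fix is unnecessary because $T\in\so(3,\mathbb{C})$ in both directions.
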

\begin{proof}
By \cref{prop:casecsupp-fixed-under-rho-c-0}, it is sufficient to analyze the form of $\iny{T} \Yinv \mathbf{F}$. 

We prove the first statement.
    Note that $\Yinv \mathbf{F}$ is $(x, 0, 0, 0, \lambda)$ in the chosen basis for \cref{eq:representation-stab-c}. 
    We can compute that $(x', y', z', w', \lambda') = \psi(b, d) (x, 0, 0, 0, \lambda) = (*, *, 0, 0, d^2 \lambda)$, which is of the desired form. 

    For the second statement, we can compute that $\psi(b, d)(x, y, 0, 0, \lambda) = (*, d^4 y - \sqrt{2} bd^3 \lambda, 0, 0, d^2 \lambda)$. 
    We can set $b = d y/(\sqrt{2} \lambda)$ and $d = 1$.
\end{proof}

\begin{corollary}\label{cor:casecodz-form}
    Let $\mathbf{F}$ be a \casecodz signature such that $\Yinv \mathbf{F} = [x; \lambda, y; 0, 0, z; 0, 0, 0, 0]$ and $\lambda, z \ne 0$.
    If $\Yinv T \mathbf{F} = [x'; \lambda', y'; 0, 0, z'; 0, 0, 0, w'] \in \casecsupp$ for some $T \in \so(3, \mathbb{C})$, then $w' = 0$ and $z', \lambda' \ne 0$.
\end{corollary}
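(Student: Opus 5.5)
Following the proof of \cref{cor:casecobg-form}, I first reduce to the stabilizer. Write $\iny{\mathbf{F}} = \Yinv \mathbf{F}$ and $\iny{T} = \Yinv T \Y \in \so(3, \mathbb{C}, \X)$, so that $\Yinv T \mathbf{F} = \iny{T} \iny{\mathbf{F}}$. The signature $\iny{\mathbf{F}}$ lies in $\casecsupp$, and by \cref{prop:two-annihilator-degenerate,prop:nonzero-annihilator-types} it is nondegenerate and ternary. By the surjectivity of $\rho$, write $\iny{T} = \rho(M)$. Since $\iny{T}\iny{\mathbf{F}} \in \casecsupp$, part 2 of \cref{prop:casecsupp-fixed-under-rho-c-0} forces $M$ to be upper triangular. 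Hence, after rescaling $M$ to $\begin{bsmallmatrix} 1/d & b \\ 0 & d \end{bsmallmatrix}$ (which is allowed up to a sign, and $\pm M$ give the same $\rho(M)$), the action is given by the matrix $\psi(b, d)$ of \cref{eq:representation-stab-c}.

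Next I compute directly in the basis $(\mathbf{x}, \mathbf{y}, \mathbf{z}, \mathbf{w}, \boldsymbol{\lambda})$. The signature $\iny{\mathbf{F}}$ has coordinates $(x, y, z, 0, \lambda)$. Applying $\psi(b, d)$ gives $(x', y', z', w', \lambda')$. Only the fourth column of $\psi$ has a nonzero entry in the fourth row, so $w' = 0 \cdot 1 = 0$. The fifth row gives $\lambda' = d^2 \lambda \neq 0$, since $d \neq 0$. The third row gives $z' = d^2 z + \sqrt{2} b d \cdot 0 = d^2 z \neq 0$.

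Finally, the scalar normalization (the symbol $\sim$) multiplies all entries by a common nonzero constant, which does not change which entries vanish. This yields $w' = 0$ and $z', \lambda' \neq 0$.

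The only point needing care is the reduction step. The hypothesis of \cref{prop:casecsupp-fixed-under-rho-c-0} (a nondegenerate $\iny{\mathbf{F}} \in \casecsupp$ of arity at least $3$) must hold, which it does because \casecodz signatures are nondegenerate by definition. After that, the rest is reading off the upper-triangular entries of $\psi$.
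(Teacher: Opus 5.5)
Your proposal is correct and follows the paper's proof: reduce to $\iny{T}=\rho(M)$ with $c=0$ via \cref{prop:casecsupp-fixed-under-rho-c-0}, then read off $w'=0$, $z'=d^2z\neq 0$ and $\lambda'=d^2\lambda\neq 0$ from $\psi(b,d)(x,y,z,0,\lambda)$. Two minor remarks: no rescaling or sign adjustment of $M$ is needed, since $c=0$ and $\det M=1$ already give $a=1/d$; and the nondegeneracy of $\iny{\mathbf{F}}$ follows directly from that of $\mathbf{F}$ and the invertibility of $\Yinv$, not from the annihilator propositions you cite.
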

\begin{proof}
By \cref{prop:casecsupp-fixed-under-rho-c-0}, it is sufficient to analyze the form of $\iny{T} \Yinv \mathbf{F}$.
    Note that $\Yinv \mathbf{F}$ is $(x, y, z, 0, \lambda)$ in the chosen basis for \cref{eq:representation-stab-c}. 
    Then $(x', y', z', w', \lambda') = \psi(b, d) (x, y, z, 0, \lambda) = (*, *, d^2 z, 0, d^2 \lambda)$, which is of the desired form. 
\end{proof}

More precisely, 
\cref{eq:representation-stab-c} gives a nested chain of subrepresentations, and some of them reflect the classification of \casec subtypes.
Let $\vz, \vobg, \vodz, \vd \subset \casecsupp_3$ be the linear subspaces defined by signatures having the support pattern in \cref{fig:subrepresentation-casecsupp}.
We can easily check that $\vz, \vobg, \vodz, \vd$ are subrepresentations, with
$\vd \subset \vz \subset \casecsupp_3$ and $\vd \subset \vobg \subset \vodz \subset \casecsupp_3$. 
In fact, there are more subrepresentations, as depicted in \cref{fig:subrepresentation-c-lattice}.
\begin{figure}
\centering
\begin{tikzpicture}

\node (x) at (3, 0) {$\vd = \spn(\mathbf{x})$};

\node (xy) at (0, 2) {$\spn(\mathbf{x}, \mathbf{y})$};

\node (xyz) at (0, 4) {$\spn(\mathbf{x}, \mathbf{y}, \mathbf{z})$};

\node (xyzw) at (0, 6) {$\vz = \spn(\mathbf{x}, \mathbf{y}, \mathbf{z}, \mathbf{w})$};

\node (lxy) at (6, 3) {$\vobg = \spn(\mathbf{x}, \mathbf{y}, \mathbf{\lambda})$};

\node (lxyz) at (6, 5) {$\vodz = \spn(\mathbf{x}, \mathbf{y}, \mathbf{z}, \mathbf{\lambda})$};

\node (lxyzw) at (6, 7) {$\casecsupp_3 = \spn(\mathbf{x}, \mathbf{y}, \mathbf{z}, \mathbf{w}, \mathbf{\lambda})$};

\draw (x) -- (xy) -- (xyz) -- (xyzw);
\draw (x) -- (lxy) -- (lxyz) -- (lxyzw);
\draw (xy) -- (lxy);
\draw (xyz) -- (lxyz);
\draw (xyzw) -- (lxyzw);
\end{tikzpicture}
\caption{More subrepresentations of $\stab(\casecsupp)$.}
\label{fig:subrepresentation-c-lattice}
\end{figure}

We have so far shown that 
\begin{enumerate}
    \item If $\mathbf{F}$ is \casecz and $\Yinv \mathbf{F} \in \casecsupp$,
    then $\Yinv \mathbf{F} \in \vz$.
    \item If $\mathbf{F}$ is \casecobg and $\Yinv \mathbf{F} \in \casecsupp$,
    then $\Yinv \mathbf{F} \in \vobg$ and $(\Yinv \mathbf{F})(\db,\db,\dg) \ne 0$.
    \item If $\mathbf{F}$ is \casecodz and $\Yinv \mathbf{F} \in \casecsupp$,
    then $\Yinv \mathbf{F} \in \vodz$, $(\Yinv \mathbf{F})(\db,\db,\dg) \ne 0$, and $(\Yinv \mathbf{F})(\db,\dr,\dr) \ne 0$.
\end{enumerate}

We now consider types \casecosep and \casecogen.
Recall the seemingly strange definition of a \casecogen signature:
$\mathbf{F}$ is \casecogen if $\Yinv T \mathbf{F} 
= [x; \lambda, y; 0, 0, z;0, 0, 0, w]$ such that $\lambda, w \ne 0$ and $z^2 - z \lambda - yw \ne 0$ for some $T \in \so(3, \mathbb{C})$.
We show that the polynomial $P(X, Y, Z, W, \Lambda) = Z^2 - Z \Lambda - Y W$ on $\casecsupp$ is projectively invariant under the action of $\iny{T} \in \stab(C)$.
In other words (with a little bit of abuse of notation),
for $\iny{\mathbf{F}} = [x; \lambda, y; 0, 0, z; 0, 0, 0, w]$ identified as $(x, y, z, w, \lambda) \in \mathbb{C}^5$,
we show that $P(\iny{\mathbf{F}}) = z^2 - z \lambda - y w = 0$ if and only if $P(\iny{T}\iny{\mathbf{F}}) = 0$ for any $\iny{T} \in \stab(\casecsupp)$.

\begin{proposition}\label{prop:weird-polynomial-invariant-under-stab}
Let $P(X, Y, Z, W, \Lambda) = Z^2 - Z \Lambda - Y W$ be a polynomial on $\mathbb{C}^5$.
Then, 
\[P(\psi(b,d)(X, Y, Z, W, \Lambda))  =  d^4 P(X, Y, Z, W, \Lambda).\]
In particular, if $P(X, Y, Z, W, \Lambda) = 0$, then 
for any  $\rho(M) \in \stab(\casecsupp)$, $P( \psi(b, d) (X, Y, Z, W, \Lambda)) = 0$.
\end{proposition}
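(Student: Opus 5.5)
This is a direct computation from the matrix $\psi(b,d)$ in \cref{eq:representation-stab-c}. I will read off the images of the three coordinates that appear in $P$ besides $X$. Write $(X', Y', Z', W', \Lambda') = \psi(b,d)(X, Y, Z, W, \Lambda)$. The third, fourth and fifth rows of $\psi(b,d)$ give
\[
Z' = d^2 Z + \sqrt{2}\, b d\, W, \qquad W' = W, \qquad \Lambda' = d^2 \Lambda .
\]
The second row gives
\[
Y' = d^4 Y + 2\sqrt{2}\, b d^3 Z + 2 b^2 d^2 W - \sqrt{2}\, b d^3 \Lambda .
\]
$X'$ does not enter $P$, so the first row is irrelevant. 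This is why the fifth basis vector $\boldsymbol{\lambda}$ had to be placed so that its image contributes only to the $\mathbf{x}$ and $\mathbf{y}$ coordinates and to itself.

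Next I expand $P(X', Y', Z', W', \Lambda') = Z'^2 - Z'\Lambda' - Y'W'$ term by term. First,
\[
Z'^2 = d^4 Z^2 + 2\sqrt{2}\, b d^3 ZW + 2 b^2 d^2 W^2 .
\]
Second,
\[
Z'\Lambda' = d^4 Z\Lambda + \sqrt{2}\, b d^3 W\Lambda .
\]
Third,
\[
Y'W' = d^4 YW + 2\sqrt{2}\, b d^3 ZW + 2 b^2 d^2 W^2 - \sqrt{2}\, b d^3 W\Lambda .
\]
When these are combined, the $ZW$ terms cancel, the $W^2$ terms cancel, and the two $W\Lambda$ terms cancel: $-\sqrt{2}bd^3 W\Lambda$ from $-Z'\Lambda'$ against $+\sqrt{2}bd^3W\Lambda$ from $-Y'W'$. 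What remains is $d^4(Z^2 - Z\Lambda - YW)$, which is the claimed identity.

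The only step that needs care is making sure the entries of $\psi(b,d)$ are used with the right normalization, since the statement is sensitive to the exact coefficients such as $-\sqrt{2}bd^3$ in the $(2,5)$ entry. I will take \cref{eq:representation-stab-c} as given. The computation is routine, and no step is a real obstacle.

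The \emph{in particular} clause follows immediately. Every element of $\stab(\casecsupp)$ is $\rho(M)$ for some $M = \begin{bsmallmatrix} a & b \\ 0 & d\end{bsmallmatrix}$ with $ad = 1$, so $a = 1/d$, and $\rho(M)$ acts on $\casecsupp_3$ by $\psi(b,d)$. The factor $d^4$ is nonzero, so $P$ vanishes at a point if and only if it vanishes at the image of that point.
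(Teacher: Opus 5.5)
Your proposal is correct and follows the paper's proof: the same direct expansion of $Z'^2 - Z'\Lambda' - Y'W'$ using rows two through five of $\psi(b,d)$, with the same cancellations, and the same appeal to $d \neq 0$ for the ``in particular'' clause. The paper also gives a second, independent proof later, by evaluating a Holant instance whose value is $-2q^2 P$ and using invariance of the Holant value under holographic transformation.
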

\begin{proof}
    Writing $(X', Y', Z', W', \Lambda') = \psi(b,d)(X, Y, Z, W, \Lambda)$, by \cref{eq:representation-stab-c}, we have
\begin{align*}
P(\psi(b,d)(X, Y, Z, W, \Lambda)) &= P(X', Y', Z', W', \Lambda') \\
 &= (Z')^2 -Z'\Lambda' - Y'W' \\
&= 
(d^2 Z + \sqrt{2} bd W)^2 - (d^2 Z + \sqrt{2} bd W) d^2 \Lambda \\
&\quad- (d^4 Y + 2\sqrt{2}b d^3 Z + 2 b^2 d^2 W - \sqrt{2} b d^3 \Lambda)W \\
&= d^4 Z^2 + 2\sqrt{2} b d^3 Z W + 2 b^2d^2W^2 - d^4 Z \Lambda 
-\sqrt{2} b d^3 W \Lambda \\
&\quad - d^4 Y W - 2\sqrt{2} b d^3 Z W - 2 b^2d^2 W^2 + \sqrt{2} b d^3 W \Lambda \\
&= d^4 Z^2  +  - d^4 Z \Lambda 
 - d^4 Y W   \\
&= d^4 (Z^2 - Z \Lambda - Y W) \\
&= d^4 P(X, Y, Z, W, \Lambda)
\end{align*}
Since $d \ne 0$, $P(X, Y, Z, W, \Lambda) = 0$ if and only if  $P( \psi(b, d) (X, Y, Z, W, \Lambda)) = 0$.
\end{proof}

For any signature $\iny{\mathbf{F}} \in \casecsupp$, we define the orbit under $\stab(\casecsupp)$ action: $\mathcal{O}(\iny{\mathbf{F}}) := \{ \iny{T} \iny{\mathbf{F}} : \iny{T} \in \stab(\casecsupp) \}$. 
\cref{prop:weird-polynomial-invariant-under-stab} shows that $P$ is identically $0$ on $\mathcal{O}(\iny{\mathbf{F}})$ or is never $0$. 
We further show that if $P(\iny{\mathbf{F}}) = 0$, then there is a signature of the form $[x'; \lambda', 0; 0, 0, 0; 0, 0, 0, w'] \in \mathcal{O}(\iny{\mathbf{F}})$.
\begin{proposition}\label{prop:if-P-0-then-c1-sep}
    Let $(x, y, z, w, \lambda) \in \mathbb{C}^5$ with $w \ne 0$.
    Then $P(x, y, z, w, \lambda) = 0$ if and only if
    there exists some $b, d \in \mathbb{C}^5$ with $d \ne 0$ such that 
    $\psi(b, d)(x, y, z, w, \lambda) = (x', 0, 0, w', \lambda')$ for some $x', w', \lambda'$. 
\end{proposition}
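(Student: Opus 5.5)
We prove both directions from the explicit form of $\psi(b,d)$ in \cref{eq:representation-stab-c} together with the invariance of $P$ in \cref{prop:weird-polynomial-invariant-under-stab}. Write $(x',y',z',w',\lambda') = \psi(b,d)(x,y,z,w,\lambda)$. Reading off the rows of the matrix gives
\[
w' = w, \qquad z' = d^2 z + \sqrt{2} b d w, \qquad y' = d^4 y + 2\sqrt{2} b d^3 z + 2 b^2 d^2 w - \sqrt{2} b d^3 \lambda, \qquad \lambda' = d^2\lambda .
\]
So the target form $(x',0,0,w',\lambda')$ amounts to the two scalar equations $z'=0$ and $y'=0$ in the unknowns $b,d$ with $d\neq 0$. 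The coordinate $x'$ is unconstrained.

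For the ``if'' direction, suppose $\psi(b,d)(x,y,z,w,\lambda) = (x',0,0,w',\lambda')$. Then $P(x',0,0,w',\lambda') = 0 - 0\cdot\lambda' - 0\cdot w' = 0$. By \cref{prop:weird-polynomial-invariant-under-stab}, $d^4 P(x,y,z,w,\lambda) = 0$, and since $d\neq 0$ this gives $P(x,y,z,w,\lambda)=0$. This direction does not use $w\neq 0$.

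For the ``only if'' direction, assume $P(x,y,z,w,\lambda)=0$ and $w\neq 0$.
\begin{enumerate}
\item Take $d=1$ and choose $b$ to kill $z'$. Since $w\neq 0$, the equation $z + \sqrt{2} b w = 0$ is solved by $b = -z/(\sqrt{2} w)$, which gives $z'=0$.
\item With this choice, $P(x',y',0,w',\lambda') = -y' w'$ by the definition of $P$.
\item By \cref{prop:weird-polynomial-invariant-under-stab}, $P(x',y',0,w',\lambda') = d^4 P(x,y,z,w,\lambda) = 0$.
\item Hence $y' w = 0$. Since $w' = w \neq 0$, we get $y' = 0$.
\end{enumerate}
This yields the required form. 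As a sanity check, one can substitute $b$ directly: $y' = y - 2z^2/w + z^2/w + z\lambda/w = -(z^2 - z\lambda - yw)/w = -P/w$, which agrees with the invariance argument.

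There is no real obstacle here. The only point needing care is the hypothesis $w\neq 0$, which is used twice: to solve for $b$ in the first step, and to cancel $w'$ in the last step.
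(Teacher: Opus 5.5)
Your proof is correct and takes essentially the paper's route: the same choice $b = -dz/(\sqrt{2}w)$ (you take $d=1$) kills $z'$, and the paper then expands $y' = -\frac{d^4}{w}(z^2 - z\lambda - yw)$, which is your sanity-check computation. Your treatment of the ``if'' direction via \cref{prop:weird-polynomial-invariant-under-stab} spells out a step that the paper's proof leaves implicit.
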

\begin{proof}
Let
$\psi(b, d)(x, y, z, w, \lambda) = (x', y', z', w', \lambda')$ for some $b, d$ to be determined later.
By \cref{eq:representation-stab-c},
\begin{enumerate}
    \item $y' = d^4 y + 2 \sqrt{2} b d^3 z + 2 b^2 d^2 w - \sqrt{2}bd^3 \lambda$ 
    \item $z' = d^2 z + \sqrt{2}bd w$
\end{enumerate}
We set $b = -\frac{dz}{\sqrt{2}w}$, and get  $z' = 0$.
Then, 
$y' = d^4 y - \frac{2d^4 z^2}{w} + \frac{d^4 z^2}{w} + \frac{d^4 z \lambda }{w} = -\frac{d^4}{w}(z^2 - z \lambda - y w) = 0$. 
Note that $d$ can be chosen arbitrarily.
\end{proof}

\begin{corollary}\label{cor:casecosep-form}
    Let $\mathbf{F}$ be a \casecosep signature such that $\Yinv \mathbf{F} = [x; \lambda, 0; 0, 0, 0; 0, 0, 0, w]$ and $w, \lambda \ne 0$.
    If $\Yinv T \mathbf{F} = [x'; \lambda', y'; 0, 0, z'; 0, 0, 0, w'] \in \casecsupp$ for some $T \in \so(3, \mathbb{C})$, then $w', \lambda' \ne 0$ and $(z')^2 - z' \lambda' - y'w' = 0$.

    Conversely, let $\mathbf{F}$ be a signature such that $\Yinv \mathbf{F} = [x; \lambda, y; 0, 0, z; 0, 0, 0, w]$ and $w, \lambda \ne 0$ and $z^2 - z \lambda - y w = 0$.
    Then, there exists some $T \in \so(3, \mathbb{C})$ such that $\Yinv T \mathbf{F} = [x'; \lambda', 0; 0, 0, 0; 0, 0, 0, w']$ and $w', \lambda' \ne 0$.
\end{corollary}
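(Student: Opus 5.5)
The plan follows the proofs of \cref{cor:casecobg-form,cor:casecodz-form}. First we reduce to the stabilizer. Put $\iny{T} = \Yinv T \Y \in \so(3, \mathbb{C}, \X)$, so that $\Yinv T \mathbf{F} = \iny{T} \Yinv \mathbf{F}$. The signature $\Yinv \mathbf{F} = [x; \lambda, 0; 0,0,0; 0,0,0,w]$ is ternary, lies in $\casecsupp$, and is nondegenerate since $\lambda, w \ne 0$. By \cref{prop:casecsupp-fixed-under-rho-c-0}(2), the hypothesis $\iny{T}\Yinv\mathbf{F} \in \casecsupp$ forces $\iny{T} = \rho(M)$ with $M$ upper triangular, so $\iny{T} \in \stab(\casecsupp)$. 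Since we work projectively, we may rescale $M$ to the form $\begin{bsmallmatrix} 1/d & b \\ 0 & d\end{bsmallmatrix}$. The action on the coordinates $(x,y,z,w,\lambda)$ is then given by $\psi(b,d)$ from \cref{eq:representation-stab-c}.

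For the first statement, apply $\psi(b,d)$ to the vector $(x,0,0,w,\lambda)$. Reading off the last row gives $\lambda' = d^2\lambda \ne 0$, and the fourth row gives $w' = w \ne 0$. The initial vector satisfies $P = 0^2 - 0\cdot\lambda - 0\cdot w = 0$. By \cref{prop:weird-polynomial-invariant-under-stab}, $P(x',y',z',w',\lambda') = d^4\cdot 0 = 0$, which is exactly $(z')^2 - z'\lambda' - y'w' = 0$. One point needs care: the signature $\Yinv T\mathbf{F}$ in the statement may differ from $\psi(b,d)$ applied to the vector by a nonzero scalar, coming from the $\det$ normalization or the $\ortho$ versus $\so$ issue. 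If $T$ has determinant $-1$, replace it by $-T$, which changes the ternary signature only by a sign. Both the nonvanishing conditions and the homogeneous quadratic $P$ are invariant under nonzero scaling, so this causes no problem.

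For the converse, start with $\Yinv\mathbf{F} = (x,y,z,w,\lambda)$ satisfying $w \ne 0$ and $P = 0$. By \cref{prop:if-P-0-then-c1-sep} there exist $b$ and $d \ne 0$ (explicitly $b = -dz/(\sqrt{2}w)$, with any $d \ne 0$) such that $\psi(b,d)(x,y,z,w,\lambda) = (x',0,0,w',\lambda')$. From \cref{eq:representation-stab-c} we get $w' = w \ne 0$ and $\lambda' = d^2\lambda \ne 0$. Let $M = \begin{bsmallmatrix} 1/d & b\\ 0 & d\end{bsmallmatrix} \in \speciallinear(2,\mathbb{C})$ and set $T = \Y\rho(M)\Yinv = \phi(M) \in \so(3,\mathbb{C})$. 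Then $\Yinv T\mathbf{F} = \rho(M)\Yinv\mathbf{F} = [x'; \lambda', 0; 0,0,0; 0,0,0,w']$, as required.

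The main obstacle is the reduction step at the start: one must check that \cref{prop:casecsupp-fixed-under-rho-c-0} applies to arbitrary $T \in O(3,\mathbb{C})$, which is handled by the sign adjustment above. Once $\iny{T}$ is known to lie in $\stab(\casecsupp)$, the rest is direct reading of \cref{eq:representation-stab-c} together with the invariance of $P$.
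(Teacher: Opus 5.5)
Your proposal is correct and follows the paper's proof. You reduce to $\stab(\casecsupp)$ via \cref{prop:casecsupp-fixed-under-rho-c-0}, read $w' = w$ and $\lambda' = d^2\lambda$ off \cref{eq:representation-stab-c}, invoke the invariance of $P$ from \cref{prop:weird-polynomial-invariant-under-stab}, and obtain the converse from \cref{prop:if-P-0-then-c1-sep}. The sign and rescaling caveats are unnecessary: $T \in \so(3,\mathbb{C})$ is given, and $c = 0$ together with $\det M = 1$ already forces $M = \begin{bsmallmatrix} 1/d & b \\ 0 & d \end{bsmallmatrix}$ with no scalar ambiguity.
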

\begin{proof}
By \cref{prop:casecsupp-fixed-under-rho-c-0}, it is sufficient to analyze the form of $\iny{T} \Yinv \mathbf{F}$.

We prove the first statement.
Note that $P(\Yinv \mathbf{F}) = 0$. 
By \cref{prop:weird-polynomial-invariant-under-stab}, $P(\iny{T} \Yinv \mathbf{F}) = 0$. 
The fact that $w', \lambda' \ne 0$ follows from \cref{eq:representation-stab-c}.

The second statement follows from \cref{prop:if-P-0-then-c1-sep}.
\end{proof}

\begin{corollary}\label{cor:casecogen-form}
    Let $\mathbf{F}$ be a \casecogen signature such that $\Yinv \mathbf{F} = [x; \lambda, y; 0, 0, z; 0, 0, 0, w]$ and $w, \lambda \ne 0$ and $z^2 - z \lambda - yw \ne 0$.
    If $\Yinv T \mathbf{F} = [x'; \lambda', y'; 0, 0, z'; 0, 0, 0, w'] \in \casecsupp$ for some $T \in \so(3, \mathbb{C})$, then $w', \lambda' \ne 0$ and $(z')^2 - z' \lambda' - y'w' \ne 0$.
\end{corollary}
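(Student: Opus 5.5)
The plan follows the pattern of \cref{cor:casecodz-form,cor:casecosep-form}. First, by \cref{prop:casecsupp-fixed-under-rho-c-0}, we may restrict attention to $\iny{T} := \Yinv T \Y \in \so(3,\mathbb{C},\X)$. Since $\Yinv \mathbf{F} \in \casecsupp$ is nondegenerate of arity $3$ and $\iny{T}\Yinv\mathbf{F} = \Yinv T\mathbf{F} \in \casecsupp$, part 2 of that corollary forces $\iny{T} = \rho(M)$ with $M$ upper triangular. Thus $\iny{T} \in \stab(\casecsupp)$. We normalize $M = \begin{bsmallmatrix} 1/d & b \\ 0 & d\end{bsmallmatrix}$ with $d \ne 0$; the sign ambiguity $\pm M$ is harmless since $\rho(-M)=\rho(M)$. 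Consequently $(x', y', z', w', \lambda') = \psi(b,d)(x,y,z,w,\lambda)$, with $\psi$ given in \cref{eq:representation-stab-c}.

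Next, we read the nonvanishing claims off the triangular matrix \cref{eq:representation-stab-c}. The fourth row gives $w' = w \neq 0$, since that row is $(0,0,0,1,0)$. The fifth row gives $\lambda' = d^2 \lambda \ne 0$. Both use $d \ne 0$.

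Finally, \cref{prop:weird-polynomial-invariant-under-stab} gives
\[(z')^2 - z'\lambda' - y'w' = P(\psi(b,d)(x,y,z,w,\lambda)) = d^4 P(x,y,z,w,\lambda) = d^4(z^2 - z\lambda - yw),\]
which is nonzero because $d \ne 0$ and $z^2-z\lambda-yw\ne 0$.

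The only delicate point is the first step. We must check that the arbitrary $T \in \so(3,\mathbb{C})$ in the statement really corresponds to an element of $\stab(\casecsupp)$. This needs $\Yinv\mathbf{F}$ to be nondegenerate, which holds because $\mathbf{F}$ is nondegenerate as a \casecogen signature and $\Yinv$ is invertible. If $T \in \ortho(3,\mathbb{C})\setminus\so(3,\mathbb{C})$ were allowed, we would replace $T$ by $-T$, which changes $\Yinv T\mathbf{F}$ only by the sign $(-1)^3$. This rescales all five coordinates by $-1$ and preserves every condition in the conclusion, because $P$ is homogeneous of degree $2$.
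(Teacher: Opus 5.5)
Your proof is correct and follows the paper's argument. The paper likewise reduces to $\iny{T}=\rho(M)\in\stab(\casecsupp)$ via \cref{prop:casecsupp-fixed-under-rho-c-0}, reads the nonvanishing of $w'=w$ and $\lambda'=d^2\lambda$ off the triangular matrix $\psi(b,d)$, and concludes with $P\circ\psi(b,d)=d^4P$ from \cref{prop:weird-polynomial-invariant-under-stab}; your extra remarks on the sign of $M$ and on $\ortho(3,\mathbb{C})$ versus $\so(3,\mathbb{C})$ are harmless but not needed.
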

\begin{proof}
By \cref{prop:casecsupp-fixed-under-rho-c-0}, it is sufficient to analyze the form of $\iny{T} \Yinv \mathbf{F}$.
Note that $P(\Yinv \mathbf{F}) \ne 0$. 
By \cref{prop:weird-polynomial-invariant-under-stab}, $P(\iny{T} \Yinv \mathbf{F}) \ne 0$. 
The fact that $w', \lambda' \ne 0$ follows from \cref{eq:representation-stab-c}.
\end{proof}

\begin{corollary}\label{cor:classification-case-c-mutually-exclusive}
    The classification of nondegenerate \casec signatures into \casecz, \casecobg, \casecosep, \casecodz, and \casecogen is mutually exclusive and exhaustive.
\end{corollary}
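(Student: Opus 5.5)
We prove exhaustiveness first and then mutual exclusivity. Both parts come down to the fact that any two $\casecsupp$-forms of the same signature are related by an element of $\stab(\casecsupp)$, acting through $\psi(b,d)$.

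\emph{Exhaustiveness.} Let $\mathbf{F}$ be a nondegenerate \casec signature. Choose $T \in \so(3,\mathbb{C})$ with $\Yinv T\mathbf{F} = [x;\lambda,y;0,0,z;0,0,0,w] \in \casecsupp$ and split into cases.
\begin{itemize}
\item If $\lambda = 0$, then $\langle \Yinv T \mathbf{F}, \mathbf{e}_2\rangle = 0$. Hence $T\mathbf{F}$ is annihilated by $\bzp$, and $\mathbf{F}$ is \casecz.
\item If $\lambda \neq 0$ and $w = 0$, there are two subcases. When $z \neq 0$ the signature is \casecodz by definition. When $z = 0$, the converse part of \cref{cor:casecobg-form} lets us kill $y$, so the signature is \casecobg.
\item If $\lambda, w \neq 0$ and $P = z^2 - z\lambda - yw \neq 0$, the signature is \casecogen by definition.
\item If $\lambda, w \neq 0$ and $P = 0$, the converse part of \cref{cor:casecosep-form} (via \cref{prop:if-P-0-then-c1-sep}) puts it in the \casecosep defining form.
\end{itemize}

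\emph{Mutual exclusivity.} Suppose $\mathbf{F}$ has defining forms of two different subtypes, via $T$ and $S$. Both $\Yinv T\mathbf{F}$ and $\Yinv S\mathbf{F}$ lie in $\casecsupp$ and are nondegenerate of arity $3$. By \cref{prop:casecsupp-fixed-under-rho-c-0}, the element $\Yinv S T^{-1}\Y = \rho(M)$ has $c = 0$. Up to a harmless sign or scalar from the determinant normalization, it therefore acts on the coordinates $(x,y,z,w,\lambda)$ by some $\psi(b,d)$ with $d \neq 0$. It then suffices to show that the subtype read off from the coordinates is invariant under every $\psi(b,d)$.
\begin{itemize}
\item By \cref{eq:representation-stab-c}, $\lambda' = d^2\lambda$, so whether $\lambda = 0$ is preserved. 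This separates \casecz from the \caseco types; \cref{prop:if-casec0-in-c-then-supp-br} gives the same conclusion.
\item From the fourth row of $\psi$, $w' = w$, so whether $w = 0$ is preserved. This separates $\{\casecobgmath, \casecodzmath\}$ from $\{\casecosepmath, \casecogenmath\}$.
\item When $w = 0$, $z' = d^2 z$, so whether $z = 0$ is preserved. This separates \casecobg from \casecodz, in line with \cref{cor:casecobg-form,cor:casecodz-form}.
\item When $w \neq 0$, \cref{prop:weird-polynomial-invariant-under-stab} gives $P' = d^4 P$, so whether $P = 0$ is preserved. This separates \casecosep (where $P = 0$ on the defining form) from \casecogen, via \cref{cor:casecosep-form,cor:casecogen-form}.
\end{itemize}
For the exclusivity of \casecz against the rest we could also use the annihilator argument of \cref{lem:types-distinct}.

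\emph{Main obstacle.} The delicate step is justifying the reduction to $\psi(b,d)$. The possible overall scalar multiple (signatures are taken up to scaling) and the use of $\ortho$ rather than $\so$ must be absorbed. Since $-I$ acts on ternary signatures by $-1$, and every invariant above is either a vanishing condition or the homogeneous condition $P = 0$, rescaling is harmless. It also remains to check that the normalized matrices $M = \begin{bsmallmatrix} 1/d & b\\ 0 & d\end{bsmallmatrix}$ cover all of $\stab(\casecsupp)$. They do, because an element of $\speciallinear(2,\mathbb{C})$ with $c = 0$ must have $a = 1/d$.
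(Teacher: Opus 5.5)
Your proposal is correct and follows essentially the paper's route. Exhaustiveness comes from the converse parts of \cref{cor:casecobg-form,cor:casecosep-form}, and exclusivity from reducing to $\stab(\casecsupp)$ via \cref{prop:casecsupp-fixed-under-rho-c-0} and observing that the cells of \cref{fig:each-casec-type-just-in-C} partition $\casecsupp$ and are preserved by $\psi(b,d)$. The only difference is that you read the invariance of $\lambda\neq 0$, of $w$, of $z$ (when $w=0$) and of $P$ directly off \cref{eq:representation-stab-c} and \cref{prop:weird-polynomial-invariant-under-stab}, whereas the paper cites the packaged statements \cref{prop:if-casec0-in-c-then-supp-br,cor:casecobg-form,cor:casecodz-form,cor:casecosep-form,cor:casecogen-form}.
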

\begin{proof}
Suppose $\mathbf{F}$ is a nondegenerate \casecz, \casecobg, \casecosep, \casecodz, or \casecogen signature.
Then, by definition, there exists some $T \in \so(3, \mathbb{C})$ such that
$\Yinv T \mathbf{F}$ takes the form in \cref{fig:each-casec-type}.
    \cref{prop:if-casec0-in-c-then-supp-br,cor:casecobg-form,cor:casecodz-form,cor:casecogen-form,cor:casecosep-form} show that for any $S \in \so(3, \mathbb{C})$ such that $\Yinv S T \mathbf{F} \in \casecsupp$, then it must take the form in \cref{fig:each-casec-type-just-in-C}. 
    We can easily see that the conditions in \cref{fig:each-casec-type-just-in-C} partitions $\casecsupp$. 
    For \casecobg and \casecosep, \cref{cor:casecobg-form} and \cref{cor:casecosep-form} show that a signature taking form in \cref{fig:each-casec-type-just-in-C} can be transformed to a form in \cref{fig:each-casec-type}. 
    Therefore, \cref{fig:each-casec-type} partitions nondegenerate signatures in $\casecsupp$ signatures as well.
    
    If $\mathbf{F}$ is a \casec signature, then there exists some $T \in \so(3, \mathbb{C})$ such that $\Yinv T \mathbf{F} \in \casecsupp$.
    Therefore, the types \casecz, \casecobg, \casecosep, \casecodz, or \casecogen partition \casec. 
\end{proof}

\subsection{A Combinatorial Interpretation of \texorpdfstring{$\stab(\casecsupp)$}{Stab(C)}}
We give a combinatorial interpretation of this interaction between $\so(3, \mathbb{C}, \X)$ and $\casecsupp$.
First, we notice that $\rho(M)$ acting on $\iny{\mathbf{F}} \in \casecsupp$ can be viewed as $\rho(M) \iny{\mathbf{F}} = (\rho(M) \X) \X \iny{\mathbf{F}}$.
If $M = \begin{bmatrix}
    a & b \\ 0 & d
\end{bmatrix}$, then
\[\rho(M) \X = \begin{bmatrix}
        -b^2 & d^2 & \sqrt{2}bd \\
        a^2  & 0 & 0 \\
        -\sqrt{2}ab & 0 & ad
      \end{bmatrix} \in \casecsupp\] 
and thus we can view $\rho(M) \X$ as a binary signature in $\casecsupp$, 
and $\rho(M) \iny{\mathbf{F}}$ as a gadget on the right side in the $\holts(\X \mid \rho(M) \X, \iny{\mathbf{F}})$ setting.
By the same argument as in \cref{sec:casec}, we can show that $\rho(M) \iny{\mathbf{F}} \in \casecsupp$.

Also, $\psi(b,d)$ in \cref{eq:representation-stab-c} is upper triangular.
A combinatorial interpretation is that the values of $(\rho(M) \iny{\mathbf{F}})(i, j, k)$ depend on those $\iny{\mathbf{F}}(i', j', k')$ such that the total number of $\db$ in $i', j', k'$ is at most
the total number of $\db$ in $i, j, k$.
For example, $\rho(M) \iny{\mathbf{F}}(\db,\dr,\dr)$ depends on $\iny{\mathbf{F}}(\db,\dr,\dr)$ and $\iny{\mathbf{F}}(\dr, \dr,\dr)$ but not on $\iny{\mathbf{F}}(\db,\db,\db), \iny{\mathbf{F}}(\db,\db,\dg), \iny{\mathbf{F}}(\db,\db,\dr)$.
And it can also be proved combinatorially for $\iny{\mathbf{G}}\X\iny{\mathbf{F}}$ any binary $\iny{\mathbf{G}} \in \casecsupp$.
\begin{figure}
\centering
\begin{tikzpicture}
  \node[mycirc, label=right:{$\iny{\mathbf{F}}$}] (f) at (0, 0) {};
\node[mysq] (x1) at (-2, 2) {};
\node[mysq] (x2) at (-2, 0) {};
\node[mysq] (x3) at (-2, -2) {};

\node[mycirc, label=above:{$\iny{\mathbf{G}}$}] (g1) at (-4, 2) {};
\node[mycirc, label=above:{$\iny{\mathbf{G}}$}] (g2) at (-4, 0) {};
\node[mycirc, label=above:{$\iny{\mathbf{G}}$}] (g3) at (-4, -2) {};

\draw[-latex] (f) to[out=90, in=0] (x1);
\node at (-0.3, 1.7) {$\db$};
\draw[-latex] (f) -- (x2) node [midway, above] {$\db$};
\draw (f) to[out=-90, in=0] (x3);

\draw[latex-] (g1) -- (x1) node [midway, above] {$\dg$};
\draw[latex-] (g2) -- (x2) node [midway, above] {$\dg$};
\draw (g3) -- (x3);

\draw[-latex] (g1) -- (-6, 2) node[midway, above] {$\db$};
\draw[-latex] (g2) -- (-6, 0) node[midway, above] {$\db$};
\draw (g3) -- (-6, -2);

\end{tikzpicture}
\caption{Number of $\db$ outside must be more than inside.}
\label{fig:number-of-B}
\end{figure}
The argument is depicted in \cref{fig:number-of-B} where the square vertices are $\X$.
We see that whenever a $\db$ is used inside for an edge of $\iny{\mathbf{F}}$, it must propagate as the outer edge of $\iny{\mathbf{G}}\X\iny{\mathbf{F}}$ as $\db$.
Therefore, the value of $\iny{\mathbf{F}}(i', j', k')$ only affects those $(\iny{\mathbf{G}}\X\iny{\mathbf{F}})(i, j, k)$ where $(i, j, k)$ has at least as many $\db$ as $(i', j', k')$.

\subsection{A Combinatorial Interpretation of \texorpdfstring{$P$}{P}}
Holant signature grids can be viewed as defining polynomials invariant under holographic transformations~\cite{converse-holant-real-young,converse-holant-cai-young}. 
Recall the polynomial $P(X, Y, Z, W, \Lambda) = Z^2 - Z \Lambda - YW$. 
Why should it be $0$ on \casecosep?
We show that there is a Holant instance that realizes this polynomial such that it is easy to see that a \casecosep signature in the defining form evaluates to $0$.

Let $A_{p, q, r} = \begin{bsmallmatrix}
    0 & p & q \\
    -p & 0 & r \\
    -q & -r & 0
\end{bsmallmatrix}$.
For $M = \begin{bsmallmatrix}
    1/d & b \\ 0 & d
\end{bsmallmatrix} \in \speciallinear(2, \mathbb{C})$, we have
$(\rho(M)^{-1})\transpose A_{p, q, r} \rho(M)^{-1
} = A_{p', q', r'}$ for some $p', q', r'$ such that $q' = q/d^2$.
Consider the signature grid $\Omega(\iny{\mathbf{F}}, p, q, r)$ in \cref{fig:holant-instance-for-P}.
Interestingly, $A_{p,q,r}$ is not necessarily fixed under $\rho(M)$, but because of the cancellations when connected to $\casecsupp$ signatures, we can use it to define a projective invariant. 
\begin{figure}
\begin{subfigure}[b]{0.45\textwidth}
\centering
\begin{tikzpicture}
  \node[mycirc, label=right:{$\iny{\mathbf{F}}$}] (f) at (0, 0) {};
\node[draw, circle, label=above:{$A_{p, q, r}$}] (x1) at (-2, 2) {};
\node[draw, circle, label=above:{$\X$}] (x2) at (-2, 0) {};
\node[draw, circle, label=above:{$A_{p, q, r}$}] (x3) at (-2, -2) {};

\node[mycirc, label=left:{$\iny{\mathbf{F}}$}] (f2) at (-4, 0) {};

\draw (f) to[out=90, in=0] (x1);
\draw (f) -- (x2);
\draw (f) to[out=-90, in=0] (x3);

\draw (f2) to[out=90, in=180] (x1);
\draw (f2) -- (x2);
\draw (f2) to[out=-90, in=180] (x3);

\end{tikzpicture}
\caption{$\Omega(\iny{\mathbf{F}}, p, q, r)$}
\label{fig:holant-instance-for-P}
\end{subfigure}
\hfill
\begin{subfigure}[b]{0.45\textwidth}
\centering
\begin{tikzpicture}
  \node[mycirc, label=right:{$\iny{\mathbf{F}}$}] (f) at (0, 0) {};
\node[draw, circle, label=above:{$A_{p, q, r}$}] (x1) at (-2, 2) {};
\node[draw, circle, label=above:{$\X$}] (x2) at (-2, 0) {};
\node[draw, circle, label=above:{$A_{p, q, r}$}] (x3) at (-2, -2) {};

\node[mycirc, label=left:{$\iny{\mathbf{F}}$}] (f2) at (-4, 0) {};

\draw[latex-] (f) to[out=90, in=0] (x1);
\draw[latex-] (f) -- (x2) node [midway, above] {$\dg$};
\draw[latex-] (f) to[out=-90, in=0] (x3);
\node () at (-3.7, 1.7) {$\dg$};
\node () at (-3.7, -1.7) {$\db$};
\node () at (-0.1, -1.7) {$\dg, \dr$};

\draw[latex-] (f2) to[out=90, in=180] (x1);
\draw[-latex] (f2) -- (x2) node [midway, above] {$\db$};
\draw[-latex] (f2) to[out=-90, in=180] (x3);

\end{tikzpicture}
\caption{Cannot use $\dg$ next to $A_{p, q, r}$}
\label{fig:vanishing-a-pqr}
\end{subfigure}
\caption{Holant signature grid to define the projective invariant $P$.}
\end{figure}

Let $H(\iny{\mathbf{F}}, p, q, r)$ be the Holant value of $\Omega(\iny{\mathbf{F}}, p, q, r)$.
\begin{proposition}
For any $\iny{\mathbf{F}} \in \casecsupp$,
$H(\iny{\mathbf{F}}, p, q, r) = H(\iny{\mathbf{F}}, 0, q, 0)$. 
\end{proposition}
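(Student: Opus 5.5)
The plan is to expand the Holant value of $\Omega(\iny{\mathbf{F}}, p, q, r)$ as a sum over edge assignments and show that every assignment placing a nonzero entry $p$ or $r$ of either $A_{p,q,r}$ on an edge contributes zero. The statement then follows immediately, since $H$ is a polynomial in $p,q,r$ and only the $q$-monomials survive. Concretely, the grid has two copies of $\iny{\mathbf{F}}$, each with three edges going respectively to $A_{p,q,r}$, $\X$, and $A_{p,q,r}$. The entries $p$ and $r$ of $A_{p,q,r}$ sit at positions $(\db,\dg),(\dg,\db),(\dg,\dr),(\dr,\dg)$. So it suffices to show that any assignment with a $\dg$ on some edge incident to an $A_{p,q,r}$ vertex gives a zero product.

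The first step is to record the support fact from $\iny{\mathbf{F}} \in \casecsupp$ (\cref{fig:casec0-ternary}). If an input of $\iny{\mathbf{F}}$ contains $\dg$, then it contains exactly one $\dg$ and all other inputs are $\db$. In particular, an input with a $\dg$ has strictly more $\db$'s than $\dg$'s.

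The second step is a counting argument in the spirit of \cref{subsec:case-c-algorithm}. Suppose some $A_{p,q,r}$ edge is assigned $\dg$ at the endpoint of, say, the right copy $\iny{\mathbf{F}}$. Then that copy has inputs $\{\dg,\db,\db\}$, so its $\X$-edge is $\db$. Through $\X$ this becomes $\dg$ at the left copy, which forces the left copy's inputs to be $\{\dg,\db,\db\}$ with both of its $A$-edges equal to $\db$. Next consider the $A$-edge of the right copy that is $\db$ (one exists besides the $\dg$ one). Its partner on the left is $\db$, and $A(\db,\db)=0$ because $A_{p,q,r}$ is antisymmetric with zero diagonal. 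So the product vanishes. The case where the $\dg$ appears on the left copy is symmetric. If instead a $\dg$ appears only on the $\X$-edge, both sides are forced into the pattern $\{\dg,\db,\db\}$ with the $\dg$ at the $\X$ edge, which is consistent only with $\X(\dg,\db)$. But then the two $A$-edges on each side are $\db$, and $A(\db,\db)=0$ again kills the term.

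The conclusion is that all nonzero assignments avoid $\dg$ on $A$-edges. On these edges only the entries $A(\db,\dr)=q$, $A(\dr,\db)=-q$, and $A(\db,\db)=A(\dr,\dr)=0$ are ever used, and these entries are identical to those of $A_{0,q,0}$. Hence $H(\iny{\mathbf{F}},p,q,r)=H(\iny{\mathbf{F}},0,q,0)$. The main obstacle is the careful case check of where a $\dg$ can sit. The key observation I would rely on is that any $\dg$ forces the neighboring $A$-edges on both copies to be $\db$, and $A$ has zero diagonal.
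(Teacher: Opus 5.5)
Your main argument is correct and is essentially the paper's. A $\dg$ on an edge between an $A_{p,q,r}$ and one copy of $\iny{\mathbf{F}}$ forces that copy into $\{\dg,\db,\db\}$. Then $\X$ turns its $\db$ into a $\dg$ on the other copy, which forces both of that copy's $A$-edges to be $\db$. The remaining $A_{p,q,r}$ is therefore evaluated at $(\db,\db)$ and vanishes. The paper reaches the same contradiction one step later: the second copy would receive $\dg$ together with $\dg$ or $\dr$, which lies outside the support allowed by $\casecsupp$.

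Your third case, however, is false and should be deleted. Write $\iny{\mathbf{F}}=[X;\Lambda,Y;0,0,Z;0,0,0,W]$ and suppose a $\dg$ sits only on an $\X$-edge.
\begin{itemize}
\item Both copies cannot have $\dg$ on their $\X$-edges, since $\X(\dg,\dg)=0$. So only the copy receiving that $\dg$ is forced to $(\db,\dg,\db)$.
\item The other copy receives $\db$ through $\X$, and its two $A$-edges may both be $\dr$.
\item Order the inputs as (top $A$, $\X$, bottom $A$). The assignment with inputs $(\db,\dg,\db)$ and $(\dr,\db,\dr)$ then has weight $q^2\Lambda Z$ up to sign, which is nonzero in general.
\end{itemize}
These are exactly the terms that produce the $Z\Lambda$ part of $H(\iny{\mathbf{F}},0,q,0)=-2q^2(Z^2-Z\Lambda-YW)$.

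Fortunately this case plays no role. Such assignments place no $\dg$ on an $A$-edge, so your concluding step already covers them. With the case removed, your proof is complete.
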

\begin{proof}
We claim that any edge adjacent to $A_{p, q, r}$ cannot be assigned to $\dg$. 
See \cref{fig:vanishing-a-pqr} for reference.
Suppose the left edge to the $A_{p, q,r}$ at the top gets $\dg$.
Then, the other two edges out of $\iny{\mathbf{F}}$ on the left must get $\db$. 
Through $\X$, this propagates as $\dg$.
Through the bottom $A_{p,q,r}$, this propagates as $\dg$ or $\dr$.
We see that the right side always evaluates to $0$ because $\iny{\mathbf{F}}(\dg,\dg, *) = 0$ and  $\iny{\mathbf{F}}(\dg,\dr, *) = 0$. 
This means that the only surviving assignments use $\db$ or $\dr$ for edges adjacent to $A_{p, q, r}$, so 
the Holant value would not change if we replace $A_{p,q,r}$ with $A_{0, q, 0}$.
Therefore, $H(\iny{\mathbf{F}}, p, q, r) = H(\iny{\mathbf{F}}, 0, q, 0)$.
\end{proof}

The reason we choose this particular Holant instance is because we can easily see that if $\iny{\mathbf{F}} = [x; \lambda, 0; 0, 0, 0; 0, 0, 0, w]$, i.e., $\iny{\mathbf{F}}$ is \sepbg, then $H(\iny{\mathbf{F}}, 0, q, 0) = 0$. 
This is because if we look at just the two $\iny{\mathbf{F}}$ and $\X$ part of $\Omega(\iny{\mathbf{F}}, p, q, r)$, it is \sepbg as well because $\iny{\mathbf{F}}$ and $\X$ are \sepbg. 
But $A_{0, q, 0}$ is only supported on $\{\db\dr, \dr\db\}$, so $H(\iny{\mathbf{F}}, 0, q, 0) = 0$.

If we view $\iny{\mathbf{F}} \in \casecsupp$ as a signature with variables $X, Y, Z, W, \Lambda$, as $\iny{\mathbf{F}} = [X; \Lambda, Y; 0, 0, Z; 0, 0, 0, W]$, 
we can compute that $H(\iny{\mathbf{F}}, 0, q, 0) = -2q^2(Z^2 - Z \Lambda - YW) = - 2q^2 P(X, Y, Z, W, \Lambda)$. 
Consider applying a holographic transformation by $\rho(M)$ on $\Omega(\iny{\mathbf{F}}, p, q, r)$, viewing it as a $\holts(\X, A_{p, q, r} \mid \iny{\mathbf{F}})$ instance.
We get 
\[\holts( \X, (\rho(M)^{-1})\transpose  A_{p, q, r} \rho(M)^{-1} \mid \rho(M) \iny{\mathbf{F}}) = 
\holts( \X, A_{p', q', r'} \mid \rho(M) \iny{\mathbf{F}}) 
\] for some $p', q', r'$ such that $q' = q/d^2$.
Let $\iny{\mathbf{F}}' = \rho(M) \iny{\mathbf{F}} = [X'; \Lambda', Y'; 0, 0, Z'; 0, 0, 0, W']$.
The resulting instance is
$\Omega(\iny{\mathbf{F}}', p', q', r')$.
Since the Holant value is preserved, we have 
\begin{align*}
 P(X, Y, Z, W, \Lambda) &= -\frac{1}{2q^2} H(\iny{\mathbf{F}}, 0, q, 0) \\
 &= -\frac{1}{2q^2}  H(\iny{\mathbf{F}}', p', q', r') \\
 &= -\frac{1}{2q^2}  H(\iny{\mathbf{F}}', 0, q', 0) \\
 &=  -\frac{1}{2q^2}  (- 2 (q')^2 ((Z')^2 - Z' \Lambda' - Y'W')) \\
 &= \frac{1}{d^4} P(X', Y', Z', W', \Lambda') \\
 &=\frac{1}{d^4} P(\psi(b, d) (X, Y, Z, W, \Lambda))
 \end{align*}
giving an alternative proof of \cref{prop:weird-polynomial-invariant-under-stab}.

Of course, this does not show that $\Omega(\iny{\mathbf{F}}, p, q, r) = 0$ implies that we can transform $\iny{\mathbf{F}}$ to be \sepbg. 
In general, the converse of Holant theorem is not true when there are vanishing signatures~\cite{converse-holant-cai-young}.

\section{Tractability}\label{sec:tractable-cases}
We discuss the tractable cases and give an algorithm for each. 
The tractable cases given in \cref{subsec:tractable-case-binary,subsec:tractable-e,subsec:tractable-bg-r} can be viewed as a generalization of tractable cases $\mathscr{A}$, $\mathscr{B}$, and $\mathscr{D}$ given in~\cite{domain3-real-set}. 
We give a brief sketch of the algorithm for those cases. 
The tractable cases given in \cref{subsec:tractable-case-c,subsec:tractable-case-ebd} use the new algorithms we present in this paper.
A summary of the tractable cases can be found in \cref{thm:tractable-cases}.

One could expect that the number of tractable cases would significantly increase compared to that of the dichotomy of real-valued signatures in~\cite{domain3-real-set}, which gave $5$ cases each with a different algorithm.
However, some of the distinct tractable cases in~\cite{domain3-real-set}, in particular $\mathscr{C}$ and $\mathscr{E}$, can be combined using our new criteria for pairwise tractability for signatures with non-trivial annihilator. 
We remark that $\mathscr{C}$, $\mathscr{D}$ and $\mathscr{E}$ of~\cite{domain3-real-set} when every signature of arity $3$ or higher has a nonzero annihilator can be solved by one algorithm that we give in \cref{thm:algorithm-ebd-with-binary}. 
This algorithm also solves the examples given in \cref{subsec:tractable-gadget-meet}.
A downside is that we do not give an explicit expression for the binary signatures. 
However, the dichotomy criteria is still decidable.

Let $\mathcal{F}$ be a set of nondegenerate, complex-valued, symmetric signatures over domain $\{\db,\dg,\dr\}$ of arity $\ge 3$.

\subsection{Binary Signatures}
\label{subsec:tractable-case-binary}
Suppose every signature in $\mathcal{F}$ has arity $\le 2$. 
Then, $\holts(\mathcal{F})$ is tractable. 
This is a standard tractable case in the Holant theory, 
for which we can do a matrix multiplication for each connected component and take the trace. 

\subsection{\texorpdfstring{$\mathcal{E}$}{E} signatures}
\label{subsec:tractable-e}
Suppose there exists some $T \in \ortho(3, \mathbb{C})$ such that $T \mathcal{F} \subset \mathcal{E}$ or $\Yinv T \mathcal{F} \subset \mathcal{E}$. 
For the first, we have $\holts(\mathcal{F}) =_T \holts(T \mathcal{F}) \le_T \holts(\mathcal{E})$.
For the second, we have $\holts(\mathcal{F}) =_T \holts(\X \mid \Yinv T \mathcal{F}) \le_T \holts(\X \mid \mathcal{E}) \le_T \holts(\mathcal{E})$ because $\X \in \mathcal{E}$. 
Again, this is a well known tractable case, because any choice of assignment for a single edge in a given input grid must propagate uniquely to the whole connected component. 
We can use the same algorithm as $\mathscr{B}$ of~\cite{domain3-real-set} by noticing that the values being real or complex does not matter.
Alternatively, it can also be viewed as a higher domain generalization of the tractable case given in~\cite{holant-star-asymmetric}.

\subsection{Domain Separated Signatures}
\label{subsec:tractable-bg-r}
Suppose
there exists some $T \in \ortho(3, \mathbb{C})$ such that 
$T \mathcal{F}$ has the following properties:
\begin{enumerate}
    \item For all $\mathbf{F} \in T \mathcal{F}$ of arity $3$ or higher, $\mathbf{F}$ is \sepbg.
    \item For all binary $\mathbf{G} \in T \mathcal{F}$, either $\mathbf{G}$ is \sepbg or \swbg.
    \item $\holbs((T \mathcal{F})\domres{\db\dg})$ is tractable.
\end{enumerate}
This can be solved by the same algorithm as  $\mathscr{D}$ in~\cite{domain3-real-set} 
since the values being real or complex does not impact the algorithm. 
Note that in our case, the signatures of arity $3$ or higher could be \casecosep or \casecobg, which was not included in~\cite{domain3-real-set}.

We provide a sketch of the algorithm.
We may assume that the $\holts(T \mathcal{F})$ input grid $\Omega$ is connected.
If no \swbg binary signature is used, then every signature used in $\Omega$ is \sepbg. 
Then, for any nonzero edge assignment $\sigma$, $\sigma$ must only use $\{\db,\dg\}$ or only use $\dr$. 
The sum of the values for the assignment of the second kind can be easily computed as there is only one such assignment. 
The  sum of the values for the assignments of the first kind can be computed in polynomial time because it can be viewed as the Holant value of a $\holbs((T \mathcal{F})\domres{\db\dg})$ grid, which is assumed to be tractable. 
If a \swbg signature is used, then we can argue that any cycle containing a \swbg signature must use a \swbg signature even number of times for $\hol_\Omega$ to be nonzero.
This is because the edge assignment of the two adjacent edges of a \swbg signature must alternate between $\{\db, \dg\}$ and $\dr$. 
Therefore, if we view $\Omega$ as a graph with \sepbg signatures forming clusters and \swbg signatures as edges between the clusters, it must be bipartite. 
There are two kinds of nonzero assignments, that the clusters on the left side get assigned $\{\db, \dg\}$ and those on the right side get assigned $\dr$, or the opposite.
Noticing that a \swbg matrix $\begin{bsmallmatrix}
    0 & 0 & a \\ 0 & 0 & b \\ c & d & 0 
\end{bsmallmatrix}$ can be decomposed as $(a, b, 0) \otimes \mathbf{e}_3 + \mathbf{e}_3 \otimes (c, d, 0)$, 
we can compute the sum of the values for the assignments of the first kind by replacing each \swbg with the $(*, *, 0) \otimes \mathbf{e}_3$ component.
This separates $\Omega$ into left and right, and the left side clusters can be viewed as instances of $\holbs((T \mathcal{F}) \domres{\db\dg})$, while the right side gets all $\dr$. 
The sum of the values for the assignments of the second kind can be computed in polynomial time by a symmetric argument, and hence $\hol_{\Omega}$ can be computed in polynomial time.
See~\cite{domain3-real-set} for more details.

\subsection{\texorpdfstring{$\casecsupp$}{C} Signatures}
\label{subsec:tractable-case-c}
Suppose there exists some $T \in \ortho(3, \mathbb{C})$ such that
$\Yinv T \mathcal{F} \subset \casecsupp$.
Then, $\holts(\mathcal{F}) =_T \holts(\X \mid \Yinv T \mathcal{F}) \le_T \holts(\X \mid \casecsupp)$.
An algorithm for $\holts(\mathbf{X} \mid \casecsupp)$ is described in \cref{subsec:case-c-algorithm}. 

\subsection{Signatures with Annihilators}
\label{subsec:tractable-case-ebd}
The algorithm for this case is a little bit more involved. 
We explain the main intuition. 
First, we go back to the Boolean domain $\hols$ algorithms. 
We can separate them into two kinds.

Type $\typei(a, b)$ for $b \ne \pm 2 \i a$ and type $\typeii$ can be viewed as an algorithm by \emph{unique propagation}: after a suitable holographic transformation, we get $\holbs(\mathcal{E}_2)$ or $\holbs(\bne \mid \mathcal{E}_2)$, in which $\mathcal{E}_2$ is the Boolean domain analogue of $\mathcal{E}$. 
Then, assigning a domain element to one edge propagates uniquely to the whole connected component. 

Type $\typei(a, b)$ for $b = \pm 2 \i a$ can be viewed as a \emph{counting and vanishing argument}:
after a suitable holographic transformation, 
we get $\holbs(\bne \mid \mathcal{M})$, in which 
$\mathcal{M}$ is the set of signatures of the form $[*, *, 0, \ldots, 0]$. 
Then, unless every signature is binary after absorbing the unary signatures, in any edge assignment, there is a mismatch in the number of domain elements $0, 1$ on the edges, between the constraints imposed by the left side and the right side, resulting in a Holant value of $0$.

We define the following notion of a \emph{compatible pair} to state the algorithm. 
It will be proved later in \cref{lem:not-orthogonal-pair-dichotomy-ternary} that for a pair of ternary signatures $\mathbf{F}_1, \mathbf{F}_2$ with nonzero annihilators, if $(\mathbf{F}_1, \mathbf{F}_2)$ is not a compatible pair,
then $\holts(\mathbf{F}_1, \mathbf{F}_2)$ is \sph. 
\begin{definition}[Compatible Pairs]
\label{def:compatible-pairs}
  Let $\mathbf{F}_1, \mathbf{F}_2$ be nondegenerate symmetric signatures of arity $3$ or higher such that $\ann(\mathbf{F}_i) = \spn(\mathbf{n}_i) \ne 0$ for $i = 1, 2$.
  We say $(\mathbf{F}_1, \mathbf{F}_2)$ is a \emph{compatible pair} if one of the following holds.
\begin{enumerate}
  \item \label{itm:compatible-pair-orthogonal-ann} $\langle \mathbf{n}_1, \mathbf{n}_2 \rangle = 0$.
  \item \label{itm:compatible-pair-unique-prop} For some $\mathbf{u}_i, \mathbf{v}_i \in \mathbb{C}^3$, 
  $\mathbf{F}_i = \mathbf{u}_i^{\otimes m_i} + \mathbf{v}_i^{\otimes m_i}$ ($i=1,2$),
  satisfying
   $\langle \mathbf{u}_1, \mathbf{u}_2 \rangle =\langle \mathbf{v}_1, \mathbf{v}_2 \rangle = 0$.

  \item \label{itm:compatible-pair-vanish} For some $\mathbf{u}_i, \mathbf{v}_i \in \mathbb{C}^3$, and $\lambda_i \in \mathbb{C}$, $\mathbf{F}_i = \lambda_i \mathbf{u}_i^{\otimes m_i} + \sym(\mathbf{u}_i, \ldots, \mathbf{u}_i, \mathbf{v}_i)$ ($i=1,2$),  satisfying  $\langle \mathbf{u}_1, \mathbf{u}_2 \rangle = 0$. 
\end{enumerate}
\end{definition}

We can relate this definition with the above discussion of the Boolean domain tractable cases. 
For \cref{itm:compatible-pair-unique-prop} and \cref{itm:compatible-pair-vanish}, a similar constraint applies in the Boolean domain.
For example, if $\mathbf{F}_1, \mathbf{F}_2$ are type $\typei(a, b)$ signatures with $b \ne \pm 2 \i a$ 
and
$\holbs(\mathbf{F}_1, \mathbf{F}_2)$ is tractable,
then $\mathbf{F}_i = \mathbf{u}_i^{\otimes m_i} + \mathbf{v}_i^{\otimes m_i}$ with 
$\langle \mathbf{u}_1, \mathbf{u}_2 \rangle = \langle \mathbf{v}_1, \mathbf{v}_2 \rangle = 0$, for $\mathbf{u}_i, \mathbf{v}_i \in \mathbb{C}^2$.
However, the difference in $\mathbb{C}^2$ is that these $\mathbf{u}_i, \mathbf{v}_i$ are then essentially determined;
given $\mathbf{u}_1$ and $\mathbf{v}_1$, there is a unique solution up to scalar for $\mathbf{u}_2$ and $\mathbf{v}_2$. 
If $\mathcal{F}$ is a set of signatures of arity $3$ or higher containing $\mathbf{F}_1$ such that
$\holbs(\mathcal{F})$ is tractable, then
there exist $\mathbf{u}_0, \mathbf{v}_0 \in \mathbb{C}^2$ such that every signature in $\mathcal{F}$ can be written as $\mathbf{u}^{\otimes n} + \mathbf{v}^{\otimes n}$ for some $\{\mathbf{u}, \mathbf{v} \} \sim \{\mathbf{u}_0, \mathbf{v}_0 \}$.
This allows one to apply a transform to bring all $\mathcal{F}$ to $\mathcal{E}_2$, and we  have a unique propagation of the \emph{domain elements}.

In domain $3$, there are more degrees of freedom for the vectors in $\mathbb{C}^3$. 
The orthogonality requirements in condition 2 of \cref{def:compatible-pairs} do not imply the existence of a transform to bring every such signature simultaneously to $\mathcal{E}$. 
Some examples were given in \cref{subsec:tractable-gadget-meet}.
Instead, it is more helpful to view the computation of Holant value in a \emph{basis free} way; rather than trying to argue that there is some change of basis that gives a unique propagation of the \emph{choice of the domain elements}, we instead argue directly on the tensor summation expression that there is a unique propagation in the \emph{choice of the summand}. 
Indeed, this is why we can have a \caseatwo signature (which corresponds to $\typei(a, b)$ for $b \ne \pm 2 \i a$) and a \casebtwo signature (which corresponds to $\typeii$) interacting in a nondegenerate way but still tractable, because they are of the same unique propagation kind. 
Similarly, the orthogonality requirement in condition 3 can be seen as a sufficient condition for 
counting and vanishing of the choice of the summand. 

In other words, in the Boolean domain, the lack of degrees of freedom means there is a direct correspondence between the domain elements and the summands after a transform: for instance, if $\mathbf{u}_0, \mathbf{v}_0$ are non-isotropic and orthogonal, there exists some orthogonal $T$ such that $T \mathbf{u}_0 = \sbto$ and $T \mathbf{v}_0 = \sbtt$. 
This makes every signature in $T \mathcal{F}$ to be of the form $[*, 0, \ldots, 0, *] = x \sbto^{\otimes n} + y \sbtt^{\otimes n}$.
To solve $\holbs(T \mathcal{F})$, we can easily see that either all edges must get $0$ or all edges must get $1$ because the domain elements must propagate uniquely to the whole connected component. 
A different perspective is that if we choose the term $\sim \sbto^{\otimes n}$ for a vertex, we cannot choose the term $\sim \sbtt^{\otimes m}$ for the neighboring vertex because $\langle \sbto, \sbtt \rangle = 0$.
There is a corresponding choice in the $\holbs(\mathcal{F})$ grid,
in which if we choose $\sim \mathbf{u}_0^{\otimes n}$ for a vertex, we cannot choose $\sim \mathbf{v}_0^{\otimes m}$ for the neighboring vertex because $\langle \mathbf{u}_0, \mathbf{v}_0 \rangle = 0$.
Because it is possible to transform $\mathcal{F}$ into $\mathcal{E}_2$, it feels more natural to solve $\holbs(\mathcal{F})$ after the $T$ transform. 
In domain $3$, because of the additional degrees of freedom, it is simply not possible to solve certain tractable cases as a unique propagation of the domain elements, and a more natural algorithm is given by viewing it as a unique propagation of the summands. 

\cref{itm:compatible-pair-orthogonal-ann} of \cref{def:compatible-pairs} is related to the idea that some signatures interact only degenerately, and their interaction cannot give rise to \sphness, which leads to the next proposition. 
      \begin{proposition}\label{prop:orthogonal-normal-vector-edge-factors}
        Let $\mathbf{F}_1$ and $\mathbf{F}_2$ be nondegenerate symmetric ternary signatures on domain $3$
        with $\ann(\mathbf{F}_i) = \spn(\mathbf{n}_i) \ne 0$.
        Suppose $\langle \mathbf{n}_1, \mathbf{n}_2 \rangle = 0$.
        Then, in any signature grid $\Omega$ containing $\mathbf{F}_1$ and $\mathbf{F}_2$, an edge between $\mathbf{F}_1$ and $\mathbf{F}_2$ can be replaced with $\mathbf{v}_1 \otimes \mathbf{v}_2$ for some $\mathbf{v}_1, \mathbf{v}_2 \in \mathbb{C}^3$.
      \end{proposition}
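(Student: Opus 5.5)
The plan is to view an edge between $\mathbf{F}_1$ and $\mathbf{F}_2$ as a contraction of one index of $\mathbf{F}_1$ against one index of $\mathbf{F}_2$ through the identity $I = \sum_{k} \mathbf{e}_k \otimes \mathbf{e}_k$. The key freedom is that the identity can be written in any orthonormal-type dual basis: for any basis $\mathbf{a}_1,\mathbf{a}_2,\mathbf{a}_3$ with dual basis $\mathbf{b}_1,\mathbf{b}_2,\mathbf{b}_3$ (so $\langle \mathbf{a}_i,\mathbf{b}_j\rangle = \delta_{ij}$), we have $I = \sum_k \mathbf{a}_k \otimes \mathbf{b}_k$. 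Contracting $\mathbf{F}_1$ with $\mathbf{F}_2$ along the edge therefore equals $\sum_k \langle \mathbf{F}_1, \mathbf{a}_k\rangle \otimes \langle \mathbf{F}_2, \mathbf{b}_k\rangle$. I will choose the basis so that every term except one is killed by the annihilators.

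Concretely, first suppose $\mathbf{n}_1 \not\sim \mathbf{n}_2$ (both nonzero). Since $\langle \mathbf{n}_1,\mathbf{n}_2\rangle = 0$, I want $\mathbf{a}_1 = \mathbf{n}_1$ and $\mathbf{b}_2 = \mathbf{n}_2$. Choose $\mathbf{a}_2$ with $\langle \mathbf{a}_2,\mathbf{n}_2\rangle = 1$, which is possible since $\mathbf{n}_2\neq 0$; then complete to a basis $\mathbf{a}_1,\mathbf{a}_2,\mathbf{a}_3$ and take the dual basis. I must check consistency: the dual vector $\mathbf{b}_2$ is determined by $\langle \mathbf{a}_1,\mathbf{b}_2\rangle=0$, $\langle \mathbf{a}_2,\mathbf{b}_2\rangle=1$, $\langle \mathbf{a}_3,\mathbf{b}_2\rangle=0$. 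Here $\mathbf{n}_2$ already satisfies the first two conditions, since $\langle \mathbf{n}_1,\mathbf{n}_2\rangle=0$, so I choose $\mathbf{a}_3$ in $\mathbf{n}_2^{\perp}$ independent of $\mathbf{n}_1$. This is possible because $\mathbf{n}_2^\perp$ is $2$-dimensional and contains $\mathbf{n}_1$. I also need $\mathbf{a}_1,\mathbf{a}_2,\mathbf{a}_3$ independent, and this holds because $\mathbf{a}_2\notin \mathbf{n}_2^\perp = \spn(\mathbf{a}_1,\mathbf{a}_3)$. Then $\mathbf{b}_2 = \mathbf{n}_2$. In the sum, the $k=1$ term vanishes since $\langle \mathbf{F}_1,\mathbf{n}_1\rangle=0$, and the $k=2$ term vanishes since $\langle \mathbf{F}_2,\mathbf{n}_2\rangle = 0$. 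What remains is $\langle \mathbf{F}_1,\mathbf{a}_3\rangle\otimes\langle\mathbf{F}_2,\mathbf{b}_3\rangle$, i.e.\ the edge is replaced by $\mathbf{v}_1\otimes\mathbf{v}_2$ with $\mathbf{v}_1=\mathbf{a}_3$ attached to $\mathbf{F}_1$ and $\mathbf{v}_2=\mathbf{b}_3$ attached to $\mathbf{F}_2$.

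If instead $\mathbf{n}_1\sim\mathbf{n}_2 =: \mathbf{n}$, the hypothesis forces $\mathbf{n}$ to be isotropic. In this case I take $\mathbf{a}_1=\mathbf{n}$ and choose the basis so that $\mathbf{b}_2 = \mathbf{n}$, which works by the same construction with $\mathbf{n}_1=\mathbf{n}_2=\mathbf{n}$, because $\mathbf{n}\in\mathbf{n}^\perp$. The same cancellation then leaves a single term.

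The main point to verify is the linear-algebra step of simultaneously achieving $\mathbf{a}_1\sim\mathbf{n}_1$ and $\mathbf{b}_2\sim\mathbf{n}_2$ in dual bases, and in particular the isotropic coincident case. The contraction identity itself is routine, and the argument is local to the edge, so it applies in any grid $\Omega$.
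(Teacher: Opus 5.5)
Your proof is correct, and it takes a different route from the paper's. The paper splits into three cases according to whether $\mathbf{n}_1,\mathbf{n}_2$ are isotropic. In each case it uses lemmas from the earlier single-signature dichotomy to find an orthogonal $T$ that puts the annihilators in canonical position: $T\mathbf{n}_i\sim\mathbf{e}_i$, or $T\mathbf{n}_1\sim\mathbf{e}_3$ and $T\mathbf{n}_2\sim\bzp$, or $T\mathbf{n}_1\sim T\mathbf{n}_2\sim\bzp$. It then expands the identity in a fixed basis, either $\sum_i\mathbf{e}_i\otimes\mathbf{e}_i$ or $\bzp\otimes\bzm+\bzm\otimes\bzp+\mathbf{e}_3\otimes\mathbf{e}_3$, and observes that all terms but one are killed.

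You instead expand $I$ in a pair of dual bases adapted to the annihilators, with $\mathbf{a}_1=\mathbf{n}_1$ and $\mathbf{b}_2=\mathbf{n}_2$. This needs only $\mathbf{n}_1,\mathbf{n}_2\neq 0$ and $\mathbf{n}_1\in\mathbf{n}_2^\perp$. Your ``first case'' construction never actually uses $\mathbf{n}_1\not\sim\mathbf{n}_2$, so the separate coincident isotropic case is unnecessary. A single argument covers every configuration, with no isotropy case analysis and no appeal to orthogonal normal forms.

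In effect you have directly proved the direction $2\Rightarrow 1$ of \cref{prop:general-edge-factor-condition-domain-3}, combined with \cref{lem:edge-factors-general-case}. The paper develops these later, in general form, for the algorithm with binary signatures. The paper's route gives a concrete picture in canonical coordinates: the surviving term is $\mathbf{e}_3\otimes\mathbf{e}_3$ or $\bzp\otimes\bzm$, which matches the support-pattern viewpoint used throughout. That comes at the cost of a case split and external lemmas. Yours is shorter, elementary, and basis-free.
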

      \begin{proof}
        Suppose $\mathbf{n}_1$ and $\mathbf{n}_2$ are not isotropic.
        Then, there exists an orthogonal matrix $T$ such that $T \mathbf{n}_i \sim \mathbf{e}_i$.
        We see that then $T \mathbf{F}_1$ is supported on $\dg\dr$ and $T \mathbf{F}_2$ is supported on $\db\dr$.
        Therefore, any connection between $T \mathbf{F}_1$ and $T \mathbf{F}_2$ factors as $\mathbf{e}_3$.
        Another way to view this is that we can write $I_3 = \mathbf{e}_1 \otimes \mathbf{e}_1 + \mathbf{e}_2 \otimes \mathbf{e}_2 + \mathbf{e}_3 \otimes \mathbf{e}_3$, and the only term that survives for an edge between $T \mathbf{F}_1$ and $T \mathbf{F}_2$ is $\mathbf{e}_3 \otimes \mathbf{e}_3$.

        Suppose $\mathbf{n}_1$ is not isotropic and $\mathbf{n}_2$ is isotropic.
        By Lemma 3.6 (arXiv version) of~\cite{domain3-single-signature}, there exists an orthogonal $T$ such that $T \mathbf{n}_1 \sim \mathbf{e}_3$ and $T \mathbf{n}_2 \sim \bzp$.
        When there is an edge between $T \mathbf{F}_1$ and $T \mathbf{F}_2$, we can write $I_3 = \bzp \otimes \bzm + \bzm \otimes \bzp + \mathbf{e}_3 \otimes \mathbf{e}_3$.
        We see that only $\bzp \otimes \bzm$ survives since the other terms kill one of $T \mathbf{F}_1$ or $T \mathbf{F}_2$.
        So, the edge factors degenerately as $\bzp \otimes \bzm$.

        Suppose $\mathbf{n}_1$ and $\mathbf{n}_2$ are both isotropic.
        Note that since $\mathbf{n}_1, \mathbf{n}_2 \in \mathbb{C}^3$, $\mathbf{n}_i$ both being isotropic and $\langle \mathbf{n}_1, \mathbf{n}_2 \rangle = 0$ implies $\mathbf{n}_1 \sim \mathbf{n}_2$, because $\begin{bsmallmatrix}
          \mathbf{n}_1 \\ \mathbf{n}_2
        \end{bsmallmatrix} \mathbf{x} = 0$ has solutions $\mathbf{n}_1$ and $\mathbf{n}_2$.
        Then, by Lemma 3.4 (arXiv version) of~\cite{domain3-single-signature}, there exists an orthogonal $T$ such that $T \mathbf{n}_i \sim \bzp $. 
        We can apply the same argument as above of writing $I_3 = \bzp \otimes \bzm + \bzm \otimes \bzp + \mathbf{e}_3 \otimes \mathbf{e}_3$, and only $\mathbf{e}_3 \otimes \mathbf{e}_3$ survives between $T \mathbf{F}_1$ and $T \mathbf{F}_2$.
      \end{proof}

We show how one example given in \cref{subsec:tractable-gadget-meet} can be solved using this idea compatible pairs.
Then, we generalize it and prove \cref{thm:ternary-set-ebd-algorithm}.

\begin{example}[\cref{ex:b2-a2-a2} revisited]\label{ex:b2-a2-a2-rev}
    Recall \cref{ex:b2-a2-a2}, in which we defined
$\mathbf{F}_1 
=  \begin{bsmallmatrix}
1 \\ \i \\0
\end{bsmallmatrix}\teh
+
\begin{bsmallmatrix}
1 \\ -\i \\0
\end{bsmallmatrix}\teh$,
$\mathbf{F}_2 = 
\begin{bsmallmatrix}
1 \\ \i \\p_2
\end{bsmallmatrix}\teh
+
\begin{bsmallmatrix}
1 \\ -\i \\q_2
\end{bsmallmatrix}\teh$
 and
$\mathbf{F}_3 = 
\begin{bsmallmatrix}
1 \\ \i \\p_3
\end{bsmallmatrix}\teh
+
\begin{bsmallmatrix}
1 \\ -\i \\q_3
\end{bsmallmatrix}\teh$,
such that $p_2 q_2 = p_3 q_3 = -2$ and $p_2/p_3 = e^{\pi \i / 3}$. 
Writing $\mathbf{F}_i = \mathbf{u}_i\teh + \mathbf{v}_i\teh$, we can see that 
$\langle \mathbf{u}_1, \mathbf{u}_2 \rangle = 0$, $\langle \mathbf{v}_1, \mathbf{v}_2 \rangle = 0$, 
$\langle \mathbf{u}_1, \mathbf{u}_3 \rangle = 0$, and $\langle \mathbf{v}_1, \mathbf{v}_3 \rangle = 0$. 
Therefore, $(\mathbf{F}_1, \mathbf{F}_2)$ and $(\mathbf{F}_1, \mathbf{F}_3)$ are compatible pairs.

Also note that $(\mathbf{F}_i, \mathbf{F}_i)$ is a compatible pair for $i = 1, 2, 3$.
For $\mathbf{F}_1$, $\langle \mathbf{u}_1, \mathbf{u}_1 \rangle = 0$ and $\langle \mathbf{v}_1, \mathbf{v}_1 \rangle = 0$. 
For $\mathbf{F}_2$, the orthogonality condition is true after reordering, by rewriting $\mathbf{u}_2\teh + \mathbf{v}_2\teh = \mathbf{v}_2\teh + \mathbf{u}_2\teh$.  

$(\mathbf{F}_2, \mathbf{F}_3)$ is also a compatible pair. 
Note that we do not necessarily have $\langle \mathbf{u}_2, \mathbf{u}_3 \rangle = 0$ or $\langle \mathbf{u}_2, \mathbf{v}_3 \rangle = 0$.
However, for $i = 2, 3$, we can compute that $\ann(\mathbf{F}_i) = \spn(\mathbf{n}_i)$ where $\mathbf{n}_i = (p_i + q_i, \i(q_i - p_i), 2)$. 
Then, 
\[\langle \mathbf{n}_2, \mathbf{n}_3 \rangle = 2 p_2 q_3 + 2 q_2 p_3 + 4 = -4(\frac{p_2}{p_3} + \frac{p_3}{p_2} - 1) = -4 (e^{\pi \i / 3} + e^{-\pi \i / 3} - 1) = 0 \, .\]
Therefore, every pair is a compatible pair.

Suppose $\Omega$ is a $\holts(\mathbf{F}_1, \mathbf{F}_2, \mathbf{F}_3)$ instance.
We illustrate the process in \cref{fig:unique-propagation}.
We can first replace every edge between $\mathbf{F}_2$ and $\mathbf{F}_3$ by two unary signatures by \cref{prop:orthogonal-normal-vector-edge-factors} because their annihilators are orthogonal. 
Let $C$ be a connected component.
Instead of computing the Holant value by the sum of product over all edge assignments to domain elements (\cref{eq:holant-value}), we may compute it by summing over all
$2^{\lvert V(C) \rvert}$ Holant instances where each one is obtained from an assignment $\sigma: V(C) \to \{\mathbf{u}, \mathbf{v}\}$ that chooses either $\mathbf{u}_i\teh$ or $\mathbf{v}_i\teh$ for each vertex $v$ with signature $\mathbf{F}_i$. 
The Holant value of each instance can be computed in polynomial time since it is solely composed of degenerate signatures, 
and we can identify the only two assignments that can
result in nonzero values, because of unique propagation:
E.g., if there is an edge between a $\mathbf{F}_1$ and $\mathbf{F}_2$ signatures,
and $\mathbf{u}_1$ is chosen for $\mathbf{F}_1$, then $\mathbf{v}_2$ must be chosen for $\mathbf{F}_2$ since $\langle \mathbf{u}_1, \mathbf{u}_2 \rangle = 0$.

\begin{figure}
\centering
\begin{tikzpicture}
  \node[mycirc, label=above right:{$\mathbf{F}_1$}] (f1) at (0, 0) {};
  \draw (f1) --++ (330:2) node[mycirc, label=above right:{$\mathbf{F}_2$}] (f2) {} --++ (330:1);
  \draw (f1) --++ (90:2) node[mycirc, label=right:{$\mathbf{F}_1$}] (f11) {} --++ (90:1);
  \draw (f1) --++ (210:2) node[mycirc, label=above left:{$\mathbf{F}_3$}] (f3) {} --++ (210:1);
  \draw (f2) --++ (f3);

  \node[mycirc, label=above right:{$\mathbf{F}_1$}] (f1) at (10, 0) {};
  \draw (f1) --++ (330:2) node[mycirc, label=above right:{$\mathbf{F}_2$}] (f2) {} --++ (330:1);
  \draw (f1) --++ (90:2) node[mycirc, label=right:{$\mathbf{F}_1$}] (f11) {} --++ (90:1);
  \draw (f1) --++ (210:2) node[mycirc, label=above left:{$\mathbf{F}_3$}] (f3) {} --++ (210:1);
  \draw (f2) --++ (240:1) node[draw, circle, fill=white, label=below right:{$\mathbf{w}_2$}] () {};
  \draw (f3) --++ (300:1) node[draw, circle, fill=white, label=below right:{$\mathbf{w}_3$}] () {};
  
  \node[mycirc, label=above right:{$\mathbf{u}_1\teh$}] (f1) at (0, -8) {};
  \draw (f1) --++ (330:2) node[mycirc, label=above right:{$\mathbf{v}_2\teh$}] (f2) {} --++ (330:1);
  \draw (f1) --++ (90:2) node[mycirc, label=right:{$\mathbf{v}_1\teh$}] (f11) {} --++ (90:1);
  \draw (f1) --++ (210:2) node[mycirc, label=above left:{$\mathbf{v}_3\teh$}] (f3) {} --++ (210:1);
  \draw (f2) --++ (240:1) node[draw, circle, fill=white, label=below right:{$\mathbf{w}_2$}] () {};
\draw (f3) --++ (300:1) node[draw, circle, fill=white, label=below right:{$\mathbf{w}_3$}] () {};

  \node[mycirc, label=above right:{$\mathbf{v}_1\teh$}] (f1) at (10, -8) {};
  \draw (f1) --++ (330:2) node[mycirc, label=above right:{$\mathbf{u}_2\teh$}] (f2) {} --++ (330:1);
  \draw (f1) --++ (90:2) node[mycirc, label=right:{$\mathbf{u}_1\teh$}] (f11) {} --++ (90:1);
  \draw (f1) --++ (210:2) node[mycirc, label=above left:{$\mathbf{u}_3\teh$}] (f3) {} --++ (210:1);
  \draw (f2) --++ (240:1) node[draw, circle, fill=white, label=below right:{$\mathbf{w}_2$}] () {};
\draw (f3) --++ (300:1) node[draw, circle, fill=white, label=below right:{$\mathbf{w}_3$}] () {};

    \draw[line width=3pt,
             arrows = {-Latex[length=0pt 4]}] (3, 0) -- (7, 0);
    \draw[line width=3pt,
             arrows = {-Latex[length=0pt 4]}] (10, -3) -- (10, -4);
    \draw[line width=3pt,
             arrows = {-Latex[length=0pt 4]}] (7, -2) -- (3, -4);
  
\end{tikzpicture}
\caption{Illustration of the algorithm in \cref{ex:b2-a2-a2-rev}}
\label{fig:unique-propagation}
\end{figure}
\end{example}

\begin{theorem}\label{thm:ternary-set-ebd-algorithm}
Let $\mathcal{F}$ be a set of nondegenerate symmetric ternary signatures with nonzero annihilators,
such that every pair of signatures in $\mathcal{F}$ is a compatible pair.
Then, $\holts(\mathcal{F})$ is tractable.
\end{theorem}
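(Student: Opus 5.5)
The plan is to generalize the algorithm of \cref{ex:b2-a2-a2-rev}. Given a connected instance $\Omega$ of $\holts(\mathcal{F})$, first use \cref{prop:orthogonal-normal-vector-edge-factors} to cut every edge joining two vertices whose signatures satisfy \cref{itm:compatible-pair-orthogonal-ann} of \cref{def:compatible-pairs}, including an $\mathbf{F}$--$\mathbf{F}$ edge when $\ann(\mathbf{F})$ is isotropic. Each such edge is replaced by a pair of unary signatures. Unary signatures are available, and the Holant value is multiplicative over components, so it suffices to handle each resulting connected component $C$. In $C$ every remaining edge joins $\mathbf{F}_i,\mathbf{F}_j$ that are compatible via \cref{itm:compatible-pair-unique-prop} or \cref{itm:compatible-pair-vanish}.

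The next step is to show that within $C$ only one of the two forms occurs. A signature $\mathbf{u}^{\otimes 3}+\mathbf{v}^{\otimes 3}$ with $\mathbf{u}\not\sim\mathbf{v}$ has symmetric rank $2$. A signature $\lambda\mathbf{u}^{\otimes 3}+\sym(\mathbf{u},\mathbf{u},\mathbf{v})$ has rank $3$ by \cref{prop:symmetric-rank-sym-sum} and the Vandermonde remark, so no signature admits both forms. Since every edge of $C$ witnesses one of the two conditions for both endpoints, connectivity shows that all signatures in $C$ are of the ``sum'' kind or all of the ``$\sym$'' kind. The case where $C$ has no remaining ternary edges is trivial. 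We fix once and for all a decomposition of each $\mathbf{F}\in\mathcal{F}$, which is essentially unique by \cref{prop:uniqueness-sum-lin-ind-tensor} and \cref{prop:uniqueness-sym-form}.

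\textbf{Sum kind.} Expand each vertex $\mathbf{F}_i=\mathbf{u}_i^{\otimes3}+\mathbf{v}_i^{\otimes3}$ as a choice of summand, so the Holant value is the sum over choices $\sigma:V(C)\to\{\mathbf{u},\mathbf{v}\}$ of a product of edge inner products, which is trivially computable per choice. The key point is unique propagation. Across an edge, compatibility gives $\langle\mathbf{u}_i,\mathbf{u}_j\rangle=\langle\mathbf{v}_i,\mathbf{v}_j\rangle=0$ after reordering the summands of $\mathbf{F}_j$. So a nonzero term forces the choice at one endpoint to determine the choice at the other: choosing $\mathbf{u}_i$ forces $\mathbf{v}_j$, and choosing $\mathbf{v}_i$ forces $\mathbf{u}_j$.

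The main obstacle is that the labeling which reorders $\mathbf{F}_j$ depends on the pair. Propagation along different paths might therefore demand inconsistent choices, and the propagation must be computed relative to the edge itself rather than from a global labeling. This is not a real problem for the algorithm. Choose the summand at one root vertex (two options), and propagate along a spanning tree, using at each edge whichever summand has nonzero inner product, or terminate with $0$. If $\langle\mathbf{u}_i,\mathbf{u}_j\rangle=0$ and $\langle\mathbf{u}_i,\mathbf{v}_j\rangle=0$ both hold, then $\mathbf{u}_i$ is orthogonal to $\spn(\mathbf{u}_j,\mathbf{v}_j)$, and the contribution of that term is $0$ anyway. Thus at most two choices $\sigma$ can give nonzero terms, and we evaluate each one, checking non-tree edges as they arise. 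Unary signatures attached to vertices are simply absorbed into the per-choice product.

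\textbf{Sym kind.} Expand $\mathbf{F}_i=\lambda_i\mathbf{u}_i^{\otimes3}+\sym(\mathbf{u}_i,\mathbf{u}_i,\mathbf{v}_i)$ into choices, at each vertex, of which incident edge (if any) carries $\mathbf{v}_i$, with all other edges carrying $\mathbf{u}_i$. An internal edge contributes $0$ if both ends carry $\mathbf{u}$, because $\langle\mathbf{u}_i,\mathbf{u}_j\rangle=0$. So every nonzero term needs at least one $\mathbf{v}$ per internal edge, while each vertex supplies at most one $\mathbf{v}$. This is the counting and vanishing argument of \cref{subsec:case-c-algorithm}.

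To make the count precise, first absorb unaries so that each vertex has degree $2$ or $3$. For degree-$2$ chains, the same count applies after treating them as matrices of the form $\mathbf{u}\otimes\mathbf{x}+\mathbf{x}'\otimes\mathbf{u}$ plus a multiple of $\mathbf{u}\otimes\mathbf{u}$. If $C$ contains a degree-$3$ vertex, then $|E(C)|>|V(C)|$, so the number of edges exceeds the available $\mathbf{v}$'s and the value is $0$. Otherwise $C$ is a cycle of matrices, and we compute the trace directly.
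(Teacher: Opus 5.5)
Your proposal is correct and follows the paper's proof essentially step for step: cut the edges between signatures with orthogonal annihilators using \cref{prop:orthogonal-normal-vector-edge-factors}, use the symmetric-rank difference to force a single form within each component, then apply unique propagation of the summand in the sum case and the counting-and-vanishing argument in the $\sym$ case. Your explicit spanning-tree propagation, with the remark that an edge on which both inner products vanish kills the term, spells out what the paper leaves implicit when it says the choice at one vertex forces the choice at its neighbor.
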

We will eventually show in \cref{cor:ebd-set-ternary-dichotomy} that for such $\mathcal{F}$ (i.e., signatures with nonzero annihilators), failing to satisfy the conditions in \cref{thm:ternary-set-ebd-algorithm} results in \sphness of $\holts(\mathcal{F})$.
\begin{proof}
We give an explicit algorithm for solving an instance of $\holts(\mathcal{F})$.
Let $\Omega$ be a $\holts(\mathcal{F})$ grid.

Suppose $v_1, v_2$ are two vertices connected with an edge $e$ in $\Omega$ with $\mathbf{F}_i$ assigned to $v_i$.
 Let $\ann(\mathbf{F}_i) = \spn(\mathbf{n}_i) \ne 0$.
 If $\langle \mathbf{n}_1, \mathbf{n}_2 \rangle = 0$, then by \cref{prop:orthogonal-normal-vector-edge-factors}, we may remove $e$ and connect some $\mathbf{w}_i \in \mathbb{C}^3$ to $v_i$.
 By repeating this process between every such pair of vertices, we may assume that any edge is between two signatures with non-orthogonal annihilators.

 We now focus on a connected component $C$ having vertices $v_1, \ldots, v_n$ with $v_i$ assigned $\mathbf{F}_i$ with $n \ge 2$. 
Because $v_1$ is connected to some $v_i$, $(\mathbf{F}_1, \mathbf{F}_i)$ must be a compatible pair with non-orthogonal annihilators, so 
 $\mathbf{F}_1$ is of the form $\mathbf{u}_1\teh + \mathbf{v}_1\teh$ or $\lambda_1 \mathbf{u}_1\teh + \sym(\mathbf{u}_1, \mathbf{u}_1, \mathbf{v}_1)$.
Suppose $\mathbf{F}_1 = \mathbf{u}_1\teh + \mathbf{v}_1\teh$. 
 By \cref{prop:symmetric-rank-sym-sum}, because of the difference in the symmetric rank, $\mathbf{F}_1 \ne \lambda \mathbf{x}\teh + \sym(\mathbf{x}, \mathbf{x}, \mathbf{y})$ for any $\lambda \in \mathbb{C}$ and $\mathbf{x},\mathbf{y} \in \mathbb{C}^3$.
Then, repeatedly applying \cref{def:compatible-pairs} gives that all $\mathbf{F}_i$ are of the form $\mathbf{u}_i\teh + \mathbf{v}_i\teh$.
Similarly, if $\mathbf{F}_1 = \lambda_1 \mathbf{u}_1\teh + \sym(\mathbf{u}_1, \mathbf{u}_1, \mathbf{v}_1)$, then all $\mathbf{F}_i$ are of the form $\lambda_i \mathbf{u}_i\teh + \sym(\mathbf{u}_i, \mathbf{u}_i, \mathbf{v}_i)$.

 In the first case, we have a unique propagation. 
 We note that the representation of $\mathbf{F}_i = \mathbf{u}_i\teh + \mathbf{v}_i\teh$ is unique up to reordering and scaling of the terms by \cref{prop:uniqueness-sum-lin-ind-tensor}. 
 Therefore, 
 \cref{itm:compatible-pair-unique-prop} of \cref{def:compatible-pairs}
 imply that once we fix a representation of each $\mathbf{F}_i$ as $\mathbf{u}_i\teh + \mathbf{v}_i\teh$, then we have $\langle \mathbf{u}_i, \mathbf{u}_j \rangle = \langle \mathbf{v}_i, \mathbf{v}_j \rangle = 0$ or $\langle \mathbf{u}_i, \mathbf{v}_j \rangle = \langle \mathbf{v}_i, \mathbf{u}_j \rangle = 0$. 
 Suppose we choose $\mathbf{u}_i\teh$ for $\mathbf{F}_i$.
 Then, if $\mathbf{F}_j$ is connected to $\mathbf{F}_i$, the only nonzero contribution can come from choosing $\mathbf{u}_j\teh$  or $\mathbf{v}_j\teh$, depending on whether $\langle \mathbf{u}_i, \mathbf{v}_j \rangle = 0$ or $\langle \mathbf{u}_i, \mathbf{u}_j \rangle = 0$, respectively.
 We can apply this argument repeatedly, so there is only one way to result in possibly a nonzero value if we start with $\mathbf{u}_i\teh$ for $\mathbf{F}_i$.
 Choosing $\mathbf{v}_i\teh$ also results in a unique propagation, so we can compute the Holant value of $C$ in polynomial time.

 In the second case, we get an algorithm by a counting and vanishing argument.
  By \cref{prop:uniqueness-sym-form}, the representation $\mathbf{F}_i = \lambda_i \mathbf{u}_i^{\otimes n} + \sym(\mathbf{u}_i, \ldots, \mathbf{u}_i, \mathbf{v}_i)$ is unique up to scaling of $\mathbf{u}_i$. 
  Therefore, 
  \cref{itm:compatible-pair-vanish} of \cref{def:compatible-pairs}
  still hold even after fixing a representation of each $\mathbf{F}_i$.
 Note that connecting a unary $\mathbf{w}$ to $\mathbf{F}_i$ results in a binary signature $x_i \mathbf{u}_i \tew + y_i \sym(\mathbf{u}_i, \mathbf{v}_i)$ for some $x_i, y_i \in \mathbb{C}$.
 By recursively absorbing  unary signatures into  ternary (or binary) signatures, we may assume that $C$ consists of only ternary and binary signatures.
 If there is no ternary signature in $C$, then we can compute the Holant value by matrix multiplication.
 On the other hand, suppose  $C$  has some ternary signatures.
 Consider any assignment of choosing one of the terms in the summation of $\mathbf{F}_i$, i.e., one of $\lambda_i \mathbf{u}_i\teh$, $\mathbf{v}_i \otimes \mathbf{u}_i \otimes \mathbf{u}_i$, $\mathbf{u}_i \otimes \mathbf{v}_i \otimes \mathbf{u}_i$ or $\mathbf{u}_i \otimes \mathbf{u}_i \otimes \mathbf{v}_i$ for degree $3$ vertices or one of $x_i \mathbf{u}_i \otimes \mathbf{u}_i$, $y_i \mathbf{v}_i \otimes \mathbf{u}_i$ or $y_i \mathbf{u}_i \otimes \mathbf{v}_i$.
Counting the total number of $\mathbf{u}_i$ and $\mathbf{v}_i$ in this assignment,
we always have more $\mathbf{u}_i$ than $\mathbf{v}_i$, since there is a degree $3$ vertex.
This means that there must be an edge such that some $\mathbf{u}_i$ and $\mathbf{u}_j$ meet, which results in a $0$ evaluation. 
Therefore, the Holant value must be $0$.

Since we can compute the Holant value for each connected component in polynomial time, and the Holant value of $\Omega$ is the product of the values from the connnected components, $\holts(\mathcal{F})$ is tractable.
\end{proof}

The algorithmic idea of \cref{thm:ternary-set-ebd-algorithm} can be generalized to include binary signatures. 
The intuition is that we just need the binary signatures to preserve the orthogonality conditions in \cref{def:compatible-pairs}, since in a signature grid, binary signatures can only exist as a path between other signatures of higher arity. 
In the remainder of this section, we prove the following theorem. 
Note that we do not make the assumption that $\mathcal{B}$ consists of symmetric signatures. 
\begin{theorem}\label{thm:algorithm-ebd-with-binary}
Let $\mathcal{F} = \{ \mathbf{F}_1, \ldots, \mathbf{F}_n \}$ be a set of nondegenerate symmetric signatures of arity $3$ or higher, such that
$\ann(\mathbf{F}_i) = \spn(\mathbf{n}_i) \ne 0$ for $1 \le i \le n$.
Let $\mathcal{B}$ be a set of nondegenerate binary signatures. 
Then, $\holts(\mathcal{F} \cup \mathcal{B})$ is tractable if the following two conditions are satisfied.
\begin{enumerate}
    \item
    \label{enumitem:pairwise-tractable-condition}
    For all $\mathbf{F}_i, \mathbf{F}_j \in \mathcal{F}$, 
    $(\mathbf{F}_i, \mathbf{F}_j)$ is a compatible pair.
    \item 
    Let $\langle \mathcal{B} \rangle$ be the multiplicative closure of $\mathcal{B}$.
    For all $\mathbf{M} \in \langle \mathcal{B} \rangle$ and $\mathbf{F}_i, \mathbf{F}_j \in \mathcal{F}$, 
    $\mathbf{M} \mathbf{F}_i$ is degenerate or 
     $(\mathbf{M} \mathbf{F}_i, \mathbf{F}_j)$ is a compatible pair.
\end{enumerate}
\end{theorem}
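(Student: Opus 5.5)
We generalize the proof of \cref{thm:ternary-set-ebd-algorithm}, treating every maximal chain of binary signatures as a single generalized edge. Let $\Omega$ be a $\holts(\mathcal{F}\cup\mathcal{B})$ grid. First absorb every unary signature into its neighbor. A unary attached to a binary gives a unary and is propagated further. A unary attached to some $\mathbf{F}_i$ of arity $m$ gives $\langle \mathbf{F}_i,\mathbf{u}\rangle$. Since $\mathbf{n}_i$ still annihilates it, this signature still lies in the span of the same "summand" structure. So we keep track of vertices labeled by $\mathbf{F}_i$ with some dangling unaries attached, rather than by new signatures. Any path of binary vertices between two vertices $v,w$ of arity $\ge 3$ (labeled $\mathbf{F}_i,\mathbf{F}_j$) contracts, by matrix multiplication, to one edge carrying some $\mathbf{M}\in\langle\mathcal{B}\rangle$; note that the orientation matters since $\mathcal{B}$ need not be symmetric. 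Components with no vertex of arity $\ge3$ are cycles or paths and are handled by matrix products and traces. Absorbing $\mathbf{M}$ into the endpoint $v$ yields the signature $\mathbf{M}\mathbf{F}_i$ on that edge's side. Hypothesis 2 then says either $\mathbf{M}\mathbf{F}_i$ is degenerate or $(\mathbf{M}\mathbf{F}_i,\mathbf{F}_j)$ is a compatible pair.

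To use this edge by edge, I would write $\mathbf{M}$ via the vectors rather than the whole tensor. The key lemma I aim for generalizes \cref{prop:orthogonal-normal-vector-edge-factors} to an arbitrary bilinear form (the paper foreshadows this as \cref{lem:edge-factors-general-case}). The lemma should have three parts.
\begin{itemize}
\item If $\mathbf{M}\mathbf{F}_i$ is degenerate, or the annihilators of $\mathbf{M}\mathbf{F}_i$ (namely $(\mathbf{M}^{-1})\transpose\mathbf{n}_i$ when $\mathbf{M}$ is invertible, and otherwise handled by the rank drop) and $\mathbf{F}_j$ are orthogonal, then the generalized edge can be replaced by $\mathbf{w}_1\otimes\mathbf{w}_2$.
\item Otherwise, in the $\mathbf{u}^{\otimes m}+\mathbf{v}^{\otimes m}$ case, the relation $\langle \mathbf{M}\mathbf{u}_i,\mathbf{u}_j\rangle=\langle\mathbf{M}\mathbf{v}_i,\mathbf{v}_j\rangle=0$ (up to swapping) holds.
\item Otherwise, in the $\lambda\mathbf{u}^{\otimes m}+\sym(\mathbf{u},\dots,\mathbf{v})$ case, $\langle\mathbf{M}\mathbf{u}_i,\mathbf{u}_j\rangle=0$.
\end{itemize}
The last two follow from uniqueness of representation, \cref{prop:uniqueness-sum-lin-ind-tensor,prop:uniqueness-sym-form}, applied to $\mathbf{M}\mathbf{F}_i=(\mathbf{M}\mathbf{u}_i)^{\otimes m}+\dots$. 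For the degenerate/orthogonal case, I would expand $\mathbf{M}=\sum_k \mathbf{a}_k\otimes\mathbf{b}_k$ in a basis adapted to $\mathbf{n}_i$ and to $\mathbf{M}\transpose\mathbf{n}_j$ and observe that all but one rank-one term is killed. This mirrors the three isotropic/non-isotropic cases of \cref{prop:orthogonal-normal-vector-edge-factors}, with $I_3$ replaced by $\mathbf{M}$. Degenerate $\mathbf{M}\mathbf{F}_i=\mathbf{x}^{\otimes m}$ trivially splits the edge into unaries.

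After cutting all such factorizable edges, the remaining components have every generalized edge "compatible" in the sense above. As in \cref{thm:ternary-set-ebd-algorithm}, all vertices in a component are then of the same kind, because of the symmetric rank difference (\cref{prop:symmetric-rank-sym-sum}). The rank of $\mathbf{M}\mathbf{F}_i$ can differ from that of $\mathbf{F}_i$ only when $\mathbf{M}$ is singular, and then it drops toward degeneracy, which was already cut. This gives the two subcases.
\begin{itemize}
\item In the unique propagation case we expand each vertex into its two summands. Along each edge the choice at one end forces the choice at the other, because $\mathbf{u}_i^\transpose\mathbf{M}\transpose$ kills one of the two summands at the other end. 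Only two global choices survive, each a product of degenerate tensors computable in polynomial time.
\item In the vanishing case we count occurrences of $\mathbf{u}$'s versus $\mathbf{v}$'s. Each edge pairing $\mathbf{u}_i$ with $\mathbf{u}_j$ through $\mathbf{M}$ vanishes, and any vertex of degree $\ge3$ forces more $\mathbf{u}$'s than $\mathbf{v}$'s, so the value is $0$.
\end{itemize}
Here the absorbed unaries just reduce a vertex's effective degree but keep the $\mathbf{u}$-majority: a term like $\lambda\mathbf{u}^{\otimes m}$ or one containing a single $\mathbf{v}$ stays $\mathbf{u}$-heavy whenever at least two edges remain.

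The main obstacle is the bookkeeping of the generalized edge lemma when $\mathbf{M}$ is singular or non-symmetric. The annihilator of $\mathbf{M}\mathbf{F}_i$ may be two-dimensional, or the representation $\mathbf{M}\mathbf{u}_i,\mathbf{M}\mathbf{v}_i$ may lose linear independence. I would handle this first by showing that in that case $\mathbf{M}\mathbf{F}_i$ restricted to the edge direction is degenerate enough to split the edge, falling back to hypothesis 2's "degenerate" clause. Another subtle point is consistency: fixing a representation of each $\mathbf{F}_i$ once must make the orthogonality relations hold simultaneously along every edge incident to it. This follows, as in the ternary proof, from uniqueness up to reordering and scaling.
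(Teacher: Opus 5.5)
Your proposal is correct and follows essentially the paper's route: contract binary chains into one generalized edge $\mathbf{M}$, cut every edge on which the bilinear form given by $\mathbf{M}$ has rank $\le 1$ on the two summand spans (\cref{lem:edge-factors-general-case}; this is exactly the case where $\mathbf{M}\mathbf{F}_i$ is degenerate or the annihilators are orthogonal), and then use unique propagation or counting-and-vanishing on the remaining homogeneous components, with consistency from \cref{prop:uniqueness-sum-lin-ind-tensor,prop:uniqueness-sym-form}. The paper differs only in organization: it first splits $\mathcal{F}$ into $\mathcal{F}_u,\mathcal{F}_v,\mathcal{G}$ and restates the hypotheses as rank conditions on products like $A_i\transpose\mathbf{M}A_j$, whereas you apply them edge by edge; you should also make explicit, as the paper does, that in the vanishing case vertices left with at most one edge are recursively absorbed, so the count is applied only when every remaining vertex has degree $\ge 2$, with a pure cycle handled by a trace.
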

We will eventually show in \cref{thm:dichotomy-ternary-binary} that for such $\mathcal{F}$ (i.e., signatures with nonzero annihilators) and $\mathcal{B}$, failing to satisfy the conditions in \cref{thm:algorithm-ebd-with-binary} results in \sphness of $\holts(\mathcal{F} \cup \mathcal{B})$.

To state the algorithm for $\holts(\mathcal{F} \cup \mathcal{B})$, it would be helpful to dissect the above conditions in more details and prove a generalization of \cref{prop:orthogonal-normal-vector-edge-factors}.

\subsubsection{Edge Factoring in General}
We first generalize \cref{prop:orthogonal-normal-vector-edge-factors} to asymmetric signatures in any domain size $q$.
\begin{lemma}
\label{lem:edge-factors-general-case}
    Let $\mathbf{F}, \mathbf{G}$ be signatures on domain $q$.
    Suppose $\mathbf{F} = \sum_{i = 1}^n \mathbf{v}_i \otimes \mathbf{F}_i$ and $\mathbf{G} = \sum_{j = 1}^m \mathbf{w}_j \otimes \mathbf{G}_j$ for $\mathbf{v}_i, \mathbf{w}_j \in \mathbb{C}^q$ and some signatures $\mathbf{F}_i, \mathbf{G}_j$ on domain $q$. 
    Let $V = \spn(\mathbf{v}_1, \ldots, \mathbf{v}_n)$ and $W = \spn(\mathbf{w}_1, \ldots, \mathbf{w}_n)$.
    Suppose there exist $\mathbf{u}_V, \mathbf{u}_W \in \mathbb{C}^q$ such that 
    $\langle \mathbf{v}, \mathbf{w} \rangle = \langle \mathbf{v}, \mathbf{u}_V \rangle \langle \mathbf{u}_W, \mathbf{w} \rangle $ for all $\mathbf{v} \in V, \mathbf{w} \in W$.
    Then, in any grid with an edge between $\mathbf{F}$ and $\mathbf{G}$ connecting 
    the respective first variables, i.e., the $\mathbf{v}_i$ and $\mathbf{w}_j$ parts, we may replace the edge with $\mathbf{u}_V \otimes \mathbf{u}_W$ without changing the Holant value.
\end{lemma}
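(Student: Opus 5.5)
The plan is to isolate the single edge and compute its contribution directly in tensor form. In any grid, the Holant value is a full contraction of all signature tensors, with each edge contracting one index pairwise via the dot product (i.e., the identity matrix $I_q$). Fix the edge $e$ joining the first variable of $\mathbf{F}$ to the first variable of $\mathbf{G}$. Write the total contraction as a contraction of the tensor $\mathbf{F} \otimes \mathbf{G} \otimes \mathbf{R}$, where $\mathbf{R}$ denotes the rest of the grid. The edge $e$ contributes the factor $I_q = \sum_{k} \mathbf{e}_k \otimes \mathbf{e}_k$ between the two first slots. Replacing $e$ by $\mathbf{u}_V \otimes \mathbf{u}_W$ means instead contracting the first slot of $\mathbf{F}$ with $\mathbf{u}_V$ and the first slot of $\mathbf{G}$ with $\mathbf{u}_W$. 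The goal is to show the two resulting tensors on the remaining variables agree.

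First, expand by multilinearity. Contracting the first slots of $\mathbf{F}=\sum_i \mathbf{v}_i\otimes\mathbf{F}_i$ and $\mathbf{G}=\sum_j \mathbf{w}_j\otimes\mathbf{G}_j$ along $e$ yields
\[
\sum_{i,j} \langle \mathbf{v}_i, \mathbf{w}_j \rangle \, \mathbf{F}_i \otimes \mathbf{G}_j .
\]
Since $\mathbf{v}_i \in V$ and $\mathbf{w}_j \in W$, the hypothesis gives
\[
\langle \mathbf{v}_i, \mathbf{w}_j\rangle = \langle \mathbf{v}_i,\mathbf{u}_V\rangle\langle \mathbf{u}_W,\mathbf{w}_j\rangle .
\]
Hence the sum factors as
\[
\Bigl(\sum_i \langle \mathbf{v}_i,\mathbf{u}_V\rangle \mathbf{F}_i\Bigr)\otimes\Bigl(\sum_j \langle \mathbf{u}_W,\mathbf{w}_j\rangle \mathbf{G}_j\Bigr) = \langle \mathbf{F}, \mathbf{u}_V\rangle_1 \otimes \langle \mathbf{G}, \mathbf{u}_W\rangle_1 ,
\]
where the subscript indicates contraction in the first variable. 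This is exactly the tensor obtained by attaching the unary $\mathbf{u}_V$ to the first input of $\mathbf{F}$ and $\mathbf{u}_W$ to the first input of $\mathbf{G}$.

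Finally, contract this identity with $\mathbf{R}$ along all the other edges. Contraction is linear and these other edges do not touch the first slots, so the Holant value is unchanged. This also covers the degenerate situations where $\mathbf{F}$ and $\mathbf{G}$ are joined by further edges, or where one vertex is both: the identity holds at the level of tensors in the remaining slots, before any further contraction.

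There is no real obstacle. The only care needed is bookkeeping about which slot is contracted, since the signatures need not be symmetric. For this reason the statement specifies the first variables, and the argument uses only bilinearity of $\langle\cdot,\cdot\rangle$ on $V\times W$.
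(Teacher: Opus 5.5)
Your argument is correct and is essentially the paper's: both expand bilinearly over the pair of summands $(i,j)$ chosen at the two endpoints and substitute $\langle \mathbf{v}_i,\mathbf{w}_j\rangle=\langle \mathbf{v}_i,\mathbf{u}_V\rangle\langle \mathbf{u}_W,\mathbf{w}_j\rangle$ term by term. The only difference is bookkeeping: the paper writes $\hol_\Omega=\sum_{i,j}c_{ij}\langle\mathbf{v}_i,\mathbf{w}_j\rangle$ with $c_{ij}$ absorbing the rest of the grid, while you keep the remaining slots as an uncontracted tensor. (Your aside about one vertex playing both roles is not covered by the $\mathbf{F}\otimes\mathbf{G}$ computation, but that case is outside the lemma, which concerns an edge with two distinct endpoints.)
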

\begin{proof}
    Suppose $\Omega$ is such a grid, and let $e = (v, w)$ be the edge connecting $v$ with $\mathbf{F}$ and $w$ with $\mathbf{G}$.
    We stratify $\hol_\Omega$ by the choice of a tensor term at $v$ and $w$.
    Then,
    $\hol_\Omega = \sum_{i, j}  c_{ij} \langle \mathbf{v}_i, \mathbf{w}_j \rangle$ where $c_{ij}$ is the sum over all choices for other vertices when the $i$th and $j$th terms are chosen for $v$ and $w$ to be $\mathbf{v}_i$ and $ \mathbf{w}_j$ respectively.
    Then, we have 
    $\hol_\Omega = \sum_{i, j}  c_{ij} \langle \mathbf{v}_i, \mathbf{w}_j \rangle = \sum_{i, j}  c_{ij} \langle \mathbf{v}_i, \mathbf{u}_V \rangle \langle \mathbf{u}_W, \mathbf{w}_j \rangle = \hol_{\Omega'}$
    where $\Omega'$ is the grid obtained by replacing $e$ with $\mathbf{u}_V \otimes \mathbf{u}_W$.
\end{proof}

Note that if $V$ or $W$ is dimension $1$, such $\mathbf{u}_V, \mathbf{u}_W$ always exists.
\begin{proposition}
    Let $V$ and $ W$ be linear subspaces of $\mathbb{C}^q$.
    Suppose $\dim V \le 1$ or $\dim W \le 1$.
    Then, there exist $ \mathbf{u}_V, \mathbf{u}_W \in \mathbb{C}^q$ such that 
    $\langle \mathbf{v}, \mathbf{w} \rangle = \langle \mathbf{v}, \mathbf{u}_V \rangle \langle \mathbf{u}_W, \mathbf{w} \rangle $ for all $\mathbf{v} \in V, \mathbf{w} \in W$.
\end{proposition}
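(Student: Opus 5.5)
By symmetry of the roles of $V$ and $W$ (the condition $\langle \mathbf{v}, \mathbf{w} \rangle = \langle \mathbf{v}, \mathbf{u}_V \rangle \langle \mathbf{u}_W, \mathbf{w} \rangle$ is symmetric under swapping $(V,\mathbf{u}_V)$ with $(W,\mathbf{u}_W)$, since the dot product is symmetric), I may assume $\dim V \le 1$. The plan is to pick one vector spanning $V$ and let it serve as $\mathbf{u}_W$, and to use a dual functional for $\mathbf{u}_V$.

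First, the trivial case. If $\dim V = 0$, then $V = \{0\}$, so both sides of the identity vanish for every $\mathbf{v} \in V$ and $\mathbf{w} \in W$. Any choice works, for instance $\mathbf{u}_V = \mathbf{u}_W = 0$.

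Second, suppose $\dim V = 1$, say $V = \spn(\mathbf{v}_0)$ with $\mathbf{v}_0 \neq 0$. Choose $\mathbf{u}_V \in \mathbb{C}^q$ with $\langle \mathbf{v}_0, \mathbf{u}_V \rangle = 1$; this is possible because $\mathbf{v}_0 \neq 0$, and one may take $\mathbf{u}_V = \mathbf{e}_k / (\mathbf{v}_0)_k$ for any coordinate $k$ with $(\mathbf{v}_0)_k \neq 0$. Note that $\mathbf{v}_0/\langle \mathbf{v}_0,\mathbf{v}_0\rangle$ cannot be used in general, because $\mathbf{v}_0$ may be isotropic. Set $\mathbf{u}_W = \mathbf{v}_0$. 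Then for $\mathbf{v} = t \mathbf{v}_0 \in V$ and any $\mathbf{w} \in W$,
\[
\langle \mathbf{v}, \mathbf{u}_V \rangle \langle \mathbf{u}_W, \mathbf{w} \rangle = t \langle \mathbf{v}_0, \mathbf{w} \rangle = \langle \mathbf{v}, \mathbf{w} \rangle,
\]
which is the required identity by bilinearity.

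If instead $\dim W \le 1$, the same construction with the roles exchanged works. Take $W = \spn(\mathbf{w}_0)$, $\mathbf{u}_V = \mathbf{w}_0$, and $\mathbf{u}_W$ with $\langle \mathbf{u}_W, \mathbf{w}_0 \rangle = 1$; the identical one-line computation applies.

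The only point needing care is the isotropy issue in choosing the normalizing vector. Using a coordinate vector, rather than a multiple of $\mathbf{v}_0$ itself, avoids it. Beyond that there is no real obstacle.
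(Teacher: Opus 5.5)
Your proof is correct and is essentially the paper's argument: the paper assumes $\dim W \le 1$, sets $\mathbf{u}_V = \mathbf{w}$ for a spanning vector $\mathbf{w}$ of $W$, and takes any $\mathbf{u}_W$ with $\langle \mathbf{u}_W, \mathbf{w} \rangle = 1$, which is the mirror image of your construction. Your remark about normalizing via a coordinate vector rather than $\mathbf{v}_0/\langle \mathbf{v}_0,\mathbf{v}_0\rangle$, to handle isotropic $\mathbf{v}_0$, spells out what the paper covers by saying such a vector ``clearly exists.''
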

\begin{proof}
By symmetry, we may assume  $\dim W \le 1$.
    If $\dim W = 0$, then we can choose $\mathbf{u}_V, \mathbf{u}_W = 0$.

    If $\dim W = 1$, then $W = \spn(\mathbf{w})$ for some nonzero $\mathbf{w} \in \mathbb{C}^q$.
    We let  $\mathbf{u}_V = \mathbf{w}$. Choose some $\mathbf{u}_W$ such that $\langle \mathbf{u}_W,  \mathbf{w} \rangle = 1$, which clearly exists. Then $\langle \mathbf{v}, c \mathbf{w} \rangle = c \langle \mathbf{v},\mathbf{w} \rangle =  \langle \mathbf{v}, \mathbf{u}_V \rangle \langle \mathbf{u}_W, c \mathbf{w} \rangle$.
\end{proof}

More generally, we prove that the existence of such $\mathbf{u}_V, \mathbf{u}_W$ depends on the interaction of the spaces $V$ and $W$ under the bilinear map $V \times W \to \mathbb{C}$ by $(\mathbf{v}, \mathbf{w}) \mapsto \mathbf{v}\transpose \mathbf{w}$.
\begin{proposition}\label{prop:pinning-vectors-iff-rank-1}
    Let $V = \spn(\mathbf{v}_1, \ldots, \mathbf{v}_n)$ and $W = \spn(\mathbf{w}_1, \ldots, \mathbf{w}_m)$ be subspaces of $\mathbb{C}^q$.
    Let $M_V$ be the matrix with columns $\mathbf{v}_i$ and $M_W$ be the matrix with columns $\mathbf{w}_i$.
    The following are equivalent:
    \begin{enumerate}
        \item There exist $\mathbf{u}_V, \mathbf{u}_W \in \mathbb{C}^q$ such that 
    $\langle \mathbf{v}, \mathbf{w} \rangle = \langle \mathbf{v}, \mathbf{u}_V \rangle \langle \mathbf{u}_W, \mathbf{w} \rangle $ for all $\mathbf{v} \in V, \mathbf{w} \in W$.
    \item $M_V\transpose M_W$ has rank $\le 1$.
    \item $M_V^T \restriction_W$ has rank $\le 1$.
    \item $M_W^T \restriction_V$ has rank $\le 1$.
    \end{enumerate}
\end{proposition}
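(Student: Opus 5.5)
The plan is to reduce everything to the bilinear map $\beta: V \times W \to \mathbb{C}$, $\beta(\mathbf{v}, \mathbf{w}) = \mathbf{v}\transpose \mathbf{w}$, and to observe that statements 2, 3 and 4 each say that $\beta$ has rank at most $1$. Statement 1 says that $\beta$ factors as a product of two linear functionals. The equivalence then amounts to the familiar fact that a bilinear form has rank $\le 1$ exactly when it is a product of linear forms.

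First I will check that 2, 3 and 4 are equivalent. Every $\mathbf{v} \in V$ can be written $\mathbf{v} = M_V \mathbf{a}$, and every $\mathbf{w} \in W$ can be written $\mathbf{w} = M_W \mathbf{b}$, so $\beta(\mathbf{v}, \mathbf{w}) = \mathbf{a}\transpose (M_V\transpose M_W) \mathbf{b}$.

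For 2 $\Leftrightarrow$ 3: the columns of $M_W$ span $W$, so the image of $M_V\transpose$ restricted to $W$ equals the column space of $M_V\transpose M_W$. Hence the two ranks agree.

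For 3 $\Leftrightarrow$ 4: $M_W\transpose M_V = (M_V\transpose M_W)\transpose$, and the same argument applies, since transposition preserves rank.

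Next, 1 $\Rightarrow$ 2. Under statement 1, the $(i,j)$ entry of $M_V\transpose M_W$ is $\langle \mathbf{v}_i, \mathbf{w}_j\rangle = \langle \mathbf{v}_i, \mathbf{u}_V\rangle \langle \mathbf{u}_W, \mathbf{w}_j\rangle$. So $M_V\transpose M_W = (M_V\transpose \mathbf{u}_V)(M_W\transpose \mathbf{u}_W)\transpose$, an outer product, which has rank $\le 1$.

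For 2 $\Rightarrow$ 1, if $M_V\transpose M_W = 0$ I take $\mathbf{u}_V = \mathbf{u}_W = 0$. Otherwise I write $M_V\transpose M_W = \mathbf{p}\mathbf{q}\transpose$ with $\mathbf{p} \in \mathbb{C}^n$ and $\mathbf{q} \in \mathbb{C}^m$. The one step needing care is that $\mathbf{p}$ must come from a vector $\mathbf{u}_V$, i.e. $\mathbf{p} = M_V\transpose \mathbf{u}_V$, and similarly $\mathbf{q} = M_W\transpose \mathbf{u}_W$.

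For this, note that $\mathbf{p}$ is a column-space vector of $M_V\transpose M_W$, which lies in the column space of $M_V\transpose$. So $\mathbf{p} = M_V\transpose \mathbf{u}_V$ for some $\mathbf{u}_V$. Symmetrically, $\mathbf{q} = M_W\transpose \mathbf{u}_W$ for some $\mathbf{u}_W$, since $\mathbf{q}\transpose$ lies in the row space of $M_V\transpose M_W$.

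Then, for $\mathbf{v} = M_V\mathbf{a}$ and $\mathbf{w} = M_W \mathbf{b}$,
\[
\langle \mathbf{v}, \mathbf{w}\rangle = \mathbf{a}\transpose \mathbf{p}\,\mathbf{q}\transpose \mathbf{b} = \langle \mathbf{v}, \mathbf{u}_V\rangle \langle \mathbf{u}_W, \mathbf{w}\rangle.
\]
This holds regardless of which representations $\mathbf{a}$, $\mathbf{b}$ are chosen, since both sides depend only on $\mathbf{v}$ and $\mathbf{w}$. Lifting $\mathbf{p}$ and $\mathbf{q}$ back to ambient vectors is the only mildly nontrivial step; the rest is bookkeeping.
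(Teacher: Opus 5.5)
Your proof is correct and follows the paper's outline: the same outer-product identity $M_V\transpose M_W = (M_V\transpose \mathbf{u}_V)(M_W\transpose \mathbf{u}_W)\transpose$ for $1 \Rightarrow 2$, and for $2 \Rightarrow 1$ a factorization $M_V\transpose M_W = \mathbf{p}\mathbf{q}\transpose$ whose factors are then lifted to ambient vectors. The only difference is how you lift. The paper first assumes the $\mathbf{v}_i$ and $\mathbf{w}_j$ are bases, completes them to invertible $A_V, A_W$, and sets $\mathbf{u}_V = (A_V\transpose)^{-1}\mathbf{u}_1'$ and $\mathbf{u}_W = (A_W\transpose)^{-1}\mathbf{u}_2'$, where $\mathbf{u}_1', \mathbf{u}_2'$ are the zero-padded factors. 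Your observation that $\mathbf{p}$ lies in the column space of $M_V\transpose M_W$, hence of $M_V\transpose$ (and dually for $\mathbf{q}$), achieves the same lift more directly and needs no reduction to linearly independent spanning sets.
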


\begin{proof}
3 and 4 are clearly equivalent to 2.
\begin{itemize}
    \item $1 \to 2$:
Suppose such $\mathbf{u}_V, \mathbf{u}_W$ exist.
Then,
\begin{align*}
M_V\transpose M_W
&= \begin{bmatrix}
    \langle \mathbf{v}_1, \mathbf{w}_1 \rangle & \cdots & \langle \mathbf{v}_1, \mathbf{w}_m \rangle \\
    \vdots & \ddots & \vdots \\
    \langle \mathbf{v}_n, \mathbf{w}_1 \rangle & \cdots & \langle \mathbf{v}_n, \mathbf{w}_m \rangle
\end{bmatrix} \\
&=
\begin{bmatrix}
    \langle \mathbf{v}_1, \mathbf{u}_V \rangle \langle \mathbf{u}_W, \mathbf{w}_1 \rangle & \cdots & \langle \mathbf{v}_1, \mathbf{u}_V \rangle \langle \mathbf{u}_W, \mathbf{w}_m \rangle\\
    \vdots & \ddots & \vdots \\
    \langle \mathbf{v}_n, \mathbf{u}_V \rangle \langle \mathbf{u}_W, \mathbf{w}_1 \rangle & \cdots & \langle \mathbf{v}_n, \mathbf{u}_V \rangle \langle \mathbf{u}_W, \mathbf{w}_m \rangle
\end{bmatrix} \\
&= 
M_V\transpose
\mathbf{u}_V \mathbf{u}_W\transpose
M_W
\end{align*}
Therefore, the product has rank $\le 1$.

\item $2 \to 1$:
Suppose $M_V\transpose M_W$ has rank $\le 1$.
Then, we may write $M_V\transpose M_W = \mathbf{u}_1 \mathbf{u}_2 \transpose$ for some $\mathbf{u}_1, \mathbf{u}_2$.
Further, we may assume that $\mathbf{v}_1, \ldots, \mathbf{v}_n$ is a basis of $V$ and $\mathbf{w}_1, \ldots, \mathbf{w}_m$ is a basis of $W$.
Then, there exists invertible $A_V, A_W \in \gl(q, \mathbb{C})$ obtained by completing the basis.
Note that $A_V\transpose A_W = \begin{bmatrix}
    M_V\transpose M_W & * \\
    * & *
\end{bmatrix} = \begin{bmatrix}
    \mathbf{u}_1 \mathbf{u}_2\transpose & * \\
    * & *
\end{bmatrix}$.
Let $\mathbf{u}_1', \mathbf{u}_2'$ be obtained from $\mathbf{u}_1, \mathbf{u}_2$ by appending enough $0$ so that $\mathbf{u}_1', \mathbf{u}_2' \in \mathbb{C}^q$.
Let $\mathbf{v} = a_1 \mathbf{v}_1 + \cdots a_n \mathbf{v}_n \in V$ and $\mathbf{w} = b_1 \mathbf{w}_1 + \cdots + b_m \mathbf{w}_m \in W$.
Write $\mathbf{a} = [a_1, \ldots, a_n]\transpose \in \mathbb{C}^n$ and $\mathbf{b} = [b_1, \ldots, b_m]\transpose \in \mathbb{C}^m$,
and $\mathbf{a}', \mathbf{b}'$ obtained by appending enough $0$ to $\mathbf{a}, \mathbf{b}$ so that $\mathbf{a}', \mathbf{b}' \in \mathbb{C}^q$.
we have

\begin{align*}
    \langle \mathbf{v}, \mathbf{w} \rangle &= 
        (\mathbf{a}')\transpose
        A_V\transpose A_W
        \mathbf{b}' \\
        &= \mathbf{a}\transpose
        M_V \transpose M_W
        \mathbf{b} \\
        &= \mathbf{a}\transpose
        \mathbf{u}_1 \mathbf{u}_2 \transpose
        \mathbf{b} \\
        &= (\mathbf{a}')\transpose
        \mathbf{u}_1' (\mathbf{u}_2') \transpose
        \mathbf{b}' \\
        &= (\mathbf{a}')\transpose 
        A_V\transpose (A_V\transpose)^{-1}
        \mathbf{u}_1' (\mathbf{u}_2') \transpose
        (A_W)^{-1} A_W
        \mathbf{b}' \\
        &= \mathbf{v}\transpose ((A_V\transpose)^{-1}
        \mathbf{u}_1') ((A_W\transpose)^{-1}
        \mathbf{u}_2')\transpose 
        \mathbf{w} \\
        &= \langle \mathbf{v}, (A_V\transpose)^{-1}
        \mathbf{u}_1' \rangle \; \langle (A_W\transpose)^{-1}
        \mathbf{u}_2', \mathbf{w} \rangle
\end{align*}
Therefore, we may take $\mathbf{u}_V = (A_V\transpose)^{-1}
        \mathbf{u}_1'$ and $\mathbf{u}_W = (A_W\transpose)^{-1}
        \mathbf{u}_2'$. \qedhere
\end{itemize}
\end{proof}

For a subspace $V \subset \mathbb{C}^q$,
let $\ann(V) = \{\mathbf{u} \in \mathbb{C}^q : \forall \mathbf{v} \in V \; \langle \mathbf{u}, \mathbf{v} \rangle = 0\}$.
If $V$ is a two dimensional subspace of $\mathbb{C}^3$,
then $\ann(V) = \spn(\mathbf{n}_V) \ne 0$.
For two dimensional subspaces $V, W \subseteq \mathbb{C}^3$, \cref{prop:pinning-vectors-iff-rank-1} is equivalent to 
$\mathbf{n}_V$ and $\mathbf{n}_W$ being orthogonal.
\begin{proposition}\label{prop:general-edge-factor-condition-domain-3}
    Let $V, W$ be two dimensional subspaces of $\mathbb{C}^3$.
    Let $\ann(V) = \spn(\mathbf{n}_V) \ne 0$ and $\ann(W) = \spn(\mathbf{n}_W) \ne 0$.
    The following are equivalent:
    \begin{enumerate}
        \item There exists $\mathbf{u}_V, \mathbf{u}_W \in \mathbb{C}^3$ such that $\langle \mathbf{v}, \mathbf{w} \rangle = \langle \mathbf{v}, \mathbf{u}_V \rangle \langle \mathbf{u}_W, \mathbf{w} \rangle $ for all $\mathbf{v}\in V, \mathbf{w} \in W$.
        \item $\langle \mathbf{n}_V, \mathbf{n}_W \rangle = 0$.
    \end{enumerate}
\end{proposition}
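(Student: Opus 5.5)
By \cref{prop:pinning-vectors-iff-rank-1}, condition 1 is equivalent to $M_V\transpose M_W$ having rank $\le 1$, where $M_V$ and $M_W$ are $3\times 2$ matrices whose columns form bases of $V$ and $W$. So the statement reduces to showing that the $2\times 2$ matrix $M_V\transpose M_W$ is singular if and only if $\langle \mathbf{n}_V, \mathbf{n}_W\rangle = 0$. We will establish this in two ways: a direct determinant identity, and a linear-algebra argument as a cross-check.

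The cleanest route is the Cauchy--Binet formula. Write $M_V = [\mathbf{v}_1\ \mathbf{v}_2]$ and $M_W = [\mathbf{w}_1\ \mathbf{w}_2]$. Then
\[
\det(M_V\transpose M_W) = \langle \mathbf{v}_1,\mathbf{w}_1\rangle\langle \mathbf{v}_2,\mathbf{w}_2\rangle - \langle \mathbf{v}_1,\mathbf{w}_2\rangle\langle \mathbf{v}_2,\mathbf{w}_1\rangle = \langle \mathbf{v}_1\times\mathbf{v}_2,\ \mathbf{w}_1\times\mathbf{w}_2\rangle,
\]
which is the Binet--Cauchy (Lagrange) identity for the cross product. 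This is a polynomial identity, so it holds over $\mathbb{C}$ with the bilinear (not Hermitian) dot product.

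Next we identify the cross products with the annihilator generators. The vector $\mathbf{v}_1\times\mathbf{v}_2$ is nonzero because $\mathbf{v}_1,\mathbf{v}_2$ are linearly independent. It is bilinearly orthogonal to both $\mathbf{v}_1$ and $\mathbf{v}_2$, so it lies in $\ann(V)$, which is one-dimensional. Hence $\mathbf{v}_1\times\mathbf{v}_2 \sim \mathbf{n}_V$ with a nonzero scalar, and likewise $\mathbf{w}_1\times\mathbf{w}_2 \sim \mathbf{n}_W$. Therefore $\det(M_V\transpose M_W)$ is a nonzero multiple of $\langle\mathbf{n}_V,\mathbf{n}_W\rangle$.

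Finally, a $2\times 2$ matrix has rank $\le 1$ exactly when its determinant vanishes, which gives the equivalence. There is no real obstacle here. The only point needing care is that the Lagrange identity and the nonvanishing of the cross product remain valid over $\mathbb{C}$ even for isotropic vectors. Both are purely algebraic facts, so this holds.

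Alternatively, we can argue without cross products. The rank of $M_V\transpose M_W$ is $2 - \dim(W\cap \ann(V))$, since $M_V\transpose$ restricted to $W$ has kernel $W\cap\ann(V) = W\cap\spn(\mathbf{n}_V)$. So the rank is $\le 1$ iff $\mathbf{n}_V\in W$. Because $W = \ann(\mathbf{n}_W)$ (both are two-dimensional and $W\subseteq\ann(\mathbf{n}_W)$), the condition $\mathbf{n}_V\in W$ is equivalent to $\langle\mathbf{n}_V,\mathbf{n}_W\rangle=0$. We would present this second argument as the main proof, since it uses only \cref{prop:pinning-vectors-iff-rank-1} item 3 directly.
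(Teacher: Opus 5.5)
Your proof is correct, and the argument you designate as the main one is exactly the paper's proof via \cref{prop:pinning-vectors-iff-rank-1}: $M_V^{\mathsf{T}}$ restricted to $W$ has kernel $W\cap\spn(\mathbf{n}_V)$, so its rank drops exactly when $\mathbf{n}_V\in W=\spn(\mathbf{n}_W)^\perp$. Your Lagrange-identity cross-check is also valid over $\mathbb{C}$, and it additionally shows that $\det(M_V^{\mathsf{T}}M_W)$ is a nonzero multiple of $\langle\mathbf{n}_V,\mathbf{n}_W\rangle$.
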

\begin{proof}
For a subspace $U$, we denote $U^\perp = \{\mathbf{v} \in \mathbb{C}^3 : \forall \mathbf{u} \in U \; \langle \mathbf{u}, \mathbf{v} \rangle = 0\}$.
Note that by the dimension, $V^\perp = \spn(\mathbf{n}_V)$ and $(V^\perp)^\perp = V$.

By \cref{prop:pinning-vectors-iff-rank-1}, 1.\ is equivalent to $M_V\transpose M_W$ having rank $\le 1$.
Note that $\ker(M_V\transpose) = V^\perp = \spn(\mathbf{n}_V)$.
Then, $M_V\transpose\restriction_W$ can be rank $\le 1$ if and only if $\mathbf{n}_V \in W = \spn(\mathbf{n}_W)^\perp$ if and only if $\langle \mathbf{n}_V, \mathbf{n}_W \rangle = 0$. 
\end{proof}

\subsubsection{Restatement of the Tractability Criteria and an Algorithm}
\label{subssec:ebd-tractability-criteria}
Now, we can state the conditions of \cref{thm:algorithm-ebd-with-binary} in a different way. 
Let $\mathcal{F}$ be a set of nondegenerate symmetric signatures of arity $3$ or higher such that every signature has a nonzero annihilator and 
let $\mathcal{B}$ be a set of nondegenerate binary signatures. 
To satisfy the conditions in \cref{thm:algorithm-ebd-with-binary}, $\mathcal{F}$ must be $\mathcal{F}_u \cup \mathcal{F}_v \cup \mathcal{G}$ where
\begin{enumerate}
    \item $\mathcal{F}_u = \{ \mathbf{U}_i = \mathbf{u}_i^{\otimes r_i} + \mathbf{v}_i^{\otimes r_i} : 1 \le i \le n \}$
    \item $\mathcal{F}_v = \{ \mathbf{V}_j = \lambda_i \mathbf{x}_i^{\otimes s_i} + \sym(\mathbf{x}_i, \ldots, \mathbf{x}_i, \mathbf{y}_i) : 1 \le i \le m \}$
    \item $\mathcal{G}$ is a set of \casecz signatures \textit{not} of the form $\mathbf{u}^{\otimes r} + \mathbf{v}^{\otimes r}$ or $\lambda \mathbf{u}^{\otimes r} + \sym(\mathbf{u}, \ldots, \mathbf{u}, \mathbf{v})$ for any $\mathbf{u}, \mathbf{v} \in \mathbb{C}^3$ and $r \ge 3$, all annihilated by a single nonzero isotropic vector $\boldsymbol{\beta}$.
\end{enumerate}
for some 
$(\mathbf{u}_i, \mathbf{v}_i ), (\mathbf{x}_j, \mathbf{y}_j) \in \mathbb{C}^3 \times \mathbb{C}^3$ for $1 \le i \le n$ and $1 \le j \le m$,  such that $\mathbf{u}_i \not \sim \mathbf{v}_i$ and $\mathbf{x}_j \not \sim \mathbf{y}_j$; $\lambda_i \in \mathbb{C}$; $r_i, s_i \ge 3$.
Note that $\mathcal{F}_u, \mathcal{F}_v, \mathcal{G}$ are disjoint.

 We note that the representation of $\mathbf{F}_i =\mathbf{u}_i^{\otimes n} +  \mathbf{v}_i^{\otimes n}$ is unique up to reordering and scaling of the terms by \cref{prop:uniqueness-sum-lin-ind-tensor}. 
 Therefore, once we fix a representation of each $\mathbf{F}_i$ as $\mathbf{u}_i^{\otimes n}  + \mathbf{v}_i^{\otimes n} $, 
 \cref{itm:compatible-pair-unique-prop} of \cref{def:compatible-pairs}
 imply that $\langle \mathbf{u}_i, \mathbf{u}_j \rangle  = \langle \mathbf{v}_i, \mathbf{v}_j \rangle = 0$ or $\langle \mathbf{u}_i, \mathbf{v}_j \rangle = \langle \mathbf{v}_i, \mathbf{u}_j \rangle = 0$ if $\langle \mathbf{n}_i, \mathbf{n}_j \rangle \ne 0$.
 Also, by \cref{prop:uniqueness-sym-form}, the representation $\mathbf{F}_i = \lambda_i \mathbf{x}_i^{\otimes n} + \sym(\mathbf{x}_i, \ldots, \mathbf{x}_i, \mathbf{y}_i)$ is unique up to scaling of $\mathbf{x}_i$.
 Therefore, 
after fixing a representation of each $\mathbf{F}_i$ as $\lambda_i \mathbf{x}_i^{\otimes n} + \sym(\mathbf{x}_i, \ldots, \mathbf{x}_i, \mathbf{y}_i)$,
 \cref{itm:compatible-pair-vanish} of \cref{def:compatible-pairs}
 imply that 
$\langle \mathbf{x}_i, \mathbf{x}_j \rangle = 0$ 
 if
 $\langle \mathbf{n}_i, \mathbf{n}_j \rangle \ne 0$.
 And $\mathbf{u}_i^{\otimes n} +  \mathbf{v}_i^{\otimes n} \ne \lambda_j \mathbf{x}_j^{\otimes n} + \sym(\mathbf{x}_j, \ldots, \mathbf{x}_j, \mathbf{y}_j)$ by the symmetric rank (\cref{prop:symmetric-rank-sym-sum}).
 Let $A_i$ be the matrix with columns being $\mathbf{u}_i, \mathbf{v}_i$
and $B_i$ be the matrix with columns being $\mathbf{x}_i, \mathbf{y}_i$. 
Let $\mathcal{M} = \langle I, \mathcal{B} \rangle$, the monoid generated by $\mathcal{B}$ and the identity matrix.
The conditions in \cref{thm:algorithm-ebd-with-binary} are equivalent to the following for all $\mathbf{M} \in  \mathcal{M}$, $1 \le i \le n$ and $1 \le j \le m$:
\begin{enumerate}
    \item $A_i\transpose \mathbf{M} B_i $ has rank $\le 1$.
    \item $A_i\transpose \mathbf{M} A_j$ has form $\begin{bsmallmatrix}
    * & 0 \\ 0 & *
\end{bsmallmatrix}$, or $\begin{bsmallmatrix} 
    0 & * \\ * & 0
\end{bsmallmatrix}$, or has rank $\le 1$.
    \item $B_i\transpose \mathbf{M} B_j$ has form  $\begin{bsmallmatrix}
    0 & * \\ * & *
\end{bsmallmatrix}$, or has rank $\le 1$.
\item 
If $\mathcal{G}$ is not empty, then 
$(\mathbf{M} A_i)\transpose \restriction_{\spn(\boldsymbol{\beta})^\perp}$ and
$(\mathbf{M} B_j)\transpose\restriction_{\spn(\boldsymbol{\beta})^\perp}$ have rank $\le 1$.
\end{enumerate}

Now we can state the algorithm for $\holts(\mathcal{F} \cup \mathcal{B})$.
The algorithm is a generalization of \cref{thm:ternary-set-ebd-algorithm} using \cref{lem:edge-factors-general-case} to handle the binary signatures. 

Let $\Omega$ be a signature grid for $\holts(\mathcal{F} \cup \mathcal{B})$.
We first reduce every chain of binary signatures by matrix multiplication, so that we have an equivalent
$\holts(\mathcal{M} \mid \mathcal{F})$ grid $\Omega'$.
Let $\mathbf{F}, \mathbf{G}$ be two signatures on the right side connected by $\mathbf{M} \in \mathcal{M}$ on the left. 
By the above conditions, if $\mathbf{F}$ and $\mathbf{G}$ are from distinct $\mathcal{F}_u, \mathcal{F}_v, \mathcal{G}$, then we can replace the $\mathbf{M}$ edge with some $\mathbf{u} \otimes \mathbf{v}$ by \cref{lem:edge-factors-general-case}.
We apply \cref{lem:edge-factors-general-case} for every case when the rank $\le 1$ condition holds.
Then, the remaining $\mathbf{M}$ edge between two signatures on the right side must be between
$\mathbf{U}_i, \mathbf{U}_j$ or $\mathbf{V}_i, \mathbf{V}_j$, meaning any connected component must use only $\mathcal{F}_u$ on the right side or only $\mathcal{F}_v$ on the right side, or is just a single $\mathcal{G}$ vertex on the right side.

Consider a connected component that contains $\mathbf{U}_i$ and $\mathbf{U}_j$ being connected via some $\mathbf{M} \in \mathcal{M}$.
Suppose we choose $\mathbf{u}_i^{\otimes r_i}$ for $\mathbf{U}_i$.
Then, for $\mathbf{U}_j$, only one of $\mathbf{u}_j^{\otimes r_j}$ or $\mathbf{v}_j^{\otimes r_j}$ can be chosen to result in a nonzero value 
because $\mathbf{v}_j\transpose \mathbf{M} \mathbf{u}_i = 0$ or $\mathbf{u}_j\transpose \mathbf{M} \mathbf{u}_i = 0$.
(These correspond to the 0 entries in $A_i\transpose \mathbf{M} A_j$.)
Therefore, the unique propagation argument in \cref{thm:ternary-set-ebd-algorithm} still applies in this connected component.

Consider a connected component $C$ that uses $\mathcal{F}_v$ signatures on the right side.
We may absorb all unary signatures so that $C$ consists of binary or higher signatures on the right side.
If all signatures in $C$ are binary, then we may compute its Holant value by matrix multiplication.
Consider any assignment of choosing one of the terms in the summation of $\mathbf{V}_i$, i.e.\ 
one of $\lambda_i \mathbf{x}_i^{\otimes k}$ or some permutation of $\mathbf{x}_i \otimes \cdots \otimes \mathbf{x}_i \otimes \mathbf{y}_i$. 
If we count the total number of $\mathbf{x}_i$ and $\mathbf{y}_i$ in this assignment,
then we see that we always have more $\mathbf{x}_i$ than $\mathbf{y}_i$ since there is at least one degree $\ge 3$ vertex. 
Then, somewhere, we must have a $\mathbf{x}_i$ meeting $\mathbf{x}_j$ via an $\mathbf{M} \in \mathcal{M}$ edge.
However, $\mathbf{x}_j\transpose \mathbf{M} \mathbf{x}_i = 0$, so the assignment results in $0$.
(These correspond to the 0 entry in $B_i\transpose \mathbf{M} B_j$.)
Since this is true for any choice, the Holant value must be $0$. 

\subsubsection{Decidability of the Tractability Criteria}
It is not \emph{a priori} clear
whether the tractability criteria given in \cref{subssec:ebd-tractability-criteria} is decidable, since it is an implicit condition on the multiplicative closure on a set of matrices. 
However, we can see that the signatures of arity $3$ or higher determine a finite number of polynomial conditions on $\mathcal{M}$. 
More precisely, every pair of such signatures impose some explicit polynomial conditions on $\mathcal{M}$. 
Then, $\mathcal{M}$ satisfying the tractability condition is equivalent to the Zariski closure of $\mathcal{M}$ being contained in the algebraic set that corresponds to the polynomial conditions. 
By~\cite{zariski-closure-finite-semigroup}, since the Zariski closure of a finitely generated semigroup of matrices is computable, this is decidable. 
It is an open problem whether  this criterion  can be decided in polynomial time (in the size
of  $\mathcal{F}$). Are there  methods for this decision problem other than appealing to the Zariski closure?

\subsection{Summary}
We collect the tractable cases.
\begin{theorem}\label{thm:tractable-cases}
Let $\mathcal{F}$ and $\mathcal{B}$  respectively be a set of nondegenerate, complex-valued, symmetric signatures over domain $\{\db,\dg,\dr\}$ of arity $\ge 3$ and arity 2.
Then, $\holts(\mathcal{F} \cup \mathcal{B})$ is tractable if there exists some $T \in \ortho(3, \mathbb{C})$ such that one of the following conditions holds.
\begin{enumerate}
    \item $\mathcal{F} = \emptyset$. 
    \item \label{itm:tractable-e}
    $T \mathcal{F} \cup T \mathcal{B} \subset \mathcal{E}$ 
    or 
    $\Yinv T \mathcal{F} \cup \Yinv T \mathcal{B} \subset \mathcal{E}$.
    \item \label{itm:tractable-bg-r}
    \begin{enumerate}
        \item $T \mathcal{F} \subset \sepbgmath$, and
        \item $T \mathcal{B} \subset \sepbgmath \cup \swbgmath$, and
        \item $\holbs((T \mathcal{F} \cup T \mathcal{B})\domres{\db\dg})$ is tractable.
    \end{enumerate}
    \item \label{itm:tractable-c}
      $\Yinv T \mathcal{F} \cup \Yinv T \mathcal{B} \subset \casecsupp$.
    \item \label{itm:tractable-ann}
    For all $\mathbf{F} \in \mathcal{F}$, $\ann(\mathbf{F}) \ne 0$, and 
    $\mathcal{F}$ and $\mathcal{B}$ satisfy the conditions of \cref{thm:algorithm-ebd-with-binary}.
\end{enumerate}
\end{theorem}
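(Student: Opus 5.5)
The theorem is a collection of the algorithms of this section, so the plan is to reduce each of the five items to one of those algorithms. Item 1 is handled as in \cref{subsec:tractable-case-binary}: every signature in the grid has arity at most $2$ after the unary signatures are absorbed, so each connected component is a path or a cycle. A path is evaluated by matrix–vector products, and a cycle by the trace of a product of $3\times 3$ matrices.

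For Items 2–4 we first apply the orthogonal holographic transformation $T$ (\cref{thm:orthogonal-holant-thm}). This gives $\holts(\mathcal{F}\cup\mathcal{B}) =_T \holts(T\mathcal{F}\cup T\mathcal{B})$, and the unary signatures are preserved as a class because $T$ is invertible.

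In Item 2, if $T\mathcal{F}\cup T\mathcal{B}\subset\mathcal{E}$, we run the unique propagation algorithm of \cref{subsec:tractable-e}. The defining property of $\mathcal{E}$ is that for every vertex, the domain element on one incident edge determines the elements on all its other incident edges whenever the value is nonzero. Hence each connected component has at most three nonzero assignments, and these can be enumerated after the unaries are absorbed. In the $\Yinv$ variant we pass to $\holts(\X\mid \Yinv T\mathcal{F}\cup\Yinv T\mathcal{B})$ using the generalized holographic transform. Since $\X\in\mathcal{E}$ and $\mathcal{E}$ is closed under attaching unaries, the same argument applies.

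In Item 3 we follow the sketch in \cref{subsec:tractable-bg-r}, with the \swbg matrices decomposed as $(a,b,0)\otimes\mathbf{e}_3+\mathbf{e}_3\otimes(c,d,0)$. The remaining computation is reduced to the tractable Boolean problem $\holbs((T\mathcal{F}\cup T\mathcal{B})\domres{\db\dg})$. Here we must check that the \sepbg signatures and the \swbg decomposition are compatible with attaching unaries: a unary splits as $(x,y,0)+z\mathbf{e}_3$, and each part acts within its own domain block.

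In Item 4 we pass to $\holts(\X\mid\Yinv T\mathcal{F}\cup\Yinv T\mathcal{B})\le_T\holts(\X\mid\casecsupp)$ and apply the counting argument of \cref{subsec:case-c-algorithm}. That argument only uses membership in $\casecsupp$, so it covers binary signatures as well.

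Item 5 is exactly \cref{thm:algorithm-ebd-with-binary}, whose algorithm is given in \cref{subssec:ebd-tractability-criteria}. We first contract chains of binary signatures into elements of $\mathcal{M}$. We then factor edges via \cref{lem:edge-factors-general-case} wherever the rank $\le 1$ condition holds. Each remaining component is solved by unique propagation of the summands, when it contains only $\mathcal{F}_u$ signatures, or by the counting-and-vanishing argument, when it contains only $\mathcal{F}_v$ signatures. Components containing a $\mathcal{G}$ signature reduce to a single vertex with unaries attached.

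The main obstacle is Item 5. We must verify that the conditions of \cref{thm:algorithm-ebd-with-binary}, which are stated over the whole monoid $\langle\mathcal{B}\rangle$, really give the restated rank conditions on $A_i\transpose\mathbf{M}A_j$ and the related matrices. We must also verify that these conditions cover every edge type, including edges between signatures whose annihilators are orthogonal after $\mathbf{M}$ is applied, via \cref{prop:general-edge-factor-condition-domain-3} applied to the spans of the tensor factors. If these identifications hold, the remaining items are routine transfers of the earlier algorithms.
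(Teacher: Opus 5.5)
Your proposal is correct and takes the same route as the paper, which proves this summary theorem by pointing to the algorithms of \cref{subsec:tractable-case-binary,subsec:tractable-e,subsec:tractable-bg-r,subsec:tractable-case-c,subsec:tractable-case-ebd} for the five items respectively. The verification you flag as the main obstacle for Item 5 belongs to the proof of \cref{thm:algorithm-ebd-with-binary} itself (via \cref{lem:edge-factors-general-case} and \cref{subssec:ebd-tractability-criteria}), so here Item 5 is just a direct citation of that theorem.
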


An algorithm for each case is given in \cref{subsec:tractable-case-binary}, \cref{subsec:tractable-e}, \cref{subsec:tractable-bg-r}, \cref{subsec:tractable-case-c}, and \cref{subsec:tractable-case-ebd} respectively.
\cref{thm:full-dichotomy} will prove 
that $\holts(\mathcal{F} \cup \mathcal{B})$ is \sph if none of the conditions is met.

\section{Frames and Shells}\label{sec:two-pieces}
Now we move to proving  \sphness. 
We want to 
show a $\holts(\mathcal{F})$ dichotomy for a set of complex-valued symmetric signatures $\mathcal{F}$.
As mentioned in \cref{subsec:proof-outline}, we decide to start with a set of ternary signatures.
We may assume that $\holts(\mathbf{F})$ is not \sph for every $\mathbf{F} \in \mathcal{F}$, therefore by the $\holts(\mathbf{F})$ dichotomy, it is  polynomial-time
computable and has one of the tractable types given in \cref{def:types-ternary}.
A natural starting point from the dichotomy of a single signature to  a set of signatures is  to analyze a pair of signatures. 
So, to analyze $\holts(\mathbf{F}, \mathbf{G})$, we may assume that both $\mathbf{F}$ and $\mathbf{G}$ take one of tractable types given in \cref{def:types-ternary} and try to characterize when we can conclude $\holts(\mathbf{F}, \mathbf{G})$ is \sph or tractable. 

How to approach this problem?
To show \sphness,
one can try to construct a direct reduction from a known \sph problem. 
But since we are dealing with $\hols$ problems and because there is a known $\holbs$ dichotomy, perhaps it is easier to show that $\holts(\mathbf{F}, \mathbf{G}) \ge_T \holbs(\mathcal{F}')$ for some set of Boolean domain signatures $\mathcal{F}'$ for which we know to be \sph. 
To do so, a natural approach is to show that $\mathbf{F}$ or $\mathbf{G}$ can realize \textit{domain restriction}. 
For example, if $\mathbf{F} = \mathbf{e}_1\teh + \mathbf{e}_2\teh + \mathbf{e}_3\teh$, we have $\langle \mathbf{F}, (1, 1, 0) \rangle = (=_{\db\dg})$, which is a binary signature that is an equality on $\db\dg$; $(=_{\db\dg})(i, j) = 1$ if $i = j = \db$ or $i = j = \dg$, and $0$ otherwise. 
Then, 
\[\holts(\mathbf{F}, \mathbf{G}) 
\; \ge_T \; \holts((=_{\db\dg}) \mathbf{F}, (=_{\db\dg}) \mathbf{G}) 
\; =_T \; \holbs(\mathbf{F}\domres{\db\dg}, \mathbf{G}\domres{\db\dg}) \, ,\]
where $(=_{\db\dg}) \mathbf{F}$ is the ternary signature  obtained by connecting 3 copies of
the binary $(=_{\db\dg})$ to $\mathbf{F}$, which effectively restricts the domain to 
$\db\dg$.
Then we can apply the known Boolean domain dichotomy. 
To be not \sph, the Boolean domain dichotomy imposes certain constraints on $\mathbf{G}$, in particular on the values $\mathbf{G}(\db,\db,\db)$, $\mathbf{G}(\db,\db,\dg)$, $\mathbf{G}(\db,\dg,\dg)$ and $\mathbf{G}(\dg,\dg,\dg)$. 
There are additional constraints that $\mathbf{F}$ imposes on $\mathbf{G}$, since $\mathbf{F}$ can realize $(=_{\db\dr})$ and $(=_{\dg\dr})$ as well.
As we will show later, the constraints that $\mathbf{F}$ imposes on $\mathbf{G}$ can be formulated as an algebraic set. 
We will call this algebraic set  a \emph{shell}. 
Then, checking whether $\mathbf{G}$ escapes the known Boolean domain dichotomy criteria can be formulated as whether $\mathbf{G}$ is contained in the shell of $\mathbf{F}$ or not.

However, this is not the easiest condition to check.
As we saw in \cref{subsec:proof-outline}, $\mathbf{G}$ is of the form $T \mathbf{G}'$ for some orthogonal $T$ and $\mathbf{G}'$ of the defining form in \cref{def:types-ternary}. 
The algebraic expression of the entries $\mathbf{G}$ is an unwieldy degree $4$ polynomial in terms of the entries of $T$ and $\mathbf{G}'$, each having some interdependency. 
Also, it could happen that $\mathbf{G}$ is contained in the shell of $\mathbf{F}$, but we still fail to identify an algorithm to solve $\holts(\mathbf{F}, \mathbf{G})$. 
Then, we would need to go back, construct some gadget $\mathbf{H}$, 
which is likely to have a more complicated expression,
then check whether $\mathbf{H}$ is in the shell and so on\ldots until we successfully identify an algorithm or \sphness. 
This approach quickly becomes infeasible.

This is where the idea of \emph{frame} comes in.
In \cref{sec:casec}, we saw that some \casec signatures can realize certain linear subspaces of signatures as a gadget or by polynomial interpolation. 
It turns out that other types of signatures also have such properties. 
Note that we only showed for example that a \casecodz signature $\mathbf{G}'$ of the defining form can realize $V = \spn(\mathbf{e}_1\teh, \sym(\mathbf{e}_1, \mathbf{e}_1, \mathbf{e}_3), \sym(\mathbf{e}_1, \mathbf{e}_1, \mathbf{e}_2) + \lambda \sym(\mathbf{e}_1, \mathbf{e}_3, \mathbf{e}_3))$ for some $\lambda$ on the right side 
when working in the $\holts(\X \mid \Yinv \mathbf{G}')$ setting.
If $\mathbf{G}$ is a \casecodz signature, then we have
$\mathbf{G} = T \mathbf{G}'$ for some orthogonal $T$.
Then, we have
\[\holts(\mathbf{G}) 
\; =_T \; 
\holts(\mathbf{G}')
\; =_T \;
\holts(\X \mid \Yinv \mathbf{G}') 
\; \ge_T \;
\holts(\X \mid \Yinv \mathbf{G}', V) \, .\]
The advantage of expressing the set of signatures as a span of basis vector is that $T \Y V = \spn(\mathbf{v}_1\teh, \sym(\mathbf{v}_1, \mathbf{v}_1, \mathbf{v}_3), \sym(\mathbf{v}_1, \mathbf{v}_1, \mathbf{v}_2) + \lambda \sym(\mathbf{v}_1, \mathbf{v}_3, \mathbf{v}_3))$ where $\mathbf{v}_i$ is the $i$th column of $T \Y$. 
Further, $\mathbf{G}$ can realize $T \Y V$ in the $\holts(\mathbf{G})$ setting:
\begin{align*}
    \holts(\mathbf{G}) &\ge_T \holts(\X \mid \Yinv \mathbf{G}', V) \\
    &=_T \holts(\X \mid \Yinv T^{-1} \mathbf{G}, V) \\
    &=_T \holts(T^{-1} \mathbf{G}, \Y V) \\
    &=_T \holts(\mathbf{G}, T \Y V) \,.
\end{align*}
Using the spin representation, the columns of $T \Y$ have a relatively simple expression.
Combined with the fact that the shells are relatively simple algebraic sets, essentially a combination of lines and twisted cubics in the projective space, rather than directly checking whether $\mathbf{G}$ is \emph{a point} of the shell,
we can instead check whether $T \Y V $ is \emph{a subset} of the shell. 
This is a lot easier since $T \Y V$ itself is an algebraic set, and in particular a linear subspace. 
Further, if we identify a large enough set of signatures that is realizable, then it is conceivable that this single check is sufficient to determine the computational complexity; we do not need to do the additional back and forth of gadget construction and checking shell containment.

We will define a \textit{frame of the type}, denoted $\ag(\mathfrak{G})$ 
for a tractable signature type $\mathfrak{G}$, which will be a union of linear subspaces of signatures. 
Then, similar to the above example, 
if $\mathbf{G}$ is of type $\mathfrak{G}$, then
checking if $\holts(\mathbf{F}, \mathbf{G})$ escapes the known \sph conditions boils down to checking if $T \ag(\mathfrak{G})$ is a subset of the shell of $\mathbf{F}$ for some orthogonal $T$.
This is a purely algebraic analysis. 

We give an informal justification of the name frame.
Literally, it is a set that captures a certain essence of the type, as a frame of a building. 
There is also an informal mathematical analogy. 
For a fixed type $\mathfrak{F}$, we can think of the set $\{(T, T \ag(\mathfrak{F})) : T \in \so(3, \mathbb{C})\}$ as a (trivial) bundle over $\so(3, \mathbb{C})$. 
Then, we are analyzing which point on the bundle is contained in the shell. 
One of the important features of $\ag(\mathfrak{F})$ is that it is written as a span of a easily manipulable basis, which can be said to be very loosely connected to the notion of a frame of a manifold.

With the intuition, we give a more formal definition. 
For each tractable type $\mathfrak{F}$ being one of \caseathree, \caseatwo, \casebthree, \casebtwo, \casecz, \casecobg, \casecosep, \casecodz, and \casecogen,
we define two sets, $\ag(\mathfrak{F})$ and $\agx(\mathfrak{F})$, in the following sections as the frame of the type $\mathfrak{F}$ in the following sections with certain properties. 
We define $\ag(\mathfrak{F})$ as a certain linear subspace or a
union of subspaces of $\symspace{3}{3}$, and show that
some $\mathbf{F}$ of type $\mathfrak{F}$ can realize all signatures
in $\ag(\mathfrak{F})$.
This $\mathbf{F}$ will be chosen to be of the defining form given in \cref{def:types-ternary} since they are easily analyzable. 
Then, we can show that
any signature $\mathbf{H}$ of type $\mathfrak{F}$   can realize in $\holts(\mathbf{H})$ the set $T \ag(\mathfrak{F})$ for some
$T \in \so(3, \mathbb{C})$, where $\mathbf{H} = T \mathbf{F}$.

Similarly, we define $\agx(\mathfrak{F})$,
and show that 
for some $\mathbf{F}$ of type  $\mathfrak{F}$,
$\iny{\mathbf{F}} = \Yinv \mathbf{F}$ can realize 
 all  signatures
in $\agx(\mathfrak{F})$  \textit{in the right side} in 
the bipartite setting $\holts(\X \mid \iny{\mathbf{F}})$. 
Let $\iny{\mathbf{H}} = \Yinv \mathbf{H}$.
This implies that 
any signature $\mathbf{H}$ of type $\mathfrak{F}$  can realize in $\holts(\X \mid \iny{\mathbf{H}})$ the set $\iny{T} \agx(\mathfrak{F})$ for $\iny{T} \in \so(3, \mathbb{C}, \X)$ such that $\iny{\mathbf{H}} = \iny{T} \iny{\mathbf{F}}$. 
Note that then, $\mathbf{H} = \Y \iny{T} \Yinv \mathbf{F} = T \mathbf{F}$ where $T = \Y \iny{T} \Yinv \in \so(3, \mathbb{C})$ by the isomorphism of $\so(3, \mathbb{C}) \cong \so(3, \mathbb{C}, \X)$.
In fact, we will define 
$\agx(\mathfrak{F}) = \Yinv \ag(\mathfrak{F})$. 
We will use whichever setting $\holts(\mathbf{F})$ or $\holts(\X \mid \iny{\mathbf{F}})$ is more convenient in a particular proof. 
In the case of \caseathree and \caseatwo, we analyze in the $\holts(\mathbf{F})$ setting and use $\ag$.
In the other cases, we analyze in 
the $\holts(\X \mid \iny{\mathbf{F}})$ setting, and use $\agx$.
For technical reasons explained in \cref{subsec:gc-case-c1dz}, for type \casecodz, we define $\ag(\casecodzmath, \lambda)$ and $\agx(\casecodzmath, \lambda)$ for $\lambda \ne 0$.

We also note the inclusion relation between $\ag$ sets in \cref{fig:lattice-ag}.
\begin{figure}
    \centering
\begin{tikzpicture}

    \node (a2) at (0, 0) {$\ag(\caseatwomath)$};
    \node (a3) at (0, 2) {$\ag(\caseathreemath)$};
    \node (b2) at (2, 0) {$\ag(\casebtwomath)$};
    \node (b3) at (2, 2) {$\ag(\casebthreemath)$};
    \node (c12) at (5, 0) {$\ag(\casecobgmath)$};
    \node (c1s) at (4, 2) {$\ag(\casecosepmath$)};
    \node (c1g) at (6, 2) {$\ag(\casecogenmath)$};
    \node (c1mv) at (8, 1) {$\ag(\casecodzmath, \lambda)$};

    \draw (a2) -- (a3);
    \draw (b2) -- (b3);
    \draw (c12) -- (c1s);
    \draw (c12) -- (c1g);
\end{tikzpicture}
    \caption{Lattice of frames by inclusion.}
    \label{fig:lattice-ag}
\end{figure}

We formalize the above idea in the following lemma, which can be proved
using the definitions of $\ag$ and $\agx$ for each types that we define in this section. 
\begin{lemma}
\label{lem:gadget-set-general-form}
    Let $\mathfrak{F}$ be a tractable type other than \casecodz. 
    Let $\mathbf{H}$ be a signature of type $\mathfrak{F}$.
    Then, there exists some $T \in \so(3, \mathbb{C})$ such that, 
    for any $\mathcal{F}$ containing $\mathbf{H}$,
    $\holts(\mathcal{F}, T \ag(\mathfrak{F})) \le_T \holts(\mathcal{F})$.
    Similarly, there exists some $\iny{T} \in \so(3, \mathbb{C}, \X)$ such that, 
    for any set $\mathcal{F}$ containing $\mathbf{H}$, 
    $\holts(\X \mid \Yinv \mathcal{F}, \iny{T} \agx(\mathfrak{F})) \le_T
    \holts(\X \mid \Yinv \mathcal{F})$.

    Let $\mathbf{H}$ be a signature of type \casecodz.
    Then, there exists some $T \in \so(3, \mathbb{C})$ and $\lambda \ne 0$ such that, 
    for any $\mathcal{F}$ containing $\mathbf{H}$,
    $\holts(\mathcal{F}, T \ag(\casecodzmath, \lambda)) \le_T \holts(\mathcal{F})$.
    Similarly, there exists some $\iny{T} \in \so(3, \mathbb{C}, \X)$ such that, 
    for any set $\mathcal{F}$ containing $\mathbf{H}$,
    $\holts(\X \mid \Yinv \mathcal{F}, \iny{T} \agx(\casecodzmath, \lambda)) \le_T \holts(\X \mid \Yinv \mathcal{F})$.
\end{lemma}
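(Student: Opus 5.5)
The plan is to reduce the lemma to a statement about the defining form, and then transport it by holographic transformation. The frames $\ag(\mathfrak{F})$ and $\agx(\mathfrak{F}) = \Yinv \ag(\mathfrak{F})$ will be defined type by type in the following subsections. For each type $\mathfrak{F}$, the core claim to establish is the following: there is a fixed signature $\mathbf{F}$ in the defining form of \cref{def:types-ternary} such that every signature in $\ag(\mathfrak{F})$ (respectively $\agx(\mathfrak{F})$, realized on the right side of $\holts(\X \mid \Yinv\mathbf{F})$) is obtainable from $\mathbf{F}$ and unary signatures, either by a gadget or by polynomial interpolation. We need this realizability to hold uniformly for any ambient set containing $\mathbf{F}$, exactly as in \cref{claim:casecodz-interpolation}.

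Granting the core claim, transport goes as follows. Let $\mathbf{H}$ be of type $\mathfrak{F}$, so $T\mathbf{H} = c\mathbf{F}$ for some $T\in\ortho(3,\mathbb{C})$. Replacing $T$ by $-T$ if needed, and absorbing the scalar, we may assume $T\in\so(3,\mathbb{C})$ and $\mathbf{H} = T\mathbf{F}$. Here we use that the defining forms are stable under scaling, and that $(-I)^{\otimes 3}$ only flips sign. For $\mathcal{F}\ni\mathbf{H}$, \cref{thm:orthogonal-holant-thm} gives $\holts(\mathcal{F}) =_T \holts(T^{-1}\mathcal{F})$, and $T^{-1}\mathcal{F}$ contains $\mathbf{F}$. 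The core claim then adds $\ag(\mathfrak{F})$ at no cost. Transforming back by $T$ yields $\holts(\mathcal{F}, T\ag(\mathfrak{F})) \le_T \holts(\mathcal{F})$.

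The bipartite version is handled the same way. We set $\iny{T} = \Yinv T\Y \in \so(3,\mathbb{C},\X)$ and apply the Generalized Orthogonal Holant Theorem to $\holts(\X\mid\Yinv\mathcal{F})$. We also use that $\iny{T}\Yinv\mathbf{F} = \Yinv\mathbf{H}$, and that the unary signatures are preserved as a set under any invertible transformation. For \casecodz the argument is identical, except that the frame parameter $\lambda$ depends on the particular defining form. That form is $[a;1,b;0,0,c;0,0,0,0]$, and $\lambda$ is chosen among the values $c^{2k+1}$ from \cref{eq:c1dz-gadeget-set} that avoid the exceptional $k_0$. This is why $\lambda$ appears existentially in the statement.

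The core claim itself is checked type by type. For \caseathree, the frame is the span of the \geneq signatures, which we obtain by attaching unary signatures to $a\mathbf{e}_1\teh+b\mathbf{e}_2\teh+c\mathbf{e}_3\teh$ in a star or chain. Interpolation then removes the dependence of the coefficients on the unaries. \caseatwo, \casebthree and \casebtwo work the same way; for the $\mathfrak{B}$ types we use the $\X$ setting, where the transformed signature is \geneq-like. For the \casec subtypes we reuse the constructions of \cref{sec:casec}: the layer gadgets $\iny{\mathbf{G}}_{p,q,r}$, the triangle gadget $\iny{\mathbf{T}}$, and the Vandermonde interpolation of \cref{claim:casecodz-interpolation}.

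The main obstacle is the degenerate parameter values, where a chosen gadget fails to span the whole frame. Examples are $1+2c^2=0$ in \casecodz, or vanishing coefficients such as $y=0$. The first thing to try there is an alternative gadget (an extra layer, as in \cref{fig:grow-more-grass}), followed by interpolation on the free coordinate.
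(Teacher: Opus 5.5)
Your proposal is correct and follows the paper's proof. You write $\mathbf{H}=T\mathbf{F}$ with $\mathbf{F}$ in defining form and $T\in\so(3,\mathbb{C})$, invoke the type-specific realizability of the frame inside $\holts(T^{-1}\mathcal{F})$, and transform back by $T$. You then get the bipartite statement with $\iny{T}=\Yinv T\Y$. The paper obtains that last step simply by applying the $\Yinv$ transform to the non-bipartite reduction; since $\agx(\mathfrak{F})=\Yinv\ag(\mathfrak{F})$, your two ``respective'' realizability claims are really one.

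One wording fix: the realizability claim must hold for every defining-form signature with arbitrary admissible parameters, not for ``a fixed'' $\mathbf{F}$, because $T^{-1}\mathbf{H}$ varies with $\mathbf{H}$. Your type-by-type checks, and your \casecodz remark that $\lambda$ depends on the particular form, already treat it this way.
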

\begin{proof}
    Suppose $\mathfrak{F}$ is \caseathree.
    Then, $\mathbf{H} = T \mathbf{F}$ where $\mathbf{F} = a \mathbf{e}_1\teh + b \mathbf{e}_2\teh + c \mathbf{e}_3\teh$ for some orthogonal $T$ and $a, b, c \ne 0$.
    We may assume that $T \in \so(3, \mathbb{C})$.
    By \cref{subsec:gc-case-a-3}, $\mathbf{F}$ can realize $\ag(\caseathreemath)$ in $\holts(T^{-1} \mathcal{F})$. Thus, we have
    \begin{align*}
        \holts(\mathcal{F}) &=_T \holts(T^{-1} \mathcal{F}) \\
        &\ge_T \holts(T^{-1} \mathcal{F} ,\ag(\caseathreemath)) \\
        &=_T \holts(\mathcal{F} , T \ag(\caseathreemath)) \, .
    \end{align*}
    By definition, $\agx(\caseathreemath) = \Yinv \ag(\caseathreemath)$, then
    \begin{align*}
        \holts(\mathcal{F}) &\ge_T \holts(\mathcal{F} , T \ag(\caseathreemath)) \\
        &=_T \holts(\X \mid \Yinv \mathcal{F} ,  \Yinv T \ag(\caseathreemath)) \\
        &=_T \holts(\X \mid \Yinv \mathcal{F} ,  \iny{T} \Yinv \ag(\caseathreemath)) \\
        &=_T \holts(\X \mid \Yinv \mathcal{F} ,  \iny{T} \agx(\caseathreemath)) \, ,
    \end{align*}
    where $\iny{T} = \Yinv T \Y \in \so(3, \mathbb{C}, \X)$ by the isomorphism $\so(3, \mathbb{C}) \cong \so(3, \mathbb{C}, \X)$, provided by $\Yinv \so(3, \mathbb{C}) \Y = \so(3, \mathbb{C}, \X)$.

    The proofs for all other 
    eight
    types are similar. 
\end{proof}

We also prove a dichotomy on bipartite $\holbs$. 
This will be necessary for us to define the shells for certain types. 
A bipartite $\hols$ dichotomy on the Boolean domain  has not been stated explicitly before, but it is easily deducible from the $\holbs$ dichotomy in~\cite{holant-textbook}.
This form, especially~\cref{cor:bipartite-boolean-dichotomy-explicit}, will be convenient
for us.

\begin{lemma}
\label{lem:bipartite-boolean-dichotomy}
Let $\mathbf{F}$ and $\mathbf{G}$ be symmetric, non-degenerate, ternary, Boolean domain signatures.
Then, $\holbs(\mathbf{F} \mid \mathbf{G})$ is tractable if there is $M \in \gl_2(\mathbb{C})$ such that 
$\holbs(\mathbf{F} M^{-1}, M \mathbf{G})$ is tractable.
Otherwise, it is \sph.
\end{lemma}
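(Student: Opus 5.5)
The tractability direction is immediate. Given $M \in \gl_2(\mathbb{C})$, apply Valiant's Holant theorem (\cref{thm:holant-thm}) to get $\holbs(\mathbf{F} \mid \mathbf{G}) =_T \holbs(\mathbf{F}M^{-1} \mid M\mathbf{G})$. Unary signatures on each side are mapped bijectively onto all unary signatures, so the $^*$ is preserved. Every bipartite grid is in particular an ordinary grid over $\{\mathbf{F}M^{-1}, M\mathbf{G}\}\cup\mathcal{U}$. Hence
\[\holbs(\mathbf{F}M^{-1} \mid M\mathbf{G}) \le_T \holbs(\mathbf{F}M^{-1}, M\mathbf{G}),\]
and tractability of the latter transfers. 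The content of the lemma is the converse: if no $M$ works, we must show \sphness.

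For hardness, the plan is to simulate a non-bipartite problem using binary gadgets on each side.

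First, build a left binary. Connecting a unary $\mathbf{u}$ to $\mathbf{F}$ gives a left binary $B_{\mathbf{u}} = \langle \mathbf{F}, \mathbf{u}\rangle$. Since $\mathbf{F}$ is nondegenerate ternary, I will check from the explicit $[f_0,f_1,f_2,f_3]$ table that some choice of $\mathbf{u}$ makes $B_{\mathbf{u}}$ a nonsingular symmetric $2\times 2$ matrix. The only obstruction would be the determinant $(uf_0+vf_1)(uf_2+vf_3)-(uf_1+vf_2)^2$ vanishing identically in $(u,v)$, and its coefficients vanishing forces $\mathbf{F}$ degenerate.

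Next, transform. By Takagi factorization, every nonsingular complex symmetric matrix is $B = M_0\transpose M_0$. After the holographic transform by $M_0$, the left side contains $\beq$:
\[\holbs(\mathbf{F}\mid\mathbf{G}) \ge_T \holbs(\beq, \mathbf{F}M_0^{-1} \mid M_0\mathbf{G}).\]
Because $\beq$ sits on the left, right-side signatures may now be joined arbitrarily, so we already obtain $\holbs(M_0\mathbf{G})$ non-bipartitely.

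Symmetrically, $\mathbf{G}$ yields a nonsingular right binary $C$, which becomes $C' = M_0 C M_0\transpose$ after the transform. Chaining copies of $C'$ through $\beq$ gives all powers $(C')^k$. To put $\mathbf{F}M_0^{-1}$ on the right as well, I want the right side to contain $\beq$, or at least some $N$ with $N\transpose N$ realizable on the right so that a further orthogonal-compatible transform applies.

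The main obstacle is realizing this right-side equality. If $C'$ is diagonalizable with eigenvalue ratio not a root of unity, polynomial interpolation over the powers $(C')^k$ (a Vandermonde argument as in \cref{claim:casecodz-interpolation}) yields every matrix in its commutant, in particular $I$. We then get
\[\holbs(\beq, \mathbf{F}M_0^{-1}, M_0\mathbf{G}),\]
which contradicts the hypothesis unless it is tractable. This would show that some $M = M_0$ (possibly composed with an orthogonal correction) works.

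The exceptional cases are handled separately:
\begin{itemize}
\item $C'$ has finite-order eigenvalue ratio. Then some power $(C')^k$ is scalar, so $I$ is realized directly.
\item $C'$ is a non-diagonalizable Jordan block, or every choice of right binary falls into a degenerate pattern for every $\mathbf{v}$.
\end{itemize}
In the second case I would exploit the freedom in both $\mathbf{u}$ and $\mathbf{v}$: varying them gives a two-parameter family of $C'$, and a generic member should avoid the bad locus. Where even that fails, $\mathbf{G}$ is forced into a very restricted form, such as type $\typei(a,\pm 2\i a)$ after transform. In that situation I would apply \cref{thm:dich-sym-Boolean} directly to $\holbs(M_0\mathbf{G})$ together with the realizable left gadgets, and either produce hardness or exhibit the required $M$ explicitly, for instance by $\Z$-type transforms. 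I expect this case bookkeeping to be the bulk of the work.
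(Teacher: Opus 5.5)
Your route is sound, and it differs from the paper's in how the right-side $\beq$ is obtained. The paper first produces $\beq$ on the left exactly as you do; it factors $\langle\mathbf{F},\mathbf{u}\rangle = B\transpose B$, citing the textbook rather than Takagi. It then observes $\holbs(B\mathbf{G}) \le_T \holbs(\mathbf{F}\mid\mathbf{G})$ and invokes \cref{thm:dich-sym-Boolean} to force $B\mathbf{G}$ into one of the tractable forms (type $\typei$ or $\typeii$). It checks, form by form, that $B\mathbf{G}$ can realize $\beq$ on the right. With $\beq$ on both sides the bipartite and non-bipartite problems coincide, and the dichotomy finishes. You instead get the right-side $\beq$ uniformly from a nonsingular right binary, via its powers and interpolation, without classifying $M_0\mathbf{G}$ first. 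This is more self-contained, since the Boolean dichotomy is used only once, at the end. The cost is that it relies on interpolation rather than explicit gadgets.

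The exceptional cases you expect to be ``the bulk of the work'' are not real.
\begin{itemize}
\item Your determinant argument for $\mathbf{F}$ applies verbatim to the nondegenerate $\mathbf{G}$. So some $\mathbf{v}$ gives a nonsingular $C$, and then $C' = M_0 C M_0\transpose$ is nonsingular. A degenerate pattern for every $\mathbf{v}$ therefore never occurs.
\item If $C'$ is non-diagonalizable, its eigenvalue is some $\lambda \neq 0$, and $(C')^k = \lambda^k\,(I + k\lambda^{-1}N)$ for a fixed $N$ with $N^2=0$. Take an instance with $m$ occurrences of the target right-side binary and substitute $(C')^k$ for each. After dividing by $\lambda^{km}$, the Holant value is a polynomial in $k$ of degree at most $m$. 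Its value at $k=0$ is the value with $I$ in place, and $m+1$ distinct $k \ge 1$ determine that polynomial.
\end{itemize}
Together with your two diagonalizable cases, this shows every nonsingular $C'$ yields $\beq$ on the right. Hence $\holbs(\mathbf{F}M_0^{-1}, M_0\mathbf{G}) \le_T \holbs(\mathbf{F}\mid\mathbf{G})$, and $M = M_0$ itself is the witness. No orthogonal correction, generic choice of $(\mathbf{u},\mathbf{v})$, appeal to $\typei(a,\pm 2\i a)$ forms, or $\Z$-type transform is needed. Replace that last bullet with these two observations and the proof is complete.
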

\begin{proof}
  If such an $M$ exists, then
  \[
    \holbs(\mathbf{F} \mid \mathbf{G}) =_T \holbs(\mathbf{F}M^{-1} \mid M \mathbf{G}) \le_T \holbs(\mathbf{F}M^{-1} , M \mathbf{G}) \,,
  \]
  hence, $\holbs(\mathbf{F} \mid \mathbf{G})$ is tractable.

Assume $\holbs(\mathbf{F} \mid \mathbf{G})$ is not \sph.
Since $\mathbf{F}$ is non-degenerate, there exists some $\mathbf{u} \in \mathbb{C}^2$ such that $A = \langle \mathbf{F}, u \rangle$ is a symmetric rank $2$ matrix.
By a similar argument as in the proof of Lemma~2.14 in~\cite{holant-textbook}, there exists some $B \in \gl_2(\mathbb{C})$ such that $A = B\transpose B$.
Then, we can realize $\beq$ on the left side of $\holbs(\mathbf{F} B^{-1} \mid B \mathbf{G})$.
This implies that $\holbs(B \mathbf{G}) \le_T \holbs(\mathbf{F} \mid \mathbf{G})$.
By the $\holbs$ dichotomy, $B \mathbf{G}$ must be of a tractable form. 
We can show
that in all cases, $B \mathbf{G}$ can realize $\beq$, and so we have $\beq$ on both sides. 
Therefore, we have 
\[
  \holbs(\mathbf{F} \mid \mathbf{G}) =_T \holbs (\mathbf{F} B^{-1} \mid B \mathbf{G}) =_T \holbs(\mathbf{F} B^{-1}, B \mathbf{G}) \,,
\]
and we can finish the proof by applying the $\holbs$ dichotomy.
\end{proof}

We 
give an explicit form of the bipartite dichotomy.
Starting with $\mathbf{F}, \mathbf{G} \in \mathcal{E}$,
for $M = \begin{bsmallmatrix}
  a & b \\ c & d
\end{bsmallmatrix}$, which we may assume $\det(M) = 1$ so $M^{-1} = \begin{bsmallmatrix}
  d & -b \\
  -c & a
\end{bsmallmatrix}$,
we have, for some $x, y, u, v \in \mathbb{C}$,
\[
  M \mathbf{G} = x M \begin{bmatrix}
    1 \\ 0
    \end{bmatrix}\teh + y M \begin{bmatrix}
    0 \\ 1
  \end{bmatrix}\teh 
  = 
  x \begin{bmatrix}
  a \\ c 
  \end{bmatrix}\teh + y \begin{bmatrix}
  b \\ d
  \end{bmatrix}\teh
\]
and
\[
  \mathbf{F} M^{-1} = u \begin{bmatrix}
    1 & 0
    \end{bmatrix}\teh M^{-1} + v \begin{bmatrix}
    0 & 1
  \end{bmatrix}\teh M^{-1} 
  = u \begin{bmatrix}
    d & -b  
  \end{bmatrix}\teh + v \begin{bmatrix}
    -c & a
  \end{bmatrix} \,.
\]
Writing in the column vector form, $\mathbf{F}M^{-1}$ can be written as $(M^{-1})\transpose \mathbf{F}$.
We see that $M \mathbf{G}$ and $(M^{-1})\transpose \mathbf{F}$ satisfy the property $M \mathbf{G} = \mathbf{u}_2\teh + \mathbf{v}_2\teh$ and $(M^{-1})\transpose \mathbf{F} = \mathbf{u}_1\teh + \mathbf{v}_1\teh$ such that $\langle \mathbf{u}_1, \mathbf{u}_2 \rangle = 0$ and $\langle \mathbf{v}_1, \mathbf{v}_2 \rangle = 0$.

More generally, if we have $\mathbf{v}$ on the right side and $\mathbf{u}$ on the left side such that $\langle \mathbf{u}, \mathbf{v} \rangle = \mathbf{u}\transpose \mathbf{v} = 0$,
then the holographic transformation by $M$ preserves the annihilating condition, since $\langle (M ^{-1})\transpose \mathbf{u}, M \mathbf{v} \rangle = (\mathbf{u}\transpose M^{-1}) M \mathbf{v} = \mathbf{u}\transpose \mathbf{v} = 0$.
We can see it pictorially in \cref{fig:holo-keeps-orthogonal-left-right}.
\begin{figure}
\centering
\begin{tikzpicture}
  \node[mycirc, label=below:{$\mathbf{u}$}] (g1) at (0, 0) {};

  \node[mycirc, label=below:{$\mathbf{v}$}] (g2) at (2, 0) {};

  \draw (g1) -- (g2);

  \node at (4, 0) {$=$};

  \node[mycirc, label=below:{$\mathbf{u}$}] (g1) at (6, 0) {};
  \node[mycirc, label=below:{$M^{-1}$}] (g3) at (8, 0) {};

  \node[mycirc, label=below:{$\mathbf{v}$}] (g2) at (12, 0) {};
  \node[mycirc, label=below:{$M$}] (g4) at (10, 0) {};

  \draw (g1) -- (g3) -- (g2) -- (g4);
\end{tikzpicture}
\caption{Holographic transformation preserves annihilation.}
\label{fig:holo-keeps-orthogonal-left-right}
\end{figure}

Applying the above calculation on each of the Boolean domain tractable cases, we obtain the following corollary. 
\begin{corollary}
  \label{cor:bipartite-boolean-dichotomy-explicit}
Let $\mathbf{F}$ and $\mathbf{G}$ be symmetric, non-degenerate, ternary, Boolean domain signatures.
Then $\holbs(\mathbf{F} \mid \mathbf{G})$ \sph unless for some $\mathbf{u}_i, \mathbf{v}_i \in \mathbb{C}^2$ and $\lambda_i \in \mathbb{C}$,
\begin{enumerate}
  \item $\mathbf{F} = \mathbf{u}_1\teh + \mathbf{v}_1\teh$ and $\mathbf{G} = \mathbf{u}_2\teh + \mathbf{v}_2\teh$ such that $\langle \mathbf{u}_1, \mathbf{v}_1 \rangle = 0$ and $\langle \mathbf{u}_2, \mathbf{v}_2 \rangle = 0$, or
  \item $\mathbf{F} = \lambda_1 \mathbf{u}_1\teh + \sym(\mathbf{u}_1, \mathbf{u}_1, \mathbf{v}_1)$ and $\mathbf{G} = \lambda_2 \mathbf{u}_2\teh + \sym(\mathbf{u}_2, \mathbf{u}_2, \mathbf{v}_2)$ such that $\langle \mathbf{u}_1, \mathbf{u}_2 \rangle = 0$.
\end{enumerate}
\end{corollary}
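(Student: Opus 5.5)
We read condition 1 as the cross-orthogonality $\langle \mathbf{u}_1, \mathbf{u}_2 \rangle = \langle \mathbf{v}_1, \mathbf{v}_2 \rangle = 0$ (after possibly swapping $\mathbf{u}_2, \mathbf{v}_2$). This is the relation produced by the computation preceding the statement, and it matches \cref{itm:compatible-pair-unique-prop} of \cref{def:compatible-pairs}.

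The plan is to combine \cref{lem:bipartite-boolean-dichotomy} with \cref{thm:dich-sym-Boolean} and pull the resulting normal forms back through $M$. Assume $\holbs(\mathbf{F} \mid \mathbf{G})$ is not \sph. By \cref{lem:bipartite-boolean-dichotomy} there is $M \in \gl_2(\mathbb{C})$ such that $\holbs(\mathbf{F}M^{-1}, M\mathbf{G})$ is tractable. In column form the left signature is $(M^{-1})\transpose \mathbf{F}$. Both $(M^{-1})\transpose \mathbf{F}$ and $M\mathbf{G}$ are nondegenerate ternary signatures, so by \cref{thm:dich-sym-Boolean} they fall under a common case 2 (type $\typei(a,b)$ with the same $a,b$) or case 3 (type $\typeii$); the arity-$2$ clauses are irrelevant here.

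Next I rewrite each common case basis-free, using the normal forms listed after \cref{thm:dich-sym-Boolean}, and take $\mathbf{p}, \mathbf{q}$ to be the images of $\sbto, \sbtt$ under the relevant inverse transform.
\begin{enumerate}
  \item Type $\typei(a,b)$ with $b \ne \pm 2\i a$: both signatures are $x\mathbf{p}\teh + y\mathbf{q}\teh$ for a common pair with $\langle \mathbf{p}, \mathbf{q} \rangle = 0$, since $\mathbf{p}, \mathbf{q}$ come from $\sbto, \sbtt$ under an orthogonal $T$.
  \item Type $\typeii$: both signatures are $x\mathbf{p}\teh + y\mathbf{q}\teh$ with $\mathbf{p}, \mathbf{q} \sim (1, \pm\i)$ isotropic, using $\Z$.
  \item Type $\typei(a, \pm 2\i a)$: both signatures are $\lambda\mathbf{p}\teh + \sym(\mathbf{p}, \mathbf{p}, \mathbf{q})$ with the same isotropic $\mathbf{p}$. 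Here $\mathbf{p} = \Z\sbto$ (or $\Z\sbtt$), and I must check that the same orientation occurs for both signatures because $a,b$ are shared.
\end{enumerate}

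Then I pull back. Since $M\mathbf{G} = \sum \mathbf{w}_k\teh$ forces $\mathbf{G} = \sum (M^{-1}\mathbf{w}_k)\teh$, and $(M^{-1})\transpose\mathbf{F} = \sum \mathbf{w}_k\teh$ forces $\mathbf{F} = \sum (M\transpose\mathbf{w}_k)\teh$, the key identity is
\[
\langle M\transpose \mathbf{w}, M^{-1}\mathbf{w}' \rangle = \mathbf{w}\transpose M M^{-1} \mathbf{w}' = \langle \mathbf{w}, \mathbf{w}' \rangle .
\]
This is the annihilation-preservation shown in \cref{fig:holo-keeps-orthogonal-left-right}. The same identity applies to $\sym$ terms, and scalars are absorbed into the vectors using cube roots.
\begin{enumerate}
  \item In case 1, set $\mathbf{u}_1 = M\transpose\mathbf{p}$, $\mathbf{v}_1 = M\transpose\mathbf{q}$, $\mathbf{u}_2 = M^{-1}\mathbf{q}$, $\mathbf{v}_2 = M^{-1}\mathbf{p}$. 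The identity gives cross-orthogonality from $\langle \mathbf{p}, \mathbf{q} \rangle = 0$.
  \item In case 2, pair like with like: $\mathbf{u}_1 = M\transpose\mathbf{p}$, $\mathbf{u}_2 = M^{-1}\mathbf{p}$, $\mathbf{v}_1 = M\transpose\mathbf{q}$, $\mathbf{v}_2 = M^{-1}\mathbf{q}$. Orthogonality now comes from isotropy, $\langle \mathbf{p}, \mathbf{p} \rangle = \langle \mathbf{q}, \mathbf{q} \rangle = 0$. Both of these yield form 1 of the statement.
  \item In case 3, $\mathbf{u}_1 = M\transpose\mathbf{p}$ and $\mathbf{u}_2 = M^{-1}\mathbf{p}$ satisfy $\langle \mathbf{u}_1, \mathbf{u}_2 \rangle = \langle \mathbf{p}, \mathbf{p} \rangle = 0$. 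This yields form 2.
\end{enumerate}

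For the converse, tractability in forms 1 and 2 follows from the unique-propagation and counting-and-vanishing arguments in the proof of \cref{thm:ternary-set-ebd-algorithm}. Those arguments apply verbatim in $\mathbb{C}^2$ to a bipartite grid, since every edge joins an $\mathbf{F}$ to a $\mathbf{G}$. In form 2, moreover, each ternary vertex carries strictly more $\mathbf{u}$ factors than $\mathbf{v}$ factors, so some edge must pair $\mathbf{u}_1$ with $\mathbf{u}_2$. Alternatively, one can exhibit $M$ directly by mapping the vectors to $\sbto, \sbtt$ and invoke \cref{lem:bipartite-boolean-dichotomy}.

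The main obstacle is the bookkeeping of which transform normalizes the pair in each case, especially confirming that the two signatures share the same isotropic $\mathbf{p}$ in the $b = \pm 2\i a$ case and tracking the correct pairing of $\mathbf{u}$'s and $\mathbf{v}$'s. I would handle this by arguing intrinsically from the shared type $\typei(a,b)$ recurrence rather than from explicit matrices.
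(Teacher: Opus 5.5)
Your proposal is correct and follows the paper's route: obtain $M$ from \cref{lem:bipartite-boolean-dichotomy}, place $(M^{-1})\transpose\mathbf{F}$ and $M\mathbf{G}$ in a common tractable class of \cref{thm:dich-sym-Boolean}, and pull the normal forms back through $\langle M\transpose\mathbf{w}, M^{-1}\mathbf{w}'\rangle = \langle \mathbf{w},\mathbf{w}'\rangle$ (the paper writes this out only for the $[*,0,0,*]$ normal form and says the other tractable cases are analogous, which is the case analysis you supply). Your reading of condition 1 as $\langle \mathbf{u}_1,\mathbf{u}_2\rangle = \langle \mathbf{v}_1,\mathbf{v}_2\rangle = 0$ is the intended one, since it matches the calculation just before the corollary and its later use (e.g., in \cref{prop:two-ternary-dichotomy-C-0}); the literal condition $\langle \mathbf{u}_1,\mathbf{v}_1\rangle = \langle \mathbf{u}_2,\mathbf{v}_2\rangle = 0$ would wrongly declare the tractable type $\typeii$ instance $\mathbf{F}=\mathbf{G}=(1,\i)\teh+(1,-\i)\teh$ to be \sph.
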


\subsection{\texorpdfstring{\casea}{A}}
Recall that $\mathbf{F}$ is type \casea if there exists some orthogonal $T$ such that $T \mathbf{F} = a \mathbf{e}_1\teh + b \mathbf{e}_2 \teh + c \mathbf{e}_3 \teh$ for some $a, b, c \in \mathbb{C}$.
Type \casea can be further separated by the existence of a nonzero annihilator (\cref{prop:nonzero-annihilator-types}).
We say $\mathbf{F}$ is  \caseatwo if it can be written $T \mathbf{F} = a \mathbf{e}_1\teh + b \mathbf{e}_2 \teh$ for nonzero $a, b \in \mathbb{C}$.
We say $\mathbf{F}$ is \caseathree if it can be written $T \mathbf{F} = a \mathbf{e}_1\teh + b \mathbf{e}_2 \teh + c \mathbf{e}_3 \teh$ for nonzero $a, b, c \in \mathbb{C}$.

\subsubsection{\texorpdfstring{\caseathree}{A3}}
\label{subsec:gc-case-a-3}
Let $\mathbf{F}$ be a type \caseathree signature in the form $\mathbf{F} = a \mathbf{e}_1\teh + b \mathbf{e}_2 \teh + c \mathbf{e}_3 \teh$ for nonzero $a, b, c \in \mathbb{C}$.
Replacing $\mathbb{R}$ with $\mathbb{C}$ in Proposition 6.1 (arXiv version) in~\cite{domain3-real-set} shows that $\spn\left(\mathbf{e}_1\teh, \mathbf{e}_2\teh, \mathbf{e}_3\teh \right)$ can be realized by $\mathbf{F}$ by gadget construction.
We also have $(=_{\db \dg}), (=_{\dg \dr}),$ and $(=_{\db \dr})$ realizable by $\mathbf{F}$.
This means that given another ternary symmetric signature $\mathbf{G}$, 
for $\holts(\mathbf{F}, \mathbf{G})$ to be not \sph,
it is necessary that $\holbs([1, 0, 0, 1], \mathbf{G}\domres{\db\dg})$, $\holbs([1, 0, 0, 1], \mathbf{G}\domres{\dg\dr})$, $\holbs([1, 0, 0, 1], \mathbf{G}\domres{\db\dr})$ are all not \sph.
Then the Boolean domain dichotomy
imposes strong restrictions on these Boolean domain signatures.

We may identify $\symspace{3}{3}$ as $\mathbb{C}^{10}$ with coordinates in \cref{fig:coordinates}.
\begin{figure}
  \centering
  \tersig{0.7}{{x_{\db\db\db}} {x_{\db\db\dg}} {x_{\db\db\dr}} {x_{\db\dg\dg}} {x_{\db\dg\dr}} {x_{\db\dr\dr}} {x_{\dg\dg\dg}} {x_{\dg\dg\dr}} {x_{\dg\dr\dr}} {x_{\dr\dr\dr}}}
  \caption{Coordinates for $\symspace{3}{3}$.}
  \label{fig:coordinates}
\end{figure}
By the Boolean domain dichotomy, 
for the problem $\holbs([1, 0, 0, 1], \mathbf{G}\domres{\db \dg})$ to be not \sph,
it must be the case that $\mathbf{G}$ is in the algebraic set
\begin{equation} \label{eq:set-e-bg}
  E_{\db \dg} = \{x_{\db \db \dg} = 0, x_{\db \dg \dg} = 0\} \cup D_{\db\dg} \, ,
\end{equation}
where $D_{\db \dg}$ is the set of symmetric ternary signatures $\mathbf{H}$ such that $\mathbf{H}\domres{\db\dg}$ is degenerate.
In particular, we can view $D_{\db\dg}$ as a projective variety on the coordinates $x_{\db\db\db}, x_{\db\db\dg}, x_{\db\dg\dg}, x_{\dg\dg\dg}$ that is the twisted cubic in $\mathbb{CP}^3$, which is the image of the Veronese embedding $\nu: \mathbb{CP}^1 \to \mathbb{CP}^3$ by $[x\mathrel{:}y] \mapsto [x^3 \mathrel{:} x^2 y \mathrel{:} x y^2 \mathrel{:} y^3]$.
Then, $D_{\db\dg}$ is the zero locus of the polynomials
\begin{align}
X_0X_2 = X_1 && X_1X_3 =X_2^2 && X_0X_3 = X_1X_2,
\end{align}
where $(X_0, X_1, X_2, X_3) = (x_{\db\db\db}, x_{\db\db\dg}, x_{\db\dg\dg}, x_{\dg\dg\dg})$.
Further, one can argue that $E_{\db\dg}$ is the zero locus of
\begin{align}
X_0X_2 = X_1 && X_1X_3 =X_2^2.
\end{align}

In summary, for an \caseathree signature $\mathbf{F} = a \mathbf{e}_1\teh + b \mathbf{e}_2 \teh + c \mathbf{e}_3 \teh$ for $a, b, c \ne 0$, we have 
\begin{enumerate}
  \item Frame:
  \item[] $\ag(\caseathreemath) := \spn\left(\mathbf{e}_1\teh, \mathbf{e}_2\teh, \mathbf{e}_3\teh\right)$
  \item[] $\agx(\caseathreemath) := \Yinv \ag(\caseathreemath)$

     \item
    Shell for $\holts(\mathbf{F}, \mathcal{F})$:
   \item[] $\mathcal{F} \subset \bigcap_{i \ne j \in \{\db, \dg, \dr\}} E_{ij}$
\end{enumerate}

\subsubsection{\texorpdfstring{\caseatwo}{A2}}\label{subsec:gc-case-a-2}

Similarly, for a type \caseatwo signature $\mathbf{F} = a \mathbf{e}_1\teh + b \mathbf{e}_2\teh$ for nonzero $a, b \in \mathbb{C}$, we have
\begin{enumerate}
  \item Frame:
  \item[] $\ag(\caseatwomath) := \spn\left(\mathbf{e}_1\teh, \mathbf{e}_2\teh \right)$
  \item[] $\agx(\caseatwomath) := \Yinv \ag(\caseatwomath)$

\item
    Shell for $\holts(\mathbf{F}, \mathcal{F})$: 
   \item[] $\mathcal{F} \subset E_{\db \dg}$
\end{enumerate}

\subsection{\texorpdfstring{\caseb}{B}}
Recall that $\mathbf{F}$ is a type \caseb signature if there exists some orthogonal $T$ such that 
$T \mathbf{F} = a \bzp \teh + b \bzm \teh + c\mathbf{e}_3\teh$ for some $a, b, c \in \mathbb{C}$.
Again, we separate type \caseb by the existence of a nonzero annihilator (\cref{prop:nonzero-annihilator-types}).
Let \casebthree be the case when $a, b, c \ne 0$.
Let \casebtwo be the case when $a, b \ne 0$ and $c = 0$.
Note that the case when $a$ or $b$ is $0$, for example, $a \bzp\teh + c \mathbf{e}_3\teh$ is a \casecz signature because it is annihilated by the isotropic $\bzp$, so we will not consider it among the subtypes of \caseb.

\subsubsection{\texorpdfstring{\casebthree}{B3}}\label{subsec:gc-case-b-3}
Let $\mathbf{F}$ be a type \casebthree signature of the form $\mathbf{F} = x \bzp \teh + y \bzm \teh + z \mathbf{e}_3\teh$ for nonzero $x, y, z \in \mathbb{C}$.
Proposition 6.2 (arXiv version) in~\cite{domain3-real-set} shows that $\spn(\bzp, \bzm, \mathbf{e}_3)$ is realizable by $\mathbf{F}$ by gadget construction.
Also, $(=_{\db\dg})$ is realizable by $\mathbf{F}$.
However, this case is more natural to work with if we use a non-orthogonal transformation. 
Consider applying $\Yinv$ to $\mathbf{F}$.
Let $\iny{\mathbf{F}} = \Yinv \mathbf{F}$.
We get 
\[\holts(\mathbf{F}) \; =_T \; \holts( \X \mid \iny{\mathbf{F}}) \; =_T \; \holts( \X \mid x \mathbf{e}_1\teh + y \mathbf{e}_2\teh + z \mathbf{e}_3 \teh) \,.\]
We have $V = \spn\left(\mathbf{e}_1\teh, \mathbf{e}_2\teh, \mathbf{e}_3\teh\right)$ realizable on the right side since $V = \Yinv\spn(\bzp\teh, \bzm\teh, \mathbf{e}_3\teh)$.

Since $\X V = V$, we have the following chain of reductions for any set of nondegenerate symmetric ternary signatures $\mathcal{F}$:
\begin{align*}
    \holts(\mathbf{F}, \mathcal{F}) 
                                 &=_T \holts(\X \mid \iny{\mathbf{F}}, \Yinv \mathcal{F} ) \\
                                 &\ge_T \holts(\spn\left(\mathbf{e}_1\teh, \mathbf{e}_2\teh, \mathbf{e}_3\teh\right), \X \mid \Yinv \mathcal{F})) \\
  &\ge_T \holbs([1, 0, 0, 1] \mid (\Yinv \mathcal{F})\domres{ij})
\end{align*}
for all $i \ne j \in \{\db, \dg, \dr\}$.
Then, by the bipartite Boolean domain dichotomy (\cref{cor:bipartite-boolean-dichotomy-explicit}), to escape \sphness, 
the symmetric ternary signatures in 
$(\Yinv \mathcal{F})\domres{ij}$ must be \geneq or degenerate. 

Also note that $\iny{\mathbf{F}}$ can realize on the right side the binary equality $\beq$ represented by the $3 \times 3$ identity matrix, by $\langle \mathbf{e}_1\teh + \mathbf{e}_2\teh + \mathbf{e}_3\teh, (1, 1, 1) \rangle$.
        This can be used to realize $\beq$ on the left side as well since $\X \beq \X = \beq$. 
        Then, 
        \begin{align*}
            \holts(\X \mid \iny{\mathbf{F}}, \Yinv \mathcal{F}) &\ge_T \holts(\X \mid \iny{\mathbf{F}}, \Yinv \mathcal{F}, \beq) \\
            &\ge_T \holts(\X, \beq \mid\iny{\mathbf{F}}, \Yinv \mathcal{F}, \beq) \\
            &\ge_T \holts(\X, \iny{\mathbf{F}}, \Yinv \mathcal{F})
        \end{align*}
        because having $\beq$ on both sides means we can remove the bipartite assumption. 
\begin{enumerate}
  \item Frame:
    \item[] $\ag(\casebthreemath) := \spn\left(\bzp\teh, \bzm\teh, \mathbf{e}_3\teh\right)$
    \item[] $\agx(\casebthreemath) :=  \spn\left(\mathbf{e}_1\teh, \mathbf{e}_2\teh, \mathbf{e}_3\teh\right)$
  \item  Shell for $\holts(\mathbf{F}, \mathcal{F})$: 
   \item[] $\Yinv \mathcal{F} \subset  \bigcap_{i \ne j \in \{\db, \dg, \dr\}} E_{ij}$
\end{enumerate}

We also compute the general form of \casebthree using the spin representation of $\speciallinear(2, \mathbb{C})$ parameterizing $\so(3, \mathbb{C})$.
Let $T$ be an orthogonal matrix with columns $\mathbf{v}_1, \mathbf{v}_2, \mathbf{v}_3 \in \mathbb{C}^3$.
Then, $T \ag(\casebthreemath) = \spn(\mathbf{u}_1\teh, \mathbf{u}_2\teh, \mathbf{u}_3\teh)$ where $\mathbf{u}_1 = \mathbf{v}_1 + i \mathbf{v}_2, \mathbf{u}_2 = \mathbf{v}_1 - i \mathbf{v}_2, \mathbf{u}_3 =  \mathbf{v}_3$.
A useful calculation here, using the 
(complex) spin representation of $\speciallinear(2, \mathbb{C})$ parameterizing $\so(3, \mathbb{C})$ and $\so(3, \mathbb{C}, \X)$ 
in~\cref{eqn:phiM-parameterization,eq:paramaterize-so3cx} and the isomorphism $\so(3, \mathbb{C}) \cong \so(3, \mathbb{C}, \X)$, is that 
\begin{align*}
  \begin{bmatrix}
    | & | & | \\
    \mathbf{u}_1 & \mathbf{u}_2 & \mathbf{u}_3 \\
    | & | & |
    \end{bmatrix} &= T \begin{bmatrix}
    1/\sqrt{2} & 1/\sqrt{2} & 0 \\ 
    \i/\sqrt{2} & -\i/\sqrt{2} & 0 \\ 
    0 & 0 & 1
  \end{bmatrix} \\
  &= T \Y \\
  &= \phi(M) \Y \\
  &= \Y \rho(M) \Yinv \Y \\
  &= \begin{bmatrix}
    \frac{1}{\sqrt{2}}(- c^2 + d^2) & \frac{1}{\sqrt{2}}(a^2 - b^2) & - a c + b d\\
    \frac{\i}{\sqrt{2}}(c^2 + d^2) & - \frac{\i}{\sqrt{2}}(a^2 + b^2) & \i (ac + bd) \\
    \sqrt{2} c d & - \sqrt{2} a b & b c + a d
  \end{bmatrix}
\end{align*}
where $M = \begin{bsmallmatrix} a & b \\ c & d \end{bsmallmatrix} \in \speciallinear(2, \mathbb{C})$,  $ad - bc = 1$. 
Therefore, we have
\begin{align}
  \label{eq:isotropic-after-orthogonal}
 \mathbf{u}_1 &= T \bzp = \frac{1}{\sqrt{2}} \begin{bmatrix}
    -c^2 + d^2 \\ \i(c^2 + d^2) \\ 2 cd
  \end{bmatrix} \nonumber \\ 
  \mathbf{u}_2 &= T \bzm = \frac{1}{\sqrt{2}} \begin{bmatrix}
      a^2 - b^2 \\ -\i (a^2 + b^2) \\ - 2 ab
    \end{bmatrix} \\
  \mathbf{u}_3 &=    T \mathbf{e}_3 = \begin{bmatrix}
        -a c + bd \nonumber \\ \i (ac + bd) \\ bc + ad
      \end{bmatrix} \, .
      \end{align}

      \subsubsection{\texorpdfstring{\casebtwo}{B2}}
\label{subsec:gc-case-b-2}
      Let $\mathbf{F}$ be a type \casebtwo signature of the form $\mathbf{F} = a \bzp \teh + b \bzm \teh $ for nonzero $a, b \in \mathbb{C}$.
      Proposition 6.2 (arXiv version) in~\cite{domain3-real-set} shows that $\spn(\bzp, \bzm)$ can be realized by $\mathbf{F}$.
      By a similar argument as above, we have

          \begin{align*}
        \holts(\mathbf{F}, \mathcal{F}) &\ge_T \holts(\X \mid \mathbf{e}_1\teh + \mathbf{e}_2\teh, \Yinv \mathcal{F}) \\
        &\ge_T \holts(\mathbf{e}_1\teh + \mathbf{e}_2\teh, \X \mid \Yinv \mathcal{F}) \\
        &\ge_T \holbs([1, 0, 0, 1] \mid (\Yinv \mathcal{F})\domres{\db\dg}) \,.
        \end{align*}
      \begin{enumerate}
        \item Frame:
            \item[] $\ag(\casebtwomath) := \spn\left(\bzp\teh, \bzm\teh\right)$
            \item[] $\agx(\casebtwomath) :=  \spn\left(\mathbf{e}_1\teh, \mathbf{e}_2\teh \right)$
  \item  Shell for $\holts(\mathbf{F}, \mathcal{F})$: 
   \item[] $\Yinv \mathcal{F} \subset  E_{\db\dg}$
      \end{enumerate}

      \subsection{\texorpdfstring{\casec}{C}}
      We summarize the analysis given in \cref{sec:casec}. 

      \subsubsection{\texorpdfstring{\casecz}{C0}}
      Let $\mathbf{F}$ be a type \casecz signature such that $\ann(\mathbf{F}) = \spn(\bzp)$.
      Then as $\Y\transpose \bzp = \mathbf{e}_2$, we have
       $\ann(\Yinv \mathbf{F}) = \spn(\mathbf{e}_2)$ by \cref{prop:annihilating-vector-under-Y}.
      So $\iny{\mathbf{F}} = \Yinv \mathbf{F}$ is supported on $\db\dr$.
      We have seen in \cref{subsec:casecz-characterization} that $\iny{\mathbf{F}}$ cannot make any useful connected gadget other than itself. 
      Expressing the problem in the bipartite setting, we may move $\iny{\mathbf{F}}$ to the left side, which becomes $\X \iny{\mathbf{F}}$ supported on $\dg\dr$.

      \begin{align*}
      \holts(\mathbf{F}, \mathcal{F}) &=_T \holts(\X \mid \iny{\mathbf{F}}, \Yinv \mathcal{F}) \\
      &\ge_T \holts(\X, \X \iny{\mathbf{F}} \mid \Yinv \mathcal{F}) \\
      &\ge_T  \holbs( (\X \iny{\mathbf{F}})\domres{\dg \dr} \mid (\Yinv \mathcal{F})\domres{\dg \dr}) \, .
      \end{align*}
      There are two possibilities here.
      If $(\X \iny{\mathbf{F}})\domres{\dg \dr}$ is not of the form $\mathbf{v}_1\teh + \mathbf{v}_2\teh$ or $\mathbf{v}_1\teh + \sym(\mathbf{v}_1, \mathbf{v}_1, \mathbf{v}_2)$, then to escape \sphness by the bipartite Boolean domain dichotomy (\cref{cor:bipartite-boolean-dichotomy-explicit}),
      the signatures in the right side must be 
      degenerate.
      Otherwise, the constraint is determined by the Boolean domain bipartite dichotomy. 
      \begin{enumerate}
        \item Frame:
        \item[] A \casecz signature cannot make any useful gadget by itself (other than itself).
        \item Shell for $\holts(\mathbf{F}, \mathcal{F})$: 
        \item[] $\Yinv \mathcal{F} \subset D_{\dg\dr} \cup A$
      \end{enumerate}
      where $A$ is the algebraic set determined by the Boolean domain bipartite dichotomy, possibly empty by the above discussion.
      We can write down the specific set $A$ if needed using \cref{cor:bipartite-boolean-dichotomy-explicit}.

      \subsubsection{\texorpdfstring{\casecobg}{C1bg}}
\label{subsec:gc-case-c1-bg}
      Let $\mathbf{F}$ be a type \casecobg signature such that $\iny{\mathbf{F}} = \Yinv \mathbf{F} = a \mathbf{e}_1\teh + \lambda \sym(\mathbf{e}_1, \mathbf{e}_1, \mathbf{e}_2)$ for any $a, \lambda \in \mathbb{C}$ and $\lambda \ne 0$. 
      We can show that in the setting $\holts(\X \mid \iny{\mathbf{F}})$, 
      $x \mathbf{e}_1\teh + y \sym(\mathbf{e}_1, \mathbf{e}_1, \mathbf{e}_2)$ is realizable on the right side
      for any $x, y \in \mathbb{C}$.

      \begin{align*}
          \holts(\mathbf{F}, \mathcal{F}) 
          &\ge_T \holts(\X \mid \sym(\mathbf{e}_1, \mathbf{e}_1, \mathbf{e}_2), \Yinv \mathcal{F})  \\
          &\ge_T \holts(\X(\sym(\mathbf{e}_1, \mathbf{e}_1, \mathbf{e}_2)), \X \mid \Yinv \mathcal{F}) \\
          &\ge_T \holts(\sym(\mathbf{e}_2, \mathbf{e}_2, \mathbf{e}_1) \mid \Yinv \mathcal{F}) \\
          &\ge_T \holbs([0, 0, 1, 0] \mid (\Yinv \mathcal{F})\domres{\db\dg}) \,.
      \end{align*}
      By the Boolean domain bipartite dichotomy (\cref{cor:bipartite-boolean-dichotomy-explicit}), 
      the symmetric ternary signatures in $(\Yinv \mathcal{F})\domres{\db\dg}$ must be of the form $[*,*, 0, 0]$ or degenerate.
      \begin{enumerate}
        \item Frame:
          \item[] $\ag(\casecobgmath) := \spn(\bzp\teh, \sym(\bzp, \bzp, \bzm))$
          \item[] $\agx(\casecobgmath) := \spn(\mathbf{e}_1\teh, \sym(\mathbf{e}_1, \mathbf{e}_1, \mathbf{e}_2))$
        \item Shell for $\holts(\mathbf{F}, \mathcal{F})$: 
        \item[] $\Yinv \mathcal{F} \subset M_{\db\dg}$
          where $M_{\db\dg} = D_{\db\dg} \cup \{x_{\db\dg\dg} = 0, x_{\dg\dg\dg} = 0\}$, 
          i.e.,  signatures when restricted to $\db\dg$ are degenerate or are   of the form $[*,*,0,0]$.
      \end{enumerate}

      \subsubsection{\texorpdfstring{\casecosep}{C1sep}}
    \label{subsec:gc-case-c1-sep}
      Let $\mathbf{F}$ be a type \casecosep signature such that $\iny{\mathbf{F}} = \Yinv \mathbf{F} = a \mathbf{e}_1\teh + \lambda \sym(\mathbf{e}_1, \mathbf{e}_1, \mathbf{e}_2) + d \mathbf{e}_3\teh$ for $a, d, \lambda \in \mathbb{C}$ with $d, \lambda \ne 0$.
      We can see that in $\holts(\X \mid \iny{\mathbf{F}})$, we can realize $\spn(\mathbf{e}_1\teh, \sym(\mathbf{e}_1, \mathbf{e}_1, \mathbf{e}_2), \mathbf{e}_3\teh)$ on the right side. 

      \begin{align*}
          \holts(\mathbf{F}, \mathcal{F}) 
          &\ge_T \holts(\X \mid \mathbf{e}_1\teh + \mathbf{e}_3\teh, \Yinv \mathcal{F})  \\
          &\ge_T \holts(\X(\mathbf{e}_1\teh + \mathbf{e}_3\teh), \mid \Yinv \mathcal{F}) \\
          &\ge_T \holts(\mathbf{e}_2\teh + \mathbf{e}_3\teh \mid \Yinv \mathcal{F}) \\
          &\ge_T \holbs([1, 0, 0, 1] \mid (\Yinv \mathcal{F}))\domres{\dg\dr}) \,.
      \end{align*}
      By the bipartite Boolean domain dichotomy (\cref{cor:bipartite-boolean-dichotomy-explicit}), every signature in
      $(\Yinv \mathcal{F})\domres{\dg\dr}$
      must be of the form $[*,0, 0, *]$ or degenerate.
      Also, since $\iny{\mathbf{F}}$ can realize $\sym(\mathbf{e}_1, \mathbf{e}_1, \mathbf{e}_2)$ on the right side, we additionally inherit the same constraints from \cref{subsec:gc-case-c1-bg}, that the symmetric ternary signatures in $(\Yinv \mathcal{F})\domres{\db\dg}$ must be $[*, *, 0, 0]$ or degenerate.
      \begin{enumerate}
        \item Frame:
        \item[] $\ag(\casecosepmath) := \spn(\bzp\teh, \sym(\bzp, \bzp, \bzm), \mathbf{e}_3\teh)$
          \item[] $\agx(\casecosepmath) :=  \spn(\mathbf{e}_1\teh, \sym(\mathbf{e}_1, \mathbf{e}_1, \mathbf{e}_2), \mathbf{e}_3\teh)$
        \item Shell for $\holts(\mathbf{F}, \mathcal{F})$: 
        \item[] $\Yinv \mathcal{F} \subset M_{\db\dg} \cap E_{\dg\dr}$

      \end{enumerate}

      \subsubsection{\texorpdfstring{\casecodz}{C1dz}}
      \label{subsec:gc-case-c1dz}
      Let $\mathbf{F}$ be a type \casecodz signature such that $\iny{\mathbf{F}} = \Yinv \mathbf{F} = [a; \kappa, b; 0, 0, c; 0, 0, 0, 0]$ for some $a, b, c, \kappa \in \mathbb{C}$ with $\kappa, c \ne 0$.
      Since replacing a signature with its nonzero scalar multiple does not change the complexity, we may assume that $\kappa = 1$.
      From \cref{subsec:case-c-codz-analysis}, $\holts(\X \mid \iny{\mathbf{F}})$ can realize on the right side using gadget construction and polynomial interpolation
      \[
            \bigcup_{k \ge 1, k \ne k_0} \spn\left(\mathbf{e}_1\teh, \quad \sym(\mathbf{e}_1, \mathbf{e}_1, \mathbf{e}_3), \quad \sym(\mathbf{e}_1, \mathbf{e}_1, \mathbf{e}_2) + c^{2k + 1} \sym(\mathbf{e}_1, \mathbf{e}_3, \mathbf{e}_3)\right)
          \]
          for some $k_0$.
      We see that the signatures realizable depend on the value of $c$. 
      For the purpose of the proofs we give, the exact value of $c$ does not actually matter. 
      Rather, we just need to use one linear subset $\spn\left(\mathbf{e}_1\teh, \sym(\mathbf{e}_1, \mathbf{e}_1, \mathbf{e}_3), \sym(\mathbf{e}_1, \mathbf{e}_1, \mathbf{e}_2) + \lambda \sym(\mathbf{e}_1, \mathbf{e}_3, \mathbf{e}_3)\right)$ for any $\lambda \ne 0$. 
      To make the definition well defined, we will define $\ag(\casecodzmath, \lambda)$ and $\agx(\casecodzmath, \lambda)$ for $\lambda \ne 0$. 
      Then, regardless of the values of $a, b, c$, if $c \ne 0$, there exists some $\lambda \ne 0$ such that 
      $\agx(\casecodzmath, \lambda)$ is realizable on the right side in $\holts(\X \mid [a; 1, b; 0, 0, c;0, 0, 0, 0])$.
      
      \begin{enumerate}
        \item Frame:
          \item[] $
            \ag(\casecodzmath, \lambda) := \spn\left(\bzp\teh, \sym(\bzp, \bzp, \mathbf{e}_3), \sym(\bzp, \bzp, \bzm) + \lambda \sym(\bzp, \mathbf{e}_3, \mathbf{e}_3)\right)
          $
                  \item[]     $
            \agx(\casecodzmath, \lambda) := \spn\left(\mathbf{e}_1\teh, \sym(\mathbf{e}_1, \mathbf{e}_1, \mathbf{e}_3), \sym(\mathbf{e}_1, \mathbf{e}_1, \mathbf{e}_2) + \lambda \sym(\mathbf{e}_1, \mathbf{e}_3, \mathbf{e}_3)\right)
          $

        \item Shell for $\holts(\mathbf{F}, \mathcal{F})$: 
        \item[] $\Yinv \mathcal{F} \subset M_{\dr\dg}$
        where $M_{\dr \dg} =  D_{\dg \dr} \cup \{x_{\dg\dg\dg} = 0, x_{\dg\dg\dr} = 0\}$, 
        i.e.,  signatures when restricted to $\dg\dr$ are degenerate or are  of the form $[0, 0, *, *]$.
      \end{enumerate}

      \begin{remark}
      If $c$ is not a root of unity, then by \cref{rem:c-not-root-of-unity-interpolate-away}, $\holts(\X \mid [*; 1, *; 0, 0, c; 0, 0, 0, 0])$ can realize on the right side
      \[
            \spn\left(\mathbf{e}_1\teh, \quad \sym(\mathbf{e}_1, \mathbf{e}_1, \mathbf{e}_3), \quad \sym(\mathbf{e}_1, \mathbf{e}_1, \mathbf{e}_2), \quad \sym(\mathbf{e}_1, \mathbf{e}_3, \mathbf{e}_3)\right)
      \]
      and impose the Polynomial Constraints
        $\Yinv \mathcal{F} \subset D_{\dg\dr} \cap M_{\db\dg}$.
      As we will see, for the proof of the dichotomy in this paper, 
      whether $c$ is a root of unity or not does not play a role.
      \end{remark}

      \subsubsection{\texorpdfstring{\casecogen}{C1gen}}
      \label{subsec:gc-case-c1gen}
      Let $\mathbf{F}$ be a type \casecogen signature such that $\iny{\mathbf{F}} = \Yinv \mathbf{F} = [a; \lambda, b; 0, 0, c; 0, 0, 0, d]$, in which we may assume $\lambda = 1$, $d \ne 0$ and $c^2 - c - bd \ne 0$.
      From \cref{subsec:case-c-gen-analysis}, we have 
      \begin{enumerate}
        \item Frame:
          \begin{align*}
          \ag(\casecogenmath) &:=
            \spn(\bzp\teh, \sym(\bzp, \bzp, \bzm)) \cup \bigcup_{u, v \in \mathbb{C}} \spn\left(\bzp\teh, (u \bzp + v \mathbf{e}_3)\teh\right) \\
            \\
            \agx(\casecogenmath) &:=
            \spn(\mathbf{e}_1\teh, \sym(\mathbf{e}_1,\mathbf{e}_1, \mathbf{e}_2)) \cup \bigcup_{u, v \in \mathbb{C}} \spn\left(\mathbf{e}_1\teh, (u\mathbf{e}_1 + v\mathbf{e}_3)\teh\right) 
          \end{align*}

        \item Shell for $\holts(\mathbf{F}, \mathcal{F})$: 
        \item[] $\Yinv \mathcal{F} \subseteq D_{\dg\dr} \cap M_{\db\dg}$ for some $T' \in \so(3, \mathbb{C}, \X)$.
      \end{enumerate}

\section{Dichotomy of a Pair of Ternary Signatures}
\label{sec:pair-dicho}
In this section, we prove a $\holts(\mathbf{F}, \mathbf{G})$ dichotomy when $\mathbf{F}$ and $\mathbf{G}$ are nondegenerate symmetric \emph{ternary} signatures.
We first list some results that will be useful in the analysis. 
Then, we  prove a dichotomy for $\holts(\mathbf{F}, \mathbf{G})$. 
This will be dealt with as follows. 
For each (tractable) signature type, suppose $\mathbf{F}$ is of this type; 
we anchor the analysis on $\mathbf{F}$ by assuming that $\mathbf{F}$ is of the defining form given in \cref{def:types-ternary}.

We will observe that 
the dichotomy theorems we prove come in two distinct flavors, in both the statements and the proof, depending on whether $\ann(\mathbf{F}) = 0$ or not.
If $\ann(\mathbf{F}) \ne 0$, then $\mathbf{F}$ is only supported on two domains after certain holographic transformation. 
This makes the signature very \emph{weak}: its frame is small and shell is coarse, as shown in \cref{sec:two-pieces}.
We overcome this difficulty by utilizing the properties of annihilators of $\mathbf{F}$ and $\mathbf{G}$. 
We show that either $\mathbf{F}$ and $\mathbf{G}$ do not interact in any meaningful way or one determines the form of the other, 
and $(\mathbf{F}, \mathbf{G})$ must be a compatible pair.
One interesting and unexpected consequence is that this characterization leads to a dichotomy of a set of ternary signatures for signatures with annihilators, and further we can add binary signatures with a little more work.

We first prove the dichotomy when $\ann(\mathbf{F}) \ne 0$ and $\ann(\mathbf{G}) \ne 0$ in \cref{subsec:two-ternary-dichotomy-a-2,subsec:two-ternary-dichotomy-C0,subsec:two-ternary-dichotomy-b-2,subsec:two-ternary-dichotomy-C-1-bg}.
This includes the cases when $\mathbf{F}$ is \caseatwo, \casebtwo, \casecobg, and \casecz by \cref{prop:nonzero-annihilator-types}. 
We will see the arguments for proving the dichotomy when $\ann(\mathbf{F}) \ne 0$ and $\ann(\mathbf{G}) \ne 0$ are very similar for different tractable types of $\mathbf{F}$; 
we analyze the relationship between the annihilators and apply \cref{prop:annihilator-uniquely-propagates-values}. 
For technical reasons, we also prove a dichotomy of $\holts(\mathbf{F}, \mathbf{G})$ when $\mathbf{G}$ is of type \casecodz and \casecogen. 
An intuition is that \casecodz signature has a large frame but a relatively coarse shell. 
This makes the analysis simpler by anchoring on a signature of a different type.

Then, we consider the cases when $\ann(\mathbf{F}) = 0$.
This includes the cases when $\mathbf{F}$ is \caseathree, \casebthree, \casecosep, \casecodz, and \casecogen. 
As we saw in \cref{sec:two-pieces}, often such signature can realize a signature with a nonzero annihilator. 
For instance, \caseathree can realize \caseatwo, \casebthree can realize \casebtwo, and \casecosep can realize \casecobg. 
So, the dichotomy for two signatures with annihilators can be applied when we prove the dichotomy for  signatures 
without annihilators. 
Also, we use the concept of frames and shells heavily here, combined with \cref{prop:subspace-e-bg} and \cref{prop:subspace-m-bg}, which allows us to easily check the containment of a frame in a shell by using basic algebraic geometry and the spin representation.

      \subsection{Some Useful Facts}
      \label{subsec:useful-facts}
      We use the following well known fact in algebraic geometry that algebraic sets can be written as irreducible components~\cite{hartshorne_algebraic_1977}. 
      \begin{proposition}\label{prop:algebraic-set-irreducible-components}
        Every algebraic set in the affine space $\mathbb{C}^n$ or the projective space
        $\mathbb{CP}^n$ can be expressed uniquely as a union of (irreducible) varieties, no one containing another.
      \end{proposition}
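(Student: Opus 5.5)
We will prove the statement via the Noetherian property of the polynomial ring, which is the standard route. First, we treat the affine case. Let $Z \subseteq \mathbb{C}^n$ be an algebraic set, the common zero locus of some collection of polynomials in $\mathbb{C}[x_1,\ldots,x_n]$. By Hilbert's basis theorem the ring is Noetherian. Closed (algebraic) subsets of $\mathbb{C}^n$ then satisfy the descending chain condition, since a strictly descending chain of algebraic sets gives a strictly ascending chain of their vanishing ideals $I(Z_1) \subsetneq I(Z_2) \subsetneq \cdots$. Strictness uses $Z(I(Z)) = Z$ for algebraic $Z$.

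For existence, suppose the set $\mathcal{S}$ of algebraic sets that are not finite unions of irreducible ones is nonempty. By the descending chain condition it has a minimal element $Z$. This $Z$ is not irreducible, so $Z = Z_1 \cup Z_2$ with $Z_i$ proper algebraic subsets. By minimality each $Z_i$ is a finite union of irreducibles, hence so is $Z$, a contradiction. Discarding any component contained in another gives an irredundant decomposition $Z = V_1 \cup \cdots \cup V_r$.

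For uniqueness, take a second irredundant decomposition $Z = W_1 \cup \cdots \cup W_s$. Each $V_i = \bigcup_j (V_i \cap W_j)$, and irreducibility of $V_i$ forces $V_i \subseteq W_j$ for some $j$. Symmetrically, $W_j \subseteq V_k$ for some $k$. Then $V_i \subseteq V_k$, so irredundancy gives $i = k$ and $V_i = W_j$. Hence the two decompositions coincide up to order.

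For the projective case in $\mathbb{CP}^n$, the same argument works with homogeneous ideals in $\mathbb{C}[x_0,\ldots,x_n]$, which again satisfy the ascending chain condition. Equivalently, one passes to affine cones in $\mathbb{C}^{n+1}$: a projective algebraic set is irreducible iff its cone is. The only mildly delicate point is the identity $Z(I(Z)) = Z$ (and its homogeneous analogue), which follows directly from the definitions without needing the full Nullstellensatz. We expect no real obstacle here; since this is a classical fact, the paper may simply cite \cite{hartshorne_algebraic_1977} (Prop.~I.1.5 and its projective analogue).
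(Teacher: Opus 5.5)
Your argument is correct: it is the standard Noetherian-topology proof from Hartshorne, Prop.~I.1.5 and its projective analogue. It uses Hilbert's basis theorem for the descending chain condition on closed sets, a minimal-counterexample argument for existence, and irreducibility for uniqueness, and the paper gives no proof of its own, simply citing \cite{hartshorne_algebraic_1977} for this classical fact.
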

      As we described in \cref{sec:two-pieces}, 
      each signature defines polynomial constraints on
      other signatures 
      in order to escape \sphness, which we called shells. 
      For example, we defined the set $E_{\db \dg}$ in \cref{eq:set-e-bg}.
      Suppose $\mathbf{G}$ is of tractable type $\mathfrak{F}$, 
      If $T \ag(\mathfrak{F})$ is to be contained in $E_{\db\dg}$, 
      then for any
      linear subspace $V \subset T \ag(\mathfrak{F})$, we must have $V \subset E_{\db \dg}$.
      Since $E_{\db \dg}$ only places constraints on the domain $\{\db, \dg\}^*$, we can look at whether $V\domres{\db\dg} \subseteq E_1 \cup E_2$ where $E_1 = 
 \spn(\sbto\teh+ \sbtt\teh)$
            and $E_2 = \{\mathbf{v}\teh : \mathbf{v} \in \mathbb{C}^2\}$.
      The operator $(\blank)\domres{\db\dg}$ behaves particularly nicely with linear subspaces.
      For each $\mathbf{v}_1 \otimes \mathbf{v}_2 \otimes \mathbf{v}_3 \in (\mathbb{C}^3)^{\otimes 3}$, we have $(\mathbf{v}_1 \otimes \mathbf{v}_2 \otimes \mathbf{v}_3 )\domres{\db\dg} = (\proj_{12} \mathbf{v}_1) \otimes (\proj_{12} \mathbf{v}_2) \otimes(\proj_{12} \mathbf{v}_3)$ 
      and this extends linearly.
      Therefore, if $V = \spn( \mathbf{v}_1, \ldots, \mathbf{v}_k)$ then $V\domres{\db\dg} = \spn( \mathbf{v}_1\domres{\db\dg}, \ldots, \mathbf{v}_k\domres{\db\dg})$.

      We may also view the symmetric signatures as in $\mathbb{CP}^3$ instead of $\symspace{3}{2}$, since the constant multiples do not change the complexity of the problem, and the identically $0$ signature does not matter. 
      Denote by  $[x\mathpunct{:}y\mathpunct{:}z\mathpunct{:}w]$  the equivalence class $\{\lambda [x, y, z, w] : \lambda \in \mathbb{C}^* \} \in \mathbb{CP}^3$.
      Then, $E_1$ is the line connecting $[1\mathpunct{:}0\mathpunct{:}0\mathpunct{:}0]$ to $[0\mathpunct{:}0\mathpunct{:}0\mathpunct{:}1]$, and $E_2$ is the twisted cubic curve.
      Therefore, $E_1 \cup E_2$ is a union of a line and a curve.
      On the other hand $V\domres{\db\dg}$ is a linear variety in $\mathbb{CP}^3$.
      Suppose $V\domres{\db\dg} \subseteq E_1 \cup E_2$.
      If $V\domres{\db\dg}$ has (affine) dimension 
      $> 2$, then it cannot be a subset.
      If it has dimension $2$, then projectively, it must be a line, so by \cref{prop:algebraic-set-irreducible-components}, $V\domres{\db\dg} = E_1$.
      If it has dimension $\le 1$, then we can look at the spanning set of $V$ and deduce some linear dependence among its constituent vectors.
      Similar arguments work for the other permissible sets such as $M_{\db\dg}$ in \cref{sec:two-pieces}. 

      We can prove the following propositions by the above analysis and considering the symmetric rank using \cref{prop:symmetric-rank-sym-sum}.

      \begin{proposition}\label{prop:subspace-e-bg}
      Let $V$ be a linear subspace of $\symspace{3}{2}$.
      Assume $\holbs([1,0,0,1], V)$,  or
      $\holbs( \bne \mid [1,0,0, 1], V)$ or
      $\holbs([1,0,0, 1] \mid V)$ 
      is not \sph. Then $V \subseteq E$, where
      \[ E = \{[a, 0, 0, b] : a, b \in \mathbb{C}\} \cup \{ \mathbf{v} \teh : \mathbf{v} \in \mathbb{C}^2\}.\]
        Furthermore,
        \begin{itemize}
          \item $\dim V \le 2$, and
          \item If $\dim V = 2$, then $V = \spn(\sbto\teh, \sbtt\teh) = \{[a, 0, 0, b] : a, b \in \mathbb{C} \}$. 
          \item If $V = \spn(\mathbf{v}_1\teh, \mathbf{v}_2\teh)$, then 
            $\{\mathbf{v}_1, \mathbf{v}_2\} \sim \{\sbto, \sbtt\}$ or $\mathbf{v}_1 \sim \mathbf{v}_2$.
            \item If $V = \spn(\mathbf{v}_1\teh, \sym(\mathbf{v}_1, \mathbf{v}_1, \mathbf{v}_2))$, 
            then $\mathbf{v}_1 \sim \mathbf{v}_2$.
            \end{itemize}
      \end{proposition}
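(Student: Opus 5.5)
The plan has two steps: first show $V \subseteq E$ from the Boolean dichotomies, then determine which linear subspaces can lie inside the union $E$ using the geometry in $\mathbb{CP}^3$ sketched just before the statement.

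\emph{Step 1: $V \subseteq E$.} Fix a nondegenerate $\mathbf{H} \in V$. In each of the three settings the problem contains $[1,0,0,1]$ together with $\mathbf{H}$.
\begin{itemize}
\item In the non-bipartite case $\holbs([1,0,0,1], V)$, apply \cref{thm:dich-sym-Boolean} to $\{[1,0,0,1], \mathbf{H}\}$. The signature $[1,0,0,1]$ has type $\typei(a,b)$ only for $a=0$, i.e.\ $\typei(0,1)$; it does not have type $\typeii$. Being $\typei(0,1)$ forces $x_1 = x_2 = 0$, so $\mathbf{H} = [a,0,0,b]$.
\item For $\holbs([1,0,0,1] \mid V)$, use \cref{cor:bipartite-boolean-dichotomy-explicit}. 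Since $[1,0,0,1] = \sbto\teh + \sbtt\teh$ has symmetric rank $2$, it is not of the $\sym$ form by \cref{prop:symmetric-rank-sym-sum}. Hence $\mathbf{H} = \mathbf{u}_2\teh + \mathbf{v}_2\teh$, with the representation of $[1,0,0,1]$ unique by \cref{prop:uniqueness-sum-lin-ind-tensor}, and the orthogonality relations force $\mathbf{u}_2 \sim \sbto$ and $\mathbf{v}_2 \sim \sbtt$ after reordering. Again $\mathbf{H} = [a,0,0,b]$.
\item For $\holbs(\bne \mid [1,0,0,1], V)$, the binary $\bne$ on the left is $\Z$-transformable to identity-like forms. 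The cleanest route is that $\bne$ together with $[1,0,0,1]$ lets us realize $[1,0,0,1]$ on the left as well, via $\bne$ applied to $[1,0,0,1]$, reducing to the bipartite case. If that fails, I would apply \cref{lem:bipartite-boolean-dichotomy} directly.
\end{itemize}
Degenerate members of $V$ lie in $\{\mathbf{v}\teh\}$ by definition. So $V \subseteq E_1 \cup E_2$, where $E_1 = \spn(\sbto\teh, \sbtt\teh)$ and $E_2$ is the cone over the twisted cubic.

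\emph{Step 2: classify subspaces inside $E_1 \cup E_2$.} Pass to $\mathbb{CP}^3$. A linear space $V$ of affine dimension $k$ gives a $(k-1)$-plane $\mathbb{P}V$ contained in the union of a line and a twisted cubic. $\mathbb{P}V$ is irreducible, so by \cref{prop:algebraic-set-irreducible-components} it lies in one component.
\begin{itemize}
\item The twisted cubic contains no line, since no three of its points are collinear: any three points of the curve are linearly independent by a Vandermonde argument.
\item So if $k \ge 2$, then $\mathbb{P}V$ is a line in $E_1$, forcing $k = 2$ and $V = E_1$. This gives $\dim V \le 2$ and the second bullet.
\end{itemize}

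For the two spanning-set bullets, suppose $V = \spn(\mathbf{v}_1\teh, \mathbf{v}_2\teh)$ with $\mathbf{v}_1 \not\sim \mathbf{v}_2$. Then $\dim V = 2$, since distinct points of the Veronese image are independent, so $V = E_1$. Comparing rank-$2$ decompositions via \cref{prop:uniqueness-sum-lin-ind-tensor} applied to $\mathbf{v}_1\teh + \mathbf{v}_2\teh = a\sbto\teh + b\sbtt\teh$, or simply noting that the only pure cubes in $E_1$ are multiples of $\sbto\teh$ and $\sbtt\teh$ (a cube $[x^3,x^2y,xy^2,y^3]$ in $E_1$ has $xy = 0$), gives $\{\mathbf{v}_1, \mathbf{v}_2\} \sim \{\sbto, \sbtt\}$.

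Now suppose $V = \spn(\mathbf{v}_1\teh, \sym(\mathbf{v}_1,\mathbf{v}_1,\mathbf{v}_2))$ with $\mathbf{v}_1 \not\sim \mathbf{v}_2$. The tensor $\sym(\mathbf{v}_1,\mathbf{v}_1,\mathbf{v}_2)$ is not a pure cube, so $\dim V = 2$ and $V = E_1$. But then $\sym(\mathbf{v}_1,\mathbf{v}_1,\mathbf{v}_2) \in E_1$ has symmetric rank at most $2$, contradicting \cref{prop:symmetric-rank-sym-sum}. Hence $\mathbf{v}_1 \sim \mathbf{v}_2$.

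The main obstacle I expect is the $\bne$ case in Step 1, i.e.\ checking carefully that the bipartite structure still forces $\mathbf{H} \in E$. The geometric part is routine.
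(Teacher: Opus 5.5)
Your proposal is correct and follows essentially the paper's route: the Boolean dichotomies put every member of $V$ in the line $\spn(\sbto\teh,\sbtt\teh)$ or the cone over the twisted cubic; irreducibility of $\mathbb{P}V$, together with the fact that the twisted cubic contains no line, then forces $\dim V\le 2$ with equality only for $V=\spn(\sbto\teh,\sbtt\teh)$; and symmetric rank (\cref{prop:symmetric-rank-sym-sum}) settles the two spanning-set bullets. Your treatment of the $\bne$ setting is a valid way to fill in a step the paper leaves implicit: since $[1,0,0,1]$ is invariant under the bit flip, attaching $\bne$ to its edges realizes $[1,0,0,1]$ on the left, which reduces that setting to the bipartite case.
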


        \begin{proposition}\label{prop:subspace-m-bg}
        Let $V$ be a linear subspace of $\symspace{3}{2}$.
       Assume $\holbs(\bne \mid [0,1,0,0], V)$ or
       $\holbs([0, 0, 1, 0] \mid V)$ 
       is not \sph. 
       Then $V
       \subseteq M$, where
       \[M = \{[a, b, 0, 0] : a, b \in \mathbb{C}\} \cup \{ \mathbf{v} \teh : \mathbf{v} \in \mathbb{C}^2\}.\]
        Furthermore,
        \begin{itemize}
          \item $\dim V \le 2$, and
          \item If $\dim V = 2$, then $V = \spn(\sbto, \sym(\sbto, \sbto, \mathbf{u})) = \{ [a, b, 0, 0] : a, b \in \mathbb{C} \}$ for some $\mathbf{u} \in \mathbb{C}^2$.
          \item If $V = \spn(\mathbf{v}_1\teh, \mathbf{v}_2\teh)$, then 
            $\mathbf{v}_1 \sim \mathbf{v}_2$.
            \item If $V = \spn(\mathbf{v}_1\teh, \sym(\mathbf{v}_1, \mathbf{v}_1, \mathbf{v}_2))$, 
            then $\mathbf{v}_1 \sim \sbto$ or $\mathbf{v}_1 \sim \mathbf{v}_2$.
        \end{itemize}
      \end{proposition}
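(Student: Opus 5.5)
The plan is to reduce \cref{prop:subspace-m-bg} to the explicit Boolean dichotomies already available: \cref{thm:dich-sym-Boolean} for the non-bipartite problem and \cref{cor:bipartite-boolean-dichotomy-explicit} for the bipartite one.

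\textbf{Step 1: every $\mathbf{H}\in V$ lies in $M$.} Suppose $\mathbf{H}\in V$ is nondegenerate.

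In $\holbs([0,0,1,0]\mid V)$, write $[0,0,1,0]=\sym(\sbtt,\sbtt,\sbto)$. This is of the $\sym$ form with $\lambda_1=0$ and $\mathbf{u}_1=\sbtt$. By \cref{cor:bipartite-boolean-dichotomy-explicit}, $\mathbf{H}$ must also be of the $\sym$ form, with $\mathbf{u}_2$ orthogonal to $\sbtt$, so $\mathbf{u}_2\sim\sbto$. Hence $\mathbf{H}=\lambda_2\sbto\teh+\sym(\sbto,\sbto,\mathbf{v}_2)=[*,*,0,0]$.

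In $\holbs(\bne\mid[0,1,0,0],V)$, apply the holographic transformation by $\Z$. This turns $\bne$ into $\beq$ on the left, so the problem is equivalent to a non-bipartite $\holbs$ problem. The transformed $[0,1,0,0]$ is of type $\typei(1,\pm2\i)$, so \cref{thm:dich-sym-Boolean} forces every ternary signature in the family to be of the same $\typei$ type. Transforming back, this again means $\mathbf{H}=[*,*,0,0]$ (possibly with the roles of $\sbto,\sbtt$ swapped, but $[0,1,0,0]$ fixes the orientation). The cleanest way to see this is that $\mathbf{H}$ must satisfy the same annihilation by $\sbtt$ on two inputs as $[0,1,0,0]$.

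In either setting, degenerate members of $V$ lie on the twisted cubic $\{\mathbf{v}\teh\}$. Therefore $V\subseteq M$.

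\textbf{Step 2: the dimension bound and the structure of $V$.} Projectivize, so that $M$ becomes the union of the line $L=\{[a,b,0,0]\}$ and the twisted cubic $C$; here $L$ is the tangent line to $C$ at $[1\mathpunct{:}0\mathpunct{:}0\mathpunct{:}0]$. Since $\mathbb{P}(V)$ is irreducible, \cref{prop:algebraic-set-irreducible-components} gives $\mathbb{P}(V)\subseteq L$ or $\mathbb{P}(V)\subseteq C$. The twisted cubic contains no line, so if $\dim V\ge 2$ then $\mathbb{P}(V)\subseteq L$. This forces $\dim V\le 2$, and when $\dim V=2$ it forces $V=L=\{[a,b,0,0]\}=\spn(\sbto\teh,\sym(\sbto,\sbto,\mathbf{u}))$ for any $\mathbf{u}\not\sim\sbto$.

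\textbf{Step 3: the two itemized consequences.} Suppose $V=\spn(\mathbf{v}_1\teh,\mathbf{v}_2\teh)$ with $\mathbf{v}_1\not\sim\mathbf{v}_2$. Then $\dim V=2$, so $V=L$. But $\mathbf{v}_1\teh+\mathbf{v}_2\teh$ has symmetric rank $2$, while the nondegenerate points of $L$ are $[a,1,0,0]$, which have symmetric rank $3$ by \cref{prop:symmetric-rank-sym-sum} and the remark after it. This is a contradiction, so $\mathbf{v}_1\sim\mathbf{v}_2$.

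Suppose $V=\spn(\mathbf{v}_1\teh,\sym(\mathbf{v}_1,\mathbf{v}_1,\mathbf{v}_2))$ with $\mathbf{v}_1\not\sim\mathbf{v}_2$. Then $\dim V=2$, so $V=L$ and $\mathbf{v}_1\teh\in L\cap C=\{\sbto\teh\}$, giving $\mathbf{v}_1\sim\sbto$. Otherwise $\mathbf{v}_1\sim\mathbf{v}_2$.

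The main obstacle is Step 1 for the $\bne$ setting: tracking exactly which $\typei$ form survives the $\Z$ transformation and checking that it pins the vanishing pattern $[*,*,0,0]$ rather than $[0,0,*,*]$. The rest is elementary projective geometry.
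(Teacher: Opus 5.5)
Your proposal is correct and takes essentially the same route as the paper. You use the Boolean dichotomy, via \cref{cor:bipartite-boolean-dichotomy-explicit} or via \cref{thm:dich-sym-Boolean} after the $\Z$ transform in the $\bne$ case, to place $V$ inside the union of the line $\{[a,b,0,0]\}$ and the twisted cubic, and then irreducibility of the projective linear space together with symmetric rank (or $L\cap C=\{\sbto\teh\}$) gives the dimension bound and the itemized conclusions; the steps you leave implicit, namely that $[0,0,1,0]$ has symmetric rank $3$ so the two-cube case of the corollary cannot occur and that $\mathbf{u}_1\sim\sbtt$ by \cref{prop:uniqueness-sym-form}, are routine.
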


      We also list some facts about annihilators.
    \begin{proposition}
    \label{prop:annihilating-vector-under-gl-transform}
        Let $\mathbf{F}$ be a 
        symmetric signature on domain $3$.
        Let $M \in \gl(3, \mathbb{C})$.
        Then, $\ann(M \mathbf{F}) = (M^{-1})\transpose \ann(\mathbf{F})$.
    \end{proposition}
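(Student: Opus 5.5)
The plan is to verify the identity directly from the definition of contracting a unary vector into a symmetric tensor, using the transformation rule $A^{\otimes k}$ acting factorwise. Let $\mathbf{F}$ have arity $n$, and recall that $\langle \mathbf{F}, \mathbf{u}\rangle$ contracts one input of $\mathbf{F}$ against $\mathbf{u}$ via the dot product. The key identity I will establish is
\[
\langle M\mathbf{F}, \mathbf{u} \rangle = M^{\otimes (n-1)} \langle \mathbf{F}, M\transpose \mathbf{u} \rangle \qquad \text{for all } \mathbf{u} \in \mathbb{C}^3 .
\]
Both sides are linear in $\mathbf{F}$, so it suffices to check the identity on pure tensors $\mathbf{a}_1 \otimes \cdots \otimes \mathbf{a}_n$, contracting (say) the first factor; symmetry of $\mathbf{F}$ guarantees that the choice of factor does not matter. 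On a pure tensor, the left side is $\langle M\mathbf{a}_1, \mathbf{u}\rangle \, M\mathbf{a}_2 \otimes \cdots \otimes M\mathbf{a}_n$. Since $\langle M\mathbf{a}_1, \mathbf{u}\rangle = \mathbf{a}_1\transpose M\transpose \mathbf{u} = \langle \mathbf{a}_1, M\transpose\mathbf{u}\rangle$, this equals the right side.

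Given the identity, the set equality follows. Because $M$ is invertible, $M^{\otimes(n-1)}$ is invertible, so $\langle M\mathbf{F}, \mathbf{u}\rangle = 0$ if and only if $\langle \mathbf{F}, M\transpose\mathbf{u}\rangle = 0$. In other words, $\mathbf{u} \in \ann(M\mathbf{F})$ if and only if $M\transpose \mathbf{u} \in \ann(\mathbf{F})$, which is equivalent to $\mathbf{u} \in (M\transpose)^{-1}\ann(\mathbf{F}) = (M^{-1})\transpose \ann(\mathbf{F})$.

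There is no real obstacle here. The only care needed is bookkeeping: the contraction must pair the transformed leg with $M\transpose$ and not with $M^{-1}$, and the argument uses that $(M\transpose)^{-1} = (M^{-1})\transpose$.
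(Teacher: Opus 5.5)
Your proof is correct and takes essentially the paper's route. The paper's argument is the pictorial form of your identity $\langle M\mathbf{F}, \mathbf{u}\rangle = M^{\otimes(n-1)}\langle \mathbf{F}, M\transpose\mathbf{u}\rangle$, obtained by inserting $M^{-1}M$ on the edge joining $\mathbf{u}$ to $\mathbf{F}$; the only difference is that the paper proves the two inclusions separately, while you get both at once from the invertibility of $M^{\otimes(n-1)}$.
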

    \begin{proof}
    Suppose $\mathbf{u} \in \ann(\mathbf{F})$.
        By the same argument as in \cref{fig:holo-keeps-orthogonal-left-right} where we put $\mathbf{u}$ on the left and $\mathbf{F}$ on the right with $M$, 
        we see that $\mathbf{u}\transpose M^{-1}$ annihilates $M \mathbf{F}$ as a row vector.
        Therefore, $(M^{-1})\transpose \mathbf{u} \in \ann(M \mathbf{F})$. 
        Since $M \in \gl(3, \mathbb{C})$, we can apply the same argument to show that $\mathbf{u}' \in \ann(M \mathbf{F})$ implies $M\transpose \mathbf{u}' \in \ann(\mathbf{F})$. 
        Therefore, $\ann(M \mathbf{F}) = (M^{-1})\transpose \ann(\mathbf{F})$.
    \end{proof}
      \begin{proposition}
        \label{prop:annihilating-vector-under-Y}
        Let $\mathbf{F}$ be a nondegenerate symmetric signature on domain $3$
        with $\ann(\mathbf{F}) = \spn(\mathbf{u}) \ne 0$ where $\mathbf{u}$ is isotropic.
        Then, $\ann(\Yinv \mathbf{F}) = \spn(\mathbf{u}')$ where $\mathbf{u}' = \Y\transpose \mathbf{u} = (x, y, z)\transpose$ and $2 x y + z^2 = 0$.
      \end{proposition}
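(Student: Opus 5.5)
By \cref{prop:annihilating-vector-under-gl-transform} applied with $M = \Yinv$, the annihilator transforms as $\ann(\Yinv \mathbf{F}) = ((\Yinv)^{-1})\transpose \ann(\mathbf{F}) = \Y\transpose \ann(\mathbf{F})$. Since $\ann(\mathbf{F}) = \spn(\mathbf{u})$, this gives $\ann(\Yinv \mathbf{F}) = \spn(\Y\transpose \mathbf{u})$, and it remains only to verify the quadratic relation on $\mathbf{u}' = \Y\transpose \mathbf{u}$.

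For that, I will use $\X = \Y\transpose \Y$, or equivalently $\Y \X^{-1} \Y\transpose = I$. Since $\X^{-1} = \X$, this reads $\Y\X\Y\transpose = I$. One can also check it directly: $\Y\X\Y\transpose$ has upper-left block $\Z \begin{bsmallmatrix} 0&1\\1&0\end{bsmallmatrix}\Z\transpose$, and the columns of $\Z$ are $\tfrac{1}{\sqrt2}(1,\i)$ and $\tfrac{1}{\sqrt2}(1,-\i)$, so this block equals $\tfrac12\left[(1,\i)\otimes(1,-\i)+(1,-\i)\otimes(1,\i)\right] = I_2$. It then follows that
\[
(\mathbf{u}')\transpose \X \mathbf{u}' = \mathbf{u}\transpose \Y \X \Y\transpose \mathbf{u} = \mathbf{u}\transpose \mathbf{u} = 0,
\]
because $\mathbf{u}$ is isotropic. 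Writing $\mathbf{u}' = (x,y,z)$, the left-hand side is $2xy + z^2$, and this gives the claim.

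The computation is routine. The only point needing care is the transpose and inverse convention in \cref{prop:annihilating-vector-under-gl-transform}: the annihilator maps by $\Y\transpose$, not by $\Yinv$. I will cross-check this against the case worked out earlier, $\mathbf{u} = \bzp$, where $\Y\transpose\bzp = \mathbf{e}_2$ and indeed $2\cdot 0\cdot 1 + 0 = 0$. Nondegeneracy of $\mathbf{F}$ is used only to guarantee that the annihilator is exactly one-dimensional, via \cref{prop:two-annihilator-degenerate}, so that it is spanned by $\mathbf{u}'$.
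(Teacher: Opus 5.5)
Your proposal is correct and follows essentially the same route as the paper. You transform the annihilator by \cref{prop:annihilating-vector-under-gl-transform} to get $\ann(\Yinv\mathbf{F}) = \spn(\Y\transpose\mathbf{u})$, then show $(\mathbf{u}')\transpose\X\mathbf{u}' = \mathbf{u}\transpose\Y\X\Y\transpose\mathbf{u} = \mathbf{u}\transpose\mathbf{u} = 0$; the only difference is that the paper gets $\Y\X\Y\transpose = I$ by squaring $\Y\Y\transpose = \mathrm{diag}(1,-1,1)$, while you get it from $\X = \Y\transpose\Y$ and $\X^{-1} = \X$, which is immaterial.
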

      \begin{proof}
        By \cref{prop:annihilating-vector-under-gl-transform}, $\ann(\Yinv \mathbf{F}) = \spn(\mathbf{u}')$.
        Recalling that $\mathbf{X} = \Y\transpose \Y$,
        we have 
        \[
          (\mathbf{u}')\transpose \mathbf{X} \mathbf{u}' = \mathbf{u}\transpose \Y \Y\transpose \Y \Y\transpose \mathbf{u} = \mathbf{u}\transpose \begin{bmatrix}
            1 & 0 & 0 \\
            0 & -1 & 0 \\
            0 & 0 & 1
          \end{bmatrix} 
          \begin{bmatrix}
            1 & 0 & 0 \\
            0 & -1 & 0 \\
            0 & 0 & 1
          \end{bmatrix}
          \mathbf{u} = \mathbf{u}\transpose \mathbf{u} = 0 \,.
        \]
        If we write $\mathbf{u}' = (x, y, z)\transpose$, then $(\mathbf{u}')\transpose \mathbf{X} \mathbf{u}' = 2 xy + z^2 = 0$.
      \end{proof}

The following two propositions are the main ingredients for proving dichotomy for a pair of ternary signatures with nontrivial annihilators. 
\cref{prop:annihilator-uniquely-propagates-values} will be applied 
when,  in addition to    $\mathbf{G}$, we have another signature  $\mathbf{F}$ with an annihilator 
(normalized to $\mathbf{e}_3$) which is not orthogonal to the annihilator of  
$\mathbf{G}$.
Otherwise, if the annihilators are orthogonal, we showed in 
\cref{prop:orthogonal-normal-vector-edge-factors} that the interaction between them cannot cause \sphness. 

      \begin{proposition}\label{prop:annihilator-uniquely-propagates-values}
        Let $\mathbf{G}$ be a symmetric ternary signature on domain $3$.
        Suppose $\ann(\mathbf{G}) = \spn(\mathbf{u})$ such that $\langle \mathbf{u}, \mathbf{e}_3 \rangle \ne 0$.
        Then, $\mathbf{G}$ is uniquely determined by the values of $\mathbf{G}\domres{\db\dg}$.

        In particular, without loss of generality, writing $\mathbf{u} = (x, y, -1)$, 
        \begin{enumerate}
          \item if $\mathbf{G}\domres{\db\dg} = [a, 0, 0, b]$, then $\mathbf{G} = a(1, 0, x)\teh + b(0, 1, y)\teh$;
        \item if $\mathbf{G}\domres{\db\dg} = [a, b, 0, 0]$, then $\mathbf{G} = a (1, 0, x)\teh + b \sym((1, 0, x), (1, 0, x), (0, 1, y))$;
         \item if  $\mathbf{G}$ is nondegenerate, then $\mathbf{G}\domres{\db\dg}$ is nondegenerate.
        \end{enumerate}
      \end{proposition}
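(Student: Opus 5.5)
The plan is to use the annihilation equation to express every entry of $\mathbf{G}$ involving $\dr$ linearly in terms of the entries on $\{\db,\dg\}^*$. Normalize $\mathbf{u} = (x, y, -1)$, which is possible since $\langle \mathbf{u}, \mathbf{e}_3\rangle \ne 0$. The condition $\langle \mathbf{G}, \mathbf{u}\rangle = 0$ says that for all $i, j$,
\[
\mathbf{G}(\dr, i, j) = x\,\mathbf{G}(\db, i, j) + y\,\mathbf{G}(\dg, i, j).
\]
I would apply this in the triangular representation, row by row.

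First, taking $(i,j) \in \{\db,\dg\}^2$ gives $\mathbf{G}(\db\db\dr)$, $\mathbf{G}(\db\dg\dr)$ and $\mathbf{G}(\dg\dg\dr)$ from the four values of $\mathbf{G}\domres{\db\dg}$. Next, taking $(i,j)$ with exactly one $\dr$ gives $\mathbf{G}(\db\dr\dr)$ and $\mathbf{G}(\dg\dr\dr)$ from the values just computed. Finally, $(i,j) = (\dr,\dr)$ gives $\mathbf{G}(\dr\dr\dr)$. Thus all ten entries are determined linearly by $\mathbf{G}\domres{\db\dg}$. Equivalently, the linear map $\mathbf{G} \mapsto \mathbf{G}\domres{\db\dg}$, restricted to the space $\{\mathbf{G} : \langle \mathbf{G}, \mathbf{u}\rangle = 0\}$, is injective.

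For the explicit forms, I would use uniqueness instead of computing. The candidates $a(1,0,x)\teh + b(0,1,y)\teh$ and $a(1,0,x)\teh + b\,\sym((1,0,x),(1,0,x),(0,1,y))$ are annihilated by $\mathbf{u}$, because $\langle (1,0,x), \mathbf{u}\rangle = x - x = 0$ and $\langle (0,1,y), \mathbf{u}\rangle = 0$. Their restrictions to $\db\dg$ are $[a,0,0,b]$ and $[a,b,0,0]$ respectively, since restriction projects each tensor factor to $(1,0)$ or $(0,1)$. By the injectivity above, they coincide with $\mathbf{G}$.

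For item 3, I would argue by contraposition. If $\mathbf{G}\domres{\db\dg} = \mathbf{w}\teh$ with $\mathbf{w} = (p,q)$, then $(p,q,px+qy)\teh$ is annihilated by $\mathbf{u}$ and restricts to $\mathbf{w}\teh$. By uniqueness, $\mathbf{G} = (p,q,px+qy)\teh$ is degenerate; this includes the case $\mathbf{w} = 0$, where $\mathbf{G} = 0$.

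The only subtlety is making sure the recursion closes, i.e. that each $\dr$-entry is reached from entries with fewer $\dr$'s. That follows because each application of the identity removes exactly one $\dr$, so I expect no real obstacle.
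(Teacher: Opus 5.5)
Your proposal is correct and follows the paper's proof essentially exactly: normalize $\mathbf{u}=(x,y,-1)$, use $\mathbf{G}(\dr,i,j)=x\,\mathbf{G}(\db,i,j)+y\,\mathbf{G}(\dg,i,j)$ to propagate the values of $\mathbf{G}\domres{\db\dg}$ through the triangular table, and then obtain the three listed cases by uniqueness. Your check that each candidate tensor is annihilated by $\mathbf{u}$ and has the right restriction is a slightly tidier version of what the paper shows in its figures.
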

      \begin{proof}
        Since $\langle \mathbf{u}, \mathbf{e}_3 \rangle \ne 0$, we may assume that $\mathbf{u} = (x, y, -1)$ since $\ann(\mathbf{G})$ is a linear subspace. 
        Then, $\langle \mathbf{G}, \mathbf{u} \rangle = 0$ implies that, for example, $\mathbf{G}(\db\db\dr) = x \mathbf{G}(\db\db\db) + y \mathbf{G}(\db\db\dg)$,
        i.e., $y_0 = x x_0 + y x_1$ in~\cref{fig:propagation-value} (see also \cref{fig:computing-f-u}). 
        In general, the values of  $\mathbf{G}\domres{\db\dg}$  propagate as depicted in \cref{fig:propagation-value}.
        \begin{figure}
          \centering
          \begin{tikzpicture}
            \node (bbb) at (0, 0) {$x_0$};
            \node (bbg) at (-1, -1) {$x_1$};
            \node (bbr) at (1, -1) {$y_0$};
            \node (bgg) at (-2, -2) {$x_2$};
            \node (bgr) at (0, -2) {$y_1$};
            \node (brr) at (2, -2) {$z_0$};
            \node (ggg) at (-3, -3) {$x_3$};
            \node (ggr) at (-1, -3) {$y_2$};
            \node (grr) at (1, -3) {$z_1$};
            \node (rrr) at (3, -3) {$w_0$};

            \draw[->] (bbb) -- (bbr);
            \draw[->] (bbg) -- (bbr);
            \draw[->] (bbg) -- (bgr);
            \draw[->] (bgg) -- (bgr);
            \draw[->] (bgg) -- (ggr);
            \draw[->] (ggg) -- (ggr);
            \draw[->] (bbr) -- (brr);
            \draw[->] (bgr) -- (brr);
            \draw[->] (bgr) -- (grr);
            \draw[->] (ggr) -- (grr);
            \draw[->] (brr) -- (rrr);
            \draw[->] (grr) -- (rrr);
          \end{tikzpicture}
          \caption{Propagation of the values for $\mathbf{G}$ such that $\langle \mathbf{G}, (x, y, -1) \rangle = 0$.}
          \label{fig:propagation-value}
        \end{figure}
        The statements for the three listed tractable cases of $\mathbf{G}\domres{\db\dg} = [x_0, x_0, x_2, x_3]$  follow from uniqueness.
        See~\cref{fig:ann-e-on-bg,fig:casedz-m-on-bg,fig:ann-degen-on-bg}.
      \end{proof}

        \begin{figure}
        \centering
        \tersig{1}{a 0 {ax} 0 0 {a x^2} b {b y} {b y^2} {a x^3 + b y^3}}
        \caption{$\mathbf{G}$ satisfying $\langle \mathbf{G}, (x, y, -1) \rangle =0$
        and $\mathbf{G}\domres{\db\dg} = [a, 0, 0, b]$.}
        \label{fig:ann-e-on-bg}
      \end{figure}

      \begin{figure}
        \centering
        \tersig{1}{a b {ax + by} 0 bx {ax^2 + 2 bxy} 0 0 {bx^2} {ax^3 + 3bx^2y}}
        \caption{$\mathbf{G}$ satisfying $\langle \mathbf{G}, (x, y, -1) \rangle =0$
        and $\mathbf{G}\domres{\db\dg} = [a,b,0,0]$.}
        \label{fig:casedz-m-on-bg}
      \end{figure}

        \begin{figure}
        \centering
        \tersig{1.2}{{a^3} {a^2 b} {a^2 (a x + b y)} {a b^2} {ab(a x + b y)} {a(a x + by)^2} {b^3} {b^2(ax + by)} {b(a x + by)^2} {(a x + by)^3}}
        \caption{$\mathbf{G}$ satisfying $\langle \mathbf{G}, (x, y, -1) \rangle =0$
        and $\mathbf{G}\domres{\db\dg} = [a^3, a^2b, ab^2, b^3]$.}
        \label{fig:ann-degen-on-bg}
      \end{figure}

\subsection{\texorpdfstring{\caseatwo}{A2}}\label{subsec:A2}
      We start with a \caseatwo signature $\mathbf{F} = a \mathbf{e}_1\teh + b \mathbf{e}_2\teh$ for nonzero $a, b \in \mathbb{C}$.
      From \cref{subsec:gc-case-a-2}, $a \mathbf{e}_1\teh + b \mathbf{e}_2\teh$ and $\mathbf{e}_1\teh + \mathbf{e}_2\teh$ are inter-reducible to each other. 
      So, we may assume without loss of generality, $\mathbf{F} = \mathbf{e}_1\teh + \mathbf{e}_2\teh$.

      In this subsection, we prove two results. 
      We first prove the dichotomy of $\holts(\mathbf{F}, \mathbf{G})$ when $\ann(\mathbf{G}) \ne 0$.
      Then, we prove that $\holts(\mathbf{F}, \mathbf{G})$ is \sph when $\mathbf{G}$ is \casecodz or \casecogen. 
       The intuition for this choice is because as we saw in \cref{subsec:gc-case-c1dz}, the shell for \casecodz signature is relatively coarse while its frame is relatively complex. 
      This makes it easier to analyze \casecodz signatures by anchoring on a signature of a different type.
      
      Note that here we do not cover all possible cases for $\mathbf{G}$, as
      the types \caseathree, \casebthree and \casecosep are not considered.
      The reason is that it is easier to prove the dichotomy for those cases when the \caseathree, \casebthree or \casecosep signature is in its defining form,
      and we will anchor the proof on such a signature. 
      Those cases are covered in their respective sections (\cref{subsec:two-ternary-dichotomy-a-3,subsec:two-ternary-dichotomy-b-3,subsec:two-ternary-dichotomy-C-1-sep}).
      
      \label{subsec:two-ternary-dichotomy-a-2}
      \begin{proposition}
        \label{prop:two-ternary-dichotomy-a-2}
        Let $\mathbf{F} = \mathbf{e}_1\teh + \mathbf{e}_2\teh$.
        Let $\mathbf{G}$ be a nondegenerate symmetric ternary signature such that 
        $\ann(\mathbf{G}) = \spn(\mathbf{u}) \ne 0$.
        Then, $\holts(\mathbf{F}, \mathbf{G})$ is \sph unless
        \begin{itemize}
          \item $\langle \mathbf{u}, \mathbf{e}_3 \rangle = 0$, or
          \item $\langle \mathbf{u}, \mathbf{e}_3 \rangle \ne 0$ and 
            $\mathbf{G} = (*, 0, *)\teh + (0, *, *)\teh$, in particular, $\mathbf{G}$ must be \caseatwo, \casebtwo, or \casecz.
        \end{itemize}
      \end{proposition}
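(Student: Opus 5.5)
The plan is to split according to whether the annihilator $\mathbf{u}$ of $\mathbf{G}$ is orthogonal to $\mathbf{e}_3$, which spans $\ann(\mathbf{F})$. Indeed $\langle \mathbf{F}, \mathbf{e}_3\rangle = 0$, and $\ann(\mathbf{F}) = \spn(\mathbf{e}_3)$ because $\mathbf{F}$ is nondegenerate (\cref{prop:two-annihilator-degenerate}). If $\langle \mathbf{u}, \mathbf{e}_3\rangle = 0$ we are in the first exceptional case and there is nothing to prove. The statement only asserts hardness outside the listed cases, but it is consistent that this case is not hard: \cref{prop:orthogonal-normal-vector-edge-factors} says every $\mathbf{F}$–$\mathbf{G}$ edge factors into unary signatures.

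So assume $\langle \mathbf{u}, \mathbf{e}_3\rangle \ne 0$ and normalize $\mathbf{u} = (x, y, -1)$. By \cref{prop:annihilator-uniquely-propagates-values}, $\mathbf{G}$ is determined by $\mathbf{G}\domres{\db\dg}$, and since $\mathbf{G}$ is nondegenerate, so is $\mathbf{G}\domres{\db\dg}$. From \cref{subsec:gc-case-a-2}, $\mathbf{F}$ realizes $(=_{\db\dg}) = \langle \mathbf{F}, (1,1,0)\rangle$. Hence
\[
\holts(\mathbf{F}, \mathbf{G}) \ge_T \holbs([1,0,0,1], \mathbf{G}\domres{\db\dg}).
\]
If $\holts(\mathbf{F},\mathbf{G})$ is not \sph, \cref{thm:dich-sym-Boolean} applied to the pair $\{[1,0,0,1], \mathbf{G}\domres{\db\dg}\}$ forces $\mathbf{G}\domres{\db\dg}$ into the tractable family containing $[1,0,0,1]$. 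I would rather invoke \cref{prop:subspace-e-bg} with $V = \spn(\mathbf{G}\domres{\db\dg})$: this gives $\mathbf{G}\domres{\db\dg} \in E$, and since it is nondegenerate, $\mathbf{G}\domres{\db\dg} = [a,0,0,b]$ with $a,b$ both nonzero.

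Item 1 of \cref{prop:annihilator-uniquely-propagates-values} then gives $\mathbf{G} = a(1,0,x)\teh + b(0,1,y)\teh$, which is exactly the form $(*,0,*)\teh + (0,*,*)\teh$. For the type claim, write $\boldsymbol{\alpha} = (1,0,x)$ and $\boldsymbol{\beta} = (0,1,y)$, and classify according to whether $\langle\boldsymbol{\alpha},\boldsymbol{\alpha}\rangle = 1 + x^2$, $\langle\boldsymbol{\beta},\boldsymbol{\beta}\rangle = 1+y^2$ and $\langle\boldsymbol{\alpha},\boldsymbol{\beta}\rangle = xy$ vanish. This step needs care because $\holbs$ tractability with $[1,0,0,1]$ imposes nothing more, so hardness has to come from the domain-$3$ side.
\begin{itemize}
\item If $xy = 0$ and neither vector is isotropic, $\mathbf{G}$ is \caseatwo.
\item If $xy = 0$ and a vector is isotropic, $\mathbf{G}$ is \casecz, since the isotropic vector orthogonal to both lies in the annihilator, or it is \casebtwo when both are isotropic.
\item If $xy \ne 0$, the orthogonality fails and I must show hardness. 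I would use \cref{thm:single-ternary-dichotomy} together with \cref{prop:uniqueness-sum-lin-ind-tensor}. The rank-$2$ representation of $\mathbf{G}$ is unique, and the tractable forms with annihilator (\caseatwo, \casebtwo, \casecz) require the two summands to be orthogonal or the annihilator to be isotropic. The annihilator here is $(x,y,-1)$, which is isotropic exactly when $x^2+y^2+1 = 0$, so I would check that case separately against \casecz, whose frame is trivial.
\end{itemize}

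If $\holts(\mathbf{G})$ alone is \sph we are done. Otherwise $\mathbf{G}$ must be one of the listed types, which the above forces. I expect this final type-identification step, in particular the isotropic annihilator subcase, to be the main obstacle.
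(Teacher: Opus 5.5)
Your derivation of $\mathbf{G} = a(1,0,x)\teh + b(0,1,y)\teh$ matches the paper: the reduction to $\holbs([1,0,0,1],\mathbf{G}\domres{\db\dg})$ followed by \cref{prop:annihilator-uniquely-propagates-values}. Using \cref{prop:subspace-e-bg} in place of the Boolean dichotomy is harmless.

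\textbf{The gap is in the type identification.} Your third bullet plans to prove hardness whenever $xy \ne 0$ and $\mathbf{u} = (x,y,-1)$ is non-isotropic. It rests on the claim that a tractable signature with an annihilator must have orthogonal summands or an isotropic annihilator. That claim is false for \casebtwo: its summands $\bzp,\bzm$ are isotropic with $\langle \bzp,\bzm\rangle = 1 \ne 0$, and its annihilator $\mathbf{e}_3$ is non-isotropic.

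Concretely, take $x=y=\i$. Then $\mathbf{G} = (1,0,\i)\teh + (0,1,\i)\teh$, with $xy = -1$ and $\langle \mathbf{u},\mathbf{u}\rangle = -1$. This is exactly the first example of \cref{subsec:tractable-gadget-meet}, where $\holts(\mathbf{F},\mathbf{G})$ is shown to be tractable. So the hardness you set out to prove in that branch is false.

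Correspondingly, your placement of \casebtwo in the $xy=0$ branch is vacuous: there one summand is a multiple of $\mathbf{e}_1$ or $\mathbf{e}_2$, so it is not isotropic. A correct explicit split would be:
\begin{itemize}
\item if $\mathbf{u}$ is isotropic, $\mathbf{G}$ is automatically \casecz;
\item if $\mathbf{u}$ is non-isotropic, $\mathbf{G}$ is tractable exactly when the summands are orthogonal and non-isotropic (\caseatwo) or both isotropic (\casebtwo).
\end{itemize}

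\textbf{No explicit classification is needed.} Your closing sentence has the right idea, but it should not lean on the case split. Since $\holts(\mathbf{G}) \le_T \holts(\mathbf{F},\mathbf{G})$ is not \sph, $\mathbf{G}$ has a tractable type. Because $\ann(\mathbf{G}) \ne 0$, \cref{prop:nonzero-annihilator-types} leaves only \caseatwo, \casebtwo, \casecobg, and \casecz. Finally, \casecobg is excluded because your $\mathbf{G}$ has symmetric rank $2$, whereas \casecobg signatures have symmetric rank $3$ (\cref{prop:symmetric-rank-sym-sum}). This is the paper's argument, and it avoids the isotropy bookkeeping you identified as the main obstacle.
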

      \begin{proof}
        Assume $\holts(\mathbf{F}, \mathbf{G})$ is not \sph. Then in addition to
        $\mathbf{F}$, $\mathbf{G}$  must take a tractable form.
        Now suppose $\langle \mathbf{u}, \mathbf{e}_3 \rangle \ne 0$.
        Then, $\mathbf{F}$ allows us to take a Boolean domain restriction to 
        $\db\dg$, and we have the reduction
        $\holbs([1, 0, 0, 1], \mathbf{G}\domres{\db\dg})
        \le_T \holts(\mathbf{F}, \mathbf{G})$. Thus the former is also
        not \sph. By the Boolean domain dichotomy,  $\mathbf{G}\domres{\db\dg}$ is either degenerate or has the form $[*, 0, 0, *]$.
        By \cref{prop:annihilator-uniquely-propagates-values}, since $\mathbf{G}$ is  nondegenerate it must be of the form $(a, 0, b)\teh + (0, x, y)\teh$ for $a, x \ne 0$.
         As $\mathbf{G}$ has a nontrivial
         annihilator $\mathbf{u}$, by \cref{prop:nonzero-annihilator-types},  it %
         must be one of \caseatwo, \casebtwo, \casecobg, or \casecz. 
         However, since $\mathbf{G}$ has symmetric rank $2$, by \cref{prop:symmetric-rank-sym-sum} it cannot be \casecobg.
         It follows that $\mathbf{G}$  
         must be one of \caseatwo, \casebtwo,  or \casecz. 
      \end{proof}

      The proof strategy we will use in 
      \cref{prop:a2-hard-with-casecodz-casecosep} 
      will be applied repeatedly in this paper, in the following way. 
      We have two individually tractable signatures $\mathbf{F}$ and $\mathbf{G}$,
      and $\mathbf{F}$ can realize a domain restriction to $\db\dg$
      with $\mathbf{F}\domres{\db\dg} = [1, 0, 0, 1]$. 
      If $\mathbf{G}$ is of tractable type $\mathfrak{G}$, then 
      $\holts(\mathbf{F}, \mathbf{G}) \ge_T \holbs([1, 0, 0, 1], (T \ag(\mathfrak{G}))\domres{\db\dg})$ for some orthogonal $T$ by \cref{lem:gadget-set-general-form}.
      From this we can apply \cref{prop:subspace-e-bg} on $(T \ag(\mathfrak{G}))\domres{\db\dg}$ to analyze when $\mathbf{G}$ can escape \sphness.

      \begin{proposition}
        \label{prop:a2-hard-with-casecodz-casecosep}
        Let $\mathbf{F} = \mathbf{e}_1\teh + \mathbf{e}_2\teh$.
        Let $\mathbf{G}$ be a nondegenerate symmetric ternary signature of \casecodz or \casecogen.
        Then, $\holts(\mathbf{F}, \mathbf{G})$ is \sph.
      \end{proposition}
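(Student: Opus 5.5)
Following the strategy announced just before the statement, I use $\mathbf{F}$ only through the domain restriction $(=_{\db\dg})=\langle \mathbf{F},(1,1,0)\rangle$. Let $\mathfrak{G}$ be \casecodz or \casecogen. By \cref{lem:gadget-set-general-form} there is $T\in\so(3,\mathbb{C})$ (and, for \casecodz, some $\lambda\ne 0$) such that
\[
\holts(\mathbf{F},\mathbf{G})\ \ge_T\ \holts(\mathbf{F},T\ag(\mathfrak{G}))\ \ge_T\ \holbs\bigl([1,0,0,1],\,(T\ag(\mathfrak{G}))\domres{\db\dg}\bigr).
\]
It then suffices to show that for every $T$ some linear subspace $V\subseteq\ag(\mathfrak{G})$ has $(TV)\domres{\db\dg}$ violating \cref{prop:subspace-e-bg}. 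Both frames contain $\ag(\casecobgmath)=\spn(\bzp\teh,\sym(\bzp,\bzp,\bzm))$: for \casecogen by definition, and for \casecodz because the first two basis vectors are $\bzp\teh$ and $\sym(\bzp,\bzp,\mathbf{e}_3)$, a subspace of the same shape. Write $\mathbf{u}_1=T\bzp$, $\mathbf{u}_2=T\bzm$, $\mathbf{u}_3=T\mathbf{e}_3$, parametrized via \cref{eq:isotropic-after-orthogonal} by $M=\begin{bsmallmatrix}a&b\\c&d\end{bsmallmatrix}\in\speciallinear(2,\mathbb{C})$. Let $\mathbf{p}_i=\proj_{12}\mathbf{u}_i$.

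First I take $V=\spn(\bzp\teh,\sym(\bzp,\bzp,\mathbf{w}))$ with $\mathbf{w}=\bzm$ (or $\mathbf{e}_3$ for \casecodz). Then $(TV)\domres{\db\dg}=\spn(\mathbf{p}_1\teh,\sym(\mathbf{p}_1,\mathbf{p}_1,\mathbf{p}_w))$. If $\mathbf{p}_1\ne 0$ and this space is $2$-dimensional, \cref{prop:subspace-e-bg} forces it to equal $\spn(\sbto\teh,\sbtt\teh)$. That is impossible by symmetric rank (\cref{prop:symmetric-rank-sym-sum}) or directly: the span contains $[x,y,0,0]$-type shapes after the change of basis $\mathbf{p}_1\mapsto\sbto$, never $[1,0,0,1]$. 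Hence we need $\mathbf{p}_1\sim\mathbf{p}_w$, i.e.\ the $2\times2$ minor $\det[\mathbf{p}_1\ \mathbf{p}_w]=0$, or $\mathbf{p}_1=0$. From \cref{eq:isotropic-after-orthogonal}, $\mathbf{p}_1=\tfrac1{\sqrt2}(d^2-c^2,\i(c^2+d^2))$, which vanishes only if $c=d=0$. That is impossible in $\speciallinear(2,\mathbb{C})$, so $\mathbf{p}_1\ne0$. The minor computations give the concrete conditions. With $\mathbf{p}_2=\tfrac1{\sqrt2}(a^2-b^2,-\i(a^2+b^2))$ we get $\det[\mathbf{p}_1\ \mathbf{p}_2]\propto (ad+bc)(ad-bc)$, so $ad+bc=0$. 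With $\mathbf{p}_3$ a similar computation gives a proportionality to $cd$ times a unit.

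Second, I use the remaining frame elements to kill the surviving $T$. For \casecogen, the extra subspaces $\spn(\bzp\teh,(u\bzp+v\mathbf{e}_3)\teh)$ restrict to $\spn(\mathbf{p}_1\teh,(u\mathbf{p}_1+v\mathbf{p}_3)\teh)$. By \cref{prop:subspace-e-bg}, for every $u,v$ the pair must be proportional or $\sim\{\sbto,\sbtt\}$. Since $\mathbf{p}_1$ is isotropic in $\mathbb{C}^2$, $\mathbf{p}_1\not\sim\sbto,\sbtt$ unless one coordinate vanishes, i.e.\ $d^2=\pm c^2$ degenerate cases. So we need $\mathbf{p}_3\sim\mathbf{p}_1$. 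Combined with $ad+bc=0$ and $ad-bc=1$, this should contradict the explicit form $\mathbf{p}_3=(bd-ac,\i(ac+bd))$. For \casecodz, the third vector $\sym(\bzp,\bzp,\bzm)+\lambda\sym(\bzp,\mathbf{e}_3,\mathbf{e}_3)$ added to the span makes $(TV)\domres{\db\dg}$ three-dimensional, violating $\dim\le2$, unless linear dependence after projection. Once the first step forces $\mathbf{p}_3\sim\mathbf{p}_1$ and $\mathbf{p}_2\sim\mathbf{p}_1$, all of $(TV)\domres{\db\dg}$ lies in $\spn(\mathbf{p}_1\teh,\ldots)$ with everything proportional to $\mathbf{p}_1$ directions. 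I then check that $\mathbf{p}_1\sim\mathbf{p}_2\sim\mathbf{p}_3$ is incompatible with $\det M=1$. Both conditions $ad+bc=0$ and $\mathbf{p}_3\sim\mathbf{p}_1$ together give $ad=bc=0$ up to the rank constraint, contradicting $ad-bc=1$.

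The main obstacle is the bookkeeping in the last step: confirming that the simultaneous proportionality conditions on $\mathbf{p}_1,\mathbf{p}_2,\mathbf{p}_3$ truly contradict $\det M=1$ for every $M$. This also requires handling the degenerate cases in which a projected vector vanishes or $\mathbf{p}_1$ is a coordinate vector, so that \cref{prop:subspace-e-bg}'s options $\{\sbto,\sbtt\}$ become available. If the \casecodz frame alone does not suffice in some branch, I would fall back on the shell argument via \cref{prop:annihilator-uniquely-propagates-values} applied to a realized \casecobg-type signature.
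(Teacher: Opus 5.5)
Your plan is the paper's proof: restrict the frame from \cref{lem:gadget-set-general-form} to $\db\dg$, parametrize $T$ via $\speciallinear(2,\mathbb{C})$, and apply \cref{prop:subspace-e-bg} twice. Your minors $\det[\mathbf{p}_1\ \mathbf{p}_2]\propto ad+bc$ and $\det[\mathbf{p}_1\ \mathbf{p}_3]\propto cd$ are correct, and $ad+bc=cd=0$ does contradict $ad-bc=1$. This is the same fact the paper uses when it says the projected columns of the invertible $T\Y$ must span $\mathbb{C}^2$.

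Two justifications need repair, though neither changes the conclusion. First, $\mathbf{p}_1$ is not isotropic in general: $\langle\mathbf{p}_1,\mathbf{p}_1\rangle=-2c^2d^2$. For \casecogen, $\mathbf{p}_3\sim\mathbf{p}_1$ is forced simply because otherwise some $u\mathbf{p}_1+v\mathbf{p}_3$ is parallel to none of $\mathbf{p}_1,\sbto,\sbtt$, which violates \cref{prop:subspace-e-bg}. Second, for \casecodz, $\mathbf{p}_2\sim\mathbf{p}_1$ does not come from the first step or from a dimension count. Write $\mathbf{p}_3=\mu\mathbf{p}_1$; then the third basis vector restricts to $\sym(\mathbf{p}_1,\mathbf{p}_1,\mathbf{p}_2)+3\lambda\mu^2\mathbf{p}_1\teh$. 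Now apply the last bullet of \cref{prop:subspace-e-bg} to $\spn(\mathbf{p}_1\teh,\sym(\mathbf{p}_1,\mathbf{p}_1,\mathbf{p}_2))$ to get $\mathbf{p}_2\sim\mathbf{p}_1$.
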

      \begin{proof}      
        For a contradiction suppose  $\holts(\mathbf{F}, \mathbf{G})$ is not \sph.
        By \cref{lem:gadget-set-general-form} and \cref{subsec:gc-case-a-2}, $\holts(\mathbf{F}, \mathbf{G}) \ge_T \holbs([1, 0, 0, 1], (T \agx(\mathfrak{F}))\domres{\db\dg})$
        for the tractable type $\mathfrak{F} = $ \casecodz or \casecogen of $\mathbf{G}$ and some $T \in \so(3, \mathbb{C})$.
          Then, we can apply \cref{prop:subspace-e-bg}. 
        
        Let $\mathbf{G}$ be of type \casecodz.
        By \cref{lem:gadget-set-general-form} and \cref{subsec:gc-case-c1dz}, $\holts(\mathbf{F}, \mathbf{G}) \ge_T \holts(\mathbf{F}, \mathbf{G}, V) \ge_T \holbs([1, 0, 0, 1], V\domres{\db\dg})$ where
        \[
        V = T \ag(\casecodzmath, \lambda) = T \spn((\bzp)\teh, \sym(\bzp, \bzp,  \mathbf{e}_3), \sym(\bzp,  \bzp,  \bzm) + \lambda \sym( \bzp,  \bzp,  \mathbf{e}_3))
        \]
        for some $T \in 
         \so(3, \mathbb{C})$ and some nonzero $\lambda \in \mathbb{C}$.
        Now we use the (complex) spin representation, i.e., the parametrization of 
        $\so(3, \mathbb{C})$
         by $\speciallinear(2, \mathbb{C})$ in \cref{eq:isotropic-after-orthogonal}. We can write 
        \[
          V\domres{\db\dg} = \spn(\mathbf{u}_1\teh, \sym(\mathbf{u}_1, \mathbf{u}_1, \mathbf{u}_3), \sym(\mathbf{u}_1, \mathbf{u}_1, \mathbf{u}_2) + \lambda \sym(\mathbf{u}_1, \mathbf{u}_3, \mathbf{u}_3))
        \]
        where $\mathbf{u}_1 = (-c^2 + d^2, \i(c^2 + d^2))$ and $\mathbf{u}_2 = (a^2 - b^2, -\i(a^2 + b^2)), \mathbf{u}_3 = ((-ac + bd), \i(ac + bd))$ where $ad - bc = 1$.
        By \cref{prop:subspace-e-bg} on $\spn(\mathbf{u}_1\teh, \sym(\mathbf{u}_1, \mathbf{u}_1, \mathbf{u}_3))$, we have $\mathbf{u}_1 \sim \mathbf{u}_3$.
        Since it cannot be the case that $\mathbf{u}_1 = 0$ because $ad - bc = 1$,
        we have $V\domres{\db\dg} = \spn(\mathbf{u}_1\teh, \sym(\mathbf{u}_1, \mathbf{u}_1, \mathbf{u}_2))$.
        Applying \cref{prop:subspace-e-bg} again, we get $\mathbf{u}_1 \sim \mathbf{u}_2$.
        However, this is impossible because 
        the span of $\{T \bzp, T \bzm, T \mathbf{e}_3\}$ has dimension $3$.

        Now let $\mathbf{G}$ be of type \casecogen.
        Then, by \cref{lem:gadget-set-general-form} and \cref{subsec:gc-case-c1gen},
        for $V = T \ag(\casecogenmath)$ and some $T \in \so(3, \mathbb{C})$, 
        $\holts(\mathbf{F}, \mathbf{G}) \ge_T \holbs([1, 0, 0, 1], V\domres{\db\dg})$, where
        \[
          V\domres{\db\dg} = \spn(\mathbf{u}_1\teh, \sym(\mathbf{u}_1, \mathbf{u}_1, \mathbf{u}_2)) \cup 
          \bigcup_{x, y \in \mathbb{C}}
        \spn(\mathbf{u}_1\teh, (x \mathbf{u}_1 + y \mathbf{u}_3)\teh) \]
        with the same $\mathbf{u}_1, \mathbf{u}_2, \mathbf{u}_3$ as above.
        By \cref{prop:subspace-e-bg} on $\spn(\mathbf{u}_1\teh, \sym(\mathbf{u}_1, \mathbf{u}_1, \mathbf{u}_2))$, we have $\mathbf{u}_1 \sim \mathbf{u}_2$.
        Then, $\mathbf{u}_3$ must be linearly independent from $\mathbf{u}_1$, by a simple calculation  due to $ad - bc = 1$.
        We apply \cref{prop:subspace-e-bg} on $\spn(\mathbf{u}_1\teh, (x \mathbf{u}_1 + y \mathbf{u}_2)\teh)$. 
        We see that there exist some $x, y$ such that 
        $\{\mathbf{u}_1, x \mathbf{u}_1 + y \mathbf{u}_3\} \not\sim \{\sbto, \sbtt\}$, 
        and $\mathbf{u}_1  \not\sim  x \mathbf{u}_1 + y \mathbf{u}_3$,
        which implies \sphness. 
      \end{proof}

      \subsection{\texorpdfstring{\casebtwo}{B2}}
      \label{subsec:two-ternary-dichotomy-b-2}
    We may assume the given \casebtwo signature $\mathbf{F}$ is $a \bzp\teh + b \bzm \teh$ for some nonzero $a, b \in \mathbb{C}$.
      From \cref{subsec:gc-case-b-2}, we know that $a \bzp\teh + b \bzm\teh$ and $\bzp\teh + \bzm\teh$ are inter-reducible to each other. 
      So, we may assume without loss of generality, $\mathbf{F} = \bzp\teh + \bzm\teh$.

Similar to~\cref{subsec:A2},
we prove two results. 
      We first prove the dichotomy of $\holts(\mathbf{F}, \mathbf{G})$ when $\ann(\mathbf{G}) \ne 0$.
      Then, we prove that $\holts(\mathbf{F}, \mathbf{G})$ is \sph when $\mathbf{G}$ is \casecodz or \casecogen. 
      Again, here we do not cover all possible cases for $\mathbf{G}$, as
      the types \caseathree, \casebthree and \casecosep are not considered.
      The reason is the same as in~\cref{subsec:A2}.

      \begin{proposition}
        \label{prop:two-ternary-dichotomy-b-2}
        Let $\mathbf{F} = \bzp\teh + \bzm\teh$.
        Let $\mathbf{G}$ be a nondegenerate symmetric ternary signature such that $\ann(\mathbf{G}) = \spn(\mathbf{u}) \ne 0$.
        Then, $\holts(\mathbf{F}, \mathbf{G})$ is \sph unless
        \begin{itemize}
          \item $\langle \mathbf{u}, \mathbf{e}_3 \rangle = 0$, or
          \item $\langle \mathbf{u}, \mathbf{e}_3 \rangle \ne 0$ and 
            $\mathbf{G} = x(1, \i, *)\teh + y(1, -\i, *)\teh$ for some $x, y \in \mathbb{C}$.
        \end{itemize}
      \end{proposition}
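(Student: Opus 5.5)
We mirror the proof of \cref{prop:two-ternary-dichotomy-a-2}, but work in the bipartite setting $\holts(\X \mid \Yinv \mathbf{F}, \Yinv \mathbf{G})$. There $\Yinv \mathbf{F} = \mathbf{e}_1\teh + \mathbf{e}_2\teh$ up to scaling. Assume $\holts(\mathbf{F}, \mathbf{G})$ is not \sph and $\langle \mathbf{u}, \mathbf{e}_3 \rangle \ne 0$.

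The first step is to pass to a Boolean domain problem. By \cref{subsec:gc-case-b-2},
\[\holts(\mathbf{F}, \mathbf{G}) \ge_T \holbs([1,0,0,1] \mid (\Yinv \mathbf{G})\domres{\db\dg}).\]
By \cref{cor:bipartite-boolean-dichotomy-explicit}, or equivalently \cref{prop:subspace-e-bg} applied to the one-dimensional span, $(\Yinv \mathbf{G})\domres{\db\dg}$ is degenerate or of the form $[a,0,0,b]$.

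The second step transports the annihilator. By \cref{prop:annihilating-vector-under-gl-transform}, $\ann(\Yinv \mathbf{G}) = \spn(\Y\transpose \mathbf{u})$. Since $\Y\transpose \mathbf{e}_3$-coordinates are unchanged ($\Y$ fixes $\mathbf{e}_3$ and the third row of $\Y\transpose$ is $\mathbf{e}_3\transpose$), we get $\langle \Y\transpose \mathbf{u}, \mathbf{e}_3 \rangle = \langle \mathbf{u}, \mathbf{e}_3 \rangle \ne 0$. Nondegeneracy is preserved by the invertible transform. Hence \cref{prop:annihilator-uniquely-propagates-values} applies to $\Yinv \mathbf{G}$:
\begin{itemize}
\item its restriction is nondegenerate, so it must be $[a,0,0,b]$;
\item therefore $\Yinv \mathbf{G} = a(1,0,x')\teh + b(0,1,y')\teh$.
\end{itemize}

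The third step transforms back by $\Y$. Using $\Y(1,0,x') = \bzp + x'\mathbf{e}_3 \sim (1,\i,\sqrt2 x')$ and $\Y(0,1,y') \sim (1,-\i,\sqrt2 y')$, we obtain $\mathbf{G} = x(1,\i,*)\teh + y(1,-\i,*)\teh$, as claimed.

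The only delicate point is the second step. One must verify that the annihilator condition survives $\Yinv$ with the correct coordinate. This is where the non-orthogonality of $\Y$ could in principle interfere, but it does not, because $\Y$ acts trivially on the third coordinate. Everything else is a direct application of the cited results.
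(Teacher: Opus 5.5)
Your proposal is correct and follows the paper's proof step for step. It reduces via the $\mathfrak{B}_2$ frame to $\holbs([1,0,0,1] \mid (\Yinv\mathbf{G})\domres{\db\dg})$, transports the annihilator to $\Y\transpose\mathbf{u}$ with unchanged $\mathbf{e}_3$-component, applies \cref{prop:annihilator-uniquely-propagates-values} twice, and maps back by $\Y$. Your citation of \cref{prop:annihilating-vector-under-gl-transform} for the annihilator transport is actually the more accurate one, since $\mathbf{u}$ need not be isotropic here, whereas the paper cites the isotropic-case proposition.
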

      \begin{proof}
      Let $\iny{\mathbf{F}} = \Yinv \mathbf{F} = \mathbf{e}_1\teh + \mathbf{e}_2\teh$ and $\iny{\mathbf{G}} = \Yinv \mathbf{G}$.
      The proof is similar to \cref{prop:two-ternary-dichotomy-a-2} but we work in
      the $\holts(\X \mid \iny{\mathbf{F}}, \iny{\mathbf{G}})$ setting.
        Assume $\holts(\mathbf{F}, \mathbf{G})$ is not \sph. Then in addition to
         $\mathbf{F}$, $\mathbf{G}$  must take a tractable form.
    Now suppose $\langle \mathbf{u}, \mathbf{e}_3 \rangle \ne 0$.
    By \cref{subsec:gc-case-b-2}, we have $\holts(\mathbf{F}, \mathbf{G}) \ge_T \holbs([1, 0, 0, 1] \mid \iny{\mathbf{G}}\domres{\db\dg})$.
        By the bipartite Boolean domain dichotomy
        \cref{cor:bipartite-boolean-dichotomy-explicit},
        this forces $\iny{\mathbf{G}}\domres{\db\dg}$ to be degenerate or $[*, 0, 0, *]$.
        Note that the annihilator of $\iny{\mathbf{G}}$ is $\Y\transpose \mathbf{u}$ by \cref{prop:annihilating-vector-under-Y} and $\langle \Y\transpose \mathbf{u}, \mathbf{e}_3 \rangle = \langle \mathbf{u}, \Y \mathbf{e}_3 \rangle = \langle \mathbf{u}, \mathbf{e}_3 \rangle \ne 0$.
        By \cref{prop:annihilator-uniquely-propagates-values}, since $\iny{\mathbf{G}}$ is  nondegenerate,  $\iny{\mathbf{G}}\domres{\db\dg}$ is also nondegenerate
        and so it takes the form $[*, 0, 0, *]$. Then it
        follows by \cref{prop:annihilator-uniquely-propagates-values} again, that $\iny{\mathbf{G}}$ must be of the form $(*, 0, *)\teh + (0, *, *)\teh$. 
        We get the desired form for $\mathbf{G}$ by applying $\Y$ on $\iny{\mathbf{G}}$.
      \end{proof}
      
      \begin{proposition}
        \label{prop:b2-hard-with-casecodz-casecosep}
        Let $\mathbf{F} = \bzp \teh + \bzm \teh$.
        Let $\mathbf{G}$ be a nondegenerate symmetric ternary signature of type \casecodz or 
        \casecogen.
        Then, $\holts(\mathbf{F}, \mathbf{G})$ is \sph.
      \end{proposition}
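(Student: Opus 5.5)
The plan is to mirror the proof of \cref{prop:a2-hard-with-casecodz-casecosep}, but to work in the bipartite setting $\holts(\X \mid \Yinv \mathbf{F}, \Yinv \mathbf{G})$ with the $\agx$ frames, as in \cref{subsec:gc-case-b-2}. Suppose for contradiction that $\holts(\mathbf{F}, \mathbf{G})$ is not \sph. By \cref{subsec:gc-case-b-2} we have $\holts(\mathbf{F},\mathbf{G}) \ge_T \holbs([1,0,0,1] \mid (\Yinv \mathcal{G})\domres{\db\dg})$, where $\mathcal{G}$ is anything realizable from $\mathbf{G}$. By \cref{lem:gadget-set-general-form}, the right side may be taken to contain $V = \iny{T}\,\agx(\mathfrak{G})$ for some $\iny{T} = \rho(M) \in \so(3,\mathbb{C},\X)$, with $M = \begin{bsmallmatrix} a & b \\ c & d\end{bsmallmatrix} \in \speciallinear(2,\mathbb{C})$. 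For type \casecodz we use $\agx(\casecodzmath,\lambda)$ with some $\lambda \ne 0$. The bipartite form of \cref{prop:subspace-e-bg} then applies to $V\domres{\db\dg}$.

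The columns of $\rho(M)$ (\cref{eq:paramaterize-so3cx}) give $\mathbf{w}_1 = \rho(M)\mathbf{e}_1 = (d^2, -c^2, \sqrt2 cd)$, $\mathbf{w}_2 = (-b^2, a^2, -\sqrt2 ab)$ and $\mathbf{w}_3 = (\sqrt2 bd, -\sqrt2 ac, ad+bc)$. Their projections to $\db\dg$ are $\mathbf{p}_1 = (d^2,-c^2)$, $\mathbf{p}_2 = (-b^2,a^2)$ and $\mathbf{p}_3 = \sqrt2(bd,-ac)$. The key nondegeneracy fact is that no two of these are parallel:
\[
\det(\mathbf{p}_1,\mathbf{p}_2) = (ad-bc)(ad+bc), \qquad \det(\mathbf{p}_1,\mathbf{p}_3) \propto -cd(ad-bc), \qquad \det(\mathbf{p}_2,\mathbf{p}_3) \propto ab(ad-bc).
\]
This is not unconditional, however, and the case analysis must track when $ad+bc$, $cd$ or $ab$ vanish. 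We also need $\mathbf{p}_1 \ne 0$, which holds since $c=d=0$ is impossible.

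For \casecodz, apply \cref{prop:subspace-e-bg} to $\spn(\mathbf{p}_1\teh, \sym(\mathbf{p}_1,\mathbf{p}_1,\mathbf{p}_3))$; this forces $\mathbf{p}_1 \sim \mathbf{p}_3$, i.e.\ $cd = 0$. The subspace then reduces to $\spn(\mathbf{p}_1\teh, \sym(\mathbf{p}_1,\mathbf{p}_1,\mathbf{p}_2))$ (the $\lambda$-term being absorbed), and the proposition forces $\mathbf{p}_1 \sim \mathbf{p}_2$, i.e.\ $ad+bc=0$. Together with $cd=0$ and $ad-bc=1$ this is a contradiction. Indeed, $c=0$ gives $ad=1$ and hence $ad+bc=1$; and $d=0$ gives $bc=-1$ and hence $ad+bc=-1$.

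For \casecogen, the frame contains $\spn(\mathbf{e}_1\teh, \sym(\mathbf{e}_1,\mathbf{e}_1,\mathbf{e}_2))$, which again forces $ad+bc=0$, i.e.\ $ad = 1/2 = -bc$. In particular $a,b,c,d$ are all nonzero, so $\mathbf{p}_3$ is independent of $\mathbf{p}_1$. The frame also contains $\spn(\mathbf{p}_1\teh, (x\mathbf{p}_1+y\mathbf{p}_3)\teh)$ for all $x,y$. Choose $x,y$ so that $x\mathbf{p}_1+y\mathbf{p}_3$ is neither parallel to $\mathbf{p}_1$ nor forming, together with $\mathbf{p}_1$, the pair $\{\sbto,\sbtt\}$ up to scaling. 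This violates \cref{prop:subspace-e-bg}.

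The main obstacle I expect is the bookkeeping in the degenerate subcases, in particular confirming that the $\lambda\sym(\mathbf{e}_1,\mathbf{e}_3,\mathbf{e}_3)$ term genuinely collapses once $\mathbf{p}_1 \sim \mathbf{p}_3$, and that the bipartite variant of \cref{prop:subspace-e-bg} yields the same constraints as the non-bipartite one.
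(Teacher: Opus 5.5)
Your proposal is correct and follows the paper's proof essentially step for step. Both work in the $\holts(\X \mid \Yinv\mathbf{F}, \Yinv\mathbf{G})$ setting and parametrize the frame by $\rho(M)$. For \casecodz both apply \cref{prop:subspace-e-bg} twice, and for \casecogen both use $\mathbf{p}_1\sim\mathbf{p}_2$ followed by a generic choice of $x,y$.

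Your explicit determinant bookkeeping (showing $cd=0$ and $ad+bc=0$ are incompatible with $ad-bc=1$) just makes concrete the paper's remark that the $\db\dg$-projections of a basis of $\mathbb{C}^3$ cannot all be parallel. Your worry about the bipartite variant does not arise: $\holbs([1,0,0,1]\mid V)$ is already one of the hypotheses covered in the statement of \cref{prop:subspace-e-bg}.
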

      \begin{proof}
            Let $\iny{\mathbf{F}} = \Yinv \mathbf{F} = \mathbf{e}_1\teh + \mathbf{e}_2\teh$ and $\iny{\mathbf{G}} = \Yinv \mathbf{G}$.
      The proof is similar to \cref{prop:a2-hard-with-casecodz-casecosep} but we work in the $\holts(\X \mid \iny{\mathbf{F}}, \iny{\mathbf{G}})$ setting.
          Assume $\holts(\mathbf{F}, \mathbf{G})$ is not \sph.
          By  \cref{lem:gadget-set-general-form} and  \cref{subsec:gc-case-b-2}, 
        $\holts(\mathbf{F}, \mathbf{G}) \ge_T \holbs([1, 0, 0, 1] \mid (\iny{T} \agx(\mathfrak{F}))\domres{\db\dg})$
        for the tractable type $\mathfrak{F}  = $ \casecodz or \casecogen  of $\mathbf{G}$ and some $\iny{T} \in \so(3, \mathbb{C}, \X)$.
          Then, we can apply \cref{prop:subspace-e-bg}.

        Suppose $\mathbf{G}$ is \casecodz. 
        By \cref{lem:gadget-set-general-form} 
        and \cref{subsec:gc-case-c1dz}, 
        $\holts(\mathbf{F}, \mathbf{G}) \ge_T \holbs([1, 0, 0, 1] \mid V\domres{\db\dg})$
        where $V = \iny{T} \agx(\casecodzmath, \lambda)$ for some $\iny{T} \in \so(3, \mathbb{C}, \X)$ and $\lambda \ne 0$.
        By the (complex) spin representation of $\speciallinear(2, \mathbb{C})$ parameterizing $\so(3, \mathbb{C}, X)$ (\cref{eq:paramaterize-so3cx}), 
        we may write $\iny{T} = \rho(M)$ for 
        $M = \begin{bsmallmatrix}
            a & b \\ c & d
        \end{bsmallmatrix} \in \speciallinear(2, \mathbb{C})$.
        Then, we have 
        \[
            V\domres{\db\dg} = \spn(\mathbf{u}_1\teh, \sym(\mathbf{u}_1, \mathbf{u}_1, \mathbf{u}_3), \sym(\mathbf{u}_1, \mathbf{u}_1, \mathbf{u}_2) + \lambda \sym(\mathbf{u}_1, \mathbf{u}_3, \mathbf{u}_3))
        \]
        where
        $\mathbf{u}_1 = (d^2, -c^2)$,
        $\mathbf{u}_2 = (-b^2, a^2)$,
        $\mathbf{u}_3 = (bd, -ac)$.
        We apply \cref{prop:subspace-e-bg} on 
        the subspace $\spn(\mathbf{u}_1\teh, \sym(\mathbf{u}_1,\mathbf{u}_1,\mathbf{u}_3))$, and get $\mathbf{u}_1 \sim \mathbf{u}_3$ by \cref{eq:paramaterize-so3cx}.
        Since it cannot be the case that $\mathbf{u}_1 = 0$ because $ad - bc = 1$, we have $V\domres{\db\dg} = \spn(\mathbf{u}_1\teh, \sym(\mathbf{u}_1, \mathbf{u}_1, \mathbf{u}_2))$. 
        Applying \cref{prop:subspace-e-bg} again, we get $\mathbf{u}_1 \sim \mathbf{u}_2$.
        However, this is impossible because the span of $\{\rho(M) \mathbf{e}_1, \rho(M) \mathbf{e}_2, \rho(M) \mathbf{e}_3\}$ has dimension $3$.

        Suppose $\mathbf{G}$ is \casecogen.
        Then, by \cref{lem:gadget-set-general-form} and \cref{subsec:gc-case-c1gen} and the same reasoning, 
        we have 
        $\holts(\mathbf{F}, \mathbf{G}) \ge_T \holbs([1, 0, 0, 1] \mid V\domres{\db\dg})$ for $V = \rho(M) \agx(\casecogenmath)$ with
        \[V\domres{\db\dg} = \spn(\mathbf{u}_1\teh, \sym(\mathbf{u}_1, \mathbf{u}_1, \mathbf{u}_2)) \cup \bigcup_{x, y \in \mathbb{C}} \spn(\mathbf{u}_1\teh, (x \mathbf{u}_1 + y \mathbf{u}_3))\teh )\] 
        for the same $\mathbf{u}_1, \mathbf{u}_2, \mathbf{u}_3$ as above. 
        By \cref{prop:subspace-e-bg} on $ \spn(\mathbf{u}_1\teh, \sym(\mathbf{u}_1, \mathbf{u}_1, \mathbf{u}_2)) $, we have $\mathbf{u}_1 \sim \mathbf{u}_2$.
        Then, $\mathbf{u}_3$ must be linearly independent from $\mathbf{u}_1$.
                We apply \cref{prop:subspace-e-bg} on $\spn(\mathbf{u}_1\teh, (x \mathbf{u}_1 + y \mathbf{u}_2)\teh)$. 
        We see that there exist some $x, y$ such that 
        $\{\mathbf{u}_1, x \mathbf{u}_1 + y \mathbf{u}_3\} \not\sim \{\sbto, \sbtt\}$, 
        and $\mathbf{u}_1  \not\sim  x \mathbf{u}_1 + y \mathbf{u}_3$, 
        which implies \sphness. 
      \end{proof}

      \subsection{\texorpdfstring{\casecobg}{C1bg}}
      \label{subsec:two-ternary-dichotomy-C-1-bg}
      Let $\mathbf{F}$ be a \casecobg signature such that $\Yinv \mathbf{F} = a \mathbf{e}_1\teh + \lambda \sym(\mathbf{e}_1, \mathbf{e}_1, \mathbf{e}_2)$ for some $a, \lambda \in \mathbb{C}$ with $\lambda \ne 0$.
      By \cref{subsec:gc-case-c1-bg}, we may assume without loss of generality that $a = 0$ and $\lambda = 1$,
      so $\mathbf{F} = \sym(\bzp, \bzp, \bzm)$.

      Similar to the last subsection, we prove two results. 
      We first prove the dichotomy of $\holts(\mathbf{F}, \mathbf{G})$ when $\ann(\mathbf{G}) \ne 0$.
      Then, we prove the dichotomy of $\holts(\mathbf{F}, \mathbf{G})$ when $\mathbf{G}$ is \casecodz or \casecogen. 
      Again, here we do not cover all possible cases for $\mathbf{G}$, as
      the types \caseathree, \casebthree and \casecosep are not considered.
      The reason is the same as in \cref{subsec:A2}.
      
      \begin{proposition}
        \label{prop:two-ternary-dichotomy-C-1-bg}

        Let $\mathbf{F} = \sym(\bzp, \bzp, \bzm)$.
        Let $\mathbf{G}$ be a nondegenerate symmetric ternary signature such that $\ann(\mathbf{G}) = \spn(\mathbf{u}) \ne 0$.
        Then, $\holts(\mathbf{F}, \mathbf{G})$ is \sph unless
        \begin{itemize}
          \item $\langle \mathbf{u}, \mathbf{e}_3 \rangle = 0$, or
          \item $\langle \mathbf{u}, \mathbf{e}_3 \rangle \ne 0$ and 
            $\mathbf{G} = \sym(\mathbf{v}, \mathbf{v}, \mathbf{w}) + \mathbf{v}\teh$,
            for some $\mathbf{v},  \mathbf{w} $ such that $\langle \bzp, \mathbf{v} \rangle = 0$.
        \end{itemize}
      \end{proposition}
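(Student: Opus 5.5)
Following \cref{prop:two-ternary-dichotomy-b-2}, I will work in the bipartite setting $\holts(\X \mid \iny{\mathbf{F}}, \iny{\mathbf{G}})$ with $\iny{\mathbf{F}} = \Yinv \mathbf{F} = \sym(\mathbf{e}_1, \mathbf{e}_1, \mathbf{e}_2)$ and $\iny{\mathbf{G}} = \Yinv \mathbf{G}$. Assume $\holts(\mathbf{F}, \mathbf{G})$ is not \sph and $\langle \mathbf{u}, \mathbf{e}_3 \rangle \ne 0$; if instead $\langle \mathbf{u}, \mathbf{e}_3\rangle = 0$ there is nothing to prove. By \cref{subsec:gc-case-c1-bg} we have the reduction $\holbs([0,0,1,0] \mid \iny{\mathbf{G}}\domres{\db\dg}) \le_T \holts(\mathbf{F}, \mathbf{G})$. \cref{cor:bipartite-boolean-dichotomy-explicit}, or equivalently the shell $M_{\db\dg}$ recorded there, then forces $\iny{\mathbf{G}}\domres{\db\dg}$ to be degenerate or of the form $[*,*,0,0]$.

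Next I transfer the annihilator condition. By \cref{prop:annihilating-vector-under-gl-transform}, $\ann(\iny{\mathbf{G}}) = \spn(\Y\transpose \mathbf{u})$. Since $\Y\mathbf{e}_3 = \mathbf{e}_3$, we get $\langle \Y\transpose\mathbf{u}, \mathbf{e}_3\rangle = \langle \mathbf{u}, \mathbf{e}_3\rangle \ne 0$, so \cref{prop:annihilator-uniquely-propagates-values} applies to $\iny{\mathbf{G}}$. Its third item shows that $\iny{\mathbf{G}}\domres{\db\dg}$ is nondegenerate, because $\iny{\mathbf{G}}$ is. Hence $\iny{\mathbf{G}}\domres{\db\dg} = [a, b, 0, 0]$ with $b \ne 0$. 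Normalizing $\Y\transpose\mathbf{u} = (x, y, -1)$, item 2 of \cref{prop:annihilator-uniquely-propagates-values} gives
\[\iny{\mathbf{G}} = a(1,0,x)\teh + b\,\sym((1,0,x),(1,0,x),(0,1,y)).\]
After rescaling by $b$, this reads $\iny{\mathbf{G}} = \sym(\iny{\mathbf{v}},\iny{\mathbf{v}},\iny{\mathbf{w}}) + \iny{\mathbf{v}}\teh$ with $\iny{\mathbf{v}} \sim (1,0,x)$. The coefficient of the cube term can be absorbed into $\iny{\mathbf{w}}$ using $\sym(\iny{\mathbf{v}},\iny{\mathbf{v}},\iny{\mathbf{v}}) = 3\iny{\mathbf{v}}\teh$, and the scaling of $\iny{\mathbf{v}}$ is adjusted to match. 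The degenerate case $a$ with $b=0$ is excluded because $\iny{\mathbf{G}}$ is nondegenerate.

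Finally I apply $\Y$ to return to the original setting, which gives $\mathbf{G} = \sym(\mathbf{v},\mathbf{v},\mathbf{w}) + \mathbf{v}\teh$ with $\mathbf{v} = \Y\iny{\mathbf{v}}$. It remains to check $\langle \bzp, \mathbf{v}\rangle = 0$. We have $\langle \bzp, \Y\iny{\mathbf{v}}\rangle = \langle \Y\transpose\bzp, \iny{\mathbf{v}}\rangle$, and $\Y\transpose\bzp = \X\mathbf{e}_1 = \mathbf{e}_2$ (equivalently $\bzp\Y = \mathbf{e}_2$, as noted in \cref{sec:casec}). Since $\iny{\mathbf{v}}$ has zero $\dg$-coordinate, $\langle \mathbf{e}_2, \iny{\mathbf{v}}\rangle = 0$, as required.

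The main obstacle I expect is the bookkeeping in the bipartite reduction. I will confirm that the left-side signature obtained from $\X\sym(\mathbf{e}_1,\mathbf{e}_1,\mathbf{e}_2) = \sym(\mathbf{e}_2,\mathbf{e}_2,\mathbf{e}_1)$, restricted to $\db\dg$, is $[0,0,1,0]$. I will also check that the dichotomy in \cref{cor:bipartite-boolean-dichotomy-explicit} then yields $[*,*,0,0]$ on the right side, with the orthogonality $\langle \mathbf{u}_1,\mathbf{u}_2\rangle = 0$ pairing $\sbtt$ on the left with $\sbto$ on the right, and not the reversed form $[0,0,*,*]$.
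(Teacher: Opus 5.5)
Your proposal is correct and follows the paper's proof step for step: you reduce via the \casecobg shell to $\holbs([0,0,1,0] \mid \iny{\mathbf{G}}\domres{\db\dg})$, transfer the annihilator to $\Y\transpose\mathbf{u}$ with $\langle \Y\transpose\mathbf{u}, \mathbf{e}_3\rangle \ne 0$, invoke \cref{prop:annihilator-uniquely-propagates-values} to pin down $\iny{\mathbf{G}}$, and map back by $\Y$. Your explicit check of $\langle \bzp, \mathbf{v}\rangle = 0$ via $\Y\transpose\bzp = \mathbf{e}_2$ only fills in a detail the paper leaves implicit. The same goes for citing the general \cref{prop:annihilating-vector-under-gl-transform} rather than the isotropic-only \cref{prop:annihilating-vector-under-Y}, which is in fact the more appropriate reference here.
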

            \begin{proof}
            Let $\iny{\mathbf{G}} = \Yinv \mathbf{G}$.
        Assume $\holts(\mathbf{F}, \mathbf{G})$ is not \sph. Then in addition to
         $\mathbf{F}$, $\mathbf{G}$  must take a tractable form.
        By \cref{subsec:gc-case-c1-bg}, 
        $\holts(\mathbf{F}, \mathbf{G}) \ge_T \holbs([0, 0, 1, 0] \mid \iny{\mathbf{G}}\domres{\db\dg})$, so 
        $\iny{\mathbf{G}}\domres{\db\dg}$ must be degenerate or $[*, *, 0, 0]$ by \cref{cor:bipartite-boolean-dichotomy-explicit}.
        Note that the annihilator of $\iny{\mathbf{G}}$ is $\Y\transpose \mathbf{u}$ by \cref{prop:annihilating-vector-under-Y} and $\langle \Y\transpose \mathbf{u}, \mathbf{e}_3 \rangle = \langle \mathbf{u}, \Y \mathbf{e}_3 \rangle = \langle \mathbf{u}, \mathbf{e}_3 \rangle \ne 0$.
        By \cref{prop:annihilator-uniquely-propagates-values}, since $\iny{\mathbf{G}}$ is  nondegenerate it must be of the form 
        given in \cref{prop:annihilator-uniquely-propagates-values}, 
         i.e.,  of the form $(*,0,*)\teh + \sym((*,0,*), (*,0,*), (0, *,*))$. 
        We get the desired form for $\mathbf{G}$ by applying $\Y$ on $\iny{\mathbf{G}}$.
      \end{proof}

      \begin{proposition}
        \label{prop:c1bg-vs-c1codz-c1gen}
        Let $\mathbf{F} = \sym(\bzp, \bzp, \bzm)$.
        Let $\mathbf{G}$ be a nondegenerate symmetric ternary signature of type \casecodz or \casecogen.
        Then, $\holts(\mathbf{F}, \mathbf{G})$ is \sph unless 
        $\Yinv\mathbf{G} \in \casecsupp$.
      \end{proposition}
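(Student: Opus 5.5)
We follow the template of \cref{prop:b2-hard-with-casecodz-casecosep}, working in the bipartite setting $\holts(\X \mid \iny{\mathbf{F}}, \iny{\mathbf{G}})$ with $\iny{\mathbf{F}} = \Yinv \mathbf{F} = \sym(\mathbf{e}_1,\mathbf{e}_1,\mathbf{e}_2)$. We assume $\holts(\mathbf{F},\mathbf{G})$ is not \sph.

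\textbf{Step 1 (setup).} By \cref{subsec:gc-case-c1-bg} we have $\holts(\mathbf{F},\mathbf{G}) \ge_T \holbs([0,0,1,0] \mid \mathcal{H}\domres{\db\dg})$ for every set $\mathcal{H}$ realizable on the right side. By \cref{lem:gadget-set-general-form}, the right side can realize $V = \rho(M)\agx(\mathfrak{G})$ for some $M = \begin{bsmallmatrix} a & b \\ c & d\end{bsmallmatrix} \in \speciallinear(2,\mathbb{C})$, where $\mathfrak{G}$ is \casecodz (with frame parameter $\lambda \ne 0$) or \casecogen.

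\textbf{Step 2 (computing the restricted frame).} The columns of $\rho(M)$ restricted to $\db\dg$ are
\[
\mathbf{u}_1=(d^2,-c^2),\qquad \mathbf{u}_2=(-b^2,a^2),\qquad \mathbf{u}_3=(\sqrt2 bd,-\sqrt2 ac).
\]
We then apply \cref{prop:subspace-m-bg} to the linear pieces of $V\domres{\db\dg}$.

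\textbf{Case \casecodz.} The subspace $\spn(\mathbf{u}_1\teh,\sym(\mathbf{u}_1,\mathbf{u}_1,\mathbf{u}_3))$ forces $\mathbf{u}_1\sim\sbto$ or $\mathbf{u}_1\sim\mathbf{u}_3$.
\begin{itemize}
\item If $\mathbf{u}_1\sim\sbto$, then $c=0$ directly.
\item If $\mathbf{u}_1\sim\mathbf{u}_3$ with $c\ne 0$, then $\det = -\sqrt2\,dc(ad-bc) = -\sqrt2\,dc$ must vanish, so $d=0$. In that case $\mathbf{u}_1=(0,-c^2)\not\sim\sbto$, and $\mathbf{u}_3\sim\mathbf{u}_1$ makes the second generator effectively $\sym(\mathbf{u}_1,\mathbf{u}_1,\mathbf{u}_2)$ up to the span. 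Applying \cref{prop:subspace-m-bg} again gives $\mathbf{u}_1\sim\mathbf{u}_2$, i.e.\ $b=0$. But $d=0$ and $b=0$ contradict $ad-bc=1$.
\end{itemize}
Hence $c=0$.

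\textbf{Case \casecogen.} The piece $\spn(\mathbf{u}_1\teh,\sym(\mathbf{u}_1,\mathbf{u}_1,\mathbf{u}_2))$ forces $\mathbf{u}_1\sim\sbto$ or $\mathbf{u}_1\sim\mathbf{u}_2$. The latter would require $d^2a^2-b^2c^2=(ad-bc)(ad+bc)=0$, i.e.\ $ad=-bc$, whence $ad-bc=-2bc=1$. This case is not immediately contradictory, so we also use the pieces $\spn(\mathbf{u}_1\teh,(x\mathbf{u}_1+y\mathbf{u}_3)\teh)$. By \cref{prop:subspace-m-bg}, each of these needs $\mathbf{u}_1\sim x\mathbf{u}_1+y\mathbf{u}_3$ for all $x,y$, so $\mathbf{u}_1\sim\mathbf{u}_3$, which as above forces $dc=0$. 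Combining $dc=0$ with $ad=-bc$ and $ad-bc=1$ leads to a contradiction unless $c=0$. Indeed, if $d=0$ then $-bc=1$ and $ad=0=-bc$, which is contradictory. So in either branch $c=0$.

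\textbf{Step 3 (conclusion).} Once $c=0$, \cref{prop:casecsupp-fixed-under-rho-c-0} shows $\rho(M)\in\stab(\casecsupp)$. Since $\iny{\mathbf{G}} = \rho(M)\iny{\mathbf{G}}'$ with $\iny{\mathbf{G}}'$ in defining form in $\casecsupp$, we conclude $\Yinv\mathbf{G}\in\casecsupp$.

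The main obstacle is the case bookkeeping in applying \cref{prop:subspace-m-bg}. We must verify that a degenerate alternative ($\mathbf{u}_1 \sim \mathbf{u}_i$) cannot hold simultaneously with $c\ne0$. We must also confirm that $\mathbf{u}_1=0$ is impossible, which holds because $c$ and $d$ cannot both vanish.
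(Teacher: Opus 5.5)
Your proposal is correct and follows essentially the same route as the paper. Both reduce to $\holbs([0,0,1,0]\mid V\domres{\db\dg})$ with $V=\rho(M)\agx(\mathfrak{G})$, apply \cref{prop:subspace-m-bg} to the linear pieces to force $c=0$, and conclude via \cref{prop:casecsupp-fixed-under-rho-c-0}. The only difference is cosmetic: the paper rules out the collinear branches by noting that the top two rows of the invertible $\rho(M)$ cannot be linearly dependent, whereas you check the same fact with explicit $2\times 2$ determinants.
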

      \begin{proof}
      Let $\iny{\mathbf{G}} = \Yinv \mathbf{G}$.
      The proof is similar to \cref{prop:a2-hard-with-casecodz-casecosep,prop:b2-hard-with-casecodz-casecosep}.
      Assume $\holts(\mathbf{F}, \mathbf{G})$ is not \sph.
      By  \cref{lem:gadget-set-general-form} and \cref{subsec:gc-case-c1-bg}, $\holts(\mathbf{F}, \mathbf{G}) \ge_T \holbs([0, 1, 0, 0] \mid \iny{T} \agx(\mathfrak{F}))$ for some $\iny{T} \in \so(3, \mathbb{C}, \X)$, where $\mathfrak{F}$ is the tractable type of $\mathbf{G}$. 
      Then, we apply \cref{prop:subspace-m-bg}.
          
          Suppose $\mathbf{G}$ is \casecodz. 
          Then, there exists some $\iny{T} \in \so(3, \mathbb{C}, \X)$ such that $\iny{T}^{-1} \iny{\mathbf{G}}$ has
          the form $\iny{T}^{-1} \iny{\mathbf{G}} \sim [x; 1, y; 0, 0, z; 0, 0, 0, 0] \in \casecsupp$ for some $x, y, z$ such that $z \ne 0$.
          By \cref{lem:gadget-set-general-form} and \cref{subsec:gc-case-c1dz}, 
          $\holts(\mathbf{F}, \mathbf{G}) \ge_T \holbs([0, 1, 0, 0] \mid V\domres{\db\dg})$ where
          $V = \iny{T} \agx(\casecodzmath, \lambda)$ for some $\lambda \ne 0$.
          By the (complex) spin representation of $\speciallinear(2, \mathbb{C})$ parameterizing $\so(3, \mathbb{C}, X)$ (\cref{eq:paramaterize-so3cx}),
          we may write 
          $\iny{T} = \rho(M)$ for $M = \begin{bsmallmatrix}
              a & b \\ c & d
          \end{bsmallmatrix} \in \speciallinear(2, \mathbb{C})$. 
        Then, we have 
        \[
            V \domres{\db\dg} = \spn(\mathbf{u}_1\teh, \sym(\mathbf{u}_1, \mathbf{u}_1, \mathbf{u}_3), \sym(\mathbf{u}_1, \mathbf{u}_1, \mathbf{u}_2) + \lambda \sym(\mathbf{u}_1, \mathbf{u}_3, \mathbf{u}_3))
        \]
        for
        $\mathbf{u}_1 = (d^2, -c^2)$,
        $\mathbf{u}_2 = (-b^2, a^2)$,
        $\mathbf{u}_3 = (bd, -ac)$.
        Applying \cref{prop:subspace-m-bg} on $\spn(\mathbf{u}_1, \sym(\mathbf{u}_1, \mathbf{u}_1, \mathbf{u}_3))$ gives $\mathbf{u}_1 \sim \sbto$ or $\mathbf{u}_1 \sim \mathbf{u}_3$.
        If $\mathbf{u}_1 \sim \sbto$, then $c = 0$, so $\iny{\mathbf{G}} \in \casecsupp$ by \cref{prop:casecsupp-fixed-under-rho-c-0}.
        If $\mathbf{u}_1 \sim \mathbf{u}_3$, then we see that $V\domres{\db\dg} = \spn(\mathbf{u}_1\teh, \sym(\mathbf{u}_1, \mathbf{u}_1, \mathbf{u}_2))$.
        Applying \cref{prop:subspace-m-bg} again gives 
        $\mathbf{u}_1 \sim \sbto$ or $\mathbf{u}_1 \sim \mathbf{u}_2$.
        But it cannot be the case that $\mathbf{u}_1 \sim \mathbf{u}_2$ since the span of $\{\rho(M) \mathbf{e}_1, \rho(M) \mathbf{e}_2, \rho(M) \mathbf{e}_3\}$ has dimension $3$.

        Suppose $\mathbf{G}$ is a \casecogen signature. 
        Then, there exists some $\iny{T} \in \so(3, \mathbb{C}, \X)$ such that 
        $\iny{T}^{-1} \iny{\mathbf{G}} \sim [x; 1, y; 0, 0, z; 0, 0, 0, w] \in \casecsupp$ for some $x, y, z, w$ such that $w, z^2 - z - y w \ne 0$. 
        By \cref{lem:gadget-set-general-form} and \cref{subsec:gc-case-c1gen} and the same reasoning, 
        $\holts(\mathbf{F}, \mathbf{G}) \ge_T \holbs([0, 1, 0, 0] \mid V\domres{\db\dg})$ for 
        $V = \rho(M) \agx(\casecogenmath)$ with
        \[
          V\domres{\db\dg} = \spn(\mathbf{u}_1\teh, \sym(\mathbf{u}_1, \mathbf{u}_1, \mathbf{u}_2)) \cup \bigcup_{p, q \in \mathbb{C}} \spn(\mathbf{u}_1\teh, (p \mathbf{u}_1 + q \mathbf{u}_3)\teh)
        \]
        for the same $\mathbf{u}_1, \mathbf{u}_2, \mathbf{u}_3$ as above.
        Applying \cref{prop:subspace-m-bg} on
        $\spn(\mathbf{u}_1\teh, \sym(\mathbf{u}_1, \mathbf{u}_1, \mathbf{u}_2)$ gives 
        $\mathbf{u}_1 \sim \sbto$ or $\mathbf{u}_1 \sim \mathbf{u}_2$.
        If $\mathbf{u}_1 \sim \sbto$, then $c = 0$ and $\iny{\mathbf{G}} \in \casecsupp$ by \cref{prop:casecsupp-fixed-under-rho-c-0}.
        Otherwise, if $\mathbf{u}_1 \sim \mathbf{u}_2$, then $\mathbf{u}_1 \not \sim \mathbf{u}_3$, and applying \cref{prop:subspace-m-bg} on $\spn(\mathbf{u}_1\teh, (p \mathbf{u}_1 + q \mathbf{u}_3)\teh)$ for any nonzero $p, q$ gives \sphness.
      \end{proof}

      \subsection{\texorpdfstring{\casecz}{C0}}
      \label{subsec:two-ternary-dichotomy-C0}
      Let $\mathbf{F}$ be a \casecz signature such that $\ann(\mathbf{F}) = \spn(\bzp)$.

      Similar to the last subsection, we prove two results. 
      We first prove the dichotomy of $\holts(\mathbf{F}, \mathbf{G})$ when $\ann(\mathbf{G}) \ne 0$.
      Then, we prove the dichotomy of $\holts(\mathbf{F}, \mathbf{G})$ when $\mathbf{G}$ is \casecodz or \casecogen. 
      Again, here we do not cover all possible cases for $\mathbf{G}$, as
      the types \caseathree, \casebthree and \casecosep are not considered.
        The reason is the same as in \cref{subsec:A2}.
      
      \begin{proposition}
        \label{prop:two-ternary-dichotomy-C-0}
        Let $\mathbf{F}$ and $\mathbf{G}$ be nondegenerate symmetric ternary signatures such that $\ann(\mathbf{F}) = \spn(\bzp)$
        and $\ann(\mathbf{G}) = \spn(\mathbf{u}) \ne 0$.
        Then, $\holts(\mathbf{F}, \mathbf{G})$ is \sph unless
        \begin{enumerate}
          \item $\langle \mathbf{u}, \bzp \rangle = 0$, or
          \item $\langle \mathbf{u}, \bzp \rangle \ne 0$ and $\mathbf{F} = \mathbf{u}_1\teh + \mathbf{v}_1\teh$ and $\mathbf{G} = \mathbf{u}_2\teh + \mathbf{v}_2\teh$ for some $\mathbf{u}_i, \mathbf{v}_i$ such that 
          $\langle \mathbf{u}_1, \mathbf{u}_2 \rangle = 0$ and $\langle \mathbf{v}_1, \mathbf{v}_2 \rangle = 0$
            , or
          \item $\langle \mathbf{u}, \bzp \rangle \ne 0$ and $\mathbf{F}= \lambda_1 \mathbf{u}_1\teh + \sym(\mathbf{u}_1, \mathbf{u}_1, \mathbf{v}_1)$ and $\mathbf{G} = \lambda_2 \mathbf{u}_2\teh + \sym(\mathbf{u}_2, \mathbf{u}_2, \mathbf{v}_2)$  for some $\lambda_i, \mathbf{u}_i, \mathbf{v}_i$, such that $\langle \mathbf{u}_1, \mathbf{u}_2 \rangle = 0$.
    \end{enumerate}
  \end{proposition}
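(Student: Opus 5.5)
Assume $\holts(\mathbf{F}, \mathbf{G})$ is not \sph and $\langle \mathbf{u}, \bzp \rangle \ne 0$; we must land in case 2 or 3. The plan is to pass to the $\holts(\X \mid \Yinv \mathbf{F}, \Yinv \mathbf{G})$ setting, as in \cref{subsec:gc-case-b-2,subsec:gc-case-c1-bg}. Since $\Y\transpose \bzp = \mathbf{e}_2$, \cref{prop:annihilating-vector-under-gl-transform} gives $\ann(\Yinv\mathbf{F}) = \spn(\mathbf{e}_2)$, so $\iny{\mathbf{F}} = \Yinv \mathbf{F}$ is supported on $\db\dr$. Moving it to the left side gives the chain in the \casecz shell discussion:
\[
\holts(\mathbf{F}, \mathbf{G}) \ge_T \holbs\bigl((\X\iny{\mathbf{F}})\domres{\dg\dr} \mid (\Yinv\mathbf{G})\domres{\dg\dr}\bigr).
\]
Here $(\X\iny{\mathbf{F}})\domres{\dg\dr}$ is nondegenerate, since $\iny{\mathbf{F}}\domres{\db\dr}$ is nondegenerate, as $\iny{\mathbf{F}}$ lives entirely on $\db\dr$.

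Next, control $\iny{\mathbf{G}} = \Yinv\mathbf{G}$. Its annihilator is $\mathbf{u}' = \Y\transpose\mathbf{u}$, and $\langle \mathbf{u}', \mathbf{e}_1\rangle = \langle \mathbf{u}, \Y\mathbf{e}_1\rangle = \langle \mathbf{u}, \bzp\rangle \ne 0$. After relabeling the domain (swapping the roles of $\db$ and $\dr$), \cref{prop:annihilator-uniquely-propagates-values} applies with the distinguished coordinate $\db$ in place of $\dr$. So $\iny{\mathbf{G}}$ is uniquely determined by $\iny{\mathbf{G}}\domres{\dg\dr}$, and this restriction is nondegenerate. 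Now apply \cref{cor:bipartite-boolean-dichotomy-explicit} to the pair $\bigl((\X\iny{\mathbf{F}})\domres{\dg\dr}, \iny{\mathbf{G}}\domres{\dg\dr}\bigr)$. Either (i) both have the form $\mathbf{u}_i\teh+\mathbf{v}_i\teh$ with $\langle\mathbf{u}_1,\mathbf{u}_2\rangle=\langle\mathbf{v}_1,\mathbf{v}_2\rangle=0$ in $\mathbb{C}^2$, or (ii) both have the form $\lambda_i\mathbf{u}_i\teh+\sym(\mathbf{u}_i,\mathbf{u}_i,\mathbf{v}_i)$ with $\langle\mathbf{u}_1,\mathbf{u}_2\rangle=0$.

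The final step lifts these Boolean forms back to domain $3$. For $\iny{\mathbf{G}}$, the two explicit forms in \cref{prop:annihilator-uniquely-propagates-values} (with coordinates permuted) show that a decomposition of $\iny{\mathbf{G}}\domres{\dg\dr}$ lifts uniquely. Each $\dg\dr$-vector $(p,q)$ extends to $\tilde{\mathbf{w}} = (\alpha p+\beta q, p, q)$, where $\mathbf{u}' \sim (-1,\alpha,\beta)$. For $\X\iny{\mathbf{F}}$, which is supported on $\dg\dr$, the lift just pads a zero in the $\db$ slot. The pairing between left and right in the bipartite grid is the plain dot product. The padded left vectors have $\db$-coordinate $0$, so the dot product in $\mathbb{C}^3$ of a lifted left vector with a lifted right vector equals the $\mathbb{C}^2$ dot product, and the orthogonality relations survive.

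The main obstacle is translating back through $\X$ and $\Y$ to the non-bipartite statement. The left vectors are $\X\Yinv\mathbf{w}_1$ and the right vectors are $\Yinv\mathbf{w}_2$. Since $(\X\Yinv\mathbf{w}_1)\transpose \Yinv \mathbf{w}_2 = \mathbf{w}_1\transpose (\Yinv)\transpose \Y\transpose\Y\Yinv\mathbf{w}_2 = \langle \mathbf{w}_1,\mathbf{w}_2\rangle$, the orthogonality relations for the original $\mathbf{F},\mathbf{G}$ follow. Applying $\Y$, $\mathbf{F}$ and $\mathbf{G}$ then take the forms of case 2 in situation (i) and case 3 in situation (ii).
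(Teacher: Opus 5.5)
Your proof is correct, but it is organized differently from the paper's. The paper first uses \cref{prop:nonzero-annihilator-types} to note that $\mathbf{G}$ is \caseatwo, \casebtwo, \casecobg, or \casecz. It disposes of the first three types by citing \cref{prop:two-ternary-dichotomy-a-2,prop:two-ternary-dichotomy-b-2,prop:two-ternary-dichotomy-C-1-bg} with the two signatures' roles exchanged; translating their conclusions into the forms of cases 2 and 3 is left implicit. In the remaining \casecz--\casecz case, the paper takes an orthogonal $T$ (a normalization lemma from the earlier single-signature paper) with $T\bzp\sim\bzp$ and $T\mathbf{u}\sim\bzm$. Then $\Yinv T\mathbf{F}$ and $\Yinv T\mathbf{G}$ are genuinely supported on $\db\dr$ and $\dg\dr$. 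Moving $\mathbf{G}$ to the left gives a bipartite Boolean problem between the two whole signatures, so no lifting is needed.

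You keep $\mathbf{G}$ arbitrary and move $\mathbf{F}$ to the left instead. You use $\langle \Y\transpose\mathbf{u},\mathbf{e}_1\rangle=\langle\mathbf{u},\bzp\rangle\neq 0$ to determine $\iny{\mathbf{G}}$ from its $\dg\dr$ part via \cref{prop:annihilator-uniquely-propagates-values}, with $\db$ as the distinguished coordinate. You then lift the Boolean decomposition by hand, and each step holds:
\begin{itemize}
\item every extended vector $(\alpha p+\beta q,p,q)$ is orthogonal to $\mathbf{u}'$, so uniqueness identifies the lift with $\iny{\mathbf{G}}$;
\item the zero-padded left vectors make the $\mathbb{C}^3$ and $\mathbb{C}^2$ pairings agree;
\item $\X=\Y\transpose\Y$ transports orthogonality back to $\mathbf{F},\mathbf{G}$.
\end{itemize}
A small wording point: what does the work in the lifting is this uniqueness argument, not the two explicit forms listed in that proposition.

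The trade-off is that your route pays for an explicit lifting step. In return it gives one uniform argument that never uses the type classification of $\mathbf{G}$, the external normalization lemma, or the three earlier propositions.

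One more remark: \cref{cor:bipartite-boolean-dichotomy-explicit} is printed with the condition $\langle\mathbf{u}_i,\mathbf{v}_i\rangle=0$. It should read $\langle\mathbf{u}_1,\mathbf{u}_2\rangle=\langle\mathbf{v}_1,\mathbf{v}_2\rangle=0$, as the computation preceding it shows. You used the correct version.
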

  \begin{proof}
  Having a nonzero annihilator  $\mathbf{u}$ implies that
  $\mathbf{G}$ has type
  \caseatwo, \casebtwo, \casecz or \casecobg.
  We only need to consider the case when $\mathbf{G}$ is also \casecz 
    because the cases when one of the signatures is \caseatwo, \casebtwo, and \casecobg were already shown (\cref{prop:two-ternary-dichotomy-a-2,prop:two-ternary-dichotomy-b-2,prop:two-ternary-dichotomy-C-1-bg}). 
    
    Suppose $\langle \mathbf{u}, \bzp \rangle \ne 0$.
    If $\mathbf{u}$  is isotropic and $\langle \mathbf{u}, \bzp \rangle \ne 0$, then $\mathbf{u} \not \sim \bzp$, so by Lemma 3.5 (arXiv version) in~\cite{domain3-single-signature}, there exists some orthogonal $T$ such that $T \bzp \sim \bzp$ and $T \mathbf{u} \sim \bzm$.
    Then, note that $\Yinv T \mathbf{G}$ is supported on $\dg\dr$ because $\ann(\Yinv T \mathbf{G}) = ((\Yinv T)^{-1})\transpose \spn(\mathbf{u}) = \Y\transpose T \spn(\mathbf{u}) = \Y\transpose \spn(\bzm) = \spn(\mathbf{e}_1)
    $ 
    by \cref{prop:annihilating-vector-under-gl-transform}, and thus 
    $\X \Yinv T \mathbf{G}$ is supported on $\db\dr$.
    Since $\Yinv T \mathbf{F}$ is supported on $\db\dr$ as well, we have
    \begin{align*}
        \holts(\mathbf{F}, \mathbf{G}) &=_T \holts(T \mathbf{F}, T \mathbf{G}) \\
        &=_T \holts(\X \mid \Yinv T \mathbf{F}, \Yinv T \mathbf{G}) \\
        &\ge_T \holts( \X \Yinv T \mathbf{G} \mid \Yinv T \mathbf{F}) \\
        &=_T \holbs( (\X \Yinv T \mathbf{G})\domres{\db\dr} \mid (\Yinv T \mathbf{F})\domres{\db\dr}) \\
        &=_T \holbs((\Yinv T \mathbf{G})\domres{\dg\dr} \mid (\Yinv T \mathbf{F})\domres{\db\dr}) \,.
    \end{align*}
    We obtain the conclusions of the Proposition
    by applying \cref{cor:bipartite-boolean-dichotomy-explicit}.
    For example, one possibility is $\mathbf{F}' = \Yinv T \mathbf{F}$ 
    has the form $\begin{bsmallmatrix}
              a_1 \\ 0 \\ b_1 
              \end{bsmallmatrix}\teh
              +
              \begin{bsmallmatrix}
              a_2 \\ 0 \\ b_2
          \end{bsmallmatrix}\teh$,
          and $\mathbf{G}' = \Yinv T \mathbf{G} =
          \begin{bsmallmatrix}
              0 \\c_1  \\ d_1 
              \end{bsmallmatrix}\teh
              +
              \begin{bsmallmatrix}
              0 \\ c_2 \\ d_2
          \end{bsmallmatrix}\teh$,
          satisfying
          $a_1c_1 + b_1 d_1 =0$ and $a_2c_2 + b_2 d_2 =0$.
          Then,
           $\mathbf{F} = T^{-1} \Y \mathbf{F}'$ and $\mathbf{G} = T^{-1} \Y  \mathbf{G}'$
           have the corresponding forms.
           The respective dot product of $ T^{-1} \Y \begin{bsmallmatrix}
              a_1 \\ 0 \\ b_1 
              \end{bsmallmatrix}$ with
             $T^{-1} \Y \begin{bsmallmatrix}
              0 \\ c_1 \\ d_1 
              \end{bsmallmatrix}$ 
              is 
              $\langle \begin{bsmallmatrix}
              a_1 \\ 0 \\ b_1 
              \end{bsmallmatrix}, \X 
              \begin{bsmallmatrix}
              0 \\ c_1 \\ d_1 
              \end{bsmallmatrix} \rangle
               = a_1c_1 + b_1 d_1 =0$, and similarly
               for the other dot product
               $\langle \begin{bsmallmatrix}
              a_2 \\ 0 \\ b_2 
              \end{bsmallmatrix}, \X 
              \begin{bsmallmatrix}
              0 \\ c_2 \\ d_2 
              \end{bsmallmatrix} \rangle
              =0$.
  \end{proof}
   Note that if $\mathbf{F}$ 
   cannot be written in the form of $\mathbf{v}\teh + \mathbf{w}\teh$ or $\lambda \mathbf{v}\teh + \sym(\mathbf{v}, \mathbf{v}, \mathbf{w})$ for any $\mathbf{v}, \mathbf{w} \in \mathbb{C}^3$,
   then for $\holts(\mathbf{F}, \mathbf{G})$ to be not \sph, it must be the case that $\langle  \mathbf{u}, \bzp \rangle = 0$.

  \begin{proposition}\label{prop:c0-vs-c1dz}
   Let $\mathbf{F}$ and $\mathbf{G}$ be nondegenerate symmetric ternary signatures such that $\ann(\mathbf{F}) = \spn(\bzp)$
        and $\mathbf{G}$ of type \casecodz or \casecogen.
      Then $\holts(\mathbf{F}, \mathbf{G})$ is \sph unless $\Yinv \mathbf{G} \in \casecsupp$.
  \end{proposition}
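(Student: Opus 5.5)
We follow the template of \cref{prop:c1bg-vs-c1codz-c1gen}, working in the $\holts(\X \mid \iny{\mathbf{F}}, \iny{\mathbf{G}})$ setting with $\iny{\mathbf{F}} = \Yinv\mathbf{F}$ and $\iny{\mathbf{G}} = \Yinv\mathbf{G}$. Since $\ann(\mathbf{F}) = \spn(\bzp)$ and $\Y\transpose\bzp = \mathbf{e}_2$, \cref{prop:annihilating-vector-under-gl-transform} gives $\ann(\iny{\mathbf{F}}) = \spn(\mathbf{e}_2)$. Hence $\iny{\mathbf{F}}$ is supported on $\db\dr$, and after moving it to the left, $\X\iny{\mathbf{F}}$ is supported on $\dg\dr$. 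Assume $\holts(\mathbf{F},\mathbf{G})$ is not \sph.

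By \cref{lem:gadget-set-general-form}, with $\iny{T} = \rho(M)$ and $M = \begin{bsmallmatrix} a & b\\ c& d\end{bsmallmatrix}\in\speciallinear(2,\mathbb{C})$, the right side realizes $V = \rho(M)\agx(\mathfrak{G})$, where $\mathfrak{G}$ is \casecodz (with some $\lambda\neq0$) or \casecogen. This gives
\[\holts(\mathbf{F},\mathbf{G}) \ge_T \holbs\bigl((\X\iny{\mathbf{F}})\domres{\dg\dr} \mid V\domres{\dg\dr}\bigr).\]
Write $\mathbf{w}_1 = \proj_{\dg\dr}\rho(M)\mathbf{e}_1 = (-c^2, \sqrt2 cd)$, $\mathbf{w}_2 = (a^2, -\sqrt2 ab)$, and $\mathbf{w}_3 = (-\sqrt2 ac, ad+bc)$. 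Both frames contain $\spn(\mathbf{e}_1\teh, \sym(\mathbf{e}_1,\mathbf{e}_1,\mathbf{e}_2))$: for \casecogen this is explicit, and for \casecodz it follows by combining the basis elements. Restricting, $V\domres{\dg\dr}$ contains $\spn(\mathbf{w}_1\teh, \sym(\mathbf{w}_1,\mathbf{w}_1,\mathbf{w}_2))$.

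Since $\iny{\mathbf{F}}$ is nondegenerate with annihilator $\mathbf{e}_2$, \cref{prop:annihilator-uniquely-propagates-values} (applied with the roles of $\dg$ and $\dr$ swapped) shows that $(\X\iny{\mathbf{F}})\domres{\dg\dr}$ is a nondegenerate Boolean ternary signature. \cref{cor:bipartite-boolean-dichotomy-explicit} then forces every nondegenerate right signature into one fixed tractable form determined by the left signature: either the form $\mathbf{u}\teh+\mathbf{v}\teh$ with prescribed orthogonality, or the form $\lambda\mathbf{u}\teh + \sym(\mathbf{u},\mathbf{u},\mathbf{v})$ with $\mathbf{u}$ fixed up to scalar. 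In either case the allowed right signatures lie in a set of the shape $E$ or $M$ after a change of basis, which is a line together with a twisted cubic. By the argument behind \cref{prop:subspace-e-bg,prop:subspace-m-bg}, a 2-dimensional subspace lying inside such a set must equal that line.

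Now take $c\neq0$, which is exactly the case $\Yinv\mathbf{G}\notin\casecsupp$ by \cref{prop:casecsupp-fixed-under-rho-c-0}. Then $\mathbf{w}_1\neq0$, and $\mathbf{w}_1\not\sim\mathbf{w}_2$ because $\det(\mathbf{w}_1,\mathbf{w}_2) = \sqrt2\, ac(bc-ad) = -\sqrt2\,ac$, which is nonzero when $a\neq0$; the case $a=0$ must be handled separately. The subspace $\spn(\mathbf{w}_1\teh, \sym(\mathbf{w}_1,\mathbf{w}_1,\mathbf{w}_2))$ then has dimension $2$. It contains $\sym(\mathbf{w}_1,\mathbf{w}_1,\mathbf{w}_2)$, which has symmetric rank $3$ by \cref{prop:symmetric-rank-sym-sum}, so it cannot be of the type-$E$ line, whose elements have rank $\le2$. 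It can only coincide with an $M$-type line, which requires $\mathbf{w}_1$ to be the specific vector $\mathbf{u}$ dictated by $(\X\iny{\mathbf{F}})\domres{\dg\dr}$.

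To exclude that last possibility, I would use the remaining frame elements. For \casecogen, the family $\spn(\mathbf{w}_1\teh, (p\mathbf{w}_1+q\mathbf{w}_3)\teh)$ with $\mathbf{w}_3\not\sim\mathbf{w}_1$ produces rank-2 elements that do not lie on the $M$-line. For \casecodz, the term $\sym(\mathbf{w}_1,\mathbf{w}_1,\mathbf{w}_3)$ pushes the dimension of the restricted subspace to $3$. The case $a=0$ goes through similarly using $\mathbf{w}_3$. If some of these tractability patterns survive, a fallback is to move $\iny{\mathbf{F}}$ to the right instead and use $\X V$ on the left against $\iny{\mathbf{F}}\domres{\db\dr}$.

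The main obstacle is this final case analysis: it requires an explicit description of the tractable partner forced by an arbitrary nondegenerate left signature, and the degenerate parameter values ($a=0$, or $\mathbf{w}_3\sim\mathbf{w}_1$) must each be ruled out by hand.
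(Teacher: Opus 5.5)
Your setup matches the paper's: restrict to $\dg\dr$ against the nondegenerate left signature $(\X\iny{\mathbf{F}})\domres{\dg\dr}$ and parametrize $\iny{T}=\rho(M)$. The closing argument, however, contains two false steps and is left unfinished, and what is missing is a single determinant.

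\textbf{The two false steps.}
First, $\agx(\casecodzmath,\lambda)$ does not contain $\sym(\mathbf{e}_1,\mathbf{e}_1,\mathbf{e}_2)$. The only basis element involving $\mathbf{e}_2$ is $\sym(\mathbf{e}_1,\mathbf{e}_1,\mathbf{e}_2)+\lambda\sym(\mathbf{e}_1,\mathbf{e}_3,\mathbf{e}_3)$ with $\lambda\ne0$, so no combination of basis elements isolates it. After restriction, $\sym(\mathbf{w}_1,\mathbf{w}_1,\mathbf{w}_2)$ does lie in $V\domres{\dg\dr}$, but only because $\mathbf{w}_2\in\spn(\mathbf{w}_1,\mathbf{w}_3)$, which requires $\mathbf{w}_1\not\sim\mathbf{w}_3$.

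Second, in the \casecodz case $\sym(\mathbf{w}_1,\mathbf{w}_1,\mathbf{w}_3)$ cannot raise the dimension to $3$. Every tensor $\alpha\mathbf{w}_1\teh+\sym(\mathbf{w}_1,\mathbf{w}_1,\mathbf{z})$ already lies in your two-dimensional space $\spn(\mathbf{w}_1\teh,\sym(\mathbf{w}_1,\mathbf{w}_1,\mathbf{w}_2))$, which is exactly the $M$-type line. Only the $\lambda\sym(\mathbf{w}_1,\mathbf{w}_3,\mathbf{w}_3)$ part of the third frame element can raise the dimension, and again only if $\mathbf{w}_1\not\sim\mathbf{w}_3$. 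You never establish this; instead you list $\mathbf{w}_3\sim\mathbf{w}_1$ as an open case and propose a fallback.

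\textbf{The missing ingredient.}
The key computation, which is the heart of the paper's proof, is
\[
\det[\mathbf{w}_1,\mathbf{w}_3] = -c^2(ad+bc)+2ac^2d = c^2(ad-bc) = c^2 \ne 0 .
\]
With it, both $\mathbf{w}_2$ and your case split on $a$ become irrelevant, and no fallback is needed.

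For \casecodz, $\mathbf{w}_2\in\spn(\mathbf{w}_1,\mathbf{w}_3)$ gives
\[
V\domres{\dg\dr}=\spn\bigl(\mathbf{w}_1\teh,\sym(\mathbf{w}_1,\mathbf{w}_1,\mathbf{w}_3),\sym(\mathbf{w}_1,\mathbf{w}_3,\mathbf{w}_3)\bigr),
\]
which is $3$-dimensional and so cannot lie in a line union a twisted cubic. This dimension count is a valid and slightly shorter finish than the paper's. The paper instead observes that $\sym(\mathbf{w}_1,\mathbf{w}_1,\mathbf{w}_3)$ and $\sym(\mathbf{w}_1,\mathbf{w}_3,\mathbf{w}_3)$ force two incompatible $M$-type forms on the left, via \cref{prop:uniqueness-sym-form}.

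For \casecogen, the paper uses the two signatures $\mathbf{w}_1\teh+\mathbf{w}_3\teh$ and $\mathbf{w}_1\teh+(\mathbf{w}_1+\mathbf{w}_3)\teh$. These already force two incompatible $E$-type pairings on the left, for every value of $a$.
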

  \begin{proof}
    Assume $\holts(\mathbf{F}, \mathbf{G})$ is not \sph.
      Let $\iny{\mathbf{F}} = \Yinv \mathbf{F}$ and $\iny{\mathbf{G}} = \Yinv \mathbf{G}$. 
      
      Suppose $\mathbf{G}$ is \casecodz.
      Then there exists some $\iny{T} \in \so(3, \mathbb{C}, \X)$ such that the signature $\iny{T}^{-1} \iny{\mathbf{G}} \sim [x; 1, y; 0, 0, z; 0, 0, 0, 0] \in \casecsupp$ for some $x, y, z$ such that $z \ne 0$.
      Note that $\iny{\mathbf{F}}$ is supported on $\db\dr$ since  $\ann(\iny{\mathbf{F}}) = \Y\transpose \spn(\bzp)
      =\spn(\mathbf{e}_2)$, 
      and thus $\X \iny{\mathbf{F}}$ is supported on $\dg\dr$.
      We note that $(\X \iny{\mathbf{F}})\domres{\dg\dr} = (\iny{\mathbf{F}})\domres{\db\dr}$ is nondegenerate,
      for otherwise $\iny{\mathbf{F}}$ has the form $(*, 0, *)\teh$ and would be degenerate, and so would be ${\mathbf{F}}$.
      By \cref{lem:gadget-set-general-form} and \cref{subsec:gc-case-c1dz}, we have
      \begin{align*}
          \holts(\mathbf{F}, \mathbf{G}) &=_T \holts(\X \mid \iny{\mathbf{F}}, \iny{\mathbf{G}}) \\
          &\ge_T \holts(\X \mid \iny{\mathbf{F}}, V) \\
          &\ge_T \holts(\X\iny{\mathbf{F}} \mid V) \\
          &\ge_T \holbs((\X \iny{\mathbf{F}})\domres{\dg\dr} \mid V\domres{\dg\dr})
      \end{align*}
              where $V = \iny{T} \agx(\casecodzmath, \lambda)$ for some $\lambda \ne 0$.
        By the (complex) spin representation of $\speciallinear(2, \mathbb{C})$ parameterizing $\so(3, \mathbb{C}, X)$ (\cref{eq:paramaterize-so3cx}), 
        we may write $\iny{T} = \rho(M)$ for 
        $M = \begin{bsmallmatrix}
            a & b \\ c & d
        \end{bsmallmatrix} \in \speciallinear(2, \mathbb{C})$.
        Then, we have 
        \[
            V\domres{\dg\dr} = \spn(\mathbf{u}_1\teh, \sym(\mathbf{u}_1, \mathbf{u}_1, \mathbf{u}_3), \sym(\mathbf{u}_1, \mathbf{u}_1, \mathbf{u}_2) + \lambda \sym(\mathbf{u}_1, \mathbf{u}_3, \mathbf{u}_3))
        \]
        for
        $\mathbf{u}_1 = (-c^2, \sqrt{2}cd)$,
        $\mathbf{u}_2 = (a^2, -\sqrt{2} a b)$
        and 
        $\mathbf{u}_3 = (-\sqrt{2} ac, ad + bc)$.
        If $c = 0$, then by \cref{prop:casecsupp-fixed-under-rho-c-0}, $\iny{\mathbf{G}} \in \casecsupp$, and we are done.
        Now assume $c \ne 0$. 
        Then, note that $\mathbf{u}_1 \not \sim \mathbf{u}_3$ since the determinant of the matrix $[\mathbf{u}_1, \mathbf{u}_3]$ is $-c^2(a d + b c) +2 a c^2 d = a c^2 d - b c^3 = c^2 (ad - bc) = c^2 \ne 0$.
        Then, $\{\mathbf{u}_1, \mathbf{u}_3\}$ forms a basis of $\mathbb{C}^2$, so
        $\mathbf{u}_2 = p \mathbf{u}_1 + q \mathbf{u}_3$ for some $p, q \in \mathbb{C}$.
        Then $\sym(\mathbf{u}_1, \mathbf{u}_1, \mathbf{u}_2) = 3 p \mathbf{u}_1\teh + q \sym(\mathbf{u}_1, \mathbf{u}_1, \mathbf{u}_3)$.
        Therefore, $V\domres{\dg\dr} = \spn(\mathbf{u}_1\teh, \sym(\mathbf{u}_1, \mathbf{u}_1, \mathbf{u}_3), \sym(\mathbf{u}_1, \mathbf{u}_3, \mathbf{u}_3))$. 
        Then, in $V\domres{\dg\dr}$, there are Boolean domain signatures $\iny{\mathbf{H}}_1 = \sym(\mathbf{u}_1, \mathbf{u}_1, \mathbf{u}_3)$ and $\iny{\mathbf{H}}_2 = \sym(\mathbf{u}_1, \mathbf{u}_3, \mathbf{u}_3)$.
        For $\holbs((\X \iny{\mathbf{F}})\domres{\dg\dr} \mid \iny{\mathbf{H}}_1)$ to be not \sph, \cref{cor:bipartite-boolean-dichotomy-explicit} implies that
        $(\X \iny{\mathbf{F}})\domres{\dg\dr} = \lambda_1 \mathbf{u}_1\teh + \sym(\mathbf{u}_1, \mathbf{u}_1, \mathbf{w}_1)$ for some $\lambda_1 \in \mathbb{C}$ and $\mathbf{w}_1 \in \mathbb{C}^3$ such that $\mathbf{w}_1 \not \sim \mathbf{u}_1$ because $(\X \iny{\mathbf{F}})\domres{\dg\dr} $ is nondegenerate.
        Similarly, 
        for $\holbs((\X \iny{\mathbf{F}})\domres{\dg\dr} \mid \iny{\mathbf{H}}_2)$ to be not \sph, \cref{cor:bipartite-boolean-dichotomy-explicit} implies that
        $(\X \iny{\mathbf{F}})\domres{\dg\dr} = \lambda_2 \mathbf{u}_3\teh + \sym(\mathbf{u}_3, \mathbf{u}_3, \mathbf{w}_2)$ for some $\lambda_2 \in \mathbb{C}$ and $\mathbf{w}_2 \in \mathbb{C}^3$ such that $\mathbf{w}_2 \not \sim \mathbf{u}_3$. 
        However, $\lambda_1 \mathbf{u}_1\teh + \sym(\mathbf{u}_1, \mathbf{u}_1, \mathbf{w}_1) \ne  \lambda_2 \mathbf{u}_3\teh + \sym(\mathbf{u}_3, \mathbf{u}_3, \mathbf{w}_2)$ by \cref{prop:uniqueness-sym-form} because $\mathbf{u}_1 \not \sim \mathbf{u}_3$.
        Therefore, $\holts(\mathbf{F}, \mathbf{G})$ is \sph in this case.

        Suppose $\mathbf{G}$ is a \casecogen signature. 
        Then, there exists some $\iny{T} \in \so(3, \mathbb{C}, \X)$ such that 
        $\iny{T}^{-1} \iny{\mathbf{G}} \sim [x; 1, y; 0, 0, z; 0, 0, 0, w] \in \casecsupp$ for some $x, y, z, w$ such that $w, z^2 - z - y w \ne 0$. 
        By \cref{lem:gadget-set-general-form} and \cref{subsec:gc-case-c1gen} and the same reasoning, 
        $\holts(\mathbf{F}, \mathbf{G}) \ge_T \holbs( (\X \iny{\mathbf{F}})\domres{\dg\dr} \mid V\domres{\dg\dr})$ for 
        $V = \rho(M) \agx(\casecogenmath)$ where
        \[
          V\domres{\dg\dr} = \spn(\mathbf{u}_1\teh, \sym(\mathbf{u}_1, \mathbf{u}_1, \mathbf{u}_2)) \cup \bigcup_{x, y \in \mathbb{C}} \spn(\mathbf{u}_1\teh, (x \mathbf{u}_1 + y \mathbf{u}_3)\teh)
        \]
        for the same $\mathbf{u}_1, \mathbf{u}_2, \mathbf{u}_3$ as above.
        Again, if $c = 0$, then $\iny{\mathbf{G}} \in \casecsupp$ by \cref{prop:casecsupp-fixed-under-rho-c-0}, so we may assume $c \ne 0$. 
        By the same argument as  above, $\mathbf{u}_1 \not \sim \mathbf{u}_3$. 
        Let $\iny{\mathbf{H}}_{p, q} = \mathbf{u}_1\teh + (p \mathbf{u}_1 + q \mathbf{u}_3)\teh \in V\domres{\dg\dr}$.
        For $\holbs( (\X \iny{\mathbf{F}})\domres{\dg\dr} \mid \iny{\mathbf{H}}_{0, 1})$ to be not \sph, we need $(\X \iny{\mathbf{F}})\domres{\dg\dr} = \mathbf{v}_1\teh + \mathbf{v}_2\teh$ such that $\langle \mathbf{v}_1, \mathbf{u}_1 \rangle = 0$ and $\langle \mathbf{v}_2, \mathbf{u}_3 \rangle = 0$ by \cref{cor:bipartite-boolean-dichotomy-explicit}.
        Since $(\X \iny{\mathbf{F}})\domres{\dg\dr}$ is nondegenerate, $\mathbf{v}_1 \not \sim \mathbf{v}_2$. 
        For $\holbs( (\X \iny{\mathbf{F}})\domres{\dg\dr} \mid \iny{\mathbf{H}}_{1, 1})$ to be not \sph, we need $\langle \mathbf{v}_2, \mathbf{u}_1 \rangle = 0$ or $\langle \mathbf{v}_2, \mathbf{u}_1 + \mathbf{u}_3 \rangle = 0$ by \cref{cor:bipartite-boolean-dichotomy-explicit}.
        Both imply that $\langle \mathbf{v}_2, \mathbf{u}_1 \rangle = 0$ because $\langle \mathbf{v}_2, \mathbf{u}_3 \rangle = 0$, but this implies $\mathbf{v}_2 \sim \mathbf{v}_1$ because the dimension of the kernel of map $\mathbb{C}^2 \to \mathbb{C}^2$ by $\mathbf{v} \mapsto \langle \mathbf{v}, \mathbf{u}_1\rangle$ is $1$. 
  \end{proof}

      \subsection{\texorpdfstring{\caseathree}{A3}}
      Assume we have an \caseathree signature $\mathbf{F} = a \mathbf{e}_1\teh + b \mathbf{e}_2\teh + c \mathbf{e}_3\teh$ for nonzero $a, b, c \in \mathbb{C}$.
      From \cref{subsec:gc-case-a-3}, we know that $a \mathbf{e}_1\teh + b \mathbf{e}_2\teh + c \mathbf{e}_3\teh$ and $\mathbf{e}_1\teh + \mathbf{e}_2\teh + \mathbf{e}_3\teh$ are inter-reducible to each other. 
      So, we may assume without loss of generality, $\mathbf{F} = \mathbf{e}_1\teh + \mathbf{e}_2\teh + \mathbf{e}_3\teh$.
      
      \label{subsec:two-ternary-dichotomy-a-3}
      \begin{proposition}
        \label{prop:two-ternary-dichotomy-a-3}
        Let $\mathbf{F} = \mathbf{e}_1\teh + \mathbf{e}_2\teh + \mathbf{e}_3\teh$.
        Let $\mathbf{G}$ be a nondegenerate symmetric ternary signature.
        Then, $\holts(\mathbf{F}, \mathbf{G})$ is \sph unless
        \begin{itemize}
          \item $\mathbf{G} \in \mathcal{E}$, or
          \item $\mathbf{G}$ is $ij|k$ and degenerate on $ij$ for some $i, j, k \in \{\db,\dg,\dr\}$. 

          In particular, $\mathbf{G}$ is $(a, b, 0)\teh + c \mathbf{e}_3\teh$, $(a, 0, b)\teh + c \mathbf{e}_2\teh$ or $(0, a, b)\teh + c \mathbf{e}_1\teh$ for some $a, b, c \in \mathbb{C}$ such that $c \ne 0$.
          If further $\mathbf{G} \notin \mathcal{E}$, then $a, b \ne 0$.
        \end{itemize}
      \end{proposition}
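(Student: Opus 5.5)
The plan is to exploit the three domain restrictions that $\mathbf{F}$ realizes. By \cref{subsec:gc-case-a-3}, $\mathbf{F}$ gives $(=_{\db\dg})$, $(=_{\db\dr})$ and $(=_{\dg\dr})$ via $\langle \mathbf{F}, (1,1,0)\rangle$ and its analogues. It also gives $[1,0,0,1]$ on each pair. So if $\holts(\mathbf{F},\mathbf{G})$ is not \sph, then each of $\holbs([1,0,0,1],\mathbf{G}\domres{ij})$ is not \sph. Hence $\mathbf{G}$ lies in the shell $\bigcap_{i\ne j}E_{ij}$, meaning each restriction $\mathbf{G}\domres{ij}$ is either of the form $[*,0,0,*]$ or degenerate. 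The rest of the argument is a case analysis on these three restrictions, together with one further gadget to control the single entry $x_{\db\dg\dr}$, which no restriction sees.

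\textbf{Case 1: all three restrictions are of the form $[*,0,0,*]$.} Then every entry with exactly two equal indices vanishes, so $\mathbf{G}=a\mathbf{e}_1\teh+b\mathbf{e}_2\teh+c\mathbf{e}_3\teh+e\,\sym$-type term supported on $\db\dg\dr$, where $e=x_{\db\dg\dr}$. If $e=0$, then $\mathbf{G}$ is \geneq, hence in $\mathcal{E}$. If $e\ne0$, I will show hardness. Attach $(1,1,0)$ to $\mathbf{G}$. The resulting binary signature is $\begin{bsmallmatrix} a&0&e\\0&b&e\\e&e&0\end{bsmallmatrix}$ plus a diagonal correction at $\dr\dr$ (here $x_{\db\dr\dr}=x_{\dg\dr\dr}=0$). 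Connect this binary signature to $\mathbf{F}$, or compose it with $(=_{ij})$, to produce a ternary or binary gadget whose $ij$-restriction violates $E_{ij}$. Alternatively, attach a suitable unary $\mathbf{w}$ to $\mathbf{G}$ and then $\mathbf{F}$-gadgets so that an off-pattern entry becomes nonzero on a Boolean pair. The simplest candidate is $\mathbf{F}$ with one edge replaced by $\langle\mathbf{G},\mathbf{e}_3\rangle$, which equals $c\,\mathbf{e}_3\tew+e(\mathbf{e}_1\otimes\mathbf{e}_2+\mathbf{e}_2\otimes\mathbf{e}_1)$. Applying this binary $N$ to each edge of $\mathbf{F}$ gives $N\mathbf{F}=\sum_k (N\mathbf{e}_k)\teh = e^3\mathbf{e}_2\teh+e^3\mathbf{e}_1\teh+c^3\mathbf{e}_3\teh$, which is harmless. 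So instead I will use $\holbs([1,0,0,1],\{N^k\}\domres{\db\dg},\mathbf{G}\domres{\db\dg})$-type constraints, or the triangle gadget formed by three copies of $\mathbf{G}$ joined with $\mathbf{F}$-equalities. I expect some small gadget of $\mathbf{G}$ and $\mathbf{F}$ to have a restriction of the form $[*,\ne0,*,*]$ that is nondegenerate, which contradicts the shell. Finding that gadget is the main obstacle.

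\textbf{Case 2: some restriction, say $\mathbf{G}\domres{\db\dg}$, is degenerate but not of the form $[*,0,0,*]$.} Then $\mathbf{G}\domres{\db\dg}=(p,q)\teh$ with $pq\ne0$. Restriction to $\db\dr$ gives $[p^3,x_{\db\db\dr},x_{\db\dr\dr},w]$. It must be either of the form $[*,0,0,*]$ or degenerate, and similarly for $\dg\dr$ with first entry $q^3$. I will show that degeneracy on both forces $\mathbf{G}$ itself to be degenerate, and that mixed choices contradict each other or $x_{\db\dg\dr}$. The intended outcome is that both other restrictions are of the form $[*,0,0,*]$, so $x_{\db\db\dr}=x_{\db\dr\dr}=x_{\dg\dg\dr}=x_{\dg\dr\dr}=0$. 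The same extra gadget as in Case 1 (e.g.\ attaching $\mathbf{e}_3$ and restricting) then forces $x_{\db\dg\dr}=0$, giving $\mathbf{G}=(p,q,0)\teh+c\mathbf{e}_3\teh$, which is \sepbg and degenerate on $\db\dg$. Nondegeneracy of $\mathbf{G}$ gives $c\ne0$, and $\mathbf{G}\notin\mathcal{E}$ exactly when $a,b\ne0$. The other pairs follow by symmetry. This recovers the stated forms. For tractability of the listed forms I will cite \cref{thm:tractable-cases}: $\mathcal{E}$ falls under item~\ref{itm:tractable-e}, and the separated case falls under item~\ref{itm:tractable-bg-r}, since $\holbs([1,0,0,1],(a,b)\teh)$ is tractable.
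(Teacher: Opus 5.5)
There is a genuine gap, and it is not only the missing gadget you flag in Case~1. The only constraint you actually extract is that $\mathbf{G}$ passes the three restriction tests, i.e.\ $\mathbf{G}\in\bigcap_{i\ne j}E_{ij}$. This set is much larger than the list in the statement, and the claims on which Case~2 rests are false.

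Let $\mathbf{S}$ be the symmetric ternary signature equal to $1$ on the six permutations of $(\db,\dg,\dr)$ and $0$ elsewhere. For $pqr\ne0\ne t$, the signature $\mathbf{G}=(p,q,r)\teh+t\,\mathbf{S}$ has all three restrictions degenerate. Yet it is nondegenerate and of none of the listed forms, so ``degeneracy on both forces $\mathbf{G}$ itself to be degenerate'' fails.

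For $pqs\ne0$, take $\mathbf{G}=(p,q,0)\teh+(0,q,s)\teh-q^3\mathbf{e}_2\teh$. It has $\mathbf{G}\domres{\db\dg}=(p,q)\teh$, $\mathbf{G}\domres{\db\dr}=[p^3,0,0,s^3]$, $\mathbf{G}\domres{\dg\dr}=(q,s)\teh$, and $x_{\db\dg\dr}=0$. It is nondegenerate, being a sum of cubes of three independent vectors. It lies in none of $\db\dg|\dr$, $\db\dr|\dg$, $\dg\dr|\db$, $\mathcal{E}$. So it is exactly a ``mixed choice'' that produces no contradiction.

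Your extra gadget does not rescue this. Here each $\langle\mathbf{G},\mathbf{e}_k\rangle$ restricts on every pair to something degenerate or of the form $[*,0,*]$. For $\mathbf{G}=(p,q,0)\teh+c\,\mathbf{e}_3\teh+t\,\mathbf{S}$ one gets $\langle\mathbf{G},\mathbf{e}_3\rangle\domres{\db\dg}=[0,t,0]$. This is compatible with $[1,0,0,1]$ by \cref{thm:dich-sym-Boolean}, being $[2a\lambda,b\lambda,-2a\lambda]$ with $a=0$, so $t$ is not forced to vanish. In short, nothing in the hardness direction beyond the shell test is established, and the shell test at the single point $\mathbf{G}$ cannot establish it.

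The paper never tries to read $\mathbf{G}$ off from its own restrictions. It first invokes \cref{thm:single-ternary-dichotomy}, so $\mathbf{G}$ has one of the nine types.

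\textbf{Case $\ann(\mathbf{G})=\spn(\mathbf{u})\ne0$.} The domain-permutation symmetry of $\mathbf{F}$ lets one assume $\langle\mathbf{u},\mathbf{e}_3\rangle\ne0$. Then \cref{prop:annihilator-uniquely-propagates-values}, used through \cref{prop:two-ternary-dichotomy-a-2} with $\mathbf{e}_1\teh+\mathbf{e}_2\teh$, shows that $\mathbf{G}$ is determined by $\mathbf{G}\domres{\db\dg}$. This forces $\mathbf{G}=(a,0,b)\teh+(0,x,y)\teh$ with $a,x\ne0$, and the $\dg\dr$ restriction then forces $b=0$ or $y=0$. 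Annihilator propagation is precisely what controls entries such as $x_{\db\dg\dr}$ that no restriction sees.

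\textbf{Case $\ann(\mathbf{G})=0$.} The paper applies the constraints to a realizable frame rather than to $\mathbf{G}$.
\begin{itemize}
\item For \caseathree, $T\ag(\caseathreemath)$ contains the \caseatwo signatures $(T\mathbf{e}_i)\teh+(T\mathbf{e}_j)\teh$. The first case plus one use of \cref{prop:subspace-e-bg} on $\db\dr$ forces the $T\mathbf{e}_i$ to be coordinate vectors up to scale, so $\mathbf{G}\in\mathcal{E}$.
\item The frames of \casebthree and \casecosep contain a \casebtwo, resp.\ \casecobg, signature, which the first case rules out.
\item \casecodz and \casecogen are hard by \cref{prop:a2-hard-with-casecodz-casecosep}.
\end{itemize}
That is the missing ingredient in your approach: a mechanism, either annihilator propagation or a realizable linear family checked against the shell, that turns the Boolean constraints into constraints on all of $\mathbf{G}$.
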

      \begin{proof}
      Assume $\holts(\mathbf{F}, \mathbf{G})$ is not \sph.
        We first consider the case when $\mathbf{G}$ has a nonzero annihilator. 
        By \cref{prop:nonzero-annihilator-types}, this means $\mathbf{G}$ is one of \caseatwo, \casebtwo, \casecobg, or \casecz. 
        We will show that in particular, such $\mathbf{G}$ must be \caseatwo or \casecz
        and take the forms given above
        to escape \sphness.
        
        Suppose $\ann(\mathbf{G}) = \spn(\mathbf{u}) \ne 0$.
        Since $\mathbf{F}$ is invariant under permutation of the domain, we may assume that $\langle \mathbf{u}, \mathbf{e}_3 \rangle \ne 0$.
        By \cref{subsec:gc-case-a-3}, $\holts(\mathbf{F}, \mathbf{G}) \ge_T \holts(\mathbf{e}_1\teh + \mathbf{e}_2\teh, \mathbf{G})$.
        Then, by \cref{prop:two-ternary-dichotomy-a-2},
        $\mathbf{G}$ must be of the form $(a, 0, b)\teh + (0, x, y)\teh$ for some $a, b, x, y \in \mathbb{C}$,
        with $a,  x \ne 0$, 
        and $\mathbf{G}$ does not have type \casecobg.
        By \cref{subsec:gc-case-a-3}, $\holts(\mathbf{F}, \mathbf{G}) \ge_T \holts(\mathbf{e}_2\teh + \mathbf{e}_3\teh, \mathbf{G}) \ge_T \holbs([1, 0, 0, 1], \mathbf{G}\domres{\dg\dr})$, so we can apply the Boolean domain dichotomy. 
        As $\mathbf{G}\domres{\dg\dr} = (0, b)\teh + (x, y)\teh$, 
        and $x \ne 0$,
        it must be that $b = 0$ or $y = 0$ to escape \sphness,
        from which we get the desired domain separated form $ij|k$
        for some  $i, j, k \in \{\db,\dg,\dr\}$.
        In particular, we see that $\mathbf{G}$ cannot be \casebtwo
        , since one of $(a, 0, b)$ and $(0, x, y)$ must be not isotropic.

        Now, we may assume that $\ann(\mathbf{G}) = 0$, so the remaining cases are when $\mathbf{G}$ is \caseathree, \casebthree, \casecosep, \casecodz or \casecogen. 

 Suppose $\mathbf{G}$ is \caseathree.
        Then, by \cref{lem:gadget-set-general-form} and \cref{subsec:gc-case-a-3}, $\holts(\mathbf{F}, \mathbf{G}) \ge_T \holts(\mathbf{F}, V)$ where $V = T \ag(\caseathreemath) = \spn((T \mathbf{e}_1)\teh, (T \mathbf{e}_2)\teh, (T \mathbf{e}_3)\teh)$ for some $T \in \so(3, \mathbb{C})$.
        For any permutation $\{i, j, k \} = \{1, 2, 3\}$, there is a \caseatwo signature $\mathbf{G}_{ij} = (T \mathbf{e}_i)\teh + (T \mathbf{e}_j)\teh \in V$ 
        with $\ann(\mathbf{G}_{ij}) = \spn(T \mathbf{e}_k)$.
        There must exist some $k$ such that $\langle T \mathbf{e}_k, \mathbf{e}_3 \rangle \ne 0$, so fix such $\{i, j, k\} = \{1, 2, 3\}$.
        Applying the above paragraph 
        on $\holts(\mathbf{F}, \mathbf{G}_{ij})$ implies that $\{T \mathbf{e}_i, T \mathbf{e}_j\} \sim \{(*, 0, *), (0, *, 0)\}$, $\{(*, *, 0), (0, 0, *)\}$ or $\{(*, 0, 0), (0, *, *)\}$.
        We analyze the case where $T \mathbf{e}_i = (a, 0, b)$ for some $a, b$ and 
        $T \mathbf{e}_j = (0, c, 0)$ with $c \ne 0$.
        The other cases are similar.
        Suppose both $a, b \ne 0$. 
        Since $\langle T \mathbf{e}_i, T \mathbf{e}_k \rangle = 0$ and $\langle T \mathbf{e}_j, T \mathbf{e}_k \rangle = 0$, it must be the case that $T \mathbf{e}_k \sim (-b, 0, a)$.
        By \cref{subsec:gc-case-a-3}, $\holts(\mathbf{F}, \mathbf{G}) \ge_T \holts(\mathbf{e}_1\teh + \mathbf{e}_3\teh, V) \ge_T \holbs([1, 0, 0, 1], V\domres{\db\dr})$,
        we can apply \cref{prop:subspace-e-bg} on $V\domres{\db\dr} = \spn( (a, b)\teh, (-b, a)\teh)$,
        which results in \sphness as $a, b \ne 0$.
        Therefore, one of $a, b$ must be $0$, meaning
        $T \mathbf{e}_i \sim \mathbf{e}_1$ or $\mathbf{e}_3$, 
        and thus 
        $\mathbf{G} = x \mathbf{e}_1\teh + y \mathbf{e}_2\teh + z \mathbf{e}_3\teh \in \mathcal{E}$.

        Suppose $\mathbf{G}$ is \casebthree. 
        Then, $\holts(\mathbf{F}, \mathbf{G}) \ge_T \holts(\mathbf{F}, T \ag(\casebthreemath))$ for some $T \in \so(3, \mathbb{C})$
        and $T \ag(\casebthreemath)$ contains a \casebtwo signature, resulting in \sphness,
        as proved earlier at the beginning of this proof.
        Similarly, if $\mathbf{G}$ is \casecosep, 
        $\holts(\mathbf{F}, \mathbf{G}) \ge_T \holts(\mathbf{F}, T \ag(\casecosepmath))$ for some $T \in \so(3, \mathbb{C})$
        and $T \ag(\casecosepmath)$ contains a \casecobg signature, resulting in \sphness.
        If $\mathbf{G}$ is \casecodz or \casecogen, then \sphness follows from \cref{prop:a2-hard-with-casecodz-casecosep}
        because $\holts(\mathbf{F}, \mathbf{G}) \ge_T \holts(\mathbf{e}_1\teh + \mathbf{e}_2\teh, \mathbf{G})$.

        We derive the specific forms for $\mathbf{G}$ in the conclusion. 
        Suppose $\mathbf{G}$ is \sepbg and degenerate on $\db\dg$. Other cases are similar.
        If $\mathbf{G}$ is \sepbg, then 
        it must be of the form
        $\mathbf{G}_0 + c \mathbf{e}_3\teh$ for some symmetric ternary $\mathbf{G}_0$ supported on $\db\dg$ and $c \in \mathbb{C}$. 
        By assumption, $\mathbf{G}\domres{\db\dg} = \mathbf{G}_0\domres{\db\dg}$ is degenerate, so $\mathbf{G}_0\domres{\db\dg} = (a, b)\teh$ for some $a, b \in \mathbb{C}$.
        Then, $\mathbf{G} = (a, b, 0)\teh + c \mathbf{e}_3\teh$. 
        If $c = 0$, then $\mathbf{G}$ is degenerate, so $c \ne 0$.
        If $a = 0$ or $b = 0$, then $\mathbf{G} \in \mathcal{E}$, so $a, b \ne 0$ if $\mathbf{G} \notin \mathcal{E}$.
      \end{proof}

      \subsection{\texorpdfstring{\casebthree}{B3}}
      \label{subsec:two-ternary-dichotomy-b-3}
      We start with a \casebthree signature $\mathbf{F} = a \bzp\teh + b \bzm\teh + c \mathbf{e}_3\teh$ for nonzero $a, b, c \in \mathbb{C}$.
      From \cref{subsec:gc-case-a-3}, we know that $a \bzp\teh + b \bzm\teh + c \mathbf{e}_3\teh$ and $\bzp\teh + \bzm\teh + \mathbf{e}_3\teh$ are inter-reducible to each other. 
      So, we may assume without loss of generality, $\mathbf{F} = \bzp\teh + \bzm\teh + \mathbf{e}_3\teh$.
      
      \begin{proposition}
        \label{prop:two-ternary-dichotomy-b-3}
        Let $\mathbf{F} = \bzp\teh + \bzm\teh + \mathbf{e}_3\teh$.
        Let $\mathbf{G}$ be a nondegenerate symmetric ternary signature.
        Then $\holts(\mathbf{F}, \mathbf{G})$ is \sph unless
        \begin{itemize}
          \item $\mathbf{G}$ is \sepbg and $\holbs(\mathbf{F}\domres{\db\dg}, \mathbf{G}\domres{\db\dg})$ is not \sph (and thus tractable).

          In particular, $\mathbf{G}$ must be $a \bzp\teh + b \bzm \teh + c \mathbf{e}_3\teh$ or $(a, b, 0)\teh + c \mathbf{e}_3\teh$ for some $a, b, c \in \mathbb{C}$.
        \end{itemize}
      \end{proposition}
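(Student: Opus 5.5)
We will work in the bipartite setting $\holts(\X \mid \iny{\mathbf{F}}, \iny{\mathbf{G}})$ with $\iny{\mathbf{F}} = \Yinv \mathbf{F} = \mathbf{e}_1\teh + \mathbf{e}_2\teh + \mathbf{e}_3\teh$ and $\iny{\mathbf{G}} = \Yinv \mathbf{G}$. By \cref{subsec:gc-case-b-3}, $\iny{\mathbf{F}}$ realizes $\beq$ on both sides, so the bipartite restriction can be removed. It realizes the \geneq span on the right side as well. This gives the shell condition: for every pair $i \ne j$, the restriction $\iny{\mathbf{G}}\domres{ij}$ is \geneq or degenerate. The tractability direction is simply \cref{subsec:tractable-bg-r} with $T = I$, since $\mathbf{F}$ and $\mathbf{G}$ are both \sepbg. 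So the whole task is the hardness direction, and the plan is to mirror \cref{prop:two-ternary-dichotomy-a-3}, using $\iny{\ }$ in place of plain signatures and $\X$ in place of $\beq$.

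\textbf{Step 1: $\ann(\mathbf{G}) \ne 0$.} Here $\mathbf{G}$ is \caseatwo, \casebtwo, \casecobg, or \casecz. Since $\iny{\mathbf{F}}$ is invariant under permutations of the domain, we may assume the annihilator $\Y\transpose\mathbf{u}$ of $\iny{\mathbf{G}}$ is not orthogonal to $\mathbf{e}_3$. Invoking \cref{prop:two-ternary-dichotomy-b-2} (the frame of $\mathbf{F}$ contains $\bzp\teh+\bzm\teh$) gives $\iny{\mathbf{G}} = (a,0,b)\teh + (0,x,y)\teh$ with $a,x\ne0$. Restricting further to $\dg\dr$ and $\db\dr$ via the shell, exactly as in \cref{prop:two-ternary-dichotomy-a-3}, forces $b=0$ or $y=0$. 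The case $y=0$ is excluded because it would violate the $\db\dr$ restriction, unless it is symmetric to the other case. The conclusion is that $\iny{\mathbf{G}}$ is \sepbg or domain separated in another pattern. At this point the $\dr$ coordinate is special, since $\X$ swaps $\db,\dg$ and fixes $\dr$. The only separations compatible with the $\X$ pairing are $\db\dg|\dr$; for a $\db\dr|\dg$ signature, $\X$ moves it to $\dg\dr|\db$. I expect this to produce either \sphness through a bipartite Boolean restriction or the form $(a,b,0)\teh + c\,\mathbf{e}_3\teh$ after applying $\Y$. The key check is that $\Y$ maps $\spn(\mathbf{e}_1,\mathbf{e}_2)$ to itself, so \sepbg is preserved.

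\textbf{Step 2: $\ann(\mathbf{G}) = 0$.} Here we use frames. If $\mathbf{G}$ is \casebthree, then $\iny{T}\agx(\casebthreemath)$ contains the three \casebtwo-type signatures $(\iny{T}\mathbf{e}_i)\teh + (\iny{T}\mathbf{e}_j)\teh$. Step 1, together with the $\db\dr$ restriction through \cref{prop:subspace-e-bg}, should force $\iny{T}$ to be a monomial-like matrix preserving $\dr$. That yields $a\bzp\teh+b\bzm\teh+c\,\mathbf{e}_3\teh$. If $\mathbf{G}$ is \caseathree, its frame contains \caseatwo signatures, and Step 1 forces every $T\mathbf{e}_i$ into the forms $(*,*,0)$ or $\mathbf{e}_3$. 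We then get $(a,b,0)\teh + (a',b',0)\teh + c\,\mathbf{e}_3\teh$, and \cref{prop:subspace-e-bg} on the $\db\dg$ part reduces this to tractable Boolean forms. If $\mathbf{G}$ is \casecosep, its frame contains a \casecobg signature; Step 1 should then give \sphness, because a \casecobg signature of symmetric rank $3$ cannot be \geneq on the relevant restriction. If $\mathbf{G}$ is \casecodz or \casecogen, we reduce to \cref{prop:b2-hard-with-casecodz-casecosep}.

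Finally, given that $\mathbf{G}$ is \sepbg, the Boolean condition on $\holbs(\mathbf{F}\domres{\db\dg},\mathbf{G}\domres{\db\dg})$ is necessary, because $\mathbf{F}$ realizes $(=_{\db\dg})$. The main obstacle will be Step 1. There I have to handle the asymmetry introduced by $\X$ and rule out the domain separations other than $\db\dg|\dr$. For this I will first compute directly the restriction $(\X\iny{\mathbf{G}})\domres{\cdot}$ against $\iny{\mathbf{F}}$ and apply \cref{cor:bipartite-boolean-dichotomy-explicit}.
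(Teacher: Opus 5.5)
Your setup matches the paper's: pass to $\holts(\X \mid \iny{\mathbf{F}}, \iny{\mathbf{G}})$ and use that $\iny{\mathbf{F}}$ realizes $\beq$ on both sides, so the bipartite restriction drops. But you do not use what this buys. Once the restriction is gone, $\holts(\iny{\mathbf{F}},\iny{\mathbf{G}})\le_T\holts(\mathbf{F},\mathbf{G})$, and $\iny{\mathbf{F}}=\mathbf{e}_1\teh+\mathbf{e}_2\teh+\mathbf{e}_3\teh$ is an \caseathree signature in defining form. So \cref{prop:two-ternary-dichotomy-a-3} applies verbatim to $\iny{\mathbf{G}}$: either $\iny{\mathbf{G}}\in\mathcal{E}$, giving $\mathbf{G}=a\bzp\teh+b\bzm\teh+c\,\mathbf{e}_3\teh$, or $\iny{\mathbf{G}}$ is $ij|k$ and degenerate on $ij$. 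This is exactly what the paper does.

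Your Steps 1 and 2 instead re-derive that proposition, and the re-derivation has errors.
\begin{itemize}
\item The permutation WLOG is valid only in the non-bipartite setting, and a permutation moving $\dr$ does not preserve $\X$ (only the swap $\db\leftrightarrow\dg$ does). So it cannot be combined with \cref{prop:two-ternary-dichotomy-b-2}, which is a statement about the original $\mathbf{G}$, and the case where the annihilator is orthogonal to $\mathbf{e}_3$ is left uncovered.
\item The claim that $y=0$ violates the $\db\dr$ restriction is false: then $\iny{\mathbf{G}}\domres{\db\dr}=(a,b)\teh$ is degenerate.
\item $\iny{T}\agx(\casebthreemath)$ contains only one \casebtwo-type pair. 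The signatures $T\bzp\teh+T\mathbf{e}_3\teh$ and $T\bzm\teh+T\mathbf{e}_3\teh$ are annihilated by the isotropic $T\bzp$, respectively $T\bzm$, so they are \casecz.
\end{itemize}

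The genuine gap is the step you yourself flag: excluding $\iny{\mathbf{G}}=(x,0,0)\teh+(0,y,z)\teh$ with $xyz\ne0$, the $\dg\dr|\db$ case ($\db\dr|\dg$ is analogous). ``Compatibility with the $\X$ pairing'' is not an argument, and the bipartite Boolean restrictions you plan to compute do not detect hardness. Every restriction to a pair $ij$ of $\iny{\mathbf{G}}$, and of $\X\iny{\mathbf{G}}=(0,x,0)\teh+(y,0,z)\teh$, is \geneq or degenerate. So both lie in the shell $\bigcap_{i\ne j}E_{ij}$, and the bipartite Boolean problems between them on each pair are tractable.

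The missing idea is that such a $\mathbf{G}$ is not even individually tractable.
\begin{itemize}
\item Its hatted annihilator is $(0,-z,y)$, and $2\cdot0\cdot(-z)+y^2\ne0$. By \cref{prop:annihilating-vector-under-Y} the annihilator of $\mathbf{G}$ is not isotropic, so $\mathbf{G}$ is not \casecz.
\item It has symmetric rank $2$, so it is not \casecobg.
\item By the uniqueness in \cref{prop:uniqueness-sum-lin-ind-tensor}, its constituent vectors are $\Y(x,0,0)$ and $\Y(0,y,z)$. They satisfy $\langle\Y(x,0,0),\Y(0,y,z)\rangle=xy\ne0$ and $\langle\Y(0,y,z),\Y(0,y,z)\rangle=z^2\ne0$, so $\mathbf{G}$ is neither \caseatwo nor \casebtwo.
\end{itemize}
Hence $\holts(\mathbf{G})$ is already \sph by \cref{thm:single-ternary-dichotomy}. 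Alternatively, $\iny{\mathbf{G}}$ and $\X\iny{\mathbf{G}}$ have non-orthogonal annihilators $(0,-z,y)$ and $(-z,0,y)$ and do not form a compatible pair; that is a genuinely domain-$3$ interaction, not a Boolean restriction. Either way, $\iny{\mathbf{G}}$, and hence $\mathbf{G}$, must be \sepbg, and the Boolean dichotomy on $\db\dg$ then yields the stated forms.
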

      \begin{proof}

        Suppose  $\holts(\mathbf{F}, \mathbf{G})$ is not \sph.
        Let $\iny{\mathbf{F}} = \Yinv \mathbf{F} = \mathbf{e}_1\teh + \mathbf{e}_2\teh + \mathbf{e}_3\teh$ and $\iny{\mathbf{G}} = \Yinv \mathbf{G}$.
        We apply $\Yinv$ transform, which brings us to $\holts(\mathbf{X} \mid \iny{\mathbf{F}}, \iny{\mathbf{G}})$. 
        Also note that $\iny{\mathbf{F}}$ can realize on the right side the binary equality $\beq$ represented by the $3 \times 3$ identity matrix, by $\langle \mathbf{e}_1\teh + \mathbf{e}_2\teh + \mathbf{e}_3\teh, (1, 1, 1) \rangle$.
        This can be used to realize $\beq$ on the left side as well since $\X \beq \X = \beq$. 
        Then, 
        \begin{align*}
            \holts(\X \mid \iny{\mathbf{F}}, \iny{\mathbf{G}}) &\ge_T \holts(\X \mid \iny{\mathbf{F}}, \iny{\mathbf{G}}, \beq) \\
            &\ge_T \holts(\X, \beq \mid\iny{\mathbf{F}}, \iny{\mathbf{G}}, \beq) \\
            &\ge_T \holts(\X, \iny{\mathbf{F}}, \iny{\mathbf{G}})
        \end{align*}
        because having $\beq$ on both sides means we can remove the bipartite assumption. 
        We can now apply \cref{prop:two-ternary-dichotomy-a-3} on $\holts( \iny{\mathbf{F}}, \iny{\mathbf{G}})$,
        so $\iny{\mathbf{G}}$ must be either in $\mathcal{E}$, or $ij\mid k$ for some $i, j, k \in \{\db,\dg,\dr\}$ and degenerate on $ij$.

        If $\iny{\mathbf{G}} \in \mathcal{E}$, 
        since it is also symmetric, it has the form
        $x \mathbf{e}_1\teh + y \mathbf{e}_2\teh + z \mathbf{e}_3\teh$,
        for some $x,y, z$. 
        Applying $\Y$, we have $\mathbf{G} = x \bzp\teh + y \bzm \teh + z \mathbf{e}_3\teh$ which is \sepbg.
        Also, $\holbs(\mathbf{F}\domres{\db\dg}, \mathbf{G}\domres{\db\dg}) = \holbs((1, \i)\teh + (1, -\i)\teh, x (1, \i)\teh + y (1, -\i)\teh)$ which is Boolean domain tractable type $\typeii$.

        We now argue that if $\iny{\mathbf{G}}$ is $ij | k$ and degenerate on $ij$, then it must be \sepbg, i.e., $ij$ must be $\db\dg$.
        We  prove this by showing that 
        ($\neg$ \sepbg) $\wedge~ [ (ij = \dg\dr) \vee (ij= \db\dr)]$ leads to \sphness.
        Suppose  $\iny{\mathbf{G}}$ is not \sepbg but is \sepgr and degenerate on $\dg\dr$.
        Then, we have $\iny{\mathbf{G}} = (x, 0, 0)\teh + (0, y, z)\teh$ for $x, y, z \ne 0$. Here $x \ne 0$ because $\mathbf{G}$ and thus $\iny{\mathbf{G}}$
        is nondegenerate,
        and $y, z \ne 0$ because we assumed  $\iny{\mathbf{G}}$ is not \sepbg.
        Then, we see that $\ann(\iny{\mathbf{G}}) = \spn((0, -z, y))$. 
        By \cref{prop:nonzero-annihilator-types} and \cref{prop:symmetric-rank-sym-sum},  $\mathbf{G}$ must be \caseatwo, \casebtwo or \casecz. 
        By \cref{prop:annihilating-vector-under-Y}, 
        $\mathbf{G}$ cannot be annihilated by an isotropic vector because $2 \cdot 0 \cdot (-z) + y^2 \ne 0$, so it must be a \caseatwo or \casebtwo.
        If $\mathbf{G}$ were to be \caseatwo, then
        $\Y(x, 0, 0)$ and $\Y(0, y, z)$ must be orthogonal, but that is not the case:
        \[
          \langle \Y(x, 0, 0), \Y(0, y, z) \rangle 
          =\begin{bmatrix}
            x & 0 & 0
            \end{bmatrix} \Y\transpose \Y \begin{bmatrix}
            0 \\ y \\ z
          \end{bmatrix}
          = \begin{bmatrix}
            x & 0 & 0
            \end{bmatrix} \mathbf{X} \begin{bmatrix}
            0 \\ y \\ z
          \end{bmatrix} = xy \ne 0 \,.
        \]
        Similarly, if $\mathbf{G}$ were to be \casebtwo, $\Y(x, 0, 0)$ and $\Y(0, y, z)$ must be isotropic, but
        \[
          \langle \Y(0, y, z), \Y(0, y, z) \rangle 
          =\begin{bmatrix}
            0 & y & z
            \end{bmatrix} \Y\transpose \Y \begin{bmatrix}
            0 \\ y \\ z
          \end{bmatrix}
          = \begin{bmatrix}
            0 & y & z
            \end{bmatrix} \mathbf{X} \begin{bmatrix}
            0 \\ y \\ z
          \end{bmatrix} = z^2 \ne 0 \,.
        \]
        The case when  $\iny{\mathbf{G}}$ is not \sepbg but is \sepbr and degenerate on $\db\dr$ is similar.
        So, $\iny{\mathbf{G}}$ must be \sepbg, which implies $\mathbf{G}$ must be \sepbg as well.

        We derive the specific forms for $\mathbf{G}$ in the conclusion. 
        If $\mathbf{G}$ is \sepbg, then 
        it must be of the form
        $\mathbf{G}_0 + c \mathbf{e}_3\teh$ for some symmetric ternary $\mathbf{G}_0$ supported on $\db\dg$ and $c \in \mathbb{C}$. 
        Applying the Boolean domain dichotomy, 
        on $\holbs(\mathbf{F}\domres{\db\dg}, \mathbf{G}\domres{\db\dg}) =_T \holbs((1, \i)\teh + (1, -\i)\teh, \mathbf{G}_0\domres{\db\dg})$, 
        $\mathbf{G}_0\domres{\db\dg}$ must be either degenerate, which means $\mathbf{G}_0\domres{\db\dg} = (a, b)\teh$, or $a(1, \i)\teh + b(1, -\i)\teh$ for some $a, b \in \mathbb{C}$. 
        This gives $\mathbf{G} = (a, b, 0)\teh + c \mathbf{e}_3\teh$ or $a \bzp\teh + b \bzm \teh + c \mathbf{e}_3\teh$.
      \end{proof}

      \subsection{\texorpdfstring{\casecosep}{C1sep}}
      \label{subsec:two-ternary-dichotomy-C-1-sep}
            Let $\mathbf{F}$ be a \casecosep signature such that $\Yinv \mathbf{F} = a \mathbf{e}_1\teh + \lambda \sym(\mathbf{e}_1, \mathbf{e}_1, \mathbf{e}_2) + d \mathbf{e}_3\teh$ for some $a, d, \lambda \in \mathbb{C}$ with $d, \lambda \ne 0$.
      By \cref{subsec:gc-case-c1-sep}, we may assume without loss of generality that $a = 0$ and $d, \lambda = 1$,
      so $\mathbf{F} = \sym(\bzp, \bzp, \bzm) + \mathbf{e}_3\teh$.

      \begin{proposition}
        \label{prop:two-ternary-dichotomy-C-1-sep}
        Let $\mathbf{F}=\sym(\bzp, \bzp, \bzm) + \mathbf{e}_3\teh$.
        Let $\mathbf{G}$ be a nondegenerate symmetric ternary signature.
        Then, $\holts(\mathbf{F}, \mathbf{G})$ is \sph unless
        \begin{itemize}
          \item $\Yinv \mathbf{G} \in \casecsupp$, or
          \item $\mathbf{G}$ is \sepbg and $\holbs(\mathbf{F}\domres{\db\dg}, \mathbf{G}\domres{\db\dg})$ is not \sph (and thus tractable).

          In particular, $\mathbf{G} = a \bzp\teh + b\sym(\bzp, \bzp, \bzm) + c \mathbf{e}_3\teh$ or $(a, b, 0)\teh + c \mathbf{e}_3\teh$ for some $a, b, c \in \mathbb{C}$.
          If further $\Yinv \mathbf{G} \notin \casecsupp$, then
          $\Yinv \mathbf{G} = (a', b', 0)\teh + c'\mathbf{e}_3\teh$ for some $a', b', c' \in \mathbb{C}$ such that $b', c' \ne 0$.
        \end{itemize}
      \end{proposition}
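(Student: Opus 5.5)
We work in the bipartite setting $\holts(\X \mid \iny{\mathbf{F}}, \iny{\mathbf{G}})$ with $\iny{\mathbf{F}} = \Yinv\mathbf{F} = \sym(\mathbf{e}_1,\mathbf{e}_1,\mathbf{e}_2) + \mathbf{e}_3\teh$ and $\iny{\mathbf{G}} = \Yinv\mathbf{G}$. By \cref{subsec:gc-case-c1-sep}, $\iny{\mathbf{F}}$ realizes the frame $\agx(\casecosepmath)$ on the right side. This gives two Boolean constraints: every right-side signature must satisfy $\iny{\mathbf{G}}\domres{\db\dg} \in M_{\db\dg}$ (degenerate or $[*,*,0,0]$), and $\iny{\mathbf{G}}\domres{\dg\dr} \in E_{\dg\dr}$ (degenerate or $[*,0,0,*]$). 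The plan is to split on the type of $\mathbf{G}$, using the pairwise results already proved (anchored on $\mathbf{G}$ whenever that is easier) and the frame/shell method otherwise.

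\emph{Case $\ann(\mathbf{G}) \ne 0$.} Here $\mathbf{G}$ is \caseatwo, \casebtwo, \casecobg or \casecz. Since $\iny{\mathbf{F}}$ realizes $\sym(\mathbf{e}_1,\mathbf{e}_1,\mathbf{e}_2)$, we have $\holts(\mathbf{F},\mathbf{G}) \ge_T \holts(\sym(\bzp,\bzp,\bzm),\mathbf{G})$. Then \cref{prop:two-ternary-dichotomy-C-1-bg} gives two subcases.
\begin{itemize}
\item If $\langle \mathbf{u}, \mathbf{e}_3\rangle \ne 0$, then $\iny{\mathbf{G}} = (*,0,*)\teh + \sym((*,0,*),(*,0,*),(0,*,*))$. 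Imposing the $E_{\dg\dr}$ constraint on $\iny{\mathbf{G}}\domres{\dg\dr}$ should force the $\dr$-component of the first vector to vanish, or else yield a $\casecsupp$ support pattern.
\item If $\langle \mathbf{u},\mathbf{e}_3\rangle = 0$, then $\Y\transpose\mathbf{u} = (x,y,0)$. The $E_{\dg\dr}$ and $M_{\db\dg}$ shells, combined with \cref{prop:annihilator-uniquely-propagates-values} applied with the roles of the coordinates permuted, should pin down the support.
\end{itemize}
I expect both subcases to end in one of two outcomes: $\iny{\mathbf{G}} \in \casecsupp$ (for instance the \casecz case with annihilator $\mathbf{e}_2$), or $\iny{\mathbf{G}}$ is \sepbg of the form $(a',b',0)\teh + c'\mathbf{e}_3\teh$.

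\emph{Case $\ann(\mathbf{G}) = 0$.}
\begin{itemize}
\item If $\mathbf{G}$ is \caseathree or \casebthree, apply \cref{prop:two-ternary-dichotomy-a-3} or \cref{prop:two-ternary-dichotomy-b-3} anchored on $\mathbf{G}$, with $\mathbf{F}$ in the role of the other signature. Since $\mathbf{F}$ is neither in $\mathcal{E}$ nor of the $(a,b,0)\teh + c\mathbf{e}_3\teh$ form after any orthogonal transform, this should give \sphness. This needs a short symmetric-rank check using \cref{prop:symmetric-rank-sym-sum}.
\item If $\mathbf{G}$ is \casecodz or \casecogen, \cref{prop:c1bg-vs-c1codz-c1gen} with the realized \casecobg signature gives $\iny{\mathbf{G}} \in \casecsupp$ directly.
\item If $\mathbf{G}$ is \casecosep, write $\iny{\mathbf{G}} = \rho(M)\iny{\mathbf{G}}_0$ with $\iny{\mathbf{G}}_0$ in defining form, so the frame $\rho(M)\agx(\casecosepmath)$ is realizable. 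Restrict to $\db\dg$ and apply \cref{prop:subspace-m-bg} to $\spn(\mathbf{u}_1\teh,\sym(\mathbf{u}_1,\mathbf{u}_1,\mathbf{u}_2))$ with $\mathbf{u}_1=(d^2,-c^2)$ and $\mathbf{u}_2=(-b^2,a^2)$. Since $\mathbf{u}_1 \sim \mathbf{u}_2$ is impossible, we get $\mathbf{u}_1 \sim \sbto$, hence $c=0$, and $\iny{\mathbf{G}} \in \casecsupp$ by \cref{prop:casecsupp-fixed-under-rho-c-0}.
\end{itemize}

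\emph{Final form.} When $\iny{\mathbf{G}} \notin \casecsupp$ and $\mathbf{G}$ is \sepbg, apply the Boolean dichotomy to $\holbs(\mathbf{F}\domres{\db\dg},\mathbf{G}\domres{\db\dg})$, where $\mathbf{F}\domres{\db\dg}$ is type $\typei$ with $b = \pm 2\i a$. This forces $\mathbf{G}\domres{\db\dg}$ to be degenerate or $a\bzp\teh + b\sym(\bzp,\bzp,\bzm)$, which yields the stated forms for $\mathbf{G}$. The condition $b',c' \ne 0$ holds because $b'=0$ would place $\iny{\mathbf{G}}$ in $\casecsupp$ and $c'=0$ would make $\mathbf{G}$ degenerate.

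The main obstacle is the annihilator case with $\langle \mathbf{u},\mathbf{e}_3\rangle = 0$. There the propagation lemma does not directly apply on $\db\dg$, and one must apply it on $\dg\dr$ or $\db\dr$ and carefully exhaust the support patterns.
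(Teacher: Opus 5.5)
\textbf{There is a gap in the \casecosep case.} Your overall route matches the paper's: the shells $M_{\db\dg}$ and $E_{\dg\dr}$ coming from $\agx(\casecosepmath)$, the propagation lemma in the annihilator cases, reduction to \cref{prop:c1bg-vs-c1codz-c1gen} for \casecodz and \casecogen, anchoring on $\mathbf{G}$ for \caseathree and \casebthree, and the frame argument for \casecosep. However, the \casecosep step rests on a false claim. You say $\mathbf{u}_1\sim\mathbf{u}_2$ is impossible, but with $\mathbf{u}_1=(d^2,-c^2)$ and $\mathbf{u}_2=(-b^2,a^2)$,
\[
\det[\mathbf{u}_1\;\mathbf{u}_2]=a^2d^2-b^2c^2=(ad-bc)(ad+bc)=ad+bc .
\]
This vanishes whenever $ad=\tfrac12=-bc$, for example $a=1$, $b=\tfrac12$, $c=-1$, $d=\tfrac12$, where $c\neq 0$. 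For such $M$ the subspace $\spn(\mathbf{u}_1\teh,\sym(\mathbf{u}_1,\mathbf{u}_1,\mathbf{u}_2))$ collapses to $\spn(\mathbf{u}_1\teh)$, which lies inside the shell. So \cref{prop:subspace-m-bg} yields nothing, and you cannot conclude $c=0$. Invertibility of $\rho(M)$ only forbids the $\db\dg$-projections of all \emph{three} columns from being collinear, not two of them.

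The repair, which is what the paper does, uses the third frame element $\mathbf{v}_3\teh=(\rho(M)\mathbf{e}_3)\teh$. Set $\mathbf{u}_3=(\sqrt2\,bd,-\sqrt2\,ac)$. Applying \cref{prop:subspace-m-bg} to $\spn(\mathbf{v}_1\teh,\mathbf{v}_3\teh)\domres{\db\dg}$ forces $\mathbf{u}_1\sim\mathbf{u}_3$. Together with $\mathbf{u}_1\sim\mathbf{u}_2$, this would make the top two rows of $\rho(M)$ rank at most $1$, a contradiction. Explicitly, $\det[\mathbf{u}_1\;\mathbf{u}_3]=-\sqrt2\,cd(ad-bc)=-\sqrt2\,cd\neq 0$, since $c\neq0$ and $ad+bc=0$ together with $ad-bc=1$ force $d\neq 0$.

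\textbf{The remaining pieces are sound.} In the subcase $\langle\mathbf{u},\mathbf{e}_3\rangle=0$ that you left as a sketch, write the annihilator of $\iny{\mathbf{G}}$ as $(x,y,0)$.
\begin{itemize}
\item If $x=0$, then $\iny{\mathbf{G}}$ is supported on $\db\dr$, hence lies in $\casecsupp$.
\item Otherwise normalize to $x=-1$. Values then propagate from $\iny{\mathbf{G}}\domres{\dg\dr}$, so a degenerate restriction would make $\iny{\mathbf{G}}$ degenerate. Hence $E_{\dg\dr}$ forces $\iny{\mathbf{G}}\domres{\dg\dr}=[a,0,0,b]$, giving $\iny{\mathbf{G}}=a(y,1,0)\teh+b\mathbf{e}_3\teh$. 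This is \sepbg and degenerate on $\db\dg$, as you anticipated.
\end{itemize}
For \caseathree and \casebthree, the cleanest reason $\mathbf{F}$ cannot take the forms permitted by \cref{prop:two-ternary-dichotomy-a-3,prop:two-ternary-dichotomy-b-3} is $\ann(\mathbf{F})=0$. The split forms such as $(a,b,0)\teh+c\mathbf{e}_3\teh$ have a nonzero annihilator. The remaining forms would make $\mathbf{F}$ of type \caseathree or \casebthree, which \cref{lem:types-distinct} excludes. By itself, \cref{prop:symmetric-rank-sym-sum} does not give the lower bound on the symmetric rank of $\mathbf{F}$ that your ``symmetric-rank check'' would need.
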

      \begin{proof}
      Assume $\holts(\mathbf{F}, \mathbf{G})$ is not \sph.
      We have $\iny{\mathbf{F}} = \Yinv \mathbf{F} = \sym(\mathbf{e}_1, \mathbf{e}_1, \mathbf{e}_2) + \mathbf{e}_3\teh$. 
      Let $\iny{\mathbf{G}} = \Yinv \mathbf{G}$.
      By \cref{subsec:gc-case-c1-sep}, 
      $\holts(\mathbf{F}, \mathbf{G}) \ge_T \holbs([0, 0, 1, 0] \mid \iny{\mathbf{G}}\domres{\db\dg})$ and
      $\holts(\mathbf{F}, \mathbf{G}) \ge_T \holbs([1, 0, 0, 1] \mid \iny{\mathbf{G}}\domres{\dg\dr})$, so
    $\iny{\mathbf{G}}\domres{\db\dg}$ must be $[*, *, 0, 0]$ or degenerate,
    and
    $\iny{\mathbf{G}}\domres{\dg\dr}$ must be $[*, 0, 0, *]$ or degenerate.

      We first consider the case when $\ann(\iny{\mathbf{G}}) = \spn(\mathbf{u}) \ne 0$.
        We separate into cases where $\mathbf{u}$ orthogonal to $\mathbf{e}_3$ or not.
        Suppose $\langle \mathbf{u}, \mathbf{e}_3 \rangle = 0$,
        thus we have $\mathbf{u} = (x, y, 0)$.
        If $x = 0$, then $y \ne 0$ 
        so $\ann(\iny{\mathbf{G}}) = \spn(\mathbf{e}_2)$ which means $\iny{\mathbf{G}}$ is supported on $\db\dr$ and thus $\iny{\mathbf{G}} \in \casecsupp$, and we are done.
        So we may assume $x = -1$.
        We apply \cref{prop:annihilator-uniquely-propagates-values} with the roles of $\db$ and $\dr$ switched, which makes the values propagate upward from $\iny{\mathbf{G}}\domres{\dg\dr}$.
        If  $\iny{\mathbf{G}}\domres{\dg\dr}$ were degenerate, say $(a, b)\teh
        = [a^3, a^2b, ab^2, b^3]$, then $\iny{\mathbf{G}}$ would be $(ya, a, b)\teh$ 
        and thus also degenerate, a contradiction since the given $\mathbf{G}$ is nondegenerate.
        Hence $\iny{\mathbf{G}}\domres{\dg\dr}$ is nondegenerate, and it must be
        $[a, 0, 0, b]$ for some $a, b \ne 0$,
        giving $\iny{\mathbf{G}} = a (y, 1, 0)\teh + b \mathbf{e}_3\teh $, 
        which is \sepbg and is degenerate on $\db\dg$, namely 
        $\iny{\mathbf{G}}\domres{\db\dg} = a (y, 1)\teh$. 
        Then, $\mathbf{G} = \Y \iny{\mathbf{G}}$ is \sepbg and degenerate on $\db\dg$ as well, since $\Y$ is  \sepbg.
        Then, $\holts(\mathbf{F}, \mathbf{G}) =_T 
        \holbs( \mathbf{F}\domres{\db\dg}, \mathbf{G}\domres{\db\dg})
        =_T \holbs( \mathbf{F}\domres{\db\dg})$, which is P-time computable.

        If $\langle \mathbf{u}, \mathbf{e}_3 \rangle \ne 0$, then we can write $\mathbf{u} = (x, y, -1)$. 
        \cref{prop:annihilator-uniquely-propagates-values} applies; 
        since $\iny{\mathbf{G}}$ is nondegenerate, so 
        is $\iny{\mathbf{G}}\domres{\db\dg}$, and thus it takes the form
        $[a, b, 0, 0]$, with $b \ne 0$. 
         Such a $\iny{\mathbf{G}}$ satisfying $\langle \iny{\mathbf{G}},  (x, y, -1) \rangle = 0$ 
        must have the form in \cref{fig:casedz-m-on-bg} with $b \ne 0$.
        Again, $\iny{\mathbf{G}}\domres{\dg\dr} = [0, 0, bx^2, ax^3 + 3bx^2 y]$ must be degenerate or $[*, 0, 0, *]$, which can  only be true if $bx^2 = 0$ so $x = 0$, 
        implying that $\iny{\mathbf{G}} \in \casecsupp$, seen by its shape in~\cref{fig:casedz-m-on-bg}.

        Now, we consider the cases where $\ann(\iny{\mathbf{G}}) = 0$, equivalently $\ann(\mathbf{G}) = 0$. 
        Then, by \cref{prop:nonzero-annihilator-types},
        $\mathbf{G}$ must be of type \caseathree, \casebthree, \casecosep, \casecodz, or \casecogen. 
        
        Suppose $\mathbf{G}$ is \casecosep.
        Then, there exists some $\iny{T} \in \so(3, \mathbb{C}, \X)$ such that $\iny{T}^{-1} \Yinv \mathbf{G} \sim x \mathbf{e}_1\teh + \sym(\mathbf{e}_1, \mathbf{e}_1, \mathbf{e}_2) + w \mathbf{e}_3\teh$ for some $x, w$ such that $w \ne 0$.
        As $\iny{\mathbf{F}} = \Yinv \mathbf{F}$,
        by \cref{lem:gadget-set-general-form} and \cref{subsec:gc-case-c1-sep}, 
        $\holts(\mathbf{F}, \mathbf{G}) \ge_T \holts(\X \mid \iny{\mathbf{F}}, V)$ where $V = \iny{T} \agx(\casecosepmath)$.
        By the (complex) spin representation of $\speciallinear(2, \mathbb{C})$ parameterizing $\so(3, \mathbb{C}, X)$ (\cref{eq:paramaterize-so3cx}), 
        we have $\iny{T} = \rho(M)$ for some $M = \begin{bsmallmatrix}
              a & b \\ c & d 
          \end{bsmallmatrix} \in \speciallinear(2, \mathbb{C})$ and 
        \[
          V = \rho\left(M\right) \spn(\mathbf{e}_1\teh, \sym(\mathbf{e}_1, \mathbf{e}_1, \mathbf{e}_2), 
          \mathbf{e}_3\teh) = \spn(\mathbf{v}_1\teh, \sym(\mathbf{v}_1, \mathbf{v}_1, \mathbf{v}_2), \mathbf{v}_3\teh)
          )
        \]
          for
          $\mathbf{v}_1 = (d^2, -c^2, \sqrt{2}cd)$,
          $\mathbf{v}_2 = (-b^2, a^2, -\sqrt{2}ab)$, and
          $\mathbf{v}_3 = (\sqrt{2} bd, -\sqrt{2} ac, ad + bc)$. 
          By \cref{subsec:gc-case-c1-sep}, 
          we have
          $\holts(\mathbf{F}, \mathbf{G}) \ge_T \holbs([0, 0, 1, 0] \mid V\domres{\db\dg})$. 
          We can then apply \cref{prop:subspace-m-bg} on $\spn(\mathbf{v}_1\teh, \sym(\mathbf{v}_1, \mathbf{v}_1, \mathbf{v}_2))\domres{\db\dg}$, which gives $c = 0$ or $(d^2, -c^2) \sim (-b^2, a^2)$.
          If $c = 0$, then we have $\mathbf{G} \in \casecsupp$ by \cref{prop:casecsupp-fixed-under-rho-c-0}. 
          Otherwise, we can apply it again on $\spn(\mathbf{v}_1\teh, \mathbf{v}_3\teh)\domres{\db\dg}$, but this implies the span of $\{\proj_{12} \mathbf{v}_1, \proj_{12} \mathbf{v}_2, \proj_{12} \mathbf{v}_3\}$ has dimension $1$, which is impossible because the span of $\{\mathbf{v}_1, \mathbf{v}_2, \mathbf{v}_3\}$ has dimension $3$.
        
        We get \sphness for $\mathbf{G}$ in \caseathree and \casebthree from \cref{prop:two-ternary-dichotomy-a-3} and \cref{prop:two-ternary-dichotomy-b-3}.
        For $\mathbf{G}$ in \casecodz and \casecogen, we can apply \cref{prop:c1bg-vs-c1codz-c1gen} since
        $\holts(\mathbf{F}, \mathbf{G}) \ge_T \holts(\sym(\bzp, \bzp, \bzm), \mathbf{G})$.

        We derive the specific forms for $\mathbf{G}$ in the conclusion. 
Suppose $\mathbf{G}$ is \sepbg and the Boolean domain  problem  $\holbs(\mathbf{F}\domres{\db\dg}, \mathbf{G}\domres{\db\dg})$ is not \sph.
        If $\mathbf{G}$ is \sepbg, then 
        it must be of the form
        $\mathbf{G}_0 + c\mathbf{e}_3\teh$ for some symmetric ternary $\mathbf{G}_0$ supported on $\db\dg$ and $c \in \mathbb{C}$. 
        Since $\holbs(\mathbf{F}\domres{\db\dg}, \mathbf{G}\domres{\db\dg}) =_T \holbs(\sym((1, \i), (1, \i), (1, -\i)), \mathbf{G}_0\domres{\db\dg})$ must be tractable, by the Boolean domain dichotomy, 
        $\mathbf{G}_0\domres{\db\dg}$ must be either degenerate, which means $\mathbf{G}_0\domres{\db\dg} = (a, b)\teh$, or $a(1, \i)\teh + b\sym((1, \i), (1, \i), (1, -\i))$ for some $a, b \in \mathbb{C}$. 
        Now, further assume that $\Yinv \mathbf{G} \notin \casecsupp$.
In the second case
$\Yinv \mathbf{G} = a\mathbf{e}_1\teh + b \sym(\mathbf{e}_1, \mathbf{e}_1, \mathbf{e}_2) + c \mathbf{e}_3\teh$, but this is in $\casecsupp$.
So, $\Yinv \mathbf{G} = \Yinv (a, b, 0)\teh + c \Yinv \mathbf{e}_3\teh = (a', b', 0)\teh + c \mathbf{e}_3\teh$ where $(a', b', 0) = \Yinv (a, b, 0)$, as $\Y$ is \sepbg.
If $c = 0$, then $\mathbf{G}$ is degenerate, so $c \ne 0$.
If $b' = 0$, then $\Yinv \mathbf{G} \in \casecsupp$, so $b' \ne 0$.
      \end{proof}

      \subsection{\texorpdfstring{\casecodz}{C1dz}}
        Before we prove a dichotomy anchoring on a \casecodz signature,
        we prove a result that allows us to relate previously proven results about \casec signatures. 
        \begin{proposition}\label{prop:in-C-after-transform-then-before-transform}
            Let $\mathbf{F}$ be a nondegenerate symmetric ternary signature such that $\Yinv \mathbf{F} \in \casecsupp$.
            Let $\mathbf{G}$ be a nondegenerate symmetric ternary signature such that for some 
            $T \in \so(3, \mathbb{C})$,
            $\Yinv T \mathbf{G} \in \casecsupp$.
            If $\Yinv T \mathbf{F} \in \casecsupp$, then $\Yinv \mathbf{G} \in \casecsupp$.
        \end{proposition}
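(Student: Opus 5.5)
The plan is to pass to the generalized orthogonal group $\so(3,\mathbb{C},\X)$ and use the stabilizer result \cref{prop:casecsupp-fixed-under-rho-c-0}. Set $\iny{T} = \Yinv T \Y \in \so(3, \mathbb{C}, \X)$ and $\iny{\mathbf{F}} = \Yinv \mathbf{F}$, $\iny{\mathbf{G}} = \Yinv \mathbf{G}$. Then $\Yinv T \mathbf{F} = \iny{T}\iny{\mathbf{F}}$ and $\Yinv T \mathbf{G} = \iny{T}\iny{\mathbf{G}}$. By surjectivity of $\rho$, write $\iny{T} = \rho(M)$ for some $M = \begin{bsmallmatrix} a & b \\ c & d\end{bsmallmatrix} \in \speciallinear(2,\mathbb{C})$.

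The first step uses the hypotheses on $\mathbf{F}$. We have $\iny{\mathbf{F}} \in \casecsupp$, and $\iny{\mathbf{F}}$ is nondegenerate of arity $3$, since $\Yinv$ is invertible and $\mathbf{F}$ is nondegenerate. We also have $\rho(M)\iny{\mathbf{F}} \in \casecsupp$. Part 2 of \cref{prop:casecsupp-fixed-under-rho-c-0} then forces $c = 0$, so $\rho(M) \in \stab(\casecsupp)$. Since $\stab(\casecsupp)$ is a group, namely the image of the upper triangular subgroup of $\speciallinear(2,\mathbb{C})$, we get that $M^{-1}$ is also upper triangular. Hence $\iny{T}^{-1} = \rho(M^{-1})$ also has lower-left parameter $0$.

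The second step transfers membership back to $\mathbf{G}$. We know $\iny{T}\iny{\mathbf{G}} = \Yinv T\mathbf{G} \in \casecsupp$. Applying part 1 of \cref{prop:casecsupp-fixed-under-rho-c-0} to $\rho(M^{-1})$ gives
\[
\iny{\mathbf{G}} = \rho(M^{-1})\,(\iny{T}\iny{\mathbf{G}}) \in \rho(M^{-1})\casecsupp \subseteq \casecsupp,
\]
that is, $\Yinv \mathbf{G} \in \casecsupp$.

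The only point needing care is the identification of $\iny{T}$ with an element of the stabilizer and the closure of that stabilizer under inverses. The closure is immediate from the explicit form of $\rho(M^{-1})$ computed in the proof of \cref{prop:casecsupp-fixed-under-rho-c-0}. Beyond that, nondegeneracy of $\iny{\mathbf{F}}$ is all that is needed to invoke part 2, and I expect no real obstacle.
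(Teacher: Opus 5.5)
Your proposal is correct and follows essentially the same route as the paper: write $\iny{T} = \Yinv T \Y = \rho(M)$, use part 2 of \cref{prop:casecsupp-fixed-under-rho-c-0} with $\Yinv\mathbf{F}$ to force $c = 0$, and then apply part 1 to $\iny{T}^{-1} = \rho(M^{-1})$, whose lower-left entry is also $0$, to conclude $\Yinv \mathbf{G} = \iny{T}^{-1}\,\Yinv T\mathbf{G} \in \casecsupp$. As in the paper, the nondegeneracy of $\mathbf{G}$ is never actually used.
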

        \begin{proof}
            Let $\iny{T} = \Yinv T \Y \in \so(3, \mathbb{C}, \X)$.
            Then, $\Yinv T \mathbf{F} = \iny{T} \Yinv \mathbf{F}$.
            By the (complex) spin representation of $\speciallinear(2, \mathbb{C})$ parameterizing $\so(3, \mathbb{C}, X)$ (\cref{eq:paramaterize-so3cx}), 
            we may write $\iny{T} = \rho(M)$ for $M = \begin{bsmallmatrix}
                a & b \\ c & d
            \end{bsmallmatrix}$ for some $M \in \speciallinear(2, \mathbb(C))$.
            By \cref{prop:casecsupp-fixed-under-rho-c-0}, $c = 0$ since $\Yinv \mathbf{F} \in \casecsupp$ and $\iny{T} \Yinv \mathbf{F} \in \casecsupp$.
            Then, since $\iny{T}^{-1} = \rho(M^{-1}) = \rho\left(\begin{bsmallmatrix}
                d & -b \\ 0 & a
            \end{bsmallmatrix}\right)$,
            by \cref{prop:casecsupp-fixed-under-rho-c-0}, 
            $\iny{T}^{-1} \Yinv T \mathbf{G} = \Yinv T^{-1} \Y \Yinv T \mathbf{G} = \Yinv \mathbf{G} \in \casecsupp$.
        \end{proof}
      
      \label{subsec:two-ternary-dichotomy-C-1-dz}
      \begin{proposition}
        \label{prop:two-ternary-dichotomy-C-1-dz}
        Let $\mathbf{F}$ be a \casecodz signature such that $\Yinv \mathbf{F} = [p; 1, q; 0, 0, r; 0, 0, 0, 0]$ for some $p, q, r \in \mathbb{C}$ and $r \ne 0$.
        Let $\mathbf{G}$ be a nondegenerate symmetric ternary signature.
        Then, $\holts(\mathbf{F}, \mathbf{G})$ is \sph unless $\Yinv \mathbf{G} \in \casecsupp$.
      \end{proposition}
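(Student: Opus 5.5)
We go through the possible tractable types of $\mathbf{G}$ one at a time. Assume $\holts(\mathbf{F}, \mathbf{G})$ is not \sph. Then $\mathbf{G}$ has one of the nine types of \cref{def:types-ternary}, and most of these have already been settled by earlier results anchored on the other signature.

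\textbf{Types already handled by an earlier proposition.}
\begin{itemize}
\item If $\mathbf{G}$ is \caseathree, \casebthree or \casecosep, we apply \cref{prop:two-ternary-dichotomy-a-3}, \cref{prop:two-ternary-dichotomy-b-3} and \cref{prop:two-ternary-dichotomy-C-1-sep} after an orthogonal transform putting $\mathbf{G}$ in defining form. This forces $T\mathbf{F}$ to lie in $\mathcal{E}$, or in $\sepbgmath$, or to satisfy $\Yinv T \mathbf{F} \in \casecsupp$.
  \begin{itemize}
  \item A signature of type \casecodz cannot lie in $\mathcal{E}$ or be $ij|k$-degenerate; by \cref{lem:types-distinct} these would force it to be \casea. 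The $\sepbgmath$ forms listed there are also excluded, since those forms are \caseb or \caseatwo-like.
  \item So we are left with $\Yinv T\mathbf{F}\in\casecsupp$ and $\Yinv T \mathbf{G} \in \casecsupp$ (for \casecosep, $\mathbf{G}$ is in defining form). \cref{prop:in-C-after-transform-then-before-transform} then gives $\Yinv \mathbf{G}\in\casecsupp$.
  \end{itemize}
\item If $\mathbf{G}$ is \caseatwo or \casebtwo, then \cref{prop:a2-hard-with-casecodz-casecosep} and \cref{prop:b2-hard-with-casecodz-casecosep} (with the roles swapped and $\mathbf{G}$ put in defining form) give \sphness.
\item If $\mathbf{G}$ is \casecobg or \casecz, we put $\mathbf{G}$ in defining form via some $T$ and apply \cref{prop:c1bg-vs-c1codz-c1gen} or \cref{prop:c0-vs-c1dz} to $T\mathbf{F}$. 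This yields $\Yinv T\mathbf{F}\in\casecsupp$. For \casecz the defining form has $\ann = \spn(\bzp)$, so $\Yinv T\mathbf{G}$ is supported on $\db\dr$ and hence lies in $\casecsupp$. Again \cref{prop:in-C-after-transform-then-before-transform} concludes.
\end{itemize}

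\textbf{The genuinely new cases: $\mathbf{G}$ of type \casecodz or \casecogen.} We work in $\holts(\X\mid\Yinv\mathbf{F},\Yinv\mathbf{G})$. By \cref{subsec:case-c-codz-analysis}, $\Yinv\mathbf{F}$ realizes $\sym(\mathbf{e}_1,\mathbf{e}_1,\mathbf{e}_3)$ on the right. Moving it to the left with $\X$ gives $\sym(\mathbf{e}_2,\mathbf{e}_2,\mathbf{e}_3)$, i.e.\ $[0,1,0,0]$ on $\dg\dr$. This yields the shell $M_{\dr\dg}$ for everything on the right.

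By \cref{lem:gadget-set-general-form}, the right side also contains $V=\rho(M)\agx(\mathfrak{G})$ with $M=\begin{bsmallmatrix}a&b\\c&d\end{bsmallmatrix}$. We restrict to $\dg\dr$ and apply \cref{prop:subspace-m-bg} with the roles of $\db$ and $\dr$ suitably relabelled. The restricted vectors are
\[
\mathbf{u}_1=(-c^2,\sqrt2cd),\quad \mathbf{u}_2=(a^2,-\sqrt2ab),\quad \mathbf{u}_3=(-\sqrt2ac,ad+bc),
\]
as in \cref{prop:c0-vs-c1dz}.

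If $c=0$, \cref{prop:casecsupp-fixed-under-rho-c-0} gives $\Yinv\mathbf{G}\in\casecsupp$. If $c\neq0$, then $\mathbf{u}_1\not\sim\mathbf{u}_3$, and the shell's special line $[0,0,*,*]$ corresponds to vectors $\sim(0,1)$.
\begin{itemize}
\item For \casecodz, the subspace $\spn(\mathbf{u}_1\teh,\sym(\mathbf{u}_1,\mathbf{u}_1,\mathbf{u}_3))$ has dimension $2$. Membership in the shell forces $\mathbf{u}_1\sim(0,1)$, i.e.\ $c=0$, a contradiction.
\item For \casecogen, the family $\spn(\mathbf{u}_1\teh,(x\mathbf{u}_1+y\mathbf{u}_3)\teh)$ contains pairs of non-proportional vectors. \cref{prop:subspace-m-bg} forbids these, giving \sphness.
\end{itemize}

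The main obstacle is getting the shell orientation right. The $\dg\dr$ restriction of $\sym(\mathbf{e}_2,\mathbf{e}_2,\mathbf{e}_3)$ is $[0,1,0,0]$ in $\dg\to0,\dr\to1$ coordinates. We must check that the constraint really is $M_{\dr\dg}$ with distinguished vector $(1,0)$ in those coordinates. The $\dg$-coordinate of $\mathbf{u}_1$ is $-c^2$, so the distinguished direction being $\mathbf{e}$-aligned should force $c=0$; if instead it corresponds to $(0,1)$, we use $\sqrt2cd$ together with $\det=1$. In either orientation, a nonzero $c$ should lead to a contradiction or to \sphness.
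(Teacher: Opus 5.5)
Your proposal is correct and follows the paper's proof. The case split is the same, as are the appeals to the earlier pairwise propositions combined with \cref{prop:in-C-after-transform-then-before-transform}, and for $\mathbf{G}$ of type \casecodz you give the same frame-in-shell argument on $\dg\dr$ with $\mathbf{u}_1=(-c^2,\sqrt{2}cd)$. The one difference is the \casecogen case. You rerun the shell argument on $\spn(\mathbf{u}_1\teh,\mathbf{u}_3\teh)$, whereas the paper just notes that $T\ag(\casecogenmath)$ contains the \casecobg signature $T\sym(\bzp,\bzp,\bzm)$ and falls back on the \casecobg case; both routes work.

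Drop your closing hedge about orientation, because it is partly wrong. Since the left side carries $[0,1,0,0]=\sym(\sbto,\sbto,\sbtt)$, \cref{cor:bipartite-boolean-dichotomy-explicit} confines the right side to $\spn(\sbtt\teh,\sym(\sbtt,\sbtt,\sbto))$. So the distinguished vector is $(0,1)$, exactly as in your main argument. With the other orientation the \casecodz case would not close (take $d=0\ne c$), so ``in either orientation'' is false.
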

      \begin{proof}
      Assume  $\holts(\mathbf{F}, \mathbf{G})$ is not \sph.
      We have already shown that we get \sphness when a \casecodz signature 
      appears together with another signature of type \caseatwo, \caseathree, \casebtwo, or \casebthree.
      Therefore, we only need to analyze the case where $\mathbf{G}$ is \casecz, \casecobg, \casecosep, \casecodz or \casecogen. 
      Let $\iny{\mathbf{F}} = \Yinv \mathbf{F}$ and $\iny{\mathbf{G}} = \Yinv \mathbf{G}$.

Suppose $\mathbf{G}$ is \casecz.
Then, there exists some $T \in \so(3, \mathbb{C})$ such that 
$\ann(T \mathbf{G}) = \spn(\bzp)$.
Applying \cref{prop:c0-vs-c1dz} on $\holts(T \mathbf{G}, T \mathbf{F})$ 
(note we use $T \mathbf{G}$ for the $\mathbf{F}$ and $T \mathbf{F}$ for the $\mathbf{G}$ in the proposition statement)
we get $\Yinv T \mathbf{F} \in \casecsupp$. 
By \cref{prop:in-C-after-transform-then-before-transform}, 
$\Yinv \mathbf{G} \in \casecsupp$.

Suppose $\mathbf{G}$ is \casecobg or \casecosep.
Then, there exists some $T \in \so(3, \mathbb{C})$ such that 
$T \mathbf{G} = x \bzp \teh + y\sym(\bzp, \bzp, \bzm) + z \mathbf{e}_3\teh$ for some $y \ne 0$.
Applying \cref{prop:c1bg-vs-c1codz-c1gen} or \cref{prop:two-ternary-dichotomy-C-1-sep}
on $\holts(T \mathbf{G}, T \mathbf{F})$,
(note we use $T \mathbf{G}$ for the $\mathbf{F}$ and $T \mathbf{F}$ for the $\mathbf{G}$ in the proposition statement)
we get $\Yinv T \mathbf{F} \in \casecsupp$. 
By \cref{prop:in-C-after-transform-then-before-transform}, 
$\Yinv \mathbf{G} \in \casecsupp$ as well.

Suppose $\mathbf{G}$ is \casecogen.
Then, there exists some $T \in \so(3, \mathbb{C})$ such that 
the signature
$\Yinv T^{-1} \mathbf{G} \sim [x; 1, y; 0, 0, z; 0, 0, 0, w] \in \casecsupp$ for some $x, y, z, w$ such that $w, z^2 - z - y w \ne 0$. 
 By \cref{lem:gadget-set-general-form} and \cref{subsec:gc-case-c1gen}, 
$\holts(\mathbf{F}, \mathbf{G}) \ge_T \holts(\mathbf{F}, T \ag(\casecogenmath)) \ge_T \holts(\mathbf{F}, T \sym(\bzp, \bzp, \bzm))$.
Note that  $T \sym(\bzp, \bzp, \bzm)$ is \casecobg.
So by the above, $\Yinv \mathbf{G} \in \casecsupp$ if  $\holts(\mathbf{F}, \mathbf{G})$ is not \sph.

        Suppose $\mathbf{G}$ is \casecodz. Then, there exists some $\iny{T} \in \so(3, \mathbb{C}, \X)$ such that the signature $\iny{T}^{-1} \iny{\mathbf{G}} \sim [x; 1, y; 0, 0, z; 0, 0, 0, 0] \in \casecsupp$ for some $x, y, z$ such that $z \ne 0$.
         By \cref{lem:gadget-set-general-form} and \cref{subsec:gc-case-c1dz}, we have
        $\holts(\mathbf{F}, \mathbf{G}) \ge_T \holts(\X \mid \iny{\mathbf{F}}, V)$ where $V = \iny{T} \agx(\casecodzmath, \lambda)$ for some $\lambda \ne 0$.
        By the (complex) spin representation of $\speciallinear(2, \mathbb{C})$ parameterizing $\so(3, \mathbb{C}, X)$ (\cref{eq:paramaterize-so3cx}), 
        we can write $\iny{T} = \rho(M)$ and thus
        \[
          V = \rho(M) \spn(\mathbf{e}_1\teh, ~ \sym(\mathbf{e}_1, \mathbf{e}_1, \mathbf{e}_3), ~\sym(\mathbf{e}_1, \mathbf{e}_1, \mathbf{e}_2) + \lambda \sym(\mathbf{e}_1,  \mathbf{e}_3, \mathbf{e}_3))
        \]
        for some $M = \begin{bsmallmatrix}
            a & b \\ c & d
        \end{bsmallmatrix} \in \speciallinear(2, \mathbb{C})$. 
        Recall from \cref{subsec:gc-case-c1dz} that
        $\holts(\mathbf{F}, \mathbf{G}) \ge_T \holts(\X \mid \iny{\mathbf{F}}, V) \ge_T \holbs([0, 1, 0, 0] \mid V\domres{\dg\dr})$.
        Writing $\mathbf{v}_i = \rho(M) \mathbf{e}_i$, we have
        $\mathbf{v}_1 = (d^2, -c^2, \sqrt{2} c d)$,
        $\mathbf{v}_2 = (-b^2, a^2, - \sqrt{2} a b)$,
        $\mathbf{v}_3 = (bd, -ac, ad + bc)$.
        Let $\mathbf{u}_i = \proj_{23} \mathbf{v}_i$. 
        We can apply \cref{prop:subspace-m-bg} on $\spn(\mathbf{v}_1\teh, \sym(\mathbf{v}_1, \mathbf{v}_1, \mathbf{v}_3))\domres{\dg\dr} = 
        \spn(\mathbf{u}_1\teh, \sym(\mathbf{u}_1, \mathbf{u}_1, \mathbf{u}_3))$
        (note that we apply it with the role of the Boolean two domain elements, or $\sbto$ and $\sbtt$, changed).
        So, $\mathbf{u}_1 \sim \mathbf{u}_3$ or $\mathbf{u}_1 \sim \sbtt$. 
        If $c = 0$, then $\iny{\mathbf{G}} \in \casecsupp$ by \cref{prop:casecsupp-fixed-under-rho-c-0}, so we may assume $c \ne 0$, which implies $\mathbf{u}_1 \not \sim \sbtt$.
        Suppose $\mathbf{u}_1 \sim \mathbf{u}_3$.
        Then, 
        \[
            V\domres{\dg\dr} = \spn((\mathbf{u}_1)\teh, \sym(\mathbf{u}_1,\mathbf{u}_1, \mathbf{u}_2)) \,.
        \]
        Applying \cref{prop:subspace-m-bg} again, we obtain $\mathbf{u}_1 \sim \sbtt$ or $\mathbf{u}_1 \sim \mathbf{u}_2$.
        Both cases are not possible. 

      \end{proof}

      \subsection{\texorpdfstring{\casecogen}{C1gen}}
      \label{subsec:two-ternary-dichotomy-C-1-gen}
      \begin{proposition}
        \label{prop:two-ternary-dichotomy-C-1-gen}
        Let $\mathbf{F}$ be a \casecogen signature
        such that $\Yinv \mathbf{F} = [a;1,b;0,0,c;0,0,0,d]$ with $d \ne 0$ and $c^2 - c - bd \ne 0$.
        Let $\mathbf{G}$ be a nondegenerate symmetric ternary signature.
        Then, $\holts(\mathbf{F}, \mathbf{G})$ is \sph unless $\Yinv \mathbf{G} \in \casecsupp$.
      \end{proposition}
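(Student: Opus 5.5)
Assume $\holts(\mathbf{F}, \mathbf{G})$ is not \sph. The plan follows the proof of \cref{prop:two-ternary-dichotomy-C-1-dz}: first dispose of every type of $\mathbf{G}$ already handled by earlier propositions, then treat the remaining \casecogen case with frames and shells. Write $\iny{\mathbf{F}} = \Yinv \mathbf{F}$ and $\iny{\mathbf{G}} = \Yinv \mathbf{G}$.

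\textbf{Types already covered.} If $\mathbf{G}$ is \caseatwo, \casebtwo, \caseathree or \casebthree, then \sphness follows from \cref{prop:a2-hard-with-casecodz-casecosep,prop:b2-hard-with-casecodz-casecosep,prop:two-ternary-dichotomy-a-3,prop:two-ternary-dichotomy-b-3}, each with the roles of $\mathbf{F}$ and $\mathbf{G}$ exchanged after an orthogonal transform. If $\mathbf{G}$ is \casecz, \casecobg or \casecosep, pick $T \in \so(3, \mathbb{C})$ that puts $T\mathbf{G}$ into its anchored form. Then \cref{prop:c0-vs-c1dz}, \cref{prop:c1bg-vs-c1codz-c1gen} or \cref{prop:two-ternary-dichotomy-C-1-sep}, applied to $\holts(T\mathbf{G}, T\mathbf{F})$, gives $\Yinv T \mathbf{F} \in \casecsupp$. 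Since $\Yinv \mathbf{F} \in \casecsupp$ as well, \cref{prop:in-C-after-transform-then-before-transform} yields $\Yinv \mathbf{G} \in \casecsupp$. If $\mathbf{G}$ is \casecodz, apply \cref{prop:two-ternary-dichotomy-C-1-dz} with $\mathbf{G}$ as the anchor; it gives $\Yinv T\mathbf{F}\in\casecsupp$, and \cref{prop:in-C-after-transform-then-before-transform} again gives $\Yinv \mathbf{G} \in \casecsupp$.

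\textbf{Remaining case: $\mathbf{G}$ of type \casecogen.} By \cref{lem:gadget-set-general-form} and \cref{subsec:gc-case-c1gen}, $\mathbf{F}$ can realize its own frame. In particular, $\sym(\mathbf{e}_1,\mathbf{e}_1,\mathbf{e}_2)$ is realizable on the right side of $\holts(\X \mid \iny{\mathbf{F}}, \iny{\mathbf{G}})$, and its $\X$-image $\sym(\mathbf{e}_2,\mathbf{e}_2,\mathbf{e}_1)$ on the left restricts to $[0,0,1,0]$ on $\db\dg$. On the other side, $\mathbf{G}$ realizes $V = \rho(M)\agx(\casecogenmath)$ for some $M = \begin{bsmallmatrix} a&b\\c&d\end{bsmallmatrix} \in \speciallinear(2,\mathbb{C})$. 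This gives the chain $\holts(\mathbf{F},\mathbf{G}) \ge_T \holbs([0,0,1,0] \mid V\domres{\db\dg})$. Set $\mathbf{u}_1=(d^2,-c^2)$, $\mathbf{u}_2=(-b^2,a^2)$ and $\mathbf{u}_3=(bd,-ac)$. Applying \cref{prop:subspace-m-bg} to $\spn(\mathbf{u}_1\teh, \sym(\mathbf{u}_1,\mathbf{u}_1,\mathbf{u}_2))$ gives $\mathbf{u}_1\sim\sbto$ or $\mathbf{u}_1\sim\mathbf{u}_2$.
\begin{itemize}
\item If $\mathbf{u}_1 \sim \sbto$, then $c=0$, and \cref{prop:casecsupp-fixed-under-rho-c-0} gives $\iny{\mathbf{G}}\in\casecsupp$.
\item If $\mathbf{u}_1\sim\mathbf{u}_2$ with $c \ne 0$, then $\mathbf{u}_3\not\sim\mathbf{u}_1$, because $\rho(M)$ is invertible. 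Applying \cref{prop:subspace-m-bg} to $\spn(\mathbf{u}_1\teh,(p\mathbf{u}_1+q\mathbf{u}_3)\teh)$ with $p,q\ne0$ requires $\mathbf{u}_1 \sim p\mathbf{u}_1+q\mathbf{u}_3$, which is false. This is exactly the final step of \cref{prop:c1bg-vs-c1codz-c1gen}.
\end{itemize}

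\textbf{Shortcut and main obstacle.} An alternative is that $\agx(\casecogenmath)$ contains the \casecobg signature $\sym(\mathbf{e}_1,\mathbf{e}_1,\mathbf{e}_2)$. One could therefore reduce to $\holts(\mathbf{F}, T\sym(\bzp,\bzp,\bzm))$ and invoke \cref{prop:c1bg-vs-c1codz-c1gen} with the \casecobg signature as anchor and $\mathbf{F}$ as the \casecogen partner. That yields $\Yinv T\mathbf{F}\in\casecsupp$, and \cref{prop:in-C-after-transform-then-before-transform} finishes. I would use this shortcut for the \casecogen case as well, because it avoids recomputing. The main obstacle is bookkeeping: at each invocation the transform $T$ must be chosen so that the anchor signature is exactly in the defining form those propositions require, and one must check that the frame containment survives the rescalings used there.
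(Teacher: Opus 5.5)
Your proposal is correct and follows the paper's proof. The \caseatwo, \casebtwo, \caseathree and \casebthree cases are dismissed by the earlier hardness results, the \casecz, \casecobg, \casecosep and \casecodz cases go through the pairwise propositions anchored on $T\mathbf{G}$ followed by \cref{prop:in-C-after-transform-then-before-transform}, and the \casecogen case is handled exactly by your ``shortcut'' of passing to the \casecobg signature $T\sym(\bzp,\bzp,\bzm)$ in the frame of $\mathbf{G}$ (your direct frame--shell computation for that case is also valid, being the same calculation as the final step of \cref{prop:c1bg-vs-c1codz-c1gen}).
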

      \begin{proof}
       We have already shown that we get \sphness when a \casecogen signature 
       appears together with another signature of type \caseatwo, \caseathree, \casebtwo, or \casebthree signature.
      Therefore, we only need to analyze the case where $\mathbf{G}$ is \casecz, \casecobg, \casecosep, \casecodz or \casecogen. 
    Let $\iny{\mathbf{F}} = \Yinv \mathbf{F}$ and $\iny{\mathbf{G}} = \Yinv \mathbf{G}$.

      The fact that if $\mathbf{G}$ is \casecz, \casecobg, \casecosep, or \casecodz then $\iny{\mathbf{G}}$ must be in $\casecsupp$ can be shown in a similar way as in \cref{prop:two-ternary-dichotomy-C-1-dz} by applying 
      \cref{prop:in-C-after-transform-then-before-transform} after using \cref{prop:c0-vs-c1dz,prop:c1bg-vs-c1codz-c1gen,prop:two-ternary-dichotomy-C-1-sep,prop:two-ternary-dichotomy-C-1-dz}.

      Suppose $\mathbf{G}$ is \casecogen.
Then, there exists some $T \in \so(3, \mathbb{C})$ such that 
the signature
$\Yinv T^{-1} \mathbf{G} \sim [x; 1, y; 0, 0, z; 0, 0, 0, w] \in \casecsupp$ for some $x, y, z, w$ such that $w, z^2 - z - y w \ne 0$. 
 By \cref{lem:gadget-set-general-form} and \cref{subsec:gc-case-c1gen}, 
 $\holts(\mathbf{F}, \mathbf{G}) \ge_T \holts(\mathbf{F}, T \ag(\casecogenmath)) \ge_T \holts(\mathbf{F}, T \sym(\bzp, \bzp, \bzm))$.
Note that  $T \sym(\bzp, \bzp, \bzm)$ is \casecobg.
So by the above, $\Yinv \mathbf{G} \in \casecsupp$
if  $\holts(\mathbf{F}, \mathbf{G})$ is not \sph.

      \end{proof}

\section{Dichotomy of a Set of Ternary Signatures}
\label{sec:set-ternary}

In this section, we prove a $\holts(\mathcal{F})$ dichotomy when $\mathcal{F}$ is a set of nondegenerate symmetric \emph{ternary} signatures.
We may assume that for every $\mathbf{F} \in \mathcal{F}$, 
$\holts(\mathbf{F})$ is tractable.
Further, we may assume that $\holts(\mathbf{F}, \mathbf{G})$ is tractable for every $\mathbf{F}, \mathbf{G} \in \mathcal{F}$ by the pairwise dichotomy proved in the previous section. 

We split into two cases depending on whether there is $\mathbf{F} \in \mathcal{F}$ with $\ann(\mathbf{F}) = 0$ or not.
As mentioned before,  surprisingly, we get a simple proof of the dichotomy when every signature in $\mathcal{F}$ has a nontrivial annihilator. 
This is mainly because of the powerful \textit{basis free} algorithm in 
\cref{thm:ternary-set-ebd-algorithm},
which can be applied if every pair of signatures
is a compatible pair (\cref{def:compatible-pairs}).
We prove this result in \cref{subsec:proof-of-not-orthogonal-pair-dichotomy}.

Next,  assume  $\holts(\mathcal{F})$ is not \sph,
and  there is $\mathbf{F} \in \mathcal{F}$ with $\ann(\mathbf{F}) = 0$. We  anchor the proof on $\mathbf{F}$. 
However, unlike 
in the Boolean domain, 
one signature in $\mathcal{F}$ is not sufficient to completely determine the forms of other signatures in $\mathcal{F}$: 
it could happen that $\holts(\mathbf{F}, \mathbf{G})$ and 
$\holts(\mathbf{F}, \mathbf{H})$ are tractable, but $\holts(\mathbf{F}, \mathbf{G}, \mathbf{H})$ is \sph. 
We call this type of result \textit{non-mixing}, and prove some non-mixing results in \cref{subsec:non-mixing-ternary}.
After that, we use the pairwise dichotomy and non-mixing results to arrive at a dichotomy for a set of signatures in \cref{subsec:set-of-ternary}.

\subsection{Ternary Signatures with Annihilators}
\label{subsec:proof-of-not-orthogonal-pair-dichotomy}
A difficult part of proving a dichotomy for a set  $\mathcal{F}$ 
of ternary signatures 
 is when $\mathcal{F}$  consists of only \caseatwo, \casebtwo, \casecz, and \casecobg. 
This is because \caseatwo, \casebtwo, \casecz, and \casecobg can be viewed as Boolean domain signatures living in domain $3$, and a lot of effort was expended in~\cite{domain3-real-set} to analyze such cases.
We recall \cref{prop:orthogonal-normal-vector-edge-factors}.
If $\mathbf{F}_1, \mathbf{F}_2$ with $\ann(\mathbf{F}_i) = \spn(\mathbf{n}_i) \ne 0$ are such that $\mathbf{n}_1$ and $\mathbf{n}_2$ are orthogonal, then $\holts(\mathbf{F}_1, \mathbf{F}_2)$ is always tractable if $\holts(\mathbf{F}_i)$ are, because any edge between $\mathbf{F}_1$ and $\mathbf{F}_2$ factors degenerately.
For the cases when $\langle \mathbf{n}_1, \mathbf{n}_2 \rangle \ne 0$,
it
appears that there is no 
way to holographically transform $\mathbf{F}_i$ to get domain separation, and \cref{subsec:tractable-gadget-meet} shows that the interaction is more complicated when we consider more than two signatures. 
However, in that case, by \cref{prop:annihilator-uniquely-propagates-values}, $\mathbf{F}_1$ and $\mathbf{n}_2$ determine the form of $\mathbf{F}_2$ by the Boolean domain dichotomy. 
Then, the forms of $\mathbf{F}_1, \mathbf{F}_2$ are precisely the ones satisfying 
\cref{itm:compatible-pair-unique-prop} and \cref{itm:compatible-pair-vanish} of \cref{def:compatible-pairs}. 
The notion of compatible pairs allows us to state the dichotomy for signatures with nonzero annihilators very succinctly. 

\begin{lemma}
\label{lem:not-orthogonal-pair-dichotomy-ternary}
  Let $\mathbf{F}_1, \mathbf{F}_2$ be nondegenerate symmetric ternary signatures such that
   $\ann(\mathbf{F}_i) \ne 0$ for $i = 1, 2$.
  Then, $\holts(\mathbf{F}_1, \mathbf{F}_2)$ is \sph if 
  $(\mathbf{F}_1, \mathbf{F}_2)$ is not a compatible pair. 
\end{lemma}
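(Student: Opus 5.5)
We prove the contrapositive: assume $\holts(\mathbf{F}_1, \mathbf{F}_2)$ is not \sph and show that $(\mathbf{F}_1, \mathbf{F}_2)$ is a compatible pair. Both $\holts(\mathbf{F}_i)$ are then not \sph, so by \cref{prop:nonzero-annihilator-types} each $\mathbf{F}_i$ is of type \caseatwo, \casebtwo, \casecobg or \casecz. Write $\ann(\mathbf{F}_i) = \spn(\mathbf{n}_i)$. If $\langle \mathbf{n}_1, \mathbf{n}_2 \rangle = 0$ we are in \cref{itm:compatible-pair-orthogonal-ann} and there is nothing to prove. So assume $\langle \mathbf{n}_1, \mathbf{n}_2 \rangle \ne 0$, and split according to the type of $\mathbf{F}_1$; the roles of $\mathbf{F}_1$ and $\mathbf{F}_2$ may be swapped, so it suffices to treat the cases in the order \caseatwo, \casebtwo, \casecobg, \casecz for the ``smaller'' type.

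If $\mathbf{F}_1$ is \caseatwo, \casebtwo or \casecobg, its annihilator is $T\mathbf{e}_3$ for some orthogonal $T$. We apply $T^{-1}$ to both signatures (by \cref{thm:orthogonal-holant-thm} this changes nothing) and rescale as in \cref{subsec:A2,subsec:two-ternary-dichotomy-b-2,subsec:two-ternary-dichotomy-C-1-bg}. This brings $\mathbf{F}_1$ to $\mathbf{e}_1\teh+\mathbf{e}_2\teh$, $\bzp\teh+\bzm\teh$ or $\sym(\bzp,\bzp,\bzm)$. Under this transformation $\mathbf{n}_2$ becomes a vector $\mathbf{u}$ with $\langle \mathbf{u}, \mathbf{e}_3\rangle \ne 0$, since this is exactly the transformed non-orthogonality condition. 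Then \cref{prop:two-ternary-dichotomy-a-2}, \cref{prop:two-ternary-dichotomy-b-2} and \cref{prop:two-ternary-dichotomy-C-1-bg} give the explicit form of $\mathbf{G}=\mathbf{F}_2$ in each case.

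The forms are checked against \cref{def:compatible-pairs} as follows.
\begin{itemize}
\item For \caseatwo: $\mathbf{F}_1=\mathbf{e}_1\teh+\mathbf{e}_2\teh$ and $\mathbf{F}_2=(a,0,b)\teh+(0,x,y)\teh$, and $\langle \mathbf{e}_1,(0,x,y)\rangle=\langle \mathbf{e}_2,(a,0,b)\rangle=0$. This is \cref{itm:compatible-pair-unique-prop} after reordering the summands.
\item For \casebtwo: $\mathbf{F}_2=x(1,\i,*)\teh+y(1,-\i,*)\teh$. Pairing $\bzp$ with $(1,\i,*)$ and $\bzm$ with $(1,-\i,*)$ gives dot products $0$, again \cref{itm:compatible-pair-unique-prop}. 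Here $x,y\neq0$ by nondegeneracy, so scalars can be absorbed into the vectors.
\item For \casecobg: $\mathbf{F}_1=\sym(\bzp,\bzp,\bzm)$ and $\mathbf{F}_2=\mathbf{v}\teh+\sym(\mathbf{v},\mathbf{v},\mathbf{w})$ with $\langle\bzp,\mathbf{v}\rangle=0$, which is \cref{itm:compatible-pair-vanish} with $\mathbf{u}_1=\bzp$ and $\mathbf{u}_2=\mathbf{v}$.
\end{itemize}
In each case the invariance of dot products under orthogonal $T$ transports the conditions back to the original signatures.

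The remaining case has both $\mathbf{F}_1,\mathbf{F}_2$ of type \casecz. After an orthogonal change of basis $\mathbf{n}_1\sim\bzp$, and \cref{prop:two-ternary-dichotomy-C-0} applies directly. Its alternatives 2 and 3 are literally \cref{itm:compatible-pair-unique-prop,itm:compatible-pair-vanish}, and alternative 1 is excluded by $\langle\mathbf{n}_1,\mathbf{n}_2\rangle\ne0$.

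The main obstacle I expect is bookkeeping rather than new mathematics. First, the transformation used to normalize $\mathbf{F}_1$ must be orthogonal, not merely in $\gl(3,\mathbb{C})$, so that compatibility (a dot-product condition) is preserved. Second, the rescalings of $\mathbf{F}_1$ (for example $a\mathbf{e}_1\teh+b\mathbf{e}_2\teh\mapsto\mathbf{e}_1\teh+\mathbf{e}_2\teh$) are only inter-reductions, not equalities. To handle this, I would instead absorb cube roots of $a,b$ into the vectors, noting that \cref{def:compatible-pairs} only involves orthogonality, which is invariant under scaling. The \casecobg normalization to $a=0$ needs the same care: I would use \cref{prop:uniqueness-sym-form} to argue that the isotropic vector $\mathbf{u}_1\sim\bzp$ is intrinsic, so the condition $\langle\mathbf{u}_1,\mathbf{u}_2\rangle=0$ refers to the original $\mathbf{F}_1$.
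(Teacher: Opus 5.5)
Your proposal is correct and is essentially the paper's proof. The paper also reduces to the four annihilator types via \cref{prop:nonzero-annihilator-types} and then combines \cref{prop:two-ternary-dichotomy-a-2,prop:two-ternary-dichotomy-b-2,prop:two-ternary-dichotomy-C-1-bg,prop:two-ternary-dichotomy-C-0}; you additionally spell out the bookkeeping it leaves implicit (orthogonal normalization, transferring non-hardness through the frame inter-reductions, absorbing scalars, matching each conclusion to an item of \cref{def:compatible-pairs}), though the appeal to \cref{prop:uniqueness-sym-form} is unnecessary since compatibility is an existential condition.
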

\begin{proof}
We may assume that $\mathbf{F}_1$ and $\mathbf{F}_2$ are each of some tractable type, specifically, \caseatwo, \casebtwo, \casecobg, or \casecz by \cref{prop:nonzero-annihilator-types}.
We can prove that $\holts(\mathbf{F}_1, \mathbf{F}_2)$ is \sph if $(\mathbf{F}_1, \mathbf{F}_2)$ is not a compatible pair by combining \cref{prop:two-ternary-dichotomy-a-2,prop:two-ternary-dichotomy-b-2,prop:two-ternary-dichotomy-C-1-bg,prop:two-ternary-dichotomy-C-0}. 
\end{proof}

\begin{corollary}\label{cor:ebd-set-ternary-dichotomy}
    Let $\mathcal{F}$ be a set of nondegenerate symmetric ternary signatures with nonzero annihilators.
If every pair of signatures in $\mathcal{F}$ is a compatible pair, 
then $\holts(\mathcal{F})$ is tractable.
Otherwise, it is \sph.
\end{corollary}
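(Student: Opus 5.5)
The tractability direction is \cref{thm:ternary-set-ebd-algorithm}: if every pair in $\mathcal{F}$ is a compatible pair, the explicit basis-free algorithm solves $\holts(\mathcal{F})$ in polynomial time. Nothing more is needed there, since the hypotheses coincide exactly with those of that theorem.

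For hardness, suppose some pair $(\mathbf{F}_1, \mathbf{F}_2)$ in $\mathcal{F}$ is not a compatible pair. We would argue $\holts(\mathbf{F}_1, \mathbf{F}_2) \le_T \holts(\mathcal{F})$ trivially, since $\{\mathbf{F}_1, \mathbf{F}_2\} \subseteq \mathcal{F}$. Then \cref{lem:not-orthogonal-pair-dichotomy-ternary} gives \sphness of $\holts(\mathbf{F}_1, \mathbf{F}_2)$, hence of $\holts(\mathcal{F})$.

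One subtlety is that ``every pair'' includes pairs $(\mathbf{F}, \mathbf{F})$ with both entries equal. We must check that \cref{lem:not-orthogonal-pair-dichotomy-ternary} covers $\mathbf{F}_1 = \mathbf{F}_2$. In that case $\holts(\mathbf{F}, \mathbf{F}) = \holts(\mathbf{F})$. The lemma's proof goes through the single-signature tractable types, and Propositions \ref{prop:two-ternary-dichotomy-a-2}--\ref{prop:two-ternary-dichotomy-C-0} are stated for arbitrary $\mathbf{G}$, including $\mathbf{G} = \mathbf{F}$, so this case is covered. Alternatively, it is consistent with \cref{thm:single-ternary-dichotomy}: each tractable type with an annihilator (\caseatwo, \casebtwo, \casecobg, \casecz) is self-compatible, as illustrated in \cref{ex:b2-a2-a2-rev}.

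The main obstacle is therefore not in this corollary itself but in the lemma. There, one must confirm that the case analysis across the four annihilator types really yields exactly the three compatibility conditions, for instance by translating the forms produced by \cref{prop:annihilator-uniquely-propagates-values} back through $\Y$ and verifying the required orthogonalities. Assuming that lemma, the corollary follows in a few lines.
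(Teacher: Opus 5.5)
Your proposal is correct and follows the paper's proof: tractability is exactly \cref{thm:ternary-set-ebd-algorithm}, and hardness comes from restricting to a non-compatible pair and applying \cref{lem:not-orthogonal-pair-dichotomy-ternary}. Your extra check of the diagonal case $(\mathbf{F},\mathbf{F})$ is sound, since each of the four annihilator types is self-compatible, so a non-self-compatible $\mathbf{F}$ already makes $\holts(\mathbf{F})$ \sph by \cref{thm:single-ternary-dichotomy}; the paper covers this case implicitly by first reducing every signature to a tractable type via \cref{prop:nonzero-annihilator-types}.
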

\begin{proof}
We may assume that each signature is of some tractable type, specifically, \caseatwo, \casebtwo, \casecobg, or \casecz by \cref{prop:nonzero-annihilator-types}.

If every pair is a compatible pair, then the algorithm given in \cref{thm:ternary-set-ebd-algorithm} applies. 

If there exist $\mathbf{F}_1, \mathbf{F}_2 \in \mathcal{F}$ such that
$(\mathbf{F}_1, \mathbf{F}_2)$ is not a compatible pair, then $\holts(\mathcal{F}) \ge_T \holts(\mathbf{F}_1, \mathbf{F}_2)$ is \sph by \cref{lem:not-orthogonal-pair-dichotomy-ternary}. 
\end{proof}

\subsection{Non-mixing of Ternary Signatures}\label{subsec:non-mixing-ternary}

Unlike the Boolean domain $\hol^*$ dichotomy, the tractability is not transitive;
$\holts(\mathbf{F}, \mathbf{G})$ and $\holts(\mathbf{F}, \mathbf{H})$ being tractable does not imply that $\holts(\mathbf{F}, \mathbf{G}, \mathbf{H})$ is tractable. 
One reason to expect this is that the tractable cases given in 
\cref{thm:tractable-cases}
are not mutually exclusive, which means that one signature is not enough to rule out which algorithm cannot work on $\mathcal{F}$.
For example, a \caseathree signature can occur in 
\cref{itm:tractable-e} and \cref{itm:tractable-bg-r},
and a \casecosep signature can occur in 
\cref{itm:tractable-bg-r} and \cref{itm:tractable-c}.
Rather, we need to consider at least two signatures to completely determine the form of all other signatures in $\mathcal{F}$ for $\holts(\mathcal{F})$ to escape \sphness.

In this section, 
we actually show a stronger result, that
$\holts(\mathbf{G}, \mathbf{H})$ could be \sph even when $\holts(\mathbf{F}, \mathbf{G})$ and $\holts(\mathbf{F}, \mathbf{H})$ are tractable. 
This implies that
$\holts(\mathbf{F}, \mathbf{G}, \mathbf{H})$ is \sph.

\begin{proposition}\label{prop:case-c1-sep-ternary-non-mixing}
    Let $\mathbf{F} = \sym(\bzp, \bzp, \bzm) + \mathbf{e}_3\teh$.
    Let $\mathbf{G}$ and $\mathbf{H}$ be nondegenerate symmetric ternary signatures such that 
    $\holts(\mathbf{F}, \mathbf{G})$ and $\holts(\mathbf{F}, \mathbf{H})$ are both not \sph, and
    \begin{enumerate}
        \item $\mathbf{G}$ is not \sepbg and $\Yinv \mathbf{G} \in \casecsupp$.
        \item $\mathbf{H}$ is \sepbg and $\Yinv \mathbf{H} \notin \casecsupp$.
    \end{enumerate}
    Then 
    $\holts(\mathbf{G}, \mathbf{H})$ is \sph.
\end{proposition}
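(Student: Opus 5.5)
The plan is to reduce everything to the pairwise dichotomies of \cref{sec:pair-dicho}. We split according to whether $\ann(\mathbf{G}) = 0$, and in each case derive a contradiction with hypothesis 2.

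We first pin down $\mathbf{H}$. Since $\holts(\mathbf{F},\mathbf{H})$ is not \sph and $\Yinv\mathbf{H}\notin\casecsupp$, \cref{prop:two-ternary-dichotomy-C-1-sep} gives $\Yinv \mathbf{H} = (a',b',0)\teh + c'\mathbf{e}_3\teh$ with $b',c'\ne 0$. Equivalently, $\mathbf{H}=(a,b,0)\teh+c\,\mathbf{e}_3\teh$ with $c \ne 0$. In particular $\ann(\mathbf{H})=\spn((b,-a,0))\ne 0$, and $\ann(\Yinv\mathbf{H})=\spn((b',-a',0))$. Throughout, $\mathbf{G}$ is a \casec signature with $\Yinv\mathbf{G}\in\casecsupp$. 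By \cref{prop:if-casec0-in-c-then-supp-br} and \cref{cor:casecobg-form,cor:casecosep-form,cor:casecodz-form,cor:casecogen-form}, $\Yinv\mathbf{G}$ has one of the forms in \cref{fig:each-casec-type-just-in-C}.

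\emph{Case $\ann(\mathbf{G})=0$.} Then $\mathbf{G}$ is \casecosep, \casecodz or \casecogen.
\begin{itemize}
\item If $\mathbf{G}$ is \casecodz (resp.\ \casecogen), then after rescaling so that $\lambda=1$, $\Yinv\mathbf{G}$ is exactly the form assumed in \cref{prop:two-ternary-dichotomy-C-1-dz} (resp.\ \cref{prop:two-ternary-dichotomy-C-1-gen}). Applying that proposition to the pair $(\mathbf{G},\mathbf{H})$ forces $\Yinv\mathbf{H}\in\casecsupp$, a contradiction.
\item If $\mathbf{G}$ is \casecosep, \cref{prop:if-P-0-then-c1-sep} gives $\iny{T}=\rho(M)\in\stab(\casecsupp)$ with $M=\begin{bsmallmatrix}1/d&b\\0&d\end{bsmallmatrix}$ such that $\iny{T}\Yinv\mathbf{G}$ is in the defining form. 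The freedom in $d$ and the inter-reducibility of \cref{subsec:gc-case-c1-sep} reduce to $\iny{T}\Yinv\mathbf{G}=\sym(\mathbf{e}_1,\mathbf{e}_1,\mathbf{e}_2)+\mathbf{e}_3\teh$. Since $\mathbf{G}$ is not \sepbg, some of $y,z$ are nonzero in $\Yinv\mathbf{G}$, and by \cref{eq:representation-stab-c} this forces $b\ne 0$.
\item Now apply \cref{prop:two-ternary-dichotomy-C-1-sep} in the transformed frame. By \cref{prop:casecsupp-fixed-under-rho-c-0} the transformed $\mathbf{H}$ is still outside $\casecsupp$, so it must be $(\ast,\ast,0)\teh+\ast\,\mathbf{e}_3\teh$ after the $\Yinv$.
\item Compute $\rho(M)(a',b',0)=(d^2a'-b^2b',\,a^2b',\,-\sqrt2\,ab\,b')$ and $\rho(M)\mathbf{e}_3=(\sqrt2\,bd,0,ad)$. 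Because $b\ne0$ and $b'\ne 0$, the two vectors are linearly independent and neither is of the form $(\ast,\ast,0)$ or $\sim\mathbf{e}_3$. By \cref{prop:uniqueness-sum-lin-ind-tensor} the transformed $\mathbf{H}$ therefore cannot have the required form, giving \sphness.
\end{itemize}

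\emph{Case $\ann(\mathbf{G})\ne0$.} Then $\mathbf{G}$ is \casecz or \casecobg, and by \cref{lem:not-orthogonal-pair-dichotomy-ternary} it suffices to show that $(\mathbf{G},\mathbf{H})$ is not a compatible pair.
\begin{itemize}
\item If $\mathbf{G}$ is \casecz, then $\Yinv\mathbf{G}$ is supported on $\db\dr$, so $\ann(\mathbf{G})=\spn(\bzp)$. We have $\langle \bzp,(b,-a,0)\rangle\propto b-\i a$. If this vanished, then $(a,b,0)\sim\bzp$ and $\Yinv\mathbf{H}=\ast\,\mathbf{e}_1\teh+c\,\mathbf{e}_3\teh\in\casecsupp$, a contradiction. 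So the annihilators are not orthogonal.
\item The vanishing option of \cref{def:compatible-pairs} is ruled out by symmetric rank (\cref{prop:symmetric-rank-sym-sum}). The unique-propagation option would require $\mathbf{G}=\mathbf{u}\teh+\mathbf{v}\teh$ with $\mathbf{u}\perp(a,b,0)$ and $\mathbf{v}\perp\mathbf{e}_3$, or the swapped pairing. Together with $\mathbf{u},\mathbf{v}\perp\bzp$, this forces $\mathbf{v}\sim\bzp$ and $\mathbf{u}$ determined. One then checks that the resulting $\Yinv\mathbf{G}$ is \sepbg, contradicting hypothesis 1.
\item If $\mathbf{G}$ is \casecobg, \cref{cor:casecobg-form} gives $\Yinv\mathbf{G}=[x;\lambda,y;0,0,0;0,0,0,0]$ with $y\ne 0$, since $\mathbf{G}$ is not \sepbg. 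The $\stab(\casecsupp)$ element from that corollary, again with $b\ne0$, brings $\mathbf{G}$ to $\sym(\bzp,\bzp,\bzm)$. \cref{prop:two-ternary-dichotomy-C-1-bg} then requires either orthogonal annihilators or that the transformed $\mathbf{H}$ be of the form $\sym(\mathbf{v},\mathbf{v},\mathbf{w})+\mathbf{v}\teh$. The latter fails by symmetric rank, and the former fails by the same $b\ne0$ computation as above.
\end{itemize}

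The main obstacle is the bookkeeping of how $\mathbf{H}$ transforms under the $\stab(\casecsupp)$ element with $b\ne0$. In both subcases that use such an element, the key point is the nonvanishing third coordinate $-\sqrt2\,ab\,b'$, which destroys the \sepbg structure of $\Yinv\mathbf{H}$.
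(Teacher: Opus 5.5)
Your proof is correct and follows the paper's skeleton. You pin down $\Yinv\mathbf{H}=(a',b',0)\teh+c'\mathbf{e}_3\teh$ via \cref{prop:two-ternary-dichotomy-C-1-sep}, dispatch \casecodz and \casecogen with their pairwise dichotomies, and handle \casecz by showing $(\mathbf{G},\mathbf{H})$ is not a compatible pair; that last argument is the paper's.

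The difference is in the \casecosep subcase. The paper treats \casecobg and \casecosep together. Writing $\Yinv\mathbf{G}=\rho(M)(\cdots)$ with $M$ upper triangular and having nonzero upper-right entry, it realizes the single \casecobg signature $\rho(M)\sym(\mathbf{e}_1,\mathbf{e}_1,\mathbf{e}_2)$, which lies in both frames. It then checks that this signature's annihilator is not orthogonal to $\ann(\mathbf{H})$ and concludes by \cref{lem:not-orthogonal-pair-dichotomy-ternary}, with the symmetric-rank mismatch excluding the other two compatibility options.

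You instead move $\mathbf{H}$ into the frame where $\mathbf{G}$ is in defining form, realize $\Yinv\mathbf{F}$ itself there, and apply \cref{prop:two-ternary-dichotomy-C-1-sep} a second time. \cref{prop:casecsupp-fixed-under-rho-c-0} keeps $\rho(M)\Yinv\mathbf{H}$ outside $\casecsupp$. The nonzero third coordinate $-\sqrt{2}\,ab\,b'$, together with uniqueness of the two-cube decomposition, excludes the permitted shape $(\ast,\ast,0)\teh+\ast\,\mathbf{e}_3\teh$. Your version reuses one anchor lemma and needs no annihilator computation in that subcase. The paper's is more uniform, covering both subtypes with one realized signature and one inner product.

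Two small points.
\begin{itemize}
\item In your \casecobg subcase, ``the same computation'' is true but not literally the same quantity. The relevant number is the third coordinate of $(\rho(M)^{-1})\transpose(b',-a',0)$. Since $(\rho(M)^{-1})\transpose=\X\rho(M)\X$, this is again $-\sqrt{2}\,ab\,b'$, and you should say so.
\item Alternatively, you can skip the transformation there. We have $\ann(\Yinv\mathbf{G})=\spn((0,y,-\lambda))$ and $(0,y,-\lambda)\,\X\,(b',-a',0)\transpose=yb'\neq 0$, so $(\mathbf{G},\mathbf{H})$ is directly not a compatible pair.
\item In the \casecosep subcase your $a,b$ denote entries of $M$ (with $a=1/d$). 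This clashes with the coordinates of $\mathbf{H}$ you introduced at the start, so rename one pair.
\end{itemize}
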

\begin{proof}
Assume $\holts( \mathbf{G}, \mathbf{H})$ is not \sph.
Let $\iny{\mathbf{F}} = \Yinv \mathbf{F}$, $\iny{\mathbf{G}} = \Yinv \mathbf{G}$, and $\iny{\mathbf{H}} = \Yinv \mathbf{H}$.
Note that $\mathbf{G}$ is not \sepbg if and only if $\iny{\mathbf{G}}$ is not \sepbg. 
By \cref{prop:two-ternary-dichotomy-C-1-sep}, 
we may assume that $\iny{\mathbf{H}} = (a, b, 0)\teh + (0, 0, c)\teh$ for some $a, b, c \in \mathbb{C}$ such that $b, c \ne 0$.

We consider each possible form of $\iny{\mathbf{G}} \in \casecsupp \setminus \sepbgmath$.
If $\mathbf{G}$ is \casecodz or \casecogen, 
then by \cref{prop:two-ternary-dichotomy-C-1-dz} and \cref{prop:two-ternary-dichotomy-C-1-gen}, $\holts(\mathbf{G}, \mathbf{H})$ is \sph, so we may assume otherwise.
The remaining cases are when $\mathbf{G}$ is \casecz, \casecobg, or \casecosep. 

Suppose $\mathbf{G}$ is \casecz.
Since $\iny{\mathbf{G}} \in \casecsupp \setminus \sepbgmath$, it must be supported on $\db\dr$ by \cref{prop:if-casec0-in-c-then-supp-br}.
From the given form of $\iny{\mathbf{H}}$ we see that $\ann(\iny{\mathbf{H}}) = \spn(\mathbf{u}')$ where $\mathbf{u}' = (-b, a, 0)$.
On the other hand, $\ann(\iny{\mathbf{G}}) = \spn(\mathbf{e}_2)$.
Applying $\Y$, we have
$\mathbf{H} = \frac{1}{\sqrt{2}}(a + b, \i(a - b), 0)\teh + (0, 0, c)\teh$ with
$\ann(\mathbf{H}) = \spn(\mathbf{u})$ for $\mathbf{u} = (a - b, \i(a + b), 0) \sim  (\Yinv)\transpose \mathbf{u}'$ by \cref{prop:annihilating-vector-under-gl-transform}. 
On the other hand, $\ann(\mathbf{G}) =  \spn((\Yinv)\transpose \mathbf{e}_2) = \spn(\bzp)$. 
Then, $\langle \bzp, \mathbf{u} \rangle = -2b \ne 0$ and we may apply \cref{lem:not-orthogonal-pair-dichotomy-ternary} on $\holts( \mathbf{G}, \mathbf{H})$.
For $\holts( \mathbf{G}, \mathbf{H})$ to be not \sph, we need $\mathbf{G} = \mathbf{v}_1\teh + \mathbf{v}_2\teh$ for some $\mathbf{v}_1, \mathbf{v}_2\in \mathbb{C}^3$ such that  $\langle \mathbf{v}_1, (0, 0, c) \rangle = 0$ and $\langle \mathbf{v}_2, (a + b, \i(a - b), 0) \rangle = 0$.
For $\ann(\mathbf{G}) = \spn(\bzp)$ to hold, we must have $\langle \mathbf{G}, \bzp \rangle = \langle \mathbf{v}_1, \bzp \rangle \mathbf{v}_1\tew + \langle \mathbf{v}_2, \bzp \rangle \mathbf{v}_2\tew = 0$.
Since $\mathbf{v}_1 \not \sim \mathbf{v}_2$ because $\mathbf{G}$ is nondegenerate, 
we need $\langle \mathbf{v}_1, \bzp \rangle = 0$ and $\langle \mathbf{v}_2, \bzp \rangle = 0$ because $\mathbf{v}_1\tew$ and $\mathbf{v}_2\tew$ are linearly independent.
These conditions can be  true only if both
\begin{align*}
  \begin{bmatrix}
    0 & 0 & c \\
    1 & \i & 0
  \end{bmatrix} \mathbf{v}_1 = 0 
  &&
  \begin{bmatrix}
    a + b & \i(a - b) & 0 \\
    1 & \i & 0
  \end{bmatrix} \mathbf{v}_2 = 0
\end{align*} hold, which imply that $\mathbf{v}_1 \sim (1, \i, 0)$ and $\mathbf{v}_2 \sim \mathbf{e}_3$.
Then, $\mathbf{G}$ is \sepbg, a contradiction.

Suppose $\mathbf{G}$ is \casecobg or \casecosep.
Then, there exist some $\iny{T} \in \so(3, \mathbb{C}, \X)$ 
such that
$\iny{\mathbf{G}} = \iny{T}(x \mathbf{e}_1\teh + \lambda \sym(\mathbf{e}_1, \mathbf{e}_1, \mathbf{e}_2) + w \mathbf{e}_3\teh)$ for some $x, \lambda, w \in \mathbb{C}$ and $\lambda  \ne 0$.
By the spin representation (\cref{eq:paramaterize-so3cx}) we have $\iny{T} = \rho(M)$ for some 
$M = \begin{bsmallmatrix}
    a' & b' \\ c' & d'
\end{bsmallmatrix} \in \speciallinear(2, \mathbb{C})$. 
Since $\iny{\mathbf{G}} \in \casecsupp$, 
by \cref{prop:casecsupp-fixed-under-rho-c-0},
we have $c'=0$. However, $b' \ne 0$, for otherwise $M$ is diagonal
and  so is  $\rho(M)$ by~\cref{eq:paramaterize-so3cx}, implying 
 $\iny{\mathbf{G}} \in \sepbgmath$ because 
$\iny{\mathbf{G}} = \rho(M) (x \mathbf{e}_1\teh + \sym(\mathbf{e}_1, \mathbf{e}_1, \mathbf{e}_2) + w \mathbf{e}_3\teh) = x' \mathbf{e}_1\teh + \lambda' \sym(\mathbf{e}_1, \mathbf{e}_1, \mathbf{e}_2) + w' \mathbf{e}_3\teh$ for some $x', w', \lambda' \in \mathbb{C}$, 
contrary to assumption.
As $M \in  \speciallinear(2, \mathbb{C})$, we have $a', d' \ne 0 $.
By \cref{lem:gadget-set-general-form} and \cref{subsec:gc-case-c1-bg} or \cref{subsec:gc-case-c1-sep} depending on the type $\mathfrak{G}$ of $\mathbf{G}$, we have
\begin{align*}
\holts(\mathbf{G}, \mathbf{H}) 
\ge_T 
\holts(\X \mid \iny{\mathbf{G}}, \iny{\mathbf{H}}) 
\ge_T 
\holts(\X \mid \rho(M) \agx(\mathfrak{G}), \iny{\mathbf{H}}) 
\ge_T 
\holts(\X \mid \iny{\mathbf{G}}', \iny{\mathbf{H}}) 
\end{align*}
in which $\iny{\mathbf{G}}'  = \rho(M) \sym(\mathbf{e}_1, \mathbf{e}_1, \mathbf{e}_2) \in \rho(M) \agx(\mathfrak{G})$ for both $\mathfrak{G} = \casecobgmath$ and $\mathfrak{G} = \casecosepmath$.
We see that in particular, $\ann(\iny{\mathbf{G}}') = \spn((0, \sqrt{2} b', a'))$, using~\cref{eq:paramaterize-so3cx}.
Then, $\mathbf{G}' = \Y \iny{\mathbf{G}}'$ and $\mathbf{H}$ have non-orthogonal annihilators.
To wit, note that $\ann(\mathbf{G}') = \spn((\Yinv)\transpose (0, \sqrt{2}b', a'))$ and
$\ann(\mathbf{H}) = \spn((\Yinv)\transpose (-b, a, 0))$ by \cref{prop:annihilating-vector-under-gl-transform}. 
Then, 
\begin{align*}
\left\langle (\Yinv)\transpose \begin{bmatrix}
    0 \\ \sqrt{2} b' \\ a'
\end{bmatrix}, (\Yinv)\transpose \begin{bmatrix}
    -b \\ a \\ 0
\end{bmatrix} \right\rangle 
&=
\begin{bmatrix}
    0 & \sqrt{2} b' & a'
\end{bmatrix} \Yinv (\Yinv)\transpose \begin{bmatrix}
    -b \\ a \\ 0
\end{bmatrix} \\
&= \begin{bmatrix}
    0 & \sqrt{2} b' & a'
\end{bmatrix} (\Y\transpose \Y)^{-1} \begin{bmatrix}
    -b \\ a \\ 0
\end{bmatrix} \\
&= \begin{bmatrix}
    0 & \sqrt{2} b' & a'
\end{bmatrix} (\X)^{-1} \begin{bmatrix}
    -b \\ a \\ 0
\end{bmatrix} \\
&= -\sqrt{2}bb' \\
&\ne 0 \,.
\end{align*}
Then $\holts(\mathbf{G}', \mathbf{H})$, and thus $\holts(\mathbf{G}, \mathbf{H})$, is \sph by \cref{lem:not-orthogonal-pair-dichotomy-ternary}. 
\end{proof}

\begin{proposition}\label{prop:case-a3-ternary-non-mixing}
    Let $\mathbf{F} = \mathbf{e}_1\teh + \mathbf{e}_2\teh + \mathbf{e}_3\teh$.
        Let $\mathbf{G}$ and $\mathbf{H}$ be nondegenerate symmetric ternary signatures 
        such that $\holts(\mathbf{F}, \mathbf{G})$ and $\holts(\mathbf{F}, \mathbf{H})$ are not \sph, and
    \begin{enumerate}
        \item $\mathbf{G}$ is \sepbg and not in $\mathcal{E}$.
        \item $\mathbf{H}$ is \sepbr and not in $\mathcal{E}$.
    \end{enumerate}
    Then $\holts(\mathbf{G}, \mathbf{H})$ is \sph.
\end{proposition}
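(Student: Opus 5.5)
By \cref{prop:two-ternary-dichotomy-a-3}, the hypotheses pin down the forms of $\mathbf{G}$ and $\mathbf{H}$. Since $\mathbf{G}$ is \sepbg and not in $\mathcal{E}$, it is not \geneq. The only non-\geneq form that is \sepbg is
\[
\mathbf{G} = (a, b, 0)\teh + c\,\mathbf{e}_3\teh \quad\text{with } a, b, c \ne 0 .
\]
Similarly, since $\mathbf{H}$ is \sepbr and not in $\mathcal{E}$,
\[
\mathbf{H} = (p, 0, q)\teh + r\,\mathbf{e}_2\teh \quad\text{with } p, q, r \ne 0 .
\]
(A signature could in principle be $ij|k$ for two different splittings, but the explicit forms show this forces membership in $\mathcal{E}$, so these are the only options.)

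Both signatures have nonzero annihilators. For $\mathbf{G}$ the annihilator is $\mathbf{n}_G = (b, -a, 0)$. For $\mathbf{H}$ it is $\mathbf{n}_H = (q, 0, -p)$. Their dot product is $\langle \mathbf{n}_G, \mathbf{n}_H \rangle = bq \ne 0$, so condition~1 of \cref{def:compatible-pairs} fails. By \cref{lem:not-orthogonal-pair-dichotomy-ternary}, it then suffices to show that $(\mathbf{G}, \mathbf{H})$ is not a compatible pair.

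Both signatures are rank-$2$ sums of cubes, so by \cref{prop:symmetric-rank-sym-sum} condition~3 is impossible. For condition~2 we need decompositions whose vectors match up orthogonally. By \cref{prop:uniqueness-sum-lin-ind-tensor} the decompositions are essentially unique: $\mathbf{G}$ decomposes into vectors proportional to $\{(a,b,0), \mathbf{e}_3\}$ and $\mathbf{H}$ into vectors proportional to $\{(p,0,q), \mathbf{e}_2\}$. There are two possible pairings.
\begin{itemize}
\item Pair $(a,b,0)$ with $(p,0,q)$ and $\mathbf{e}_3$ with $\mathbf{e}_2$. Then $\langle \mathbf{e}_3, \mathbf{e}_2\rangle = 0$ holds, but $\langle (a,b,0), (p,0,q)\rangle = ap \ne 0$.
\item Pair $(a,b,0)$ with $\mathbf{e}_2$ and $\mathbf{e}_3$ with $(p,0,q)$. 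Then the dot products are $b \ne 0$ and $q \ne 0$.
\end{itemize}
Both pairings fail, so the pair is not compatible. This gives \sphness of $\holts(\mathbf{G}, \mathbf{H})$, and hence of any problem containing both signatures.

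The main step to get right is the first one: extracting the exact forms from \cref{prop:two-ternary-dichotomy-a-3} and the nonvanishing of all of $a, b, c, p, q, r$. This matters because the nonvanishing of $ap$, $b$, $q$ and $bq$ is used throughout the argument.
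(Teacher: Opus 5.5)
Your proposal is correct and follows the same route as the paper. Like the paper, you read off the explicit forms from \cref{prop:two-ternary-dichotomy-a-3}, note that the annihilators $(b,-a,0)$ and $(q,0,-p)$ have nonzero dot product $bq$, and invoke \cref{lem:not-orthogonal-pair-dichotomy-ternary}. Your check that conditions~2 and~3 of \cref{def:compatible-pairs} also fail (by symmetric rank, and by the two possible pairings) is left implicit in the paper's proof. It is genuinely needed, since non-orthogonal annihilators alone only rule out condition~1.
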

\begin{proof}
By \cref{prop:two-ternary-dichotomy-a-3}, we may assume that $\mathbf{G} = (x, y, 0)\teh + (0, 0, z)\teh$ and $\mathbf{H} = (p, 0, q)\teh + (0, r, 0)\teh$ for 
some $x, y, z, p, q, r \in \mathbb{C}$ and
$z, r \ne 0$.
As they are  not in  $\mathcal{E}$ we also have
 $x, y,  p, q \ne 0$.
We see that $\ann(\mathbf{G}) = \spn(\mathbf{n}_\mathbf{G})$ for $\mathbf{n}_{\mathbf{G}} = (-y, x, 0)$ while $\ann(\mathbf{H}) = \spn(\mathbf{n}_\mathbf{H})$ for $\mathbf{n}_{\mathbf{H}} = (-q, 0, p)$. 
Then, $\langle \mathbf{n}_{\mathbf{G}}, \mathbf{n}_{\mathbf{H}} \rangle = yq \ne 0$.
By \cref{lem:not-orthogonal-pair-dichotomy-ternary}, $\holts(\mathbf{G}, \mathbf{H})$ is \sph.
\end{proof}

\subsection{A Set of Ternary Signatures}\label{subsec:set-of-ternary}
We are now ready to prove the dichotomy of a set of ternary signatures.

\begin{theorem}\label{thm:ternary-set-dichotomy}
Let $\mathcal{F}$ be a set of nondegenerate symmetric ternary signatures.
Then, $\holts(\mathcal{F})$ is tractable if 
it
satisfies 
the tractability criterion of~\cref{thm:tractable-cases}. 
Otherwise, it is \sph. 
\end{theorem}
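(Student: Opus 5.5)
Tractability is already settled by \cref{thm:tractable-cases}, so only the hardness direction is needed. Assume $\holts(\mathcal{F})$ is not \sph. The plan is to show that $\mathcal{F}$ lands in one of the cases of \cref{thm:tractable-cases} with $\mathcal{B}=\emptyset$. By \cref{thm:single-ternary-dichotomy} every $\mathbf{F}\in\mathcal{F}$ has one of the nine types of \cref{def:types-ternary}. Every pair must also survive the pairwise results of \cref{sec:pair-dicho}. If every signature has a nonzero annihilator, \cref{cor:ebd-set-ternary-dichotomy} finishes: all pairs are compatible, so case~\ref{itm:tractable-ann} holds. Otherwise we fix some $\mathbf{F}\in\mathcal{F}$ with $\ann(\mathbf{F})=0$. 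It is of type \caseathree, \casebthree, \casecosep, \casecodz or \casecogen. After a common orthogonal transformation, which preserves both the problem and the tractability criterion, we put $\mathbf{F}$ in its defining form and case on its type.

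If $\mathbf{F}$ is \casecodz or \casecogen, then \cref{prop:two-ternary-dichotomy-C-1-dz,prop:two-ternary-dichotomy-C-1-gen} give $\Yinv\mathbf{G}\in\casecsupp$ for every $\mathbf{G}\in\mathcal{F}$. This is case~\ref{itm:tractable-c}.

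If $\mathbf{F}$ is \casebthree, then \cref{prop:two-ternary-dichotomy-b-3} makes every $\mathbf{G}$ \sepbg with $\holbs(\mathbf{F}\domres{\db\dg},\mathbf{G}\domres{\db\dg})$ tractable. Each $\mathbf{G}\domres{\db\dg}$ is then of type $\typeii$ or degenerate, so $\holbs(\mathcal{F}\domres{\db\dg})$ is tractable. This gives case~\ref{itm:tractable-bg-r}; when all $\mathbf{G}$ are \caseb-forms one also has case~\ref{itm:tractable-e}, which is not needed.

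If $\mathbf{F}$ is \casecosep, \cref{prop:two-ternary-dichotomy-C-1-sep} splits $\mathcal{F}$ into $\mathcal{G}$, those with $\Yinv\mathbf{G}\in\casecsupp$ that are not \sepbg, and $\mathcal{H}$, those that are \sepbg with $\Yinv\mathbf{H}\notin\casecsupp$. Signatures lying in both $\casecsupp$ and \sepbg are harmless in either case. If both $\mathcal{G}$ and $\mathcal{H}$ are nonempty, \cref{prop:case-c1-sep-ternary-non-mixing} gives \sphness. Otherwise we are in case~\ref{itm:tractable-c}, or in case~\ref{itm:tractable-bg-r}. For the latter, the Boolean restrictions are all of the form $x(1,\i)\teh+y\sym((1,\i),(1,\i),(1,-\i))$ or degenerate. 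This is a single Boolean tractable family, type $\typei(a,b)$ with $b=\pm2\i a$ together with degenerate signatures, so $\holbs$ of the union is tractable.

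If $\mathbf{F}$ is \caseathree, \cref{prop:two-ternary-dichotomy-a-3} places every $\mathbf{G}$ either in $\mathcal{E}$ or in some $ij|k$, degenerate on $ij$. By \cref{prop:case-a3-ternary-non-mixing} and its domain-permuted versions, at most one separation pattern $ij|k$ occurs among the signatures outside $\mathcal{E}$; by symmetry take $\db\dg|\dr$. If none occurs, all of $\mathcal{F}$ lies in $\mathcal{E}$, which is case~\ref{itm:tractable-e}. If one occurs, the remaining signatures lie in $\mathcal{E}$, and a symmetric ternary signature in $\mathcal{E}$ is \geneq, hence also \sepbg. Their $\db\dg$ restrictions are then $[*,0,0,*]$, while the separated ones restrict to degenerate signatures. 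So $\holbs(\mathcal{F}\domres{\db\dg})$ is tractable by \cref{thm:dich-sym-Boolean}, giving case~\ref{itm:tractable-bg-r}.

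The step I expect to be most delicate is bookkeeping the degenerate-on-restriction signatures inside the Boolean tractability check. A separated signature restricted to $\db\dg$ may well be degenerate. The claim that \cref{thm:dich-sym-Boolean} is still satisfied relies on degenerate symmetric signatures lying in every tractable class. That holds for type $\typei$ with suitable $(a,b)$, but it must be checked against the specific $(a,b)$ fixed by the nondegenerate members in each case above.
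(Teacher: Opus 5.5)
Your case analysis follows the paper's proof almost step for step. You anchor on a signature with trivial annihilator, apply the pairwise results \cref{prop:two-ternary-dichotomy-C-1-dz,prop:two-ternary-dichotomy-C-1-gen,prop:two-ternary-dichotomy-C-1-sep,prop:two-ternary-dichotomy-b-3,prop:two-ternary-dichotomy-a-3}, exclude mixed configurations with \cref{prop:case-c1-sep-ternary-non-mixing,prop:case-a3-ternary-non-mixing}, and otherwise fall back on \cref{cor:ebd-set-ternary-dichotomy}. You are, if anything, more explicit than the paper about the Boolean tractability condition in \cref{itm:tractable-bg-r}.

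The one thing to fix is your closing paragraph: the justification it proposes is false, and the check it suggests would fail.

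\textbf{Why the check fails.} Degenerate signatures do \emph{not} lie in every Boolean tractable class.
\begin{itemize}
\item $(\alpha,\beta)\teh$ satisfies the $\typei(a,b)$ recurrence only if $a\alpha^2+b\alpha\beta-a\beta^2=0$. That happens for at most two directions $(\alpha:\beta)$.
\item $(\alpha,\beta)\teh$ is of type $\typeii$ only for $(\alpha:\beta)=(1:\pm\i)$.
\end{itemize}
Concretely, a separated signature $(a,b,0)\teh+c\mathbf{e}_3\teh$ with $ab\neq 0$ restricts to $(a,b)\teh$. This is neither of the form $[*,0,0,*]$ nor of type $\typeii$. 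In the \casecosep case, every member of your set $\mathcal{H}$ restricts to some $\mathbf{v}\teh$ with $\mathbf{v}\not\sim(1,\i)$. Such a $\mathbf{v}\teh$ lies outside the $\typei(1,2\i)$ family. So checking degenerate restrictions against the specific $(a,b)$ fails in all three of your cases.

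\textbf{The correct reason.} \cref{thm:dich-sym-Boolean} is a statement about nondegenerate signatures. In $\holbs$, any degenerate $\mathbf{v}\teh$ can be replaced by three copies of the unary $\mathbf{v}$, so degenerate members of $(T\mathcal{F})\domres{\db\dg}$ may simply be discarded. Tractability is then decided by the nondegenerate restrictions alone. In your three cases these lie in a single class: type $\typeii$, $[*,0,0,*]$, and $\typei(1,2\i)$ respectively. The paper relies on exactly this, e.g.\ $\holbs(\mathbf{F}\domres{\db\dg},\mathbf{G}\domres{\db\dg})=_T\holbs(\mathbf{F}\domres{\db\dg})$ in the proof of \cref{prop:two-ternary-dichotomy-C-1-sep}.

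With that correction, your proof is complete.
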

\begin{proof}
We assume that $\holts(\mathcal{F})$ is not \sph.
Therefore, we may assume that every signature in $\mathcal{F}$ is of a tractable type.
We separate into the cases where $\mathcal{F}$ contains a signature of certain type.
Then we apply the dichotomy of pairs to narrow down the possibilities and  apply the non-mixing results to pinpoint the tractable case applicable for $\mathcal{F}$. 

Suppose $\mathcal{F}$ contains a \casecogen or \casecodz signature.
Then from the dichotomy of pair of signatures (\cref{prop:two-ternary-dichotomy-C-1-gen,prop:two-ternary-dichotomy-C-1-dz}), we see that there must exist some orthogonal $T$ such that $\Yinv T \mathcal{F} \subseteq \casecsupp$. 
Then, $\mathcal{F}$ satisfies \cref{itm:tractable-c} of \cref{thm:tractable-cases}.

Suppose $\mathcal{F}$ contains a \casecosep signature.
There exists some orthogonal $T$ such that $\Yinv T \mathcal{F}$ contains a signature of the form $\iny{\mathbf{F}} = [*;x, 0;0, 0, 0;0, 0, 0, y]$ with $x, y \ne 0$.
For any $\iny{\mathbf{G}} \in \Yinv T\mathcal{F}$, we see from \cref{prop:two-ternary-dichotomy-C-1-sep}, that either $\iny{\mathbf{G}} \in \casecsupp$ or $\iny{\mathbf{G}}$ is \sepbg with the further property that $\iny{\mathbf{G}}\domres{\db\dg}$ is either degenerate or $[*, *, 0, 0]$.

If $\Yinv T \mathcal{F} \subset \sepbgmath$, 
then $\mathcal{F}$ satisfies \cref{itm:tractable-bg-r} of \cref{thm:tractable-cases}.
If $\Yinv T \mathcal{F} \subset \casecsupp$, 
then $\mathcal{F}$ satisfies \cref{itm:tractable-c} of \cref{thm:tractable-cases}.
Otherwise, 
there are $\iny{\mathbf{G}}, \iny{\mathbf{H}} \in \Yinv T \mathcal{F}$ such that $\iny{\mathbf{G}} \in \casecsupp \setminus \sepbgmath$ and $\iny{\mathbf{H}} \in \sepbgmath \setminus \casecsupp$.
Then, $\holts(\mathcal{F})$ is \sph by \cref{prop:case-c1-sep-ternary-non-mixing}.

Suppose $\mathcal{F}$ contains a \casebthree signature.
Then, there exist some orthogonal $T$ and $\mathbf{F} \in T \mathcal{F}$ with the form $\mathbf{F} = a \bzp \teh + b \bzm\teh + c \mathbf{e}_3\teh$ for $a ,b, c \ne 0$.
By \cref{prop:two-ternary-dichotomy-b-3}, 
any other signature $\mathbf{G} \in T \mathcal{F}$ is \sepbg such that $\holbs(\mathbf{F}\domres{\db\dg}, \mathbf{G}\domres{\db\dg})$ is tractable.
This means $\mathbf{G}\domres{\db\dg}$ is either $x \bzp\teh + y \bzm \teh$ or degenerate by the Boolean domain dichotomy.
Then $\mathcal{F}$ satisfies \cref{itm:tractable-bg-r} of \cref{thm:tractable-cases}.

Suppose $\mathcal{F}$ contains a \caseathree signature,
so for some orthogonal $T$, $\mathbf{F} = a \mathbf{e}_1\teh + b \mathbf{e}_2\teh + c \mathbf{e}_3\teh \in T \mathcal{F}$ for $a, b, c \ne 0$.
Let $\mathcal{F}_{\mathcal{E}} = T \mathcal{F} \cap \mathcal{E}$.
If $\mathcal{F}_{\mathcal{E}} = T \mathcal{F}$ 
then $\mathcal{F}$ satisfies \cref{itm:tractable-e} of \cref{thm:tractable-cases}.
Otherwise, by \cref{prop:two-ternary-dichotomy-a-3}, any signature in $T \mathcal{F} \setminus \mathcal{F}_{\mathcal{E}}$ must be $ij|k$ and degenerate on $ij$ for some $i,j,k$ depending on the signature.
There must be a single choice of $i, j, k$ such that all signatures in $T \mathcal{F} \subset ij | k$ because otherwise we get \sphness from \cref{prop:case-a3-ternary-non-mixing}.
If $T \mathcal{F} \subset ij | k$, 
then $\mathcal{F}$ satisfies \cref{itm:tractable-bg-r} of \cref{thm:tractable-cases}.

If $\mathcal{F}$ does not contain any \caseathree, \casebthree, \casecosep, \casecodz, or \casecogen signature, then every signature in $\mathcal{F}$ has a nonzero annihilator. 
Then by \cref{cor:ebd-set-ternary-dichotomy}, $\mathcal{F}$ satisfies \cref{itm:tractable-ann} of \cref{thm:tractable-cases}.
\end{proof}

\section{Adding Binary Signatures}
\label{sec:binary}
We now consider binary signatures.
If every ternary signature has a nonzero annihilator, 
we can easily extend \cref{cor:ebd-set-ternary-dichotomy} to a dichotomy including binary signatures 
using \cref{lem:not-orthogonal-pair-dichotomy-ternary} and the algorithm in \cref{thm:algorithm-ebd-with-binary}. 
So, we consider the remaining cases. 
Let $\mathbf{F}$ be a signature of type \caseathree, \casebthree, \casecosep, \casecodz, or \casecogen.
We prove two kinds of results. 
\cref{prop:casebthree-binary,prop:casea3-binary,prop:casecosep-allowed-binary,prop:casecodz-binary,prop:casecogen-binary} show a dichotomy of $\holts(\mathbf{F}, \mathbf{G})$ for a symmetric binary signature $\mathbf{G}$. 
\cref{prop:casea3-binary-not-mix,prop:casebtwo-binary-not-mix,prop:casecosep-allowed-binary-not-mix} show non-mixing, a $\holts(\mathbf{F}, \mathbf{G}, \mathbf{H})$ dichotomy for symmetric binary signatures $\mathbf{G}$ and $\mathbf{H}$. 
Combining those results, we obtain a dichotomy for a set of binary and ternary signatures in \cref{thm:dichotomy-ternary-binary}.

In \cref{prop:casebthree-binary,prop:casebtwo-binary-not-mix,prop:casecosep-allowed-binary,prop:casecosep-allowed-binary-not-mix,prop:casecodz-binary,prop:casecogen-binary}, we characterize the binary signatures in the setting $\holts(\X \mid \blank)$  after the $\Yinv$ transform. 
This is because the binary signatures take a nicer form in that setting. 
Note that a binary signature in $\casecsupp$ in matrix form has the support pattern $\begin{bsmallmatrix}
  * & * & * \\ * & 0 & 0 \\ * & 0 & *
  \end{bsmallmatrix}$.

\subsection{Some Calculations}
Because the representation of a tensor is not unique, we first prove some results on the forms of the constituent vectors of a given signature. 

\begin{proposition} \label{prop:calculation-sym-in-C}
    Let $\iny{\mathbf{F}}  = \sym(\mathbf{v}_1, \mathbf{v}_1, \mathbf{v}_2)$ for $\mathbf{v}_1 \not \sim \mathbf{v}_2 \in \mathbb{C}^3$.
    Suppose $\iny{\mathbf{F}} \in \casecsupp$.
    Then, $\mathbf{v}_1 \sim \mathbf{e}_1$ or $\{\mathbf{v}_1, \mathbf{v}_2\} = \{(*, 0, *), (*, 0, *)\}$.
\end{proposition}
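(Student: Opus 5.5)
The plan is to exploit the fact that $\casecsupp$ is closed under connecting unary signatures, and to read off constraints on $\mathbf{v}_1,\mathbf{v}_2$ from the binary signatures $\langle \iny{\mathbf{F}}, \mathbf{w}\rangle$. Write $\mathbf{v}_1 = (p_1,q_1,r_1)$ and $\mathbf{v}_2 = (p_2,q_2,r_2)$. For a unary $\mathbf{w}$ we have
\[
\langle \iny{\mathbf{F}}, \mathbf{w}\rangle = 2\langle \mathbf{v}_1,\mathbf{w}\rangle \sym(\mathbf{v}_1,\mathbf{v}_2) + \langle \mathbf{v}_2,\mathbf{w}\rangle \mathbf{v}_1\tew ,
\]
and this binary signature must have the support pattern $\begin{bsmallmatrix} * & * & * \\ * & 0 & 0 \\ * & 0 & *\end{bsmallmatrix}$. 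The zero entries give the constraints at positions $\dg\dg$ and $\dg\dr$.

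First choose $\mathbf{w}$ with $\langle \mathbf{v}_1,\mathbf{w}\rangle = 0$ and $\langle \mathbf{v}_2,\mathbf{w}\rangle = 1$. This is possible since $\mathbf{v}_1 \not\sim \mathbf{v}_2$. Then $\mathbf{v}_1\tew$ must have zero $\dg\dg$ entry, so $q_1 = 0$. Next choose $\mathbf{w}$ with $\langle \mathbf{v}_2,\mathbf{w}\rangle = 0$ and $\langle \mathbf{v}_1,\mathbf{w}\rangle = 1$. Then $\sym(\mathbf{v}_1,\mathbf{v}_2) = \mathbf{v}_1\otimes\mathbf{v}_2 + \mathbf{v}_2\otimes\mathbf{v}_1$ must vanish at $\dg\dg$ and at $\dg\dr$. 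With $q_1 = 0$, the $\dg\dg$ entry is $2q_1q_2 = 0$ automatically. The $\dg\dr$ entry is $q_1r_2 + q_2r_1 = q_2 r_1$, so $q_2 r_1 = 0$.

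This leaves two cases. If $q_2 = 0$, then both vectors have the form $(*,0,*)$, which is the second alternative. If instead $r_1 = 0$, then $\mathbf{v}_1 = (p_1,0,0) \sim \mathbf{e}_1$, which is the first alternative; here $p_1 \neq 0$, since $\mathbf{v}_1 = 0$ would make $\iny{\mathbf{F}}$ zero and contradict $\mathbf{v}_1\not\sim\mathbf{v}_2$.

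As a sanity check, the ternary entries of $\iny{\mathbf{F}}$ could be used directly in place of the unary connections. For example, $\iny{\mathbf{F}}(\dg\dg\dg) = 3q_1^2q_2$, and the entries at $\dg\dg\dr$ and $\dg\dr\dr$ must also vanish. Solving these equations directly is messier, because of the case split on whether $q_2 = 0$, so the annihilation trick above is cleaner. I expect the only delicate point to be making sure the two choices of $\mathbf{w}$ exist; this is exactly where linear independence of $\mathbf{v}_1$ and $\mathbf{v}_2$ is used.
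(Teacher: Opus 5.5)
Your proof is correct and is essentially the paper's argument: the paper connects a unary $\mathbf{u}$ to realize $p\,\mathbf{v}_1\tew + q\sym(\mathbf{v}_1,\mathbf{v}_2)$ for all $p,q$ and reads off the vanishing of the $\dg\dg$ and $\dg\dr$ entries, which is exactly your two specializations $(p,q)=(1,0)$ and $(0,1)$, leading to the same case split $q_1=0$ and $q_2r_1=0$. One harmless slip: the coefficient $2$ in your contraction formula should be $1$ under the paper's convention $\sym(\mathbf{v}_1,\mathbf{v}_2)=\mathbf{v}_1\otimes\mathbf{v}_2+\mathbf{v}_2\otimes\mathbf{v}_1$, which does not affect the argument.
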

\begin{proof}
Consider $\mathbf{G}_{p, q} = p \mathbf{v}_1\tew + q \sym(\mathbf{v}_1, \mathbf{v}_2)$. 
Note that for any $p, q \in \mathbb{C}$, there exists some $\mathbf{u}$ such that $\langle \mathbf{F}, \mathbf{u} \rangle = \mathbf{G}_{p,q}$ by choosing $\mathbf{u}$ such that $\langle \mathbf{u}, \mathbf{v}_1 \rangle = q$ and $\langle \mathbf{u}, \mathbf{v}_2 \rangle = p$.
Since $\casecsupp$ is closed under connecting a unary signature, 
$\mathbf{G}_{p,q} \in \casecsupp$ for any $p, q \in \mathbb{C}$.
Writing $\mathbf{v}_1 = (a, b, c)$ and $\mathbf{v}_2 = (x, y, z)$,
we have
$\mathbf{G}_{p, q}(\dg\dg) = p b^2 + 2 q b y $
and $\mathbf{G}_{p, q}(\dg\dr) = p b c + q (b z + c y)$.
For $\mathbf{G}_{p, q}(\dg\dg)$ and $\mathbf{G}_{p, q}(\dg\dr)$ to be always $0$ for every $p, q$, it must be the case 
that $b, y = 0$ or $b , c = 0$.
\end{proof}

\begin{proposition} \label{prop:calculation-sym-bg-r}
    Let $\iny{\mathbf{F}} = \sym(\mathbf{v}_1, \mathbf{v}_1, \mathbf{v}_2)$ for $\mathbf{v}_1 \not \sim \mathbf{v}_2 \in \mathbb{C}^3$.
    Suppose $\iny{\mathbf{F}}$ is \sepbg.
    Then, $\{\mathbf{v}_1, \mathbf{v}_2\} = \{(*, *, 0), (*, *, 0)\}$.
    In particular, $\iny{\mathbf{F}}$ is supported on $\db\dg$.
\end{proposition}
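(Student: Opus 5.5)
I will follow the template of the proof of \cref{prop:calculation-sym-in-C}: rather than working with $\iny{\mathbf{F}}$ directly, I will look at the binary signatures obtained from it by connecting unary signatures. Since $\mathbf{v}_1 \not\sim \mathbf{v}_2$, for any $p, q \in \mathbb{C}$ I can choose $\mathbf{u}$ with $\langle \mathbf{u}, \mathbf{v}_1 \rangle = q$ and $\langle \mathbf{u}, \mathbf{v}_2 \rangle = p$. With this choice, $\langle \iny{\mathbf{F}}, \mathbf{u} \rangle$ is a nonzero multiple of $\mathbf{G}_{p,q} = p\, \mathbf{v}_1\tew + q \sym(\mathbf{v}_1, \mathbf{v}_2)$; the exact constant is harmless because it is nonzero. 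The class \sepbg is closed under connecting a unary signature, so every $\mathbf{G}_{p,q}$ is \sepbg. This means the entries $\mathbf{G}_{p,q}(\db\dr)$ and $\mathbf{G}_{p,q}(\dg\dr)$ vanish identically in $p, q$.

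Next I write $\mathbf{v}_1 = (a, b, c)$ and $\mathbf{v}_2 = (x, y, z)$ and read off the coefficients.
\begin{itemize}
\item From the $\db\dr$ entry, the coefficient of $p$ gives $ac = 0$ and the coefficient of $q$ gives $az + cx = 0$.
\item From the $\dg\dr$ entry, the coefficient of $p$ gives $bc = 0$ and the coefficient of $q$ gives $bz + cy = 0$.
\end{itemize}

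Now I split on $c$. If $c \ne 0$, then $a = b = 0$, so $\mathbf{v}_1 \sim \mathbf{e}_3$. The remaining two equations then force $x = y = 0$, so $\mathbf{v}_2 \sim \mathbf{e}_3$ as well, which contradicts $\mathbf{v}_1 \not\sim \mathbf{v}_2$. Hence $c = 0$. In that case the equations reduce to $az = bz = 0$. Since $\mathbf{v}_1 \ne 0$ and its third coordinate is already $0$, we have $(a, b) \ne 0$, and therefore $z = 0$. This gives $\{\mathbf{v}_1, \mathbf{v}_2\} = \{(*,*,0),(*,*,0)\}$. Every tensor term of $\sym(\mathbf{v}_1, \mathbf{v}_1, \mathbf{v}_2)$ is then built from vectors in $\spn(\mathbf{e}_1, \mathbf{e}_2)$, so $\iny{\mathbf{F}}$ is supported on $\db\dg$.

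I expect no real obstacle. The only point to check is that the binary reduction loses nothing: because $p$ and $q$ range freely, the coefficient extraction is legitimate. Alternatively, one could argue directly from the ternary entries $\iny{\mathbf{F}}(\db\db\dr)$ and so on, but the binary route is shorter.
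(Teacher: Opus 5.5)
Your proposal is correct and is essentially the paper's own proof. You contract with a suitable $\mathbf{u}$ to get exactly $\mathbf{G}_{p,q} = p\,\mathbf{v}_1\tew + q\sym(\mathbf{v}_1,\mathbf{v}_2)$ (the constant is in fact $1$), use closure of \sepbg under unary contraction to force the $\db\dr$ and $\dg\dr$ entries to vanish, and solve $ac = bc = 0$, $az+cx = bz+cy = 0$ by a case analysis that differs from the paper's split ($c = 0$ versus $a = b = 0$) only in ordering.
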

\begin{proof}
Consider $\mathbf{G}_{p, q} = p \mathbf{v}_1\tew + q \sym(\mathbf{v}_1, \mathbf{v}_2)$. 
Note that for any $p, q \in \mathbb{C}$, there exists some $\mathbf{u}$ such that $\langle \mathbf{F}, \mathbf{u} \rangle = \mathbf{G}_{p,q}$ by choosing $\mathbf{u}$ such that $\langle \mathbf{u}, \mathbf{v}_1 \rangle = q$ and $\langle \mathbf{u}, \mathbf{v}_2 \rangle = p$.
Since \sepbg is closed under connecting a unary signature, 
$\mathbf{G}_{p,q}$ is \sepbg for any $p, q \in \mathbb{C}$.
Writing $\mathbf{v}_1 = (a, b, c)$ and $\mathbf{v}_2 = (x, y, z)$,
we have
$\mathbf{G}_{p, q}(\db\dr) = p a c + q(a z + c x)$
and $\mathbf{G}_{p, q}(\dg\dr) = p b c + q (b z + c y)$.
For $\mathbf{G}_{p, q}(\db\dr)$ and $\mathbf{G}_{p, q}(\dg\dr)$ to be always $0$ for every $p, q$, it must be the case that 
$c = 0$ or $a, b = 0$.

If $c = 0$, then we need $az, bz = 0$.
If further $z = 0$, then we are done. If $z \ne 0$, then $a, b = 0$ which makes $\mathbf{v}_1 = 0$ a contradiction.

If $a, b = 0$, then we need $cx, cy = 0$. If $c = 0$, then $\mathbf{v}_1 = 0$, a contradiction.
If $c \ne 0$, then $x, y = 0$, so $\mathbf{v}_1 \sim \mathbf{v}_2$, a contradiction again.
\end{proof}

\begin{proposition} \label{prop:calculation-tensor-2-sum-in-C}
    Let $\iny{\mathbf{F}} =\mathbf{v}_1\teh + \mathbf{v}_2\teh$ for $\mathbf{v}_1 \not \sim \mathbf{v}_2 \in \mathbb{C}^3$.
    Suppose $\iny{\mathbf{F}} \in \casecsupp$.
    Then, $\{\mathbf{v}_1, \mathbf{v}_2\} = \{(*, 0, *), (*, 0, *)\}$.
    In particular, $\iny{\mathbf{F}}$ is supported on $\db\dr$.
\end{proposition}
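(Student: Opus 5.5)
We will mirror the proofs of \cref{prop:calculation-sym-in-C,prop:calculation-sym-bg-r}: we use the fact that $\casecsupp$ is closed under connecting a unary signature, and we read off constraints from the binary gadgets. Since $\mathbf{v}_1 \not\sim \mathbf{v}_2$, for any $p, q \in \mathbb{C}$ we can choose $\mathbf{u} \in \mathbb{C}^3$ with $\langle \mathbf{u}, \mathbf{v}_1 \rangle = p$ and $\langle \mathbf{u}, \mathbf{v}_2 \rangle = q$. Connecting this $\mathbf{u}$ gives
\[\mathbf{G}_{p,q} = \langle \iny{\mathbf{F}}, \mathbf{u} \rangle = p\,\mathbf{v}_1\tew + q\,\mathbf{v}_2\tew,\]
and every such $\mathbf{G}_{p,q}$ lies in $\casecsupp$. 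Write $\mathbf{v}_1 = (a, b, c)$ and $\mathbf{v}_2 = (x, y, z)$. A binary signature in $\casecsupp$ vanishes at $\dg\dg$ and at $\dg\dr$. Taking $(p,q) = (1,0)$ and $(0,1)$ yields
\[b^2 = 0,\quad bc = 0,\quad y^2 = 0,\quad yz = 0.\]
Hence $b = y = 0$, so both vectors have the form $(*, 0, *)$.

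For the ``in particular'' claim, observe that a signature $\mathbf{v}_1\teh + \mathbf{v}_2\teh$ with both vectors having second coordinate zero is supported on $\{\db, \dr\}^3$. Alternatively, $\mathbf{e}_2$ annihilates it.

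We do not expect a serious obstacle here. The only point needing care is the existence of $\mathbf{u}$ realizing arbitrary $(p,q)$, and this follows from the linear independence of $\mathbf{v}_1$ and $\mathbf{v}_2$. We could even skip the gadget argument and read $b^3 + y^3 = \iny{\mathbf{F}}(\dg\dg\dg) = 0$ directly, but the pairwise gadget argument avoids any cancellation issues between the two terms.
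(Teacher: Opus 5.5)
Your proof is correct and is essentially the paper's argument: connect a unary $\mathbf{u}$ so that $\langle \mathbf{u}, \mathbf{v}_1\rangle = p$ and $\langle \mathbf{u}, \mathbf{v}_2\rangle = q$, obtaining $p\,\mathbf{v}_1^{\otimes 2} + q\,\mathbf{v}_2^{\otimes 2}$, whose $\dg\dg$ entry $pb^2 + qy^2$ must vanish for all $p, q$, which forces $b = y = 0$. The extra $\dg\dr$ constraints you record are harmless but not needed.
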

\begin{proof}
Consider $\mathbf{G}_{p, q} = p \mathbf{v}_1\tew + q \mathbf{v}_2\tew$. 
Note that for any $p, q \in \mathbb{C}$, there exists some $\mathbf{u}$ such that $\langle \mathbf{F}, \mathbf{u} \rangle = \mathbf{G}_{p,q}$ by choosing $\mathbf{u}$ such that $\langle \mathbf{u}, \mathbf{v}_1 \rangle = p$ and $\langle \mathbf{u}, \mathbf{v}_2 \rangle = q$.
Since $\casecsupp$ is closed under connecting a unary signature, 
$\mathbf{G}_{p,q} \in \casecsupp$ for any $p, q \in \mathbb{C}$.
Writing $\mathbf{v}_1 = (a, b, c)$ and $\mathbf{v}_2 = (x, y, z)$,
we have
$\mathbf{G}_{p, q}(\dg\dg) = p b^2 + q y^2$.
For $\mathbf{G}_{p, q}(\dg\dg)$ to be always $0$ for every $p, q$, it must be the case that $b = 0$ and $y = 0$.
\end{proof}

\begin{proposition} \label{prop:calculation-tensor-2-sum-in-BG-R}
    Let $\mathbf{F} =\mathbf{v}_1\teh + \mathbf{v}_2\teh$ for $\mathbf{v}_1 \not \sim \mathbf{v}_2 \in \mathbb{C}^3$.
    Suppose $\mathbf{F}$ is \sepbg.
    Then, 
    $\{\mathbf{v}_1, \mathbf{v}_2\}$ is $\{(*, *, 0), (0, 0, *)\}$ or $\{(*, *, 0), (*, *, 0)\}$.
\end{proposition}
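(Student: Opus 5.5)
I will follow the template of \cref{prop:calculation-sym-bg-r,prop:calculation-tensor-2-sum-in-C}. Consider the binary gadgets $\mathbf{G}_{p,q} = p\,\mathbf{v}_1\tew + q\,\mathbf{v}_2\tew$. Since $\mathbf{v}_1 \not\sim \mathbf{v}_2$, for every $p, q \in \mathbb{C}$ there is a unary $\mathbf{u}$ with $\langle \mathbf{u}, \mathbf{v}_1\rangle = p$ and $\langle \mathbf{u}, \mathbf{v}_2\rangle = q$, so $\mathbf{G}_{p,q} = \langle \mathbf{F}, \mathbf{u}\rangle$. The class \sepbg is closed under connecting unary signatures, so every $\mathbf{G}_{p,q}$ is \sepbg.

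Write $\mathbf{v}_1 = (a,b,c)$ and $\mathbf{v}_2 = (x,y,z)$. The entries $\mathbf{G}_{p,q}(\db\dr) = p\,ac + q\,xz$ and $\mathbf{G}_{p,q}(\dg\dr) = p\,bc + q\,yz$ must vanish for all $p, q$. This forces $ac = bc = 0$ and $xz = yz = 0$. Equivalently, $c = 0$ or $a = b = 0$, and $z = 0$ or $x = y = 0$. Each $\mathbf{v}_i$ is nonzero, so each has the form $(*,*,0)$ or $(0,0,*)$.

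Now enumerate the combinations. Both of the form $(0,0,*)$ is impossible, since then $\mathbf{v}_1 \sim \mathbf{v}_2$. The remaining cases are one vector of each form, or both of the form $(*,*,0)$, which are exactly the two stated configurations. There is no real obstacle here; the only care needed is justifying the choice of $\mathbf{u}$, which uses the linear independence of $\mathbf{v}_1, \mathbf{v}_2$.
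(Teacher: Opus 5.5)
Your proof is correct and follows the paper's argument step for step. It uses the same gadgets $\mathbf{G}_{p,q} = p\,\mathbf{v}_1\tew + q\,\mathbf{v}_2\tew$ realized by a unary $\mathbf{u}$, and the same two entries at $\db\dr$ and $\dg\dr$ forcing $ac = bc = xz = yz = 0$, with the only cosmetic difference being that you classify each $\mathbf{v}_i$ as $(*,*,0)$ or $(0,0,*)$ and then rule out both being $(0,0,*)$ via $\mathbf{v}_1 \not\sim \mathbf{v}_2$, whereas the paper branches on whether $c \ne 0$.
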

\begin{proof}
Consider $\mathbf{G}_{p, q} = p \mathbf{v}_1\tew + q \mathbf{v}_2\tew$. 
Note that for any $p, q \in \mathbb{C}$, there exists some $\mathbf{u}$ such that $\langle \mathbf{F}, \mathbf{u} \rangle = \mathbf{G}_{p,q}$ by choosing $\mathbf{u}$ such that $\langle \mathbf{u}, \mathbf{v}_1 \rangle = p$ and $\langle \mathbf{u}, \mathbf{v}_2 \rangle = q$.
Since \sepbg is closed under connecting a unary signature, 
$\mathbf{G}_{p,q}$ is \sepbg for any $p, q \in \mathbb{C}$.
Writing $\mathbf{v}_1 = (a, b, c)$ and $\mathbf{v}_2 = (x, y, z)$,
we have
$\mathbf{G}_{p, q}(\db\dr) = p ac + q xz$
and
$\mathbf{G}_{p, q}(\dg\dr) = p bc + q y z$.
For $\mathbf{G}_{p, q}(\db\dr)$ and $\mathbf{G}_{p, q}(\dg\dr)$ to be always $0$,
we need $ac, bc, xz, yz = 0$.
If $c \ne 0$, then $a, b = 0$. Since $\mathbf{v}_1 \not \sim \mathbf{v}_2$, this means one of $x, y \ne 0$ so $z = 0$ giving the desired form.
If $c = 0$, then $z = 0$ or $x, y = 0$ again giving the desired form.
\end{proof}

\subsection{A Pair of Ternary and Binary Signatures}

\begin{proposition}\label{prop:casebthree-binary}
Let $\mathbf{F} = \bzp\teh + \bzm\teh + \mathbf{e}_3\teh$.
Let $\iny{\mathbf{G}}$ be a nondegenerate symmetric binary signature.
Then,
$\holts(\X \mid \Yinv \mathbf{F}, \iny{\mathbf{G}})$ is \sph unless
$\iny{\mathbf{G}}$ is \sepbg, \swbg, or in $\mathcal{E}$.

In particular,
if $\iny{\mathbf{G}}$ is \sepbg,
then it must be of the form 
$\begin{bsmallmatrix}
    a & 0 & 0 \\
    0 & b & 0 \\
    0 & 0 & c
\end{bsmallmatrix}$, $\begin{bsmallmatrix}
    0 & a & 0 \\
    a & 0 & 0 \\
    0 & 0 & c
\end{bsmallmatrix}$, or 
$\begin{bsmallmatrix}
    a^2 & ab & 0 \\
    ab & b^2 & 0 \\
    0 & 0 & c
\end{bsmallmatrix}$ for some $ a, b, c \in \mathbb{C}$. 
If further $\iny{\mathbf{G}} \notin \mathcal{E}$,
then $\iny{\mathbf{G}} = \begin{bsmallmatrix}
    a^2 & ab & 0 \\
    ab & b^2 & 0 \\
    0 & 0 & c
\end{bsmallmatrix}$ with $a, b, c \ne 0$.
\end{proposition}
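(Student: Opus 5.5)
We work entirely in the bipartite setting $\holts(\X \mid \Yinv\mathbf{F}, \iny{\mathbf{G}})$, where $\iny{\mathbf{F}} = \Yinv \mathbf{F} = \mathbf{e}_1\teh + \mathbf{e}_2\teh + \mathbf{e}_3\teh$. As in the proof of \cref{prop:two-ternary-dichotomy-b-3}, $\iny{\mathbf{F}}$ realizes $\beq$ on the right via $\langle \iny{\mathbf{F}}, (1,1,1)\rangle$, and on the left via $\X\beq\X = \beq$. This removes the bipartite restriction: $\holts(\X \mid \iny{\mathbf{F}}, \iny{\mathbf{G}}) \ge_T \holts(\X, \iny{\mathbf{F}}, \iny{\mathbf{G}})$. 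Moreover, $\iny{\mathbf{F}}$ realizes $\spn(\mathbf{e}_1\teh,\mathbf{e}_2\teh,\mathbf{e}_3\teh)$ (\cref{subsec:gc-case-b-3}), and hence all three restricted equalities $(=_{ij})$. So we may freely use $\X$, $\beq$, the $\geneq$ signatures, and $\iny{\mathbf{G}}$ in a nonbipartite grid.

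The main tool is the family of ternary gadgets $\mathbf{H}_M := M\,\mathbf{E}$, where $\mathbf{E} = \mathbf{e}_1\teh+\mathbf{e}_2\teh+\mathbf{e}_3\teh$ and $M$ ranges over products of $\iny{\mathbf{G}}$ and $\X$, e.g.\ $M=\iny{\mathbf{G}}$ and $M=\iny{\mathbf{G}}\X$. Writing $\mathbf{g}_1,\mathbf{g}_2,\mathbf{g}_3$ for the columns of $M$, we have $\mathbf{H}_M = \mathbf{g}_1\teh+\mathbf{g}_2\teh+\mathbf{g}_3\teh$. Since $\beq$ is available, \cref{prop:two-ternary-dichotomy-a-3} applies to $\holts(\mathbf{E}, \mathbf{H}_M)$ whenever $\mathbf{H}_M$ is nondegenerate. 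It forces $\mathbf{H}_M \in \mathcal{E}$, or $\mathbf{H}_M$ to be $ij|k$ and degenerate on $ij$. The non-mixing \cref{prop:case-a3-ternary-non-mixing} then forces all such gadgets that lie outside $\mathcal{E}$ to share a single separation $ij|k$. For example, $\mathbf{H}_{\iny{\mathbf{G}}} \in \mathcal{E}$ means the columns of $\iny{\mathbf{G}}$ are supported on distinct single coordinates, up to zero columns. With $\iny{\mathbf{G}}$ symmetric and nondegenerate, this gives a diagonal matrix or a symmetric permutation-like matrix; those matrices lie in $\mathcal{E}$, are \swbg, or are \sepbg.

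The case split is on the rank of $\iny{\mathbf{G}}$. For rank $3$, the columns are independent, so \cref{prop:uniqueness-sum-lin-ind-tensor} makes the column set of $\mathbf{H}_{\iny{\mathbf{G}}}$ well defined. \cref{prop:two-ternary-dichotomy-a-3} then says each column is a multiple of some $\mathbf{e}_i$ (the $\mathcal{E}$ case), or two columns lie in $\spn(\mathbf{e}_i,\mathbf{e}_j)$ with one column a multiple of $\mathbf{e}_k$. For rank $2$, write $\iny{\mathbf{G}} = \mathbf{a}\mathbf{a}\transpose + \mathbf{b}\mathbf{b}\transpose$ (or the isotropic analogue), and use $\langle \iny{\mathbf{F}}, \mathbf{u}\rangle$ with $\mathbf{u}$ chosen to kill one coordinate to get rank-$2$ column sums. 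For each resulting support shape, I would then also test $\mathbf{H}_{\iny{\mathbf{G}}\X}$ and $\mathbf{H}_{\X\iny{\mathbf{G}}\X}$. Because $\X$ swaps $\db$ and $\dg$ and fixes $\dr$, these gadgets rule out the separations $\db\dr|\dg$ and $\dg\dr|\db$ unless $\iny{\mathbf{G}}$ is \swbg. The reason is that $\X$ converts a $\db\dr|\dg$ pattern into a $\dg\dr|\db$ pattern, and the two together trigger \cref{prop:case-a3-ternary-non-mixing}. What survives should be \sepbg, \swbg, or $\mathcal{E}$.

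For the explicit \sepbg forms, restrict to $\db\dg$. There $\X$ restricts to $\bne$ and $\mathbf{E}$ gives $[1,0,0,1]$ on both sides. Tractability of $\holbs(\bne,[1,0,0,1],\iny{\mathbf{G}}\domres{\db\dg})$ leaves three options for the $2\times 2$ block: it must be diagonal (type of $[1,0,0,1]$ preserved), antidiagonal, or rank $1$. Here we use \cref{thm:dich-sym-Boolean} together with the fact that $\bne$ and $[1,0,0,1]$ jointly pin down the $\typei$ parameters. These are exactly the three listed matrices. The first two lie in $\mathcal{E}$, as does the third when $ab=0$, which gives the final clause. I expect the main obstacle to be the rank-$2$ and degenerate-column subcases, where $\mathbf{H}_M$ may itself be degenerate or of rank $2$, so the uniqueness arguments fail. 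There I would fall back on \cref{prop:subspace-e-bg} applied to the linear span $\spn(\mathbf{g}_1\teh,\mathbf{g}_2\teh,\mathbf{g}_3\teh)\domres{ij}$ for each pair $ij$.
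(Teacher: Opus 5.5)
Your reduction to the non-bipartite problem $\holts(\X, \iny{\mathbf{F}}, \iny{\mathbf{G}})$ is legitimate, and you constrain ternary gadgets differently from the paper. The paper cites \cref{prop:two-ternary-dichotomy-b-3} in the bipartite setting. You instead combine \cref{prop:two-ternary-dichotomy-a-3} with the fact that attaching $\X$ relabels $\db\leftrightarrow\dg$. So a gadget with separation \sepbr, degenerate on $\db\dr$ and outside $\mathcal{E}$, comes with a \sepgr twin, and \cref{prop:case-a3-ternary-non-mixing} (after a domain permutation, which fixes $\mathbf{E}$) makes the pair \sph. This recovers exactly the hat form of \cref{prop:two-ternary-dichotomy-b-3}: every nondegenerate ternary gadget lies in $\mathcal{E}$, or is \sepbg and degenerate on $\db\dg$. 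The argument is valid but heavier, since \cref{prop:case-a3-ternary-non-mixing} rests on \cref{lem:not-orthogonal-pair-dichotomy-ternary}; the paper simply quotes the B3 pairwise dichotomy. Two small points about your use of $\X$. First, $\mathbf{H}_{\iny{\mathbf{G}}\X} = \mathbf{H}_{\iny{\mathbf{G}}}$ because $\X\mathbf{E} = \mathbf{E}$, so relabeling is the only role $\X$ plays. Second, that relabeling excludes the $\db\dr$ and $\dg\dr$ separations unconditionally. Your qualifier ``unless $\iny{\mathbf{G}}$ is \swbg'' is spurious, since a \swbg matrix produces \sepbg gadgets.

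What is missing is the classification itself, which is the entire content of the paper's proof. Your rank-$2$ case is only announced, and the obstacle you anticipate comes from the triple sum $\sum_i \mathbf{g}_i\teh$, which can collapse when columns are dependent (e.g.\ $\mathbf{g}_1 = -\mathbf{g}_2$). Work with pairs for every rank, as the paper does. With $u_k = 0$, attaching $\iny{\mathbf{G}}\,\langle \iny{\mathbf{F}}, \mathbf{u}\rangle$ to $\mathbf{E}$ gives $u_i^3\mathbf{g}_i\teh + u_j^3\mathbf{g}_j\teh$; the paper gets the same from the frame, $\iny{\mathbf{G}}\X\agx(\casebthreemath) = \spn(\mathbf{g}_1\teh,\mathbf{g}_2\teh,\mathbf{g}_3\teh)$. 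Whenever $\mathbf{g}_i \not\sim \mathbf{g}_j$, this pair gadget is nondegenerate with an essentially unique decomposition (\cref{prop:uniqueness-sum-lin-ind-tensor}). The constraint above then forces $\{\mathbf{g}_i,\mathbf{g}_j\} \sim \{\mathbf{e}_k,\mathbf{e}_l\}$ or $\{(*,*,0),(0,0,*)\}$. Nondegeneracy of $\iny{\mathbf{G}}$ gives at least one such pair, and the symmetry $\iny{\mathbf{G}} = \iny{\mathbf{G}}\transpose$, sometimes with a second pair, finishes in a short case check (one branch yields \swbg). No rank split and no \cref{prop:subspace-e-bg} are needed. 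In your rank-$3$ paragraph, the second alternative cannot occur. \cref{prop:two-ternary-dichotomy-a-3} requires degeneracy on $ij$, and every such $ij|k$ signature has a nonzero annihilator. A sum of three independent cubes has trivial annihilator (\cref{prop:annihilating-vector-under-gl-transform}), so only $\mathcal{E}$ survives. Your derivation of the explicit \sepbg forms by restricting to $\db\dg$ is fine; the extra $\bne$ is harmless. For the final clause you also need $c \ne 0$, which follows from nondegeneracy.
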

\begin{proof}
Let $\iny{\mathbf{F}} = \Yinv \mathbf{F} = \mathbf{e}_1\teh + \mathbf{e}_2\teh + \mathbf{e}_3\teh$.
Assume $\holts(\X \mid  \iny{\mathbf{F}}, \iny{\mathbf{G}})$ is not \sph.
Let $\mathbf{v}_i$ be the $i$th column of the matrix form of $\iny{\mathbf{G}}$.
Let $\iny{\mathbf{H}}_{ij} = \mathbf{v}_i \teh + \mathbf{v}_j\teh$.
By \cref{lem:gadget-set-general-form}
and \cref{subsec:gc-case-b-3},
\begin{align*}
    \holts(\X \mid \iny{\mathbf{F}}, \iny{\mathbf{G}}) 
    &\ge_T \holts(\X \mid \iny{\mathbf{F}}, \iny{\mathbf{G}}, \agx(\casebthreemath)) \\
    &\ge_T \holts(\X \mid \iny{\mathbf{F}}, \iny{\mathbf{G}} \X \agx(\casebthreemath)) \\
    &\ge_T \holts(\X \mid \iny{\mathbf{F}}, \iny{\mathbf{H}}_{12}, \iny{\mathbf{H}}_{23}, \iny{\mathbf{H}}_{13})
\end{align*} where $\iny{\mathbf{G}} \X \agx(\casebthreemath) = \spn(\mathbf{v}_1\teh, \mathbf{v}_2\teh, \mathbf{v}_3\teh)$.
Since $\iny{\mathbf{G}}$ is nondegenerate, it must have two columns that are linearly independent.

\cref{prop:two-ternary-dichotomy-b-3} on $\holts(\X \mid \iny{\mathbf{F}}, \iny{\mathbf{H}}_{ij})$ implies that whenever $\mathbf{v}_i \not \sim \mathbf{v}_j$, it must be the case that $\{\mathbf{v}_i, \mathbf{v}_j\} \sim \{\mathbf{e}_1, \mathbf{e}_2\}$ or $\{(*, *, 0), (0, 0, *)\}$.
Note that the latter includes the case like 
$\{\mathbf{v}_i, \mathbf{v}_j\} \sim \{\mathbf{e}_1, \mathbf{e}_3\}$.
If $\mathbf{v}_3 = 0$, then as  $\iny{\mathbf{G}}$ is symmetric, it is supported on $\db\dg$, so \sepbg.
Hence, we may assume that $\mathbf{v}_3 \ne 0$.
In the following cases, we use the fact that $\iny{\mathbf{G}}$ is symmetric.
\begin{itemize}
    \item
Suppose $\mathbf{v}_1 \not\sim \mathbf{v}_2$. 
\begin{enumerate}
  \item $\mathbf{v}_1 \sim \mathbf{e}_1$, $\mathbf{v}_2 \sim \mathbf{e}_2$:
    Then, $\iny{\mathbf{G}} = \begin{bsmallmatrix}
      * & 0 & * \\
      0 & * & * \\
      0 & 0 & *
    \end{bsmallmatrix} = \begin{bsmallmatrix}
      * & 0 & 0 \\
      0 & * & 0 \\
      0 & 0 & *
    \end{bsmallmatrix}$. 
    So $\iny{\mathbf{G}} \in \mathcal{E}$.
  \item $\mathbf{v}_1 \sim \mathbf{e}_2$, $\mathbf{v}_2 \sim \mathbf{e}_1$:
    Then, $\iny{\mathbf{G}} = \begin{bsmallmatrix}
      0 & * & * \\
      * & 0 & * \\
      0 & 0 & *
    \end{bsmallmatrix} = \begin{bsmallmatrix}
      0 & * & 0 \\
      * & 0 & 0 \\
      0 & 0 & *
    \end{bsmallmatrix}$. So $\iny{\mathbf{G}} \in \mathcal{E}$.
  \item $\mathbf{v}_1 = (*, *, 0)$, $\mathbf{v}_2 = (0, 0, *)$:
    Then, $\iny{\mathbf{G}} = \begin{bsmallmatrix}
      * & 0 & * \\
      * & 0 & * \\
      0 & * & *
    \end{bsmallmatrix} = \begin{bsmallmatrix}
      * & 0 & 0 \\
      0 & 0 & * \\
      0 & * & *
    \end{bsmallmatrix}$.
    If $\mathbf{v}_2 \sim \mathbf{v}_3$ or $\mathbf{v}_2 = 0$, then $\iny{\mathbf{G}} = \begin{bsmallmatrix}
        * & 0 & 0 \\ 0 & 0 & 0 \\ 0 & 0 & *
    \end{bsmallmatrix} \in \mathcal{E}$.
    Otherwise, applying \cref{prop:two-ternary-dichotomy-b-3} on $\holts(\X \mid \iny{\mathbf{F}}, \iny{\mathbf{H}}_{23})$ implies 
    $\iny{\mathbf{H}}_{23} = \mathbf{v}_2\teh + \mathbf{v}_3\teh$ is \sepbg.
    By \cref{prop:calculation-tensor-2-sum-in-BG-R}, 
    since $\mathbf{v}_2$ is a nonzero multiple of $\mathbf{e}_3$, $\mathbf{v}_3 = (*, *, 0)$.
    But we are given $\mathbf{v}_3 = (0, *,*)$, so in fact  $\mathbf{v}_3 = (0, *,0)$.
    Then, $\iny{\mathbf{G}} \in \mathcal{E}$.
  \item $\mathbf{v}_1 = (0, 0, *)$, $\mathbf{v}_2 = (*, *, 0)$:
    Then, $\iny{\mathbf{G}} = \begin{bsmallmatrix}
      0 & * & * \\
      0 & * & * \\
      * & 0 & *
    \end{bsmallmatrix}= \begin{bsmallmatrix}
      0 & 0 & * \\
      0 & * & 0 \\
      * & 0 & *
    \end{bsmallmatrix}$,
    and applying a similar argument on 
    $\holts(\X \mid \iny{\mathbf{F}}, \iny{\mathbf{H}}_{13})$
    gives $\iny{\mathbf{G}} \in \mathcal{E}$.
\end{enumerate}

\item
Suppose $\mathbf{v}_1 \not \sim \mathbf{v}_3$.
\begin{enumerate}
  \item $\mathbf{v}_1 \sim \mathbf{e}_1$, $\mathbf{v}_3 \sim \mathbf{e}_2$:
    Then, $\iny{\mathbf{G}} = \begin{bsmallmatrix}
      * & * & 0 \\
      0 & * & * \\
      0 & * & 0
    \end{bsmallmatrix} =\begin{bsmallmatrix}
    * & 0 & 0 \\
    0 & * & * \\
    0 & * & 0
  \end{bsmallmatrix}$,
    and applying a similar argument on 
    $\holts(\X \mid \iny{\mathbf{F}}, \iny{\mathbf{H}}_{23})$, 
    gives $\iny{\mathbf{G}} \in \mathcal{E}$.
  \item $\mathbf{v}_1 \sim \mathbf{e}_2$, $\mathbf{v}_3 \sim \mathbf{e}_1$:
    Then, $\iny{\mathbf{G}} = \begin{bsmallmatrix}
      0 & * & * \\
      * & * & 0 \\
      0 & * & 0
    \end{bsmallmatrix} = \begin{bsmallmatrix}
      0 & * & 0 \\
      * & * & 0 \\
      0 & 0 & 0
    \end{bsmallmatrix}$,
    so $\iny{\mathbf{G}}$ is \sepbg.
  \item $\mathbf{v}_1 = (*, *, 0)$, $\mathbf{v}_3 = (0, 0, *)$:
    Then, $\iny{\mathbf{G}} = \begin{bsmallmatrix}
      * & * & 0 \\
      * & * & 0 \\
      0 & * & *
    \end{bsmallmatrix} = \begin{bsmallmatrix}
      * & * & 0 \\
      * & * & 0 \\
      0 & 0 & *
    \end{bsmallmatrix}$
    so $\iny{\mathbf{G}}$ is \sepbg.
  \item $\mathbf{v}_1 = (0, 0, *)$, $\mathbf{v}_3 = (*, *, 0)$:
    Then, $\iny{\mathbf{G}} = \begin{bsmallmatrix}
      0 & * & * \\
      0 & * & * \\
      * & * & 0
    \end{bsmallmatrix}= \begin{bsmallmatrix}
      0 & 0 & * \\
      0 & * & * \\
      * & * & 0
    \end{bsmallmatrix}$,
    and applying a similar argument on 
    $\holts(\X \mid \iny{\mathbf{F}}, \iny{\mathbf{H}}_{12})$, 
    gives either $\iny{\mathbf{G}} \in \mathcal{E}$ or \swbg.
\end{enumerate}

\item
The argument for $\mathbf{v}_2 \not \sim \mathbf{v}_3$ is similar.
\end{itemize}

We derive the specific forms for $\iny{\mathbf{G}}$. 
Suppose $\iny{\mathbf{G}}$ is \sepbg.
Then, we must have $\iny{\mathbf{G}} = \begin{bsmallmatrix}
    x & y & 0 \\
    y & z & 0 \\
    0 & 0 & c
\end{bsmallmatrix}$ for some $x, y, z, c \in \mathbb{C}$.
By \cref{subsec:gc-case-b-3}, we have 
\begin{align*}
    \holts(\X \mid \iny{\mathbf{F}}, \iny{\mathbf{G}}) &\ge_T \holts(\iny{\mathbf{F}}, \iny{\mathbf{G}}) \\
    &\ge_T \holbs(\iny{\mathbf{F}}\domres{\db\dg}, \iny{\mathbf{G}}\domres{\db\dg}) \\
    &=_T \holbs([1, 0, 0, 1], [x, y, z]) \,.
\end{align*} 
Applying the Boolean domain dichotomy, $[x, y, z]$ must be $[*, 0, *]$, $[0, *, 0]$ or degenerate.
Now, suppose further that $\iny{\mathbf{G}} \notin \mathcal{E}$.
The first two forms give $\iny{\mathbf{G}} \in \mathcal{E}$, so $[x, y, z]$ must be degenerate, giving $[x, y, z] = [a^2, ab, b^2]$ for some $a, b \in \mathbb{C}$.
If $a = 0$ or $b = 0$, then $\iny{\mathbf{G}} \in \mathcal{E}$, so $a, b \ne 0$. 
If $c = 0$, then $\iny{\mathbf{G}}$ is degenerate, so $c \ne 0$. 
\end{proof}

\begin{proposition}\label{prop:casea3-binary}
Let $\mathbf{F} = \mathbf{e}_1\teh + \mathbf{e}_2\teh + \mathbf{e}_3\teh$.
Let $\mathbf{G}$ be a nondegenerate symmetric binary signature. 
Then $\holts(\mathbf{F}, \mathbf{G})$ is 
\sph unless 
$\mathbf{G}$ is $ij|k$, or \swhelper{ij;k} or in $\mathcal{E}$,
for some $\{i,j,k\} = \{1,2,3\}$.

In particular,
\begin{enumerate}
\item 
If $\mathbf{G}$ is \sepbg but not in $\mathcal{E}$,
then $\mathbf{G} = \begin{bsmallmatrix}
    a^2 & ab & 0 \\
    ab & b^2 & 0 \\
    0 & 0 & c
\end{bsmallmatrix}$ with $a, b, c \ne 0$.
\item
If $\mathbf{G}$ is \sepgr but not in $\mathcal{E}$,
then $\mathbf{G} = \begin{bsmallmatrix}
    c & 0 & 0 \\
    0 & a^2 & ab \\
    0 & ab & b^2
\end{bsmallmatrix}$ with $a, b, c \ne 0$.
\item
If $\mathbf{G}$ is \sepbr but not in $\mathcal{E}$,
then $\mathbf{G} = \begin{bsmallmatrix}
    a^2 & 0 & ab \\
    0 & c & 0 \\
    ab & 0 & b^2
\end{bsmallmatrix}$ with $a, b, c \ne 0$.
\end{enumerate}
\end{proposition}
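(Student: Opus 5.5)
We would mirror the proof of \cref{prop:casebthree-binary}, now working directly in the non-bipartite setting $\holts(\mathbf{F}, \mathbf{G})$. Assume $\holts(\mathbf{F}, \mathbf{G})$ is not \sph, and let $\mathbf{v}_1, \mathbf{v}_2, \mathbf{v}_3$ be the columns of the matrix of $\mathbf{G}$.

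First we realize ternary signatures. By \cref{subsec:gc-case-a-3}, $\mathbf{F}$ realizes $\ag(\caseathreemath) = \spn(\mathbf{e}_1\teh, \mathbf{e}_2\teh, \mathbf{e}_3\teh)$. Attaching $\mathbf{G}$ to every edge gives $\mathbf{G}\mathbf{e}_i\teh = \mathbf{v}_i\teh$, so we obtain $\mathbf{H}_{ij} = \mathbf{v}_i\teh + \mathbf{v}_j\teh$ for all $i \ne j$. Whenever $\mathbf{v}_i \not\sim \mathbf{v}_j$ and both are nonzero, $\mathbf{H}_{ij}$ is nondegenerate. Then \cref{prop:two-ternary-dichotomy-a-3} applied to $\holts(\mathbf{F}, \mathbf{H}_{ij})$ forces one of two outcomes:
\begin{itemize}
\item $\mathbf{H}_{ij} \in \mathcal{E}$, so that $\{\mathbf{v}_i, \mathbf{v}_j\}$ is a pair of distinct standard basis directions;
\item $\mathbf{H}_{ij}$ is $kl|m$ and degenerate on $kl$, so that $\{\mathbf{v}_i, \mathbf{v}_j\} \sim \{\mathbf{w}, \mathbf{e}_m\}$ with $\mathbf{w}$ supported on $\{k,l\}$.
\end{itemize}
This last step uses \cref{prop:uniqueness-sum-lin-ind-tensor} together with the explicit forms listed in \cref{prop:two-ternary-dichotomy-a-3}.

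Next we do the case analysis on the symmetric matrix. Since $\mathbf{G}$ is nondegenerate, some pair of columns is independent. Unlike the $\casebthreemath$ situation, $\mathbf{F}$ is invariant under all permutations of the domain, so we may permute $\db, \dg, \dr$ freely to reduce the number of cases. For each admissible pattern of a pair $(\mathbf{v}_i, \mathbf{v}_j)$, we impose symmetry of $\mathbf{G}$ to zero out entries, exactly as in the bulleted case list of \cref{prop:casebthree-binary}. When the third column is independent of one of the first two, we use the corresponding $\mathbf{H}_{ik}$ again to pin it down. If a column vanishes, $\mathbf{G}$ is supported on two domain elements and is therefore $ij|k$. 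Each branch should terminate in a matrix in $\mathcal{E}$, a domain-separated matrix $ij|k$, or a \swhelper{ij;k} matrix.

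The \swhelper{ij;k} outcome is the symmetric form $\begin{bsmallmatrix} 0&0&*\\0&0&*\\ *&*&0\end{bsmallmatrix}$ and its permutations. It arises when $\mathbf{v}_1, \mathbf{v}_2 \sim \mathbf{e}_3$ and $\mathbf{v}_3 = (*,*,0)$; in that case every $\mathbf{H}_{ij}$ is of the allowed domain-separated type.

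Finally we derive the explicit forms. If $\mathbf{G}$ is \sepbg, write $\mathbf{G} = \begin{bsmallmatrix} x&y&0\\y&z&0\\0&0&c\end{bsmallmatrix}$. Since $\mathbf{F}$ realizes $(=_{\db\dg})$, we get $\holts(\mathbf{F},\mathbf{G}) \ge_T \holbs([1,0,0,1],[x,y,z])$. By \cref{thm:dich-sym-Boolean}, $[x,y,z]$ must be $[*,0,*]$, $[0,*,0]$, or degenerate.
\begin{itemize}
\item The first two forms put $\mathbf{G}$ in $\mathcal{E}$.
\item Otherwise $[x,y,z] = [a^2,ab,b^2]$. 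Here $a,b \ne 0$, since otherwise $\mathbf{G} \in \mathcal{E}$, and $c \ne 0$ by nondegeneracy.
\end{itemize}
The \sepgr and \sepbr cases follow by the domain permutation symmetry of $\mathbf{F}$.

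The main obstacle is making the case enumeration exhaustive and airtight. In particular, one must handle columns that are zero or proportional, where $\mathbf{H}_{ij}$ is degenerate and \cref{prop:two-ternary-dichotomy-a-3} gives no information. For these we fall back on symmetry of $\mathbf{G}$ and the remaining independent pairs.
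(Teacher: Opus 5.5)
Your proposal is correct and follows the paper's proof: you realize $\mathbf{v}_i\teh+\mathbf{v}_j\teh$ from $\mathbf{G}\,\ag(\caseathreemath)$ and use \cref{prop:two-ternary-dichotomy-a-3} to force each independent pair of columns to consist of a multiple of some $\mathbf{e}_m$ and a vector with vanishing $m$th coordinate; you then run the symmetric-matrix case analysis as in \cref{prop:casebthree-binary} and get the explicit forms from $(=_{ij})$ and the Boolean dichotomy. Using the domain-permutation invariance of $\mathbf{F}$ to cut down the cases is a harmless streamlining, and the case analysis you left as ``should terminate'' does close, since every branch lands in some $ij|k$ or \swhelper{ij;k} pattern.
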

\begin{proof}
The proof is similar to \cref{prop:casebthree-binary}. Using $\mathbf{G} \ag(\caseathreemath) = \spn(\mathbf{v}_1\teh, \mathbf{v}_2\teh, \mathbf{v}_3\teh)$,
for all $i \ne j$ we have
$\holts(\mathbf{F}, \mathbf{G}) \ge_T \holts(\mathbf{F}, \mathbf{v}_i\teh
+ \mathbf{v}_j\teh)$.
By \cref{prop:two-ternary-dichotomy-a-3}, $\{\mathbf{v}_i, \mathbf{v}_j\} \sim \{\mathbf{e}_k, \mathbf{e}_\ell\}$ for some $k \ne \ell$,
or $\{(*, *, 0), (0, 0, *)\}$, $\{(*, 0, *), (0, *, 0)\}$, or $\{(0, *, *), (*, 0, 0)\}$.

We derive the specific forms for $\mathbf{G}$. 
Suppose $\mathbf{G}$ is \sepbg. 
The other cases are similar.
Then, we must have $\mathbf{G} = \begin{bsmallmatrix}
    x & y & 0 \\
    y & z & 0 \\
    0 & 0 & c
\end{bsmallmatrix}$ for some $x, y, z, c \in \mathbb{C}$.
By \cref{subsec:gc-case-a-3}, we have 
\begin{align*}
    \holts(\mathbf{F}, \mathbf{G}) &\ge_T 
    \holbs(\mathbf{F}\domres{\db\dg}, \mathbf{G}\domres{\db\dg}) \\
    &=_T \holbs([1, 0, 0, 1], [x, y, z]) \,.
\end{align*} 
Applying the Boolean domain dichotomy, 
$[x, y, z]$ must be $[*, 0, *]$, $[0, *, 0]$ or degenerate.
Suppose further that $\mathbf{G} \notin \mathcal{E}$.
The first two cases give $\mathbf{G} \in \mathcal{E}$, so $[x, y, z]$ must be degenerate, giving $[x, y, z] = [a^2, ab, b^2]$ for some $a, b \in \mathbb{C}$.
If $a = 0$ or $b = 0$, then $\mathbf{G}  \in \mathcal{E}$, so $a, b \ne 0$. 
If $c = 0$, then $\mathbf{G} $ is degenerate, so $c \ne 0$. 
\end{proof}

\begin{proposition}\label{prop:casecosep-allowed-binary}
Let $\mathbf{F} = \sym(\bzp, \bzp, \bzm) + \mathbf{e}_3\teh$.
Let $\iny{\mathbf{G}}$ be a nondegenerate symmetric binary signature.
Then, $\holts(\mathbf{X} \mid \Yinv \mathbf{F}, \iny{\mathbf{G}})$ is \sph unless 
$\iny{\mathbf{G}}$ is \sepbg, \swbg, or 
in $\casecsupp$.

In particular, 
    If $\iny{\mathbf{G}}$ is \sepbg,
    then it must be of the form 
    $\begin{bsmallmatrix}
    a & b & 0 \\
    b & 0 & 0 \\
    0 & 0 & c
\end{bsmallmatrix}$
or
    $\begin{bsmallmatrix}
    a^2 & ab & 0 \\
    ab & b^2 & 0 \\
    0 & 0 & c
\end{bsmallmatrix}$ for some $a, b, c \in \mathbb{C}$.
If further $\iny{\mathbf{G}} \notin \casecsupp$, then $\iny{\mathbf{G}} = \begin{bsmallmatrix}
    a^2 & ab & 0 \\
    ab & b^2 & 0 \\
    0 & 0 & c
\end{bsmallmatrix}$ with $b, c \ne 0$.
\end{proposition}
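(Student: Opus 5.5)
I will follow the template of \cref{prop:casebthree-binary}, replacing the frame of \casebthree by the frame of \casecosep. Work in $\holts(\X \mid \iny{\mathbf{F}}, \iny{\mathbf{G}})$ with $\iny{\mathbf{F}} = \Yinv\mathbf{F} = \sym(\mathbf{e}_1,\mathbf{e}_1,\mathbf{e}_2) + \mathbf{e}_3\teh$.

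\textbf{Step 1: realize ternary signatures from $\iny{\mathbf{G}}$.} Let $\mathbf{v}_1,\mathbf{v}_2,\mathbf{v}_3$ be the columns of $\iny{\mathbf{G}}$. By \cref{lem:gadget-set-general-form} and \cref{subsec:gc-case-c1-sep}, every signature of $\agx(\casecosepmath) = \spn(\mathbf{e}_1\teh, \sym(\mathbf{e}_1,\mathbf{e}_1,\mathbf{e}_2), \mathbf{e}_3\teh)$ is realizable on the right. Attaching $\X$ and then $\iny{\mathbf{G}}$ to every edge sends $\mathbf{e}_1,\mathbf{e}_2,\mathbf{e}_3$ to $\iny{\mathbf{G}}\X\mathbf{e}_1 = \mathbf{v}_2$, $\mathbf{v}_1$ and $\mathbf{v}_3$ respectively. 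So the right side realizes every signature in $\spn(\mathbf{v}_2\teh, \sym(\mathbf{v}_2,\mathbf{v}_2,\mathbf{v}_1), \mathbf{v}_3\teh)$. In particular it realizes $\iny{\mathbf{H}}_1 = \sym(\mathbf{v}_2,\mathbf{v}_2,\mathbf{v}_1)$ and $\iny{\mathbf{H}}_2 = \mathbf{v}_2\teh + \mathbf{v}_3\teh$. These are nondegenerate when the corresponding columns are independent.

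\textbf{Step 2: apply the pair dichotomy.} Apply \cref{prop:two-ternary-dichotomy-C-1-sep} to each of $\Y\iny{\mathbf{H}}_1$ and $\Y\iny{\mathbf{H}}_2$. Each realized signature must be either in $\casecsupp$, or \sepbg with the tractable Boolean form described there. Translate these conditions into conditions on the columns.
\begin{itemize}
\item For $\iny{\mathbf{H}}_1$: \cref{prop:calculation-sym-in-C} gives $\mathbf{v}_2 \sim \mathbf{e}_1$ or $\{\mathbf{v}_1,\mathbf{v}_2\} = \{(*,0,*),(*,0,*)\}$. \cref{prop:calculation-sym-bg-r} gives both columns of the form $(*,*,0)$.
\item For $\iny{\mathbf{H}}_2$: \cref{prop:calculation-tensor-2-sum-in-C} gives both columns of the form $(*,0,*)$. \cref{prop:calculation-tensor-2-sum-in-BG-R} gives $\{(*,*,0),(0,0,*)\}$ or $\{(*,*,0),(*,*,0)\}$.
\end{itemize}
I will also use the other pairings, i.e.\ the analogous realizable combinations of $\mathbf{v}_2$ with $\mathbf{v}_1$ and with $\mathbf{v}_3$. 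Degenerate cases ($\mathbf{v}_2 = 0$, or parallel columns) are handled separately: if $\mathbf{v}_2 = 0$, symmetry already forces zeros that can be checked directly.

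\textbf{Step 3: combine with symmetry.} Enumerate the cases as in \cref{prop:casebthree-binary}, using $\iny{\mathbf{G}}$ symmetric to transfer zeros between rows and columns. Each case should land in one of three classes:
\begin{itemize}
\item $\casecsupp$, i.e.\ support $\begin{bsmallmatrix} *&*&*\\ *&0&0\\ *&0&* \end{bsmallmatrix}$, which arises when $\mathbf{v}_2 \sim \mathbf{e}_1$ or all columns are $(*,0,*)$;
\item \sepbg;
\item \swbg.
\end{itemize}
A mixed configuration such as $\mathbf{v}_2 = (*,*,0)$, $\mathbf{v}_3 = (0,0,*)$ forces the off-block entries to vanish, which gives \sepbg. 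The configuration $\mathbf{v}_2 = (0,0,*)$, $\mathbf{v}_3 = (*,*,0)$ gives \swbg, after checking the remaining zero pattern via $\iny{\mathbf{H}}_1$.

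I expect this bookkeeping to be the main obstacle. The two alternatives in \cref{prop:two-ternary-dichotomy-C-1-sep} interact, and there are several column pairs, so each branch must be checked to close off. I would organize the enumeration by whether $\mathbf{v}_2(\dr) = 0$.

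\textbf{Step 4: the explicit \sepbg forms.} Suppose $\iny{\mathbf{G}} = \begin{bsmallmatrix} x&y&0\\ y&z&0\\ 0&0&c \end{bsmallmatrix}$. Since $\iny{\mathbf{F}}$ realizes $\sym(\mathbf{e}_1,\mathbf{e}_1,\mathbf{e}_2)$ on the right, we get the reduction to $\holbs([0,0,1,0] \mid \cdot)$ of \cref{subsec:gc-case-c1-bg}. Realize $\iny{\mathbf{G}}\X\sym(\mathbf{e}_1,\mathbf{e}_1,\mathbf{e}_2)$ on the right, restricted to $\db\dg$, which is $\sym(\mathbf{v}_2,\mathbf{v}_2,\mathbf{v}_1)$ projected. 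By \cref{cor:bipartite-boolean-dichotomy-explicit} this must be $[*,*,0,0]$ or degenerate. Hence either $(y,z) \sim (1,0)$, meaning $z = 0$, or $[x,y,z]$ is degenerate, i.e.\ equal to $[a^2,ab,b^2]$. The first option has $z = 0$ and is the form $\begin{bsmallmatrix} a&b&0\\ b&0&0\\ 0&0&c \end{bsmallmatrix}$, which lies in $\casecsupp$.

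If in addition $\iny{\mathbf{G}} \notin \casecsupp$, only the degenerate option remains, and there $b \neq 0$ (since $b = 0$ gives $z = 0$, hence membership in $\casecsupp$). Finally $c \neq 0$, because otherwise $\iny{\mathbf{G}}$ would be degenerate.
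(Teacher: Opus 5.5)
Your Steps 1, 2 and 4 follow the paper's proof: the same realizable space $\iny{\mathbf{G}}\X\agx(\casecosepmath)=\spn(\mathbf{v}_2\teh,\sym(\mathbf{v}_2,\mathbf{v}_2,\mathbf{v}_1),\mathbf{v}_3\teh)$, the same two gadgets, the same calculation propositions, and the same $\holbs([0,0,1,0]\mid\cdot)$ reduction for the explicit forms. Step~3, however, has a genuine gap. Testing $\iny{\mathbf{H}}_1$ and $\iny{\mathbf{H}}_2$ separately against $\mathbf{F}$ does not close every case.

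Take $\iny{\mathbf{G}}=\begin{bsmallmatrix} x&0&y\\ 0&0&z\\ y&z&0\end{bsmallmatrix}$ with $x,z\neq 0$. This is your configuration $\mathbf{v}_2=(0,0,z)$, $\mathbf{v}_3=(y,z,0)$, with $\mathbf{v}_1=(x,0,y)\not\sim\mathbf{v}_2$.
\begin{itemize}
\item It is nondegenerate, since $\det=-xz^2$.
\item It is not \sepbg, since its $(\dg,\dr)$ entry is $z\neq 0$.
\item It is not \swbg, since $x\neq 0$.
\item It is not in $\casecsupp$, again because the $(\dg,\dr)$ entry is $z\neq 0$.
\end{itemize}
Yet both of your test signatures pass \cref{prop:two-ternary-dichotomy-C-1-sep}:
\begin{itemize}
\item $\iny{\mathbf{H}}_1=\sym((0,0,z),(0,0,z),(x,0,y))$ is supported on $\db\dr$, hence lies in $\casecsupp$.
\item $\iny{\mathbf{H}}_2=(0,0,z)\teh+(y,z,0)\teh$ is \sepbg with degenerate $\db\dg$-part. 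So $\Y\iny{\mathbf{H}}_2$ has the allowed form $(a,b,0)\teh+c\,\mathbf{e}_3\teh$.
\end{itemize}
Your claim that ``checking via $\iny{\mathbf{H}}_1$'' yields the \swbg pattern is therefore false. It forces $x=0$ only in the subcase $\mathbf{v}_1\sim\mathbf{v}_2$.

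Nor can you borrow ``other pairings'' from \cref{prop:casebthree-binary}. There the frame $\spn(\mathbf{e}_1\teh,\mathbf{e}_2\teh,\mathbf{e}_3\teh)$ treats the three columns symmetrically. The \casecosep frame only yields combinations of $\mathbf{v}_2\teh$, $\sym(\mathbf{v}_2,\mathbf{v}_2,\mathbf{v}_1)$ and $\mathbf{v}_3\teh$, so nothing pairs $\mathbf{v}_1$ with $\mathbf{v}_3$ directly.

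The missing idea is that hardness in this case comes from the two realized signatures interacting with each other, not from either one interacting with $\mathbf{F}$. The paper closes the case with the non-mixing result \cref{prop:case-c1-sep-ternary-non-mixing}. Here $\iny{\mathbf{H}}_1\in\casecsupp$ is not \sepbg (\cref{prop:calculation-sym-bg-r}), while $\iny{\mathbf{H}}_2$ is \sepbg but not in $\casecsupp$ (\cref{prop:calculation-tensor-2-sum-in-C}). Transported back by $\Y$, such a pair is \sph.

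Within your own framework you could instead use that all of $V$ is realizable, and apply \cref{prop:two-ternary-dichotomy-C-1-sep} to $\iny{\mathbf{H}}_1+\iny{\mathbf{H}}_2$. Its entries satisfy:
\begin{itemize}
\item the $\dg\dg\dg$ entry is $z^3\neq 0$, so it is not in $\casecsupp$;
\item the $\db\dr\dr$ entry is $xz^2\neq 0$, so it is not \sepbg;
\item the $\dg\dr\dr$ entry is $0$, which together with the previous two entries rules out a degenerate $\mathbf{w}\teh$.
\end{itemize}
Either way, this case must be handled explicitly. The other branches of your enumeration do close by combining the column conditions from $\iny{\mathbf{H}}_1$ and $\iny{\mathbf{H}}_2$ with symmetry, as you intend.
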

\begin{proof}
Let $\iny{\mathbf{F}} = \Yinv \mathbf{F}$.
Assume $\holts(\mathbf{X} \mid \iny{\mathbf{F}}, \iny{\mathbf{G}})$ is not \sph.
Let $\mathbf{v}_i$ be the $i$th column of $\iny{\mathbf{G}}$.
If $\mathbf{v}_2 = 0$, then $\iny{\mathbf{G}} \in \casecsupp$
as $\iny{\mathbf{G}} $ is also symmetric, so we may assume $\mathbf{v}_2 \ne 0$.

By \cref{lem:gadget-set-general-form} and \cref{subsec:gc-case-c1-sep},
\begin{align*}
    \holts(\X \mid \iny{\mathbf{F}}, \iny{\mathbf{G}}) &\ge_T \holts(\X \mid \iny{\mathbf{F}}, V) \\
    &\ge_T \holts(\X \mid \iny{\mathbf{F}}, \iny{\mathbf{H}}_{12}, \iny{\mathbf{H}}_{23})
\end{align*} where $V = \iny{\mathbf{G}} \X \agx(\casecosepmath) = \spn(\mathbf{v}_2\teh, \sym(\mathbf{v}_2, \mathbf{v}_2, \mathbf{v}_1), \mathbf{v}_3\teh)$,
$\iny{\mathbf{H}}_{12} = \sym(\mathbf{v}_2, \mathbf{v}_2, \mathbf{v}_1) \in V$ 
and
$\iny{\mathbf{H}}_{23} = \mathbf{v}_2\teh + \mathbf{v}_3\teh \in V$.
We use $\iny{\mathbf{H}}_{12}$ and $\iny{\mathbf{H}}_{23}$ to analyze what forms $\mathbf{v}_1, \mathbf{v}_2$ can take and $\mathbf{v}_2, \mathbf{v}_3$ can take.

Suppose $\mathbf{v}_1 \not \sim \mathbf{v}_2$.
By \cref{prop:two-ternary-dichotomy-C-1-sep} on $\holts(\X \mid \iny{\mathbf{F}}, \iny{\mathbf{H}}_{12})$,
$\iny{\mathbf{H}}_{12}$ must be \sepbg or in $\casecsupp$. 
\begin{enumerate}
\item 
If $\iny{\mathbf{H}}_{12}$ is \sepbg, then by \cref{prop:calculation-sym-bg-r}, $\{\mathbf{v}_1, \mathbf{v}_2\} = \{(*, *, 0), (*, *, 0) \}$.
Then, by the symmetry of $\iny{\mathbf{G}}$, $\iny{\mathbf{G}}$ is \sepbg, so we are done.
\item 
If $\iny{\mathbf{H}}_{12} \in \casecsupp$, then by \cref{prop:calculation-sym-in-C}, $\mathbf{v}_2 \sim \mathbf{e}_1$ or $\{\mathbf{v}_1, \mathbf{v}_2\} = \{(*, 0, *), (*, 0, *) \}$.
\begin{enumerate}
\item 
If $\mathbf{v}_2 \sim \mathbf{e}_1$, then $\iny{\mathbf{G}} \in \casecsupp$, so we are done. 
\item If $\{\mathbf{v}_1, \mathbf{v}_2\} = \{(*, 0, *), (*, 0, *) \}$, we analyze this case later. 
\end{enumerate}
\end{enumerate}
Therefore, the remaining cases from the analysis of $\iny{\mathbf{H}}_{12}$ is $\mathbf{v}_1 \sim \mathbf{v}_2$ or $\{\mathbf{v}_1, \mathbf{v}_2\} = \{(*, 0, *), (*, 0, *) \}$.

Suppose $\mathbf{v}_2 \not \sim \mathbf{v}_3$.
By \cref{prop:two-ternary-dichotomy-C-1-sep} on $\holts(\X \mid \iny{\mathbf{F}}, \iny{\mathbf{H}}_{23})$,
$\iny{\mathbf{H}}_{23}$ must be \sepbg or in $\casecsupp$. 
\begin{enumerate}
    \item 
If $\iny{\mathbf{H}}_{23} \in \casecsupp$, then by \cref{prop:calculation-tensor-2-sum-in-C}, $\{\mathbf{v}_2, \mathbf{v}_3\} = \{(*, 0, *), (*, 0, *) \}$.
Then, by the symmetry of $\iny{\mathbf{G}}$, $\iny{\mathbf{G}} \in \casecsupp$, so we are done.
\item 
If $\iny{\mathbf{H}}_{23}$ is \sepbg, then by \cref{prop:calculation-tensor-2-sum-in-BG-R}, $\{\mathbf{v}_2, \mathbf{v}_3\} = \{(*, *, 0), (*, *, 0) \}$ or $\{(*, *, 0), (0, 0, *) \}$. 
\begin{enumerate}
\item 
    If $\mathbf{v}_2 = (*, *, 0)$ and $\mathbf{v}_3 = (*, *, 0)$, then $\iny{\mathbf{H}}_{23}$ is supported on $\db\dg$. 
    Because $\mathbf{v}_2 \not \sim \mathbf{v}_3$, we have  $\proj_{23} \mathbf{v}_2 \not \sim \proj_{23} \mathbf{v}_3$, so $\iny{\mathbf{H}}_{23}\domres{\db\dg} = (\proj_{23}\mathbf{v}_2)\teh + (\proj_{23}\mathbf{v}_3)\teh$ is nondegenerate. 
    By \cref{prop:two-ternary-dichotomy-C-1-sep}, $\holts(\X \mid \iny{\mathbf{F}}, \iny{\mathbf{H}}_{23}) =_T \holts(\mathbf{F}, \Y \iny{\mathbf{H}}_{23})$ is \sph because
    \[\holbs(\mathbf{F}\domres{\db\dg}, (\Y \iny{\mathbf{H}}_{23})\domres{\db\dg}) =_T 
    \holbs(\bne \mid [0, 1, 0, 0], \iny{\mathbf{H}}_{23}\domres{\db\dg})\] 
    is \sph by the Boolean domain dichotomy.
\item 
    If $\mathbf{v}_2 = (*, *, 0)$ and $\mathbf{v}_3 = (0, 0, *)$, then 
    by the symmetry of $\iny{\mathbf{G}}$,
    $\iny{\mathbf{G}}$ is \sepbg. 
    \item If $\mathbf{v}_2 = (0, 0, *)$ and $\mathbf{v}_3 = (*, *, 0)$, we analyze later.
    \end{enumerate}
\end{enumerate}
Therefore, the remaining cases from the analysis of $\iny{\mathbf{H}}_{23}$ is $\mathbf{v}_2 \sim \mathbf{v}_3$
or [$\mathbf{v}_2 = (0, 0, *) \wedge \mathbf{v}_3 = (*, *, 0)$].

We now consider the four combinations of possibilities for $\{\mathbf{v}_1, \mathbf{v}_2 \}$ and $\{\mathbf{v}_2, \mathbf{v}_3\}$.
\begin{enumerate}
    \item $\mathbf{v}_1 \sim \mathbf{v}_2$ and $\mathbf{v}_2 \sim \mathbf{v}_3$:
    Since $\mathbf{v}_2 \ne 0$, this implies $\iny{\mathbf{G}}$ is degenerate, contrary to  assumption.
    \item $\mathbf{v}_1 \sim \mathbf{v}_2$, $\mathbf{v}_2 = (0, 0, *)$ and $\mathbf{v}_3 = (*, *, 0)$:
    Then, $\iny{\mathbf{G}} = \begin{bsmallmatrix}
        0 & 0 & * \\
        0 & 0 & * \\
        * & * & 0
    \end{bsmallmatrix}$, so $\iny{\mathbf{G}}$ is \swbg.
    \item $\mathbf{v}_1 = (*, 0, *)$, $\mathbf{v}_2 = (*, 0, *)$ and $\mathbf{v}_2 \sim \mathbf{v}_3$:
    Then, $\iny{\mathbf{G}} = \begin{bsmallmatrix}
        * & * & * \\
        0 & 0 & 0 \\
        * & * & *
    \end{bsmallmatrix} = \begin{bsmallmatrix}
        * & 0 & * \\
        0 & 0 & 0 \\
        * & 0 & *
    \end{bsmallmatrix}$, so $\iny{\mathbf{G}} \in \casecsupp$.
    \item $\mathbf{v}_1 = (*, 0, *)$, $\mathbf{v}_2 = (*, 0, *)$, $\mathbf{v}_2 = (0, 0, *)$ and $\mathbf{v}_3 = (*, *, 0)$:
    Then, $\iny{\mathbf{G}} = \begin{bsmallmatrix}
        * & 0 & * \\
        0 & 0 & * \\
        * & * & 0
    \end{bsmallmatrix} = \begin{bsmallmatrix}
        x & 0 & y \\
        0 & 0 & z \\
        y & z & 0
    \end{bsmallmatrix}$ for some $x, y, z \in \mathbb{C}$.
    Since $\mathbf{v}_2 \ne 0$, we have $z \ne 0$.
    If $x = 0$, then $\iny{\mathbf{G}}$ is \swbg, so we may assume $x \ne 0$.
    Then, we see that $\iny{\mathbf{H}}_{12} = \sym((0, 0, z), (0, 0, z), (x, 0, y))$ is in $\casecsupp$ but not \sepbg by \cref{prop:calculation-sym-bg-r}. 
    Also, we see that $\iny{\mathbf{H}}_{23} = (0, 0, z)\teh + (y, z, 0)\teh$ is \sepbg but not in $\casecsupp$ by \cref{prop:calculation-tensor-2-sum-in-C}. 
    Then, $\holts(\X \mid \iny{\mathbf{F}}, \iny{\mathbf{H}}_{12}, \iny{\mathbf{H}}_{23})$ is \sph by \cref{prop:case-c1-sep-ternary-non-mixing}.
\end{enumerate}

We derive the specific forms for $\iny{\mathbf{G}}$. 
Suppose $\iny{\mathbf{G}}$ is \sepbg.
Then, we must have $\iny{\mathbf{G}} = \begin{bsmallmatrix}
    x & y & 0 \\
    y & z & 0 \\
    0 & 0 & c
\end{bsmallmatrix}$ for some $x, y, z, c \in \mathbb{C}$.
By \cref{subsec:gc-case-c1-sep}, we have 
\begin{align*}
    \holts(\X \mid \iny{\mathbf{F}}, \iny{\mathbf{G}}) &\ge_T \holbs([0, 0, 1, 0], \iny{\mathbf{G}}\domres{\db\dg}) \\
    &=_T \holbs([0, 0, 1, 0] \mid [x, y, z]) \,.
\end{align*} 
Applying the Boolean domain dichotomy, $[x, y, z]$ must be $[*, *, 0]$ or degenerate.
Suppose further that $\iny{\mathbf{G}} \notin \casecsupp$.
The case $[x, y, z] = [*, *, 0]$, i.e., $z=0$, gives $\iny{\mathbf{G}} \in \casecsupp$, so $[x, y, z]$ must be degenerate, giving $[x, y, z] = [a^2, ab, b^2]$ for some $a, b \in \mathbb{C}$.
If $b = 0$, then $\iny{\mathbf{G}} \in \casecsupp$, so $b \ne 0$. 
If $c = 0$, then $\iny{\mathbf{G}}$ is degenerate, so $c \ne 0$. 
\end{proof}

\begin{proposition}
\label{prop:casecodz-binary}
  Let $\mathbf{F}$ be a \casecodz signature 
  such that $\Yinv \mathbf{F} = [x; 1, y; 0, 0, z; 0, 0, 0, 0]$ for some $x, y, z$ such that $z \ne 0$.
  Let $\iny{\mathbf{G}}$ be a nondegenerate symmetric binary signature.
  Then, $\holts(\mathbf{X} \mid \Yinv \mathbf{F}, \iny{\mathbf{G}})$ is \sph unless
  $\iny{\mathbf{G}} \in \casecsupp$.
\end{proposition}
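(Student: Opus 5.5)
We follow the template of \cref{prop:casebthree-binary,prop:casecosep-allowed-binary}. Let $\iny{\mathbf{F}} = \Yinv \mathbf{F}$, assume $\holts(\X \mid \iny{\mathbf{F}}, \iny{\mathbf{G}})$ is not \sph, and let $\mathbf{v}_1,\mathbf{v}_2,\mathbf{v}_3$ be the columns of $\iny{\mathbf{G}}$. By symmetry of $\iny{\mathbf{G}}$, membership in $\casecsupp$ means $\mathbf{v}_2 \sim \mathbf{e}_1$ and $\mathbf{v}_3 = (*,0,*)$. If $\mathbf{v}_2 = 0$ then $\mathbf{v}_3 = (*,0,*)$ already, so $\iny{\mathbf{G}} \in \casecsupp$. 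Hence we may assume $\mathbf{v}_2 \ne 0$.

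The key construction attaches $\iny{\mathbf{G}}\X$ to every edge of the frame. By \cref{lem:gadget-set-general-form} and \cref{subsec:gc-case-c1dz}, there is $\lambda \ne 0$ such that we may realize $\agx(\casecodzmath,\lambda)$ on the right side. Applying $\iny{\mathbf{G}}\X$ then gives the ternary signatures
\[
\mathbf{v}_2\teh,\qquad \sym(\mathbf{v}_2,\mathbf{v}_2,\mathbf{v}_3),\qquad \sym(\mathbf{v}_2,\mathbf{v}_2,\mathbf{v}_1)+\lambda\sym(\mathbf{v}_2,\mathbf{v}_3,\mathbf{v}_3),
\]
since $\iny{\mathbf{G}}\X\mathbf{e}_1 = \mathbf{v}_2$, $\iny{\mathbf{G}}\X\mathbf{e}_2 = \mathbf{v}_1$ and $\iny{\mathbf{G}}\X\mathbf{e}_3 = \mathbf{v}_3$. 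By \cref{prop:two-ternary-dichotomy-C-1-dz}, every nondegenerate ternary signature $\iny{\mathbf{H}}$ realized this way must lie in $\casecsupp$.

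First take $\iny{\mathbf{H}} = \sym(\mathbf{v}_2,\mathbf{v}_2,\mathbf{v}_3)$ with $\mathbf{v}_3 \not\sim \mathbf{v}_2$. By \cref{prop:calculation-sym-in-C}, either $\mathbf{v}_2 \sim \mathbf{e}_1$ or both $\mathbf{v}_2,\mathbf{v}_3 = (*,0,*)$.

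If $\mathbf{v}_2 \sim \mathbf{e}_1$, then $\mathbf{v}_2 = (t,0,0)$ with $t \ne 0$, so symmetry gives $\mathbf{v}_3 = (0,0,*)$ and $\iny{\mathbf{G}} \in \casecsupp$. In the other case the middle row is $(\iny{\mathbf{G}}(\dg\db),0,0)$. If it vanishes, $\mathbf{v}_2 = 0$, contrary to our assumption; so $\mathbf{v}_2$ has nonzero first entry.

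The remaining case is $\mathbf{v}_3 \sim \mathbf{v}_2$ (including $\mathbf{v}_3 = 0$). Then the frame degenerates to $\spn(\mathbf{v}_2\teh, \sym(\mathbf{v}_2,\mathbf{v}_2,\mathbf{v}_1))$. Apply \cref{prop:calculation-sym-in-C} to $\sym(\mathbf{v}_2,\mathbf{v}_2,\mathbf{v}_1)$, using $\mathbf{v}_1 \not\sim \mathbf{v}_2$ by nondegeneracy. This forces $\mathbf{v}_2 \sim \mathbf{e}_1$ (and then $\iny{\mathbf{G}} \in \casecsupp$), or both $\mathbf{v}_1,\mathbf{v}_2 = (*,0,*)$. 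In the latter case symmetry kills the middle row, so $\mathbf{v}_2 = 0$, a contradiction.

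The main obstacle is the case where $\mathbf{v}_2 = (p,0,q)$ with $p,q$ not both zero, $\mathbf{v}_3 = (q,0,r)$, and $\mathbf{v}_3 \not\sim \mathbf{v}_2$. Here $\iny{\mathbf{G}} = \begin{bsmallmatrix} s & p & q\\ p & 0 & 0\\ q & 0 & r\end{bsmallmatrix}$ and we need to force $p = 0$.

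We first try the third frame element, $\sym(\mathbf{v}_2,\mathbf{v}_2,\mathbf{v}_1) + \lambda\sym(\mathbf{v}_2,\mathbf{v}_3,\mathbf{v}_3)$ with $\mathbf{v}_1 = (s,p,q)$. It must lie in $\casecsupp$. Its $\dg$-entries come only from the first term, and are nonzero in the positions $\db\db\dg$, $\db\dr\dg$ and $\dr\dr\dg$ with coefficients $p^3$, $p^2q$ and $pq^2$ (up to constants). Membership in $\casecsupp$ requires the $\db\dr\dg$ and $\dr\dr\dg$ entries to vanish, so $pq = 0$.

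If $q = 0$, then $\mathbf{v}_2 \sim \mathbf{e}_1$ and we are done. If instead $p = 0$, then $\mathbf{v}_1 = (s,0,q)$, and the pattern of $\iny{\mathbf{G}}$ is again in $\casecsupp$.

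Degenerate sub-cases, such as this element vanishing or becoming degenerate, should be handled by also using linear combinations from the frame. Since the whole span is realizable, we can pick a nondegenerate element, or use the second frame element directly. This bookkeeping is the part we expect to need the most care.
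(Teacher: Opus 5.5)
Your route is the paper's: transport $\agx(\casecodzmath,\lambda)$ by $\iny{\mathbf{G}}\X$, apply \cref{prop:two-ternary-dichotomy-C-1-dz} to $\sym(\mathbf{v}_2,\mathbf{v}_2,\mathbf{v}_3)$ (and, when $\mathbf{v}_3\sim\mathbf{v}_2$, to $\sym(\mathbf{v}_2,\mathbf{v}_2,\mathbf{v}_1)$), then finish with \cref{prop:calculation-sym-in-C} and the symmetry of $\iny{\mathbf{G}}$. Your cases $\mathbf{v}_2=0$, $\mathbf{v}_2\sim\mathbf{e}_1$ and $\mathbf{v}_3\sim\mathbf{v}_2$ reach the right conclusions. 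The branch you call the main obstacle, however, comes from misapplying symmetry, and your treatment of it is wrong. In that branch $\mathbf{v}_2$ and $\mathbf{v}_3$ both have the form $(*,0,*)$. The third coordinate of $\mathbf{v}_2$ is $\iny{\mathbf{G}}(\dr,\dg)=\iny{\mathbf{G}}(\dg,\dr)$, which is the second coordinate of $\mathbf{v}_3$, hence $0$. So $\mathbf{v}_2=(p,0,0)$ with $p\neq 0$, i.e.\ $\mathbf{v}_2\sim\mathbf{e}_1$, and $\iny{\mathbf{G}}\in\casecsupp$ immediately. This one line is exactly how the paper closes the case. You were one step from it: the middle row $(\iny{\mathbf{G}}(\dg,\db),0,0)$ you computed \emph{is} $\mathbf{v}_2$, by symmetry. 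Your parametrization $\mathbf{v}_3=(q,0,r)$ instead identifies $\iny{\mathbf{G}}(\db,\dr)$ with $\iny{\mathbf{G}}(\dr,\dg)$, which are unrelated entries.

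The error propagates. The matrix $\begin{bsmallmatrix} s & p & q\\ p & 0 & 0\\ q & 0 & r\end{bsmallmatrix}$ you write down is already in $\casecsupp$, whose binary pattern is $\begin{bsmallmatrix} * & * & *\\ * & 0 & 0\\ * & 0 & *\end{bsmallmatrix}$. So ``force $p=0$'' is the wrong target. It is also unattainable: $\iny{\mathbf{G}}=\X$ falls in this case with $p=1$ and is tractable together with $\iny{\mathbf{F}}$. Conversely, your branch $p=0$, $q\neq0$ would give $\iny{\mathbf{G}}(\dr,\dg)=q\neq 0$, so $\iny{\mathbf{G}}\notin\casecsupp$, contrary to what you assert. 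That branch is harmless only because symmetry makes it vacuous. The degenerate sub-cases for the third frame element, where \cref{prop:two-ternary-dichotomy-C-1-dz} does not apply, are also left undone. All of this should be deleted: with symmetry applied correctly, the third frame element is never needed when $\mathbf{v}_3\not\sim\mathbf{v}_2$, and your argument becomes the paper's proof. There is a smaller slip of the same kind: when $\mathbf{v}_2\sim\mathbf{e}_1$, symmetry gives $\mathbf{v}_3=(*,0,*)$, not $(0,0,*)$, though the conclusion there is unaffected.
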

\begin{proof}
Let 
$\iny{\mathbf{F}} = \Yinv \mathbf{F}$.
Assume $\holts(\mathbf{X} \mid \iny{\mathbf{F}}, \iny{\mathbf{G}})$ is not \sph.
  Let $\mathbf{v}_i$ be the $i$th column of $\iny{\mathbf{G}}$.
If $\mathbf{v}_2 = 0$, then $\iny{\mathbf{G}} \in \casecsupp$, so we may assume that $\mathbf{v}_2 \ne 0$.
By \cref{lem:gadget-set-general-form} and \cref{subsec:gc-case-c1dz},
\begin{align*} 
\holts(\X \mid \iny{\mathbf{F}}, \iny{\mathbf{G}}) 
&\ge_T \holts(\X \mid \iny{\mathbf{F}}, \iny{\mathbf{G}}, \agx(\casecodzmath, \lambda)) \\
&\ge_T \holts(\X \mid \iny{\mathbf{F}}, V)
\end{align*}
where for some $\lambda \ne 0$,
\[V = \iny{\mathbf{G}} \mathbf{X} \agx(\casecodzmath, \lambda) = \spn(\mathbf{v}_2\teh, \sym(\mathbf{v}_2, \mathbf{v}_2, \mathbf{v}_3), \sym(\mathbf{v}_2, \mathbf{v}_2, \mathbf{v}_1) + \lambda \sym(\mathbf{v}_2, \mathbf{v}_3, \mathbf{v}_3)) \,.\]
Let $\iny{\mathbf{H}}_{23} = \sym(\mathbf{v}_2, \mathbf{v}_2, \mathbf{v}_3) \in V$.

Applying \cref{prop:two-ternary-dichotomy-C-1-dz} on $\holts(\X \mid \iny{\mathbf{F}}, \iny{\mathbf{H}}_{23})$, 
$\iny{\mathbf{H}}_{23}$ must be degenerate or in $\casecsupp$.
Suppose $\mathbf{v}_2 \not \sim \mathbf{v}_3$. 
Then we have $\iny{\mathbf{H}}_{23} \in \casecsupp$. 
Then by \cref{prop:calculation-sym-in-C},
$\mathbf{v}_2 \sim \mathbf{e}_1$ or $\{\mathbf{v}_2, \mathbf{v}_3\} = \{(*, 0, *), (*, 0, *)\}$.
If $\mathbf{v}_2 \sim \mathbf{e}_1$, then $\iny{\mathbf{G}} \in \casecsupp$ so we are done.
If $\{\mathbf{v}_2, \mathbf{v}_3\} = \{(*, 0, *), (*, 0, *)\}$, by symmetry of $\iny{\mathbf{G}}$, $\mathbf{v}_2 \sim \mathbf{e}_1$ so $\iny{\mathbf{G}} \in \casecsupp$.

Suppose $\mathbf{v}_2 \sim \mathbf{v}_3$.
Then, $V = \spn(\mathbf{v}_2\teh, \sym(\mathbf{v}_2, \mathbf{v}_2, \mathbf{v}_1))$ and let $\iny{\mathbf{H}}_{12} = \sym(\mathbf{v}_2, \mathbf{v}_2, \mathbf{v}_1) \in V$.
Applying \cref{prop:two-ternary-dichotomy-C-1-dz} on $\holts(\X \mid \iny{\mathbf{F}}, \iny{\mathbf{H}}_{12})$ gives $\mathbf{v}_2 \sim \mathbf{v}_1$ or $\iny{\mathbf{H}}_{12} \in \casecsupp$.
Since $\mathbf{v}_2 \ne 0$, if $\mathbf{v}_2 \sim \mathbf{v}_1$, then together with
$\mathbf{v}_2 \sim \mathbf{v}_3$, 
$\iny{\mathbf{G}}$ is degenerate contrary to the assumption.
So, $\iny{\mathbf{H}}_{12} \in \casecsupp$, and by \cref{prop:calculation-sym-in-C}, 
$\mathbf{v}_2 \sim \mathbf{e}_1$ or $\{\mathbf{v}_2, \mathbf{v}_1\} = \{(*, 0, *), (*, 0, *)\}$.
If $\mathbf{v}_2 \sim \mathbf{e}_1$, then $\iny{\mathbf{G}} \in \casecsupp$.
If  $\{\mathbf{v}_2, \mathbf{v}_1\} = \{(*, 0, *), (*, 0, *)\}$, then by $\mathbf{v}_2 \sim \mathbf{v}_3$ and symmetry of $\iny{\mathbf{G}}$, $\mathbf{v}_2 = 0$ and thus $\iny{\mathbf{G}} \in \casecsupp$.
\end{proof}

\begin{proposition}
\label{prop:casecogen-binary}
  Let $\mathbf{F}$ be a \casecogen signature 
  such that $\Yinv \mathbf{F}= [x; 1, y; 0, 0, z; 0, 0, 0, w]$ for some $x, y, z, w$ such that $w \ne 0$ and $z^2 - z - yw \ne 0$.
  Let $\iny{\mathbf{G}}$ be a nondegenerate symmetric binary signature.
  Then, $\holts(\mathbf{X} \mid \Yinv \mathbf{F}, \iny{\mathbf{G}})$ is \sph unless
  $\iny{\mathbf{G}} \in \casecsupp$.
\end{proposition}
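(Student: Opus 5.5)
I will follow the template of \cref{prop:casecodz-binary}, now with the frame $\agx(\casecogenmath)$ from \cref{subsec:gc-case-c1gen} and with \cref{prop:two-ternary-dichotomy-C-1-gen} as the pairwise dichotomy. Let $\iny{\mathbf{F}} = \Yinv \mathbf{F}$ and assume $\holts(\X \mid \iny{\mathbf{F}}, \iny{\mathbf{G}})$ is not \sph. Let $\mathbf{v}_i$ denote the $i$th column of $\iny{\mathbf{G}}$. If $\mathbf{v}_2 = 0$, then by symmetry $\iny{\mathbf{G}} \in \casecsupp$ and we are done, so assume $\mathbf{v}_2 \ne 0$. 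Since $\iny{\mathbf{F}}$ is already in defining form, \cref{lem:gadget-set-general-form} and \cref{subsec:gc-case-c1gen} let us realize $\agx(\casecogenmath)$ on the right side. Attaching $\iny{\mathbf{G}}\X$ to each edge sends $\mathbf{e}_1 \mapsto \mathbf{v}_2$, $\mathbf{e}_2 \mapsto \mathbf{v}_1$, $\mathbf{e}_3 \mapsto \mathbf{v}_3$. This gives, for all $u, v \in \mathbb{C}$,
\[
\holts(\X \mid \iny{\mathbf{F}}, \iny{\mathbf{G}}) \ge_T \holts(\X \mid \iny{\mathbf{F}}, \sym(\mathbf{v}_2, \mathbf{v}_2, \mathbf{v}_1), \mathbf{v}_2\teh + (u \mathbf{v}_2 + v \mathbf{v}_3)\teh).
\]

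For each ternary signature $\iny{\mathbf{H}}$ realized this way, \cref{prop:two-ternary-dichotomy-C-1-gen} applied to $\holts(\X \mid \iny{\mathbf{F}}, \iny{\mathbf{H}}) =_T \holts(\mathbf{F}, \Y \iny{\mathbf{H}})$ forces $\iny{\mathbf{H}}$ to be degenerate or in $\casecsupp$. I would then split into two cases.

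\emph{Case $\mathbf{v}_2 \not\sim \mathbf{v}_3$.} Take $u = 0$, $v = 1$. Then $\mathbf{v}_2\teh + \mathbf{v}_3\teh$ is nondegenerate, so it lies in $\casecsupp$. By \cref{prop:calculation-tensor-2-sum-in-C}, $\mathbf{v}_2$ and $\mathbf{v}_3$ both have zero middle entry. Symmetry of $\iny{\mathbf{G}}$ then gives $\iny{\mathbf{G}}(\dg,\dg) = \iny{\mathbf{G}}(\dg,\dr) = 0$, which is exactly membership in $\casecsupp$.

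\emph{Case $\mathbf{v}_2 \sim \mathbf{v}_3$.} If also $\mathbf{v}_1 \sim \mathbf{v}_2$, then $\iny{\mathbf{G}}$ has rank $1$ and is degenerate, a contradiction. Otherwise $\sym(\mathbf{v}_2, \mathbf{v}_2, \mathbf{v}_1)$ is nondegenerate (it has symmetric rank $3$ by \cref{prop:symmetric-rank-sym-sum}), so it lies in $\casecsupp$. \cref{prop:calculation-sym-in-C} then gives either $\mathbf{v}_2 \sim \mathbf{e}_1$, which puts $\iny{\mathbf{G}}$ in $\casecsupp$, or $\mathbf{v}_1, \mathbf{v}_2$ both have zero middle entry. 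In the latter situation, $\mathbf{v}_3 \sim \mathbf{v}_2$ also has zero middle entry, so $\iny{\mathbf{G}}(\dg,\cdot)$ vanishes entirely by symmetry. Then $\mathbf{v}_2 = 0$, contradicting our assumption, or at any rate $\iny{\mathbf{G}} \in \casecsupp$.

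The only delicate point I anticipate is checking that the two-column substitution really transports the frame correctly under $\iny{\mathbf{G}}\X$. Since $\X$ swaps $\mathbf{e}_1$ and $\mathbf{e}_2$, this should be a routine verification. The rest should be no harder than in \cref{prop:casecodz-binary}, and in fact easier, because the frame here contains a two-term sum directly.
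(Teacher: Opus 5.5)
Your proposal is correct and follows the paper's proof essentially step for step. It uses the same two gadgets $\mathbf{v}_2\teh + \mathbf{v}_3\teh$ and $\sym(\mathbf{v}_2,\mathbf{v}_2,\mathbf{v}_1)$ drawn from $\iny{\mathbf{G}}\X\agx(\casecogenmath)$, the same case split on whether $\mathbf{v}_2 \sim \mathbf{v}_3$, and the same combination of \cref{prop:two-ternary-dichotomy-C-1-gen} with \cref{prop:calculation-tensor-2-sum-in-C,prop:calculation-sym-in-C}; your remark that the $\dg$-row of $\iny{\mathbf{G}}$ vanishes in the last subcase, forcing $\mathbf{v}_2 = 0$, is just a more explicit version of the paper's final step.
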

\begin{proof}
Let 
$\iny{\mathbf{F}} = \Yinv \mathbf{F}$.
Assume $\holts(\mathbf{X} \mid \iny{\mathbf{F}}, \iny{\mathbf{G}})$ is not \sph.
  Let $\mathbf{v}_i$ be the $i$th column of $\iny{\mathbf{G}}$.
If $\mathbf{v}_2 = 0$, then $\iny{\mathbf{G}} \in \casecsupp$, so we may assume that $\mathbf{v}_2 \ne 0$.
By \cref{lem:gadget-set-general-form} and \cref{subsec:gc-case-c1gen},
$\holts(\X \mid \iny{\mathbf{F}}, \iny{\mathbf{G}}) \ge_T \holts(\X \mid \iny{\mathbf{F}}, V)$
where 
\[V = \iny{\mathbf{G}} \X \agx(\casecogenmath) = 
\spn(\mathbf{v}_2, \sym(\mathbf{v}_2, \mathbf{v}_2, \mathbf{v}_1)) \cup \bigcup_{u, v \in \mathbb{C}} \spn(\mathbf{v}_2\teh, (u \mathbf{v}_2 + v \mathbf{v}_3)\teh) \,.
\]

Let $\iny{\mathbf{H}}_{23} = \mathbf{v}_2\teh + \mathbf{v}_3\teh \in V$.
Suppose $\mathbf{v}_2 \not \sim \mathbf{v}_3$.
Then, $\iny{\mathbf{H}}_{23}$ is nondegenerate, and by \cref{prop:two-ternary-dichotomy-C-1-gen} on $\holts(\X \mid \iny{\mathbf{F}}, \iny{\mathbf{H}}_{23})$, $\iny{\mathbf{H}}_{23} \in \casecsupp$.
By \cref{prop:calculation-tensor-2-sum-in-C}, we must have $\{\mathbf{v}_2, \mathbf{v}_3\} = \{(*, 0, *), (*, 0, *)\}$.
Then, by symmetry of $\iny{\mathbf{G}}$, $\mathbf{v}_2 \sim \mathbf{e}_1$ so $\iny{\mathbf{G}} \in \casecsupp$.

Suppose $\mathbf{v}_2 \sim \mathbf{v}_3$.
Let $\iny{\mathbf{H}}_{12} = \sym(\mathbf{v}_2, \mathbf{v}_2, \mathbf{v}_1) \in V$.
Applying \cref{prop:two-ternary-dichotomy-C-1-gen} on $\holts(\X \mid \iny{\mathbf{F}}, \iny{\mathbf{H}}_{12})$ gives $\mathbf{v}_2 \sim \mathbf{v}_1$ or $\iny{\mathbf{H}}_{12} \in \casecsupp$.
Since $\mathbf{v}_2 \ne 0$, if $\mathbf{v}_2 \sim \mathbf{v}_1$, then together with
$\mathbf{v}_2 \sim \mathbf{v}_3$, 
$\iny{\mathbf{G}}$ is degenerate contrary to  assumption.
So, $\iny{\mathbf{H}}_{12} \in \casecsupp$, and by \cref{prop:calculation-sym-in-C}, 
$\mathbf{v}_2 \sim \mathbf{e}_1$ or $\{\mathbf{v}_2, \mathbf{v}_1\} = \{(*, 0, *), (*, 0, *)\}$.
If $\mathbf{v}_2 \sim \mathbf{e}_1$, then $\iny{\mathbf{G}} \in \casecsupp$.
If  $\{\mathbf{v}_2, \mathbf{v}_1\} = \{(*, 0, *), (*, 0, *)\}$, then by $\mathbf{v}_2 \sim \mathbf{v}_3$ and symmetry of $\iny{\mathbf{G}}$, $\mathbf{v}_2 \sim 0$ and thus $\iny{\mathbf{G}} \in \casecsupp$.

\end{proof}

\subsection{Non-mixing of Ternary and Binary Signatures}

\begin{proposition}\label{prop:casebtwo-binary-not-mix}
Let $\mathbf{F} = \bzp\teh + \bzm\teh + \mathbf{e}_3\teh$. 
Let $\iny{\mathbf{G}}$ be a nondegenerate symmetric binary signature in $\mathcal{E}$ but not \sepbg and not \swbg.
Let $\iny{\mathbf{H}}$ be a nondegenerate symmetric binary signature not in $\mathcal{E}$.
Then, $\holts(\mathbf{X} \mid \Yinv \mathbf{F},\iny{\mathbf{G}}, \iny{\mathbf{H}})$ is \sph.
\end{proposition}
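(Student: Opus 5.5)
The plan is to reduce everything to \cref{prop:casebthree-binary} by building one new binary gadget from $\iny{\mathbf{G}}$ and $\iny{\mathbf{H}}$ and showing that it falls outside all three allowed classes. Throughout we work in $\holts(\X \mid \iny{\mathbf{F}}, \iny{\mathbf{G}}, \iny{\mathbf{H}})$ with $\iny{\mathbf{F}} = \Yinv\mathbf{F} = \mathbf{e}_1\teh+\mathbf{e}_2\teh+\mathbf{e}_3\teh$. Assume for contradiction that the problem is not \sph.

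First I pin down the shapes of the two binary signatures.
\begin{itemize}
\item For $\iny{\mathbf{H}}$: \cref{prop:casebthree-binary} applied to $\holts(\X\mid\iny{\mathbf{F}},\iny{\mathbf{H}})$, together with $\iny{\mathbf{H}}\notin\mathcal{E}$, leaves two cases. Either $\iny{\mathbf{H}} = \begin{bsmallmatrix} a^2 & ab & 0\\ ab & b^2 & 0\\ 0&0&c\end{bsmallmatrix}$ with $a,b,c\ne0$, or $\iny{\mathbf{H}}$ is a symmetric \swbg matrix $\begin{bsmallmatrix}0&0&p\\0&0&q\\p&q&0\end{bsmallmatrix}$. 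In the second case $p,q\ne 0$, since otherwise $\iny{\mathbf{H}}\in\mathcal{E}$.
\item For $\iny{\mathbf{G}}$: a symmetric nondegenerate binary signature in $\mathcal{E}$ is a scaled permutation-pattern matrix. Such a matrix is either diagonal, swaps $\db\dg$ while fixing $\dr$, swaps $\db\dr$ while fixing $\dg$, or swaps $\dg\dr$ while fixing $\db$. The first two are \sepbg and are excluded. So $\iny{\mathbf{G}}$ is, up to the symmetric roles, $\begin{bsmallmatrix}0&0&\alpha\\0&\beta&0\\\alpha&0&0\end{bsmallmatrix}$ with $\alpha,\beta\ne0$, or the analogous $\dg\dr$ transposition.
\end{itemize}

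Next, on the right side I realize the binary gadget $\iny{\mathbf{K}} = \iny{\mathbf{G}}\,\X\,\iny{\mathbf{H}}\,\X\,\iny{\mathbf{G}}$ by chaining through two copies of $\X$. Since $\X$ and $\iny{\mathbf{G}}$ are scaled permutation matrices, $\iny{\mathbf{K}}$ is just $\iny{\mathbf{H}}$ with rows and columns permuted and rescaled by nonzero factors. When $\iny{\mathbf{G}}$ transposes $\db,\dr$, the composite permutation $\db\mapsto\dr$, $\dg\mapsto\db$, $\dr\mapsto\dg$ is a $3$-cycle. It sends the $\db\dg$ block of $\iny{\mathbf{H}}$ to the $\dr\db$ block, so the two cases become:
\begin{itemize}
\item If $\iny{\mathbf{H}}$ is \sepbg, then $\iny{\mathbf{K}}$ is a \sepbr matrix with a rank-one $\db\dr$ block whose four entries are all nonzero, plus a nonzero $\dg\dg$ entry.
\item If $\iny{\mathbf{H}}$ is \swbg, then $\iny{\mathbf{K}}$ has the \swgr pattern with nonzero $(\db,\dg)$ and $(\db,\dr)$ entries.
\end{itemize}
The case where $\iny{\mathbf{G}}$ transposes $\dg,\dr$ is symmetric and handled the same way.

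Then I check that $\iny{\mathbf{K}}$ is nondegenerate, symmetric, and lies outside all three classes.
\begin{itemize}
\item $\iny{\mathbf{K}}$ is not \sepbg, because it has a nonzero $(\db,\dr)$ entry.
\item It is not \swbg: in the first case its $\db\db$ entry is nonzero, and in the second case its $(\db,\dg)$ entry is nonzero.
\item It is not in $\mathcal{E}$, because the $\db$ row has two nonzero entries.
\end{itemize}
\cref{prop:casebthree-binary} applied to $\holts(\X\mid\iny{\mathbf{F}},\iny{\mathbf{K}})$ then gives \sphness, which is the contradiction.

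The main obstacle is bookkeeping rather than substance. I must verify that the composite permutation really moves the support off the $\db\dg\mid\dr$ structure in every combination of cases, and that no scalar cancels. No cancellation can occur because every factor is monomial. If the two-sided conjugation ever left the support inside an allowed class in some subcase, I would instead use the one-sided product $\iny{\mathbf{G}}\X\iny{\mathbf{H}}$. This is also a realizable binary gadget, but it is not symmetric. So in that fallback I would first symmetrize it by composing with its transpose, or argue directly with columns as in the proof of \cref{prop:casebthree-binary}.
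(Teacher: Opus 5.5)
Your route is the paper's: you pin down $\iny{\mathbf G}$ and $\iny{\mathbf H}$ via \cref{prop:casebthree-binary}, realize $\iny{\mathbf K}=\iny{\mathbf G}\X\iny{\mathbf H}\X\iny{\mathbf G}$ on the right, and show it is nondegenerate and neither \sepbg, \swbg, nor in $\mathcal{E}$. The conclusion is correct in all four subcases.

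There is one bookkeeping slip. With $P=\iny{\mathbf G}\X$, the product is $\iny{\mathbf K}=P\iny{\mathbf H}P\transpose$, so $\iny{\mathbf K}_{ij}=w_iw_j\,\iny{\mathbf H}_{\tau(i)\tau(j)}$, where $\tau$ is your $3$-cycle and the $w_i$ are the nonzero entries of $P$. Hence the $\db\dg$ block of $\iny{\mathbf H}$ lands at $\tau^{-1}(\{\db,\dg\})=\{\dg,\dr\}$, not at $\{\db,\dr\}$.

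Concretely, for the $\db\dr$ transposition and \sepbg $\iny{\mathbf H}$ you get $\iny{\mathbf K}=\begin{bsmallmatrix}\alpha^2c&0&0\\0&\beta^2a^2&\alpha\beta ab\\0&\alpha\beta ab&\alpha^2b^2\end{bsmallmatrix}$. This is a \sepgr matrix, not a \sepbr one. Its $(\db,\dr)$ entry is $0$ and its $\db$ row has only one nonzero entry, so two of your stated reasons fail here. Replace them with the nonzero $(\dg,\dr)$ entry (ruling out \sepbg) and the two nonzero entries in the $\dg$ row (ruling out $\mathcal{E}$). The nonzero $\db\db$ entry $\alpha^2c$ still rules out \swbg.

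Your \swbg subcase is computed correctly. The one-sided fallback is unnecessary.
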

\begin{proof}
If $\iny{\mathbf{G}}$ is in $\mathcal{E}$ but not \sepbg and not \swbg, then it must be 
$\begin{bsmallmatrix}
  x & 0 & 0 \\
  0 & 0 & y \\
  0 & y & 0
\end{bsmallmatrix}$ or $\begin{bsmallmatrix}
  0 & 0 & y \\
  0 & x & 0 \\
  y & 0 & 0
\end{bsmallmatrix}$ for $x, y \ne 0$.
Assume $\iny{\mathbf{G}} = \begin{bsmallmatrix}
  x & 0 & 0 \\
  0 & 0 & y \\
  0 & y & 0
\end{bsmallmatrix}$. The proof for 
$\begin{bsmallmatrix}
  0 & 0 & y \\
  0 & x & 0 \\
  y & 0 & 0
\end{bsmallmatrix}$ is similar.

Assume $\holts(\mathbf{X} \mid \Yinv \mathbf{F},\iny{\mathbf{G}}, \iny{\mathbf{H}})$ is not \sph.
We may assume that $\iny{\mathbf{H}}$ is $\begin{bsmallmatrix}
  p^2 & pq & 0 \\
  pq & q^2 & 0 \\
  0 & 0 & r
\end{bsmallmatrix}$ (in \sepbg but not in $\mathcal{E}$) or $\begin{bsmallmatrix}
  0 & 0 & p \\
  0 & 0 & q \\
  p & q & 0
\end{bsmallmatrix}$ (in \swbg), for $p, q, r \ne 0$ by \cref{prop:casebthree-binary}.
Then, 
\[\holts(\X \mid \Yinv \mathbf{F}, \iny{\mathbf{G}}, \iny{\mathbf{H}}) 
\ge_T \holts(\X \mid \Yinv \mathbf{F}, \iny{\mathbf{G}}\mathbf{X} \iny{\mathbf{H}} \mathbf{X}\iny{\mathbf{G}})\]
where 
$\iny{\mathbf{G}}\mathbf{X} \iny{\mathbf{H}} \mathbf{X}\iny{\mathbf{G}}$ is either 
$\begin{bsmallmatrix}
  q^2 x^2 & 0 & p q x y \\
  0 & r y^2 & 0 \\
  p q x y & 0 & p^2 y^2
\end{bsmallmatrix}$ or $\begin{bsmallmatrix}
  0 & q x y & 0 \\
  q x y & 0 & p y^2 \\
  0 & p y^2 & 0
\end{bsmallmatrix}$
which are both rank $2$ and not of the tractable form given in \cref{prop:casebthree-binary}, a contradiction.
\end{proof}

\begin{proposition}\label{prop:casea3-binary-not-mix}
Let $\mathbf{F} = \mathbf{e}_1\teh + \mathbf{e}_2\teh + \mathbf{e}_3\teh$.
Let $\mathbf{G}$ be a nondegenerate symmetric binary signature in \sepbg or \swbg but not in $\mathcal{E}$.
Let $\mathbf{H}$ be a nondegenerate symmetric binary signature that is not \sepbg 
and  not \swbg.
Then $\holts(\mathbf{F}, \mathbf{G}, \mathbf{H})$ is \sph.
\end{proposition}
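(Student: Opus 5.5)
Assume $\holts(\mathbf{F}, \mathbf{G}, \mathbf{H})$ is not \sph. The plan mirrors \cref{prop:casebtwo-binary-not-mix}: first pin $\mathbf{G}$ and $\mathbf{H}$ to explicit shapes using \cref{prop:casea3-binary}, then build a product gadget whose matrix is nondegenerate but fails the tractable shapes of \cref{prop:casea3-binary}.

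First, since $\holts(\mathbf{F}, \mathbf{G})$ is not \sph, \cref{prop:casea3-binary} gives the shape of $\mathbf{G}$. If $\mathbf{G}$ is \sepbg and not in $\mathcal{E}$, it is $\begin{bsmallmatrix} a^2 & ab & 0 \\ ab & b^2 & 0 \\ 0 & 0 & c\end{bsmallmatrix}$ with $a,b,c \ne 0$. If $\mathbf{G}$ is \swbg and not in $\mathcal{E}$, it is $\begin{bsmallmatrix} 0 & 0 & a \\ 0 & 0 & b \\ a & b & 0\end{bsmallmatrix}$ with $a,b\neq 0$; if $a$ or $b$ were $0$, $\mathbf{G}$ would be degenerate or in $\mathcal{E}$.

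Similarly, $\mathbf{H}$ must be one of the following:
\begin{itemize}
\item \sepgr or \sepbr (possibly in $\mathcal{E}$, but not \sepbg);
\item \swgr or \swbr;
\item in $\mathcal{E}$ but neither \sepbg nor \swbg.
\end{itemize}
Since $\mathbf{F}$ is invariant under domain permutations, I can reduce by the symmetry swapping $\db \leftrightarrow \dg$, which fixes the \sepbg/\swbg structure of $\mathbf{G}$. This leaves a small list of representative forms for $\mathbf{H}$, namely permutation-type matrices in $\mathcal{E}$, the \sepgr forms, and the \swgr forms. In each case I record which entries are forced nonzero by nondegeneracy.

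Next, in each case I form a product gadget and apply \cref{prop:casea3-binary} to it. I will first try $\mathbf{H}\mathbf{G}\mathbf{H}$, which is symmetric. If $\mathbf{H}$ moves the domain element $\dr$ or mixes $\dr$ with $\db$ or $\dg$, conjugating $\mathbf{G}$ by $\mathbf{H}$ should transport the dense $2\times 2$ rank-one block $[a^2, ab, b^2]$ of $\mathbf{G}$ (resp.\ its swap pattern) onto a pair of coordinates other than $\db\dg$, or smear it across all three. Concretely:
\begin{itemize}
\item If $\mathbf{H}$ is the \emph{transposition} type, e.g.\ $\begin{bsmallmatrix} x&0&0\\0&0&y\\0&y&0\end{bsmallmatrix}$, then $\mathbf{H}\mathbf{G}\mathbf{H}$ is a \sepbr matrix with a rank-one $\db\dr$ block and nonzero $\dg\dg$ entry. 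That is allowed on its own, so I would then also combine it with $\mathbf{G}$, e.g.\ take $\mathbf{G}\mathbf{H}\mathbf{G}\mathbf{H}\mathbf{G}$, or $\mathbf{G}+$ interpolation via powers. The aim is a symmetric matrix with nonzero entries simultaneously in a $\db\dg$ and a $\db\dr$ off-diagonal position and no block-separation, which is none of $ij|k$, \swhelper{ij;k}, or $\mathcal{E}$.
\item If $\mathbf{H}$ is \sepgr with rank-one block $\begin{bsmallmatrix} p^2&pq\\pq&q^2\end{bsmallmatrix}$ on $\dg\dr$, I compute $\mathbf{G}\mathbf{H}\mathbf{G}$ directly. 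For \sepbg $\mathbf{G}$ this yields entries $(\db,\dr)$ proportional to $ab\,pq\,c\ne0$ together with nonzero $(\db,\dg)$ entries, so it is not domain separated. This is checked against the forbidden-pattern list.
\end{itemize}

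The main obstacle is ensuring the constructed product is nondegenerate and genuinely avoids every allowed shape. Products of rank-one-block matrices can collapse rank, e.g.\ if $\mathbf{H}$'s block vector is orthogonal to $(a,b)$-type columns. To handle this, I will use that the nonzero scalars $a,b,c,p,q,r,x,y$ are free only up to the forced nonvanishing. Where a single product collapses, I will use a different word in the monoid generated by $\mathbf{G}$ and $\mathbf{H}$ (e.g.\ $\mathbf{G}\mathbf{H}^2\mathbf{G}$ or $\mathbf{H}\mathbf{G}^2\mathbf{H}$). I would also use the fact that $\mathbf{G}^2$ again has the same shape with $a^2+b^2$ factors; when $a^2+b^2=0$, $\mathbf{G}^2$ is nilpotent on the block and I switch to $\mathbf{G}$ itself. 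Each resulting nondegenerate matrix fails the shapes of \cref{prop:casea3-binary}, contradicting non-\sphness.
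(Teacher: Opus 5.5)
Your route differs from the paper's, and it does go through, but the decisive verification is only promised, not carried out.

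The paper lifts to ternary signatures. With $\mathbf{w}=(a,b,0)$, $\mathbf{G}\ag(\caseathreemath)$ contains $\mathbf{w}\teh+\mathbf{e}_3\teh$, which is \sepbg and not in $\mathcal{E}$. From $\mathbf{H}$ it gets a ternary signature that is \sepgr or \sepbr and not in $\mathcal{E}$: either an element of $\mathbf{H}\ag(\caseathreemath)$, or, in the transposition case, $\mathbf{H}$ applied to $\mathbf{w}\teh+\mathbf{e}_3\teh$. Hardness then follows from the ternary non-mixing result \cref{prop:case-a3-ternary-non-mixing}, hence ultimately from \cref{lem:not-orthogonal-pair-dichotomy-ternary}.

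You stay with binary signatures and need only \cref{prop:casea3-binary}, as in \cref{prop:casebtwo-binary-not-mix}. That is more elementary and self-contained. The paper's version instead reuses its frame and non-mixing lemmas uniformly.

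What you leave open ("first try $\mathbf{H}\mathbf{G}\mathbf{H}$, otherwise some other word") is settled by one word, $\mathbf{G}\mathbf{H}\mathbf{G}$, in every case. Put $B=[\,\mathbf{w}\ \mathbf{e}_3\,]$ with $a,b\neq0$. Both shapes of $\mathbf{G}$ are $BCB\transpose$ with $C=\mathrm{diag}(1,c)$ or $C=\begin{bsmallmatrix}0&1\\1&0\end{bsmallmatrix}$. Hence $\mathbf{G}\mathbf{H}\mathbf{G}=B(CKC)B\transpose$ with $K=B\transpose\mathbf{H}B$. Write $CKC=\begin{bsmallmatrix}\alpha&\beta\\\beta&\gamma\end{bsmallmatrix}$. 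Then
\[
\mathbf{G}\mathbf{H}\mathbf{G}=\alpha\,\mathbf{w}\tew+\beta(\mathbf{w}\otimes\mathbf{e}_3+\mathbf{e}_3\otimes\mathbf{w})+\gamma\,\mathbf{e}_3\tew ,
\]
which has rank $\rank K$.

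Suppose $\beta\neq0$ and $(\alpha,\gamma)\neq(0,0)$. The $(\db,\dr)$ and $(\dg,\dr)$ entries are $\beta a$ and $\beta b$, both nonzero. This excludes every $ij|k$ shape, \swbr, \swgr and $\mathcal{E}$; \swbg would need $\alpha=\gamma=0$. So $\mathbf{G}\mathbf{H}\mathbf{G}$ is none of $ij|k$, \swhelper{ij;k}, or $\mathcal{E}$.

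For either $C$, $\beta\neq0$ iff $K_{12}\neq0$, and $(\alpha,\gamma)\neq(0,0)$ iff $(K_{11},K_{22})\neq(0,0)$. Your three representatives $\mathbf{H}=\begin{bsmallmatrix}x&0&0\\0&0&y\\0&y&0\end{bsmallmatrix}$, $\begin{bsmallmatrix}r&0&0\\0&p^2&pq\\0&pq&q^2\end{bsmallmatrix}$, $\begin{bsmallmatrix}0&p&q\\p&0&0\\q&0&0\end{bsmallmatrix}$ give respectively
\[
K=\begin{bsmallmatrix}xa^2&yb\\yb&0\end{bsmallmatrix},\qquad
K=\begin{bsmallmatrix}ra^2+p^2b^2&pqb\\pqb&q^2\end{bsmallmatrix},\qquad
K=\begin{bsmallmatrix}2pab&qa\\qa&0\end{bsmallmatrix}.
\]
The determinants are $-y^2b^2$, $rq^2a^2$, $-q^2a^2$, so all conditions hold. \cref{prop:casea3-binary} applied to $\mathbf{G}\mathbf{H}\mathbf{G}$ then gives \sphness. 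The $\db\leftrightarrow\dg$-swapped forms follow by the symmetry you already invoked.

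Two corrections. First, $x\neq0$ and $p,q\neq0$ in the \swgr form do not come from nondegeneracy. They come from $\mathbf{H}$ being neither \sepbg nor \swbg, and they are essential: with $x=0$, $\mathbf{G}\mathbf{H}\mathbf{G}$ is \swbg. Second, in the \sepgr case with \sepbg $\mathbf{G}$, the $(\db,\dg)$ entry is $ab(ra^2+p^2b^2)$, which can vanish. Your remark about nonzero $(\db,\dg)$ entries is therefore not always true, though the conclusion survives as above.

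The detours through $\mathbf{H}\mathbf{G}\mathbf{H}$, longer words, powers of $\mathbf{G}$, or interpolation are unnecessary.
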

\begin{proof}
Assume $\holts(\mathbf{F}, \mathbf{G}, \mathbf{H})$ is not \sph.
$\mathbf{G}$ must be $\begin{bsmallmatrix}
  p^2 & pq & 0 \\
  pq & q^2 & 0 \\
  0 & 0 & r
\end{bsmallmatrix}$ or $\begin{bsmallmatrix}
  0 & 0 & p \\
  0 & 0 & q \\
  p & q & 0
\end{bsmallmatrix}$ for $p, q, r \ne 0$ by \cref{prop:casea3-binary}.
Then, $(p, q, 0)\teh + \mathbf{e}_3\teh \in \mathbf{G} \ag(\caseathreemath)$.
We have 
\begin{align*}
    \holts(\mathbf{F}, \mathbf{G}, \mathbf{H}) 
    &\ge_T \holts(\mathbf{F}, \ag(\caseathreemath), \mathbf{G}, \mathbf{H}) \\
    &\ge_T \holts(\mathbf{F}, \ag(\caseathreemath), \mathbf{G} \ag(\caseathreemath), \mathbf{H}) \\
    &\ge_T \holts(\mathbf{F}, (p, q, 0)\teh + \mathbf{e}_3\teh, \mathbf{H}, \mathbf{H} \ag(\caseathreemath) ) \,.
\end{align*}
By hypothesis, the given $\mathbf{H}$ is not \sepbg and not \swbg.
By \cref{prop:casea3-binary}, $\mathbf{H}$ must be $ij|k$ or \swhelper{ij;k} for $ij \ne \db\dg$ or in $\mathcal{E}$.
For example, if $\mathbf{H}$ is \sepgr or \swgr and is not \sepbg and not \swbg, 
then $\mathbf{H}$ is of the form 
$\begin{bsmallmatrix}
    c & 0 & 0 \\
    0 & a^2 & ab \\
    0 & ab & b^2
\end{bsmallmatrix}$, $\begin{bsmallmatrix}
    0 & a & b \\
    a & 0 & 0 \\
    b & 0 & 0
\end{bsmallmatrix}$, or  $\begin{bsmallmatrix}
    c & 0 & 0 \\
    0 & 0 & a \\
    0 & a & 0
\end{bsmallmatrix}$ for $a, b, c \ne 0$.
For the first two cases,
we see that $\mathbf{e}_1\teh + (0, a, b)\teh \in \mathbf{H} \ag(\caseathreemath)$ which is \sepgr and not in $\mathcal{E}$. Together with
$(p, q, 0)\teh + \mathbf{e}_3\teh$ which is \sepbg and not
$\mathcal{E}$, this leads to 
\sphness by \cref{prop:case-a3-ternary-non-mixing}, with the roles of
$\db$ and $\dg$ switched.
For the third case, we have 
$\mathbf{H} ((p, q, 0)\teh + \mathbf{e}_3\teh) = (cp, 0, aq)\teh + a \mathbf{e}_2\teh$.
Then, $\holts(\mathbf{F}, (p, q, 0)\teh + \mathbf{e}_3\teh, (cp, 0, aq)\teh + a \mathbf{e}_2\teh)$ is \sph by \cref{prop:case-a3-ternary-non-mixing} 
since 
$(p, q, 0)\teh + \mathbf{e}_3\teh$ is \sepbg but not in $\mathcal{E}$ 
and
$(cp, 0, aq)\teh + a \mathbf{e}_2\teh$ is \sepbr but not in $\mathcal{E}$ because $a, c, p, q \ne 0$ by assumption.

If $\mathbf{H} \in \mathcal{E}$ but it is not \sepbg or \swbg, then it must be 
$\begin{bsmallmatrix}
    a & 0 & 0 \\
    0 & 0 & b \\
    0 & b & 0
\end{bsmallmatrix}$ or 
$\begin{bsmallmatrix}
    0 & 0 & b \\
    0 & a & 0 \\
    b & 0 & 0
\end{bsmallmatrix}$ for $a, b \ne 0$. 
The proof for \sphness is similar to above. 
\end{proof}

\begin{proposition}\label{prop:casecosep-allowed-binary-not-mix}
Let $\mathbf{F} = \sym(\bzp, \bzp, \bzm) + \mathbf{e}_3\teh$.
Let $\iny{\mathbf{G}}$ be a nondegenerate symmetric binary signature
that is \sepbg or \swbg but not in $\casecsupp$.
Let $\iny{\mathbf{H}}$ be a nondegenerate symmetric binary signature in $\casecsupp$ 
but not \sepbg and not \swbg.
Then, $\holts(\X \mid \Yinv \mathbf{F}, \iny{\mathbf{G}}, \iny{\mathbf{H}})$ is \sph.
\end{proposition}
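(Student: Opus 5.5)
We mirror the proofs of \cref{prop:casebtwo-binary-not-mix} and \cref{prop:casea3-binary-not-mix}. Let $\iny{\mathbf{F}} = \Yinv \mathbf{F} = \sym(\mathbf{e}_1,\mathbf{e}_1,\mathbf{e}_2)+\mathbf{e}_3\teh$, and assume for contradiction that the problem is not \sph.

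\textbf{Step 1: normal form for $\iny{\mathbf{G}}$.} By \cref{prop:casecosep-allowed-binary}, since $\iny{\mathbf{G}} \notin \casecsupp$, it is either
\[
\begin{bsmallmatrix} p^2 & pq & 0\\ pq & q^2 & 0\\ 0&0&r\end{bsmallmatrix} \text{ with } q,r\ne 0,
\qquad\text{or}\qquad
\begin{bsmallmatrix} 0&0&p\\0&0&q\\p&q&0\end{bsmallmatrix} \text{ with } q \ne 0.
\]
In the \swbg case $q\ne 0$ holds because otherwise the matrix would lie in $\casecsupp$. Using the frame $\agx(\casecosepmath)$ (\cref{lem:gadget-set-general-form}, \cref{subsec:gc-case-c1-sep}), we realize $\iny{\mathbf{G}}\X\agx(\casecosepmath)$ on the right. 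It contains a signature of the form $(p,q,0)\teh + \mathbf{e}_3\teh$ up to scalars. In the swap case we use $\iny{\mathbf{G}}\X\mathbf{e}_3 \sim \mathbf{e}_3$ and $\iny{\mathbf{G}}\X\mathbf{e}_1 \sim (p,q,0)$. Call this signature $\iny{\mathbf{K}}$. It is \sepbg and nondegenerate, and it is not in $\casecsupp$ because $q\ne0$.

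\textbf{Step 2: normal form for $\iny{\mathbf{H}}$.} Since $\iny{\mathbf{H}}\in\casecsupp$, it has the shape $\begin{bsmallmatrix} x&y&z\\ y&0&0\\ z&0&w\end{bsmallmatrix}$. If $y=0$, then $\iny{\mathbf{H}}$ is supported on $\db\dr$, and not being \sepbg forces $z\ne0$. If $y\ne0$, then not being \sepbg forces $(z,w)\neq(0,0)$. Also, $\iny{\mathbf{H}}$ is not \swbg, so $(x,y,w)\neq 0$ in the relevant sense.

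\textbf{Step 3: build a bad ternary.} Form $\iny{\mathbf{H}}\X\iny{\mathbf{K}} = (\iny{\mathbf{H}}\X(p,q,0))\teh + (\iny{\mathbf{H}}\X\mathbf{e}_3)\teh$, realizable on the right. Here $\iny{\mathbf{H}}\X(p,q,0) = (xq+yp,\ yq,\ zq)$ and $\iny{\mathbf{H}}\X\mathbf{e}_3 = (z,0,w)$. We aim to show this is not \sepbg and not in $\casecsupp$, or else degenerate. Two tools apply:
\begin{itemize}
\item[] By \cref{prop:calculation-tensor-2-sum-in-C}, membership in $\casecsupp$ would need $yq=0$.
\item[] By \cref{prop:calculation-tensor-2-sum-in-BG-R}, \sepbg would force the vectors into $\{(*,*,0),(0,0,*)\}$ or $\{(*,*,0),(*,*,0)\}$.
\end{itemize}

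\textbf{Case $y\ne0$.} Since $q\ne0$, the signature is not in $\casecsupp$. It is \sepbg only if $z=0$ and ($w=0$ or $z=w=0$), i.e.\ only if $z=0$, $w=0$, which contradicts $(z,w)\ne 0$. We also check that the two vectors are independent, which holds since $yq\ne0$ and $(z,0,w)\ne0$. Then \cref{prop:two-ternary-dichotomy-C-1-sep} gives \sphness.

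\textbf{Case $y=0$, $z\ne0$.} The vectors are $(xq,0,zq)$ and $(z,0,w)$, so the signature lies in $\casecsupp$ and Step 3 fails. Instead we combine the in-$\casecsupp$, non-\sepbg signature $\iny{\mathbf{H}}\X\iny{\mathbf{K}}$ (if it is nondegenerate) with $\iny{\mathbf{K}}$, which is \sepbg and not in $\casecsupp$. Then \cref{prop:case-c1-sep-ternary-non-mixing} gives \sphness. If it is degenerate, i.e.\ $xw=z^2$, we instead use $\iny{\mathbf{H}}\X$ applied to $\sym(\mathbf{e}_1,\mathbf{e}_1,\mathbf{e}_2)$-type frame elements, e.g.\ $\sym(\iny{\mathbf{H}}\mathbf{e}_2, \iny{\mathbf{H}}\mathbf{e}_2, \iny{\mathbf{H}}\mathbf{e}_1)$ built from $\iny{\mathbf{H}}\X\iny{\mathbf{F}}$. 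Alternatively we take powers $(\iny{\mathbf{H}}\X)^k$ to get a nondegenerate member.

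\textbf{Main obstacle.} This last degenerate subcase and the bookkeeping of nondegeneracy across subcases are the main obstacle. We expect them to be resolved by choosing a different frame element, much as in \cref{prop:casea3-binary-not-mix}.
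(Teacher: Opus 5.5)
Your overall route is the paper's: normalize $\iny{\mathbf G}$ by \cref{prop:casecosep-allowed-binary}, take $\iny{\mathbf K}=(p,q,0)\teh+\mathbf e_3\teh$ from $\iny{\mathbf G}\X\agx(\casecosepmath)$, form $\iny{\mathbf H}\X\iny{\mathbf K}=(xq+yp,yq,zq)\teh+(z,0,w)\teh$, and finish with \cref{prop:two-ternary-dichotomy-C-1-sep} or \cref{prop:case-c1-sep-ternary-non-mixing}. As written, however, the case analysis is wrong in one place, unjustified in a second, and left open in a third.

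\emph{The $y\ne0$ case.} A binary signature in $\casecsupp$ already vanishes at $(\dg,\dr)$ and $(\dr,\dg)$, so $\iny{\mathbf H}$ is \sepbg iff $z=0$. The hypothesis therefore gives $z\neq0$ in every case, not merely $(z,w)\neq(0,0)$ when $y\neq0$. Your claim for $y\neq0$ that the tensor is \sepbg only if $z=w=0$ is false. For $z=0$ the vectors are $(xq+yp,yq,0)$ and $(0,0,w)$, which is the pattern $\{(*,*,0),(0,0,*)\}$ of \cref{prop:calculation-tensor-2-sum-in-BG-R}, so the tensor is \sepbg for every $w$. The case survives only because $z\neq0$ anyway.

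\emph{The $y=0$ case, non-separation.} Here you assert without proof that $\iny{\mathbf H}\X\iny{\mathbf K}$ is not \sepbg. This is exactly where the hypothesis that $\iny{\mathbf H}$ is not \swbg is needed, and you never use it. Since $zq\neq0$, \cref{prop:calculation-tensor-2-sum-in-BG-R} forces $xq=0$ and $w=0$. Then $x=y=w=0$, and $\iny{\mathbf H}$ would be \swbg.

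\emph{The $y=0$ case, degeneracy.} The subcase you flag as the main obstacle does not occur. When $y=0$, $\iny{\mathbf H}=\begin{bsmallmatrix}x&0&z\\0&0&0\\z&0&w\end{bsmallmatrix}$, and $xw=z^2$ says precisely that this matrix has rank at most $1$. That is, $\iny{\mathbf H}$ would be degenerate, which is excluded. Equivalently, since $q\neq0$, $\iny{\mathbf H}\X\iny{\mathbf K}$ is degenerate iff $\iny{\mathbf H}$ is. So no extra frame elements or powers $(\iny{\mathbf H}\X)^k$ are needed, and those unworked alternatives should be dropped.

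With these three fixes your argument is the paper's proof. Your split on $y$ is just the paper's step made explicit: if $\iny{\mathbf H}\X\iny{\mathbf K}\notin\casecsupp$ it is hard by \cref{prop:two-ternary-dichotomy-C-1-sep}, and otherwise hard by \cref{prop:case-c1-sep-ternary-non-mixing}. One harmless slip: in the \swbg case $\iny{\mathbf G}\X\mathbf e_3=(p,q,0)$ and $\iny{\mathbf G}\X\mathbf e_1=(0,0,q)$, not the other way round.
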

\begin{proof}
Let $\iny{\mathbf{F}} = \Yinv \mathbf{F}$.
Assume $\holts(\X \mid \iny{\mathbf{F}}, \iny{\mathbf{G}}, \iny{\mathbf{H}})$ is not \sph.
Then,  by \cref{prop:casecosep-allowed-binary},
$\iny{\mathbf{G}}$ must be $\begin{bsmallmatrix}
  p^2 & pq & 0 \\
  pq & q^2 & 0 \\
  0 & 0 & r
\end{bsmallmatrix}$ (\sepbg but not in $\casecsupp$)
or $\begin{bsmallmatrix}
  0 & 0 & p \\
  0 & 0 & q \\
  p & q & 0
\end{bsmallmatrix}$ (\swbg but not in $\casecsupp$)
for $q, r \ne 0$.
For both cases, we have $\iny{\mathbf{G}}\mathbf{X} \agx(\casecosepmath) = \spn((p, q, 0)\teh, \mathbf{e}_3\teh)$.
Let $\iny{\mathbf{A}} = (p, q, 0)\teh + \mathbf{e}_3\teh \in \iny{\mathbf{G}}\mathbf{X} \agx(\casecosepmath)$.
Since $\iny{\mathbf{H}} \in \casecsupp$, it has the form $\iny{\mathbf{H}} = \begin{bsmallmatrix}
  x & y & z \\
  y & 0 & 0 \\
  z & 0 & w
\end{bsmallmatrix}$. As $\iny{\mathbf{H}} \not \in $ \sepbg, we have $z \ne 0$.
\begin{align*}
\holts(\X \mid \iny{\mathbf{F}}, \iny{\mathbf{G}}, \iny{\mathbf{H}}) 
&\ge_T \holts(\X \mid \iny{\mathbf{F}}, \agx(\casecosepmath),  \iny{\mathbf{G}}, \iny{\mathbf{H}}) \\
&\ge_T \holts(\X \mid \iny{\mathbf{F}}, \iny{\mathbf{G}} \X \agx(\casecosepmath), \iny{\mathbf{H}}) \\
&\ge_T \holts(\X \mid \iny{\mathbf{F}}, \iny{\mathbf{A}}, \iny{\mathbf{H}}) \\
&\ge_T \holts(\X \mid \iny{\mathbf{F}}, \iny{\mathbf{A}}, \iny{\mathbf{B}})
\end{align*} 

where $\iny{\mathbf{B}} = \iny{\mathbf{H}} \X \iny{\mathbf{A}} = (y p + xq, yq, zq)\teh + (z, 0, w)\teh$. 
We can check that $\iny{\mathbf{B}} $ is nondegenerate: if $y \ne 0$, then it is clearly nondegenerate since $q \ne 0$; if $y = 0$, $\iny{\mathbf{B}}$ is degenerate if and only if $\iny{\mathbf{H}}$ is degenerate since $q \ne 0$.
By \cref{prop:calculation-tensor-2-sum-in-BG-R}, if $\iny{\mathbf{B}} $ were to be \sepbg, $\{(y p + xq, yq, zq), (z, 0, w)\}$ must be $\{(*, *, 0), (0, 0, *)\}$ or $\{((*, *, 0), (*, *, 0)\}$.
If $\{(y p + xq, yq, zq), (z, 0, w)\} = \{(*, *, 0), (0, 0, *)\}$, since $zq \ne 0$, we must have $yp + xq, yq, w = 0$.
Since $p, q \ne 0$, this gives $x, y, w = 0$, which means $\iny{\mathbf{H}}$ is \swbg, contrary to  assumption.
$\{(y p + xq, yq, zq), (z, 0, w)\} = \{(*, *, 0), (*, *, 0)\}$ is not possible because $zq \ne 0$. 
Therefore, $\iny{\mathbf{B}}$ is a nondegenerate signature not in  \sepbg.
By \cref{prop:two-ternary-dichotomy-C-1-sep} on $\holts(\X \mid \iny{\mathbf{F}}, \iny{\mathbf{B}} )$, $\iny{\mathbf{B}} $ must be in $\casecsupp$.
By \cref{prop:case-c1-sep-ternary-non-mixing} on $\holts(\X \mid \iny{\mathbf{F}}, \iny{\mathbf{A}}, \iny{\mathbf{B}} )$, 
since $\iny{\mathbf{A}}$ is \sepbg but not in $\casecsupp$ and
$\iny{\mathbf{H}}  \X \iny{\mathbf{A}}$ is not \sepbg but in $\casecsupp$, we get \sphness.
\end{proof}

\subsection{A Set of Ternary and Binary Signatures}
We are now ready to prove the dichotomy for a set of ternary and binary signatures.

\begin{theorem}\label{thm:dichotomy-ternary-binary}
    Let $\mathcal{F}$ and  $\mathcal{B}$ be respectively
    sets of nondegenerate symmetric ternary and binary signatures.
Then, $\holts(\mathcal{F} \cup \mathcal{B})$ is tractable if it satisfies
the tractability criterion of \cref{thm:tractable-cases}.
Otherwise, it is \sph. 
\end{theorem}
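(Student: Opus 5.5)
Tractability is already settled: each case of \cref{thm:tractable-cases} comes with an algorithm in \cref{sec:tractable-cases}. So the task is hardness. I would assume $\holts(\mathcal{F} \cup \mathcal{B})$ is not \sph and show that one of the five conditions holds. If $\mathcal{F} = \emptyset$, condition 1 holds. Otherwise \cref{thm:ternary-set-dichotomy} applies to $\mathcal{F}$ alone, so $\mathcal{F}$ already satisfies one of conditions 2--5. The proof then mirrors that of \cref{thm:ternary-set-dichotomy}: I split according to whether $\mathcal{F}$ contains a signature of type \casecogen/\casecodz, \casecosep, \casebthree or \caseathree, or none of these. In each case I check that the binary signatures respect the same structure.

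\textbf{Case 1: $\mathcal{F}$ contains \casecogen or \casecodz.} The ternary analysis gives an orthogonal $T$ with $\Yinv T\mathcal{F} \subseteq \casecsupp$. For each $\mathbf{G} \in \mathcal{B}$, the signature $\iny{\mathbf{G}} = \Yinv T \mathbf{G} (\Yinv)\transpose$ is the binary signature in the $\holts(\X\mid\cdot)$ setting. Applying \cref{prop:casecodz-binary} or \cref{prop:casecogen-binary} to the anchor gives $\iny{\mathbf{G}} \in \casecsupp$, so condition~\ref{itm:tractable-c} holds.

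\textbf{Case 2: $\mathcal{F}$ contains \casecosep.} Normalize the anchor to $\sym(\bzp,\bzp,\bzm)+\mathbf{e}_3\teh$. By \cref{prop:casecosep-allowed-binary}, every binary is \sepbg, \swbg, or in $\casecsupp$. By \cref{thm:ternary-set-dichotomy}, the ternaries are all in $\casecsupp$ or all \sepbg. Condition~\ref{itm:tractable-c} or~\ref{itm:tractable-bg-r} follows once I rule out mixing. A binary that is \sepbg/\swbg but not in $\casecsupp$ cannot coexist with a binary in $\casecsupp$ that is not \sepbg/\swbg; this is \cref{prop:casecosep-allowed-binary-not-mix}. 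The ternary--binary mixed case I would handle similarly. Suppose the ternaries include one in $\casecsupp\setminus\sepbgmath$ while some binary $\iny{\mathbf{G}}$ is \sepbg/\swbg but not in $\casecsupp$. Then $\iny{\mathbf{G}}\X\agx(\casecosepmath)$ realizes $(p,q,0)\teh+\mathbf{e}_3\teh$ with $q\ne0$, which is \sepbg and not in $\casecsupp$. \cref{prop:case-c1-sep-ternary-non-mixing} then gives hardness. The symmetric mixed case, with a ternary \sepbg but not in $\casecsupp$ and a binary $\iny{\mathbf{H}}$ in $\casecsupp$ but not \sepbg/\swbg, I would treat as in \cref{prop:casecosep-allowed-binary-not-mix} via $\iny{\mathbf{H}}\X\iny{\mathbf{A}}$. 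In the \sepbg branch I still need $\holbs$ tractability of the $\db\dg$ restrictions. This follows because binaries restricted to $\db\dg$ are $[*,*,0]$ or degenerate, which are compatible with the vanishing type of $[0,1,0,0]$.

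\textbf{Cases 3 and 4: $\mathcal{F}$ contains \casebthree or \caseathree.} These are analogous. For \casebthree, use \cref{prop:casebthree-binary} and \cref{prop:casebtwo-binary-not-mix} to land in condition~\ref{itm:tractable-e} or~\ref{itm:tractable-bg-r}. For \caseathree, use \cref{prop:casea3-binary} and \cref{prop:casea3-binary-not-mix}. If some signature, ternary or binary, lies outside $\mathcal{E}$, then a single separating pair $ij|k$ is forced, and all binaries must be $ij|k$ or \swhelper{ij;k}. Ternary--binary mixing is killed by realizing a ternary from the binary via $\mathbf{G}\ag(\caseathreemath)$ and invoking \cref{prop:case-a3-ternary-non-mixing}. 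Boolean tractability on $ij$ then follows from the Boolean dichotomy, using the explicit degenerate or $[*,0,*]$/$[0,*,0]$ forms.

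\textbf{Case 5: every ternary has a nonzero annihilator.} Take $\mathcal{M}$ to be the monoid generated by $\mathcal{B}$. Each $\mathbf{M}\mathbf{F}_i$ is realizable, so $\holts(\mathbf{M}\mathbf{F}_i,\mathbf{F}_j)$ is not \sph. By \cref{prop:annihilating-vector-under-gl-transform}, $\mathbf{M}\mathbf{F}_i$ is degenerate or has a nonzero annihilator. \cref{lem:not-orthogonal-pair-dichotomy-ternary} then forces $(\mathbf{M}\mathbf{F}_i,\mathbf{F}_j)$ to be a compatible pair. This is exactly condition~\ref{itm:tractable-ann} via \cref{thm:algorithm-ebd-with-binary}.

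I expect the main obstacle to be the bookkeeping in Case 2 and Case 4, where the ternary set and the binaries can each individually fit two different tractable classes. I must verify that every cross combination, including ternary with binary, is covered by a realizable gadget that reduces it to an earlier non-mixing proposition.
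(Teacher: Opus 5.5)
Your plan is the paper's proof. It uses the same split according to whether $\mathcal{F}$ contains a \casecogen/\casecodz, \casecosep, \casebthree{} or \caseathree{} signature, the same binary and binary non-mixing propositions in each branch, and the same monoid argument through \cref{lem:not-orthogonal-pair-dichotomy-ternary} when every ternary has an annihilator. You are in fact more explicit than the paper about the ternary--binary cross cases and about the Boolean tractability requirement in \cref{itm:tractable-bg-r}.

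One step fails as written, in the \caseathree{} branch. Take the ternary $\mathbf{H}=(x,y,0)\teh+z\mathbf{e}_3\teh\notin\mathcal{E}$ and a binary $\mathbf{G}\in\mathcal{E}$ that is neither \sepbg{} nor \swbg{}, for instance $\mathbf{G}=\begin{bsmallmatrix}a&0&0\\0&0&b\\0&b&0\end{bsmallmatrix}$ with $a,b\neq 0$. Then $\mathbf{G}\ag(\caseathreemath)=\ag(\caseathreemath)\subset\mathcal{E}$, so your gadget produces nothing to feed into \cref{prop:case-a3-ternary-non-mixing}. There are two easy repairs:
\begin{itemize}
\item Use $\mathbf{G}\mathbf{H}=(ax,0,by)\teh+zb^3\mathbf{e}_2\teh$. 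This is \sepbr{} and outside $\mathcal{E}$; it is the gadget from the last case of \cref{prop:casea3-binary-not-mix}.
\item Contract $\mathbf{H}$ with a generic unary. This gives a \sepbg{} binary outside $\mathcal{E}$, and \cref{prop:casea3-binary-not-mix} then applies.
\end{itemize}
The analogous \casebthree{} cross case has the same problem. There $\Yinv\mathbf{H}\notin\mathcal{E}$ and $\iny{\mathbf{G}}\in\mathcal{E}$ is neither \sepbg{} nor \swbg{}, so $\iny{\mathbf{G}}\X\agx(\casebthreemath)\subset\mathcal{E}$. The same repairs work: use $\iny{\mathbf{G}}\X\iny{\mathbf{H}}$, or a unary contraction of $\iny{\mathbf{H}}$ together with \cref{prop:casebtwo-binary-not-mix}.

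A smaller point: \cref{prop:annihilating-vector-under-gl-transform} only covers invertible $\mathbf{M}$. For singular $\mathbf{M}\in\mathcal{M}$, simply note that $\ker\mathbf{M}\transpose\subseteq\ann(\mathbf{M}\mathbf{F}_i)$, so $\mathbf{M}\mathbf{F}_i$ still has a nonzero annihilator.
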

\begin{proof}
Assume $\holts(\mathcal{F} \cup \mathcal{B})$ is not \sph.
Then $\holts(\mathcal{F} \cup \mathcal{B})$ escapes all \sphness conditions we have proven so far. 
In particular, $\holts(\mathcal{F})$ is not \sph,  and so by \cref{thm:ternary-set-dichotomy},  
it must be in one of the tractable cases given in \cref{thm:tractable-cases}.
    We proceed in the same way as in the proof of \cref{thm:ternary-set-dichotomy}. 
    The idea is that the propositions in this section will show: 1) which binary signatures are allowed (\cref{prop:casebthree-binary,prop:casea3-binary,prop:casecosep-allowed-binary,prop:casecodz-binary,prop:casecogen-binary}); 2) some binary signatures cannot coexist (\cref{prop:casea3-binary-not-mix,prop:casebtwo-binary-not-mix,prop:casecosep-allowed-binary-not-mix}).
    
    If $\mathcal{F}$ contains a \casecogen or \casecodz signature, then there must exist some orthogonal $T$ such that $\Yinv T \mathcal{F} \subseteq \casecsupp$. 
    By \cref{prop:casecodz-binary,prop:casecogen-binary}, $\Yinv T \mathcal{B} \subseteq \casecsupp$ as well. 
    Then $\mathcal{F}$ and $\mathcal{B}$ satisfy \cref{itm:tractable-c} of \cref{thm:tractable-cases}.

    Suppose $\mathcal{F}$ contains a \casecosep signature.
    Then \cref{prop:casecosep-allowed-binary} and \cref{prop:casecosep-allowed-binary-not-mix} imply that there exists some orthogonal $T$ such that:
    \begin{enumerate}
        \item $T \mathcal{F} \subset \sepbgmath$ and $T \mathcal{B} \subset \sepbgmath \cup \swbgmath$.
        Then $\mathcal{F}$ and $\mathcal{B}$ satisfy \cref{itm:tractable-bg-r} of \cref{thm:tractable-cases}; or
        \item $\Yinv T \mathcal{F} \subseteq \casecsupp$ and $\Yinv T \mathcal{B} \subseteq \casecsupp$.
        Then $\mathcal{F}$ and $\mathcal{B}$ satisfy \cref{itm:tractable-c} of \cref{thm:tractable-cases}.
    \end{enumerate}
    
    Suppose $\mathcal{F}$ contains a \casebthree signature.
    Then \cref{prop:casebthree-binary} and \cref{prop:casebtwo-binary-not-mix} imply that there exists some orthogonal $T$ such that:
    \begin{enumerate}
        \item $T \mathcal{F} \subset \sepbgmath$ and $T \mathcal{B} \subset \sepbgmath \cup \swbgmath$.
        Then $\mathcal{F}$ and $\mathcal{B}$ satisfy \cref{itm:tractable-bg-r} of \cref{thm:tractable-cases};
        or
        \item $\Yinv T \mathcal{F} \subseteq \mathcal{E}$ and $\Yinv T \mathcal{B} \subseteq \mathcal{E}$.
        Then $\mathcal{F}$ and $\mathcal{B}$ satisfy \cref{itm:tractable-e} of \cref{thm:tractable-cases}.
    \end{enumerate}

    Suppose $\mathcal{F}$ contains a \caseathree signature.
    Then \cref{prop:casea3-binary} and \cref{prop:casea3-binary-not-mix} imply that there exists some orthogonal $T$
   such that:
    \begin{enumerate}
        \item $T \mathcal{F} \subset \sepbgmath$ and $T \mathcal{B} \subset \sepbgmath \cup \swbgmath$.
        Then $\mathcal{F}$ and $\mathcal{B}$ satisfy \cref{itm:tractable-bg-r} of \cref{thm:tractable-cases};
        or
        \item $T \mathcal{F} \subseteq \mathcal{E}$ and $T \mathcal{B} \subseteq \mathcal{E}$.
        Then $\mathcal{F}$ and $\mathcal{B}$ satisfy \cref{itm:tractable-e} of \cref{thm:tractable-cases}.
    \end{enumerate}

    Suppose $\mathcal{F}$ does not contain any \caseathree, \casebthree, \casecosep, \casecodz, or \casecogen signature.
    Then every signature in $\mathcal{F}$ has a nonzero annihilator. 
Let $\mathcal{M} = \langle \mathcal{B}, I \rangle$, the monoid generated by $\mathcal{B}$ under multiplication. 
We have $\holts(\mathcal{F} \cup \mathcal{B}) \ge_T \holts(\mathbf{M} \mathbf{F}_i, \mathbf{F}_j)$ for every $\mathbf{M} \in \mathcal{M}$, $\mathbf{F}_i, \mathbf{F}_j \in \mathcal{F}$. 
Then, $\mathbf{M} \mathbf{F}_i$ is degenerate or $(\mathbf{M} \mathbf{F}_i, \mathbf{F}_j)$ must be a compatible pair by \cref{lem:not-orthogonal-pair-dichotomy-ternary}.
Then $\mathcal{F}$ and $\mathcal{B}$ satisfy \cref{itm:tractable-ann} of \cref{thm:tractable-cases}.
\end{proof}

\section{Higher Arity}
\label{sec:higher-arity}
We now extend the dichotomy to higher arity signatures.
We show that higher arity signatures must also be of essentially the same form as the ternary signatures. 
Let $\mathbf{F}$ be a nondegenerate symmetric signature of arity $4$.
We analyze the subspace of signatures $\mathscr{F} = \{\langle \mathbf{F}, \mathbf{u} \rangle : \mathbf{u} \in \mathbb{C}^3\}$. 
Assume $\mathbf{F}$ is not \sph. 
Since $\holts(\mathbf{F}) \ge_T \holts(\mathscr{F})$, 
we may assume that $\holts(\mathscr{F})$ is not \sph. 
Since $\mathscr{F}$ is a set of ternary signatures, we can apply the dichotomy theorems we have proven so far. 
If there exists $\mathbf{G} \in \mathscr{F}$ such that $\ann(\mathbf{G}) = 0$, then it is 
easy to deduce the forms of the other signatures in  $\mathscr{F}$.
By the pairwise dichotomy, $\mathbf{G}$ puts strong constraint on $\mathscr{F}$ because $\mathscr{F}$ is a linear subspace: for any $\mathbf{H} \in \mathscr{F}$, $\holts(\mathscr{F}) \ge_T \holts(\mathbf{G}, \mathbf{H}, x \mathbf{G} + y \mathbf{H})$  for all $x, y \in \mathbb{C}$, and $\mathbf{H}$ must be of a specific form
in order that the latter is not \sph.

Otherwise, we need to analyze the case when every signature in $\mathscr{F}$ has a nontrivial annihilator. 
Let $\mathbf{G}, \mathbf{H} \in \mathscr{F}$ with $\ann(\mathbf{G}), \ann(\mathbf{H}) \ne 0$. 
If every signature in $\mathscr{F}$ has a nontrivial annihilator, then $\ann(x \mathbf{G} + y \mathbf{H}) \ne 0$ for all $x, y \in \mathbb{C}$. 
We first analyze when that can happen. 
Then, we analyze the complexity of $\holts(\mathbf{G}, \mathbf{H}, x \mathbf{G} + y \mathbf{H})$ assuming $\ann(x \mathbf{G} + y \mathbf{H}) \ne 0$ for all $x, y \in \mathbb{C}$. 
The reason for this seemingly roundabout approach instead of directly applying the  ternary
dichotomy on $\holts(\mathbf{G}, \mathbf{H}, x \mathbf{G} + y \mathbf{H})$ 
is that it is often difficult to check 
whether $\holts(x \mathbf{G} + y \mathbf{H})$ is tractable or not because of the non-uniqueness of the tensor summation form when there are multiple summands. 

With these preliminary results, we can analyze $\mathscr{F}$ and conclude that its support must be of a very restricted form: there must exist an orthogonal $T$ such that 
either $T \mathscr{F} \subset \mathcal{E}$, or $\Yinv T \mathscr{F} \subset \mathcal{E}$, or $\Yinv T \mathscr{F} \subset \casecsupp$. 
For example, if $T \mathscr{F} \subset \mathcal{E}$, 
then by looking at the triangular representation of $T \mathbf{F}$ and the subsignatures $\langle T \mathbf{F}, \mathbf{e}_i \rangle \in T \mathscr{F} \subset \mathcal{E}$, we can conclude that $T \mathbf{F} \in \mathcal{E}$. 

After proving the dichotomy for a single arity $4$ signature, we can inductively extend the dichotomy for arity $n \ge 4$. 
To do so, we first extend the definition of tractable types of ternary signatures (\cref{def:types-ternary}) to include higher arity signatures in \cref{def:types-higher-arity}.
Then, we show that a higher arity signature of certain type can realize a ternary signature of the same type in \cref{prop:reduce-arity}. 
This allows us to argue that the dichotomy for a set of arity $\le 4$ signatures will be essentially the same as the dichotomy for a set of ternary signatures, and we can apply the techniques of this section to inductively extend the dichotomy to a set of signatures of any arity.

\subsection{Some Preliminaries on \texorpdfstring{$\mathscr{F}$}{F}}

\begin{proposition}
  \label{prop:lin-comb-of-ebd-signatures}
Let $\mathbf{F}$ and $\mathbf{G}$ be nondegenerate symmetric ternary signatures such that 
$\supp \mathbf{F} \subseteq \{\db, \dg\}^*$ and $\supp \mathbf{G} \subseteq \{\dg,\dr\}^*$.
Then, unless $\mathbf{F}\domres{\db\dg} = [0,0, *, *]$ and $\mathbf{G}\domres{\dg\dr} = [*,*,0,0]$, 
there exists some nonzero $x, y \in \mathbb{C}$ such that 
$\ann(x \mathbf{F} + y \mathbf{G}) = 0$.
\end{proposition}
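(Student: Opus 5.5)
Write $\mathbf{F}\domres{\db\dg} = [f_0,f_1,f_2,f_3]$ and $\mathbf{G}\domres{\dg\dr} = [g_0,g_1,g_2,g_3]$. Both are nondegenerate, and $\mathbf{F}$ vanishes off $\{\db,\dg\}^*$ while $\mathbf{G}$ vanishes off $\{\dg,\dr\}^*$. The two signatures overlap only on the single entry $\dg\dg\dg$, where $\mathbf{H} = x\mathbf{F}+y\mathbf{G}$ takes the value $xf_3 + yg_0$.

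\textbf{Linear system.} For $\mathbf{u}=(u_1,u_2,u_3)$, the condition $\langle \mathbf{H},\mathbf{u}\rangle = 0$ is a linear system in $\mathbf{u}$, read off from the triangle as in \cref{fig:computing-f-u}. The six entries of the resulting binary signature are as follows.
\begin{itemize}
\item $\db\db$: $x(f_0u_1+f_1u_2)$.
\item $\db\dg$: $x(f_1u_1+f_2u_2)$.
\item $\db\dr$: $0$.
\item $\dg\dg$: $xf_2u_1 + (xf_3+yg_0)u_2 + yg_1u_3$.
\item $\dg\dr$: $y(g_1u_2+g_2u_3)$.
\item $\dr\dr$: $y(g_2u_2+g_3u_3)$.
\end{itemize}
Take $x,y\ne 0$.

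\textbf{Forcing $\mathbf{u}=0$.} The $\db\db$ and $\db\dg$ equations involve the matrix $A=\begin{bsmallmatrix} f_0&f_1\\ f_1&f_2\end{bsmallmatrix}$ applied to $(u_1,u_2)$. The $\dg\dr$ and $\dr\dr$ equations involve $B=\begin{bsmallmatrix} g_1&g_2\\ g_2&g_3\end{bsmallmatrix}$ applied to $(u_2,u_3)$.
\begin{itemize}
\item If $\det A\neq0$, then $u_1=u_2=0$. The remaining equations say $g_2u_3=g_3u_3=0$ and $yg_1u_3=0$. So $u_3=0$ unless $g_1=g_2=g_3=0$. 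But in that case $\mathbf{G}\domres{\dg\dr}=[g_0,0,0,0]$ is degenerate, a contradiction.
\item Symmetrically, if $\det B\ne 0$, we get $u_2=u_3=0$. Then $f_0u_1=f_1u_1=f_2u_1=0$ forces $u_1=0$, since otherwise $\mathbf{F}\domres{\db\dg}=[0,0,0,f_3]$ would be degenerate.
\end{itemize}
So in these cases \emph{any} nonzero $x,y$ work.

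\textbf{The main case: $\det A=\det B=0$.} Here I use that a $2\times 2$ Hankel matrix of rank $\le 1$ coming from a nondegenerate binary ternary signature forces special shapes.
\begin{itemize}
\item $\det A=0$ means $f_0f_2=f_1^2$. If $f_0\neq0$, write $(f_0,f_1,f_2)=f_0(1,t,t^2)$. Nondegeneracy then means $f_3\ne f_0t^3$.
\item If $f_0=0$, then $f_1=0$ and $\mathbf{F}\domres{\db\dg}=[0,0,f_2,f_3]$ with $f_2\ne0$.
\item Similarly, either $(g_1,g_2,g_3)=g_3(s^2,s,1)$ with $g_0\neq g_3s^3$, or $g_3=g_2=0$ and $\mathbf{G}\domres{\dg\dr}=[g_0,g_1,0,0]$ with $g_1\ne0$.
\end{itemize}
The excluded situation in the statement is exactly the second alternative holding for both signatures simultaneously. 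So at least one of them is of ``geometric'' type, say $\mathbf{F}$ (the other case is symmetric).

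In that case the first two equations reduce to the single equation $u_1+tu_2=0$. Substituting into the $\dg\dg$ equation, and combining with the $\mathbf{G}$-equations, gives a homogeneous linear system in $(u_2,u_3)$. I will compute its determinant; it should be a polynomial in $x,y$ of the form
\[
x\,(f_3 - f_0t^3)\cdot(\text{nonzero factor from }\mathbf{G}) \;+\; y\cdot(\text{term from }\mathbf{G}).
\]
Its coefficient of $x$ is nonzero by nondegeneracy of $\mathbf{F}$. Hence the determinant vanishes on at most one ratio $x:y$, and choosing $x,y$ nonzero off that ratio gives $\ann(\mathbf{H})=0$.

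The expected obstacle is the subcase where $\mathbf{G}$ is $[g_0,g_1,0,0]$ while $\mathbf{F}$ is geometric. Here one has to check that the $\mathbf{G}$-equations still pin down $u_3$: specifically, $g_1u_2=0$ gives $u_2=0$, and then $u_1=0$. The $\dg\dg$ equation then becomes $yg_1u_3=0$, which gives $u_3=0$. I expect this to work directly. I also expect the mixed subcases, where only one of $\mathbf{F},\mathbf{G}$ is geometric, to close by the same bookkeeping.
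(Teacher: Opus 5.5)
Your proof is correct and has the same skeleton as the paper's: the same linear system, and the same first split on whether $\begin{bsmallmatrix} f_0&f_1\\ f_1&f_2\end{bsmallmatrix}$ or $\begin{bsmallmatrix} g_1&g_2\\ g_2&g_3\end{bsmallmatrix}$ is nonsingular. These are the paper's conditions $(a_1,a_2)\not\sim(a_2,a_3)$ and $(b_3,b_2)\not\sim(b_2,b_1)$.

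The two arguments differ only in the case where both blocks have rank one. You parametrize the rank-one Hankel blocks and eliminate. The computation you deferred does come out as you predicted. With $u_1=-tu_2$ and $u_3=-su_2$, the two $\mathbf{G}$-equations are proportional, and the $\dg\dg$ equation becomes $\bigl[x(f_3-f_0t^3)+y(g_0-g_3s^3)\bigr]u_2=0$. Both coefficients are nonzero by nondegeneracy, so exactly one ratio $x:y$ is bad. In the mixed cases, $\mathbf{u}=0$ is forced for every nonzero $x,y$.

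The paper instead picks a single $3\times 3$ minor. It factors as $xy(\alpha x+\beta y)$, where $\alpha=b_jD_A$, $\beta=a_iD_B$, and $D_A,D_B$ are suitably chosen nonzero $2\times 2$ determinants. The paper then argues by contrapositive: if the minor vanished for all nonzero $x,y$, then $a_i=b_j=0$, and a short extra deduction gives the excluded shape $a_1=a_2=0$, $b_1=b_2=0$.

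Your route is somewhat longer, but it is direct and identifies the unique exceptional ratio explicitly. The paper's is terser.
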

\begin{proof}
Under the assumptions, we can write $\mathbf{F}$ and $\mathbf{G}$ as in \cref{fig:lin-comb-mf-mg}.
\begin{figure}
  \centering
\begin{subfigure}[b]{0.45\textwidth}
    \centering
    \tersig{0.7}{a_1 a_2 0 a_3 0 0 a_4 0 0 0}
    \caption{$\mathbf{F}$}
  \end{subfigure}
\begin{subfigure}[b]{0.45\textwidth}
    \centering
    \tersig{0.7}{0 0 0 0 0 0 b_4 b_3 b_2 b_1}
    \caption{$\mathbf{G}$}
  \end{subfigure}
\par\bigskip
\begin{subfigure}[b]{1\textwidth}
    \centering
    \tersig{0.7}{{x a_1} {x a_2} 0 {x a_3} 0 0 {x a_4 + y b_4} {y b_3} {y b_2} {y b_1}}
    \caption{$x \mathbf{F} + y \mathbf{G}$}
  \end{subfigure}
  \caption{Linear combination of $\mathbf{F}$ and $\mathbf{G}$.}
  \label{fig:lin-comb-mf-mg}
\end{figure}
Since $\mathbf{F}$ and $ \mathbf{G}$ are nondegenerate, we must have that 
\begin{align}
  \label{eq:rank-mf-mg}
  \rank \begin{bmatrix}
    a_1 & a_2 \\ a_2 & a_3 \\ a_3 & a_4
  \end{bmatrix} = 2 && 
  \rank \begin{bmatrix}
    b_4 & b_3 \\ b_3 & b_2 \\ b_2 & b_1
  \end{bmatrix} = 2
\end{align}
since the only annihilator of $ \mathbf{F}$ and $ \mathbf{G}$ is $\mathbf{e}_3$ and $\mathbf{e}_1$ respectively.
Then, for $\mathbf{w}$ to annihilate $x  \mathbf{F} + y \mathbf{G}$, it must be the case that (see \cref{fig:computing-f-u})
\begin{equation}
  \label{eq:rank-xmf-ymg}
  \begin{bmatrix}
    x a_1 & x a_2 & 0 \\
    x a_2 & x a_3 & 0 \\
    x a_3 & x a_4 + y b_4 & y b_3 \\
    0 & y b_3 & y b_2 \\
    0 & y b_2 & y b_1
  \end{bmatrix} \mathbf{w} = 0 \,.
\end{equation}
Suppose $(a_1, a_2) \not \sim (a_2, a_3)$.
Then, since at least one of $b_1, b_2, b_3$ must be nonzero for \cref{eq:rank-mf-mg} to hold, the matrix in \cref{eq:rank-xmf-ymg} has rank $3$ because we can find three linearly independent rows.
Similarly, if $(b_3, b_2) \not \sim (b_2, b_1)$, the matrix is rank $3$.

If $(a_1, a_2) \sim (a_2, a_3)$ and $(b_3, b_2) \sim (b_2, b_1)$, we may assume that 
$(a_i, a_{i+1}) \not \sim (a_3, a_4)$
for $i = 1$ or $2$.  Similarly, 
  $(b_{j+1}, b_{j}) \not \sim (b_4, b_3)$, for $j = 1$ or $2$.
By choosing those rows and the third row in~\cref{eq:rank-xmf-ymg}, for the determinant
\begin{align*}
  \begin{vmatrix}
    x a_i & x a_{i+1} & 0 \\
    x a_3 & x a_4 + y b_4 & y b_3 \\
    0 & y b_{j+1} & y b_j \\
  \end{vmatrix} &= 
  x^2y \begin{vmatrix}
    a_i & a_{i+1} & 0 \\
    a_3 & a_4& b_3 \\
    0 & 0 & b_j \\
  \end{vmatrix}
  +
 xy^2 \begin{vmatrix}
    a_i & 0 & 0 \\
    a_3 & b_4 & b_3 \\
    0 & b_{j+1} & b_j \\
  \end{vmatrix} \\
  &=x^2y b_j \begin{vmatrix}
    a_i & a_{i+1} \\
    a_3 & a_4 \\
  \end{vmatrix}
  +
 xy^2 a_i  \begin{vmatrix}
    b_4 & b_3 \\
    b_{j+1} & b_j \\
  \end{vmatrix} \,
\end{align*}
to be zero for all nonzero $x, y$,  we have $b_j=0$ 
as $\lvert\begin{smallmatrix}
    a_i & a_{i+1} \\
    a_3 & a_4 \\
  \end{smallmatrix} \ne 0$, and  $a_i=0$
 as $\begin{vsmallmatrix}
    b_4 & b_3 \\
    b_{j+1} & b_j \\
  \end{vsmallmatrix}  \ne 0$. 
  Then, $i=2 \Rightarrow a_3^2 \ne 0$. Thus
  $0 = \begin{vsmallmatrix}
    a_1 & a_2 \\
    a_2 & a_3 \\
  \end{vsmallmatrix} = a_1 a_3 \Rightarrow a_1 = 0$. Thus $a_1=0$ regardless whether $i=1$ or
  $2$. Then we get $0 = \begin{vsmallmatrix}
    a_1 & a_2 \\
    a_2 & a_3 \\
  \end{vsmallmatrix} \Rightarrow a_2 = 0$. 
  So in fact $(a_i, a_{i+1}) \not \sim (a_3, a_4)$
for $i=2$. 
  Then $0 \ne \begin{vsmallmatrix}
    a_2 & a_3 \\
    a_3 & a_4 \\
  \end{vsmallmatrix} \Rightarrow a_3 \ne 0$.
Similarly, $b_1, b_2 = 0$ and $b_3 \ne 0$.
\end{proof}

\begin{corollary}
  \label{cor:lin-comb-ebd-signatures-cannot-be-a-b}
Let $\mathbf{F}, \mathbf{G}$ be nondegenerate symmetric ternary signatures such that 
$\ann(\mathbf{F}) = \spn(\mathbf{u}) \ne 0$ and $\ann(\mathbf{G}) = \spn(\mathbf{v}) \ne 0$ and $\mathbf{u} \not \sim \mathbf{v}$.
Suppose $\mathbf{F}$ is \caseatwo or \casebtwo.
Then, there exist some nonzero $x, y \in \mathbb{C}$ such that 
$\ann(x \mathbf{F} + y \mathbf{G}) = 0$.
\end{corollary}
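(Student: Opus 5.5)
Our plan is to reduce to \cref{prop:lin-comb-of-ebd-signatures} by a holographic change of basis that puts $\mathbf{F}$ on $\db\dg$ and $\mathbf{G}$ on $\dg\dr$. Then we use the \caseatwo/\casebtwo form of $\mathbf{F}$ to rule out the exceptional case $\mathbf{F}\domres{\db\dg} = [0,0,*,*]$. The tool is \cref{prop:annihilating-vector-under-gl-transform}: for $M \in \gl(3,\mathbb{C})$ we have $\ann(M\mathbf{H}) = (M^{-1})\transpose \ann(\mathbf{H})$. In particular, the property $\ann(x\mathbf{F}+y\mathbf{G}) = 0$ is invariant under $M$, since $M(x\mathbf{F}+y\mathbf{G}) = xM\mathbf{F}+yM\mathbf{G}$. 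So any invertible $M$ may be used, and orthogonality is not needed.

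First choose the basis. Since $\mathbf{u}\not\sim\mathbf{v}$, they are linearly independent and can be completed to a basis. We want $M$ with $(M^{-1})\transpose \mathbf{u} \sim \mathbf{e}_3$ and $(M^{-1})\transpose\mathbf{v} \sim \mathbf{e}_1$. Equivalently, take $N = M\transpose$ invertible with $N\mathbf{e}_3 = \mathbf{u}$ and $N\mathbf{e}_1 = \mathbf{v}$, and choose the middle column arbitrarily to make $N$ invertible.

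Then $\ann(M\mathbf{F}) = \spn(\mathbf{e}_3)$, so $M\mathbf{F}$ is supported on $\db\dg$, as seen from the triangular table. Likewise $\ann(M\mathbf{G}) = \spn(\mathbf{e}_1)$, so $M\mathbf{G}$ is supported on $\dg\dr$. Both remain nondegenerate, so \cref{prop:lin-comb-of-ebd-signatures} applies to $(M\mathbf{F}, M\mathbf{G})$. It gives nonzero $x,y$ with $\ann(xM\mathbf{F}+yM\mathbf{G})=0$, and hence $\ann(x\mathbf{F}+y\mathbf{G})=0$, unless $(M\mathbf{F})\domres{\db\dg} = [0,0,*,*]$ and $(M\mathbf{G})\domres{\dg\dr} = [*,*,0,0]$.

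It remains to exclude the exceptional case using the type of $\mathbf{F}$; this is the step needing care. If $\mathbf{F}$ is \caseatwo or \casebtwo, then $\mathbf{F} = \boldsymbol{\alpha}\teh + \boldsymbol{\beta}\teh$ with $\boldsymbol{\alpha}\not\sim\boldsymbol{\beta}$, up to scalars absorbed into the vectors (for \caseatwo, $T\mathbf{F} = a\mathbf{e}_1\teh + b\mathbf{e}_2\teh$; for \casebtwo, with $\bzp,\bzm$). Hence $M\mathbf{F} = (M\boldsymbol{\alpha})\teh + (M\boldsymbol{\beta})\teh$ has symmetric rank $2$. Since $M\mathbf{F}$ is supported on $\db\dg$, we get $(M\mathbf{F})\domres{\db\dg} = \mathbf{p}\teh + \mathbf{q}\teh$ with $\mathbf{p}\not\sim\mathbf{q}$ in $\mathbb{C}^2$. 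Indeed, $\mathbf{e}_3$ annihilates it, so both $M\boldsymbol{\alpha}$ and $M\boldsymbol{\beta}$ lie in $\spn(\mathbf{e}_1,\mathbf{e}_2)$ by linear independence of $\mathbf{a}\tew,\mathbf{b}\tew$.

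Now a Boolean signature $[0,0,c,d]$ with $c \ne 0$ (it is nondegenerate) equals $\sym(\sbtt,\sbtt,\cdot)$-type: writing $[0,0,c,d] = d'\sbtt\teh + c\,\sym(\sbtt,\sbtt,\sbto)$ shows it has the form in \cref{prop:symmetric-rank-sym-sum} (or its $[x,1,0,0]$ variant with the roles of $\sbto,\sbtt$ swapped). It therefore has symmetric rank $3$, contradicting rank $2$.

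The main obstacle is making the rank comparison airtight: passing from symmetric rank over $\mathbb{C}^3$ to the Boolean restriction, and using the remark after \cref{prop:symmetric-rank-sym-sum} that $[x,1,0,0]$ has rank $3$. Alternatively, we can avoid rank by using \cref{prop:uniqueness-sum-lin-ind-tensor} against the $\sym$ form via \cref{prop:uniqueness-sym-form}-style annihilation arguments. With the exception excluded, the conclusion follows.
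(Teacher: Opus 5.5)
Your proposal is correct and is essentially the paper's proof. You pick $M \in \gl(3,\mathbb{C})$ moving the annihilators to $\mathbf{e}_3$ and $\mathbf{e}_1$, apply \cref{prop:lin-comb-of-ebd-signatures} to $(M\mathbf{F}, M\mathbf{G})$, and exclude the exceptional case because $(M\mathbf{F})\domres{\db\dg}$ is a sum of two cubes while a nondegenerate $[0,0,c,d]$ has symmetric rank $3$ by \cref{prop:symmetric-rank-sym-sum}. Your explicit remark that nondegeneracy of $M\mathbf{F}$ forces $c \ne 0$ supplies a small detail the paper leaves implicit.
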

\begin{proof}
Since $\mathbf{u} \not \sim \mathbf{v}$, there exists some $M \in \gl(3, \mathbb{C})$ (not necessarily orthogonal) such that 
$\langle M \mathbf{F}, \mathbf{e}_3 \rangle = 0$ and $\langle M \mathbf{G}, \mathbf{e}_1 \rangle = 0$.
Thus, $\supp M \mathbf{F} \subseteq \{\db, \dg\}^*$ and $\supp M \mathbf{G} \subseteq \{\dg,\dr\}^*$.
Also $(M \mathbf{F})\domres{\db\dg}$ has the form $\mathbf{x}\teh + \mathbf{y}\teh$
for some $\mathbf{x}, \mathbf{y} \in \mathbb{C}^2$, thus not of the form $[0,0,*,*]$ by \cref{prop:symmetric-rank-sym-sum}.
By \cref{prop:lin-comb-of-ebd-signatures}, for some 
 nonzero $x, y \in \mathbb{C}$, 
$\ann(x M \mathbf{F} + y M \mathbf{G}) = 0$.
Thus $\ann(x  \mathbf{F} + y  \mathbf{G}) = 0$.
\end{proof}

\begin{proposition}
  \label{prop:sum-casec1bg-else}
  Let $\iny{\mathbf{G}} = p \mathbf{e}_1\teh + q \sym(\mathbf{e}_1, \mathbf{e}_1, \mathbf{e}_2)$ for $q \ne 0$.
  Let $\iny{\mathbf{H}}$ be a nondegenerate symmetric ternary signature 
  such that $\ann(\iny{\mathbf{H}}) = \spn(\mathbf{u}) \ne 0$.
  If $\ann(\iny{\mathbf{G}} + \iny{\mathbf{H}}) \ne 0$ and
  $\holts(\mathbf{X} \mid \iny{\mathbf{G}}, \iny{\mathbf{H}}, \iny{\mathbf{G}} + \iny{\mathbf{H}})$ is not \sph, then the support of $\iny{\mathbf{H}}$ is a subset of $\db\db\db$, $ \db\db\dg$, $\db\db\dr$ and their permutations.
\end{proposition}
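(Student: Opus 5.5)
Since $q \ne 0$ and $\ann(\iny{\mathbf{G}}) = \spn(\mathbf{e}_3)$, the signature $\iny{\mathbf{G}}$ lies in $\vobg$. The hypothesis $\ann(\iny{\mathbf{G}}+\iny{\mathbf{H}}) \ne 0$ will be used first to pin down $\mathbf{u}$, and then the non-\sphness of the bipartite problem will finish the job.

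The plan splits on $\langle \mathbf{u}, \mathbf{e}_3\rangle$, where $\ann(\iny{\mathbf{H}}) = \spn(\mathbf{u})$. The annihilators here are taken in the bipartite $\X$-setting, so the relevant pairing between the annihilators of two right-side signatures is $\mathbf{u}\transpose \X^{-1} \mathbf{v} = \mathbf{u}\transpose \X \mathbf{v}$ (transport via \cref{prop:annihilating-vector-under-gl-transform} and $\Y$).

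\textbf{Case 1: $\mathbf{u} \not\sim \mathbf{e}_3$.} First pass back to $\holts(\Y\iny{\mathbf{G}}, \Y\iny{\mathbf{H}}, \Y(\iny{\mathbf{G}}+\iny{\mathbf{H}}))$. Choose $M \in \gl(3,\mathbb{C})$ sending the two annihilators to $\mathbf{e}_3$ and $\mathbf{e}_1$, as in the proof of \cref{cor:lin-comb-ebd-signatures-cannot-be-a-b}. Then $M\iny{\mathbf{G}}$ is supported on $\db\dg$, and its restriction there is $\sym(\mathbf{x},\mathbf{x},\mathbf{y})$-like with $q \ne 0$. By \cref{prop:lin-comb-of-ebd-signatures}, $\ann(x\mathbf{G}+y\mathbf{H}) \ne 0$ for all $x,y$ forces $(M\iny{\mathbf{G}})\domres{\db\dg} = [0,0,*,*]$ and $(M\iny{\mathbf{H}})\domres{\dg\dr} = [*,*,0,0]$.

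However, the hypothesis only gives $x=y=1$, not all $x,y$. To get the full family, I would use the scaling freedom: replace $\iny{\mathbf{H}}$ by $\rho(\mathrm{diag}(t,t^{-1}))\iny{\mathbf{H}}$. This does not work directly. Instead, I will examine the single determinant condition from the proof of \cref{prop:lin-comb-of-ebd-signatures} at $x=y=1$, together with the tractability constraints coming from \cref{prop:two-ternary-dichotomy-C-1-bg} applied to the pair $(\mathbf{G},\mathbf{H})$.

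That proposition gives two alternatives. Either the annihilators are orthogonal in the appropriate form, or $\iny{\mathbf{H}}$ has the form $\sym(\mathbf{v},\mathbf{v},\mathbf{w}) + \mathbf{v}\teh$ with $\mathbf{v} = (*,0,*)$, which is of the shape of \cref{fig:casedz-m-on-bg}. In the second alternative, I compute $\ann$ of the explicit sum $\iny{\mathbf{G}}+\iny{\mathbf{H}}$ via the $5\times 3$ linear system (\cref{fig:computing-f-u}). The goal is to show that nonvanishing of the $\dr$-containing entries makes this system rank $3$ unless the $x_{\db\dr\dr}$, $x_{\dr\dr\dr}$, $x_{\dg\cdot\cdot}$ entries vanish. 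This leaves support in $\{\db\db\db,\db\db\dg,\db\db\dr\}$.

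\textbf{Case 2: annihilators orthogonal, i.e.\ $\mathbf{u} = (*,0,*)$.} Here I again write down the $5\times3$ annihilation system for $\iny{\mathbf{G}}+\iny{\mathbf{H}}$. The $\db\db\dg$ row contributes $q$ in the first column, and $\iny{\mathbf{H}}$ is determined by its restriction to two domains via \cref{prop:annihilator-uniquely-propagates-values} (with domains relabeled). This lets me force the vanishing of the entries involving $\dg$ beyond one copy and of those with two or more $\dr$.

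Finally, if $\mathbf{u} \sim \mathbf{e}_3$, then $\iny{\mathbf{H}}$ is supported on $\db\dg$, and the bipartite dichotomy \cref{cor:bipartite-boolean-dichotomy-explicit}, applied with $[0,0,1,0]$ on the left, gives $[*,*,0,0]$, which is already of the stated form.

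The main obstacle is Case 1. There, only the single combination $\iny{\mathbf{G}}+\iny{\mathbf{H}}$ is known to have an annihilator, so the rank computation must be combined carefully with the explicit tractable forms, rather than quoting \cref{prop:lin-comb-of-ebd-signatures} verbatim.
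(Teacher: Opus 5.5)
Your treatment of $\langle \mathbf{u},\mathbf{e}_3\rangle\neq 0$ follows the paper. Routing through \cref{prop:two-ternary-dichotomy-C-1-bg} amounts to the bipartite dichotomy against $(\X\iny{\mathbf{G}})\domres{\db\dg}=[0,0,q,p]$, combined with \cref{prop:annihilator-uniquely-propagates-values}. This puts $\iny{\mathbf{H}}$ in the shape of \cref{fig:casedz-m-on-bg} with $b\neq0$ for $\mathbf{u}=(x,y,-1)$. The minor of the annihilation system of $\iny{\mathbf{G}}+\iny{\mathbf{H}}$ on the rows indexed by $\db\dr,\dg\dr,\db\dg$ is $qb^2x^3$, so $x=0$ and the conclusion follows.

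The gap is the orthogonal case. Under the pairing $\mathbf{u}\transpose\X\mathbf{v}$ you set up, orthogonality to $\ann(\iny{\mathbf{G}})=\spn(\mathbf{e}_3)$ means $u_3=0$, i.e.\ $\mathbf{u}=(-y,x,0)$, not $\mathbf{u}=(*,0,*)$. Taken literally, your Case~2 adds only $\mathbf{u}\sim\mathbf{e}_1$ to what Case~1 and your final paragraph already cover. It omits $\mathbf{u}\sim\mathbf{e}_2$ and $\mathbf{u}=(-y,x,0)$ with $xy\neq0$.

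The mechanism there is also not simply ``forcing entries to vanish''. By propagation, $\iny{\mathbf{H}}=a\,\mathbf{r}\teh+b\,\sym(\mathbf{r},\mathbf{r},\mathbf{e}_3)+c\,\sym(\mathbf{r},\mathbf{e}_3,\mathbf{e}_3)+d\,\mathbf{e}_3\teh$ with $\mathbf{r}=(x,y,0)$ (cf.\ \cref{fig:G-plus-H-xy0}). Let $r_{ij}=\bigl((\iny{\mathbf{G}}+\iny{\mathbf{H}})(ij\db),(\iny{\mathbf{G}}+\iny{\mathbf{H}})(ij\dg),(\iny{\mathbf{G}}+\iny{\mathbf{H}})(ij\dr)\bigr)$ denote the rows of the annihilation matrix.

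\emph{Subcase $y\neq0$.} The combinations $y\,r_{\db\dg}-x\,r_{\dg\dg}=(qy,0,0)$ and $y\,r_{\db\db}-x\,r_{\db\dg}=(py-qx,qy,0)$ already span $\{(*,*,0)\}$. So $\ann(\iny{\mathbf{G}}+\iny{\mathbf{H}})\neq0$ forces every third coordinate to vanish, i.e.\ $b=c=d=0$. Then $\iny{\mathbf{H}}=a\,\mathbf{r}\teh$ is degenerate, a contradiction with the hypothesis rather than a support statement. The paper reaches the same contradiction by first applying the $\db\dg$ dichotomy to the sum to get $a=0$, then using two minors.

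\emph{Subcase $y=0$.} Here $x\neq0$ and $\iny{\mathbf{H}}$ is supported on $\db\dr$. The rows $r_{\db\db}=(p+ax^3,q,bx^2)$ and $r_{\db\dg}=(q,0,0)$ span a plane containing no vector $(\alpha,0,\gamma)$ with $\gamma\neq0$. Hence the rows $r_{\db\dr}=(bx^2,0,cx)$ and $r_{\dr\dr}=(cx,0,d)$ force $c=d=0$, which is where the stated support is actually attained.

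These two computations are the real content of the orthogonal case, and your proposal supplies neither.
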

\begin{proof}
Note that $\iny{\mathbf{G}}$ and thus $\X\iny{\mathbf{G}}$ are supported on $\db\dg$. We have 
\begin{align*}
\holts(\mathbf{X} \mid \iny{\mathbf{G}}, \iny{\mathbf{H}}, \iny{\mathbf{G}} + \iny{\mathbf{H}}) 
&\ge_T \holts(\mathbf{X} \iny{\mathbf{G}} \mid  \iny{\mathbf{H}}, \iny{\mathbf{G}} + \iny{\mathbf{H}}) 
\\
&\ge_T 
\holbs([0, 0, q, p] \mid  \iny{\mathbf{H}}\domres{\db\dg}, (\iny{\mathbf{G}} + \iny{\mathbf{H}})\domres{\db\dg}) \; .
\end{align*}

Suppose $\mathbf{u}$ is not orthogonal to $\mathbf{e}_3$.
Then we can write $\mathbf{u} = (x, y, -1)$.
For $\holbs([0, 0, q, p] \mid \iny{\mathbf{H}}\domres{\db\dg})$ to be not \sph,
$\iny{\mathbf{H}}\domres{\db\dg} = [a, b, 0, 0]$ for some $a, b \in \mathbb{C}$ by \cref{cor:bipartite-boolean-dichotomy-explicit}.
Because $\iny{\mathbf{H}}$ is nondegenerate,
$\iny{\mathbf{H}}$ must have the form in \cref{fig:casedz-m-on-bg} with $b \ne 0$ by \cref{prop:annihilator-uniquely-propagates-values}.
Then, we have $\iny{\mathbf{G}} + \iny{\mathbf{H}}$ as in \cref{fig:G-plus-H}.
\begin{figure}
  \centering

\begin{subfigure}[b]{0.45\textwidth}
    \centering
    \tersig{1}{{p + a} { q + b} {ax + by} 0 bx {ax^2 + 2 bxy} 0 0 {bx^2} {ax^3 + 3bx^2y}}
    \caption{$ \iny{\mathbf{G}} + \iny{\mathbf{H}}$ for $\mathbf{u} = (x, y, -1)$}
    \label{fig:G-plus-H}
\end{subfigure}
    \hfill
\begin{subfigure}[b]{0.45\textwidth}
    \centering
    \tersig{1}{{p + ax^3} { q + ax^2 y} {bx^2} {a xy^2} {bxy} {cx} {ay^3} {by^2} {cy} {d}}
    \caption{$ \iny{\mathbf{G}} + \iny{\mathbf{H}}$ for $\mathbf{u} = (-y, x, 0)$}
    \label{fig:G-plus-H-xy0}
    \end{subfigure}
\caption{Forms of $\iny{\mathbf{G}} + \iny{\mathbf{H}}$ in the proof of \cref{prop:sum-casec1bg-else}.}
\end{figure}

Suppose $\langle \iny{\mathbf{G}} + \iny{\mathbf{H}}, \mathbf{w} \rangle = 0$ for some nonzero $\mathbf{w} \in \mathbb{C}^3$.
Then, it must be the case that (see \cref{fig:computing-f-u})
\[
\begin{bmatrix}
    (\iny{\mathbf{G}} + \iny{\mathbf{H}})(\db\db\dr) & 
    (\iny{\mathbf{G}} + \iny{\mathbf{H}})(\db\dg\dr) & 
    (\iny{\mathbf{G}} + \iny{\mathbf{H}})(\db\dr\dr) \\
    (\iny{\mathbf{G}} + \iny{\mathbf{H}})(\db\dg\dr) & 
    (\iny{\mathbf{G}} + \iny{\mathbf{H}})(\dg\dg\dr) &
    (\iny{\mathbf{G}} + \iny{\mathbf{H}})(\dg\dr\dr) \\
    (\iny{\mathbf{G}} + \iny{\mathbf{H}})(\db\db\dg) &
    (\iny{\mathbf{G}} + \iny{\mathbf{H}})(\db\dg\dg) &
    (\iny{\mathbf{G}} + \iny{\mathbf{H}})(\db\dg\dr)
\end{bmatrix} \mathbf{w}
=
  \begin{bmatrix}
    a x + by & bx & a x^2 + 2 b x y \\
    b x & 0 & b x^2 \\
    q + b & 0 & b x
  \end{bmatrix} \mathbf{w} = 0 \, .
\]
However, the determinant of the matrix is $-bx(b^2 x^2 - b x^2 (q + b)) =  q b^2 x^3$.
Since $b, q \ne 0$, it can only be $0$ if $x = 0$.
If $x = 0$, then the support of $\iny{\mathbf{H}}$ is a subset of $\db\db\db$, $ \db\db\dg$, $\db\db\dr$ and their permutations.

Suppose $\langle \mathbf{u}, \mathbf{e}_3 \rangle = 0$, and we may write $\mathbf{u} = (-y, x, 0)$.
Then we can write $\iny{\mathbf{G}} + \iny{\mathbf{H}}$ as in \cref{fig:G-plus-H-xy0} for some $a, b, c, d \in \mathbb{C}$. This follows from the known shape of $\iny{\mathbf{G}}$ and how the values of  $\iny{\mathbf{H}}$
are uniquely propagated when $\langle \iny{\mathbf{H}}, \mathbf{u} \rangle =0$.

Suppose $y = 0$. 
For $\ann(\iny{\mathbf{G}} + \iny{\mathbf{H}}) \ne 0$, 
the following determinant must be zero:
\[
\begin{vmatrix}
    (\iny{\mathbf{G}} + \iny{\mathbf{H}})(\db\db\db) & 
    (\iny{\mathbf{G}} + \iny{\mathbf{H}})(\db\db\dg) & 
    (\iny{\mathbf{G}} + \iny{\mathbf{H}})(\db\db\dr) \\
    (\iny{\mathbf{G}} + \iny{\mathbf{H}})(\db\db\dg) & 
    (\iny{\mathbf{G}} + \iny{\mathbf{H}})(\db\dg\dg) &
    (\iny{\mathbf{G}} + \iny{\mathbf{H}})(\db\dg\dr) \\
    (\iny{\mathbf{G}} + \iny{\mathbf{H}})(\db\dr\dr) &
    (\iny{\mathbf{G}} + \iny{\mathbf{H}})(\dg\dr\dr) &
    (\iny{\mathbf{G}} + \iny{\mathbf{H}})(\dr\dr\dr)
\end{vmatrix} =
\begin{vmatrix}
  p + ax^2 & q & b x^2 \\
  q & 0 & 0 \\
  cx & 0 & d
\end{vmatrix} = q^2 d = 0 \; .\]
Since $q \ne 0$, this implies $d = 0$.
We also need 
\[
\begin{vmatrix}
    (\iny{\mathbf{G}} + \iny{\mathbf{H}})(\db\db\db) & 
    (\iny{\mathbf{G}} + \iny{\mathbf{H}})(\db\db\dg) & 
    (\iny{\mathbf{G}} + \iny{\mathbf{H}})(\db\db\dr) \\
    (\iny{\mathbf{G}} + \iny{\mathbf{H}})(\db\db\dg) & 
    (\iny{\mathbf{G}} + \iny{\mathbf{H}})(\db\dg\dg) &
    (\iny{\mathbf{G}} + \iny{\mathbf{H}})(\db\dg\dr) \\
    (\iny{\mathbf{G}} + \iny{\mathbf{H}})(\db\db\dr) &
    (\iny{\mathbf{G}} + \iny{\mathbf{H}})(\db\dg\dr) &
    (\iny{\mathbf{G}} + \iny{\mathbf{H}})(\db\dr\dr)
\end{vmatrix} =
\begin{vmatrix}
  p + ax^2 & q & b x^2 \\
  q & 0 & 0 \\
  bx^2 & 0 & cx
\end{vmatrix} = q^2 c x = 0 \; .\]
Since $\mathbf{u} \ne 0$, we have $x \ne 0$.
Hence, $c = 0$, and thus the support of $\iny{\mathbf{H}}$ is a subset of $\db\db\db, \db\db\dg, \db\db\dr$ and their permutations. 

Suppose $y \ne 0$.
For $\holbs([0, 0, q, p] \mid (\iny{\mathbf{G}} + \iny{\mathbf{H}})\domres{\db\dg})$ to be not \sph, it must be the case that $a = 0$ by \cref{cor:bipartite-boolean-dichotomy-explicit}.
We consider if $\iny{\mathbf{G}} + \iny{\mathbf{H}}$ can have a nonzero annihilator. 
If $\ann(\iny{\mathbf{G}} + \iny{\mathbf{H}}) \ne 0$, the following determinant must be zero:
\[
\begin{vmatrix}
    (\iny{\mathbf{G}} + \iny{\mathbf{H}})(\db\db\db) & 
    (\iny{\mathbf{G}} + \iny{\mathbf{H}})(\db\db\dg) & 
    (\iny{\mathbf{G}} + \iny{\mathbf{H}})(\db\db\dr) \\
    (\iny{\mathbf{G}} + \iny{\mathbf{H}})(\db\db\dg) & 
    (\iny{\mathbf{G}} + \iny{\mathbf{H}})(\db\dg\dg) &
    (\iny{\mathbf{G}} + \iny{\mathbf{H}})(\db\dg\dr) \\
    (\iny{\mathbf{G}} + \iny{\mathbf{H}})(\db\dg\dg) &
    (\iny{\mathbf{G}} + \iny{\mathbf{H}})(\dg\dg\dg) &
    (\iny{\mathbf{G}} + \iny{\mathbf{H}})(\dg\dg\dr)
\end{vmatrix} =
\begin{vmatrix}
  p & q & b x^2 \\
  q & 0 & b x y \\
  0 & 0 & b y^2
\end{vmatrix} = - q^2 by^2 = 0 \; .\]
So, $b = 0$  because $q, y \ne 0$.
We also need
\[
\begin{vmatrix}
    (\iny{\mathbf{G}} + \iny{\mathbf{H}})(\db\db\db) & 
    (\iny{\mathbf{G}} + \iny{\mathbf{H}})(\db\db\dg) & 
    (\iny{\mathbf{G}} + \iny{\mathbf{H}})(\db\db\dr) \\
    (\iny{\mathbf{G}} + \iny{\mathbf{H}})(\db\db\dg) & 
    (\iny{\mathbf{G}} + \iny{\mathbf{H}})(\db\dg\dg) &
    (\iny{\mathbf{G}} + \iny{\mathbf{H}})(\db\dg\dr) \\
    (\iny{\mathbf{G}} + \iny{\mathbf{H}})(\db\dg\dr) &
    (\iny{\mathbf{G}} + \iny{\mathbf{H}})(\dg\dg\dr) &
    (\iny{\mathbf{G}} + \iny{\mathbf{H}})(\dg\dr\dr)
\end{vmatrix} =
\begin{vmatrix}
    p & q & 0 \\
    q & 0 & 0 \\
  0 & 0 & cy
\end{vmatrix} = - q^2 cy = 0 \; .\]
So $c = 0$ as well.
But then $a, b, c = 0$, so $\iny{\mathbf{H}}$ is degenerate, contrary to the assumption.

In all cases, the support of $\iny{\mathbf{H}}$ is a subset of $\db\db\db$, $ \db\db\dg$, $\db\db\dr$ and their permutations.
\end{proof}

\begin{proposition}\label{prop:linear-comb-boolean-degenerate-hard}
    Let $\mathbf{F} = \mathbf{v}_1\teh + \mathbf{v}_2\teh$ for some $\mathbf{v}_1, \mathbf{v}_2 \in \mathbb{C}^2$ such that $\mathbf{v}_1 \not \sim \mathbf{v}_2$.
    If $\mathbf{u} \not \sim \mathbf{v}_1$ and $\mathbf{u} \not \sim \mathbf{v}_2$, then 
    $\holbs(\mathbf{F}, \mathbf{F} + \mathbf{u}\teh)$ is \sph.

    Let $\mathbf{F} = \lambda \mathbf{v}_1\teh + \sym(\mathbf{v}_1, \mathbf{v}_1, \mathbf{v}_2)$ for some $\mathbf{v}_1, \mathbf{v}_2 \in \mathbb{C}^2$ and $\lambda \in \mathbb{C}$ such that $\mathbf{v}_1 \not \sim \mathbf{v}_2$.
    If $\mathbf{u} \not \sim \mathbf{v}_1$, then 
    $\holbs(\mathbf{F}, \mathbf{F} + \mathbf{u}\teh)$ is \sph.
\end{proposition}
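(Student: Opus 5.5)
In both parts, the plan is to apply a holographic transformation that puts $\mathbf{F}$ into canonical form. After that, the Boolean dichotomy (\cref{thm:dich-sym-Boolean}) can be checked directly on the pair $\{\mathbf{F}', \mathbf{F}'+\mathbf{u}'^{\otimes 3}\}$, where primes denote transformed objects. The transformation cannot simply be an arbitrary $M\in\gl_2(\mathbb{C})$, because $\holbs$ is not preserved by non-orthogonal transformations. The remedy is to pass to the bipartite setting, $\holbs(\mathcal{F}) =_T \holbs(\beq \mid \mathcal{F})$, and then use \cref{lem:bipartite-boolean-dichotomy} and \cref{cor:bipartite-boolean-dichotomy-explicit}. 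Concretely, I first check that a pair of ternary signatures $\mathbf{A},\mathbf{B}$ with $\holbs(\mathbf{A},\mathbf{B})$ not \sph{} must both be tractable of a common type (type $\typei(a,b)$ or type $\typeii$). In the forms listed after \cref{thm:dich-sym-Boolean}, this means that either both are $x\mathbf{s}^{\otimes 3}+y\mathbf{t}^{\otimes 3}$ for one fixed pair $\{\mathbf{s},\mathbf{t}\}$ (up to scaling), or both are $x\mathbf{s}^{\otimes 3}+y\,\sym(\mathbf{s},\mathbf{s},\mathbf{t})$ with the same $\mathbf{s}$. This invariant formulation ("same summand vectors", respectively "same repeated vector") is basis-free, so it survives any change of coordinates.

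For the first statement, $\mathbf{F}=\mathbf{v}_1^{\otimes3}+\mathbf{v}_2^{\otimes3}$ is nondegenerate with $\mathbf{v}_1\not\sim\mathbf{v}_2$. By \cref{prop:uniqueness-sum-lin-ind-tensor} its unordered pair $\{\mathbf{v}_1,\mathbf{v}_2\}$ is unique up to scaling. Tractability of the pair would require $\mathbf{F}+\mathbf{u}^{\otimes3}$ to be either degenerate or of the form $x\mathbf{v}_1^{\otimes3}+y\mathbf{v}_2^{\otimes3}$. Writing $\mathbf{u}=\alpha\mathbf{v}_1+\beta\mathbf{v}_2$ with $\alpha\beta\neq0$ (this uses $\mathbf{u}\not\sim\mathbf{v}_i$), the basis $\{\mathbf{v}_1,\mathbf{v}_2\}$ identifies $\symspace{3}{2}$ with $[\,\cdot\,]$-coordinates. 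In these coordinates $\mathbf{u}^{\otimes3}$ has nonzero middle coefficients $\alpha^2\beta,\alpha\beta^2$, so $\mathbf{F}+\mathbf{u}^{\otimes3}$ does not lie in $\spn(\mathbf{v}_1^{\otimes3},\mathbf{v}_2^{\otimes3})$. I would also verify that $\mathbf{F}+\mathbf{u}^{\otimes3}=[1+\alpha^3,\alpha^2\beta,\alpha\beta^2,1+\beta^3]$ is nondegenerate. This holds because a degenerate signature satisfies $x_1^2=x_0x_2$, which here gives $\alpha^4\beta^2=(1+\alpha^3)\alpha\beta^2$, i.e.\ $\alpha^3=1+\alpha^3$, which is impossible.

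For the second statement, $\mathbf{F}=\lambda\mathbf{v}_1^{\otimes3}+\sym(\mathbf{v}_1,\mathbf{v}_1,\mathbf{v}_2)$ has rank $3$ (\cref{prop:symmetric-rank-sym-sum}), and $\mathbf{v}_1$ is determined by \cref{prop:uniqueness-sym-form}. The pair is then tractable only if $\mathbf{F}+\mathbf{u}^{\otimes3}$ is degenerate or of the form $x\mathbf{v}_1^{\otimes3}+y\,\sym(\mathbf{v}_1,\mathbf{v}_1,\mathbf{w})$. In the basis $\{\mathbf{v}_1,\mathbf{v}_2\}$ the latter forms are $[*,*,0,0]$. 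Writing $\mathbf{u}=\alpha\mathbf{v}_1+\beta\mathbf{v}_2$ with $\beta\ne0$, the sum is $[\lambda+\alpha^3,\,1+\alpha^2\beta,\,\alpha\beta^2,\,\beta^3]$, whose last entry is nonzero. It remains to exclude degeneracy. From $x_1x_3=x_2^2$ we get $(1+\alpha^2\beta)\beta^3=\alpha^2\beta^4$, i.e.\ $\beta^3=0$, which contradicts $\beta\neq0$.

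The main obstacle is the first step: justifying rigorously that the coordinate change to the basis $\{\mathbf{v}_1,\mathbf{v}_2\}$ is harmless, given that $\holbs$ is only invariant under orthogonal transformations. I would handle this by stating the Boolean tractability classes intrinsically, as above, rather than transporting the problem. Type $\typei(a,b)$ with $b\ne\pm2\i a$ and type $\typeii$ both give "all signatures in one $\spn(\mathbf{s}^{\otimes3},\mathbf{t}^{\otimes3})$ with a fixed pair", and $b=\pm2\i a$ gives "fixed repeated vector $\mathbf{s}$". A span condition of this kind is coordinate-free, so the calculation above applies verbatim. One subtlety to check is that the fixed pair or vector is forced by $\mathbf{F}$ itself and cannot be chosen differently; uniqueness of the representation settles this.
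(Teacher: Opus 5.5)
Your proof is correct and is essentially the paper's argument. The paper moves $\mathbf{F}$ to $[a,0,0,b]$ (resp.\ $[a,b,0,0]$) by an orthogonal $T$ or by $\Z^{-1}$, landing in $\holbs(\bne \mid \cdot)$ in the latter case, and observes that $\mathbf{u}$ becomes $(c,d)$ with $cd\neq 0$ (resp.\ $d\neq 0$). You do the same computation in $\{\mathbf{v}_1,\mathbf{v}_2\}$-coordinates with the tractable classes phrased intrinsically, so no reduction is needed and the bipartite detour in your opening paragraph can be dropped. Your explicit check that $\mathbf{F}+\mathbf{u}^{\otimes 3}$ is nondegenerate is a step the paper leaves implicit, and it is genuinely needed: a degenerate second signature would be absorbed by the free unaries.
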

\begin{proof}
The given $\mathbf{F}$ is nondegenerate.  Assume $\holbs(\mathbf{F}, \mathbf{F} + \mathbf{u}\teh)$ is not \sph.
Then certainly $\holbs(\mathbf{F})$ is not \sph.

    For the first case, from the Boolean domain dichotomy, 
    there exists some orthogonal $T$ such that 
    $T \mathbf{F} =  a\sbto\teh + b\sbtt\teh = [a, 0, 0, b]$ or $\Z^{-1} \mathbf{F} = [a, 0, 0,b]$ for some $a, b \ne 0$. 
    The assumption on the linear independence of $\mathbf{u}$ implies that $T \mathbf{u}$ or $\Z^{-1} \mathbf{u}$ is $(c, d)$ for some $c, d \ne 0$.
    the Boolean domain dichotomy implies that
    $\holbs([a, 0, 0, b], [a, 0, 0, b] + (c, d)\teh)$ and $\holbs(\bne \mid [a, 0, 0, b], [a, 0, 0, b] + (c, d)\teh)$ are \sph.

    For the second case, from the Boolean domain dichotomy, 
    $\Z^{-1} \mathbf{F} = [a, b, 0, 0]$ or $[0, 0, b, a]$ for some $b \ne 0$. 
    Suppose $\Z^{-1} \mathbf{F} = [a, b, 0, 0]$.
    The assumption implies that $\Z^{-1} \mathbf{u} = (c, d)$ for some $d \ne 0$.
    The Boolean domain dichotomy implies that
    $\holbs(\bne \mid [a, b, 0, 0], [a, b, 0, 0] + (c, d)\teh)$ is \sph.
    The proof for the case $\Z^{-1} \mathbf{F} = [0, 0, b, a]$ is similar. 
\end{proof}

\begin{proposition}\label{prop:linear-comb-ann-degenerate}
    Let $\mathbf{F}$ be a nondegenerate symmetric ternary signature such that $\ann(\mathbf{F}) = \spn(\mathbf{n}) \ne 0$.
    Let $\mathbf{u} \in \mathbb{C}^3$ such that $\langle \mathbf{n}, \mathbf{u} \rangle \ne 0$.
    Then, $\ann(\mathbf{F} + \mathbf{u}\teh) = 0$.
\end{proposition}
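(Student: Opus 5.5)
Suppose for contradiction that some nonzero $\mathbf{w}$ satisfies $\langle \mathbf{F} + \mathbf{u}\teh, \mathbf{w} \rangle = 0$. Writing this out gives
\[
\langle \mathbf{F}, \mathbf{w} \rangle = -\langle \mathbf{u}, \mathbf{w} \rangle \, \mathbf{u}\tew .
\]
The plan is to show that the binary signature on the left can only be a multiple of $\mathbf{u}\tew$ when it is actually $0$. Then $\mathbf{w}$ would lie in $\ann(\mathbf{F}) = \spn(\mathbf{n})$ and also satisfy $\langle \mathbf{u}, \mathbf{w} \rangle = 0$, which forces $\langle \mathbf{u}, \mathbf{n} \rangle = 0$ and contradicts the hypothesis.

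To carry this out, contract both sides once more with $\mathbf{n}$. By symmetry of $\mathbf{F}$, $\langle \langle \mathbf{F}, \mathbf{w} \rangle, \mathbf{n} \rangle = \langle \langle \mathbf{F}, \mathbf{n} \rangle, \mathbf{w} \rangle = 0$. The right side becomes $-\langle \mathbf{u}, \mathbf{w} \rangle \langle \mathbf{u}, \mathbf{n} \rangle \mathbf{u}$, where $\langle \mathbf{u}, \mathbf{n} \rangle \ne 0$ and $\mathbf{u} \ne 0$. Hence $\langle \mathbf{u}, \mathbf{w} \rangle = 0$. Substituting back into the displayed equation gives $\langle \mathbf{F}, \mathbf{w} \rangle = 0$, so $\mathbf{w} \in \ann(\mathbf{F}) = \spn(\mathbf{n})$. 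Since $\mathbf{w} \ne 0$, we have $\mathbf{w} \sim \mathbf{n}$, and then $\langle \mathbf{u}, \mathbf{n} \rangle = 0$, the contradiction. This yields $\ann(\mathbf{F} + \mathbf{u}\teh) = 0$.

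The main point to check is the index manipulation: contracting a symmetric ternary tensor with $\mathbf{w}$ on one slot and $\mathbf{n}$ on another does not depend on which slots are used. This holds because $\mathbf{F}$ is symmetric. Also, $\langle \mathbf{u}\teh, \mathbf{w} \rangle = \langle \mathbf{u}, \mathbf{w} \rangle \mathbf{u}\tew$ directly. Nondegeneracy of $\mathbf{F}$ is used only through \cref{prop:two-annihilator-degenerate}, which ensures $\ann(\mathbf{F})$ is exactly the line $\spn(\mathbf{n})$.
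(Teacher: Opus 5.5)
Your proof is correct and is essentially the paper's argument. Both proofs contract once more with $\mathbf{n}$, using the symmetry of $\mathbf{F}$ to kill the $\mathbf{F}$-part; this forces the $\mathbf{u}\teh$ contribution to vanish, i.e.\ $\langle \mathbf{u}, \mathbf{w}\rangle = 0$, and then $\mathbf{w} \in \ann(\mathbf{F}) \cap \mathbf{u}^{\perp} = 0$. The paper phrases this through the decomposition $\mathbb{C}^3 = \spn(\mathbf{n}) \oplus \ann(\mathbf{u}\teh)$, and your $\langle \mathbf{u}, \mathbf{w}\rangle = 0$ corresponds to its $b=0$. One small remark: the hypothesis already gives $\ann(\mathbf{F}) = \spn(\mathbf{n})$ exactly, so you do not need nondegeneracy or \cref{prop:two-annihilator-degenerate} at all.
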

\begin{proof}
Note that $\ann(\mathbf{u}\teh) = \{ \mathbf{v} \in \mathbb{C}^3 : \langle \mathbf{u}, \mathbf{v} \rangle = 0\}$ has dimension $2$.
Since $\mathbf{n} \notin \ann(\mathbf{u}\teh)$, 
we have $\spn(\mathbf{n}) \cap \ann(\mathbf{u}\teh) = 0$
and $\mathbb{C}^3 = \spn(\mathbf{n}) \oplus \ann(\mathbf{u}\teh)$.
Note that $\mathbf{v} \not \in \ann(\mathbf{F})$ for any $\mathbf{v} \in \ann(\mathbf{u}\teh)$. 
Then, we can write any vector in $\mathbb{C}^3$ as $a \mathbf{v} + b \mathbf{n}$ for some $\mathbf{v} \in \ann(\mathbf{u}\teh)$. 
Let 
$\mathbf{G}_{a, b} = \langle \mathbf{F} + \mathbf{u}\teh, a \mathbf{v} + b \mathbf{n} \rangle = a \langle \mathbf{F}, \mathbf{v} \rangle + b \langle \mathbf{u}\teh, \mathbf{n} \rangle$.
Then, $\langle \mathbf{G}_{a, b}, \mathbf{n} \rangle = b \langle \mathbf{u}, \mathbf{n} \rangle^2 \mathbf{u} \ne 0$ if $b \ne 0$. In particular, $\mathbf{G}_{a, b} \ne 0$ for $b \ne 0$.
So, $a \mathbf{v} + b \mathbf{n} \notin \ann(\mathbf{F} + \mathbf{u}\teh)$ if $b \ne 0$.
But if $b = 0$, then for any nonzero $a \mathbf{v} \in \ann(\mathbf{u}\teh)$, we have $\mathbf{G}_{a, 0} = a \langle \mathbf{F}, \mathbf{v} \rangle \ne 0$ as well, 
so $\ann(\mathbf{F} + \mathbf{u}\teh) = 0$.
\end{proof}

\begin{figure}
\centering
  \begin{tikzpicture}[scale=1, every node/.style={scale=1}]
  \node at (0, 0) {$a$};
  \node at (-1, -1) {$b$};
  \node at (1, -1) {$c$};
  \node at(-2, -2) {$d$};
  \node at (0, -2) {$e$};
  \node at (2, -2) {$f$};
  \node at (-3, -3) {$g$};
  \node at (-1, -3) {$h$};
  \node at (1, -3) {$j$};
  \node at (3, -3) {$k$};
  \node at (-4, -4) {$l$};
  \node at (-2, -4) {$m$};
  \node at (0, -4) {$n$};
  \node at (2, -4) {$o$};
  \node at (4, -4) {$p$};

  \draw[dashed] (-1, 0) -- (3.5, -4.5) -- (-5.5, -4.5) -- cycle;
  \draw (0, 1.4) -- (6.1, -4.7) -- (-6.1, -4.7) -- cycle;

\end{tikzpicture}
\caption{$\mathbf{F}$ and $\langle \mathbf{F}, \mathbf{e}_2 \rangle$.} \label{fig:arity-4-signature}
\end{figure}

\subsection{A Single Arity 4 Signature}
\begin{lemma}\label{lem:arity-4}
Let $\mathbf{F}$ be a nondegenerate symmetric signature of arity $4$.
Then, $\holts(\mathbf{F})$ is \sph unless there exists some orthogonal $T$ such that
\begin{enumerate}
  \item $T \mathbf{F} = a \mathbf{e}_1^{\otimes 4} + b \mathbf{e}_2^{\otimes 4} + c \mathbf{e}_3^{\otimes 4}$, or
  \item $T \mathbf{F} = a \bzp^{\otimes 4} + b\bzm^{\otimes 4} + c \mathbf{e}_3^{\otimes 4}$, or
  \item $T \mathbf{F} = a \sym(\bzp, \bzp, \bzp, \bzm) + \mathbf{F}_0$
\end{enumerate}
for some $a, b, c \in \mathbb{C}$ and a symmetric arity $4$ signatrure $\mathbf{F}_0$ such that $\langle \mathbf{F}_0, \bzp \rangle = 0$.
Equivalently,
\begin{enumerate}
  \item $T \mathbf{F} \in \mathcal{E}$, or
  \item $\Yinv T \mathbf{F} \in \mathcal{E}$, or
  \item $\Yinv T \mathbf{F} \in \casecsupp$.
\end{enumerate}
\end{lemma}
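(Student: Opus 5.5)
The plan is to study the linear family of ternary gadgets $\mathscr{F} = \{\langle \mathbf{F}, \mathbf{u} \rangle : \mathbf{u} \in \mathbb{C}^3\}$, as outlined at the start of \cref{sec:higher-arity}. Assume $\holts(\mathbf{F})$ is not \sph. Then $\holts(\mathscr{F})$ is not \sph, so by \cref{thm:ternary-set-dichotomy} the nondegenerate members of $\mathscr{F}$ fall under one of the tractable cases of \cref{thm:tractable-cases}. The goal is to get a \emph{single} orthogonal $T$ with $T\mathscr{F} \subset \mathcal{E}$, or $\Yinv T\mathscr{F} \subset \mathcal{E}$, or $\Yinv T\mathscr{F} \subset \casecsupp$. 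Since these three classes are linear spaces closed under attaching unaries, we then lift the support condition from $\mathscr{F}$ to $\mathbf{F}$.

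The lift goes as follows. For instance, if all $\langle T\mathbf{F}, \mathbf{e}_i\rangle$ lie in $\mathcal{E}$ (\geneq, since they are symmetric ternary), then every entry of $T\mathbf{F}$ on a mixed input vanishes, because any such input contains some $i$ together with a mixed ternary remainder. Hence $T\mathbf{F} \in \mathcal{E}$, which is form 1. The $\casecsupp$ case is handled the same way using \cref{fig:arity-4-signature}: for an arity $4$ input outside $\{\dg\db\db\db,\ldots\}\cup\{\db,\dr\}^4$, some coordinate can be removed so that the remaining ternary input is also outside the ternary support. This gives form 3, and form 2 follows after $\Yinv$. We also have to check that the equivalence between the "$\bzp$" descriptions and the support descriptions matches the ternary case.

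\textbf{Case A: some $\mathbf{G} \in \mathscr{F}$ has $\ann(\mathbf{G})=0$.} Then $\mathbf{G}$ is \caseathree, \casebthree, \casecosep, \casecodz or \casecogen. Anchor on $\mathbf{G}$ in its defining form after an orthogonal $T$. For every $\mathbf{H}\in\mathscr{F}$ and all $x,y$, the problem $\holts(\mathbf{G},\mathbf{H},x\mathbf{G}+y\mathbf{H})$ is not \sph. The pairwise results \cref{prop:two-ternary-dichotomy-a-3,prop:two-ternary-dichotomy-b-3,prop:two-ternary-dichotomy-C-1-sep,prop:two-ternary-dichotomy-C-1-dz,prop:two-ternary-dichotomy-C-1-gen} confine each $\mathbf{H}$ to a short list: $\mathcal{E}$, $ij|k$, \sepbg, or $\casecsupp$.

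For \casecodz and \casecogen we get $\Yinv T\mathscr{F}\subset\casecsupp$ directly. In the other cases the only escapes are the domain separated alternatives. To exclude them, use linearity: if $\mathbf{H}$ is separated but $\mathbf{G}+\epsilon\mathbf{H}$ is not of the same kind for generic $\epsilon$, then $\ann(\mathbf{G}+\epsilon\mathbf{H})=0$. Reapplying the pair dichotomy to the anchor $\mathbf{G}+\epsilon\mathbf{H}$, together with the non-mixing results \cref{prop:case-a3-ternary-non-mixing,prop:case-c1-sep-ternary-non-mixing}, forces a single class for all of $\mathscr{F}$. A \sepbg-but-not-$\mathcal{E}$ family gives $T\mathbf{F}$ of type $ij|k$ in arity $4$. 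We then rule this out by showing that $\langle T\mathbf{F}, \mathbf{e}_3\rangle$ and a generic combination produce a \caseathree-incompatible member, or that $\mathbf{F}$ collapses to forms 1 or 2 because separated Boolean parts of arity $4$ must themselves be Boolean tractable.

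\textbf{Case B: every member of $\mathscr{F}$ has a nonzero annihilator.} This is the main obstacle. By \cref{cor:lin-comb-ebd-signatures-cannot-be-a-b}, if some member is \caseatwo or \casebtwo then all members share an annihilator direction up to scalar. \cref{prop:sum-casec1bg-else} controls mixtures with a \casecobg member, and \cref{prop:linear-comb-ann-degenerate} together with \cref{prop:linear-comb-boolean-degenerate-hard} handles degenerate members. The result should be a common annihilating vector $\mathbf{n}$ with $\langle \mathscr{F}, \mathbf{n}\rangle=0$, which means $\langle\mathbf{F},\mathbf{n}\rangle=0$. If $\mathbf{n}$ is isotropic, this puts $\mathbf{F}$ in form 3 with $a=0$. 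If $\mathbf{n}$ is non-isotropic, rotate $\mathbf{n}$ to $\mathbf{e}_3$. Then $\mathbf{F}$ is a Boolean arity $4$ signature on $\db\dg$, and the Boolean dichotomy \cref{thm:dich-sym-Boolean} applied to it and its ternary gadgets gives types $\typei$ or $\typeii$, hence forms 1, 2 or 3. The delicate part is the case analysis showing that a shared annihilator is forced when the members are \casecz or \casecobg with varying annihilators.
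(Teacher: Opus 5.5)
The gap is in Case B, at exactly the step you call ``delicate'', and the tools you cite do not close it. Your endgame is sound. If a nonzero $\mathbf{n}$ annihilates all of $\mathscr{F}$, then $\langle \mathbf{F},\mathbf{n}\rangle=0$. An isotropic $\mathbf{n}$ then gives form 3, and a non-isotropic $\mathbf{n}$ makes $\mathbf{F}$ a Boolean arity-$4$ signature whose tractable types are exactly forms 1--3. But producing $\mathbf{n}$ is the whole content of the case.

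\cref{cor:lin-comb-ebd-signatures-cannot-be-a-b} requires an \caseatwo or \casebtwo member, so it does not cover the situation where all nondegenerate members are \casecz. Suppose two such members have distinct isotropic annihilators. Choose an orthogonal $T$ sending these annihilators to multiples of $\bzp$ and $\bzm$. In the $\Yinv T$ frame the two signatures $\iny{\mathbf{G}},\iny{\mathbf{H}}$ are then supported on $\db\dr$ and $\dg\dr$, and \cref{prop:lin-comb-of-ebd-signatures} applies. That proposition, however, has an exceptional case: restrictions $[0,0,a,b]$ and $[0,0,c,d]$ with $a,c\neq 0$. There every combination $x\iny{\mathbf{G}}+y\iny{\mathbf{H}}$ keeps the nonzero annihilator $(yc,-xa,0)$. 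So the requirement that no member lose its annihilator cannot force a shared annihilator in this case, and you need a separate hardness proof. The paper moves $\iny{\mathbf{G}}$ to the left through $\X$ to obtain $\holbs([0,0,a,b]\mid[0,0,c,d])$, which is \sph by \cref{cor:bipartite-boolean-dichotomy-explicit}.

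With a \casecobg member, \cref{prop:sum-casec1bg-else} gives only a support constraint, and that constraint allows members with different annihilators. For instance, in the $\Yinv T$ frame, $\sym(\mathbf{e}_1,\mathbf{e}_1,\mathbf{e}_2)$, $\sym(\mathbf{e}_1,\mathbf{e}_1,\mathbf{e}_3)$ and their sum all lie in $\casecsupp$, with annihilators $\mathbf{e}_3$, $\mathbf{e}_2$ and $(0,1,-1)$. So no common $\mathbf{n}$ can be extracted from pairwise or triple information. What works is two steps. First force the degenerate members to $\sim\mathbf{e}_1^{\otimes 3}$ via \cref{prop:linear-comb-ann-degenerate,prop:linear-comb-boolean-degenerate-hard}. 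Then lift the support constraint with the triangular argument, which gives $\Yinv T\mathbf{F}\in\casecsupp$ directly. A common annihilator only becomes visible after that lift.

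A smaller point in Case A: the pair propositions say nothing about degenerate members. Both these and the separated nondegenerate members can be eliminated one at a time. Combine each with the anchor and use the anchor's shell, e.g.\ $\holbs([1,0,0,1],[a,0,0,b]+(x,y)^{\otimes 3})$ together with \cref{prop:linear-comb-boolean-degenerate-hard}. Consequently no $ij|k$ family survives that would need to be ruled out at arity $4$.
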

\begin{proof}

Suppose $\holts(\mathbf{F})$ is not \sph.
Let $\mathscr{F} = \{\langle \mathbf{F}, \mathbf{u} \rangle : \mathbf{u} \in \mathbb{C}^3\}$.
Since $\holts(\mathscr{F}) \le_T \holts(\mathbf{F})$, $\holts(\mathscr{F})$ is not \sph.
Also, note that $\mathscr{F}$ is a vector subspace, so if $\mathbf{G}, \mathbf{H} \in \mathscr{F}$, then $x \mathbf{G} + y \mathbf{H} \in \mathscr{F}$ for any $x, y \in \mathbb{C}$. 

We briefly summarize an argument that we will use repeatedly, which is from Lemma 8.9 (arXiv version) of~\cite{domain3-real-set}. 
We will call it \textit{the triangular representation argument}.
Intuitively, it shows that if the support of $\mathscr{F}$ becomes restricted enough,
then we can use it to deduce the support of $\mathbf{F}$. 
Lemma 8.9 (arxiv version) of~\cite{domain3-real-set} shows that if $\mathscr{F} \subset \mathcal{E}$, then $\mathbf{F} \in \mathcal{E}$.
We show that a similar argument works to show that $\mathscr{F} \subset \casecsupp$ implies $\mathbf{F} \in \casecsupp$.
This becomes clear if one looks at the triangular representation of $\mathbf{F}$ (\cref{fig:arity-4-signature}).
If $\mathscr{F} \subseteq \casecsupp$, then in particular, the sub-triangles $\mathbf{F}_1 = \langle \mathbf{F}, \mathbf{e}_1 \rangle$, $\mathbf{F}_2 = \langle \mathbf{F}, \mathbf{e}_2 \rangle$, and $\mathbf{F}_3 = \langle \mathbf{F}, \mathbf{e}_3 \rangle$ must be in $\casecsupp$. 
Let $\mathbf{F}_2$ be as in \cref{fig:arity-4-signature}.
Since $\mathbf{F}_2 \in \casecsupp$, we have $g, h, l, m, n = 0$.
Similarly, $\mathbf{F}_1 \in \casecsupp$ implies $d, e, j = 0$ and
$\mathbf{F}_3 \in \casecsupp$ implies $o = 0$.
Therefore, $\mathbf{F} \in \casecsupp$.

First, consider the case where $\mathscr{F}$ contains a 
nondegenerate signature $\mathbf{G}$ such that 
$\ann(\mathbf{G}) = 0$.
Then, we may assume that $\mathbf{G}$ is one of \caseathree, \casebthree, \casecosep, \casecodz, or \casecogen
by \cref{prop:nonzero-annihilator-types}.

If $\mathbf{G}$ is \caseathree, then there exists some orthogonal $T$ such that $T \mathbf{G} = a \mathbf{e}_1\teh + b \mathbf{e}_2\teh + c \mathbf{e}_3\teh$ for $a, b, c \ne 0$.
    Consider any $T \mathbf{H} \in T \mathscr{F}$. 
    We may assume that $T \mathbf{H}$ is degenerate or of the form in \cref{prop:two-ternary-dichotomy-a-3}.
    Suppose $T \mathbf{H} \notin \mathcal{E}$. 
    If $T \mathbf{H}$ is degenerate, we may assume after permuting the domains that $T \mathbf{H} = (x, y, z)\teh$ where $x, y \ne 0$. 
    If $T \mathbf{H}$ is not degenerate, we may assume after permuting the domains that $T \mathbf{H} = (x, y, 0)\teh + z \mathbf{e}_3\teh$ for $x, y, z \ne 0$. 
    In both cases, by \cref{subsec:gc-case-a-3},
    \[\holts(T \mathcal{F}) 
    \; \ge_T\; 
    \holts(T \mathbf{G}, T \mathbf{G} + T \mathbf{H}) 
    \;\ge_T\; 
    \holbs([1, 0, 0, 1], [a, 0, 0, b] + (x, y)\teh)\]
    which is \sph by \cref{prop:linear-comb-boolean-degenerate-hard}.
    Therefore, $T \mathscr{F} \subseteq \mathcal{E}$.
This can  only be true if $T \mathbf{F} \in \mathcal{E}$ by the 
triangular representation argument.

For the other cases of \casebthree, \casecosep, \casecodz, \casecogen, we can apply a similar argument by using the respective pairwise dichotomy and looking at the support of the signatures after bringing $\mathbf{G}$ to the defining form. 
If $\mathscr{F}$ contains a \casebthree signature,
there exists some orthogonal $T$ such that $\Yinv T \mathscr{F}$ contains $a \mathbf{e}_1\teh + b \mathbf{e}_2\teh + c\mathbf{e}_3\teh$. 
A similar argument show that $\Yinv T \mathscr{F} \subset \mathcal{E}$ using the pairwise dichotomy, and the triangular representation argument shows $\Yinv T \mathbf{F} \in \mathcal{E}$.
If $\mathscr{F}$ contains a \casecosep, \casecodz or \casecogen signature,
then a similar argument shows that there exists some orthogonal $T$ such that 
$\Yinv T \mathscr{F} \subset \casecsupp$. 
By the triangular representation argument, $\Yinv T \mathbf{F} \in \casecsupp$.

Now, we may assume every signature in $\mathscr{F}$ has a nonzero annihilator.
If $\mathscr{F}$  contains only degenerate signatures, 
then we can show that $\mathbf{F}$ is degenerate as well 
using  the proof of Lemma 8.9 (arXiv version) of~\cite{domain3-real-set}.
Otherwise, we may assume that there is a nondegenerate $\mathbf{G} \in \mathscr{F}$ with $\ann(\mathbf{G}) = \spn(\mathbf{u}_{\mathbf{G}}) \ne 0$.
We consider each tractable case for $\mathbf{G}$, which can be \caseatwo, \casebtwo, \casecobg, and \casecz by \cref{prop:nonzero-annihilator-types}.

Suppose $\mathbf{G}$ is \caseatwo. We may apply some orthogonal transform $T$ so that $T \mathbf{G} = a \mathbf{e}_1\teh + b \mathbf{e}_2\teh$, with $a, b \ne 0$. 
Then $\ann(T \mathbf{G}) = \spn(\mathbf{e}_3)$.
If there is some nondegenerate $T \mathbf{H} \in T \mathscr{F}$ such that $\ann(T \mathbf{H}) = \spn(\mathbf{u})$ with $\mathbf{e}_3 \not \sim \mathbf{u}$, then $\ann(x T \mathbf{G} + y T \mathbf{H}) = 0$ for some nonzero $x, y\in \mathbb{C}$ by~\cref{{cor:lin-comb-ebd-signatures-cannot-be-a-b}}, 
a contradiction.
So, any nondegenerate signature in $T \mathscr{F}$ must be supported on $\db\dg$.
Suppose $\mathbf{v}\teh \in T \mathscr{F}$,
If $\langle \mathbf{v}, \mathbf{e}_3 \rangle
\ne 0$, then $\ann(T \mathbf{G} + \mathbf{v}\teh) = 0$
by \cref{prop:linear-comb-ann-degenerate},  a contradiction.
Thus, $\langle \mathbf{v}, \mathbf{e}_3 \rangle
= 0$, and so any degenerate signature in $T \mathscr{F}$ is also supported on $\db\dg$.
Combined with the Boolean domain dichotomy and \cref{prop:linear-comb-boolean-degenerate-hard}, this implies that $T\mathscr{F} \subset \spn(\mathbf{e}_1\teh, \mathbf{e}_2\teh) \subset \mathcal{E}$. 
We conclude that $T \mathbf{F} = x \mathbf{e}_1^{\otimes 4} + y \mathbf{e}_2^{\otimes 4} \in \mathcal{E}$ 
by the triangular representation argument.

The case for when $\mathbf{G}$ is \casebtwo can be proved similarly by anchoring on $\Yinv T \mathbf{G} = a \mathbf{e}_1\teh + b \mathbf{e}_2\teh$ for some orthogonal $T$.

We now assume that $\mathbf{G}$ is \casecobg. 
We may apply $\Yinv T$ for some orthogonal $T$ to get $\Yinv T \mathbf{G} = p \mathbf{e}_1\teh + q \sym(\mathbf{e}_1, \mathbf{e}_1, \mathbf{e}_2)$ for some $q \ne 0$.
By \cref{prop:sum-casec1bg-else}, the support of any other nondegenerate signature in $\Yinv T \mathscr{F}$ must be a subset $\db\db\db, \db\db\dg, \db\db\dr$ and their permutations. 
By \cref{prop:linear-comb-ann-degenerate}, any degenerate signature in  $\Yinv T \mathscr{F}$  must be supported on $\db\dg$ by the same proof idea as in the  \caseatwo case, 
and by \cref{prop:linear-comb-boolean-degenerate-hard} it must be $\sim \mathbf{e}_1\teh$. 
Therefore, the support of every signature in $\Yinv T \mathscr{F}$ must be a subset of $\db\db\db, \db\db\dg, \db\db\dr$ and their permutations. 
In particular, $\Yinv T \mathscr{F} \subset \casecsupp$.
By the triangular representation argument, $\Yinv T \mathbf{F} \in \casecsupp$.

Now, we may assume all signatures in $\mathscr{F}$ are \casecz or degenerate. 
Suppose $\mathbf{G}$ is \casecz.
Then $\ann(\mathbf{G}) = \spn(\mathbf{u}_{\mathbf{G}}) \ne 0$ and is isotropic.
Suppose there is a nondegenerate $\mathbf{H} \in \mathscr{F}$ with 
$\ann(\mathbf{H}) = \spn(\mathbf{u}_{\mathbf{H}})$ such that $\mathbf{u}_{\mathbf{G}} \not \sim \mathbf{u}_{\mathbf{H}} \ne 0$ and $\mathbf{u}_{\mathbf{H}}$ is isotropic.
Then, by Lemma 3.5 (arXiv version) in~\cite{domain3-single-signature}, there exists some orthogonal $T$ such that 
$\ann(\Yinv T \mathbf{G}) = \Y\transpose \ann(T \mathbf{G}) = \Y\transpose \spn(\bzp) = \spn(\mathbf{e}_2)$ and $\ann(\Yinv T \mathbf{H}) = \spn(\mathbf{e}_1)$.
Hence,
$\Yinv T \mathbf{G}$ is supported on $\db\dr$ 
and $\Yinv T \mathbf{H}$ is supported on $\dg\dr$.
By \cref{prop:lin-comb-of-ebd-signatures} applied with domains $\db\dr$ and $\dg\dr$
(where the common domain element is $\dr$), 
we must have $(\Yinv T \mathbf{G})\domres{\db\dr} = [0, 0, a, b]$ and
$(\Yinv T \mathbf{H})\domres{\dg\dr} = [0, 0, c, d]$ for some $a, b, c, d \in \mathbb{C}$.
Note that $a, c \ne 0$ because otherwise $\mathbf{G}$ or $\mathbf{H}$ is degenerate. 
However, 
\begin{align*}
\holts(\mathbf{X} \mid \Yinv T \mathbf{G}, \Yinv T \mathbf{H}) &\ge_T \holts(\mathbf{X}\Yinv T \mathbf{G} \mid \Yinv T \mathbf{H}) \\
&\ge_T \holbs((\Yinv T \mathbf{G})\domres{\db\dr} \mid (\Yinv T \mathbf{H})\domres{\dg\dr}) \\
&\ge_T \holbs([0,0, a,b] \mid [0,0,c, d])
\end{align*}
which is \sph by \cref{cor:bipartite-boolean-dichotomy-explicit}.
By \cref{prop:linear-comb-ann-degenerate}, any degenerate signature in $\mathscr{F}$ 
must also be annihilated by $\mathbf{u}_{\mathbf{G}}$. 
So, every signature in $\mathscr{F}$ must be annihilated by the same isotropic vector $\mathbf{u}_{\mathbf{G}}$.
Then, every signature in $\Yinv T \mathscr{F}$ is supported on $\db\dr$, which means $\Yinv T \mathscr{F} \subset \casecsupp$.
By the triangular representation argument, $\Yinv T \mathbf{F} \in \casecsupp$.
\end{proof}

\subsection{Beyond Arity 4}

With all the preceding preparatory work, we may extend the dichotomy to include higher arity signatures by  an induction on arity.

We extend the 9 tractable types given in \cref{def:types-ternary} to include higher arity signatures.
\begin{definition}\label{def:types-higher-arity}
Let $\mathbf{F}$ be a nondegenerate, symmetric, artiy $n \ge 3$ signature.
We say $\mathbf{F}$ is one of the following types if there exists $T \in O(3, \mathbb{C})$ that satisfies the given condition. 
We say $\mathbf{F}$ is of the \emph{defining form} of the type if $T$ can be chosen to be the identity matrix. 
\begin{enumerate}
    \item \caseathree: $T \mathbf{F} = a \mathbf{e}_1^{\otimes n} + b \mathbf{e}_2^{\otimes n} + c \mathbf{e}_3^{\otimes n}$ for nonzero $a, b, c \in \mathbb{C}$.
    \item \caseatwo: $T \mathbf{F} = a \mathbf{e}_1^{\otimes n} + b \mathbf{e}_2^{\otimes n}$ for nonzero $a, b \in \mathbb{C}$.
    \item \casebthree: $T \mathbf{F} = a \bzp^{\otimes n} + b \bzm^{\otimes n} + c \mathbf{e}_3^{\otimes n}$ for nonzero $a, b, c \in \mathbb{C}$.
    \item \casebtwo: $T \mathbf{F} = a \bzp^{\otimes n} + b \bzm^{\otimes n}$ for nonzero $a, b \in \mathbb{C}$.
    \item \casecz: $T \mathbf{F}$ with the condition that $\langle T \mathbf{F}, \bzp \rangle = 0$.
    \item \casecobg: 
    $\Yinv T \mathbf{F} = a \mathbf{e}_1^{\otimes n} + \lambda \sym(\mathbf{e}_1, \ldots, \mathbf{e}_1, \mathbf{e}_2)$ for nonzero $\lambda \in \mathbb{C}$ and any $a \in \mathbb{C}$.
    \item \casecosep: 
    $\Yinv T \mathbf{F} = a \mathbf{e}_1^{\otimes n} + \lambda \sym(\mathbf{e}_1, \ldots, \mathbf{e}_1, \mathbf{e}_2) + d \mathbf{e}_3^{\otimes n}$ for nonzero $\lambda, d \in \mathbb{C}$ and any $a \in \mathbb{C}$.
    \item \casecodz: The support of $\Yinv T \mathbf{F}$ is a subset of $\db^n$, $\db^{n-1} \dg$, $\db^{n-1} \dr$, $\db^{n-2} \dr^2$ and their permutations, $\Yinv T \mathbf{F}(\db^{n-1} \dg) \ne 0$, and $\Yinv T \mathbf{F}(\db^{n-2} \dr^2) \ne 0$.
    \item \casecogen: $\Yinv T \mathbf{F} \in \casecsupp$ and $\mathbf{F}$ is not \casecz, \casecobg, \casecosep, or \casecodz. 
\end{enumerate}
\end{definition}
We can extend the arguments from \cref{lem:types-distinct} to show that these are mutually exclusive. 
We will show in \cref{prop:reduce-arity} that if $\mathbf{F}$ is a signature of type $\mathfrak{F}$, then we can produce a lower arity signature of the same type $\mathfrak{F}$ by connecting unary signatures. 
For the types other than \casecogen, the argument is straightforward, but the proof is a little more involved for \casecogen because of the implicit definition. 
We prove an invariant of \casecosep to separate it from \casecogen, generalizing \cref{cor:casecosep-form}. 

\begin{proposition}\label{prop:full-invariant-c1-sep}
    Let $\mathbf{F} \in \casecsupp$ of arity $n \ge 3$. 
    Let $\iny{\mathbf{F}} = \Yinv \mathbf{F}$ and $\lambda = \iny{\mathbf{F}}(\db^{n-1} \dg) \ne 0$ 
and $x_i = \iny{\mathbf{F}}(\db^{n-i} \dr^i)$.
    Then, $\mathbf{F}$ is \casecosep if and only if all of the following conditions are satisfied:
    \begin{enumerate}
        \item 
        $P_n(\iny{\mathbf{F}}) = x_{n-1} x_2 - x_{n-1} \lambda - x_1 x_n = 0$.
        \item If $n \ge 4$, then $[x_2, x_3, \ldots, x_n]$ is degenerate.
        \item $x_n \ne 0$.
    \end{enumerate}
\end{proposition}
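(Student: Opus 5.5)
The plan is to reduce the statement to a computation inside the stabilizer $\stab(\casecsupp)$, generalizing the arity-$3$ treatment in \cref{cor:casecosep-form}. (I read the hypothesis as $\Yinv\mathbf{F}\in\casecsupp$, consistent with the definitions of $\lambda$ and $x_i$.) By \cref{def:types-higher-arity}, $\mathbf{F}$ is \casecosep iff some $\iny{T}\in\so(3,\mathbb{C},\X)$ gives
\[
\iny{\mathbf{F}}=\iny{T}\bigl(\alpha\mathbf{e}_1^{\otimes n}+\mu\,\sym(\mathbf{e}_1,\ldots,\mathbf{e}_1,\mathbf{e}_2)+w\,\mathbf{e}_3^{\otimes n}\bigr),\qquad \mu,w\neq 0 .
\]
Both this defining form and $\iny{\mathbf{F}}$ lie in $\casecsupp$, are nondegenerate, and have arity $\ge 3$. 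Part 2 of \cref{prop:casecsupp-fixed-under-rho-c-0} therefore lets me take $\iny{T}=\rho(M)$ with
\[
M=\begin{bsmallmatrix} a & b\\ 0 & d\end{bsmallmatrix},\qquad ad=1 .
\]
So the whole statement becomes: when is $\iny{\mathbf{F}}$ in the orbit of such a form under these upper triangular $\rho(M)$?

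\textbf{Forward direction.} Using \cref{eq:paramaterize-so3cx} with $c=0$, the images are $\rho(M)\mathbf{e}_1=d^2\mathbf{e}_1$, $\rho(M)\mathbf{e}_2=(-b^2,a^2,-\sqrt{2}ab)$ and $\rho(M)\mathbf{e}_3=(\sqrt{2}bd,0,ad)$. Hence
\[
\iny{\mathbf{F}}=\alpha' \mathbf{e}_1^{\otimes n}+\mu'\,\sym\bigl(\mathbf{e}_1,\ldots,\mathbf{e}_1,(-b^2,a^2,-\sqrt{2}ab)\bigr)+w\,(\sqrt{2}bd,0,ad)^{\otimes n},
\qquad \mu'=\mu d^{2(n-1)} .
\]
I then read off the entries on $\db^{n-i}\dr^i$.
\begin{itemize}
\item For $i\ge 2$ only the last term contributes: $x_i=w(\sqrt{2}bd)^{n-i}(ad)^i$. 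This gives condition 2, that $[x_2,\ldots,x_n]$ is degenerate, and condition 3, $x_n=w(ad)^n\neq 0$.
\item The entry on $\db^{n-1}\dg$ is $\lambda=\mu' a^2$.
\item $x_1=\mu'(-\sqrt{2}ab)+w(\sqrt{2}bd)^{n-1}ad$.
\end{itemize}
Substituting these into $x_{n-1}x_2-x_{n-1}\lambda-x_1x_n$ should give $0$. This is the same cancellation as in \cref{prop:weird-polynomial-invariant-under-stab}: there the invariant vanished on the defining form, where $x_1=x_2=x_{n-1}=0$ for $n\ge 3$.

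\textbf{Converse.} Assume conditions 1–3. I solve for $b,d$ and $w$, with $a=1/d$.
\begin{itemize}
\item Since $x_n\neq0$, degeneracy gives $x_i=x_n r^{\,n-i}$ for $2\le i\le n$, for some $r$.
\item Choose $b,d$ so that $\sqrt{2}bd/(ad)=r$, i.e.\ $b=r/(\sqrt{2}d^2)$ after $ad=1$. Take $d=1$ and set $w=x_n$; this matches every $x_i$ with $i\ge 2$.
\item For $n=3$ there is only the pair $(x_2,x_3)$, and $x_3\neq0$ lets me define $r=x_2/x_3$ directly, so condition 2 is vacuous.
\item Set $\mu'=\lambda/a^2$.
\item With these choices, the formula for $x_1$ in the image form is forced. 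Condition 1 says exactly that the actual $x_1$ of $\iny{\mathbf{F}}$ equals this forced value; this uses $x_n\neq 0$ to solve for $x_1$.
\item The remaining entries of $\iny{\mathbf{F}}\in\casecsupp$ are the $\db^n$ entry, absorbed by choosing $\alpha'$, and the permutations already accounted for. Then $\rho(M)^{-1}\iny{\mathbf{F}}$ is of the defining form with $\mu,w\neq0$.
\end{itemize}

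\textbf{Main obstacle.} I expect the hard part to be the bookkeeping of $x_1$ and $x_{n-1}$ under the $\sym$ term, including the combinatorial multiplicities. In particular, the coefficient of $x_{n-1}\lambda$ in $P_n$ must come out exactly as stated, which is not obvious from the arity-$3$ case. I would first verify the identity $P_n(\rho(M)\,\cdot)=\kappa\,P_n(\cdot)$ on $\casecsupp_n$ for a scalar $\kappa\neq0$, mimicking the $\psi(b,d)$ computation but restricted to the coordinates $(\lambda,x_1,\ldots,x_n)$. That reduces the forward direction to the trivial check $P_n=0$ on the defining form.
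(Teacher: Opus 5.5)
Your proposal is correct and is essentially the paper's proof. You reduce to upper-triangular $\rho(M)$ via \cref{prop:casecsupp-fixed-under-rho-c-0} and read off $\lambda,x_1,x_2,x_{n-1},x_n$ directly for the forward direction. For the converse you take $d=1$, $b=r/\sqrt{2}$, which is the inverse of the paper's $\rho\bigl(\begin{bsmallmatrix}1&-p/\sqrt{2}\\0&1\end{bsmallmatrix}\bigr)$ with $p=r$; your single construction also covers the paper's separate case $x_{n-1}=0$ as $r=0$.

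Do not pursue the proposed shortcut $P_n(\rho(M)\,\cdot)=\kappa P_n(\cdot)$ on all of $\casecsupp_n$: it is false for $n\ge 4$. For $n=4$ and $a=1/d$ one finds $P_4(\rho(M)\,\cdot)=d^6P_4+2\sqrt{2}\,bd^5(x_3^2-x_2x_4)$, so invariance holds only on the locus of condition~2. The shortcut is also unnecessary, since the direct substitution you wrote already gives $0$. There are no multiplicity issues, because $\rho(M)\mathbf{e}_1=d^2\mathbf{e}_1$ kills every $\sym$ summand except the one placing $\rho(M)\mathbf{e}_2$ in the unique non-$\db$ slot.
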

\begin{proof}
If $n = 3$, then \cref{cor:casecosep-form} proves the equivalence, so we may assume $n \ge 4$.

We prove the forward direction. 
Suppose $\mathbf{F}$ is \casecosep.
Then, by \cref{prop:casecsupp-fixed-under-rho-c-0}, there exists some $M = \begin{bsmallmatrix}
        a & b \\ 0 & d
    \end{bsmallmatrix} \in \speciallinear(2, \mathbb{C})$
    such that 
    $\iny{\mathbf{F}} =  \rho(M) (x \mathbf{e}_1^{\otimes n} + \kappa \sym(\mathbf{e}_1, \ldots, \mathbf{e}_1, \mathbf{e}_2) + z \mathbf{e}_3^{\otimes n})$ for some $x, z, \kappa \in \mathbb{C}$ such that $\kappa, z \ne 0$.
    Recall that 
$
      \rho(M) = 
      \begin{bsmallmatrix}
        d^2 & -b^2 & \sqrt{2}bd \\
        0 & 1/d^2 & 0 \\
        0 & -\sqrt{2}b/d & 1
      \end{bsmallmatrix}
$.
We analyze each term in $\iny{\mathbf{F}}$.
The first term
$x (\rho(M) \mathbf{e}_1)^{\otimes n} = x d^{2n}\mathbf{e}_1^{\otimes n}$
only contributes to $\iny{\mathbf{F}}(\db^n)$.
The second term 
$\kappa \rho(M)\sym(\mathbf{e}_1, \ldots, \mathbf{e}_1, \mathbf{e}_2) = \kappa \sym(d^2\mathbf{e}_1, \ldots, d^2 \mathbf{e}_1, \rho(M) \mathbf{e}_2)$ contributes to $\iny{\mathbf{F}}(\db^i \dg^j \dr^k)$ for $i \ge n - 1$,
and $i+j+k=n$. 
Therefore, $\iny{\mathbf{F}}(\db^{n - i}\dr^i) = (z (\rho(M) \mathbf{e}_3)^{\otimes n})(\db^{n - i} \dr^i)$ for $i \ge  2$ and thus $[x_2, x_3, \ldots, x_n]$ is degenerate.
We can calculate that
\begin{align*}
\lambda &= \iny{\mathbf{F}}(\db^{n-1} \dg) &&= \kappa (d^2)^{n-1}/d^2 &&= \kappa d^{2n - 4} \\
x_1 &= \iny{\mathbf{F}}(\db^{n-1} \dr) 
&&= z (\sqrt{2} b d)^{n-1} + \kappa (d^2)^{n-1} (- \sqrt{2} b/d) 
&&= z (\sqrt{2} b d)^{n-1} - \kappa \sqrt{2} b d^{2n - 3} \\
x_2 &= \iny{\mathbf{F}}(\db^{n-2} \dr^2) &&= z (\sqrt{2} b d)^{n-2} \\
x_{n-1} &= \iny{\mathbf{F}}(\db \dr^{n-1}) &&= z \sqrt{2} b d \\
x_n  &= \iny{\mathbf{F}}(\dr^n) &&= z \ne 0
\end{align*}
Then,
\begin{align*}
    &\quad x_{n-1} x_2 - x_{n-1} \lambda - x_1 x_n \\
    &= z \sqrt{2} b d z (\sqrt{2} b d)^{n-2} -  \kappa d^{2n - 4} z \sqrt{2} b d - (z (\sqrt{2} b d)^{n-1} - \kappa \sqrt{2} b d^{2n - 3})z \\
    &= z^2 (\sqrt{2} b d)^{n-1} - \kappa z \sqrt{2} b d^{2n - 3} - z^2(\sqrt{2} bd)^{n-1} + \kappa z \sqrt{2} b d^{2n - 3} \\
    &= 0 \,.
\end{align*}

We now prove the other direction. 
We may assume $x_n = 1$ by rescaling $\iny{\mathbf{F}}$.
Note that $P_n$ is homogeneous, so $P_n(\iny{\mathbf{F}}) = 0$ after rescaling. 

Suppose $x_{n-1} = 0$.
Then, $[x_2, x_3, \ldots,  x_n]$ being degenerate implies $x_2, \ldots, x_{n-1} = 0$.
Then, by $x_{n - 1} x_2 - x_{n-1} \lambda - x_1 x_n = 0$ implies $x_1 = 0$ as well.
Then, $\iny{\mathbf{F}}$ is \sepbg, so $\mathbf{F}$ is \casecosep. 

Suppose $x_{n-1} \ne 0$.
Then, we can write $[x_2, x_3, \ldots, x_n] = [p^{n-2}, p^{n-3}, \ldots, 1]$ for some $p \ne 0$.
Then, $P(\iny{\mathbf{F}}) = 0$ implies 
$
x_1 = x_{n-1} x_2 - x_{n-1} \lambda
= 
p^{n-1} - p \lambda
$. 
Then, we have $\iny{\mathbf{F}} = q \mathbf{e}_1^{\otimes n} + \lambda \sym(\mathbf{e}_1, \mathbf{e}_1, \mathbf{v}) + (p, 0, 1)^{\otimes n}$ for some $q \in \mathbb{C}$ and $\mathbf{v} = (0, 1, -p)$.
Then, for $ M = \begin{bsmallmatrix}
    1 & -p/\sqrt{2} \\ 0 & 1
\end{bsmallmatrix}$, $\rho(M) \iny{\mathbf{F}} = q \mathbf{e}_1^{\otimes n} + \lambda \sym(\mathbf{e}_1, \mathbf{e}_1, (p^2/2, 1, 0)) + \mathbf{e}_3^{\otimes n}$ which is \sepbg. 
In particular, we have $\rho(M) \iny{\mathbf{F}} =  q' \mathbf{e}_1^{\otimes n} + \lambda \sym(\mathbf{e}_1, \mathbf{e}_1, \mathbf{e}_2) + \mathbf{e}_3^{\otimes n}$ for some $q' \in \mathbb{C}$.
Therefore, $\mathbf{F}$ is \casecosep. 
\end{proof}

\begin{proposition}\label{prop:c1-gen-reduce-arity}
    Let $\mathbf{F}$ be a \casecogen signature of arity $n \ge 4$ such that $\Yinv \mathbf{F} \in \casecsupp$. 
    Then, there exists $\mathbf{u} \in \mathbb{C}^3$ such that $\mathbf{G} = \langle \mathbf{F}, \mathbf{u} \rangle$ is a \casecogen signature and $\Yinv \mathbf{G} \in \casecsupp$.
\end{proposition}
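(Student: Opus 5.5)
We work throughout with $\iny{\mathbf{F}} = \Yinv \mathbf{F} \in \casecsupp$ and, as in \cref{prop:full-invariant-c1-sep}, write $\lambda = \iny{\mathbf{F}}(\db^{n-1}\dg) \ne 0$ and $x_i = \iny{\mathbf{F}}(\db^{n-i}\dr^i)$. The key observation is that connecting a unary $\mathbf{u} = (p, 0, r)$ stays inside $\casecsupp$, and the new values are obtained by contracting one slot. Here $(p,0,r)$ is a vector in the $\Yinv$ coordinates; this means we use the unary $(\Yinv)\transpose$-corresponding vector on the original side, and the condition $\Yinv \mathbf{G}\in\casecsupp$ is then automatic since $\casecsupp$ is closed under connecting unaries. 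Concretely, $\iny{\mathbf{G}} = \langle \iny{\mathbf{F}}, (p,0,r)\rangle$ has
\[
\lambda' = p\lambda,\qquad x'_i = p\,x_i + r\,x_{i+1}\quad (0\le i\le n-1),
\]
where $x_0 = \iny{\mathbf{F}}(\db^n)$. We will never use a nonzero middle coordinate, since we want $\lambda' \ne 0$; this forces $p \ne 0$, and we normalize $p = 1$. So the sequence $(x_0,\ldots,x_n)$ is replaced by the ``shifted'' sequence $x_i + r x_{i+1}$, and the goal is to choose $r$ so that $\iny{\mathbf{G}}$ is again \casecogen.

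By \cref{def:types-higher-arity}, $\iny{\mathbf{G}}$ (nondegenerate, in $\casecsupp$, with $\lambda'\neq0$) is \casecogen unless it is \casecobg, \casecosep, or \casecodz; it cannot be \casecz because $\lambda'\ne0$. First we record what the hypothesis gives for $\mathbf{F}$. Since $\mathbf{F}$ is not \casecodz or \casecobg, the tail $(x_2,\ldots,x_n)$ does not vanish beyond $x_2$; more precisely, some $x_i$ with $i\ge3$ is nonzero. Since $\mathbf{F}$ is not \casecosep, \cref{prop:full-invariant-c1-sep} fails: either $x_n = 0$, or $[x_2,\ldots,x_n]$ is nondegenerate, or $P_n(\iny{\mathbf{F}})\ne 0$.

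Next we list the bad sets of $r$ for $\iny{\mathbf{G}}$. Each bad condition is a polynomial condition in $r$, so the plan is to show that none of them holds identically in $r$; a generic $r$ then works. The conditions are as follows.
\begin{enumerate}
\item Degeneracy of $\iny{\mathbf{G}}$: it cannot hold identically, because $\lambda'\ne0$ together with a nonzero $\dr$-tail forces nondegeneracy. Alternatively, degeneracy would imply $\mathbf{F}$ degenerate via the support pattern.
\item \casecodz / \casecobg: for $n-1\ge 3$ this requires $x'_i = 0$ for all $i\ge3$, i.e.\ $x_i + r x_{i+1} = 0$ for $3\le i\le n-1$. This holds for all $r$ only if $x_3=\cdots=x_n=0$, which is excluded.
\item \casecosep: by \cref{prop:full-invariant-c1-sep} in arity $n-1$, this requires $x'_{n-1} = x_{n-1}+r x_n \ne 0$, $[x'_2,\ldots,x'_{n-1}]$ degenerate (when $n-1\ge4$), and $P_{n-1}(\iny{\mathbf{G}})=0$.
\end{enumerate}

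Item 3 is the main obstacle: we must show these three conditions cannot hold for every $r$ unless $\mathbf{F}$ itself satisfies the \casecosep criterion. There are two subcases.

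In the first subcase, $[x_2,\ldots,x_n]$ is nondegenerate. Then the shifted sequences $[x_i + r x_{i+1}]_{i=2}^{n-1}$ cannot be degenerate for all $r$. To see this, view the Hankel $2\times2$ minors as polynomials in $r$. Their leading coefficients in $r$ are the minors of $[x_3,\ldots,x_n]$ and their constant terms are the minors of $[x_2,\ldots,x_{n-1}]$; if all of these vanished, a direct check shows $[x_2,\ldots,x_n]$ would be degenerate. When $n=4$ the tail $[x'_2,x'_3]$ is always degenerate, so this case must instead be handled through $P_3$.

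In the second subcase, the tail is degenerate, so $[x_2,\ldots,x_n] = [s^{n-2}t^0, \ldots]$ is geometric, with either $x_n = 0$ or $P_n\neq0$. Here we compute $P_{n-1}(\iny{\mathbf{G}}) = x'_{n-2}x'_2 - x'_{n-2}\lambda - x'_1 x'_{n-1}$ as a polynomial in $r$ and compare its coefficients with $P_n(\iny{\mathbf{F}})$. We expect that, for a geometric tail, $P_{n-1}(\iny{\mathbf{G}})$ equals a nonzero multiple of $(1+ r\cdot\text{ratio})\,P_n(\iny{\mathbf{F}})$, mirroring the invariance in \cref{prop:weird-polynomial-invariant-under-stab}. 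In that case $P_n\ne0$ gives a nonvanishing polynomial in $r$. If instead $x_n=0$, geometricity forces the tail to be supported on $x_2$ alone, which contradicts item 2's exclusion.

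Finally, we choose $r$ outside the finite union of zero sets obtained above, which gives the required $\mathbf{u}$. We expect the explicit coefficient comparison for $P_{n-1}$, together with the small-arity edge cases $n=4$, to be the bulk of the work.
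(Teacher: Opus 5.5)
Your plan is the paper's own: connect $(1,0,r)$ in the $\Yinv$ frame, track $y_i=x_i+rx_{i+1}$ with $\lambda'=\lambda$, exclude \casecobg/\casecodz because some $x_i$ with $i\ge 3$ is nonzero, and exclude \casecosep via \cref{prop:full-invariant-c1-sep} by splitting on whether $[x_2,\ldots,x_n]$ is degenerate. The gap is in the nondegenerate subcase.

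\textbf{The false step.} You claim that if the constant and $r^2$ coefficients of all shifted Hankel minors vanish, i.e.\ $[x_2,\ldots,x_{n-1}]$ and $[x_3,\ldots,x_n]$ are both degenerate, then $[x_2,\ldots,x_n]$ is degenerate. This is false. Take $n=5$ and $[x_2,x_3,x_4,x_5]=[1,0,0,1]$. Both $[1,0,0]$ and $[0,0,1]$ are degenerate, but $[1,0,0,1]$ is not. Such an $\iny{\mathbf F}$ with $\lambda\neq 0$ is a genuine \casecogen input. So your criterion, that some constant or leading coefficient is nonzero, does not prove that the shifted tail escapes degeneracy for generic $r$.

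\textbf{How to repair it.} Assume $n\ge 5$, both sub-tails are degenerate, and the overlap $[x_3,\ldots,x_{n-1}]$ is nonzero. Then the two degenerate forms have proportional base vectors and glue to a degenerate $[x_2,\ldots,x_n]$. So the only exceptions are $[x_2,0,\ldots,0,x_n]$ with $x_2x_n\neq 0$. For these, $[y_2,\ldots,y_{n-1}]=[x_2,0,\ldots,0,rx_n]$, which is nondegenerate for every $r\neq 0$.

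Alternatively, a nondegenerate Boolean signature has no nonzero annihilator. Hence $\mathbf w\mapsto\langle [x_2,\ldots,x_n],\mathbf w\rangle$ is injective, and its image is a projective line. A line cannot lie on the rational normal curve of degree $n-3\ge 2$, so only finitely many $r$ are bad.

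\textbf{The two items you left as expectations both go through.}
\begin{itemize}
\item For $n=4$, you were right that the minor test is vacuous; the paper's minor argument is silent there too. The $r^2$ coefficient of $P_3(\iny{\mathbf G}_r)$ is $x_3^2-x_2x_4$, which is nonzero when the tail is nondegenerate. This closes the case.
\item In the degenerate subcase, expand $P_{n-1}(\iny{\mathbf G}_r)$. The $r^2$ coefficient is $x_3x_{n-1}-x_2x_n=0$. The $r$ coefficient is $x_3x_{n-2}-\lambda x_{n-1}-x_1x_n=P_n(\iny{\mathbf F})\neq 0$. This is exactly the paper's computation. Your predicted form $(a/b+r)\,P_n(\iny{\mathbf F})$, for a tail proportional to $(a,b)^{\otimes(n-2)}$ with $b\neq 0$, is correct.
\end{itemize}
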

\begin{proof}
Let $\iny{\mathbf{F}} = \Yinv \mathbf{F}$,
$\lambda = \iny{\mathbf{F}}(\db^{n-1} \dg)$ and
$x_i = \iny{\mathbf{F}}(\db^{n-i} \dr^i)$.
By the definition of \casecogen, $\lambda \ne 0$ 
because otherwise $\mathbf{F}$ is \casecz, so we may assume $\lambda = 1$ after rescaling. 
Also, there exists $i \ge 3$ such that $x_i \ne 0$ because otherwise $\mathbf{F}$ is \casecodz.

Consider $\iny{\mathbf{G}}_p = \langle \iny{\mathbf{F}}, (1, 0, p) \rangle$ for some $p \in \mathbb{C}$ to be determined later.
We have $\iny{\mathbf{G}}(\db^{n-2} \dg) = 1$
and $y_i = \iny{\mathbf{G}}(\db^{n-1-i} \dg^i) = x_i + p x_{i + 1}$. 
We would like to argue that there is some $p$ such that $\Y \iny{\mathbf{G}}_p$ is \casecogen. 

Because there exists $k \ge 3$ such that $x_k \ne 0$, 
$y_i = x_i + p x_{i+1}$ is a nonzero polynomial in $p$ for at least one $i \ge 3$.
Then, except for possibly one value of $p \in \mathbb{C}$, $\Y \iny{\mathbf{G}}_p$ is not \casecobg or \casecodz. 
If $\Y \iny{\mathbf{G}}_p$ were to be \casecosep,
it must satisfy the conditions in \cref{prop:full-invariant-c1-sep}, which requires $y_{n - 2} y_2 - y_{n-2} - y_1 y_{n-1} = 0$, $[y_2, y_3, \ldots, y_{n -1}]$ is degenerate and $y_{n-1} \ne 0$. 
Note that $[y_2, y_3, \ldots, y_{n -1}]$ is degenerate if and only if $y_2, \ldots, y_{n-1}$ satisfy a finite set of polynomials
$y_i y_{j+1} - y_{i + 1} y_j = 0$ for $2 \le i, j \le n - 2$ because the set of degenerate signatures can be viewed as the rational normal curve in the projective space. 
We have 
\begin{align*}
    y_i y_{j+1} - y_{i + 1} y_j &= (x_i + p x_{i + 1})(x_{j + 1} + p x_{j + 2}) - 
(x_{i + 1} + p x_{i + 2})(x_j + p x_{j + 1}) \\
&= (x_i x_{j+1} - x_{i + 1} x_j) + p( x_i x_{j+2} - x_{i + 2} x_j) 
+ p^2(x_{i+1} x_{j+2} - x_{i + 2} x_{j + 1}) \;.
\end{align*}
If $[x_2, x_3, \ldots, x_n]$ is nondegenerate, there is some $2 \le i', j' \le n - 1$ such that  $x_{i'} x_{j'+1} - x_{i'+1} x_{j'} \ne 0$. 
Then, at least one of $y_i y_{j + 1} - y_{i + 1} y_j$ is a nonzero polynomial in $p$. 

If $[x_2, x_3, \ldots, x_n]$ is degenerate, then for $\mathbf{F}$ to be  \casecogen,
we must have $x_{n-1} x_2 - x_{n-1} - x_1 x_n \ne 0$ or $x_n = 0$.
If $x_n = 0$, then $[x_2, x_3, \ldots, x_n] = [1, 0, \ldots, 0]$ by degeneracy. However, this means $\mathbf{F}$ is \casecodz, contrary to the assumption.
So, we may assume $x_{n-1} x_2 - x_{n-1} - x_1 x_n \ne 0$.
Then, we have 
\begin{align*}
y_{n-2} y_2 - y_{n-2} - y_1 y_{n-1} &= 
(x_{n-2} + p x_{n-1})(x_2 + p x_3) - (x_{n-2} + p x_{n-1}) - (x_1 + p x_2) (x_{n-1} + p x_n) \\
&= (x_{n-2} x_2 - x_{n-2} - x_1 x_{n-1}) + p(x_{n-1} x_2 - x_{n-1} - x_1 x_n) \\
&\quad  - p(x_2 x_{n-1} - x_3 x_{n-2}) + p^2(x_{n-1} x_3 - x_2 x_n) 
\end{align*}
Because $[x_2, x_3, \ldots, x_n]$ is degenerate, $x_2 x_{n-1} - x_3 x_{n-2} = 0$, so the above is a nonzero polynomial in $p$. 

Therefore,
for $\Y \iny{\mathbf{G}}_p$ to be a \casecosep, 
a finite number of nonzero polynomials in $p$ must evaluate to $0$. 
There could be only finitely many such $p$, so there exists
$p \in \mathbb{C}$ such that $\Y \iny{\mathbf{G}}_p$ is \casecogen and $\iny{\mathbf{G}}_p \in \casecsupp$ as desired. 
\end{proof}

\begin{proposition}\label{prop:reduce-arity}
Let $\mathfrak{F}$ be a tractable type \caseatwo, \caseathree, \casebtwo, \casebthree, \casecz, \casecobg, \casecosep, \casecodz, or \casecogen. 
    Let $\mathbf{F}$ be a nondegenerate, symmetric, arity $n \ge 3$ signature of type $\mathfrak{F}$. 
    Then, there exists $\mathbf{u}_1, \ldots, \mathbf{u}_k \in \mathbb{C}^3$ for $k = n - 3$ such that connecting $\mathbf{u}_1, \ldots, \mathbf{u}_k$ to $\mathbf{F}$ results in a ternary signature $\mathbf{G}$ of type $\mathfrak{F}$. 

    Further, if $T$ is an orthogonal matrix such that $T \mathbf{F}$ is of the defining form, then $T \mathbf{G}$ is of the defining form as well.
\end{proposition}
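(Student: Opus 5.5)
We proceed by induction on $n$, reducing the arity by one at a time. It suffices to show that for $n \ge 4$ there is a single $\mathbf{u}$ such that $\langle \mathbf{F}, \mathbf{u} \rangle$ is of type $\mathfrak{F}$ and, whenever $T\mathbf{F}$ is in defining form, $T\langle \mathbf{F}, \mathbf{u}\rangle$ is also in defining form. Iterating this $n-3$ times gives the statement.

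To handle the defining-form clause uniformly, we work with $\mathbf{F}' = T\mathbf{F}$ in its defining form (or with $\Yinv T\mathbf{F}$ in the \casec cases). We choose $\mathbf{u}'$ so that $\langle \mathbf{F}', \mathbf{u}' \rangle$ is again in defining form. Then we set $\mathbf{u} = T\transpose \mathbf{u}'$ (or $T\transpose (\Yinv)\transpose \mathbf{u}'$ in the \casec cases). This choice gives $T\langle \mathbf{F},\mathbf{u}\rangle = \langle T\mathbf{F}, \mathbf{u}' \rangle$ by the same computation as in \cref{prop:annihilating-vector-under-gl-transform}, using orthogonality of $T$.

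For the explicit types the choice of $\mathbf{u}'$ is direct.
\begin{itemize}
\item For \caseathree and \caseatwo, take $\mathbf{u}' = (1,1,1)$. This gives $a\mathbf{e}_1^{\otimes n-1}+b\mathbf{e}_2^{\otimes n-1}+c\mathbf{e}_3^{\otimes n-1}$, with nonzero coefficients preserved.
\item For \casebthree and \casebtwo, work after $\Yinv$ and again take $\mathbf{u}'=(1,1,1)$. Equivalently, use $\mathbf{u}'$ with $\langle \mathbf{u}',\bzp\rangle,\langle \mathbf{u}',\bzm\rangle,\langle \mathbf{u}',\mathbf{e}_3\rangle$ all nonzero.
\item For \casecz, any $\mathbf{u}'$ works as long as the result is nondegenerate, since $\langle\langle T\mathbf{F},\mathbf{u}'\rangle,\bzp\rangle=0$ automatically. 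To ensure nondegeneracy, we argue that the $\mathbf{u}'$ giving a degenerate contraction lie in a proper algebraic subset. If every contraction were degenerate, then $\mathbf{F}$ would be degenerate, by the argument of Lemma 8.9 of~\cite{domain3-real-set} cited in \cref{lem:arity-4}.
\item For \casecobg and \casecosep, in the $\Yinv$ picture contract with $\mathbf{u}'=(1,0,1)$. The term $\sym(\mathbf{e}_1,\dots,\mathbf{e}_1,\mathbf{e}_2)$ contracts to $\sym(\mathbf{e}_1,\dots,\mathbf{e}_2)$ with coefficient $\lambda$ plus a multiple of $\mathbf{e}_1^{\otimes n-1}$. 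The terms $\mathbf{e}_1^{\otimes n}$ and $\mathbf{e}_3^{\otimes n}$ contract to $\mathbf{e}_1^{\otimes n-1}$ and $\mathbf{e}_3^{\otimes n-1}$, so $\lambda,d\ne0$ are preserved.
\item For \casecodz, contract with $\mathbf{u}'=(1,0,0)$. This keeps the support inside $\db^{n-1},\db^{n-2}\dg,\db^{n-2}\dr,\db^{n-3}\dr^2$ and their permutations, and the entries at $\db^{n-2}\dg$ and $\db^{n-3}\dr^2$ equal the old nonzero values. For $n-1=3$ this is exactly the ternary \casecodz form.
\end{itemize}
The defining-form clause holds in each case, because contraction with $\mathbf{u}'$ preserves $\casecsupp$, \sepbg, and $\mathcal{E}$.

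For \casecogen we invoke \cref{prop:c1-gen-reduce-arity} directly at each step with $n\ge4$, applied to $T\mathbf{F}$ with $\Yinv T\mathbf{F}\in\casecsupp$. It produces $\mathbf{u}'$ with the contraction \casecogen and still in $\casecsupp$. At arity $3$ we must check that the result matches \cref{def:types-ternary}, namely $\lambda,w\ne0$ and $z^2-z\lambda-yw\ne0$. This follows because the higher-arity definition at $n=3$ excludes \casecz, \casecobg, \casecosep and \casecodz within $\casecsupp$, and by \cref{cor:classification-case-c-mutually-exclusive} and \cref{cor:casecosep-form} that is equivalent to the ternary \casecogen condition.

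The main obstacle is this \casecogen bookkeeping: consistency between the implicit higher-arity definition and the ternary one, together with the genericity argument. Everything else is a direct triangular-table computation.
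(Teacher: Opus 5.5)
Your proposal is correct and follows the paper's proof essentially step for step. Both reduce to the defining form and then contract with $(1,1,1)$ for the \casea/\caseb types, with a nondegeneracy-preserving vector for \casecz, with $(1,0,1)$ for \casecobg/\casecosep, and with $\mathbf{e}_1$ for \casecodz, and both invoke \cref{prop:c1-gen-reduce-arity} inductively for \casecogen. Your explicit transport $\mathbf{u}=T\transpose\mathbf{u}'$ (which does not actually need $T$ to be orthogonal) and your arity-$3$ consistency check for \casecogen are small clarifications that the paper leaves implicit.
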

\begin{proof}
We will assume that $\mathbf{F}$ is given in the defining form and show that we can realize a ternary signature of the defining form by connecting unary signatures. 
Let $\iny{\mathbf{F}} = \Yinv \mathbf{F}$.

Suppose $\mathfrak{F}$ is \caseathree. 
Then, $\mathbf{F} = a \mathbf{e}_1^{\otimes n} + b \mathbf{e}_2^{\otimes n} + c \mathbf{e}_3^{\otimes n}$ for some nonzero $a, b, c \in \mathbb{C}$. 
Then, connecting $(1, 1, 1)$ to $\mathbf{F}$ $n-3$ times results in $  \mathbf{G} = a \mathbf{e}_1^{\otimes 3} + b \mathbf{e}_2^{\otimes 3} + c \mathbf{e}_3^{\otimes 3}$ which is a ternary \caseathree signature of the defining form. 
We can argue similarly for \caseatwo. 
For \casebthree and \casebtwo, we can argue similarly since $\iny{\mathbf{F}} = a \mathbf{e}_1^{\otimes n} + b \mathbf{e}_2^{\otimes n} + c \mathbf{e}_3^{\otimes n}$ for some $a, b, c \in \mathbb{C}$.

Suppose $\mathfrak{F}$ is \casecz. 
Then, $\iny{\mathbf{F}}$ is supported on $\db\dr$.
Since it is nondegenerate, there exists some $\mathbf{u}_1 \in \mathbb{C}^3$ such that $\iny{\mathbf{G}}_1 = \langle \iny{\mathbf{F}}, \mathbf{u}_1 \rangle$ is nondegenerate.
Note that $\iny{\mathbf{G}}_1$ is supported on $\db\dr$.
Then, we can inductively find such $\mathbf{u}_2, \ldots, \mathbf{u}_k$ so that the resulting ternary signature is nondegenerate and is supported on $\db\dr$.

Suppose $\mathfrak{F}$ is \casecosep.
Then, 
$\iny{\mathbf{F}} = a \mathbf{e}_1^{\otimes n} + \lambda \sym(\mathbf{e}_1, \ldots, \mathbf{e}_1, \mathbf{e}_2) + d \mathbf{e}_3^{\otimes n}$ for nonzero $\lambda, d \in \mathbb{C}$ and any $a \in \mathbb{C}$.
 Then, connecting $(1, 0, 1)$ to $\iny{\mathbf{F}}$ $n- 3$ times results in $\iny{\mathbf{G}} = a \mathbf{e}_1^{\otimes 3} + \lambda \sym(\mathbf{e}_1, \mathbf{e}_1, \mathbf{e}_2) + d \mathbf{e}_3^{\otimes 3}$ and $\Y \iny{\mathbf{G}}$ is \casecosep of the defining form. 
We can argue similarly for \casecobg.

Suppose $\mathfrak{F}$ is \casecodz. 
The support of $\iny{\mathbf{F}}$ is a subset of on $\db^n$, $\db^{n-1} \dg$, $\db^{n-1} \dr$, $\db^{n-2} \dr^2$ and their permutations, $\iny{\mathbf{F}}(\db^{n-1} \dg) \ne 0$, and $\iny{\mathbf{F}}(\db^{n-2} \dr^2) \ne 0$.
Then, connecting $\mathbf{e}_1$ to $\iny{\mathbf{F}}$ $n - 3$ times results in 
$\iny{\mathbf{G}}$ such that 
$\iny{\mathbf{G}}(\db\db\dg) = \iny{\mathbf{F}}(\db^{n-1} \dg) \ne 0$, and $\iny{\mathbf{G}}(\db\dr\dr) = \iny{\mathbf{F}}(\db^{n-2} \dr \dr) \ne 0$.
This can be easily seen using the triangular representation, because connecting $\mathbf{e}_1$ just results in a signature of one less arity that is represented by the triangle with the bottom row removed (see \cref{fig:arity-4-signature}). Therefore, $\Y \iny{\mathbf{G}}$ is \casecodz of the defining form. 

If $\mathfrak{F}$ is \casecogen, then we can apply \cref{prop:c1-gen-reduce-arity} inductively to produce such $\mathbf{G}$.

\end{proof}

\cref{lem:arity-4} tells us that if $\mathbf{F}$ is a nondegenerate symmetric arity $4$ signature such that $\holts(\mathbf{F})$ is not \sph, then it must be one of the types in \cref{def:types-higher-arity}.
Let $\mathcal{F}$ be a set of nondegenerate symmetric arity $\le 4$ signatures. 
For each $\mathbf{F}_4 \in \mathcal{F}$ of arity $4$ signature we have $\holts(\mathcal{F}) \ge_T \holts(\mathcal{F}, \mathbf{F}_3)$, in which $\mathbf{F}_3$ is the ternary signature of the same type as $\mathbf{F}_4$ given by \cref{prop:reduce-arity}.
Further, the same $T$ brings both $\mathbf{F}_3$ and $\mathbf{F}_4$ to the defining form. 
Then, we can easily deduce the dichotomy of $\holts(\mathcal{F})$ from \cref{thm:dichotomy-ternary-binary}, which would state that $\holts(\mathcal{F})$ is tractable if it satisfies one of the conditions given in 
\cref{thm:tractable-cases}
and \sph otherwise. 
We can repeat the arguments in this section to inductively obtain a dichotomy of signatures of arbitrary arity. Therefore, we have proven the main goal of this paper.
\begin{theorem}\label{thm:full-dichotomy}
Let $\mathcal{F}$ and $\mathcal{B}$ be respectively
    sets of nondegenerate symmetric arity $\ge 3$ and binary signatures.
Then, $\holts(\mathcal{F} \cup \mathcal{B})$ is tractable if it satisfies the tractability criterion of \cref{thm:tractable-cases}.
Otherwise, it is \sph. 
\end{theorem}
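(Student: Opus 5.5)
The proof is an induction on the maximum arity occurring in $\mathcal{F}$. The base case, where $\mathcal{F}$ contains only ternary signatures, is \cref{thm:dichotomy-ternary-binary}. For the tractability direction nothing new is needed: each algorithm referenced in \cref{thm:tractable-cases} (\cref{subsec:tractable-e,subsec:tractable-bg-r,subsec:tractable-case-c,subsec:tractable-case-ebd}) is already stated for signatures of arbitrary arity. So the work is in the hardness direction.

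Assume $\holts(\mathcal{F}\cup\mathcal{B})$ is not \sph. First, every $\mathbf{F}\in\mathcal{F}$ of arity $n\ge 4$ must be of one of the nine types of \cref{def:types-higher-arity}. For $n=4$ this is \cref{lem:arity-4}. For $n>4$, repeat the argument of \cref{lem:arity-4} with $\mathscr{F}=\{\langle\mathbf{F},\mathbf{u}\rangle\}$, which now consists of arity $n-1$ signatures. By the induction hypothesis these are of the higher-arity types, and their pairwise and linear-combination constraints are inherited from the ternary ones. The triangular representation argument then lifts a support pattern $\mathcal{E}$ or $\casecsupp$ (after $T$ or $\Yinv T$) from $\mathscr{F}$ to $\mathbf{F}$.

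Next, for each higher-arity $\mathbf{F}$, \cref{prop:reduce-arity} supplies a ternary $\mathbf{F}_3$, obtained by pinning unary signatures, that has the same type. Moreover, the same $T$ brings both $\mathbf{F}_3$ and $\mathbf{F}$ to defining form. Set $\mathcal{F}_3=\{\mathbf{F}_3\}$, together with the ternary members of $\mathcal{F}$. Then $\holts(\mathcal{F}_3\cup\mathcal{B})\le_T\holts(\mathcal{F}\cup\mathcal{B})$ is not \sph. By \cref{thm:dichotomy-ternary-binary}, some $T$ places $\mathcal{F}_3\cup\mathcal{B}$ in one of the tractable cases of \cref{thm:tractable-cases}.

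The remaining step is to show that the original high-arity signatures fall into the same case under the same $T$. Here I would argue case by case.

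\textbf{Case \cref{itm:tractable-e} or \cref{itm:tractable-c}.} The defining form of each type already lies in $\mathcal{E}$ or $\casecsupp$. The issue is that the single $T$ chosen for $\mathcal{F}_3$ might not match the $T$ for $\mathbf{F}$. To handle this, I would use the stabilizer uniqueness results (\cref{prop:casecsupp-fixed-under-rho-c-0}, \cref{prop:in-C-after-transform-then-before-transform}) together with the uniqueness of the tensor decomposition in \cref{prop:uniqueness-sum-lin-ind-tensor}. These show that any transform placing $\mathbf{F}_3$ in the case also places $\mathbf{F}$ there.

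\textbf{Case \cref{itm:tractable-bg-r}.} I would also have to check that $\holbs$ restricted to $\db\dg$ stays tractable. This follows by pinning to obtain ternary restrictions and applying the Boolean dichotomy.

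\textbf{Case \cref{itm:tractable-ann}.} All signatures have annihilators, and compatibility for arity $n$ would be deduced from compatibility of the pinned ternary signatures. The key point is that the forms $\mathbf{u}^{\otimes n}+\mathbf{v}^{\otimes n}$ and $\lambda\mathbf{u}^{\otimes n}+\sym(\mathbf{u},\ldots,\mathbf{u},\mathbf{v})$ survive pinning, with the same vectors $\mathbf{u},\mathbf{v}$.

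I expect this last transfer step to be the main obstacle. It is delicate for \casecogen, whose definition is implicit, and for mixed sets where several tractable cases overlap, as in the non-mixing phenomena. For these I would first try to reapply the ternary non-mixing propositions to suitably pinned copies of the high-arity signatures, choosing the pins generically as in \cref{prop:c1-gen-reduce-arity}.
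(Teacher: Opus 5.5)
Your outline is the paper's own: induction on arity, \cref{lem:arity-4} for a single signature, \cref{prop:reduce-arity} to get a same-type ternary $\mathbf{F}_3$ sharing the defining transform, then \cref{thm:dichotomy-ternary-binary}. The paper's closing paragraph only asserts that the deduction from there is easy. The gap is in your transfer step, and its real source is \casecz, not \casecogen.

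For \casecogen the transfer is immediate. If $\Yinv T\mathbf{F}_3\in\casecsupp$, then \cref{prop:casecsupp-fixed-under-rho-c-0} forces $\Yinv T T_0^{-1}\Y\in\stab(\casecsupp)$ for the common defining transform $T_0$. Since $\stab(\casecsupp)$ preserves $\casecsupp$ in every arity, you get $\Yinv T\mathbf{F}\in\casecsupp$.

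A \casecz signature, by contrast, is constrained only by its isotropic annihilator ($\ann(\mathbf{F}_3)=\ann(\mathbf{F})$). A particular pinning can therefore carry structure that $\mathbf{F}$ lacks, and your claim that ``any transform placing $\mathbf{F}_3$ in the case also places $\mathbf{F}$ there'' fails. Here is a concrete instance.
\begin{itemize}
\item Let $\mathbf{A}=\bzp\teh+\bzm\teh+\mathbf{e}_3\teh$. Let $\mathbf{F}$ have arity $4$, with $\Yinv\mathbf{F}$ supported on $\db\dr$ and equal there to the Boolean $[f_0,1,1,1,f_4]$, where $f_0,f_4\neq 1$. Then $\mathbf{F}$ is a nondegenerate \casecz signature in defining form.
\item Connecting $(1,0,-1)$ to $\Yinv\mathbf{F}$ gives $\Yinv\mathbf{F}_3=(f_0-1)\mathbf{e}_1\teh+(1-f_4)\mathbf{e}_3\teh$. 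This is a legitimate output of \cref{prop:reduce-arity}, which only asks for nondegeneracy.
\item Now $\Yinv\{\mathbf{A},\mathbf{F}_3\}\subset\mathcal{E}$, so the ternary set is tractable by \cref{itm:tractable-e}.
\item Yet, as in the proof of \cref{prop:two-ternary-dichotomy-b-3}, $\Yinv\mathbf{A}$ supplies $\beq$ on both sides and $(=_{\db\dr})$. This gives $\holts(\mathbf{A},\mathbf{F})\ge_T\holbs([1,0,0,1],[f_0,1,1,1,f_4])$, which is \sph by \cref{thm:dich-sym-Boolean}.
\end{itemize}
The same $\mathbf{F}_3$ is also \sepbg, and pairing it with $\mathbf{e}_1\teh+\mathbf{e}_2\teh+\mathbf{e}_3\teh$ breaks \cref{itm:tractable-bg-r} in the same way.

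In \cref{itm:tractable-ann} your key point runs in the wrong direction. It is not enough that $\mathbf{u}^{\otimes n}+\mathbf{v}^{\otimes n}$ and $\lambda\mathbf{u}^{\otimes n}+\sym(\mathbf{u},\ldots,\mathbf{u},\mathbf{v})$ survive pinning. You need the converse: a pinning of one of these forms, compatible with some partner, must force $\mathbf{F}$ into the arity-$n$ form. For \casecz this is false, because generic ternary \casecz signatures already have the rank-$2$ form $\mathbf{u}\teh+\mathbf{v}\teh$, with vectors that depend on the pin.

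The missing idea is not to compress $\mathbf{F}$ to one representative.
\begin{itemize}
\item Put all pinnings $\langle\mathbf{F},\mathbf{u}_1\otimes\cdots\otimes\mathbf{u}_{n-3}\rangle$ into the ternary set; they are all gadgets.
\item Obtain one $T$ for the whole set from \cref{thm:dichotomy-ternary-binary}.
\item Lift back as in the proof of \cref{lem:arity-4}. The support conditions $\mathcal{E}$, \sepbg, and $\casecsupp$ pass from all $\langle\mathbf{F},\mathbf{e}_i\rangle$ to $\mathbf{F}$ by the triangular representation argument.
\item The Boolean condition in \cref{itm:tractable-bg-r} and the compatible-pair forms pass by linearity. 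For example, all pinnings then lie in a fixed span $\spn(\mathbf{p}\teh,\mathbf{q}\teh)$, with $\mathbf{p},\mathbf{q}$ determined by $\ann(\mathbf{F})$ and the partner. This forces $\mathbf{F}\in\spn(\mathbf{p}^{\otimes n},\mathbf{q}^{\otimes n})$.
\end{itemize}
Choosing a single pin generically, as your last sentence suggests, does not suffice on its own. A pin that is generic for one partner need not be generic for every member of a possibly infinite $\mathcal{F}$.
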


\printbibliography

\end{document}